\documentclass[11pt,a4paper,reqno]{amsart}


\usepackage[utf8]{inputenc}
\usepackage[T1]{fontenc}
\usepackage{mathptmx}  
\usepackage{helvet}    
\usepackage{courier}   
\usepackage[foot]{amsaddr}


\usepackage[textwidth = 2cm]{todonotes}
\usepackage{amsmath}
\usepackage{amsfonts}
\usepackage{amssymb}
\usepackage{amsthm}
\usepackage{verbatim}
\usepackage{dsfont}
\usepackage{fullpage}
\usepackage{microtype}
\usepackage[libertine]{newtxmath}
\usepackage{newtxtext}
\allowdisplaybreaks

\usepackage[tt=false]{libertine} 
\usepackage{caption}
\usepackage{bm}
\usepackage{bbm}
\usepackage{hyperref, color}
\hypersetup{colorlinks=true,citecolor=blue, linkcolor=blue, urlcolor=blue}
\usepackage[linesnumbered,boxed,ruled,vlined]{algorithm2e}
\usepackage{algorithmicx}
\usepackage[numbers]{natbib}
\usepackage{xcolor}
\usepackage{enumerate} 
\usepackage[shortlabels]{enumitem}
\usepackage{tabularx}
\usepackage{array}
\usepackage{cleveref}
\usepackage{pifont}
\usepackage{xifthen}
\usepackage{xstring}

\usepackage{stackengine}
\usepackage{scalerel}
\usepackage{graphicx}

\DeclareMathAlphabet{\mathcal}{OMS}{cmsy}{m}{n}


\newcolumntype{L}[1]{>{\raggedright\arraybackslash}p{#1}}
\newcolumntype{C}[1]{>{\centering\arraybackslash}m{#1}}
\newcolumntype{R}[1]{>{\raggedleft\arraybackslash}p{#1}}

\usepackage{makecell}
\usepackage{footnote}
\makesavenoteenv{tabular}

\newtheorem{theorem}{Theorem}[section]
\newtheorem{observation}[theorem]{Observation}

\newtheorem*{claim*}{Claim}
\newtheorem{condition}[theorem]{Condition}

\newtheorem{lemma}[theorem]{Lemma}
\newtheorem{proposition}[theorem]{Proposition}
\newtheorem{corollary}[theorem]{Corollary}
\theoremstyle{definition}

\newtheorem{definition}[theorem]{Definition}
\newtheorem{remark}[theorem]{Remark}
\newtheorem*{remark*}{Remark}

\def\^#1{\mathbb{#1}} 
\def\*#1{\mathbf{#1}} 
\def\+#1{\mathcal{#1}} 
\def\-#1{\mathrm{#1}} 
\def\=#1{\boldsymbol{#1}} 
\def\!#1{\mathtt{#1}}
\def\@#1{\mathscr{#1}}

\newcommand{\abs}[1]{\left | #1 \right |}
\newcommand{\set}[1]{\left\{#1\right\}}

\newcommand{\id}[1]{\ensuremath{\mathbbm{1}}\left[#1\right]}

\newcommand{\floor}[1]{\left\lfloor #1 \right\rfloor}
\newcommand{\ceil}[1]{\left\lceil #1 \right\rceil}
\renewcommand{\Pr}[2][]{ \ifthenelse{\isempty{#1}}
  {\mathop{\mathbf{Pr}}\left[#2\right]} {\mathop{\mathbf{Pr}}_{#1} \left[#2\right]} }
\newcommand{\E}[2][]{ \ifthenelse{\isempty{#1}}
  {\mathbf{E}\left[#2\right]}
  {\mathbf{E}_{#1}\left[#2\right]} }

\newcommand{\cstarfro}[1]{\@C^{#1}_{\star \text{-} \mathsf{frozen}}}
\newcommand{\sstarfro}[1]{\@P^{#1}_{\star \text{-} \mathsf{frozen}}}
\newcommand{\cstarbad}[1]{\@C^{#1}_{\star \text{-} \mathsf{bad}}}
\newcommand{\sstarbad}[1]{\@P^{#1}_{\star \text{-} \mathsf{bad}}}
\newcommand{\vefro}[1]{V^{#1}_{\@P-\mathsf{frozen}}}

\newcommand{\cstara}[1]{\@C^{#1}_{\times,1}}
\newcommand{\cstarb}[1]{\@C^{#1}_{\times,2}}
\newcommand{\sstar}[1]{\@S^{#1}_{\star}}

\newcommand{\csinf}[1]{\@C^{#1}_{\star \text{-} \mathsf{inf}}}
\newcommand{\cvinf}[1]{\@C^{#1}_{v \text{-} \mathsf{inf}}}
\newcommand{\ssinf}[1]{\@S^{#1}_{\star \text{-} \mathsf{inf}}}
\newcommand{\cscon}[1]{\@C^{#1}_{\star \text{-} \mathsf{con}}}
\newcommand{\cspformula}{PDC\xspace}
\newcommand{\csppformula}{PDC\xspace}

\newcommand{\perman}{\mathsf{per}}

\newcommand{\poly}{\textnormal{\textsf{poly}}}

\newcommand{\pname}{permutation set\xspace}
\newcommand{\pnames}{permutation sets\xspace}


\newcommand{\ol}{\overline}

\newcommand{\LLL}{Lov{\'a}sz Local Lemma\xspace}

\newcommand{\vbl}{\mathsf{vbl}}
\newcommand{\e}{\textnormal{e}}

\newcommand{\True}{\!{True}}
\newcommand{\False}{\!{False}}
\newcommand{\tmix}{\tau_{\mathrm{mix}}}
\newcommand{\dtv}{d_{\mathrm{TV}}}
\newcommand{\bigmid}{\; \big \lvert \;}

\newcommand{\MCMC}{\textnormal{\textsf{MCMC}}}

\newcommand{\truncatedsampling}{\textnormal{\textsf{RejectionSampling}}}
\newcommand{\samplepermutation}{\textnormal{\textsf{Sample}}}
\newcommand{\countpermutation}{\textnormal{\textsf{Count}}}
\newcommand{\idealsamplepermutation}{\textnormal{\textsf{IdealSample}}}

\newcommand{\qgl}[1]{{\color{red}{#1}}}


\newcommand{\sampleblocksize}{\Delta \log (n/\varepsilon) }

\newcommand{\lipschitzconstant}{k\conditiondomainsize\Delta^2\log (n/\varepsilon)}
\newcommand{\conditiondomainsize}{L}




\newcommand{\coupleM}{\@M^{\circ}}
\newcommand{\cfro}[1]{\@C_{\mathrm{frozen}}{#1}}
\newcommand{\cfronull}{\@C_{\mathrm{frozen}}}
\newcommand{\cdif}[1]{\@C_{\mathrm{diff}}{#1}}
\newcommand{\cdifnull}{\@C_{\mathrm{diff}}}
\newcommand{\cdiscre}[1]{\@C_{\mathrm{discrep}}{#1}}
\newcommand{\cdiscrenull}{\@C_{\mathrm{discrep}}}
\newcommand{\pfro}[1]{\@P_{\mathrm{frozen}}{#1}}
\newcommand{\pfronull}{\@P_{\mathrm{frozen}}}
\newcommand{\mfro}[1]{\@M_{\mathrm{frozen}}{#1}}
\newcommand{\mfronull}{\@M_{\mathrm{frozen}}}
\newcommand{\mafro}[1]{\@M_{1,\mathrm{frozen}}{#1}}
\newcommand{\mafronull}{\@M_{1,\mathrm{frozen}}}
\newcommand{\mbfro}[1]{\@M_{2,\mathrm{frozen}}{#1}}
\newcommand{\mbfronull}{\@M_{2,\mathrm{frozen}}}

\newcommand{\vset}{V_{\mathrm{set}}}
\newcommand{\pcurrent}{\+P^{\dagger}}

\newcommand{\Succ}{\mathsf{Succ}}

\newcommand{\Dis}{\!{Dis}}
\newcommand{\pdisc}{\+P_{\bot}}
\newcommand{\pzeta}{\+P_{\theta}}
\newcommand{\pdisci}{\+P^{\ast}_{\bot}}
\newcommand{\pdiff}{\+P_{u}}
\newcommand{\pdiffinit}{\+P_{0}}

\newcommand{\cdisc}{\+C_{\bot}}
\newcommand{\csmall}{\+C_{s}}
\newcommand{\psmall}{\mathcal{Z}}
\newcommand{\cremain}{\+C_{u}}

\newcommand{\pfinal}{{\+{L}}}
\newcommand{\pinit}{{\+{I}}}

\newcommand{\pdecomex}{\+P'}

\newcommand{\qtmpone}{Z_1}
\newcommand{\qtmptwo}{Z_2}
\newcommand{\sizeuperbound}{\gamma^{\theta}}
\newcommand{\sizeuperboundsquare}{\gamma^{2\theta}}

\newcommand{\sizeuperboundpfinal}{\alpha}
\newcommand{\sizeuperboundlp}{\gamma}
\newcommand{\weightfunction}{\widehat{f}}
\newcommand{\weightconstant}{\zeta}

\newcommand{\inputtrivial}{\Phi_1, \Phi_2,\pdecomex,P}

\newcommand{\inputonestep}{\Phi_1, \Phi_2,\pdecomex,P,P_i,t}

\newcommand{\easycase}{{easy-to-handled factorized formulas}\xspace}
\newcommand{\hardcase}{{challenging factorized formulas}\xspace}

\newcommand{\initialcouple}
{\textnormal{\textsf{InitialCouple}}}

\newcommand{\onestep}{{\mathsf{CP}}}
\newcommand{\cpperm}{\mathsf{CP}_{\mathsf{perm}}}
\newcommand{\couple}{\mathsf{Couple}}
\newcommand{\supp}{\!{supp}}


\newcommand{\distcouple}{\mathsf{dist}_{\@M}}

\newcommand{\edgepermutation}{\+E_{P}}
\newcommand{\edgeconstraint}{\+E_C}
\newcommand{\propagate}{propagation trajectory\xspace}

\newcommand{\witnesswrtp}[1]{\+W[#1]}
\newcommand{\pref}{\mathsf{suf}\xspace}


\title{Sampling permutations satisfying constraints \\within the lopsided local lemma regime}

\author{Kun He}
\address[Kun He]{Renmin University of China. \textnormal{E-mail: \url{hekun.threebody@foxmail.com}}}

\author{Guoliang Qiu}
\address[Guoliang Qiu]{Shanghai Jiao Tong University, China. \textnormal{E-mail: \url{guoliang.qiu@sjtu.edu.cn}}}

\author{Xiaoming Sun}
\address[Xiaoming Sun]{ICT, Chinese Academy of Sciences, China. \textnormal{E-mail: \url{sunxiaoming@ict.ac.cn}}}

\begin{document}

\begin{abstract}
Sampling a random permutation with restricted positions, or equivalently approximating the permanent of a 0-1 matrix, is a fundamental problem in computer science, with several notable results achieved over the years.
However, existing algorithms typically exhibit high computational complexity. 
Achieving the optimal running time remains elusive, even for nontrivial subsets of the problem.
Furthermore, existing algorithms primarily focus on a \emph{single} permutation, leaving many combinatorial problems involving \emph{multiple} constrained permutations unaddressed.

For a single permutation, we achieve the optimal running time $O(n^2)$ for approximating the permanent of a very dense $n \times n$ 0-1 matrix, where each row and column contains at most $\sqrt{(n-2)/20}$ zeros. This result serves as a fundamental building block in our sampling algorithm for multiple permutations.

We further introduce a general model called \emph{permutations with disjunctive constraints} (PDC) for handling multiple constrained permutations. We propose a novel Markov chain-based algorithm for sampling nearly uniform solutions of PDC within a lopsided Lov{\'a}sz Local Lemma (LLL) regime. For uniform PDC formulas, where all constraints are of the same width and all permutations are of the same size, our algorithm runs in nearly linear time with respect to the number of variables.

Previous approaches for sampling LLL relied on the variable model, where the underlying probability space is a product space. In contrast, the sampling problem of PDC encounters a fundamental challenge: the random variables within each permutation in the joint probability space are \emph{not} mutually independent, leading to long-range correlations that previous factorization techniques cannot cut down.
To tackle this challenge, we introduce a novel sampling framework called \emph{correlated factorization} and a new concept in the path coupling analysis, termed the \emph{inactive vertex}. 
As a result, a sampling LLL beyond the variable model is established. 
\end{abstract}

\maketitle
\setcounter{tocdepth}{1}


\tableofcontents

\setcounter{page}{0} \thispagestyle{empty} \vfill
\pagebreak

\section{Introduction}\label{sec-intro}
\medskip
\noindent
Given a positive integer $n$, 
sampling all the permutations of $[n]$ uniformly at random is one of the most fundamental problems in probability theory.
This problem, usually known as card shuffling for mathematicians,
has a long history~\cite{renyi1984diary}, and has led to numerous profound results~\cite{aldous1983random,aldous1986shuffling,diaconis1988group,bayer1992trailing}.
Compared to the original card shuffling problem, which involves no specific constraints on the permutations, the sampling of permutations that satisfy various constraints has also attracted significant attention in the fields of probability, statistics, and computer science.
Numerous important problems can be viewed as specialized instances of the problem of sampling permutations with constraints. The following are some well-studied examples:
\noindent
\begin{itemize}
\item \textbf{sampling permutations with restricted positions}.
An instance of \emph{permutations with restricted positions} (PRP) includes a positive integer $n$ and a collection of sets $R_1,R_2,\ldots,R_n\subseteq [n]$. The sampling of PRP is the problem of sampling permutations of $[n]$ uniformly at random from $\prod_{i\in [n]}R_i$.

\item \textbf{sampling perfect matchings of a bipartite graph}. Given a bipartite graph with $n$ vertices in each of the two partitions, a perfect matching is a subset of edges such that each vertex is adjacent to exactly one edge. 

\item \textbf{approximating the permanent of a 0-1 matrix.} 
Given a 0-1 square matrix $A = (a_{i,j})_{n\times n}$ where $a_{i,j}\in \set{0,1}$ for any $i,j\in [n]$, the permanent of $A$ is defined as 
\[\!{per}(A) \triangleq \sum_{\sigma }\prod_{i\in [n]}a_{i,\sigma(i)},\]
where the sum is over all permutations $\sigma$ of $[n]$.
Computing the permanent of a matrix is one of the first problems shown to be $\#$P-complete~\cite{valiant1979complexity}, even if the matrix is a 0-1 matrix where the row and column sums are at least $n/2$. 
By the self-reducibility property of the problem~\cite{jerrum1986random}, any polynomial time algorithm for generating a random permutation $\sigma$ where $\prod_{i\in [n]}a_{i,\sigma(i)}\neq 0$ can be used to efficiently approximate $\!{per}(A)$.
\end{itemize}
The equivalence of these three problems is well-known in the literature~\cite{jerrum1986random}.

Several effective algorithms have been introduced for both sampling PRP and approximating the permanent of a 0-1 matrix. 
Broder~\cite{broder1986hard} initially devised a Markov chain method based on switching for PRP sampling and permanent approximation. He also proved that, even for \(\gamma\)-dense 0-1 matrices where the row and column sums are at least \(\gamma n\) with a fixed \(\gamma \in (0,1)\), calculating the permanent exactly remains $\#$P-hard.
Later, Jerrum and Sinclair demonstrated that Broder's Markov chain operates in \(O(n^8 \log n)\) time for \(1/2\)-dense matrices~\cite{jerrum1989approximating}. 
Jerrum, Sinclair, and Vigoda introduced another Markov chain where the parameters were fine-tuned to approach the target distribution \cite{jerrum2004polynomial}. This innovation led to an algorithm capable of approximating the permanent in \(O(n^{10}(\log n)^3)\) time, establishing the first fully polynomial randomized approximation scheme (FPRAS) for the permanent of matrices with nonnegative entries. Bezáková et al. later enhanced this time complexity to \(O(n^7(\log n)^4)\) \cite{bezakova2008accelerating}.
For \(\gamma\)-dense cases where \(\gamma > 1/2\), Huber and Law introduced a more refined sampling algorithm for PRPs with running time \(O\left(n^4 \log n + n^{2+(1-\gamma)/(2\gamma - 1)}\right)\) \cite{huber2008fast}.

Beyond single permutations with restricted positions, many combinatorial problems can be modeled as multiple permutations with constraints. Examples include Latin squares~\cite{keedwell2015latin}, independent transversals and perfect matchings in multipartite hypergraphs~\cite{erdHos1994independent, aharoni2009perfect}, the strong chromatic number~\cite{HS17}, and hypergraph packing~\cite{LS07}.
Furthermore, the sampling problem of multiple permutations with constraints arises naturally in resource-allocation settings (see \Cref{sec-application-intro}).
There are few results on sampling multiple permutations with constraints, and all are limited to specific problems. McKay and Wormald developed an algorithm based on ``switchings'' to generate random Latin rectangles~\cite{mckay1991uniform}. Cooper, Dyer, and Handley introduced the first rigorous polynomial mixing-time bound, $O(n^{10}\log n)$, for natural Markov chains that sample discordant permutations and Latin rectangles~\cite{cooper2011networks}.

Multiple permutations satisfying given constraints can be captured within the framework of constraint satisfaction problems (CSP).
Formally, given any positive integer $m$,
an instance of CSP on $m$ permutations
is denoted by $\Phi = \left(\+P,\+Q, \+C \right)$,
where $(\+{P}, \+Q)$ specifies the 
permutations and $\+C$ specifies the constraints. 
In this setup, \(\+P = (P_1, \ldots, P_m)\) is a list of variable sets, with $P_1, \ldots, P_m$
being disjoint sets. Similarly, \(\+Q = (Q_1, \ldots, Q_m)\) is a list of sets of numbers, where each \(Q_i\) has the same size as its corresponding \(P_i\).
Intuitively, \(P_i\) represents the variables in one permutation, and \(Q_i\) is the set of values those variables can take. An assignment of values to all the variables in $V\triangleq P_1\cup \cdots \cup P_m$ is called \emph{valid} if, for each \(i\), the values assigned to the variables in \(P_i\) form a permutation of the values in \(Q_i\).
The constraints $\+C$ define the conditions that any valid assignment must satisfy.
Specifically, we focus on disjunctive constraints as in SAT, where each constraint $C\in \+C$ has the form  
$(v_1\neq c_1) \lor \cdots \lor (v_{\ell}\neq c_{\ell})$, with $\ell>0$; here, each $v_i$ is a variable and $c_i$ is a value it can take.
A \emph{satisfying} assignment for $\Phi$ is a valid one that meets all the constraints in $\+C$.
Two constraints $C = \cdots \lor (v\neq c)\lor \cdots$ and $C'=\cdots \lor (v'\neq c')\lor \cdots$ are called \emph{related} (denoted as $C\sim C'$)
if either $v=v'$, or $v$ and $v'$ belong to the same set $P_i$ and $c=c'$.
In the context of the lopsided \emph{Lov$\acute{a}$sz Local Lemma} (LLL), this relation reflects a negative dependence between the events of $C$ and $C'$ being violated~\cite{HS17}.
In this paper, we focus on formulas for \emph{permutations with disjunctive constraints} (PDC).

To illustrate the PDC framework, consider the problem of uniformly sampling perfect matchings in a 3-partite hypergraph. Let $H=(V_1,V_2,V_3,E)$ with $V_i=\{v^i_1,\dots,v^i_q\}$ for $i=1,2,3$, where each hyperedge $e\in E\subseteq V_1\times V_2\times V_3$ contains one vertex from each part. A perfect matching is a subset $S\subseteq E$ that covers every vertex exactly once. Encode this sampling task as a PDC instance $\Phi=(\mathcal P,\mathcal Q,\mathcal C)$ with $\mathcal P=(P_1,P_2)$, where $P_1=V_1$ and $P_2=V_2$, and $\mathcal Q=(Q_1,Q_2)$ with $Q_1=Q_2=[q]$. For every $i,j,k\in[q]$ such that $(v^1_i,v^2_j,v^3_k)\notin E$, include the constraint $(v^1_i\neq k)\lor(v^2_j\neq k)$. Thus, if $\sigma$ is a satisfying assignment of $\Phi$ in which $(v^1_i = k)$ and $(v^2_j = k)$ for some $k\in[q]$, we must have $(v^1_i,v^2_j,v^3_k)\in E$. Because each set $P_\ell$ where $\ell\in \{1,2\}$ must be assigned a permutation of $Q_\ell$, any satisfying assignment $\sigma$ yields exactly $q$ such triples, one for each $k$, and these triples form a perfect matching of $H$; the converse also holds.

Given a PDC formula $\Phi$, the following are its key parameters:
\begin{itemize}
    \item \emph{width} $k=k_{\Phi}\triangleq \max_{C\in \+C}\abs{\vbl(C)}$ where $\vbl(C)$ denotes the set of variables used by $C$;
 
    \item \emph{variable degree} $d = d_{\Phi} \triangleq \max_{v\in V}\abs{\{C\in \+{C}\mid v\in \vbl(C)\}}$;
    \item \emph{constraint degree} $\Delta = \Delta_{\Phi} \triangleq \max_{C\in \+{C}}\abs{\{C'\in \+{C}\mid C'\sim C\}}$;\footnote{The constraint degree $\Delta$ should be distinguished from the \emph{lopsidependency degree} $D$, which is the maximum degree of the lopsidependency graph: $D\triangleq\max_{C\in \+{C}}\abs{\{C'\in \+{C}\setminus\{C\}\mid C\sim C'\}}$. Note that $\Delta=D+1$.}
     \item \emph{minimal \pname size} $q = q_{\Phi} \triangleq \min_{P\in \+P}\abs{P}$;
    \item \emph{violation probability} $p=p_{\Phi}=\max_{C\in \+{C}}\^P[\neg C]$, where $\^P \triangleq \^P_\Phi$ denotes the uniform distribution over all valid assignments, and $\neg C$ denotes the event that $C$ is violated.
\end{itemize}
Currently, no effective algorithm is available for sampling PDC.

\subsection{Sampling LLL}\label{sec-previous-works}
Uniformly sampling a random solution that satisfies a set of constraints has garnered significant attention in recent years. The Lovász Local Lemma (LLL) is a powerful tool for sampling constraint-satisfying solutions, demonstrating the feasibility of avoiding all ``bad" events under certain weak dependency conditions~\cite{EL75}. It has been shown that CSP sampling is tractable only within an LLL-like regime \( p \Delta^2 \gtrsim 1 \), where \(\gtrsim\) disregards lower-order terms and constant factors~\cite{BGGGS19, galanis2023inapproximability}. Efficient algorithms for near-uniform sampling of solutions in this regime are collectively known in the literature as the \emph{sampling LLL}.
Recent progress in this field is notable~\cite{GJL19, Moi19, guo2019counting, galanis2019counting, harris2020new, FGYZ20, feng2021sampling, Vishesh21sampling, Vishesh21towards, HSW21, galanis2021inapproximability, feng2022improved, he2022sampling, he2023deterministic, WY24}. Prior approaches to the sampling LLL are based on product probability spaces defined by sets of mutually independent random variables, commonly known as the variable model~\cite{MT10, Beyond}. However, the sampling problem for PDC extends \emph{beyond} this variable model. In PDC, the underlying probability space is a \emph{joint} space in which the random variables associated with each permutation are not mutually independent.
This inherent dependency among variables makes PDC sampling particularly challenging.

To understand the technical barrier in the sampling of PDC, let’s examine the sampling of PRP, a specific case of PDC involving a single permutation. Existing fast algorithms for the sampling LLL leverage formula factorization. 
Given a CSP instance \( \Phi \), if a subset of \( \Phi \)’s variables is correctly assigned, a proportion of the constraints will be satisfied. By removing all satisfied constraints, the remaining CSP can be factorized into disjoint sub-formulas of logarithmic size, making sampling on the reduced formula straightforward.
However, this approach does not apply to PRP. 
A PRP cannot be factorized into smaller instances even if most constraints are satisfied, as its variables must collectively form a single permutation. 
Similarly, for PDC formulas involving multiple permutations, the techniques used in the sampling LLL also fail due to the interdependence of the large permutations.

To characterize the joint probability space of PDC, a more suitable tool is the lopsided LLL~\cite{erdHos1991lopsided}, which can yield improved bounds on PDC formulas compared to the standard LLL. The lopsided LLL is a variant of the LLL that applies to \emph{lopsidependent} ``bad" events exhibiting negative correlation. Specifically, given a collection \(\mathcal{B}\) of bad events in a probability space, for any event \(A\) defined on that space and any subset \(S \subseteq \mathcal{B}\), we say that \(A\) is non-lopsidependent with respect to \(S\) if
\begin{align*}
    \Pr{A\; \big \vert \; \bigwedge_{B\in S} \overline{B}} \le \Pr{A}.
\end{align*}
In other words, the events in \(S\) are either independent of or positively correlated with \(A\). Furthermore, the lopsidependency among events can be represented by the \emph{lopsidependency graph}, an undirected graph \(G = (\mathcal{B}, E)\) that satisfies the following property: for any \(B \in \mathcal{B}\) and any subset \(S \subseteq \mathcal{B} \setminus \bigl(\Gamma(B) \cup \{B\}\bigr)\), where \(\Gamma(B)\) denotes the neighbors of \(B\) in \(G\), the event \(B\) is non-lopsidependent with respect to \(S\). 
The lopsided LLL then provides a condition ensuring that the probability of none of the events in \(\mathcal{B}\) occurring is positive (see \Cref{thm-lopsiLLL}).
As a fundamental tool in probabilistic combinatorics,
the lopsided LLL underpins a wide range of applications~\cite{erdHos1991lopsided, HarrisS14a}. However, research within this framework has so far been limited to existential and constructive results, leaving a significant gap: sampling algorithms tailored to the lopsided LLL condition remain elusive. The design of a PDC sampling algorithm can offer critical insights into developing sampling algorithms in the lopsided LLL context. Such a breakthrough would not only address this long-standing deficiency but also pave the way for a wealth of new techniques, greatly advancing the sampling applications based on the lopsided LLL.

In summary, two related problems in the sampling of constrained permutations need to be addressed:
\begin{itemize}
\item \emph{Optimal algorithm for the sampling of permutations with constraints.} 
Current sampling algorithms primarily rely on the Markov chain method. To establish rapid mixing of these Markov chains, the canonical path approach is often used, but this tends to yield loose bounds on running time. The current best algorithm for approximating the permanent has a running time of \( O(n^7 (\log n)^4) \)~\cite{bezakova2008accelerating}. Even for \( \gamma \)-dense PRPs, the novel rejection sampling algorithm requires \( O\left( n^4 \log n + n^{2 + (1-\gamma)/(2\gamma - 1)} \right) \)~\cite{huber2008fast}. 
The running times of these algorithms remain far from optimal. Developing optimal-time sampling algorithms for even a nontrivial subset of the problem would be highly significant, as it could offer valuable insights into the strategies that an optimal algorithm should employ.

\item \emph{Efficient algorithm for the sampling of multiple permutations.}
Existing sampling algorithms for multiple permutations with constraints are limited to specific problems, and no efficient sampling algorithm currently exists for the general PDC model, which captures a wide range of important combinatorial problems. 
Developing an efficient sampling algorithm for PDC would address the sampling challenges of these combinatorial objects by providing a robust method for sampling   permutations with complex constraints. 
Furthermore, such an algorithm would represent a crucial step in extending the sampling LLL from the standard LLL condition to the more general lopsided LLL, thereby broadening the scope of sampling techniques to accommodate intricate dependency structures. Consequently, an efficient sampling algorithm for PDC remains highly desirable.
\end{itemize}

Indeed, prior to our work, no nontrivial class of PRPs was known for which the sampling algorithm achieved optimal running time.
Additionally, it was not known whether the sampling problem for PDC is polynomial-time tractable under a lopsided LLL condition.

\subsection{Main results}
In this paper, we resolve the aforementioned open questions. 
For a single permutation, we propose a fast sampling algorithm that runs in optimal \(O(n^2)\) time for a specific class of PRPs. Moreover, this algorithm constitutes a key technique in our approach to sampling PDC formulas. For multiple permutations, we introduce a fast algorithm for sampling satisfying assignments of PDC formulas in the lopsided LLL regime. 
This is the first sampling LLL beyond the variable model.
Furthermore, our method achieves nearly linear runtime for uniform PDC formulas, where all constraints have the same width and all permutations are of equal size.

\subsection*{Results on a single permutation}
Let $\gamma\in (0,1]$.
Given any \(n \times n\) 0-1 matrix, we define it as \emph{very dense} if each row sum and each column sum are greater than \(n - \sqrt{(n-2)/20}\). Similarly, for any integer \(n \geq 2\) and any sets \(R_1, \dots, R_n \subseteq [n]\), we call the instance \emph{very dense} if \(\abs{R_i} \geq n - \sqrt{(n-2)/20}\) for each \(i \in [n]\), and \(\abs{\{k \in [n] \mid j \in R_k\}} \geq n - \sqrt{(n-2)/20}\) for each \(j \in [n]\).

Our first result for very dense 0-1 matrices provides a tight approximation of the permanent.
For any $x\geq 0$ and $y>0$, define 
\begin{equation}\label{eq-define-gxy}
\begin{aligned}
g(x,y)\triangleq 2\pi y \cdot \exp\left(\frac{1}{3y}\right) \cdot  \left(\frac{y}{\e}  \right)^{y}\cdot  \left(1-\frac{x}{y^2} \right)^{y}.
\end{aligned}
\end{equation}
Then we have the following theorem.
\begin{theorem}\label{thm-esitimator-result}
For any 0-1 matrix $A$ of size $n\times n$, let $\rho$ denote the total number of $0$s in $A$. If each row sum and each column sum are larger than $n - \sqrt{(n-2)/20}$, then we have 
\begin{align*}
\frac{g(\rho,n)}{\sqrt{2\pi n} \;\e^2} \leq \perman(A) \leq  19\cdot\frac{g(\rho,n)}{\sqrt{2\pi n} \;\e^2}.
\end{align*}
\end{theorem}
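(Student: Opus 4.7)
The plan is to estimate $\perman(A)$ by inclusion-exclusion over the zero positions of $A$ and then compare the resulting alternating sum to Stirling's approximation of $n!$. Let $Z \subseteq [n] \times [n]$ be the set of positions where $A$ has a $0$, so that $|Z| = \rho$, and let $N_k$ denote the number of $k$-matchings in $Z$: $k$-subsets of $Z$ with pairwise distinct rows and pairwise distinct columns. A standard inclusion-exclusion yields
\begin{equation*}
\perman(A) \;=\; \sum_{k=0}^{n} (-1)^k N_k (n-k)! \;=\; n! \sum_{k=0}^{n} (-1)^k \frac{N_k}{n^{\underline{k}}},
\end{equation*}
where $n^{\underline{k}} = n(n-1)\cdots(n-k+1)$. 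When $A$ is the all-ones matrix, only $N_0$ contributes and we recover $n!$; the very-dense hypothesis is precisely what is needed to keep the remaining terms controllable.

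Setting $s := \sqrt{(n-2)/20}$, the bound on per-row/column zeros implies that the number of ordered pairs of zeros sharing a row or column is at most $n s^2 \le (n-2)n/20$. I will use this to bracket $N_k$ between the trivial upper bound $\binom{\rho}{k}$ (obtained by dropping the non-attacking-rook constraint) and a matching lower bound $\binom{\rho}{k}(1-\theta_k)$, where $\theta_k$ depends on the forbidden-pair count and can be controlled geometrically in $k$. Substituting these estimates into the alternating sum and comparing it against $\sum_k (-1)^k (\rho/n)^k / k!$ (which is a truncation of $\e^{-\rho/n}$ and is close to $(1 - \rho/n^2)^n$), I obtain two-sided bounds of the form $n! \cdot c_\ell(\rho,n) \le \perman(A) \le n! \cdot c_u(\rho,n)$, where both $c_\ell$ and $c_u$ agree with $(1-\rho/n^2)^n$ up to constants.

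From there, Stirling's inequality $\sqrt{2\pi n}(n/\e)^n \le n! \le \sqrt{2\pi n}(n/\e)^n\, \e^{1/(12n)}$ converts $n!$ into the $(n/\e)^n$-form appearing in $g(\rho,n)$, and the factor $\e^{1/(3n)}$ built into $g$ absorbs the Stirling correction with slack. The explicit normalization $\sqrt{2\pi n}\cdot \e^2$ in the denominator of the theorem budgets the constant-factor slack on both sides of the bound. The main obstacle will be calibrating the even- and odd-indexed partial sums of the alternating series so that the combined losses produce the asymmetric constants $1$ and $19$: the lower bound must absorb the Stirling undershoot together with the overcount of $N_k$ at even $k$, while the upper bound must absorb the Stirling overshoot together with the undercount at odd $k$. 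The density threshold $s \le \sqrt{(n-2)/20}$ is calibrated exactly so that $N_k/\binom{\rho}{k}$ deviates from $1$ by a geometric sequence with ratio small enough to keep these telescoping estimates tight; verifying this calibration at the precise level required to extract the stated constants is the principal technical task.
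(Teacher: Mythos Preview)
Your approach via inclusion--exclusion over zero positions is entirely different from the paper's. The paper never touches the alternating sum $\sum_k (-1)^k N_k (n-k)!$. For the lower bound it applies the lopsided Lov\'asz Local Lemma to the bad events $\{\sigma(i)=j\}$ for $A_{i,j}=0$, choosing $x(A_{i,j}) = \tfrac{1}{n}(1-\tfrac{1}{n})^{-2\Delta}$ with $\Delta$ the lopsidependency degree; this yields a product-form lower bound $n!\prod(1-x)^{\rho}$ which is then compared termwise to $g(\rho,n)$. For the upper bound it quotes Huber's refinement of Bregman, $\perman(A)\le \prod_i f(|R_i|)/\e$ with $f(a)\le a+\tfrac12\ln a+1.65$, and applies AM--GM to the factors $|R_i|/n$ to extract the $(1-\rho/n^2)^n$ shape. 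Both halves are thus comparisons of products, which is what makes the constants $1$ and $19$ fall out in a few lines.

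Your route is more elementary in that it avoids both the LLL and the Huber/Bregman machinery, but as written it defers essentially the entire proof. Bracketing $N_k$ by $\binom{\rho}{k}(1-\theta_k)\le N_k\le \binom{\rho}{k}$ does not by itself bound an alternating sum in either direction; you must combine it with Bonferroni truncation, control the tail and the accumulated $\theta_k$ simultaneously, and then separately compare $\sum_k (-1)^k \binom{\rho}{k}/n^{\underline{k}}$ to $(1-\rho/n^2)^n$ --- both are roughly $\e^{-\rho/n}$, but matching them up to the specific constants $1$ and $19$ is its own calculation that you have not carried out. You describe this as ``the principal technical task,'' but it is in fact the whole content of the theorem, and nothing in your sketch shows the constants close. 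The paper's product-form arguments sidestep all of this precisely because the LLL and Huber bounds already arrive in the right multiplicative shape.
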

This theorem is restated as \Cref{thm-estimator-num-prp} in the context of PRP. 
It serves as a basic ingredient for our sampling and approximate counting algorithms for very dense PRP, just as in Theorems \ref{thm-sampler-PRP-LLL} and \ref{thm-estimator-PRP-LLL}.
The bound we establish in this theorem is remarkably sharp, tight up to a constant factor.
The conditions on the row sum and the column sum guarantee that the degree of the lopsidependent graph is bounded such that the lopsided LLL is applicable (see \Cref{sec-lll}).
Using the lopsided LLL, we achieve a good approximation of the permanent, accurate to within a constant factor. 
This conclusion appears somewhat surprising
because the lower bound provided by LLL is typically considered to be loose.

Based on \Cref{thm-esitimator-result},
we have the following exact sampler for PRP in the lopsided LLL regime.

\begin{theorem}[Sampler for very dense PRP in the lopsided LLL regime]\label{thm-sampler-PRP-LLL}
There exists an exact sampler such that given as input any integer {$n \geq 2$} and any $R_1,\cdots,R_n\subseteq [n]$ where {$\abs{R_i} \geq n - \sqrt{(n-2)/20}$} for each $i\in [n]$
and $\abs{\{k\in [n]\mid j\in R_k\}} \geq n - \sqrt{(n-2)/20}$ for each $j\in [n]$, it outputs a random permutation uniformly from $\prod_{i\in [n]}R_i$, in expected running time $O(n^{2})$. In addition, there is also an approximate sampler such that given as input any $\varepsilon\in (0,1)$ with an instance satisfying the same condition, it outputs random permutations with total variance distance of $\varepsilon$ from the uniform distribution, in running time $O(n^{2}\cdot \log (1/\varepsilon))$.
\end{theorem}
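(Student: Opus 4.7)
The plan is to sample $\sigma$ coordinate by coordinate along a canonical ordering of the rows, using \Cref{thm-esitimator-result} to supply a closed-form constant-factor approximation of every conditional marginal and a local rejection step to correct the residual bias. Concretely, given a partial permutation $\sigma(1),\dots,\sigma(i-1)$, the true conditional on $j\in R_i\setminus\{\sigma(1),\dots,\sigma(i-1)\}$ is $p_j=\perman(A_{i,j})/\perman(A_i)$, where $A_i$ is the submatrix of $A$ with the already-used rows and columns deleted and $A_{i,j}$ further deletes the row of $i$ and the column of $j$. By \Cref{thm-esitimator-result}, each permanent in this ratio is pinned to the closed form $g(\rho,\cdot)/(\sqrt{2\pi\cdot}\,\e^2)$ up to a factor of $19$, and the zero-count $\rho$ of every candidate submatrix can be maintained incrementally in $O(1)$ per candidate as rows and columns are removed. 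Hence a proxy distribution $\widehat p_j$ satisfying $\widehat p_j/c \le p_j \le c\,\widehat p_j$ for a universal constant $c$ is computable in $O(n)$ time per row.

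To turn $\widehat p$ into an exact sample I would draw $j\sim\widehat p$ at each row and accept it with a probability expressible purely in terms of the closed-form estimates, along the lines of a standard exact rejection scheme: the acceptance step compares the (recursively determined) count of successful completions starting from $j$ against an envelope obtained from \Cref{thm-esitimator-result}, so that the accepted $\sigma$ is uniformly distributed on the set of valid permutations. Because the sandwich bound is by a constant factor at each row, the local acceptance probability is $\Omega(1)$ and only $O(1)$ restarts are needed per row in expectation, giving $O(n)$ amortized work per row and $O(n^2)$ expected total running time. The approximate $\varepsilon$-sampler then follows by truncating the number of restarts at $\Theta(\log(1/\varepsilon))$ and returning a default valid permutation on the unlikely failure event; a geometric tail bound caps the failure probability by $\varepsilon$, yielding the stated $O(n^2\log(1/\varepsilon))$ bound.

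The principal technical obstacle will be verifying that \Cref{thm-esitimator-result} still applies to every submatrix encountered along the sweep: the estimator is stated for an $n\times n$ matrix with at most $\sqrt{(n-2)/20}$ zeros per row and column, but after $i$ rounds of exposure the relevant submatrix has dimension $n-i+1$ while its zero counts are inherited from the original $n$, so the density hypothesis is not automatically preserved. I would address this by revisiting the lopsided LLL argument underlying \Cref{thm-esitimator-result} and showing that the same constant-factor sandwich continues to hold under inherited parameters, possibly with slightly relaxed numerical constants, and by switching to brute enumeration once $n-i+1$ drops below a fixed constant threshold, a phase that contributes only $O(1)$ additional work and therefore fits comfortably within the $O(n^2)$ budget.
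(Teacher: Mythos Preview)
The paper takes a much shorter route than you do: it invokes Huber's existing self-reducible sampler from \cite{huber2006exact} as a black box. That sampler already handles the row-by-row recursion correctly, because its envelope $U(A)=\prod_i f(|R_i|)/\e$ is built from a function $f$ satisfying the telescoping identity implicit in \Cref{lemma-upper-bound-fa-perm}; the only missing ingredient is a bound on the ratio $U(A)/\perman(A)$ for the \emph{original} matrix, and \Cref{thm-esitimator-result} supplies exactly that (the ratio is at most $19$). No property of any submatrix is ever needed. Your proposal instead tries to build the recursive sampler from scratch with $g(\rho,\cdot)$ as the envelope, and this runs into two genuine obstacles.

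First, an exact rejection sampler of the Huber type requires the envelopes to satisfy $\sum_{j\in R_i} U(A_{i,j})\le U(A_i)$ (or at worst a universal constant multiple), so that at each node you can sample a child with probability $U(A_{i,j})/U(A_i)$ and have the overall success probability equal exactly $\perman(A)/U(A)$. Huber's $f$ is engineered precisely for this; nothing of the sort is established for $g$, and ``accept with a probability expressible purely in terms of the closed-form estimates'' does not specify a mechanism that yields exact uniformity---a constant-factor sandwich on the marginals is not enough. Second, the density hypothesis of \Cref{thm-esitimator-result} fails for the submatrices essentially from the first step, not just near the end: after deleting one row and column you have an $(n-1)\times(n-1)$ matrix whose per-row zero count can still be $\sqrt{(n-2)/20}$, already exceeding $\sqrt{((n-1)-2)/20}$. ``Revisiting the lopsided LLL argument'' does not obviously repair this, because the LLL degree bound in the proof of \Cref{thm-esitimator-result} is tied to the ambient dimension, and midway through the sweep the dimension has dropped by $\Theta(n)$ while the zero counts need not have. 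The clean fix is exactly what the paper does: keep Huber's recursion (which never needs a density hypothesis on submatrices) and use your constant-factor bound only once, at the top.
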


Note that for the input described in the preceding theorem, any sampling algorithm for a very dense PRP necessarily takes at least $\Omega(n^2)$ time, since simply reading the input already costs $\Omega(n^2)$.
Therefore, our algorithm attains an optimal running time.
Previously, the best sampling algorithm for very dense PRP had a running time of $\tilde{O}\left(n^4\right)$~\cite{huber2008fast}. 
\Cref{thm-sampler-PRP-LLL} provides a significant speedup for these instances.
We also obtain a fast approximate counting algorithm for very dense PRP, which is stated in \Cref{thm-estimator-PRP-LLL}.
Theorems \ref{thm-sampler-PRP-LLL} and \ref{thm-estimator-PRP-LLL} are basic ingredients for our sampler of PDC.

\subsection*{Results on multiple permutations}
Recall the parameters $p,\Delta,k,q$ of $\Phi$. 
A PDC formula $\Phi$ is $(k,q)$-uniform if all constraints have the same length $k$ and all permutations have the same size $q$. We have the following theorem for $(k,q)$-uniform PDC formulas.

\begin{theorem} \label{theo:unif}
The following holds for some positive constants $q_{\min}$ and $c$.
There is an algorithm such that given as input any $\varepsilon\in (0,1)$ and any $(k,q)$-uniform PDC formula $\Phi$ with $n$ variables where $\Phi$ satisfies $q\geq q_{\min}$ and 
    \[k\ge 24, \quad q^k\geq c k^{24}\Delta^{32},\]
it outputs random valid assignments of $\Phi$ with total variance distance of $\varepsilon$ from the uniform distribution of all satisfying assignments of $\Phi$, in time $\widetilde{O}({k\Delta^2 n^{1.001}/\varepsilon})$.
\end{theorem}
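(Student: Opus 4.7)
The plan is to sample from $\Phi$ by running a Markov chain on its satisfying assignments and to bound its mixing time via path coupling. Because the variables inside each permutation are not mutually independent, the usual LLL-style factorization of the state space into disjoint logarithmic-sized pieces is unavailable, so I would instead invoke the \emph{correlated factorization} framework highlighted in the introduction. In it, a partial state is broken into sub-instances that are either small enough to be enumerated directly, or essentially single very dense permutations with restricted positions, for which the optimal $O(q^2)$ sampler of \Cref{thm-sampler-PRP-LLL}---built on the tight estimate in \Cref{thm-esitimator-result}---serves as a black-box subroutine.

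Concretely, I would first design a block-update Markov chain whose one-step move chooses a window of variables of size roughly $\Theta(k\Delta^2 \log(n/\varepsilon))$ drawn from a bounded number of permutations and resamples it from (approximately) its correct conditional distribution; the correlated factorization reduces this conditional sampling to $O(k\Delta)$ calls to the very dense PRP sampler, so each step costs $\widetilde O(k\Delta q^2)$. I would then analyze rapid mixing by path coupling under a suitable weighted Hamming metric. The key novelty in this step is the notion of an \emph{inactive vertex}: at each coupling round I would mark a set of positions whose conditional distribution is so rigidly constrained by the already-coupled part that they can be treated as frozen, and carry out the coupling on the induced subproblem. This is what tames the long-range correlations created by the fact that any discrepancy at one position in a permutation must be compensated elsewhere in the same permutation. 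The hypothesis $q^k \ge c k^{24}\Delta^{32}$ then plays the role of a lopsided-LLL condition (with $p \lesssim q^{-k}$) ensuring one-step contraction in expectation; combined with a mixing time of order $n^{1.001}$ and a final $\log(1/\varepsilon)$ factor to reach total variation $\varepsilon$, this yields the announced $\widetilde O(k\Delta^2 n^{1.001}/\varepsilon)$ running time.

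The most difficult step will be the path-coupling contraction itself. The lopsided LLL controls only negative correlations among violation events, so standard Dobrushin arguments tailored to the variable model do not apply directly, and one must bound the conditional marginal on a resampled window after an essentially arbitrary assignment to its complement. Pushing the inactive-vertex bookkeeping to absorb every long-range propagation, controlling the size of the inactive set along a coupling trajectory, and ensuring that its removal leaves behind a formula to which the correlated factorization (and hence \Cref{thm-sampler-PRP-LLL}) still applies must all be done simultaneously inside the quantitative window $q^k \ge c k^{24}\Delta^{32}$. Essentially all of the numerical constants in the hypothesis---the exponents $24$ and $32$, the lower bound $k \ge 24$, and the threshold $q_{\min}$---will be dictated by these inactive-vertex estimates.
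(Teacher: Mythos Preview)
Your high-level outline touches the right keywords (path coupling, correlated factorization, inactive vertices), but several load-bearing pieces of the paper's argument are either missing or misdescribed.

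First, and most importantly, you propose to run the Markov chain directly on satisfying assignments. This is exactly the step that fails: in the PDC setting the space of satisfying assignments can be disconnected, so Glauber-type dynamics need not be irreducible. The paper does not work on $\Omega_\Phi$ at all. It introduces a \emph{structure-preserved state compression}: each permutation set $P_i$ is split into sub-permutations $P_{i,1},\dots,P_{i,\ell_i}$ of size roughly $|P_i|^\eta$, and the chain lives on the space $\Omega[\Phi,\mathcal P']$ of assignments of \emph{domains} $Q_{i,j}\subseteq Q_i$ to these sub-permutations. A single transition of the permutation-wise Glauber dynamics picks one full permutation $P\in\mathcal P$ and resamples the domains of all sub-permutations in $\mathcal P'[P]$ from the correct conditional law $\nu_P^{\mathcal Q'}$. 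Your ``window of $\Theta(k\Delta^2\log(n/\varepsilon))$ variables from a bounded number of permutations'' is a different object and would not sidestep the disconnectivity issue.

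Second, for the $(k,q)$-uniform case the very dense PRP sampler of \Cref{thm-sampler-PRP-LLL} is in fact \emph{not} invoked. Because all permutations have the same size $q$ and the decomposition is uniform, one checks that $p_{\max}|P|\Delta\le 1/4$ always holds in the \textsf{Sample} subroutine, so the algorithm stays in the ``easy-to-handle'' branch that uses only plain rejection sampling on the (genuinely small) factorized pieces $\Phi_1,\dots,\Phi_{K-1}$. The PRP sampler and the full correlated-factorization machinery are needed only for the general theorem where permutations can have wildly different sizes.

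Third, your description of inactive vertices is off. An inactive vertex is not one that is ``rigidly constrained'' and can be frozen; it is a vertex $v$ such that every constraint touching $v$ is already satisfied under both coupled partial states $X_1,X_2$. The point is that even if $X_1(v)\neq X_2(v)$, this discrepancy cannot propagate outward through any constraint. The coupling argument tracks only the \emph{active} discrepant vertices, and the contraction comes from showing that when a large permutation is resampled, most of its sub-permutations are inactive; this is what ultimately forces the exponents $24$ and $32$ and the threshold $k\ge 24$.
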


The formal statement of our theorem is provided in \Cref{theo:unif-formal}. The sampling problem for hypergraph coloring can be reduced to that for \((k,q)\)-uniform formulas\footnote{The main idea of the reduction is as follows. Consider the problem of $q$-coloring a $k$-uniform hypergraph.
For each variable $v$ of the hypergraph, introduce $q-1$ dummy variables, denoted $v_1,\dots,v_{q-1}$. Let $P_v = \{v, v_1, \dots, v_{q-1}\}$ represent a set of variables in the PDC, with the domain $Q_v = \{0, 1, \dots, q-1\}$. By further decomposing the non-monochromatic constraints into disjunctive constraints, the reduction is completed.
}. Combined with known hardness results for hypergraph coloring~\cite{BGGGS19, galanis2023inapproximability}, this reduction implies that the sampling problem for \((k,q)\)-uniform formulas is intractable when \(\Delta^2 \gtrsim q^{k}\). Therefore, to render the problem tractable, it is natural to assume an LLL-like regime in which \(q^{k} \gtrsim \Delta^c\) for some constant \(c\). Our theorem establishes such a regime for \((k,q)\)-uniform formulas. Furthermore, our algorithm runs in nearly linear time in \(n\), which is significantly faster than previous algorithms for multiple permutations~\cite{cooper2011networks}.

Additionally, we have the following algorithm for general PDC formulas.
\begin{theorem}\label{main-theorem}
    The following holds for some positive constants $q_{\min}$ and $c$.
    There exists an algorithm such that given as input any $\varepsilon\in (0,1)$ and any PDC formula $\Phi$ with $n$ variables where $\Phi$ satisfies $q\geq q_{\min}$ and
    \begin{align}
    {cpk^{518}\Delta^{786}}\leq 1, \label{eq-main-theorem}
    \end{align}
    it outputs random valid assignments of $\Phi$ with total variance distance of $\varepsilon$ from the uniform distribution of all satisfying assignments of $\Phi$, in time $\widetilde{O}({k\Delta n^{7}/\varepsilon^2})$.
\end{theorem}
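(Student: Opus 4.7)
The plan is to establish \Cref{main-theorem} by combining the \emph{correlated factorization} framework with the single-permutation sampler of \Cref{thm-sampler-PRP-LLL}, exactly as previewed in the introduction. At a high level, the proof proceeds by (i) projecting the formula onto a carefully chosen ``marked'' subset of variables so that the induced distribution is amenable to Markov-chain methods, (ii) running a path-coupling analysis based on the new notion of \emph{inactive vertex} to establish rapid mixing of a block-Gibbs chain on the projected space under the condition \eqref{eq-main-theorem}, and (iii) recovering a full satisfying assignment by invoking the very-dense PRP sampler on each residual permutation.

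First, I would set up a state-compression step by picking a marked set $V_{\mathrm{mark}}\subseteq V$ such that each constraint $C\in\+C$ retains a constant fraction of its width in $V\setminus V_{\mathrm{mark}}$, while guaranteeing that under \emph{any} satisfying projected assignment on $V_{\mathrm{mark}}$, each residual permutation set $P_i$ is left with at most $\sqrt{(|P_i|-2)/20}$ forbidden positions per variable, so that \Cref{thm-esitimator-result} and \Cref{thm-sampler-PRP-LLL} apply to the conditional distribution of each $P_i$. The strengthened lopsided LLL condition on the projected formula, roughly $\-e p^{\ast}\Delta^{c'}\leq 1$ for a large constant $c'$ extractable from the exponents $518$ and $786$, is what makes such a marking possible and is what the hypothesis \eqref{eq-main-theorem} is engineered to imply.

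Next, I would run a block-update Markov chain $\@M$ on the marked configurations, where at each step a short block is resampled from an approximation to its true marginal; the needed marginal values are computed by a marginal-overflow style reduction to counting very-dense PRPs, with correctness underwritten by the constant-factor estimator of \Cref{thm-esitimator-result}. The rapid-mixing analysis is the heart of the proof and the main obstacle: unlike the variable model, a single-coordinate discrepancy in a permutation component can propagate across the whole $P_i$, so naive coupling fails. I would introduce the inactive-vertex refinement of path coupling, declaring a marked coordinate \emph{active} at a given step only if a bounded-depth witness structure of discrepancies in the lopsidependency graph can reach it, and showing via a \LLLL-style witness-tree enumeration that the expected number of newly active coordinates per step is strictly less than one. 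The exponents in \eqref{eq-main-theorem} are calibrated so that the witness-tree sum, after absorbing the correlated-factorization overhead and the approximation error of the marginal oracle, contracts geometrically, yielding a mixing time of $\widetilde{O}(n)$ on the marked configurations.

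Finally, once a marked configuration is produced within $\varepsilon/2$ total variation of the true projected marginal by running $\@M$ for $\widetilde{O}(n\log(1/\varepsilon))$ steps, I would extend it to a full assignment by sampling each residual permutation $P_i$ independently using \Cref{thm-sampler-PRP-LLL}; the density condition preserved by the marking is precisely what makes each such call run in the time guaranteed by that theorem. The overall runtime tallies to $\widetilde{O}(k\Delta n^{7}/\varepsilon^{2})$: roughly an $\widetilde{O}(n)$ mixing factor, an $O(n^{6})$ per-step cost driven by the PRP sampler and its counting counterpart inside the marginal oracle, and an extra $1/\varepsilon^{2}$ from the accuracy required of the marginal approximations. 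The technical crux, and the step I would expect to consume almost all of the paper's effort, is the verification of one-step contraction for $\@M$ under the lopsided LLL regime, because it is exactly the step where standard product-space factorization techniques break down and where the correlated-factorization and inactive-vertex machinery must be combined to yield a genuine ``sampling LLL beyond the variable model.''
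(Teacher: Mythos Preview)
Your proposal diverges from the paper's approach in a way that leaves a genuine gap. The paper does \emph{not} use a mark/unmark scheme on variables. Its state compression is a \emph{structure-preserved decomposition of permutations}: each permutation set $P_i$ is split into sub-permutation sets $P_{i,1},\dots,P_{i,\ell_i}$, and the Markov chain (a permutation-wise Glauber dynamics) walks on the space of \emph{domain decompositions} $\+Q'$, recording only which subset of values is assigned to each $P_{i,j}$, never which variable gets which value. At each step an entire $P\in\+P$ is chosen and the domains of $\+P'[P]$ are resampled; this is where the PRP sampler and counter of \Cref{thm-sampler-PRP-LLL} and \Cref{thm-estimator-PRP-LLL} enter, inside the \samplepermutation\ subroutine via the correlated-factorization trick. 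The final reconstruction step is plain rejection sampling on the small factorized pieces of $(\+P',\+Q',\+C)$, not PRP sampling on residual permutations.

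The reason your marking plan does not go through as written is precisely the obstruction the paper's decomposition is designed to circumvent. If you fix a partial assignment on $V_{\mathrm{mark}}$ and try to resample a short block, the conditional law of that block depends on the \emph{entire} permutation $P_i$ containing it, because the permutation constraint ties every variable of $P_i$ together regardless of how many constraints in $\+C$ have been satisfied. Thus the conditional formula never factorizes into small components when $|P_i|$ is large, and you cannot reduce the block marginal to a local computation. Your final step, ``sample each residual permutation $P_i$ independently,'' has the same issue: constraints in $\+C$ still link different residual permutations, so they are not independent. The paper's key insight is that tracking only domain partitions (rather than variable values) keeps each compressed state itself a PDC instance whose lopsidependency graph is a subgraph of the original, and then the correlated factorization plus concentration of the PRP solution count (\Cref{lemma-estimation-number-solution}, \Cref{lemma-concentration-rho}) make the per-step update tractable. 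Your sketch invokes correlated factorization and inactive vertices by name, but places them inside a marking framework where the basic factorization they rely on is unavailable; to make the argument work you would have to rediscover the domain-decomposition state space.
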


The formal statement of the theorem appears in \Cref{thm-main-formal}. 
While the constants in \eqref{eq-main-theorem} can be further optimized, obstacles remain in achieving bounds comparable to those in the sampling LLL for the variable model~\cite{Vishesh21sampling,HSW21,he2022sampling,he2023deterministic,WY24}. In particular, the extra condition \(q \geq q_{\min}\) can be eliminated by employing the mark/unmark paradigm~\cite{Moi19}, which is capable of handling small variables and permutations; however, this modification may complicate the overall algorithm design.
Moreover, the time complexity in \Cref{main-theorem} is higher than that in \Cref{theo:unif} due to challenges posed by extreme instances in general PDC sampling, where one permutation is significantly larger than the others. Nevertheless, the time complexity of our sampling algorithm for general PDC is comparable to that of the PRP sampling algorithm~\cite{jerrum2004polynomial}. Given that the sampling challenge for PDC is inherently more complex than that for PRP, the demonstrated time complexity is considered acceptable.

Our marginal sampler for 
drawing from marginal distributions (\Cref{Alg:samplepermutation})
is a core component of the sampling LLL.
This sampler diverges from all previous marginal samplers designed for product probability spaces.
Notably, it enables the rapid sampling of nearly uniform satisfying solutions in the joint probability space of PDC, transcending the variable model within the local lemma framework, an advancement not previously achieved.
Our marginal sampler is established based on the fast sampling and approximation algorithms for very dense PRP (Theorems \ref{thm-sampler-PRP-LLL} and \ref{thm-estimator-PRP-LLL}). 
By combining these tools, we develop a novel sampling framework called \emph{correlated factorization} (see \Cref{sec-breaking-large-components}) for the joint probability space.
Consequently, the marginal sampler is established.

\subsection{Applications}\label{sec-application-intro}
Our sampling algorithms in Theorems \ref{theo:unif} and \ref{main-theorem} are especially well suited to resource-allocation problems of the following form.  There are $m$ resource classes, each holding $q$ distinct units. These units must be allocated among $\ell$ tasks so that each task receives exactly $k$ units, at most one unit from each class, while any surplus units may remain unassigned.
The $k$ resources assigned to a single task are subject to mild compatibility constraints.  In this setting the $q$ resources within any fixed class naturally form a permutation of length $q$, while the compatibility requirements translate into length-$k$ disjunctive constraints on the $k$ variables associated with each task. Several illustrative applications are presented below.

\textbf{Teacher assignment problem.}
Consider a grade with $\ell$ classes and $m$ distinct subjects.  Each class must offer $k$ subjects to its students, and—because the classes have different specializations—the required sets of $k$ subjects may differ from one class to another. 
For each subject $i \in [m]$, exactly $q$ teachers are qualified, and at least $q - L$ of them, designated as the subset $S_i$, are senior enough to serve as homeroom teachers.
No teacher is qualified to teach more than one subject.
We wish to assign teachers to classes under the following conditions:
\begin{itemize}
\item Each teacher is assigned to at most one class.
\item Every class receives exactly one teacher for each of its $k$ required subjects.
\item Every class includes at least one senior instructor to serve as the homeroom teacher.
\end{itemize}
To ensure fairness across classes and subjects, an assignment is selected uniformly at random from all allocations that satisfy these constraints.

\textbf{Reviewer assignment problem.}
Assume $\ell$ proposals are to be evaluated. Reviewers are recruited from $m$ subject areas, with exactly $q$ reviewers assigned to each area. For every subject $i \in [m]$, at least $q - L$ of these reviewers, denoted $S_i$, are senior enough to serve as committee chairs.
We assign reviewers to proposals under the following constraints:
\begin{itemize}
\item Each reviewer serves on exactly one proposal.
\item Each proposal is assigned $k$ reviewers, one from each of the $k$ subject areas most relevant to it.
\item Every proposal’s committee includes at least one senior reviewer, who acts as chair.
\end{itemize}
To preserve confidentiality and fairness, the final assignment is selected uniformly at random from the set of all allocations that satisfy these constraints.

In \Cref{sec-application}, we prove that if $L$ is $\tilde{O}(q^{1/32})$ where high-order terms are suppressed, then these two sampling problems can be solved in almost linear time.
Prior to this work, no non-trivial algorithm was known for either sampling problem in the regime where both $\ell$ and $m$ are large.

\vspace{0.3cm}

Our sampling algorithm also has intriguing applications in combinatorics, such as factors of independent transversals in multipartite hypergraphs.
The problem of establishing sufficient
conditions for the existence of independent transversals and their factors in multipartite hypergraphs has been widely explored in the literature~\cite{erdHos1994independent,yuster2021factors}, though many questions remain open.
For the sampling of factors of independent transversals,
no nontrivial algorithm has previously been proposed.
We demonstrate, via reduction, that the sampling problem becomes intractable beyond the LLL-like regime. Additionally, we present the first efficient sampling algorithm for this problem within the LLL-like regime. 
These results are formally stated in \Cref{sec-application}.
\section{Technical overview}\label{sec-tech-tool}
In this paper, we provide an optimal algorithm for the sampling of very dense PRP and a novel algorithm
for the sampling of PDC in the lopsided LLL regime.
In this section, we will provide a detailed exposition of the algorithmic techniques.

\subsection{Techniques for handling a single permutation}\label{sec-tech-single-perm}
In this subsection, we summarize our techniques for handling very dense PRPs, which also serve as essential building blocks for sampling PDC.

Our approximation and sampling algorithms for very dense PRPs follow the approach outlined in \cite{huber2006exact}. Given a 0–1 matrix \(A\), Huber proposed an upper bound \(U(A)\) for the permanent \(\perman(A)\) (see \Cref{lemma-upper-bound-fa-perm}) and subsequently designed a novel sampling algorithm for PRP based on this bound. The algorithm runs in approximately \(O(n^2) \cdot U(A)/\perman(A)\) time. Consequently, to demonstrate the algorithm's efficiency, it suffices to derive a lower bound \(L(A)\) for \(\perman(A)\) and show that the ratio \(U(A)/L(A)\) is small. Previously, the lower bound \(L(A)\) was estimated using Van der Waerden’s inequality, which tends to be loose when the number of zeros in the rows and columns of \(A\) varies significantly.

Instead of relying on Van der Waerden’s inequality, we derive a tighter lower bound for very dense matrices using the lopsided LLL. Let \(A\) be a 0–1 matrix of size \(n \times n\), and let \(\sigma = (\sigma(1), \dots, \sigma(n))\) be a uniformly random permutation of \([n]\). Define \(\mathcal{E}_{i,j}\) as the bad event \(\sigma(i)=j\) for every \(i,j \in [n]\) such that \(A_{i,j}=0\). Consequently, we have
$\perman(A) = n! \cdot \Pr{\overline{\bigcup_{i,j} \mathcal{E}_{i,j}}}$.
Moreover, the lopsidependency structure among these bad events is captured by the lopsidependency graph. When \(A\) is a very dense 0–1 matrix, the degree of this graph is well-bounded, allowing the application of the lopsided LLL to obtain a lower bound for 
$\Pr{\overline{\bigcup_{i,j} \mathcal{E}_{i,j}}}$. 
Remarkably, the lower bound derived using the lopsided LLL for \(\perman(A)\) differs from \(U(A)\) by only a constant factor, thereby serving as a tight lower bound for \(\perman(A)\) up to that constant factor (see \Cref{thm-esitimator-result}). 
Thus, the PRP algorithm in \cite{huber2006exact} effectively runs in approximately \(O(n^2)\) time (see \Cref{thm-sampler-PRP-LLL}).

\subsection{Techniques for handling multiple permutations}\label{sec-tech-samper-pcc}
Our sampling algorithm for PDC follows the Markov chain framework for sampling a solution of CSP in the LLL regime.
This framework uses state compression to project the state space onto a smaller, well-connected space, where the barrier of disconnectivity could be circumvented because the images of the projection might collide.
The Markov chain simulates a single-site Glauber dynamics in the compressed state space to draw an approximate
sample from the uniform distribution $\mu$ of all solutions.
In each step of the Glauber dynamics, 
the assignment of a randomly selected variable \( v \) is updated according to the marginal distribution of \( v \), conditioned on the projected assignments \( X \) of other variables.
The standard rejection sampling is employed to update the assignment of $v$.
To make the update efficient, 
each constraint should be satisfied by $X$ with a fairly high probability.
This allows the remaining CSP formula to be factorized into disjoint sub-formulas of logarithmic sizes after removing all the satisfied constraints.
To demonstrate the rapid mixing of Glauber dynamics,
path coupling~\cite{FGYZ20,feng2021sampling} and 
information percolation~\cite{Vishesh21sampling,HSW21}
are employed to show a long connected path between two uncoupled variables is impossible. 

Our sampling algorithm for PDC also leverages the state compression technique.
In contrast to the state compression designed for the product probability space~\cite{feng2021sampling},
we design a novel structure-preserved state compression for the joint probability space of PDC.
Then our algorithm simulates a
\emph{permutation-wise Glauber dynamics} in the compressed state space to draw an approximate
sample from the uniform distribution $\mu$ of all solutions, 
rather than simulating a
\emph{Glauber dynamics} as in previous methods. 
The permutation-wise Glauber dynamics is employed here to simultaneously update the assignments of all variables within a permutation set, considering the dependency between these variables.
In each transition of the permutation-wise Glauber dynamics, a permutation set $P$ is picked uniformly at random and then the assignments of all the variables in $P$ are updated according to the marginal distribution on $P$, conditioned on the projected assignments $X$ of other variables in the compressed state space.
However, even after removing satisfied constraints under $X$ and factorizing the remaining PDC formula,
the sub-formula containing $P$ remains large if $P$ itself is large. 
To make the update in each step efficient,
we introduce a sampling framework called \emph{correlated factorization}, where the sub-formula containing $P$ is further broken down into two correlated parts, allowing one part to be sampled almost independently of the other.
To demonstrate the permutation-wise Glauber dynamics is fast mixing, 
we employ the path coupling technique.
The presence of long-range correlations from large permutations introduces new challenges.
A key concept in our analysis is the \emph{inactive} vertex, where the discrepancy at the vertex cannot further spread out through any constraint.
We establish the contraction property of the discrepancy by showing that most vertices in a large assigned permutation set are inactive.

In the following, we present our techniques individually. 
These techniques are designed to address long-range correlations between the variables. 
They may inspire the development of other sampling algorithms based on the lopsided LLL that extend beyond the variable model. 
For each PDC formula $\Phi$, let $\mu_{\Phi}$ denote the uniform distribution over its solutions.

\subsubsection{\emph{\textbf{Structure-preserved state compression}}}
Similar to the Glauber dynamics for sampling CSP solutions in the LLL regime, the state space of the permutation-wise Glauber dynamics for PDC may also be disconnected. To overcome this challenge, we propose a \emph{structure-preserved state compression} for PDC.

Consider the PDC formula \(\Phi = (\+P, \+Q, \+C)\) defined over \(m\) permutations, where \(\+P = (P_1, P_2, \dots, P_m)\) and \(\+Q = (Q_1, Q_2, \dots, Q_m)\). A structure-preserved state compression of \(\Phi\) involves decomposing each permutation set \(P_i\) into several smaller sets, denoted \(P_{i,1}, \dots, P_{i,\ell_i}\). Instead of operating on the original solution space, our permutation-wise Glauber dynamics runs on the compressed state space. In each compressed state, every subset \(P_{i,j} \subseteq P_i\) is paired with a corresponding domain \(Q_{i,j} \subseteq Q_i\) such that \(|Q_{i,j}| = |P_{i,j}|\) and the domains \(Q_{i,1}, \dots, Q_{i,\ell_i}\) partition \(Q_i\). This state comprises all valid assignments in which the partial assignment for the variables in \(P_{i,j}\) is a permutation of the values in \(Q_{i,j}\). In effect, each compressed state represents all solutions of the PDC formula \(\Phi' = (\+P', \+Q', \+C)\), where
$\+P' = (P_{1,1}, \dots, P_{1,\ell_1}, \dots, P_{m,1}, \dots, P_{m,\ell_m})$
and
$
\+Q' = (Q_{1,1}, \dots, Q_{1,\ell_1}, \dots, Q_{m,1}, \dots, Q_{m,\ell_m})$.
This approach allows multiple solutions of \(\Phi\) to be represented within the same state, effectively overcoming issues of disconnectivity (see \Cref{lemma-unique-distribution}).

Recall that two constraints  
$C = \cdots \lor (v \neq c) \lor \cdots$ and $C' = \cdots \lor (v' \neq c') \lor \cdots$
are defined to be related, denoted \(C \sim C'\), if either \(v = v'\), or \(v\) and \(v'\) belong to the same set \(P_i\) and \(c = c'\). Consequently, any two unrelated constraints neither share a variable nor forbid the same value for variables within the same permutation set. Importantly, this relationship is maintained even when each \(P_i\) is partitioned into individual subsets \(P_{i,1}, \dots, P_{i,\ell_i}\); that is, the non-association \(C \not\sim C'\) persists within the compressed states. Let \(\Phi'\) denote the corresponding PDC formula for a compressed state. Then, the lopsidependency graph for \(\Phi'\) is a subgraph of that for \(\Phi\), thereby preserving the original lopsidependency structure within each compressed state (see \Cref{lemma:lop-graph}). This property is fundamental to our proofs.

\subsubsection{\emph{\textbf{Correlated factorization}}}\label{sec-breaking-large-components}
Given a PDC formula $\Phi$ defined on $m$ permutations and a state compression of $\Phi$, where each permutation set $P_i$ is decomposed into multiple individual permutation sets denoted as $P_{i,1},\cdots, P_{i,\ell_i}$,
we assume that the formula induced by each state in the compressed state space still satisfies a strong lopsided LLL condition.
The subsequent challenge is how to implement the transition of the permutation-wise Glauber dynamics.
Given the current state of the permutation-wise Glauber dynamics,
\emph{w.l.o.g.}, let $P_m$ be the picked permutation set at the current transition.
At this state, each permutation set in $P_{i,1},\cdots, P_{i,\ell_i}$ has been assigned a subset of $Q_i$ as its domain for each $i\in [m]$.
Our task is to update the domains of $P_{m,1},\cdots, P_{m,\ell_m}$ according to the current domains of the permutation sets $P_{1,1},\cdots, P_{1,\ell_1},\cdots,P_{m-1,1},\cdots, P_{m-1,\ell_{m-1}}$. 
Given the domains of these permutation sets,
many constraints of the PDC formula $\Phi$ have been satisfied.
After removing these satisfied constraints,
$\Phi$ is “factorized” into subformulas,
where the correlated variables are assigned to the same subformula.
Since all variables in \( P_m \) are correlated, they are assigned to the same subformula, denoted as \( \Phi' = (\+P' \cup \{P_m\}, \+Q' \cup \{Q_m\}, \+C') \), where \( \+P' \) represents the permutation sets correlated with \( P_m \) and \( \+Q' \) represents their domains.
We remark that all permutation sets in $\+P'$ are small as a result of the decomposition.
A crucial step in implementing the transition is to draw a sample from $\mu_{\Phi'}$, the uniform distribution over the solutions of $\Phi'$. 
If $P_m$ is small, then $\Phi'$ is also small, allowing for efficient rejection sampling from $\mu_{\Phi'}$. However, this task becomes significantly more challenging when $P_m$ is large.

Given a subformula $\Phi'=(\+P'\cup \{P_m\},\+Q'\cup\{Q_m\},\+C')$ where $P_m$ is large, the constraint set $\+C'$ can be partitioned into four disjoint subsets $\+C_0,\+C_1,\+C_1^{\ast},\+C_2$:
\begin{itemize}
\item $C\in \+C_0$ if $C$ does not depend on any variable in $P_m$;
\item $C\in \+C_1$ if $C$ depends on exactly one variable in $P_m$ and no other variables outside of $P_m$;
\item $C\in \+C_1^{\ast}$ if $C$ depends on exactly one variable in 
$P_m$ along with other variables outside of $P_m$;
\item $C\in \+C_2$ if $C$ depends on at least two variables in $P_m$.
\end{itemize}
Let \( \Phi^{\dagger} = (\+P' \cup \{P_m\}, \+Q' \cup \{Q_m\}, \+C_0 \cup \+C_1 \cup \+C_1^{\ast}) \). By the lopsided LLL, most solutions of \( \Phi^{\dagger} \) are also solutions of \( \Phi' \) since the constraints in \( \+C_2 \) are rarely unsatisfied when \( P_m \) is large. Consequently, one can sample from \( \mu_{\Phi'} \) using standard rejection sampling: first draw a sample \( \sigma \) from \( \mu_{\Phi^{\dagger}} \) and then accept \( \sigma \) if it also satisfies \( \Phi' \). The main challenge then is to sample from \( \mu_{\Phi^{\dagger}} \).

Let \( \Phi_0 = (\+P', \+Q', \+C_0) \) and \( \Phi_1 = (\{P_m\}, \{Q_m\}, \+C_1) \). If \( \+C_1^{\ast} = \emptyset \), then \( \Phi^{\dagger} \) can be further factorized into \( \Phi_0 \) and \( \Phi_1 \). Additionally, \( \Phi_0 \) can be broken down into smaller subformulas because all permutation sets in \( \+P' \) are smaller, and many constraints in \( \Phi \) are satisfied.
Moreover, the sampling of \( \Phi_1 \) aligns with the PRP sampling problem. Therefore, we can first draw a random solution \( x \) for \( \Phi_0 \) using rejection sampling and then draw a random solution \( y \) for \( \Phi_1 \) using \Cref{thm-sampler-PRP-LLL}. The concatenation of \( x \) and \( y \) then yields a random solution for \( \Phi^{\dagger} \).

Unfortunately, \( \+C_1^{\ast} \) is typically not empty. Despite this, we still factorize \( \Phi_0 \) from \( \Phi^{\dagger} \), even though the variables in \( \Phi_0 \) are correlated with \( P_m \). We show that the distribution \( \mu_{\Phi_0} \) closely approximates the marginal distribution of \( \mu_{\Phi^{\dagger}} \) on the variables in \( \Phi_0 \). Thus, the following efficient sampler for \( \mu_{\Phi^{\dagger}} \) can be constructed:

\begin{itemize}
\item Sample a random solution \( x \) for \( \Phi_0 \) using rejection sampling.
\item Given \( x \) as the assignment for the variables in \( \Phi_0 \), \( \Phi^{\dagger} \) becomes a PRP formula \( \Phi^{\dagger}_x \) on \( P_m \). Sample a random solution \( y \) for \( \Phi^{\dagger}_x \) with the sampling algorithm for PRP (see \Cref{thm-sampler-PRP-LLL}).
\item Accept the concatenation \( z \) of \( x \) and \( y \) as a solution of \( \Phi^{\dagger} \) with a probability proportional to the solution size of \( \Phi^{\dagger}_x \), which can be estimated with the approximate counting algorithm for PRP (see \Cref{thm-estimator-PRP-LLL}).
\end{itemize}
This final step adjusts \( z \) to follow the distribution \( \mu_{\Phi^{\dagger}} \). We call this framework \emph{correlated factorization}.
The main idea of our framework is as follows: although the variables in \( \Phi_0 \) and \( P_m \) are correlated, the solution size of the PRP formula \( \Phi^{\dagger}_x \) for random \( x \) has a good concentration, allowing the variables in \( \Phi_0 \) to be sampled nearly independently of \( P_m \).

\subsubsection{\emph{\textbf{Inactive vertices in path coupling}}}
We employ the path coupling rather than the canonical path to obtain a tight mixing time of our permutation-wise Glauber dynamics. 
Within a single step of the coupling procedure (\Cref{Alg:coupling}), discrepancies may spread among variables within the same permutation set and across variables linked by the same constraint. 
Our objective is to establish that the discrepancy exhibits a contraction property, meaning it diminishes over time. 
However, the long-range correlation due to large permutations introduces new challenges to our analysis. Specifically, for a given permutation set $P$, if any variable $v$ in $P$ is uncoupled currently, all the variables in $P$ can potentially become uncoupled in the subsequent step of the algorithm. This issue is especially troublesome when $P$ is large, as it leads to a substantial increase in the number of uncoupled vertices, impeding the reduction of discrepancies.

A key concept in our proof is the notion of an \emph{inactive} vertex.
Given two partial assignments $X_1,X_2$ for a formula $\Phi$,
we call a vertex $v$ inactive if all constraints associated with $v$ are satisfied under both assignments $X_1$ and $X_2$.
Otherwise, $v$ is called active.
For an inactive vertex $v$, it is possible that $X_1(v) \neq X_2(v)$.
However, this discrepancy at $v$ cannot further spread out through any constraint.
As a result, the discrepancy can be bounded by considering only the active discrepant vertices.
Our key insight is the following: while discrepancies within a large permutation set $P$ can spread rapidly across $P$, most vertices in $P$ are inactive. 
This is primarily because, given the large size of $P$, the constraints associated with $P$ are likely to be satisfied with a high probability under a random assignment of $P$.
Consequently, the growth in active discrepancies is constrained. Nevertheless, achieving the contraction property for the discrepancy demands significant analytical effort.

\subsubsection{\emph{\textbf{Summary}}}
Although our sampling algorithm for PDC follows the general Markov chain framework employed in prior sampling LLL, it introduces a fundamentally novel marginal sampler (\Cref{Alg:samplepermutation}) that departs significantly from existing approaches. 
The marginal sampler is a central component in the Markov chain approach, which updates the assignment of a single site at each transition.
In previous LLL-based sampling algorithms, the marginal sampler is typically implemented via simple rejection sampling. However, this method breaks down in the PDC setting due to the presence of long-range correlations among variables within large permutations.
To overcome this critical barrier, we develop a new marginal sampler based on a fine-grained decomposition of the subformula associated with a large permutation into several correlated components. 
The centerpiece of the construction is a structural fact about one of these components, a very dense PRP, whose number of feasible solutions is sharply concentrated around its expectation.
This insight allows us to rigorously bound the correlations between components and ensure correct sampling.
Moreover, to guarantee the correct marginal distribution over the subformula, we propose new sampling and approximation algorithms specifically tailored to very dense PRPs. These methods not only extend the algorithmic toolbox for handling dense permutation structures but also contribute a novel perspective to the design of marginal samplers under complex dependencies. Our contribution thus advances the state-of-the-art in sampling for PRPs and offers a principled approach to marginal sampling for long-range correlated structures beyond the variable model in the LLL framework.
\section{Preliminaries}\label{sec-Preliminaries}

\subsection{Notations}

Throughout this paper, let $\^R$ be the set of real numbers, $\^R_+$ be the set of non-negative real numbers, and $\^N_+$ be the set of non-negative integers. 
For any positive integer $n$, we use $[n]$ to denote the set $\{1,2,\ldots,n\}$, and $n!$ to denote the factorial of $n$.
For simplicity,  
we use $x_{[n]}$ to represent $x_1,x_2,\cdots,x_n$.
Similarly, we will use $x_{[n]\setminus\{i\}}$ to represent 
$x_1,\cdots,x_{i-1},x_{i+1},\cdots,x_{n}$ and 
$x_{i,[n]}$ to represent $x_{i,1},x_{i,2},\cdots,x_{i,n}$.
Furthermore, to simplify notation, we sometimes use $x_{[n]}$ to denote the sequence $(x_{[n]}) = (x_{1},x_{2},\cdots,x_{n})$.

Given a \csppformula formula on $m$ permutations $\Phi=\left(\+P,\+Q,\+C\right)$ where $\+P = (P_{[m]}) = (P_1,P_2,\cdots,P_m)$ and
$\+Q = (Q_{[m]}) = (Q_1,Q_2,\cdots,Q_m)$,
let $\Omega_{\Phi}$ be the set of all satisfying assignments of $\Phi$, $\Omega^{\ast}_{\Phi}$ be the set of all valid assignments of $\Phi$,
and $V_{\Phi}\triangleq  P_1\cup \cdots \cup P_m$ be the union of all sets of variables in $\+P$. The subscript $\Phi$ could be omitted if it is clear from the context.
For convenience, we refer to $P_i$ as a \pname and $Q_i$ as its domain, and the correspondence between the \pname and its domain can also be indicated by $Q_i=\+Q(P_i)$ for each $i\in [m]$. 
Moreover, we will also treat non-repetitive sequences as sets, and abuse the set notations if it is clear from the context.
{For any sequence of \pnames $S=(P'_1,\cdots,P'_\ell)\subseteq \+P$, let $\+Q(S)$ denote the sequence of domains
$\left(\+Q(P'_1),\cdots,\+Q(P'_\ell)\right)$. }
For any $T\subseteq  V$, let $\+C(T)$ be the set of constraints $C\in \+C$ such that $\vbl(C)\cap T \neq \emptyset$, where $\vbl(C)$ is the set of variables used by $C$.
For each constraint $C\in \+C$, we say $u\in V$ is on $C$ if $u\in \vbl(C)$.
Additionally, we say $P\in \+P$ is on $C$ if $P\cap \vbl(C)\neq \emptyset$.

For convenience, we will use $a <_{q} b$ and $a >_{q} b$ to denote $a < b^{1+o_{{q}}(1)}$ and $a > b^{1-o_{{q}}(1)}$ respectively, where $q$ is the minimal permutation set size for the discussed PDC formula $\Phi$.
We denote $a=_{q} b$ if both $a <_{q} b$ and $a >_{q} b$ hold.

Given any assignment $\sigma\in \Omega^{\ast}$, let $\sigma(S)=\bigcup_{v\in S} \; \sigma(v) $ denote the value of $S$ in $\sigma$ for any subset of variables $S\subseteq V$. In particular, we use the notation $\sigma(v)$ instead of $\sigma(\set{v})$ to denote the assignment of $v$ for any $v\in V$. Moreover, we use $\sigma_S$ to denote the partial assignment on any subset of variables $S\subseteq V$ induced from $\sigma$. In other words, for any variable $v$ in $S$, we have $\sigma_S(v)=\sigma(v)$. Sometimes, we may use $v=c$ or $v\neq c$ to denote the event that $\sigma(v)=c$ or $\sigma(v)\neq c$ for convenience. 

Consider the uniform probability space over the valid assignments in $\Omega^*$. Recall that $\neg C$ is the event that the constraint $C$ is violated. 
We also abuse $C$ to denote the event that $C$ is satisfied when there is no ambiguity.
We say a PDC formula $\Phi$ is \emph{satisfiable} if there exist some satisfying assignments for $\Phi$ and use $\mu=\mu_\Phi$ to denote the uniform distribution over all satisfying assignments of $\Phi$. For any $S\subseteq V$, we use $\mu_S$ to denote the marginal distribution among the partial assignment $\sigma_S$ induced from $\mu$. Again, we use $\mu_v$ instead of $\mu_{\set{v}}$ for any $v\in V$ in convention.

Given a PDC formula $\Phi = (\+{P}, \+Q, \+{C})$, the \emph{simplification} of $\Phi$ is the formula $(\+{P}, \+Q, \+{C}')$ where $\+C'\subseteq \+C$ is the set of unsatisfied constraints given that the variables in $P$ take values from $\+Q(P)$ for each $P\in \+P$. The PDC formula $\Phi$ can be naturally represented as a (multi-)hypergraph $H_{\Phi}$ according to its simplification $(\+{P}, \+Q, \+{C}')$. Specifically, each variable $v\in V$ corresponds to a vertex in $H_{\Phi}$, each permutation $P\in \+{P}$ corresponds to a hyperedge $P$, and each constraint $C\in\+{C}'$ corresponds to a hyperedge $\vbl(C)$ in $H_{\Phi}$.
We slightly abuse the notation and write $H_{\Phi}=(V, \+P\cup \+{C}')$.
Let $H_{i}=(V_i, \+P_i\cup \+{C}_i)$, where $1\le i\le K$, denote all $K\ge 1$ connected components within $H_{\Phi}$.
Each connected component corresponds to a formula denoted as $\Phi_i=(\+P_i, \+Q_i, \+{C}_i)$. 
We refer to $\Phi_1, \Phi_2, \cdots, \Phi_K$ as the \emph{factorization} of $\Phi$. It is evident that {$\Phi=\Phi_1\land\Phi_2\land \cdots \land \Phi_K$}, and the uniform distribution $\mu_{\Phi}$ over all satisfying assignments of $\Phi$ is the product of the distributions $\mu_{\Phi_i}$ among the formulas $\Phi_i$ for $1\leq i\leq K$.

\subsection{Lopsided \LLL}\label{sec-lll}
Recall the definitions of lopsidependent events and the lopsidependency graph in \Cref{sec-previous-works}.
The lopsided LLL establishes a condition ensuring that the probability of none of the bad events occurring is greater than 0. This condition can be stated as follows:

\begin{theorem}[\cite{AS16}]\label{thm-lopsiLLL}
Given a \emph{lopsidependency} graph $G=(\+{B},E)$ w.r.t. the events $\+{B}$, 
if there is a function $x:\+{B}\rightarrow (0,1)$ such that for any $B\in\+{B}$,
\begin{align}\label{eq-condition-asym-lop}
\Pr{B}\le x(B)\prod_{B'\in\Gamma(B)}(1-x(B'))
\end{align}
then
$$
\Pr{\bigwedge_{B\in\+{B}}\overline B}\ge \prod_{B\in \+{B}}(1-x(B))>0.
$$
In particular, if $\-e\Pr{B}(D+1)\leq 1$ for each $B\in\+{B}$ where $D$ is the maximum degree of $G$, then 
$$
\Pr{\bigwedge_{B\in\+{B}}\overline B}\ge \prod_{B\in \+{B}}\left(1-\mathrm{e}\Pr{B}\right)>0.
$$
\end{theorem}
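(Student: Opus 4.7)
The plan is to establish the main asymmetric inequality by an inductive claim of the following form: for every subset $S\subseteq\+B$ and every event $B\in\+B$ with $B\notin S$,
\begin{equation*}
\Pr{B \bigmid \bigwedge_{B'\in S}\overline{B'}} \;\le\; x(B).
\end{equation*}
Once this is proved, the ``moreover'' conclusion follows immediately by the chain rule: enumerating $\+B=\{B_1,\dots,B_N\}$ in any order,
\begin{equation*}
\Pr{\bigwedge_{i=1}^{N}\overline{B_i}} \;=\; \prod_{i=1}^{N}\Pr{\overline{B_i} \bigmid \bigwedge_{j<i}\overline{B_j}} \;\ge\; \prod_{i=1}^{N}(1-x(B_i)) \;>\;0,
\end{equation*}
where the inductive claim bounds each conditional probability by $x(B_i)$.

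I will prove the inductive claim by induction on $|S|$. The base case $|S|=0$ is immediate from $\Pr{B}\le x(B)\prod_{B'\in\Gamma(B)}(1-x(B'))\le x(B)$, since each factor $(1-x(B'))$ lies in $(0,1)$. For the inductive step, partition $S=S_1\cup S_2$ with $S_1\triangleq S\cap\Gamma(B)$ and $S_2\triangleq S\setminus(\Gamma(B)\cup\{B\})$, and write
\begin{equation*}
\Pr{B \bigmid \bigwedge_{B'\in S}\overline{B'}} \;=\; \frac{\Pr{B\wedge\bigwedge_{B'\in S_1}\overline{B'} \bigmid \bigwedge_{B'\in S_2}\overline{B'}}}{\Pr{\bigwedge_{B'\in S_1}\overline{B'} \bigmid \bigwedge_{B'\in S_2}\overline{B'}}}.
\end{equation*}
The numerator is at most $\Pr{B\mid\bigwedge_{B'\in S_2}\overline{B'}}$, and since $S_2\subseteq\+B\setminus(\Gamma(B)\cup\{B\})$, the lopsidependency property gives $\Pr{B\mid\bigwedge_{B'\in S_2}\overline{B'}}\le\Pr{B}\le x(B)\prod_{B'\in\Gamma(B)}(1-x(B'))$. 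For the denominator, enumerate $S_1=\{B_1',\dots,B_k'\}$ and expand as a chain product; applying the inductive hypothesis to each factor (all on sets strictly smaller than $S$) yields a lower bound of $\prod_{i=1}^{k}(1-x(B_i'))\ge\prod_{B'\in\Gamma(B)}(1-x(B'))$. Dividing the two bounds, the $\prod_{B'\in\Gamma(B)}(1-x(B'))$ terms cancel, leaving $x(B)$.

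For the symmetric corollary, I will instantiate $x(B)\triangleq \mathrm{e}\Pr{B}$. The hypothesis $\mathrm{e}\Pr{B}(D+1)\le 1$ implies $x(B)\le 1/(D+1)$, and a direct calculation shows $(1-1/(D+1))^{D}\ge 1/\mathrm{e}$ for all $D\ge 1$; hence
\begin{equation*}
x(B)\prod_{B'\in\Gamma(B)}(1-x(B')) \;\ge\; \mathrm{e}\Pr{B}\cdot\left(1-\tfrac{1}{D+1}\right)^{\!D} \;\ge\; \Pr{B},
\end{equation*}
so condition~\eqref{eq-condition-asym-lop} is satisfied, and the general conclusion specializes to $\Pr{\bigwedge_{B\in\+B}\overline{B}}\ge\prod_{B\in\+B}(1-\mathrm{e}\Pr{B})$.

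I expect the main technical subtlety to lie in the inductive step: one must set up the induction on $|S|$ (rather than on $|\+B|$) so that the denominator expansion produces only conditioning sets strictly smaller than $S$, and one must carefully invoke lopsidependency only on the portion $S_2$ disjoint from $\Gamma(B)\cup\{B\}$. The rest of the argument, including the symmetric corollary, then reduces to elementary probability manipulations.
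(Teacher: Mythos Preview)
Your proposal is correct and follows exactly the standard argument the paper adopts: the paper cites the theorem from \cite{AS16} without a standalone proof, but in the appendix (proof of \Cref{Coro:LS}) it establishes precisely your inductive claim \(\Pr{B\mid\bigwedge_{B'\in S}\overline{B'}}\le x(B)\) via the same induction on \(|S|\), the same \(S_1=S\cap\Gamma(B)\), \(S_2=S\setminus\Gamma(B)\) split, and the same numerator/denominator bounds. Your derivation of the symmetric form via \(x(B)=\mathrm{e}\Pr{B}\) is also the intended instantiation.
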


{By \Cref{thm-lopsiLLL}, we can obtain upper bounds of the conditional probabilities of events, the proof of which is deferred to~\Cref{sec:lll}.}

\begin{corollary} \label{Coro:LS}
Assume the condition \eqref{eq-condition-asym-lop} in \Cref{thm-lopsiLLL}.
For {any event $A$} and any subset $T\subseteq \+B$,
if $A$ is non-lopsidependent with the events in $\+B \setminus T$, 
we have
$$
\Pr{A \; \big \lvert  \; \bigwedge_{B\in \+B} \overline{B}} \le \Pr{A} \cdot \prod_{B\in T} (1-x(B))^{-1}.
$$
In particular, if $\mathrm{e}\Pr{B}(D+1)\leq 1$ for each $B\in\+{B}$ where $D$ is the maximum degree of $G$, then 
$$
\Pr{A \bigmid \bigwedge_{B\in \+B} \overline{B}} \le \Pr{A} \cdot \prod_{B\in T} (1-\-e\Pr{B})^{-1}.$$
\end{corollary}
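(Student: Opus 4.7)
The plan is to split the conditioning on $\bigwedge_{B\in\+B}\overline{B}$ into two parts, exploit the non-lopsidependency of $A$ with $\+B\setminus T$ on one side, and recover the factor $\prod_{B\in T}(1-x(B))$ on the other side using the same chain-rule argument that underlies the proof of \Cref{thm-lopsiLLL}. Concretely, I would start from the identity
\[
\Pr{A \bigmid \bigwedge_{B\in\+B}\overline{B}} = \frac{\Pr{A,\;\bigwedge_{B\in\+B}\overline{B}}}{\Pr{\bigwedge_{B\in\+B\setminus T}\overline{B}}\cdot \Pr{\bigwedge_{B\in T}\overline{B}\bigmid \bigwedge_{B\in\+B\setminus T}\overline{B}}},
\]
upper bound the numerator by $\Pr{A,\,\bigwedge_{B\in\+B\setminus T}\overline{B}}$ (dropping $T$-constraints only strengthens the bound), and then invoke the non-lopsidependency hypothesis to get
\[
\Pr{A,\,\bigwedge_{B\in\+B\setminus T}\overline{B}} \;\le\; \Pr{A}\cdot \Pr{\bigwedge_{B\in\+B\setminus T}\overline{B}}.
\]
This cancels the first factor of the denominator and reduces the whole task to a lower bound on the remaining conditional probability $\Pr{\bigwedge_{B\in T}\overline{B} \bigmid \bigwedge_{B\in\+B\setminus T}\overline{B}}$.

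For that lower bound I would order $T = \{B_1,\ldots,B_t\}$ arbitrarily, expand via the chain rule into the product $\prod_{i=1}^{t}\Pr{\overline{B_i}\bigmid \bigwedge_{j<i}\overline{B_j},\,\bigwedge_{B\in\+B\setminus T}\overline{B}}$, and apply the standard technical lemma that already appears in the proof of \Cref{thm-lopsiLLL}, namely $\Pr{B\bigmid \bigwedge_{B'\in S}\overline{B'}}\le x(B)$ for every $B\in\+B$ and every $S\subseteq \+B\setminus\{B\}$ such that the relevant conditioning is well-defined. Each factor is then at least $1-x(B_i)$, giving the required $\prod_{B\in T}(1-x(B))$. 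Combining with the numerator estimate yields the first inequality of the corollary.

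For the ``in particular'' clause, I would instantiate $x(B)\triangleq \mathrm{e}\Pr{B}$ in the general statement. A one-line verification of \eqref{eq-condition-asym-lop} suffices: because $|\Gamma(B)|\le D$ and $\mathrm{e}\Pr{B'}\le 1/(D+1)$ for every $B'$, we have $\prod_{B'\in\Gamma(B)}(1-\mathrm{e}\Pr{B'})\ge (1-1/(D+1))^{D}\ge \mathrm{e}^{-1}$, so $x(B)\prod_{B'\in\Gamma(B)}(1-x(B'))\ge \Pr{B}$. Plugging this $x$ into the already-proved first inequality delivers the second. The only step that needs real care is the chain-rule bookkeeping on the denominator, where one must ensure the conditioning set at each stage still lies in $\+B\setminus\{B_i\}$ before invoking the LLL sub-lemma; this is precisely the same bookkeeping issue that arises in the textbook proof of the lopsided LLL and is the only potential source of subtlety in the argument.
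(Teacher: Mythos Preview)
Your proposal is correct and follows essentially the same route as the paper: split the conditional probability so that non-lopsidependence bounds the numerator by $\Pr{A}$, then lower-bound $\Pr{\bigwedge_{B\in T}\overline{B}\mid \bigwedge_{B\in\+B\setminus T}\overline{B}}$ by $\prod_{B\in T}(1-x(B))$ via the chain rule and the standard LLL sub-lemma $\Pr{B\mid \bigwedge_{B'\in S}\overline{B'}}\le x(B)$; the ``in particular'' clause is obtained exactly as you describe, by checking \eqref{eq-condition-asym-lop} for $x(B)=\mathrm{e}\Pr{B}$. The only cosmetic difference is that the paper reproves the sub-lemma inside the corollary rather than quoting it from \Cref{thm-lopsiLLL}.
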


\smallskip
Given any \cspformula formula $\Phi=(\+P,\+Q,\+C)$, we apply the lopsided LLL by considering the uniform probability space over its valid assignments $\Omega^{\ast}$ and the collection of ``bad'' events $\+B=\+B_\Phi\triangleq \set{\neg C\mid C\in \+C} $, where $\neg C$ is the event that $C$ is violated. The lopsidependency between the events in $\+B$ can be characterized by the undirected graph $G_\Phi^{\!{lop}}=(\+B,E_{\Phi}^{\!{lop}})$ where $\set{\neg C,\neg {C'}}\in E_{\Phi}^{\!{lop}}$ if and only if $C\sim C'$.
{Following the proof in~\cite{LS07}, one can verify that $G^{\!{lop}}_{\Phi}$ is a lopsidependency graph of the events in $\+{B}$.} 

Recall that $\^P$ is the uniform distribution over all valid assignments, and $\neg C$ denotes the event that $C$ is violated.
According to~\Cref{thm-lopsiLLL} and~\Cref{Coro:LS}, we immediately have the following lemma.
\begin{lemma}\label{prop-PDC-local-uniformity}
    Given any PDC formula $\Phi=(\+P,\+Q,\+C)$ satisfying $\-ep\Delta\leq 1$, it holds that 
    \begin{align*}
        \^P\left[ \bigwedge_{C\in \+C}  C\right]\geq \prod_{C\in \+C}\left(1-\mathrm{e}  \^P[\neg C]\right)>0.
    \end{align*}
    In addition, for any subset of valid assignments $A\subseteq \Omega^*$ and subset of constraints $\+C'\subseteq \+C$ such that $A$ is non-lopsidependent with the violation of the constraints in $\+C\setminus \+C'$, we have
    \begin{align*}
        \^P\left[A \bigmid \bigwedge_{C\in \+C}  C\right] \le \^P[A] \cdot \prod_{C\in \+C'} (1-\e  \^P[\neg C])^{-1}.
    \end{align*}
\end{lemma}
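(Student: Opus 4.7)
The plan is to derive both statements as direct instantiations of \Cref{thm-lopsiLLL} and \Cref{Coro:LS} in the probability space defined just before the lemma. Working in the uniform space over valid assignments $\Omega^{\ast}$, I take the collection of bad events to be $\+B=\{\neg C \mid C\in \+C\}$ and the lopsidependency graph to be $G^{\!{lop}}_{\Phi}$. The excerpt already asserts (following \cite{LS07}) that $G^{\!{lop}}_{\Phi}$ is indeed a lopsidependency graph for $\+B$, so this setup is legitimate without further work.

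Next I translate the hypothesis $\-e p \Delta \le 1$ into the ``in particular'' form of the local lemma. Recall from the footnote that the maximum degree $D$ of $G^{\!{lop}}_{\Phi}$ satisfies $\Delta = D+1$, and by the definition of $p = \max_{C\in \+C}\^P[\neg C]$ we have $\^P[\neg C]\le p$ for every $C\in \+C$. Hence
\begin{equation*}
\-e\,\^P[\neg C]\,(D+1) \;\le\; \-e\,p\,\Delta \;\le\; 1,
\end{equation*}
which is exactly the hypothesis of the second (symmetric) form of \Cref{thm-lopsiLLL}. That theorem then gives the desired inequality
\begin{equation*}
\^P\Bigl[\bigwedge_{C\in \+C} C\Bigr] \;\ge\; \prod_{C\in \+C}\bigl(1-\-e\,\^P[\neg C]\bigr) \;>\;0,
\end{equation*}
proving the first claim.

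For the second claim I apply the symmetric form of \Cref{Coro:LS} with the set $T$ chosen to be $\{\neg C \mid C\in \+C'\}$. The assumption that $A$ is non-lopsidependent with the violations of constraints in $\+C\setminus \+C'$ is precisely the requirement that $A$ be non-lopsidependent with $\+B\setminus T$, so \Cref{Coro:LS} immediately yields
\begin{equation*}
\^P\Bigl[A \bigmid \bigwedge_{C\in \+C} C\Bigr] \;\le\; \^P[A]\cdot\prod_{C\in \+C'} \bigl(1-\-e\,\^P[\neg C]\bigr)^{-1},
\end{equation*}
completing the proof.

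There is essentially no obstacle: the lemma is a packaging result that rephrases the two preceding general statements for the specific PDC probability space. The only points requiring care are (i) correctly identifying $D+1$ with $\Delta$ via the footnote convention, and (ii) noting that the non-lopsidependence hypothesis on $A$ matches the one required by \Cref{Coro:LS} after identifying $T$ with $\+C'$. Both are bookkeeping rather than substantive mathematical steps.
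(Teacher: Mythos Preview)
Your proposal is correct and follows exactly the approach the paper takes: the lemma is stated as an immediate consequence of \Cref{thm-lopsiLLL} and \Cref{Coro:LS} applied to the PDC probability space with bad events $\{\neg C:C\in\+C\}$ and lopsidependency graph $G^{\!{lop}}_{\Phi}$. Your bookkeeping (identifying $D+1=\Delta$ and matching $T$ with $\+C'$) is precisely what is needed.
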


\subsection{Markov chain and path coupling}\label{subsec:pathcoupling}
The Markov Chain Monte Carlo (MCMC) method is widely used for approximate sampling. A Markov chain $\left(X_t\right)_{t\geq 0}$ defined on a discrete state space $\Omega$ can be specified by the transition matrix $M\in \mathbb{R}^{\Omega\times \Omega}$ where $M(x,y)=\Pr{X_{t+1}=y\mid X_t=x}$ for all $t\in \mathbb{N}$ and $x,y\in \Omega$. Thus, the Markov chain can be denoted by the transition matrix $M$. A distribution $\mu$ over the state space $\Omega$ is called the stationary distribution of the Markov chain $M$ if $\mu=\mu M$.
It is well known that $\mu$ is a stationary distribution of $M$ if $M$ is \emph{reversible} with respect to $\mu$, or equivalently, satisfies the following \emph{detailed balance condition}:
$$\forall x,y\in \Omega\quad \mu(x)M(x,y) = \mu(y)M(y,x).$$
The following result on the convergence of Markov chain is well known.
\begin{theorem}[\cite{LPW+17}]\label{Theo:Chain}
Consider the Markov chain $\{X_t\}_{t\ge 0}$ defined on $\Omega$ with the transition matrix $M\in \^R^{\Omega\times \Omega}$. If $M$ satisfies the following conditions:
\begin{itemize}
    \item \emph{irreducible}: $\forall x,y\in \Omega$, $\exists t>0$ such that $M^t(x,y)>0$;
    \item \emph{aperiodic}: $\forall x\in \Omega$, $\mathsf{gcd}(\{t>0 \mid M^t(x,x)>0\})=1$;
\end{itemize}
then there is a unique stationary distribution $\mu$. Furthermore, $\forall x\in\Omega, \lim_{t\rightarrow +\infty} \Pr{X_t=x} =\mu(x)$.
\end{theorem}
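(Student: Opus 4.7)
The plan is to prove this classical fundamental theorem of finite Markov chains by combining a Perron--Frobenius/fixed-point argument for existence with a coupling argument that simultaneously yields uniqueness and convergence. Throughout I would assume $\Omega$ is finite, which is the only case relevant to the sampling applications of this paper.

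First, I would establish existence of a stationary distribution. Because $M$ has row sums equal to $1$, the all-ones vector $\mathbf{1}$ is a right eigenvector of $M$ with eigenvalue $1$, so $1$ is also an eigenvalue of $M^{\top}$. Irreducibility ensures (via Perron--Frobenius applied to $M^{T}$ with $T$ as below, or directly) that a corresponding left eigenvector can be chosen non-negative; normalizing yields a probability distribution $\mu$ with $\mu M = \mu$. Equivalently, one can apply Brouwer's fixed-point theorem to the continuous self-map $\nu \mapsto \nu M$ of the probability simplex on $\Omega$.

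Second, I would prove the key combinatorial consequence of aperiodicity plus irreducibility: there exists a positive integer $T$ such that $M^{T}(x,y) > 0$ for all $x,y\in\Omega$. The number-theoretic step is that for each $x$ the set $S_x := \{t>0 \mid M^{t}(x,x)>0\}$ is closed under addition and satisfies $\gcd(S_x)=1$, so $S_x$ contains every sufficiently large integer. Combined with irreducibility, which supplies for each pair $x,y$ some $s(x,y)$ with $M^{s(x,y)}(x,y)>0$, finiteness of $\Omega$ then produces a uniform $T$ that works simultaneously for all pairs.

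Third, I would run a block-coupling argument to get convergence and deduce uniqueness. Let $(X_t,Y_t)_{t\ge 0}$ be coupled chains, each marginally following $M$, with $X_0=x$ arbitrary and $Y_0\sim\mu$, under the rule that once $X_t=Y_t$ the two chains move together forever. Set
\[
\alpha \;:=\; \min_{x,y\in\Omega}\sum_{z\in\Omega}\min\bigl(M^{T}(x,z),\,M^{T}(y,z)\bigr),
\]
which is strictly positive by the previous step. A standard coupling of the two $T$-step kernels shows that in each block of $T$ steps the chains coalesce with probability at least $\alpha$, so the coalescence time $\tau$ satisfies $\Pr{\tau>kT}\le(1-\alpha)^{k}$. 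The coupling inequality then gives $\|M^{t}(x,\cdot)-\mu\|_{\mathrm{TV}} \le \Pr{\tau>t}$, which decays geometrically and yields the pointwise convergence $\Pr{X_t=x}\to\mu(x)$ as $t\to\infty$. Uniqueness is immediate: any stationary $\mu'$ satisfies $\mu'=\mu' M^{t}\to\mu$ as $t\to\infty$, hence $\mu'=\mu$.

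\textbf{Main obstacle.} The nontrivial ingredient is the uniform bound $M^{T}(x,y)>0$, which is what genuinely uses aperiodicity together with finiteness of $\Omega$; once this is in hand, both the Perron--Frobenius step and the block-coupling computation are standard. In the paper itself I would simply cite \cite{LPW+17} for the full argument rather than reproducing it.
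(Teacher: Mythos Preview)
Your proposal is a correct and standard proof sketch of this classical convergence theorem for finite Markov chains. The paper itself does not prove this statement at all: it is simply quoted from \cite{LPW+17} with no accompanying argument, exactly as you anticipate in your final sentence.
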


Given a Markov chain $M$ with stationary distribution $\mu$ , its convergence rate can then be captured by the \emph{mixing time} defined as:
$$
\tmix(\delta)\triangleq \max_{X_0\in\Omega} \min_t \{t\mid \dtv(X_t,\mu) \le \delta\},
$$ where $\dtv(\mu,\nu)\triangleq\frac{1}{2}\sum_{x\in\Omega}\abs{\mu(x)-\nu(x)}$ is the  total variance distance defined for any probability distributions $\mu$ and $\nu$ over $\Omega$.

\emph{Coupling} is a useful method for bounding the discrepancy of the probability distributions. Given probability distributions $\mu$ and $\nu$ defined on the same discrete space $\Omega$, a coupling of $\mu$ and $\nu$ is a joint distribution $(X,Y)$ over $\Omega\times \Omega$ such that the marginal distribution of $X$ (resp. $Y$) is $\mu$ (resp. $\nu$). 
\begin{lemma}\label{prop:coupling}
Let $\@C$ be a coupling between $\mu$ and $\nu$. It holds that $\dtv(\mu,\nu)\le 
\Pr[(X,Y)\sim \@C]{X\ne Y}$, and there exists an optimal coupling that achieves equality.
\end{lemma}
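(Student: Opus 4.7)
\medskip
\noindent\textbf{Proof proposal.} The plan is to handle the inequality and the existence of an optimal coupling separately, both via the standard argument that underlies the coupling lemma.

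For the inequality, I would fix an arbitrary event $A \subseteq \Omega$ and observe that, because $X\sim\mu$ and $Y\sim\nu$ under the coupling $\@C$,
\begin{align*}
\mu(A)-\nu(A) \;=\; \Pr[(X,Y)\sim\@C]{X\in A}-\Pr[(X,Y)\sim\@C]{Y\in A}
\;\le\; \Pr[(X,Y)\sim\@C]{X\in A,\,Y\notin A}
\;\le\; \Pr[(X,Y)\sim\@C]{X\ne Y}.
\end{align*}
The same bound applies with the roles of $X$ and $Y$ swapped, so $|\mu(A)-\nu(A)|\le\Pr{X\ne Y}$. Taking the maximum over $A$ and recalling $\dtv(\mu,\nu)=\max_{A\subseteq\Omega}|\mu(A)-\nu(A)|$ (which follows from the definition by choosing $A=\{x:\mu(x)\ge\nu(x)\}$) yields the desired inequality.

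For the tight coupling, I would construct it explicitly. Let $p\triangleq\sum_{x\in\Omega}\min(\mu(x),\nu(x))$, so that $1-p=\dtv(\mu,\nu)$ by the identity $\dtv(\mu,\nu)=\tfrac12\sum_x|\mu(x)-\nu(x)|=\sum_x(\mu(x)-\min(\mu(x),\nu(x)))$. Define a joint distribution $\@C^{\ast}$ on $\Omega\times\Omega$ as follows: with probability $p$, sample a single element $x$ from the distribution $\min(\mu,\nu)/p$ and set $X=Y=x$; otherwise (with probability $1-p$), sample $X$ from $(\mu-\min(\mu,\nu))/(1-p)$ and, independently, $Y$ from $(\nu-\min(\mu,\nu))/(1-p)$. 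A direct calculation shows that the marginals are $\mu$ and $\nu$, respectively, and since the two ``leftover'' distributions have disjoint supports, whenever the second branch occurs we have $X\ne Y$. Therefore $\Pr[(X,Y)\sim\@C^{\ast}]{X\ne Y}=1-p=\dtv(\mu,\nu)$, matching the bound from the first part.

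No step is genuinely difficult here; the only mild subtleties are (i) recognizing that the maximum of $|\mu(A)-\nu(A)|$ is attained on $A=\{x:\mu(x)\ge\nu(x)\}$, which reduces it to $\tfrac12\sum_x|\mu(x)-\nu(x)|$, and (ii) verifying that the explicit construction $\@C^{\ast}$ has the correct marginals — both of which are routine once the two ingredients $\min(\mu,\nu)$ and $\mu-\min(\mu,\nu)$ are introduced. Since $\Omega$ is assumed discrete throughout the paper, no measure-theoretic care is required.
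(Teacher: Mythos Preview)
The paper states this lemma without proof, treating it as a standard background fact; your proposal supplies the usual textbook argument (the event-wise bound for the inequality and the maximal coupling via $\min(\mu,\nu)$ for tightness), and it is correct.
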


Path coupling is a useful technique for bounding the mixing time of the Markov chains. Consider a Markov chain $\left(X_t\right)_{t\geq 0}$ defined on a discrete state space $\Omega$ with stationary distribution $\mu$. ($\Omega$ doesn't need to be the support of $\mu$.) To construct a coupling between the Markov chain transition conditioned on every pair of configurations in $\Omega$, it allows us to consider pairs on a pre-metric of $\Omega$ instead, leading to a significantly simpler analysis. The notion of pre-metric can be stated as follows.

\begin{definition}[pre-metric]\label{def-pre-metric-brief}
A pre-metric on $\Omega$ is a weighted connected undirected graph with $\Omega$ as its vertices such that for any
edge $(x, y)$, the weight of the edge is equal to the weighted shortest-path distance between $x$ and $y$. In other words, the length of every other path from
$x$ to $y$ is at least the weight of the edge $(x, y)$.
\end{definition}

We can extend the pre-metric to a metric, denoted as $d$, on $\Omega$, where $d(x,y)$ is the shortest path distances in the pre-metric from $x$ to $y$ for any $x,y\in \Omega$. In the following, it is shown that if we can construct the couplings between pairs on the pre-metric such that the distance defined by $d$ exhibits contraction under the one-step transition of the Markov chain, then there exists a coupling between every pair of configurations in $\Omega$ that exhibits the contraction property.

\begin{lemma}[\cite{BD97}]\label{Lemma:pathcoupling}
Consider a Markov chain $\left(X_t\right)_{t\geq 0}$ defined on $\Omega$ with stationary distribution $\mu$. Let $G=(\Omega,E)$ be a pre-metric on $\Omega$ where each edge in $E$ has a weight no less than 1,
$d(\cdot,\cdot)$ be the weighted shortest path distance on $G$,
and $d_{\max}$ be $\max_{x,y\in \Omega} d(x,y)$. 
For any $\varepsilon>0$,
if there exists a coupling
$(X_t,Y_t)_{t\ge 0}$ of a Markov chain defined for each $(x,y)\in E$ such that
\begin{align}\label{eqn:coupling-contraction}
    \E{d(X_{t+1},Y_{t+1})\mid X_t=x,Y_t=y}\le (1-\varepsilon) d(x,y)
\end{align}
for each $(x,y)\in E$,
then there exists a coupling $(X_t,Y_t)_{t\ge 0}$ such that the contraction~\eqref{eqn:coupling-contraction} holds for each $(x,y)\in \Omega\times \Omega$. Moreover, we have $\tmix(\delta)\le \left\lceil\frac{1}{\varepsilon} \ln \frac{d_{\max}}{\delta}\right\rceil$ for each $\delta>0$.
\end{lemma}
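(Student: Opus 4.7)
The plan is to follow the classical Bubley--Dyer path coupling scheme in two stages: first lift the per-edge couplings to a coupling defined on every pair $(x,y)\in\Omega\times\Omega$ by composing couplings along a shortest path in $G$, and then iterate the resulting one-step contraction to deduce the mixing-time bound.

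To carry out the lifting, I would fix an arbitrary pair $(x,y)\in\Omega\times\Omega$ and choose a shortest path $x=z_0,z_1,\ldots,z_r=y$ in the pre-metric graph, so that $\sum_{i=0}^{r-1}w(z_i,z_{i+1})=d(x,y)$ where $w$ denotes the edge weight. For each consecutive pair $(z_i,z_{i+1})\in E$, the hypothesized edge coupling supplies a joint law $\pi_{z_i,z_{i+1}}$ on one-step transitions with marginals $M(z_i,\cdot)$ and $M(z_{i+1},\cdot)$. I would build a joint law on $(Z'_0,\ldots,Z'_r)$ by first drawing $Z'_0\sim M(z_0,\cdot)$ and then, inductively, sampling $Z'_{i+1}$ from the conditional $\pi_{z_i,z_{i+1}}(\cdot\mid Z'_i)$; a short marginal check by induction on $i$ will give $(Z'_i,Z'_{i+1})\sim\pi_{z_i,z_{i+1}}$ for every $i$. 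Setting $(X_{t+1},Y_{t+1})\triangleq(Z'_0,Z'_r)$ then produces a coupling of $M(x,\cdot)$ and $M(y,\cdot)$, and applying the triangle inequality for $d$, the hypothesized per-edge contraction, and the identity $w(z_i,z_{i+1})=d(z_i,z_{i+1})$ that holds by the pre-metric assumption yields
\begin{align*}
\E{d(X_{t+1},Y_{t+1})\mid X_t=x,Y_t=y}
\le\sum_{i=0}^{r-1}(1-\varepsilon)\,w(z_i,z_{i+1})
=(1-\varepsilon)\,d(x,y),
\end{align*}
which is the desired global contraction and thus the first assertion.

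For the mixing-time bound, I would couple an arbitrary starting state $X_0\in\Omega$ with $Y_0\sim\mu$ independently and then iterate the global coupling built above. The tower property together with the one-step contraction yields $\E{d(X_t,Y_t)}\le(1-\varepsilon)^t d_{\max}\le \mathrm{e}^{-\varepsilon t} d_{\max}$. Because every edge of $G$ has weight at least $1$, any two distinct states are at $d$-distance at least $1$, so Markov's inequality gives $\Pr{X_t\ne Y_t}\le\mathrm{e}^{-\varepsilon t}d_{\max}$. Combining this with the coupling bound in \Cref{prop:coupling} and solving $\mathrm{e}^{-\varepsilon t}d_{\max}\le\delta$ for $t$ will produce the stated bound on $\tmix(\delta)$.

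The main obstacle is the marginal-consistency bookkeeping in the path-composition step: the chained conditional sampling produces the correct pairwise law $\pi_{z_i,z_{i+1}}$ only because $M(z_i,\cdot)$ simultaneously equals the second marginal of $\pi_{z_{i-1},z_i}$ and the first marginal of $\pi_{z_i,z_{i+1}}$, and the telescoping identity $\sum_i w(z_i,z_{i+1})=d(x,y)$ is tight precisely because the pre-metric condition forbids shortcuts. Once these two points are nailed down, the remainder is a routine triangle-inequality-plus-Markov argument that I do not expect to cause any difficulty.
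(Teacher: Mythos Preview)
Your proposal is correct and is precisely the standard Bubley--Dyer argument. Note, however, that the paper does not supply its own proof of this lemma: it is stated in the preliminaries with a citation to \cite{BD97} and used as a black box, so there is no in-paper proof to compare against.
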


\subsection{Inequalities}
McDiarmid's inequality is a powerful concentration result that provides a sharp bound on the probability that a function of random variables deviates from its expected value, assuming the function satisfies the bounded differences condition described below.

\begin{definition}[functions with bounded differences]
Let
$\mathbf{c} = (c_1, \ldots, c_n) \in \mathbb{R}_+^n$
be a given vector. A function \(f: \Omega \to \mathbb{R}\) is said to have bounded differences with respect to \(\mathbf{c}\) if for any $\=x =(x_1,\ldots,x_n),\=x'=(x'_1,\ldots,x'_n)\in \=\Omega$, it holds that
\[
\left| f(\mathbf{x}) - f(\mathbf{x}') \right| \le \sum_{i=1}^n c_i \, \mathbf{1}_{\{x_i \neq x'_i\}},
\]
where each \(c_i\) is referred to as the \(i\)-th difference coefficient of \(f\).
\end{definition}

\begin{theorem}[McDiarmid's inequality~\cite{mcdiarmid1989method}]\label{thm-McDiarmid-inequlaity}
Suppose \(f: \Omega_1\times\cdots\times \Omega_n \to \mathbb{R}\) satisfies the bounded differences property with respect to some \(\mathbf{c} \in \mathbb{R}_+^n\), and let \(\mathbf{X} = (X_1, \ldots, X_n)\) be a vector of independent random variables, with each \(X_i\) taking values in \(\Omega_i\). Then for any $t>0$, the tail probability satisfies
\begin{equation}\label{eq-McDiarmid}
\Pr{ \abs{f(\=X) - \E   {f(\=X)}} \ge t }
\le 2\exp \left(  - \frac{2t^2}{ \|\=c\|^2_2}  \right). 
\end{equation}
\end{theorem}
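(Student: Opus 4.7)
The plan is to prove McDiarmid's inequality via the classical \emph{martingale method}, combining the Doob martingale of $f(\=X)$ with the Azuma--Hoeffding bound for bounded martingale differences. Independence of the coordinates $X_1,\dots,X_n$ will be used critically when controlling the range of the martingale increments.

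First, I would introduce the Doob martingale $\{Z_i\}_{i=0}^n$ defined by $Z_i \triangleq \E{f(\=X) \mid X_1,\dots,X_i}$ with respect to the filtration $\+F_i = \sigma(X_1,\dots,X_i)$, so that $Z_0 = \E{f(\=X)}$ and $Z_n = f(\=X)$. The deviation of interest then telescopes as
\[
f(\=X) - \E{f(\=X)} \;=\; \sum_{i=1}^n D_i,\qquad D_i \triangleq Z_i - Z_{i-1}.
\]
The key step is to verify that, conditional on $\+F_{i-1}$, each increment $D_i$ lies in a (random) interval of length at most $c_i$. By independence, $Z_i$ is a deterministic function $g_i(X_1,\dots,X_i)$, namely $g_i(x_1,\dots,x_i)=\E{f(x_1,\dots,x_i,X_{i+1},\dots,X_n)}$, and $Z_{i-1}$ equals $\E[X_i']{g_i(X_1,\dots,X_{i-1},X_i')}$ where $X_i'$ is an independent copy of $X_i$. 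Consequently the range of $D_i$ conditional on $\+F_{i-1}$ is controlled by the oscillation of $g_i(X_1,\dots,X_{i-1},\cdot)$, which inherits the bound $c_i$ from the bounded-differences property of $f$, since swapping the $i$-th coordinate inside an expectation cannot increase the sensitivity.

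Finally, I would apply Hoeffding's lemma to each $D_i$ conditionally on $\+F_{i-1}$ to obtain the moment generating function estimate $\E{e^{\lambda D_i} \mid \+F_{i-1}} \leq \exp(\lambda^2 c_i^2 / 8)$, and iterate via the tower property to reach $\E{\exp(\lambda\sum_{i=1}^n D_i)} \leq \exp(\lambda^2 \|\=c\|_2^2/8)$. A Chernoff step with the optimized choice $\lambda = 4t/\|\=c\|_2^2$ yields the one-sided tail $\Pr{f(\=X) - \E{f(\=X)} \geq t} \leq \exp(-2t^2/\|\=c\|_2^2)$, and applying the same argument to $-f$ together with a union bound produces the two-sided statement~\eqref{eq-McDiarmid} with the factor $2$. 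The main technical obstacle I anticipate is justifying the conditional range bound on $D_i$: it relies essentially on independence, since for general (non-product) distributions, conditioning on $X_1,\dots,X_{i-1}$ can leave residual dependence between $X_i$ and $X_{i+1},\dots,X_n$, inflating the effective per-coordinate sensitivity above $c_i$ and invalidating the martingale-difference bound.
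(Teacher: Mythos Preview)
Your proposal is correct and follows the standard martingale proof of McDiarmid's inequality via the Doob martingale and Hoeffding's lemma. However, the paper does not actually prove this theorem: it is stated in the preliminaries as a cited result from~\cite{mcdiarmid1989method} and used as a black box (in the proof of Lemma~\ref{lemma-concentration-rho}), so there is no proof in the paper to compare against.
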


Initially conjectured by Minc~\cite{minc1963upper} and subsequently proven by Bregman~\cite{bregman1973some}, Bregman's Theorem establishes a sharp upper bound on the permanent.
According to the relationship between the permanent and the permutation, it also provides an upper bound on the number of permutations.
\begin{theorem}[Bregman’s Theorem~\cite{minc1963upper,bregman1973some}]\label{thm-Bregman}
Let $n$ be a positive integer and $R_1,R_2,\cdots,R_n$ be subsets of $[n]$.
Let $r_i = \abs{R_i}$ for each $i\in [n]$.
Then the number of permutations of $[n]$ in $\prod_{i\in [n]}R_i$ is upper bounded by 
$\prod_{i\in [n]}(r_i!)^{1/r_i}$.
\end{theorem}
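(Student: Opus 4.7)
My plan is to give an entropy-based proof in the spirit of Radhakrishnan's argument. Let $N$ denote the number of permutations $\sigma\in\prod_{i\in[n]}R_i$, and draw such $\sigma$ uniformly at random so that $H(\sigma)=\log N$. Introduce an auxiliary uniform random ordering $\pi$ of $[n]$, independent of $\sigma$. The goal is to bound each conditional entropy in the chain-rule expansion of $H(\sigma)$ along the order $\pi$, and then average over $\pi$ so that the per-coordinate contribution collapses into the clean factor $\log(r_j!)/r_j$ appearing in the target bound.

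Concretely, for every fixed $\pi$, set $T_j^\pi = \{k : \pi(k) < \pi(j)\}$ and apply the chain rule
\begin{align*}
H(\sigma) = \sum_{j=1}^{n} H\bigl(\sigma(j) \mid \sigma(T_j^\pi)\bigr).
\end{align*}
Since $\sigma$ is a bijection with $\sigma(k)\in R_k$ for every $k$, the set $S_j \triangleq \{k : \sigma(k)\in R_j\}$ has size exactly $r_j$ and contains $j$ itself. For any realization of $\sigma(T_j^\pi)$, the conditional support of $\sigma(j)$ is a subset of $R_j\setminus \sigma(T_j^\pi)$, whose size equals $Z_j \triangleq r_j - |T_j^\pi \cap S_j|$; hence $H(\sigma(j) \mid \sigma(T_j^\pi)) \leq \log Z_j$. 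The key observation is that once $\sigma$ is fixed, $\pi$ restricted to $S_j$ is still a uniform random ordering of $S_j$, so the rank $p$ of $j$ within $\pi|_{S_j}$ is uniform on $\{1,\dots,r_j\}$ and $Z_j = r_j - p + 1$. Averaging therefore gives
\begin{align*}
\E[\pi]{\log Z_j} = \frac{1}{r_j}\sum_{p=1}^{r_j}\log(r_j - p + 1) = \frac{\log(r_j!)}{r_j}.
\end{align*}

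Combining the two displays, summing over $j$, and taking expectations over $\pi$ yields $\log N \leq \sum_{j=1}^{n} \log(r_j!)/r_j$, which is equivalent to the desired bound $N \leq \prod_{j=1}^{n}(r_j!)^{1/r_j}$. The main obstacle is less computational than conceptual: a naive chain rule along any single ordering produces conditional entropies that are hard to control term-by-term, and only after symmetrizing over a random $\pi$ does each index $j$ contribute the averaged logarithm above. Once one commits to this symmetrization, the remaining steps---verifying $|S_j| = r_j$ and that $\pi|_{S_j}$ is a uniform ordering---are routine.
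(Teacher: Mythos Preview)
The paper does not prove this statement; Bregman's theorem is quoted in the Preliminaries as a classical result (with citations to Minc and Bregman) and is used as a black box. There is therefore no ``paper's own proof'' to compare against.

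Your proposal is a correct instance of Radhakrishnan's entropy proof. The key steps---that $Z_j$ is measurable with respect to $\sigma(T_j^\pi)$ (since $T_j^\pi\cap S_j=\{k\in T_j^\pi:\sigma(k)\in R_j\}$ depends only on $\sigma|_{T_j^\pi}$), that $|S_j|=|\sigma^{-1}(R_j)|=r_j$, and that for fixed $\sigma$ the rank of $j$ within $\pi|_{S_j}$ is uniform on $\{1,\dots,r_j\}$---are all as you state. One cosmetic point: the line ``$H(\sigma(j)\mid\sigma(T_j^\pi))\le\log Z_j$'' compares a number to a random variable; what you actually use (and what the subsequent averaging makes precise) is the pointwise bound $H(\sigma(j)\mid\sigma(T_j^\pi)=v)\le\log Z_j(v)$ followed by taking expectations. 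You may also want to dispose of the degenerate case $N=0$ in one sentence before drawing $\sigma$ uniformly.
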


The following inequalities are also used throughout our subsequent discussion.

\begin{proposition}[Bernoulli's inequality]\label{equality-ab}
    Given any real numbers $a\geq -1, b\ge 1$, we have
    $(1+a)^{b}\geq 1+ab$.
\end{proposition}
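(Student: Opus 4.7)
The plan is to reduce the claim to the standard tangent-line inequality for the convex function $\varphi(x)=x^{b}$ on $[0,\infty)$. I would substitute $x=1+a$, so that the hypothesis $a\geq -1$ translates to $x\geq 0$ and the target inequality becomes $x^{b}\geq 1+b(x-1)$. Since $b\geq 1$, the second derivative $\varphi''(x)=b(b-1)x^{b-2}$ is non-negative on $(0,\infty)$ and $\varphi$ is continuous on $[0,\infty)$, so $\varphi$ is convex on $[0,\infty)$ and therefore lies above each of its tangent lines. Evaluating the tangent at $x=1$ gives the line $y=1+b(x-1)$, which by convexity satisfies $x^{b}\geq 1+b(x-1)$ for all $x\geq 0$; substituting back $x=1+a$ recovers exactly $(1+a)^{b}\geq 1+ab$.

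As a backup I would give a self-contained one-variable calculus proof: set $f(a)=(1+a)^{b}-1-ab$ on $[-1,\infty)$, note $f(0)=0$, and compute $f'(a)=b\bigl[(1+a)^{b-1}-1\bigr]$. Because $b-1\geq 0$, the bracketed term has the same sign as $a$, so $f$ is non-increasing on $[-1,0]$ and non-decreasing on $[0,\infty)$; hence $a=0$ is the global minimum, giving $f\geq 0$ throughout. The only spot that warrants a moment of care is the boundary case $a=-1$, where $(1+a)^{b}=0$ and $1+ab=1-b\leq 0$ by $b\geq 1$, so the inequality still holds (with equality precisely when $b=1$).

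There is no genuine obstacle in the argument; the proposition is truly elementary. The single essential point, and the only thing to flag in the write-up, is that the hypothesis $b\geq 1$ is exactly what enforces convexity of $\varphi$ (equivalently, the sign of $f'$) — for $0<b<1$ the same computation would show that the inequality reverses, so the assumption on $b$ cannot be weakened.
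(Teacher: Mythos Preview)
Your proposal is correct; both the convexity argument and the one-variable calculus backup are standard and sound, and you have handled the boundary case $a=-1$ properly. The paper itself does not supply a proof of this proposition---it simply states Bernoulli's inequality as a known fact alongside Stirling's formula---so there is nothing to compare against.
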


 \begin{proposition}[Stirling's formula]\label{thm-Stirling}
For any positive integer $n$, we have
\begin{align*}
 \sqrt{2\pi n}\left(\frac{n}{\-e}\right)^n \exp\left(\frac{1}{12n+1}\right)\leq n! \leq  \sqrt{2\pi n}\left(\frac{n}{\-e}\right)^n \exp\left(\frac{1}{12n}\right).
\end{align*}
\end{proposition}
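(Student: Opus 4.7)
The plan is to follow Robbins's classical refinement of Stirling's formula. Set $d_n \triangleq \ln(n!) - (n+\tfrac{1}{2})\ln n + n$; the goal is to show that $d_n$ converges to $\tfrac{1}{2}\ln(2\pi)$ and that the error satisfies $\tfrac{1}{12n+1} < d_n - \tfrac{1}{2}\ln(2\pi) < \tfrac{1}{12n}$, which is algebraically equivalent to the claim. Exponentiating then yields the stated two-sided bound.

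First I would compute the telescoping difference, which rearranges to
$$d_n - d_{n+1} \;=\; \Bigl(n+\tfrac{1}{2}\Bigr)\ln\frac{n+1}{n} \,-\, 1.$$
Setting $t = 1/(2n+1)$ and using the expansion $\tfrac{1}{2}\ln\tfrac{1+t}{1-t} = \sum_{k\geq 0}\tfrac{t^{2k+1}}{2k+1}$ gives the clean series representation
$$d_n - d_{n+1} \;=\; \sum_{k\geq 1}\frac{1}{(2k+1)\,(2n+1)^{2k}}.$$
This series is manifestly positive, so $d_n$ is strictly decreasing.

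Second, I would upper bound the series by a geometric series, replacing $2k+1$ with $3$, to obtain
$$d_n - d_{n+1} \;<\; \frac{1}{3\bigl[(2n+1)^2-1\bigr]} \;=\; \frac{1}{12n}-\frac{1}{12(n+1)}.$$
Thus $d_n - 1/(12n)$ is strictly increasing while $d_n$ is strictly decreasing, forcing both to converge to a common limit $C$. Passing to the limit yields $d_n - 1/(12n) < C < d_n$, which is the right-hand inequality of the theorem. The constant $C$ is identified as $\tfrac{1}{2}\ln(2\pi)$ by feeding the asymptotic $n! \sim \-e^C\sqrt{n}(n/\-e)^n$ into Wallis's product $\tfrac{2^{4n}(n!)^4}{((2n)!)^2(2n+1)} \to \pi/2$.

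For the lower bound I would sharpen the series comparison to prove the stronger per-step estimate
$$d_n - d_{n+1} \;>\; \frac{1}{12n+1} - \frac{1}{12(n+1)+1} \;=\; \frac{12}{(12n+1)(12n+13)},$$
which telescopes to give $d_n - C > 1/(12n+1)$ after letting $m \to \infty$ in $d_n - d_m$. This is the main obstacle: the naive geometric majorization used for the upper bound is too slack to separate $1/(12n)$ from $1/(12n+1)$ on the lower side, so a more delicate comparison is needed. Concretely, I would verify the explicit rational inequality
$$\sum_{k\geq 1}\frac{1}{(2k+1)(2n+1)^{2k}} \;>\; \frac{12}{(12n+1)(12n+13)}$$
by truncating the series after a few terms, bounding the tail by a geometric series from above to control its contribution, and then checking the remaining finite inequality by clearing denominators. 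Once this sharper per-step bound is in hand, combining it with the monotonicity and with the identification $C = \tfrac{1}{2}\ln(2\pi)$ completes the proof.
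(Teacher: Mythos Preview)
The paper does not prove this proposition; it is stated in the preliminaries as a known inequality (Robbins' refinement of Stirling) and simply invoked later. Your proposal is the standard Robbins argument and is correct. One minor simplification: for the lower-bound step you do not need to truncate and bound a tail---the very first term of the series already suffices, since
\[
\frac{1}{3(2n+1)^2} \;>\; \frac{12}{(12n+1)(12n+13)}
\]
is equivalent to $(12n+1)(12n+13) > 36(2n+1)^2$, i.e.\ $24n>23$, which holds for all $n\ge 1$.
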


\section{Results on a single permutation}\label{sec:prp}
This section presents our results on PRP. 
\subsection{Very dense PRP in the LLL regime}
This subsection presents our results on very dense PRP.

\subsubsection{\emph{\textbf{Lopsided LLL is almost tight for very dense PRP}}}
For any positive integer $a$, define $f(a)$ recursively as follows.
\begin{equation}
f(1) \triangleq \-e; \quad\forall a\geq 1,  \quad  f(a+1) = f(a)+1+\frac{1}{2f(a)} + \frac{0.6}{2f(a)^2}.
\end{equation}
The following two lemmas have been proved in \cite{huber2006exact}.

\begin{lemma}\label{lemma-upper-bound-fa}
For any $a\geq 2$, $f(a) \leq a + 0.5\ln a + 1.65.$
\end{lemma}

\begin{lemma}\label{lemma-upper-bound-fa-perm}
Given any integer {$n \geq 2$} and any $R_1,\cdots,R_n\subseteq [n]$,
let $\Omega$ denote the set of permutations of $[n]$ in $\prod_{i\in [n]}R_i$. 
Then we have
\begin{align*}
\abs{\Omega} \leq \prod_{i\in [n]}\frac{f(\abs{R_i})}{{\-e}}.
\end{align*}
\end{lemma}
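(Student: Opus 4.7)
The plan is to reduce the bound to Bregman's theorem (Theorem~\ref{thm-Bregman}) combined with a single univariate inequality. Applying Theorem~\ref{thm-Bregman} directly gives
\[
\abs{\Omega} \;\le\; \prod_{i\in [n]} (r_i!)^{1/r_i} \quad \text{where } r_i = \abs{R_i}.
\]
Comparing with the target bound $\prod_{i} f(\abs{R_i})/\e$, it therefore suffices to prove the pointwise inequality
\[
\e\cdot (a!)^{1/a} \;\le\; f(a) \qquad \text{for every integer } a\ge 1,
\]
and then take the product over $i\in [n]$.

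I would establish this pointwise inequality by induction on $a$. The base case $a=1$ holds as equality, since $f(1)=\e = \e\cdot (1!)^{1/1}$. For the inductive step, observe that the recursion has the form $f(a+1) = \phi(f(a))$ where
\[
\phi(x) \;\triangleq\; x + 1 + \frac{1}{2x} + \frac{0.3}{x^2}.
\]
A direct computation yields $\phi'(x) = 1 - \tfrac{1}{2x^2} - \tfrac{0.6}{x^3} > 0$ for $x\ge \e$, and the recursion forces $f(a)\ge \e$ for all $a\ge 1$. Hence the inductive hypothesis $\e(a!)^{1/a}\le f(a)$ and the monotonicity of $\phi$ give
\[
f(a+1) \;=\; \phi(f(a)) \;\ge\; \phi\bigl(\e(a!)^{1/a}\bigr),
\]
so it remains to verify the single-variable inequality
\[
\phi\bigl(\e(a!)^{1/a}\bigr) \;\ge\; \e\bigl((a+1)!\bigr)^{1/(a+1)}.
\]

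The main obstacle is precisely this analytic inequality, which encodes the fact that the recursion defining $f$ is designed to track the growth of the sequence $a\mapsto \e(a!)^{1/a}$. My plan is to attack it in two regimes. For small $a$ (say $a$ up to some modest threshold), one verifies the inequality by direct numerical evaluation of both sides, exploiting the fact that $f$ starts with a small positive slack $f(a) - \e(a!)^{1/a}$ that the $0.3/x^2$ term preserves. For large $a$, Stirling's formula (Proposition~\ref{thm-Stirling}) gives the expansion $\e(a!)^{1/a} = a + \tfrac12\ln(2\pi a) + O(1/a)$, so that consecutive differences $\e((a+1)!)^{1/(a+1)} - \e(a!)^{1/a}$ behave like $1 + \tfrac{1}{2a} + O(1/a^2)$, which matches the first two correction terms of $\phi$, and the constant $0.3/x^2$ is chosen precisely so that its contribution dominates the remaining higher-order error uniformly in $a$. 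Closing the induction then completes the proof, and multiplying the pointwise bound over $i\in [n]$ yields the lemma.
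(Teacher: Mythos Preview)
The paper does not give its own proof: immediately before the statement it says that both this lemma and Lemma~\ref{lemma-upper-bound-fa} ``have been proved in \cite{huber2006exact}'', and thereafter uses the bound only as a black box in the proof of Theorem~\ref{thm-estimator-num-prp}. Your plan is exactly Huber's argument and is sound: Bregman's theorem (Theorem~\ref{thm-Bregman}) reduces the claim to the pointwise inequality $\e\,(a!)^{1/a}\le f(a)$, and your induction via the monotone map $\phi(x)=x+1+\tfrac{1}{2x}+\tfrac{0.3}{x^{2}}$ reduces that in turn to the single analytic inequality $\phi\bigl(\e(a!)^{1/a}\bigr)\ge \e\bigl((a{+}1)!\bigr)^{1/(a+1)}$, which is precisely the design criterion behind the recursion defining $f$. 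One small caution in the large-$a$ regime: the Stirling expansion of $\e(a!)^{1/a}$ carries a $\tfrac{(\ln 2\pi a)^{2}}{8a}$ term before any genuine $O(1/a)$ remainder, so the error you write as $O(1/a)$ is really $O((\log a)^{2}/a)$; this does not spoil the difference estimate $h(a{+}1)-h(a)=1+\tfrac{1}{2a}+O(1/a^{2})$ you need, but you will have to track the logarithmic factors when matching both sides at order $1/a^{2}$.
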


 {Recall the function $g(x,y)$ defined in \eqref{eq-define-gxy}}.
The following theorem provides an approximation for the number of solutions of very dense PRP, which is tight up to a constant factor.
\begin{theorem}[lopsided LLL is almost tight for very dense PRP]\label{thm-estimator-num-prp}
Given any integer {$n \geq 2$} and any $R_1,\cdots,R_n\subseteq [n]$,
assume that {$\abs{R_i} \geq n - \sqrt{(n-2)/20}$} for each $i\in [n]$
and 
$\abs{\{k\in [n]\mid j\in R_k\}} \geq n - \sqrt{(n-2)/20}$ for each $j\in [n]$.
Let $\rho$ denote $n^2 - \sum_{i\in [n]}\abs{R_i}$ and $\Omega$ denote the set of permutations of $[n]$ in $\prod_{i\in [n]}R_i$. 
Then we have
\begin{align*}
\frac{g(\rho,n)}{\sqrt{2\pi n} \;\-e^2} \leq \abs{\Omega} \leq \prod_{i\in [n]}\frac{f(\abs{R_i})}{{\-e}}\leq  19\cdot\frac{g(\rho,n)}{\sqrt{2\pi n} \;\-e^2}.
\end{align*}
\end{theorem}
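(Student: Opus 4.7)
My plan is to split the claimed chain of inequalities into three parts and handle them separately. The middle inequality $\abs{\Omega}\leq\prod_{i\in[n]}f(\abs{R_i})/\-e$ is immediate from Bregman's bound stated as \Cref{lemma-upper-bound-fa-perm}, so it requires no work. The lower bound $g(\rho,n)/(\sqrt{2\pi n}\,\-e^2)\leq\abs{\Omega}$ will be obtained from the asymmetric lopsided \LLL (\Cref{thm-lopsiLLL}), while the outer upper bound $\prod_i f(\abs{R_i})/\-e\leq 19\cdot g(\rho,n)/(\sqrt{2\pi n}\,\-e^2)$ is a direct algebraic manipulation built on \Cref{lemma-upper-bound-fa}.

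For the lower bound, let $\sigma$ be a uniformly random permutation of $[n]$ and, for each $(i,j)$ with $j\notin R_i$, let $\mathcal{E}_{i,j}=\{\sigma(i)=j\}$. There are exactly $\rho$ such bad events, each of probability $1/n$. The standard analysis of permutation events (as in~\cite{LS07}) shows that two events are non-lopsidependent whenever they share neither a row nor a column, so the lopsidependency graph has maximum degree $D\leq 2\sqrt{(n-2)/20}-2$ under the very-dense assumption. I apply \Cref{thm-lopsiLLL} with $x(\mathcal{E}_{i,j})=(1+\eta)/n$ for a uniform parameter $\eta=\Theta(1/\sqrt{n})$; the required inequality $(1+\eta)(1-(1+\eta)/n)^D\geq 1$ reduces to $\eta\gtrsim D/n$, which holds by the bound on $D$. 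This gives
\[
\abs{\Omega}\;=\;n!\cdot\Pr{\,\bigwedge\nolimits_{(i,j):\,j\notin R_i}\overline{\mathcal{E}_{i,j}}\,}\;\geq\;n!\cdot\bigl(1-(1+\eta)/n\bigr)^{\rho}.
\]
Combining this with Stirling's formula (from \Cref{thm-Stirling}) and performing a Taylor expansion in $1/n$, the stated lower bound reduces to
\[
\exp\!\left(\tfrac{1}{12n+1}-\tfrac{1}{3n}\right)\cdot\frac{\bigl(1-(1+\eta)/n\bigr)^{\rho}}{(1-\rho/n^2)^n}\;\geq\;\-e^{-2},
\]
whose dominant deficit $\rho\eta/n\leq\sqrt{(n-2)/20}\cdot\eta=O(1)$ is comfortably absorbed into the $\-e^{-2}$ slack, using the very-dense bound $\rho\leq n\sqrt{(n-2)/20}$.

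For the outer upper bound, I substitute \Cref{lemma-upper-bound-fa} to replace $f(r_i)$ by $r_i+0.5\ln r_i+1.65$ where $r_i=\abs{R_i}$, factor $f(r_i)\leq r_i\bigl(1+(0.5\ln r_i+1.65)/r_i\bigr)$, and bound the two pieces separately. Jensen's inequality for $\log$, combined with $\sum_i(n-r_i)=\rho$, yields $\prod_i r_i\leq n^n(1-\rho/n^2)^n$. The correction factor is handled using $1+x\leq\-e^x$ and $r_i\geq n-\sqrt{(n-2)/20}$:
\[
\prod_i\left(1+\tfrac{0.5\ln r_i+1.65}{r_i}\right)\;\leq\;\exp\!\left(\frac{n(0.5\ln n+1.65)}{n-\sqrt{(n-2)/20}}\right)\;=\;\-e^{1.65}\sqrt{n}\,(1+o(1)).
\]
Dividing the resulting product by $\-e^n$ and comparing with $g(\rho,n)/(\sqrt{2\pi n}\,\-e^2)=\sqrt{2\pi n}\exp(1/(3n))(n/\-e)^n(1-\rho/n^2)^n/\-e^2$, the governing constant is $\-e^{3.65}/\sqrt{2\pi}\approx 15.4$, safely under $19$; the leftover accommodates the $(1+o(1))$ factor and a small-$n$ check (the density hypothesis forces $\rho=0$ for $n\leq 21$, where the inequality is trivial).

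The principal obstacle is the fine calibration in the lower bound: choosing $\xi=1/n$ fails the LLL condition outright, but enlarging $\xi$ by more than a $1+O(1/\sqrt{n})$ factor pushes $(1-\xi)^{\rho}$ away from $(1-\rho/n^2)^n$ by more than the $\-e^{-2}$ slack. The density assumption with the specific constant $1/20$ is precisely what calibrates $D\leq 2\sqrt{(n-2)/20}-2$ so that one may take $\eta=\Theta(1/\sqrt{n})$ simultaneously satisfying the LLL condition and keeping $\rho\eta/n$ bounded by a constant well below $5/3$. The outer upper bound is more mechanical, but the explicit constant $19$ still requires tracking the $(1+o(1))$ corrections carefully.
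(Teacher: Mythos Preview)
Your proposal is correct and follows essentially the same route as the paper: the middle inequality is \Cref{lemma-upper-bound-fa-perm}; the lower bound comes from the lopsided LLL applied to the events $\{\sigma(i)=j\}$ with a choice of $x$ slightly above $1/n$; the outer upper bound combines \Cref{lemma-upper-bound-fa}, the factorisation $f(r_i)\le r_i\bigl(1+(0.5\ln r_i+1.65)/r_i\bigr)$, and AM--GM (your Jensen step) on $\prod_i r_i$. The one substantive difference is the parametrisation of the LLL weights: the paper takes the closed form $x=\tfrac{1}{n}(1-\tfrac{1}{n})^{-2\Delta}$, which lets the LLL condition \eqref{eqn-LLL-condition-single} be checked by an exact algebraic identity and yields a clean intermediate bound $(1-1/n)^{\rho}/\-e$ valid for every $n$, whereas your generic $(1+\eta)/n$ with $\eta=\Theta(1/\sqrt{n})$ captures the same phenomenon asymptotically but still needs the constant in $\eta$ pinned down to cover the moderate-$n$ range (your small-$n$ remark handles $n\le 21$, but the $(1+o(1))$ factors must be made uniform for, say, $n\in[22,200]$). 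If you replace your $\eta$ by the paper's explicit choice, your outline becomes a complete proof with no further adjustments.
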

The above theorem serves as a basic ingredient for our sampling and approximate counting algorithms for very dense PRP, just as in Theorems \ref{thm-sampler-PRP-LLL} and \ref{thm-estimator-PRP-LLL}.
In ~\Cref{thm-estimator-num-prp},
the conditions on $\abs{R_i}$ for each $i\in[n]$ and $\abs{\{k\in [n]\mid j\in R_k\}}$ for each $j\in[n]$ guarantee that the degree of the lopsidependent graph is bounded such that the lopsided LLL is applicable\footnote{To achieve a nearly tight bound, we have stronger requirement on the degree of the lopsided-dependent graph than $\-e p\Delta < 1$.}.
The lower bound is proved with the lopsided LLL,
which is tight up to a constant factor for the number of solutions of very dense PRP.
This conclusion appears somewhat surprising
because the lower bound provided by LLL is typically considered to be loose.
In many cases, there can be an exponential gap between the lower bound established by LLL and the number of solutions.
However, this is not the case for very dense PRP.

\begin{proof}[Proof of \Cref{thm-estimator-num-prp}]
By \Cref{lemma-upper-bound-fa-perm}, to prove this theorem, it is sufficient to prove 
\begin{align}
\abs{\Omega}&\geq \frac{g(\rho,n)}{\sqrt{2\pi n} \;\-e^2}, \label{eq-grhon-lb}\\
\prod_{i\in [n]}\frac{f(\abs{R_i})}{{\-e}}&\leq  19\cdot\frac{g(\rho,n)}{\sqrt{2\pi n} \;\-e^2}.\label{eq-upper-bpund-fri}
\end{align}
\underline{Proof of \eqref{eq-grhon-lb}.}
Let $\sigma = (\sigma(1),\cdots,\sigma(n))$ be a uniformly random permutation of $[n]$.
Let $A_{i,j}$ denote the bad event $\sigma(i) = j$ for each $i\in [n],j\notin R_{i}$.
The lopsidependency between these events can be characterized by the lopsidependency graph, and \Cref{thm-lopsiLLL} can be applied.
Let $\Delta$ be the degree of the lopsidependency graph.
Then we have 
\begin{align}\label{eq-upper-bound-delta-n}
\Delta \leq \max_{i\in [n]}(n - \abs{R_i}) + \max_{j\in [n]}\abs{\{k\in [n]\mid j\in R_k\}}\leq \sqrt{(n-2)/5}.
\end{align}
Therefore, if $n\leq 6$, we have $\Delta < 1$.
Thus, $\Delta = 0$. We have 
\begin{align}\label{eq-sixdelta-n}
6\Delta + 1\leq n.
\end{align}
If $n> 6$, we also have \eqref{eq-sixdelta-n}.
In addition, we have 
\begin{align}
\rho = n^2 - \sum_{i\in [n]}\abs{R_i} \leq n^2 - n\min_{i\in [n]}\abs{R_i}= n\max_{i\in [n]}(n - \abs{R_i})\leq n\Delta.
\end{align}
Combined with \eqref{eq-upper-bound-delta-n},
we have 
\begin{align}\label{eq-upperbound-threedeltaovernmuniusonesqure}
\frac{3\Delta}{(n-1)^2} \leq \frac{3\Delta}{3(n+1)\Delta^2} = \frac{1}{(n+1)\Delta} \leq \frac{1}{\rho+1}.
\end{align}

Next, we show that $x(A_{i,j})$ can be set as $(1/n)\cdot \left(1-{1}/{n}\right)^{-2\Delta}$ for each bad event $A_{i,j}$ in ~\Cref{thm-lopsiLLL}. This boils down to showing that
\begin{align}\label{eqn-LLL-condition-single}
    1/n \leq  1/n\cdot \left(1-1/n\right)^{-2\Delta}\cdot \left(1- (1/n)\cdot \left(1-1/n\right)^{-2\Delta} \right)^{\Delta},
\end{align} 
since the violation probability of each constraint can be bounded by ${1}/{n}$, and the degree of the lopsidependency graph is bounded by $\Delta$. 
Note that 
\begin{align*}
\left(1- 1/n\right)^{-2\Delta}\geq \exp(2\Delta/n)\geq 1+2\Delta/n.
\end{align*}
In addition, by \Cref{equality-ab} we have
\[\left(1- (1/n)\cdot \left(1-1/n\right)^{-2\Delta} \right)^{\Delta} \geq 1 - \frac{\Delta}{n\cdot (1-1/n)^{2\Delta}} \geq 1 - \frac{\Delta}{n - 2\Delta}.\]
Therefore, \eqref{eqn-LLL-condition-single} holds if
\begin{align*}
    \left(1+\frac{2\Delta}{n}\right)\cdot \left(1-\frac{\Delta}{n-2\Delta}\right)\geq 1,
\end{align*} which follows from \eqref{eq-sixdelta-n} through simple calculation. 
Let $$g'(\rho,n) = \sqrt{2\pi n}\cdot \left(\frac{n}{\-e}  \right)^{n}\cdot \exp\left(\frac{1}{12n+1}\right)\cdot \left(1-\frac{1}{n}\right) ^{\rho} \cdot \frac{1}{\-e}.$$
By~\Cref{thm-lopsiLLL}, we have
\begin{align*}
     &\quad \abs{\Omega}\geq  n!\cdot \prod_{i\in [n],j\not\in R_i}(1-x(A_{i,j}))= n!\cdot(1-x(A_{i,j}))^{\rho}  \tag{by the definition of $\rho$}\\
    &\geq \sqrt{2\pi n}\cdot \left(\frac{n}{\-e}  \right)^{n}\cdot \exp\left(\frac{1}{12n+1}\right)\cdot \left(1-\frac{1}{n}\cdot \left(1-\frac{1}{n}\right)^{-2\Delta}\right)^{\rho}   \tag{by~\Cref{thm-Stirling}}\\
    &=\sqrt{2\pi n}\cdot \left(\frac{n}{\-e}  \right)^{n}\cdot \exp\left(\frac{1}{12n+1}\right)\cdot \left(1-\frac{1}{n}\cdot\left(1+\frac{1}{n-1}\right)^{2\Delta}\right) ^{\rho}   \\
    &\geq \sqrt{2\pi n}\cdot \left(\frac{n}{\-e}  \right)^{n}\cdot \exp\left(\frac{1}{12n+1}\right)\cdot \left(1-\frac{1}{n}\left(1+\frac{3\Delta}{n-1}\right)\right) ^{\rho} \tag{$\spadesuit$}\\
    &= \sqrt{2\pi n}\cdot \left(\frac{n}{\-e}  \right)^{n}\cdot \exp\left(\frac{1}{12n+1}\right)\cdot \left(1-\frac{1}{n}\right) ^{\rho}\cdot \left(1-\frac{3\Delta}{(n-1)^2}\right) ^{\rho}   \\
    &\geq \sqrt{2\pi n}\cdot \left(\frac{n}{\-e}  \right)^{n}\cdot \exp\left(\frac{1}{12n+1}\right)\cdot \left(1-\frac{1}{n}\right) ^{\rho}\cdot \frac{1}{\-e} \tag{by $(1 - 1/(\rho+1))^{\rho}\geq 1/\-e$ and \eqref{eq-upperbound-threedeltaovernmuniusonesqure}}
    \\
    & = g'(\rho,n),
\end{align*}
where $(\spadesuit)$ holds since 
  \begin{align*}
    \left(1+\frac{1}{n-1}\right)^{2\Delta}&\leq 1+\frac{2\Delta}{n-1}+\sum_{i\geq 2} \frac{2\Delta\cdot(2\Delta -1) \cdots(2\Delta -i+1)}{i!}\left(\frac{1}{n-1}\right)^i \tag{by binomial series }\\
    &\leq 1+\frac{2\Delta}{n-1}+\sum_{i\geq 2} \left(\frac{2\Delta}{n-1}\right)^i\\
    &\leq 1+\frac{2\Delta}{n-1} +\left(\frac{2\Delta}{n-1}\right)^2\cdot \left(\frac{1}{1-{2\Delta}/({n-1})} \right)\\
    &=1+\frac{2\Delta}{n-1} + \frac{4\Delta^2}{(n-1)(n-1-{2\Delta})}\\
    &\leq  1+\frac{3\Delta}{n-1} \tag{by \eqref{eq-sixdelta-n}}.
\end{align*}
Thus, we have
     \begin{align*}
        \frac{\abs{\Omega}}{g(\rho,n)} &\geq \frac{g'(\rho,n)}{g(\rho,n)}\\&=\frac{\sqrt{2\pi n}\cdot \exp\left({{1}/{(12n+1)}-1}\right)\cdot \left({n}/{e}  \right)^{n} \cdot \left(1-{1}/{n}\right)^\rho }{ 2\pi n \cdot \exp({{1}/{(3n})}) \cdot  \left({n}/{e}  \right)^{n}\cdot  \left(1-{\rho}/{n^2} \right)^{n}}\\
        &=\frac{\sqrt{2\pi n}\cdot \exp\left({{1}/{(12n+1)}}\right) }{2\pi n \cdot \exp({{1}/{(3n})}+1)    }\cdot \left( \frac{ \left(1-{1}/{n}\right)^{\rho/n}}{1-{\rho}/{n^2}}\right) ^{n}\\
        &\geq \frac{\sqrt{2\pi n}\cdot \exp\left({{1}/{(12n+1)}}\right) }{2\pi n \cdot \exp({{1}/{(3n})}+1)} \tag{by ~\Cref{equality-ab}}\\
        &\geq \frac{1}{\sqrt{2\pi n} \; \-e^{2}}.   
    \end{align*}
Thus, \eqref{eq-grhon-lb} is proved.
\vspace{0.5cm}

\noindent \underline{Proof of \eqref{eq-upper-bpund-fri}. }Note that 
\begin{align*}
\prod_{i\in [n]}f(\abs{R_i}) &\leq  \prod_{i\in [n]} (\abs{R_i}+0.5\ln \abs{R_i} + 1.65)\tag{by \Cref{lemma-upper-bound-fa}}\\
&= \prod_{i\in [n]} \abs{R_i}\left( 1 + \frac{0.5\ln \abs{R_i} + 1.65}{\abs{R_i}}\right)\\
& \leq \prod_{i\in [n]} \abs{R_i}\exp\left( \frac{0.5\ln \abs{R_i} + 1.65}{\abs{R_i}}\right)\tag{by $1+x\leq \exp(x)$ \text{ for each } x}\\
& \leq \prod_{i\in [n]} \abs{R_i} \left(5.3\sqrt{\abs{R_i}}\right)^{1/\abs{R_i}}\tag{by $\exp(1.65)\leq 5.3$}.
\end{align*}
Thus, we have
\begin{align*}
\prod_{i\in [n]} \left(\frac{\abs{R_i}}{\-e}\cdot  \left(5.3\sqrt{\abs{R_i}}\right)^{{1}/\abs{R_i}}\right)
&\leq \left(\prod_{i\in [n]} \frac{\abs{R_i}}{\-e}\right)\cdot  \left(5.3\sqrt{n}\right)^{\frac{n}{ n - \sqrt{(n-2)/20}}}\tag{ by {$ n - \sqrt{(n-2)/20} \leq \abs{R_i} \leq n$}}\\
&= \left(5.3\sqrt{n}\right)^{\frac{n}{ n - \sqrt{(n-2)/20}}} \cdot \left(\frac{n}{e} \right)^{n}\cdot \prod_{i\in [n]}{\left(1-\frac{n - \abs{R_i}}{n} \right) }\\
& \leq \left(5.3\sqrt{n}\right)^{\frac{n}{ n - \sqrt{(n-2)/20}}} \cdot \left(\frac{n}{e} \right)^{n}\cdot  \left(1-\frac{\rho}{n^2}\right)^{n} \tag{by AM-GM inequality}\\
& = \frac{\left(5.3\sqrt{n}\right)^{\frac{n}{ n - \sqrt{(n-2)/20}}}\cdot g(\rho,n)}{2\pi n\cdot \exp(1/3n)} \tag{by \eqref{eq-define-gxy}}\\
& \leq\frac{\-e^2\left(5.3\sqrt{n}\right)^{\frac{n}{ n - \sqrt{(n-2)/20}}}}{\sqrt{2\pi n}\cdot \exp(1/3n)} \cdot  \frac{g\left(x,n\right)}{\sqrt{2\pi n} \;\-e^2}\\
& \leq 19 \cdot \frac{g\left(x,n\right)}{\sqrt{2\pi n} \;\-e^2}\\
\end{align*}
By the above two inequalities, \eqref{eq-upper-bpund-fri} is immediate. 
The theorem is proved.
\end{proof}

\subsubsection{\emph{\textbf{Exact Sampling for very dense PRP in the LLL regime}}}
In this subsection, we present our approximation and sampling algorithm for very dense PRP in the LLL regime.
Our algorithms are identical to those in~\cite{huber2006exact}, but our analysis has been significantly refined.
Rather than lower bounding the number of permutations with Van der Waerden’s inequality as in ~\cite{huber2006exact},
we lower bound the number of permutations with the lopsided LLL as shown in \Cref{thm-estimator-num-prp}.

Similar to \Cref{thm-sampler-PRP-LLL},
we also have a fast approximate counting algorithm for very dense PRP.

\begin{theorem}[approximate counting for dense PRP in the LLL regime]\label{thm-estimator-PRP-LLL}
There exists an algorithm such that given as input any real numbers $\varepsilon,\delta\in (0,1)$, any integer {$n \geq 2$} and any $R_1,\cdots,R_n\subseteq [n]$ satisfying the condition in~\Cref{thm-estimator-num-prp}, it outputs a $1\pm\varepsilon$ approximation of $\abs{\Omega}$ with probability at least $1-\delta$, in running time $O(n^{2}\cdot\varepsilon^{-2}\cdot\log (1/\delta))$, where $\Omega$ is the set of permutations of $[n]$ in $\prod_{i\in [n]}R_i$.
\end{theorem}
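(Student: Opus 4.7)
The plan is to exploit the rejection-sampling structure of Huber's sampler that underlies \Cref{thm-sampler-PRP-LLL}, together with the sharp sandwich bound of \Cref{thm-estimator-num-prp}, and then do plain Monte Carlo. Recall from \cite{huber2006exact} that the exact sampler proceeds by repeated independent trials, each of which costs $O(n^{2})$ time and either accepts (returning a uniform sample from $\Omega$) or rejects. Crucially, each trial succeeds independently with probability exactly $p^{\ast}\triangleq |\Omega|/U$, where
\[
U \triangleq \prod_{i\in[n]}\frac{f(|R_i|)}{\-e}
\]
is the explicit Minc--Bregman-style upper bound of \Cref{lemma-upper-bound-fa-perm}, and $U$ itself is computable deterministically in $O(n)$ time. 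Hence estimating $|\Omega|$ reduces to estimating the single scalar $p^{\ast}$.

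The algorithm is therefore: compute $U$ in $O(n)$ time; run $N=\lceil c\log(2/\delta)/\varepsilon^{2}\rceil$ independent trials of the sampler's inner loop for a sufficiently large absolute constant $c$; let $X$ be the number of trials that accept; output $\widehat{|\Omega|}=(X/N)\cdot U$. The total running time is $N\cdot O(n^{2})=O(n^{2}\varepsilon^{-2}\log(1/\delta))$, matching the stated bound.

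For correctness, \Cref{thm-estimator-num-prp} provides the two-sided sandwich
\[
\frac{g(\rho,n)}{\sqrt{2\pi n}\;\-e^{2}}\;\le\;|\Omega|\;\le\;U\;\le\;19\cdot\frac{g(\rho,n)}{\sqrt{2\pi n}\;\-e^{2}},
\]
which immediately gives $p^{\ast}\ge 1/19$, a constant bounded away from zero. Because $X$ is a sum of $N$ i.i.d.\ Bernoulli$(p^{\ast})$ variables, the multiplicative Chernoff bound yields
\[
\Pr\!\left[\,\bigl|X/N-p^{\ast}\bigr|>\varepsilon p^{\ast}\,\right]\;\le\;2\exp\!\left(-\tfrac{Np^{\ast}\varepsilon^{2}}{3}\right)\;\le\;\delta,
\]
provided $N$ is a sufficiently large multiple of $\log(1/\delta)/\varepsilon^{2}$. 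Consequently $\widehat{|\Omega|}=(X/N)\cdot U$ is a $(1\pm\varepsilon)$-approximation of $|\Omega|$ with probability at least $1-\delta$.

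The main obstacle, and in fact the key new ingredient, is precisely the constant lower bound $p^{\ast}\ge 1/19$, which is the content of \Cref{thm-estimator-num-prp} and comes from the lopsided LLL rather than Van der Waerden's inequality. Without such a constant-factor sandwich, one would naturally fall back on a self-reducibility telescoping estimator, fixing $\sigma(1),\sigma(2),\ldots$ one coordinate at a time; this would require $\Omega(n)$ separate sub-estimations with ratios as small as $\Theta(1/n)$, each demanding $\Omega(n/\varepsilon^{2})$ samples, inflating the total cost to $\Omega(n^{4}\varepsilon^{-2})$ and also forcing a delicate verification that the very-dense condition propagates under conditioning. The constant acceptance rate supplied by \Cref{thm-estimator-num-prp} is exactly what lets plain Monte Carlo attain the optimal $O(n^{2}\varepsilon^{-2}\log(1/\delta))$ time.
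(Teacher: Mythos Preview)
Your proposal is correct and matches the paper's approach. The paper invokes Huber's counting theorem from \cite{huber2006exact} as a black box (stated immediately after \Cref{thm-estimator-PRP-LLL}), which already gives running time $O\bigl(n^{2}\varepsilon^{-2}\log(1/\delta)\cdot U/|\Omega|\bigr)$ with $U=\prod_{i}f(|R_i|)/\mathrm{e}$, and then plugs in the constant bound $U/|\Omega|\le 19$ from \Cref{thm-estimator-num-prp}; you have simply unpacked the Monte Carlo estimator inside that black box and applied the same constant-acceptance-rate bound explicitly.
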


Theorems \ref{thm-sampler-PRP-LLL} and \ref{thm-estimator-PRP-LLL} are basic ingredients of our sampler for PDC, which are immediate by \Cref{thm-estimator-num-prp} and the following theorem adapted from \cite{huber2006exact}.

\begin{theorem}
Let $\varepsilon\in (0,1)$, $\delta\in (0,1)$, \(n \ge 2\) be an integer, and \(R_1, \dots, R_n \subseteq [n]\). Define \(\Omega\subseteq \prod_{i=1}^n R_i\) as the set of all permutations \(\pi\) of \([n]\) where \(\pi(i) \in R_i\) for each \(i\). There exists an exact sampler that, given any \(n, R_1, \dots, R_n\) as input, outputs a uniformly random permutation from \(\Omega\), in expected running time 
\[O\left(n^{2}\cdot\frac{\prod_{i\in [n]}f(\abs{R_i})}{\-e^n\abs{\Omega}}\right).\]
In addition, there also exists an approximate sampler such that given as input any $\varepsilon, n, R_1, \dots, R_n$, it outputs a random permutation with total variance distance of $\varepsilon$ from the uniform distribution of \(\Omega\), in running time 
\[O\left( n^2 \cdot \frac{\prod_{i\in [n]}f(\abs{R_i})}{\-e^n\abs{\Omega}} \cdot \log \frac{1}{\varepsilon} \right).\]
There also exists an algorithm such that given as input any $\varepsilon,\delta,n,R_1,\cdots,R_n$, it outputs a $1\pm\varepsilon$ approximation of $\abs{\Omega}$ with probability at least $1-\delta$, in running time 
\[O\left(\frac{n^2}{\varepsilon^{2}} \cdot\frac{\prod_{i\in [n]}f(\abs{R_i})}{\-e^n\abs{\Omega}} \cdot\log \frac{1}{\delta}\right).\]
\end{theorem}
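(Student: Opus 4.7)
The plan is to invoke Huber's sequential rejection sampler~\cite{huber2006exact} as a black box and perform three short running-time calculations on top of it. Throughout, write $U\triangleq\prod_{i\in[n]}f(\abs{R_i})/\-e$, so that $\abs{\Omega}\le U$ by~\Cref{lemma-upper-bound-fa-perm}.

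\textbf{Exact sampler.} First I would recall Huber's procedure. Build $\pi$ coordinate by coordinate: at step $i$, maintain the set $T=\{\pi(1),\ldots,\pi(i-1)\}$ of used values and $R'_i=R_i\setminus T$; sample $\pi(i)\in R'_i$ uniformly; then flip an independent coin whose success probability is the ratio of the Bregman-style upper bound for the remaining subproblem (i.e.\ $\prod_{k>i}f(\abs{R_k\setminus(T\cup\{\pi(i)\})})/\-e$) to its value just before the update. If any coin fails, abort the trial and restart from scratch. The recursive definition of $f$ ensures that every such ratio lies in $[0,1]$, and a telescoping argument identifies the probability of producing any particular $\pi\in\Omega$ with $1/U$, so a single trial succeeds with probability $p\triangleq\abs{\Omega}/U$ and conditional on success outputs a uniformly random element of $\Omega$. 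Each trial runs in $O(n^2)$ time (linear work per step for the set updates, $f$-value lookups, and uniform sampling from $R'_i$), and the expected number of trials until success is $1/p$, giving the claimed $O(n^2\cdot U/\abs{\Omega})$ bound.

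\textbf{Approximate sampler.} I would simply truncate the exact sampler. Run it for at most $N=\ceil{c(U/\abs{\Omega})\log(1/\varepsilon)}$ trials with $c$ a suitable absolute constant; if no trial has succeeded by then, return an arbitrary fixed permutation. Because the trials are i.i.d.\ Bernoulli$(p)$, the probability of never succeeding is $(1-p)^N\le\varepsilon$, and by the coupling characterization of total variation distance this bounds the error against the uniform distribution on $\Omega$. The total cost is $O(n^2 N)=O\bigl(n^2\cdot (U/\abs{\Omega})\cdot\log(1/\varepsilon)\bigr)$.

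\textbf{Approximate counter.} Here I would use the exact sampler as a Bernoulli-probability estimator. Since $U$ is computable in $O(n)$ time once the $f(\abs{R_i})$ are available and $p=\abs{\Omega}/U$ is exactly the per-trial success probability, I would run $N=\ceil{c\varepsilon^{-2}p^{-1}\log(1/\delta)}$ independent trials, let $\hat p$ be the empirical fraction of successes, and output $U\hat p$. A multiplicative Chernoff bound on a sum of $N$ independent Bernoulli$(p)$ variables gives $\Pr{\abs{\hat p-p}\ge\varepsilon p}\le\delta$ for a suitable $c$, so $U\hat p\in(1\pm\varepsilon)\abs{\Omega}$ with the required confidence. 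The running time is $O(n^2 N)=O\bigl(n^2\cdot\varepsilon^{-2}\cdot(U/\abs{\Omega})\cdot\log(1/\delta)\bigr)$.

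The one non-routine piece of bookkeeping is verifying that Huber's procedure genuinely implements each step in $O(n)$ time despite needing a fresh Bregman-style upper-bound factor at every step, and that the per-step acceptance probabilities really lie in $[0,1]$; both facts are established in~\cite{huber2006exact} using the recursion defining $f$. Beyond this there is no obstacle: the theorem is essentially an unpacking of Huber's analysis with $U$ written out explicitly.
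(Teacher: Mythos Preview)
Your proposal is correct and matches the paper's approach: the paper states this theorem as ``adapted from~\cite{huber2006exact}'' without proof, treating all three claims as known consequences of Huber's self-reducible rejection sampler, and your sketch does the same while spelling out the standard truncation and Chernoff arguments for the approximate variants. The only minor elision is that setting $N=\Theta(\varepsilon^{-2}p^{-1}\log(1/\delta))$ in the counter requires knowing $p$; this is handled in~\cite{huber2006exact} (e.g.\ by running until a fixed number of successes), and since you explicitly defer the non-routine bookkeeping there, the proposal is sound.
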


\smallskip

\section{Structure-preserved state compression}\label{sec:compression}

In this section, we introduce the state compression for PDC formulas. 
As discussed in \Cref{sec-intro}, 
the Markov chain approach for sampling LLL is facilitated by the utilization of the state compression technique.
This technique enables the projection of the state space onto a more compact subspace, overcoming the challenge of the disconnectivity barrier by ensuring the projected images potentially intersect and are well connected. 
Previous state compressions have been limited to product probability spaces.
In this work, we present a novel structure-preserved state compression technique for PDC formulas, {ensuring that the probability spaces of the projected images are still induced by permutations.}
Our state compression is realized by the decomposition of permutations. 

\subsection{Decomposition of permutations}\label{sec-decomposition-permutation}
Recall that \(x_{[n]} = (x_1, x_2, \dots, x_n)\). For simplicity, we sometimes use the notations \(x_1, x_2, \dots, x_n\) and \((x_1, x_2, \dots, x_n)\), as well as \(x_{[n]}\) and \((x_{[n]})\), interchangeably.
Given a PDC formula on $m$ permutations $\Phi=(\+P,\+Q,\+C)$ with $\+P = P_{[m]} = (P_1,\cdots,P_m)$ and $\+Q = Q_{[m]} = (Q_1,\cdots,Q_m)$,
the decomposition of $\+P$ can be defined as the following sequence of \pnames
$$\+P'=\left(P_{1,[\ell_1]},\cdots,P_{m,[\ell_m]}\right) = \left(P_{1,1},\cdots,P_{1,\ell_1},\cdots,P_{m,1},\cdots,P_{m,\ell_m}\right)$$
where $\ell_i>0$ and $P_{i,[\ell_i]} = (P_{i,1},\cdots,P_{i,\ell_i})$ is a partition of $P_i$ for each $i\in [m]$.
For any \(i\in [m]\) and \(t\in [\ell_i]\), let
$P = \bigcup_{j\ge t} P_{i,j}$.
We denote by \(\+P'[P]\) the partition \((P_{i,t}, P_{i,t+1}, \dots, P_{i,\ell_i})\) of \(P\) in \(\+P'\).
Furthermore, with a slight abuse of notation, we use $\+P'\setminus \+P'[P] \cup \set{P}$ to denote the sequence of \pnames
\[
    \left(P_{1,[\ell_1]},\cdots, P_{i-1,[\ell_{i-1}]},P_{i,1}, \cdots, P_{i,t-1},P,P_{i+1,[\ell_{i+1}]},\cdots,P_{m,\ell_m}\right).
\]
In particular, for each $i\in [m]$, we have $\+P'[P_i]= P_{i,[\ell_i]}$, and $\+P'\setminus \+P'[P_i] \cup \set{P_i}$ is the decomposition of $\+P$ obtained by replacing the partition of $P_i$ in $\+P'$ with $P_i$. 
For brevity, let $\+P'\circ P_i \triangleq\+P'\setminus \+P'[P_i] \cup \set{P_i}$.
The notations about $\+Q$
can be defined in a similar way.

Furthermore, given $\+P'=\left(P_{1,[\ell_1]},\cdots,P_{m,[\ell_m]}\right)$, and 
$\+Q'=\left(Q_{1,[\ell_1]},\cdots,Q_{m,[\ell_m]}\right),$
the pair $(\+P',\+Q')$ forms a decomposition of $(\+P,\+Q)$ if $\abs{P_{i,j}} = \abs{Q_{i,j}}$ for each $i\in [m]$ and $ j\in [\ell_i]$.

Consider any  PDC formula $\Phi = (\+P,\+Q,\+C)$ with a decomposition $(\+P',\+Q')$ of $(\+P,\+Q)$. The formula induced by $\Phi$ and 
$(\+P',\+Q')$, denoted as $\Phi[\+P',\+Q']$,
is the formula $\Phi'=(\+P',\+Q',\+C')$,
where $\+C'\subseteq \+C$ is the set of unsatisfied constraints given that the variables in $P'$ take values from $\+Q'(P')$ for each $P'\in \+P'$. 
Furthermore, for any valid assignment $\sigma$ of $\Phi$,
we define $\+Q[\Phi,\+P',\sigma]$ as the decomposition $\+Q'$ of $\+Q$ 
such that $(\+P',\+Q')$ is a decomposition of $(\+P,\+Q)$, and $\+Q'(P) = \sigma(P) = \bigcup_{v\in P} \; \sigma(v)$ for each $P\in \+{P}'$.

A crucial property of the formula induced by the decomposition is that the non-lopsidependent constraints in the original formula remain non-lopsidependent. This property follows directly from the definition of the lopsidependency graph for PDC formulas in \Cref{sec-lll}; a proof is provided in \Cref{appendix-compression} for reference.

\begin{lemma} \label{lemma:lop-graph}
Consider any PDC formula $\Phi=(\+P,\+Q,\+C)$ with a decomposition $\+P'$ of $\+P$. For any $\sigma \in \Omega^{\ast}$ and any constraints $C_1,C_2\in \+C$, if $\set{\neg C_1,\neg C_2}\not\in E_\Phi^{\!{lop}}$, then $\set{\neg C_1,\neg C_2}\not \in  E_{\Phi'}^{\!{lop}}$ where $\Phi'=\Phi[\+P',\+Q[\Phi,\+P',\sigma]]$. 
\end{lemma}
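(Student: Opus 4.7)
The plan is to unfold the definitions of the two lopsidependency graphs and observe that the relation $\sim$ between constraints is monotone under refinement of the permutation-set partition. Concretely, I would recall that for a PDC formula $\Psi=(\+P^*,\+Q^*,\+C^*)$, the edge $\{\neg C_1,\neg C_2\}$ lies in $E_{\Psi}^{\!{lop}}$ exactly when $C_1\sim_{\Psi}C_2$, i.e.\ there exist literals $(v\neq c)$ in $C_1$ and $(v'\neq c')$ in $C_2$ with either $v=v'$, or $v$ and $v'$ belonging to a common set of $\+P^*$ together with $c=c'$.

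Next I would use the structural fact that $\+P'=(P_{1,[\ell_1]},\ldots,P_{m,[\ell_m]})$ is a \emph{refinement} of $\+P$: every $P_{i,j}\in\+P'$ satisfies $P_{i,j}\subseteq P_i$ for the unique $P_i\in\+P$ containing it. Consequently, if two variables $v,v'$ share a common set $P_{i,j}$ in $\+P'$, they also share the common set $P_i$ in $\+P$.

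I would then argue by contrapositive. Suppose $\{\neg C_1,\neg C_2\}\in E_{\Phi'}^{\!{lop}}$. If $C_1$ or $C_2$ does not belong to the constraint set $\+C'$ of $\Phi'=\Phi[\+P',\+Q[\Phi,\+P',\sigma]]$, the edge could not be present in the first place, so assume both lie in $\+C'\subseteq\+C$. By the definition of $\sim_{\Phi'}$, there exist literals $(v\neq c)$ in $C_1$ and $(v'\neq c')$ in $C_2$ witnessing either $v=v'$, or $v,v'\in P_{i,j}$ for some $P_{i,j}\in\+P'$ with $c=c'$. In the first case the same pair witnesses $C_1\sim_{\Phi}C_2$ directly. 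In the second case, by the refinement observation, $v,v'\in P_i$ for the corresponding $P_i\in\+P$, and again $c=c'$, so the pair witnesses $C_1\sim_{\Phi}C_2$. Either way $\{\neg C_1,\neg C_2\}\in E_{\Phi}^{\!{lop}}$, which is the desired contradiction with the hypothesis.

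I do not anticipate a genuine obstacle here: once the graphs are interpreted via the $\sim$ relation, the claim reduces to the almost tautological remark that refining the permutation partition and restricting to the unsatisfied constraints can only \emph{remove} edges from the lopsidependency graph, never add new ones. The only thing to be careful about is that the notion ``same permutation set'' in $\Phi'$ is strictly stronger than in $\Phi$, so the argument goes in the correct direction; no statement about the values taken by $\sigma$ beyond specifying the induced domains $\+Q[\Phi,\+P',\sigma]$ is needed.
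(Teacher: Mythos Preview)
Your proposal is correct and follows essentially the same approach as the paper: both argue the contrapositive by observing that if literals $(v\neq c)$ and $(v'\neq c')$ witness $C_1\sim_{\Phi'}C_2$ via some $P'\in\+P'$, then the containing $P\in\+P$ witnesses $C_1\sim_{\Phi}C_2$. Your extra care in noting that both constraints must lie in $\+C'\subseteq\+C$ for the edge to exist in $E_{\Phi'}^{\!{lop}}$ is a harmless elaboration the paper leaves implicit.
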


{
For any $\zeta\in (0,1]$, we introduce the notion of $\zeta$-decomposition.
\begin{definition}\label{defintion-zeta-decomposition}
Given any \csppformula formula $\Phi=(\+P,\+Q,\+C)$ with a real number $\zeta\in (0,1]$, 
a decomposition $\+P'$ is called a $\zeta$-decomposition of $\+P$ if
\begin{align}
&\forall P_1\in \+P,P_2\in \+P'[P_1], \quad 
\abs{\abs{P_1}^{\zeta} -\abs{P_2}} = O(1).\label{eq-condition-zetadecom}
\end{align}
\end{definition}

\begin{remark}\label{remark-theta-decomposition}
For any constant $\zeta\in (0,1/2]$ and any $\Phi =(\+P,\+Q,\+C)$ where $q_{\Phi}$ is sufficiently large, a $\zeta$-decomposition of $\+P$ always exists.
\end{remark}

The following lemma establishes an upper bound on the violation probability of the decomposed formulas, the proof of which can be found in \Cref{appendix-compression} for reference.

\begin{lemma}\label{lem-violation-decomposed}
Consider a \csppformula formula \(\Phi=(\+P,\+Q,\+C)\) with a \(\zeta\)-decomposition \(\+P'=(P'_1,\dots,P'_\ell)\) of \(\+P\). Assume \(\+Q'=(Q'_1,\dots,Q'_\ell)\) satisfies that, for each \(i\in [\ell]\), \(\abs{Q'_i} =\abs{P'_i}\) and \(Q'_i\subseteq \+Q(P)\), where \(P\) is the unique permutation set in \(\+P\) containing \(P'_i\). Let \(\Phi'=(\+P',\+Q',\+C)\). Then, we have
$p_{\Phi'}<_{q} p^{\zeta}_\Phi$.
\end{lemma}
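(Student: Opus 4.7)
The plan is to fix an arbitrary constraint $C = \bigvee_j (v_j\neq c_j)\in \+C$ and establish the per-constraint estimate
\[
\^P_{\Phi'}[\neg C] \;\le\; (1+o_q(1))\cdot (\^P_\Phi[\neg C])^{\zeta};
\]
after this, taking a maximum over $C$ together with the universal bound $\^P_\Phi[\neg C]\le 1/q$ will convert the multiplicative slack into the exponent-perturbation form of the $<_q$ notation. The opening observation is that under both $\^P_\Phi$ and $\^P_{\Phi'}$ distinct (sub-)permutation sets are independent, so the violation event $\neg C=\bigwedge_j(v_j=c_j)$ factorises as a product of local probabilities. For each $P_i\in \+P$ write $t_i:=|P_i\cap \vbl(C)|$ and assume the requested values $c_j$ lie in $Q_i$ and are distinct (otherwise $\^P_\Phi[\neg C]=0$ and necessarily $\^P_{\Phi'}[\neg C]=0$ since $Q'_\alpha\subseteq Q_i$); a standard computation gives a local contribution of $\prod_{s=0}^{t_i-1}(|Q_i|-s)^{-1}$. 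Analogously, for each sub-permutation set $P'_\alpha$ with $t_\alpha:=|P'_\alpha\cap \vbl(C)|$, the local contribution to $\^P_{\Phi'}[\neg C]$ is $\prod_{s=0}^{t_\alpha-1}(|Q'_\alpha|-s)^{-1}$, and the identities $\sum_{\alpha\colon P'_\alpha\subseteq P_i}t_\alpha=t_i$ hold.

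Next I would invoke the $\zeta$-decomposition property (\Cref{defintion-zeta-decomposition}), which gives $|Q'_\alpha|=|Q_{i(\alpha)}|^\zeta+O(1)$ and hence $|Q'_\alpha|-t_\alpha+1\ge |Q_{i(\alpha)}|^\zeta\bigl(1-O(k/q^\zeta)\bigr)$ once $q$ is large relative to $k$, where $i(\alpha)$ denotes the index with $P'_\alpha\subseteq P_{i(\alpha)}$. Using the crude bound $(|Q'_\alpha|-s)^{-1}\le (|Q'_\alpha|-t_\alpha+1)^{-1}$ for $0\le s<t_\alpha$, one obtains
\[
\^P_{\Phi'}[\neg C] \;\le\; \prod_\alpha (|Q'_\alpha|-t_\alpha+1)^{-t_\alpha} \;\le\; \prod_\alpha |Q_{i(\alpha)}|^{-\zeta t_\alpha}\cdot \bigl(1+O(k/q^\zeta)\bigr)^{t_\alpha},
\]
and the compounded correction across $\sum_\alpha t_\alpha\le k$ factors contributes a single $1+o_q(1)$ factor in the intended regime $k^{O(1)}/q^{\zeta}\to 0$. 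Combined with the trivial lower bound $(\^P_\Phi[\neg C])^\zeta\ge \prod_i|Q_i|^{-\zeta t_i}=\prod_\alpha |Q_{i(\alpha)}|^{-\zeta t_\alpha}$, this yields $\^P_{\Phi'}[\neg C]\le (1+o_q(1))(\^P_\Phi[\neg C])^\zeta$, uniformly in $C$.

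For the last step, taking logarithms gives
\[
\log\^P_{\Phi'}[\neg C] \;\le\; \zeta\log\^P_\Phi[\neg C]+\eta \qquad\text{with }\eta:=\log(1+o_q(1))=o_q(1),
\]
and since $\eta$ does not depend on $C$, taking the maximum yields $\log p_{\Phi'}\le \zeta\log p_\Phi+\eta$. Because every constraint contains at least one variable $v_j$ with $\Pr_{\^P_\Phi}[v_j=c_j]\le 1/q$, we have $p_\Phi\le 1/q$, so $|\log p_\Phi|\ge \log q$. Dividing by the negative quantity $\log p_\Phi$ (flipping the inequality) gives $(\log p_{\Phi'})/(\log p_\Phi)\ge \zeta-\eta/\log q=\zeta(1+o_q(1))$, i.e.\ $p_{\Phi'}\le p_\Phi^{\zeta(1+o_q(1))}$, which is exactly the claim $p_{\Phi'}<_q p_\Phi^{\zeta}$.

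The principal obstacle I expect is the bookkeeping of lower-order terms so as to match the $<_q$ notation: the computation natively produces a multiplicative $(1+o_q(1))$ correction, but the conclusion is stated as a small exponent perturbation, so the proof must exploit the fact that $p_\Phi$ is bounded away from $1$ on a logarithmic scale. The universal bound $p_\Phi\le 1/q$ is the key ingredient that makes this translation succeed. A secondary subtlety is that the $\zeta$-decomposition controls individual sub-permutation sizes only up to an additive $O(1)$; these errors must be folded into the $o_q(1)$ slack together with the widths $t_\alpha$, but this is harmless in the intended regime $k^{O(1)}/q^\zeta=o(1)$ in which the notation is meant to be used.
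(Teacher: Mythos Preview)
Your proposal is correct and follows essentially the same route as the paper's proof: both compute the violation probability of a fixed constraint as a product of inverse falling factorials over the (sub-)permutation sets, invoke the $\zeta$-decomposition to replace each $|Q'_\alpha|$ by $|Q_i|^{\zeta}$ up to lower-order terms, and then compare with $\prod_i |Q_i|^{-\zeta t_i}\le p_\Phi^{\zeta}$ obtained from the original formula. The paper packages the lower-order slack as a single $\mathrm{e}^{k_{ij}}$ factor and writes $<_q$ directly, while you spell out the conversion of the multiplicative $(1+o_q(1))$ correction into the exponent form via $p_\Phi\le 1/q$; this extra care is appropriate and matches how the paper implicitly uses the notation.
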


\subsection{State compression by decomposition}\label{sec-state-comp-decomp}
For any formula $\Phi = (\+P,\+Q,\+C)$ with a decomposition $\+P'$ of $\+P$, let $\Omega[{\Phi},\+P']$ denote the set 
$\left\{\+Q'\mid (\+P',\+Q') \text{ is a decomposition of } (\+P,\+Q)\right\}$.
The distribution $\mu=\mu_\Phi$ naturally induces a distribution $\nu=\nu_{\Phi,\+{P}'}$ over $\Omega[\Phi,\+P']$ where
$$\forall \+Q'\in \Omega[\Phi,\+P'],\quad \nu(\+Q') = \Pr[\sigma\sim\mu]{\+Q[\Phi,\+P',\sigma] =\+Q' }.$$
The following lemma is immediate by the aforementioned definitions.

\begin{lemma}\label{lemma-property-nu}
Given any formula $\Phi = (\+P,\+Q,\+C)$ with a decomposition $\+P'$ of $\+P$,
we have
\begin{equation*}
\forall \sigma\in \Omega^{\ast}, \quad \mu(\sigma)=\nu(\+Q')\cdot \mu_{\Phi'}(\sigma),
\end{equation*}where $\+Q'=\+Q[\Phi,\+P',\sigma]$ and $\Phi'=\Phi[\+P',\+Q']$. 
\end{lemma}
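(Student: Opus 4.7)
The plan is to reduce the identity to an elementary counting statement between the uniform measures $\mu$ on $\Omega_\Phi$, $\nu$ on $\Omega[\Phi,\+P']$, and $\mu_{\Phi'}$ on $\Omega_{\Phi'}$. Fix $\sigma\in\Omega^{\ast}$, set $\+Q'=\+Q[\Phi,\+P',\sigma]$ and $\Phi'=\Phi[\+P',\+Q']=(\+P',\+Q',\+C')$. The key step is to establish the fibre identity
\[
S_{\+Q'}\;\triangleq\;\{\tau\in\Omega_\Phi : \+Q[\Phi,\+P',\tau]=\+Q'\}\;=\;\Omega_{\Phi'},
\]
after which the lemma follows by substituting $\mu(\sigma)=1/|\Omega_\Phi|$, $\nu(\+Q')=|S_{\+Q'}|/|\Omega_\Phi|$, and $\mu_{\Phi'}(\sigma)=1/|\Omega_{\Phi'}|$.

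For $S_{\+Q'}\subseteq\Omega_{\Phi'}$, I would argue that any $\tau\in\Omega_\Phi$ with $\+Q[\Phi,\+P',\tau]=\+Q'$ must map each sub-permutation set $P'\in\+P'$ bijectively onto $\+Q'(P')$ (since $|P'|=|\+Q'(P')|$), so $\tau$ is valid for $\Phi'$; since $\+C'\subseteq\+C$, $\tau$ also satisfies $\+C'$ and hence lies in $\Omega_{\Phi'}$. Conversely, if $\tau\in\Omega_{\Phi'}$ then validity forces $\tau(P')=\+Q'(P')$ for every $P'\in\+P'$, so $\+Q[\Phi,\+P',\tau]=\+Q'$ and $\tau$ is in particular valid for $\Phi$; moreover, by the definition of $\+C'$, each constraint in $\+C\setminus\+C'$ is satisfied by \emph{every} valid assignment of $\Phi'$, so $\tau$ satisfies all of $\+C$ and lies in $\Omega_\Phi$. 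The fibre identity then yields $\nu(\+Q')\cdot\mu_{\Phi'}(\sigma)=|\Omega_{\Phi'}|/(|\Omega_\Phi|\cdot|\Omega_{\Phi'}|)=1/|\Omega_\Phi|=\mu(\sigma)$ when $\sigma\in\Omega_\Phi$. For $\sigma\in\Omega^{\ast}\setminus\Omega_\Phi$, any constraint of $\+C$ that $\sigma$ violates cannot be automatic under the domain restriction $\+Q'$ and therefore must lie in $\+C'$; hence $\sigma\notin\Omega_{\Phi'}$, $\mu_{\Phi'}(\sigma)=0$, and both sides vanish.

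The only delicate point, and the main obstacle worth flagging, is the claim that constraints in $\+C\setminus\+C'$ are genuinely automatic under the domain restriction prescribed by $\+Q'$. This rests on the disjunctive form $v\ne c$ of PDC constraints: once $c\notin\+Q'(P')$ for the unique $P'\in\+P'$ containing $v$, the literal $v\ne c$ is forced for every valid assignment of $\Phi'$, so the enclosing clause is satisfied regardless of the other variables. Beyond this syntactic observation the argument is purely combinatorial counting and uses no lopsidependency or LLL structure.
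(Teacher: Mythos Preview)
Your proposal is correct and follows precisely the approach the paper has in mind: the paper states the lemma is ``immediate by the aforementioned definitions'' and gives no further proof, and your argument simply unpacks those definitions via the fibre identity $S_{\+Q'}=\Omega_{\Phi'}$ and the uniform-measure counting. Your handling of the $\sigma\notin\Omega_\Phi$ case and the observation that removed constraints in $\+C\setminus\+C'$ are automatic under the domain restriction are exactly the details one would supply if asked to spell out ``immediate.''
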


\Cref{lemma-property-nu} characterizes an equivalent method to sample the random satisfying assignments from $\mu$.
Specifically, to sample a satisfying assignment of $\Phi$ from $\mu$ uniformly at random, one can
\begin{enumerate}
    \item Sample a domain $\+Q'$ from the compressed state space ${\Omega}[\Phi,\+P']$ according to $\nu$;
    \item Sample a satisfying assignment of the induced formula $\Phi[\+P',\+Q']$ uniformly at random.
\end{enumerate}
Consequently, it boils down to a sampler for the decomposition of $\+Q$ from $\Omega[\Phi,\+P']$ for sampling the satisfying assignments of $\Phi$.
 
For any {$P\in \+{P}$} and $\+Q'\in \Omega[\Phi,\+P']$,
let $\nu_{P}^{\+Q'}$ denote the marginal distribution of the domains on $\+P'[P]$ induced by $\nu$,
conditioned on the domains corresponding to $ \+P'\setminus \+P'[P]$ being fixed as $\+Q'(\+P'\setminus \+P'[P])$. We can establish the following lemma immediately from the definitions of $\nu_{P}^{\+Q'}$. 
\begin{lemma}\label{lemma-property-nupq}
Given any PDC formula $\Phi = (\+P,\+Q,\+C)$, any decomposition $(\+P',\+Q')$ of $(\+P,\+Q)$, and any $P\in \+P$,
let $\Phi'$ be the formula $(\+P'\circ P,\+Q'\circ \+Q(P),\+C)$. 
Suppose $\+P'[P] = \left(P'_1,P'_2,\cdots,P'_{\ell}\right)$ for some
integer $\ell>0$. It holds that 
\begin{align*}
\forall Q^{\ast}_1,Q^{\ast}_2,\cdots,Q^{\ast}_{\ell},\quad 
\nu_{P}^{\+Q'}\left( Q^{\ast}_1,Q^{\ast}_2,\cdots,Q^{\ast}_{\ell}\right) = 
\Pr[\sigma\sim \mu_{\Phi'}]{\sigma(P'_i) = Q^{\ast}_i \text{ for all } i\in [\ell]}.
\end{align*}
\end{lemma}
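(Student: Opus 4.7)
The plan is to unfold both sides of the claimed identity using the definitions from \Cref{sec-decomposition-permutation,sec-state-comp-decomp}, and then identify them with the same counting ratio. For $\sigma\in \Omega_\Phi$, let me write $N(Q_1,\ldots,Q_\ell)$ for the number of satisfying assignments of $\Phi$ with $\sigma(P'_i)=Q_i$ for all $i\in [\ell]$ and $\sigma(P''_j)=\+Q'(P''_j)$ for every block $P''_j\in \+P'\setminus \+P'[P]$. The target will be to show both sides equal
\[
\frac{N(Q^{\ast}_1,\ldots,Q^{\ast}_\ell)}{\sum_{(Q_1,\ldots,Q_\ell)} N(Q_1,\ldots,Q_\ell)},
\]
where the sum ranges over ordered $\ell$-tuples partitioning $\+Q(P)$ with $\abs{Q_i}=\abs{P'_i}$.

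For the left-hand side, the definition of $\nu=\nu_{\Phi,\+P'}$ gives $\nu(\+Q'')=\abs{\Omega_\Phi}^{-1}\cdot M(\+Q'')$ for each $\+Q''\in \Omega[\Phi,\+P']$, where $M(\+Q'')$ counts satisfying $\sigma$ with $\+Q[\Phi,\+P',\sigma]=\+Q''$. Conditioning on the coordinates outside $\+P'[P]$ being fixed to $\+Q'(\+P'\setminus \+P'[P])$ and renormalizing yields precisely the displayed ratio, because the restriction of $M(\+Q'')$ to the event $\+Q''(P'_i)=Q^{\ast}_i$ is exactly $N(Q^{\ast}_1,\ldots,Q^{\ast}_\ell)$.

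For the right-hand side, the formula $\Phi'=(\+P'\circ P,\+Q'\circ \+Q(P),\+C)$ keeps the block $P$ undecomposed with domain $\+Q(P)$, while every other block $P''\in \+P'\setminus \+P'[P]$ has its domain pinned to $\+Q'(P'')$. A valid assignment of $\Phi'$ is therefore a map on $V$ assigning a permutation of $\+Q(P)$ to $P$ and a permutation of $\+Q'(P'')$ to each other block. Intersecting with the event $\sigma(P'_i)=Q^{\ast}_i$ for all $i\in[\ell]$ puts this set in one-to-one correspondence with the satisfying assignments of $\Phi$ counted by $N(Q^{\ast}_1,\ldots,Q^{\ast}_\ell)$, since the constraints $\+C$ are shared between the two formulas and $(Q^{\ast}_1,\ldots,Q^{\ast}_\ell)$ partitions $\+Q(P)$ with $\abs{Q^{\ast}_i}=\abs{P'_i}$. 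Partitioning $\Omega_{\Phi'}$ according to the induced tuple $(\sigma(P'_1),\ldots,\sigma(P'_\ell))$ then gives $\abs{\Omega_{\Phi'}}=\sum_{(Q_1,\ldots,Q_\ell)} N(Q_1,\ldots,Q_\ell)$, so the right-hand side equals the same ratio.

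The only step requiring real care, and the place where I expect the main (though largely bookkeeping) obstacle, is verifying the bijection between the two intersected sets: one must confirm that requiring $\sigma(P'_i)=Q^{\ast}_i$ for all $i$ is equivalent to requiring the restriction of $\sigma$ to $P$ to be a permutation of $\+Q(P)$ with the prescribed block images, and that the constraints $\+C$ cut out the same satisfying assignments in $\Phi$ and in $\Phi'$. Both facts follow from the partition property of $(Q^{\ast}_1,\ldots,Q^{\ast}_\ell)$ and from the shared constraint set; once they are checked, the lemma is essentially an application of \Cref{lemma-property-nu} to the partially decomposed formula $\Phi'$, namely the one obtained by ``unmerging'' only the block corresponding to $P$.
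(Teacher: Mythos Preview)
Your proposal is correct and follows exactly the approach the paper intends: the paper states only that the lemma ``can be established immediately from the definitions of $\nu_P^{\+Q'}$,'' and what you have written is precisely that unfolding, carried out carefully by identifying both sides with the counting ratio $N(Q^\ast_1,\ldots,Q^\ast_\ell)/\sum N(Q_1,\ldots,Q_\ell)$. Your closing remark that the result can also be read as an instance of \Cref{lemma-property-nu} applied to the partial decomposition $\+P'$ of $\+P'\circ P$ is a nice way to package the argument.
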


\Cref{lemma-property-nupq} characterizes an equivalent method to sample the domains from  $\nu_{P}^{\+Q'}$.
To sample $Q^{\ast}_1\cdots,Q^{\ast}_{\ell}$
from $\nu_{P}^{\+Q'}$, 
one can first draw a sample $\sigma$ from $\mu_{\Phi'}$,
and then let $Q^{\ast}_i = \sigma(P'_i)$ for each $P'_i\in \+P'[P]$.
Thus, we introduce the idealized permutation-wise Glauber dynamics for sampling $Q'$ from $\nu$ as follows:

\medskip
{\centering
\par\addvspace{.5\baselineskip}
\framebox{
  \noindent
  \begin{tabularx}{15cm}{@{\hspace{\parindent}} l X c}
    \multicolumn{2}{@{\hspace{\parindent}}l}{\underline{Idealized permutation-wise Glauber dynamics for $\Phi=(\+P,\+Q,\+C)$ with a decomposition $\+P'$ of $\+P$:}} \\
  1. &  initialize $\+Q'\gets \+Q[\Phi, \+P',\sigma]$ where $\sigma \leftarrow \mbox{a valid assignment of $\Phi$ uniformly at random}$;\\
  2. & repeat the following for sufficiently many iterations:\\
      & \quad a) pick a \pname $P\in\+P$ uniformly at random;\\
      & \quad b) update the domains of $\+P'[P]$ by redrawing its value independently from $\nu_{P}^{\+Q'}$:\\
      & \quad \quad i) draw $\sigma \sim \mu_{\Phi'}$ where $\Phi'=(\+P'\circ P,\+Q'\circ \+Q(P),\+C)$;\\
       & \quad \quad ii) let $\+Q'(P') \gets \sigma(P')$ for each $P'\in \+P'[P]$;\\
  \end{tabularx}
 }
\par\addvspace{.5\baselineskip}}

\medskip
\noindent The following lemma ensures the correctness of the idealized permutation-wise Glauber dynamics, the proof of which is postponed to~\Cref{appendix-compression}.

\begin{lemma}\label{lemma-unique-distribution}
Consider any PDC formula \(\Phi = (\+P, \+Q, \+C)\) together with a decomposition \(\+P'\) of \(\+P\). Suppose that for every \(\+Q'\in \Omega[\Phi,\+P']\) and the corresponding formula \(\Phi' = \Phi[\+P',\+Q']\), we have
$\-e p_{\Phi'} \Delta_{\Phi'} \leq 1$. 
Then, the idealized permutation-wise Glauber dynamics is irreducible, aperiodic, and reversible with respect to \(\nu\). Consequently, it converges to the stationary distribution \(\nu\).
\end{lemma}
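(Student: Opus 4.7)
My approach is to verify the three hypotheses of \Cref{Theo:Chain}: irreducibility, aperiodicity, and reversibility with respect to $\nu$. Under the uniform hypothesis of the lemma, \Cref{prop-PDC-local-uniformity} guarantees that $\Phi[\+P',\+Q']$ is satisfiable for every $\+Q'\in\Omega[\Phi,\+P']$; since any satisfying assignment of $\Phi[\+P',\+Q']$ is automatically a satisfying assignment of the auxiliary formula $\Phi^{\dagger}=(\+P'\circ P,\+Q'\circ\+Q(P),\+C)$ appearing in step (b)(i), the resampling distribution $\mu_{\Phi^{\dagger}}$ is well-defined, and every state $\+Q'\in\Omega[\Phi,\+P']$ lies in the support of $\nu$.

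For reversibility, I would fix $\+Q',\+Q''\in\Omega[\Phi,\+P']$. If they differ on two or more sub-partitions $\+P'[P]$, both sides of detailed balance vanish, so I only treat the case where they agree off a single $\+P'[P]$. Then
\[
    M(\+Q',\+Q'')=\tfrac{1}{m}\,\nu_{P}^{\+Q'}\bigl(\+Q''(\+P'[P])\bigr),
\]
and I would combine \Cref{lemma-property-nupq}, which realises $\nu_{P}^{\+Q'}(\+Q''(\+P'[P]))$ as $\abs{\Omega_{\Phi[\+P',\+Q'']}}/\abs{\Omega_{\Phi^{\dagger}}}$, with \Cref{lemma-property-nu}, which yields $\nu(\+Q')=\abs{\Omega_{\Phi[\+P',\+Q']}}/\abs{\Omega_{\Phi}}$ (and analogously for $\+Q''$). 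Since $\+Q'$ and $\+Q''$ coincide off $\+P'[P]$, the auxiliary formula $\Phi^{\dagger}$ is the same for both and $\nu_{P}^{\+Q'}=\nu_{P}^{\+Q''}$, so detailed balance collapses to the trivial symmetric identity $\abs{\Omega_{\Phi[\+P',\+Q']}}\cdot\abs{\Omega_{\Phi[\+P',\+Q'']}}=\abs{\Omega_{\Phi[\+P',\+Q'']}}\cdot\abs{\Omega_{\Phi[\+P',\+Q']}}$. Aperiodicity is then immediate, since at every state $\+Q'$ the self-loop probability equals $\tfrac{1}{m}\sum_{P\in\+P}\nu_{P}^{\+Q'}(\+Q'(\+P'[P]))>0$, each factor being positive by the satisfiability of $\Phi[\+P',\+Q']$.

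The main step, and the step where I expect the principal subtlety to sit, is irreducibility. Given $\+Q',\+Q''\in\Omega[\Phi,\+P']$ I would construct an explicit $m$-step path. Enumerate $\+P=(P_1,\dots,P_m)$, set $\+Q^{(0)}=\+Q'$, and obtain $\+Q^{(i)}$ from $\+Q^{(i-1)}$ by replacing the sub-partition over $\+P'[P_i]$ with $\+Q''(\+P'[P_i])$. Each $\+Q^{(i)}$ is a valid decomposition and hence lies in $\Omega[\Phi,\+P']$, so by the uniform hypothesis $\Phi[\+P',\+Q^{(i)}]$ is satisfiable; any of its satisfying assignments is simultaneously a satisfying assignment of the auxiliary formula $(\+P'\circ P_i,\+Q^{(i-1)}\circ\+Q(P_i),\+C)$ whose induced partition on $P_i$ equals $\+Q''(\+P'[P_i])$, so the update at $P_i$ sends $\+Q^{(i-1)}$ to $\+Q^{(i)}$ with positive probability. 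Consequently $M^m(\+Q',\+Q'')>0$. The key observation, and the reason this hybrid construction succeeds, is that the LLL hypothesis is imposed \emph{uniformly} over all of $\Omega[\Phi,\+P']$ rather than only over states coming from a fixed satisfying assignment of $\Phi$; this is precisely what certifies the satisfiability of every intermediate formula. Combining the three properties with \Cref{Theo:Chain} yields convergence to the unique stationary distribution, which detailed balance identifies as $\nu$.
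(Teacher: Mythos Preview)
Your proposal is correct and follows essentially the same approach as the paper: verify irreducibility, aperiodicity, and detailed balance separately, using the uniform LLL hypothesis to guarantee that every $\Phi[\+P',\+Q']$ is satisfiable and hence every one-step transition between states differing on a single $\+P'[P]$ has positive probability. Your detailed-balance argument via solution counts $|\Omega_{\Phi[\+P',\+Q']}|/|\Omega_\Phi|$ and $|\Omega_{\Phi[\+P',\+Q'']}|/|\Omega_{\Phi^\dagger}|$ is the same computation the paper performs with conditional probabilities, and your explicit hybrid path for irreducibility is a more detailed version of what the paper leaves as ``one can verify.''
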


\section{Sampling algorithm with state compression}\label{sec:samplealgorithm}
In this section, we introduce the MCMC-based algorithm for sampling the satisfying assignments of the PDC formula. 

\subsection{Sampling algorithm}\label{sec:main-sampling-algorithm}
Given a \cspformula formula $\Phi=(\+P,\+Q,\+C)$, the main idea of our sampling algorithm is to construct a good decomposition $\+P'$ of $\+P$, simulate the idealized permutation-wise Glauber dynamics for the decomposition $\+Q'$ of $\+Q$ as shown in \Cref{sec-state-comp-decomp}, and then sample the satisfying assignment of $\Phi[\+P',\+Q']$ uniformly at random.

\subsubsection{\emph{\textbf{Decomposition construction}}}

We consider the formulas with the decomposition of permutations satisfying the following conditions. 

\begin{condition}\label{condition-state-compression}
Let \(\Delta, k, n \ge 1\) and \(L \ge 2\) be integers, and let \(p, \eta \in (0,1)\) be real numbers. Consider any PDC formula \(\Phi = (\+P,\+Q,\+C)\) satisfying
$\Delta_\Phi \le \Delta, k_\Phi \le k$ and  $|V| = n$,
and let \(\+P'\) be a decomposition of \(\+P\). Then, for each \(P \in \+P\) and every \(P' \in \+P'[P]\), we have
\[
|P'| \le L \quad \text{and} \quad \Bigl||P'| - \min\{|P|^\eta,\, L\}\Bigr| = O(1).
\]
Moreover, for any \(\+Q' \in \Omega[\Phi,\+P']\) and \(\Phi' = \Phi[\+P',\+Q']\), we have
$p_{\Phi'} \le p$.
\end{condition}

With the decomposition satisfying \Cref{condition-state-compression}, one can verify that each \pname $P\in \+P$ where $\abs{P}< L^{1/\eta}$ is decomposed into \pnames $P'\in \+P'$ satisfying $\abs{P'} =_{q} \abs{P}^{\eta}$.
In this way, for any $\+Q'\in \Omega[\Phi,\+P']$, the violation probability of
each constraint in $\Phi[\+P',\+Q']$ is approximately upper bounded by $p^\eta$.
In addition, by \Cref{prop-PDC-local-uniformity}, one can further show that for any random $\+Q'$ appearing in the idealized permutation-wise Glauber dynamics in \Cref{sec-state-comp-decomp}, each constraint in $\+C$ is satisfied with a probability around $1 - p^{1-\eta}$, given that the variables in $P'$ take values from $\+Q'(P')$ for each $P'\in \+P'$. Then most constraints in $\+C$ are satisfied in $\Phi[\+P',\+Q']$.
Furthermore, as shown in subsequent sections, we ensure that each \pname $P'\in \+{P}'$ has a size not exceeding $L$ such that the algorithm can be realized efficiently.

It is worth noting that constructing the decomposition $\+P'$ of $\+P$, which satisfies \Cref{condition-state-compression}, can be easily accomplished by partitioning each $P\in \+P$ into subsets of appropriate sizes. This process is of time complexity $O(\abs{V})$.

\subsubsection{\emph{\textbf{MCMC-based sampling algorithm}}}
Consider any instance satisfying \Cref{condition-state-compression} with an error parameter $\varepsilon\in (0,1)$. The main algorithm (\Cref{Alg:MCMC}) 
then implements a block Markov chain on space 
$\Omega[\Phi,\+P']$ to sample the decomposition $\+Q'$ of $\+Q$ from the distribution $\nu$.
It simulates the permutation-wise Glauber dynamics for $T = \left\lceil 2n\log\left({3n}/{\varepsilon}\right) \right\rceil$ steps to draw a random $\+Q'$
distributed approximately as $\nu$.
At each step, a \pname $P\in \+P$
is picked uniformly at random, and 
$\+Q'(\+P[P])$ is resampled approximately
from the marginal distribution $\nu_{P}^{\+Q'}$. 
Finally, the algorithm draws a random satisfying assignment of the formula $(\+P',\+Q',\+C)$ almost uniformly at random, where $\+Q'$ is distributed approximately from $\nu$. 
According to \Cref{lemma-property-nu},
this process generates a random assignment that approximately follows the distribution of $\mu$.

One of the most technically challenging steps in the sampling algorithm is to draw a sample from the marginal distribution $\nu_{P}^{\+Q'}$.
Recall the factorization of a formula defined in \Cref{sec-Preliminaries}. 
Consider the factorization of the formula $\Phi^{\ast} = (\+P'\circ P,\+Q'\circ \+Q(P),\+C)$, where 
the domains on $\+P'\setminus \+P'[P]$ are fixed as $\+Q'(\+P'\setminus \+P'[P])$ and the domains $\+Q'(\+P'[P])$ are replaced by $\+Q(P)$.
It is evident that
the marginal distribution induced by $\mu_{\Phi^\ast}$ on the variables in $P$ is determined exclusively by the factorized formula $\Phi'$ that contains $P$ in \Cref{Alg:MCMC}. 
In other words, $\mu_{\Phi^\ast}$ and $\mu_{\Phi'}$ have the same marginal distribution on the variables in $P$.
Combining with \Cref{lemma-property-nupq}, one can draw a sample from $\nu_{P}^{\+Q'}$ by sampling $\sigma\sim \mu_{\Phi'}$ and updating $\+Q'(P')$ to $\sigma(P')$ for each $P'\in \+P'[P]$. 
To draw $\sigma\sim \mu_{\Phi'}$, 
we call the subroutine $\samplepermutation(\cdot)$ which is the core of our algorithm.




In the last step of the main algorithm, to draw a random solution for the formula $(\+P',\+Q',\+C)$,
we first factorize $(\+P',\+Q',\+C)$ to formulas $\Phi_1,\cdots, \Phi_K$, and then draw a random solution $\sigma_i$ for each factorized formula $\Phi_i$ by the rejection sampling. 
{The solution of $(\+P',\+Q',\+C)$ is the concatenation of all $\sigma_i$.
We shall ensure that the factorized formulas $\Phi_1,\dots,\Phi_{K}$ have small sizes, which can be achieved by that each \pname $P'\in \+P'$ has a size not exceeding $L$ by \Cref{condition-state-compression}, and each constraint $C\in \+C$ is satisfied with high probability given that the variables in $P'$ take values from $\+Q'(P')$ for each $P'\in \+P'$.}
Thus, the rejection sampling can be efficient for these factorized formulas.

\begin{algorithm}
    \caption{$\MCMC(\Phi,\+P',\varepsilon)$} 
    \label{Alg:MCMC}
    \KwIn{a \cspformula formula $\Phi=(\+P,\+Q,\+C)$, a decomposition $\+P'$ of $\+P$ satisfying \Cref{condition-state-compression}, and a error parameter $0< \varepsilon <1$.}
    \KwOut{a random valid assignment of $\Phi$.}
    $\sigma \leftarrow \mbox{a valid assignment of $\Phi$ uniformly at random}$\;
    $\+Q' \leftarrow \+Q[\Phi,\+P',\sigma]$\;
    $T\gets \left\lceil 2n\log\left({3n}/{\varepsilon}\right) \right\rceil$\;\label{line-main-loops}
    \For{$t= 1$ to $T$}{
        pick a \pname $P$ from $\+P$ uniformly at random\;
        factorize $(\+P'\circ P,\+Q'\circ \+Q(P),\+C)$ to disjoint formulas and 
        let $\Phi'$ be the 
        formula containing $P$\;\label{line-main-factorization}
        $\sigma\gets \samplepermutation(\Phi',P,{\varepsilon}/{\left(3(T+1)\right)})$\;\label{line-main-plargerthanl}        
        \textbf{For} {each $P'\in \+P'[P]$ \textbf{do}} $\+Q'(P') \leftarrow \sigma(P')$\;
    }
   Let $\Phi_1,\dots,\Phi_{K}$ be the factorization of $(\+P',\+Q',\+C)$\;
    \For{$i=1$ to {$K$}}
    {
        $\sigma_i\gets \truncatedsampling(\Phi_i,{\lceil (n/\varepsilon) \log  (3n(T+1)/\varepsilon)\rceil)}$\;\label{line-main-rej-2}
    }
    \Return the concatenation of $\sigma_1,\sigma_2,\cdots,\sigma_K$\;
\end{algorithm}

Our rejection sampling algorithm for drawing from $\mu_{\Phi_i}$ is given in Algorithm 2.
The correctness of this algorithm is folklore. We state without proof.

\begin{lemma}\label{lemma-rejection-sample}
    Consider any PDC formula $\Phi=(\+P,\+Q,\+C)$ with a truncated threshold $T\geq 0$. Conditioned on $\truncatedsampling(\Phi,T)$ terminating at ~\Cref{line:success}, it returns a satisfying assignment $\sigma$ of $\Phi$ uniformly at random. Furthermore, $\dtv(\sigma,\mu_\Phi)$ is upper bounded by the probability that $\truncatedsampling(\Phi,T)$ terminates at ~\Cref{line:fail}.
\end{lemma}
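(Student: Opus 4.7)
The proof plan is to establish both claims by elementary probabilistic arguments tailored to how $\truncatedsampling$ operates. I will first fix notation: let $\Omega^{\ast}$ denote the set of valid assignments of $\Phi$ and $\Omega \subseteq \Omega^{\ast}$ the set of satisfying assignments. The algorithm $\truncatedsampling(\Phi,T)$ is (by convention in this line of work) the procedure that repeats, for up to $T$ trials, the step of drawing a valid assignment $\sigma$ uniformly from $\Omega^{\ast}$ and checking whether $\sigma \in \Omega$; it exits at \Cref{line:success} if some trial succeeds, and exits at \Cref{line:fail} after $T$ consecutive failed trials.

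For the first claim (uniformity conditional on success), I would argue trial-by-trial. For any single trial, conditioned on it succeeding (i.e.\ on the drawn $\sigma$ lying in $\Omega$), the output is uniform on $\Omega$, since $\sigma$ was uniform on $\Omega^{\ast}$ and every $\sigma \in \Omega$ has equal (nonzero) conditional weight $1/|\Omega|$. Write $\mathsf{Succ}_i$ for the event that trial $i$ is the first successful trial. Then conditioned on $\mathsf{Succ}_i$, the trials $1,\dots,i-1$ contribute only the information that they were unsatisfying (independent of which satisfying assignment is produced at step $i$), so the output remains uniform on $\Omega$. Summing over $i \in [T]$ and using that the events $\mathsf{Succ}_1,\dots,\mathsf{Succ}_T$ partition the success event yields the claim.

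For the total variation bound, I would use a direct coupling. Let $\nu$ be the output distribution of $\truncatedsampling(\Phi,T)$ (with some convention for the value returned upon terminating at \Cref{line:fail}; the argument does not depend on the convention). Construct a joint distribution $(X,Y)$ with marginal $X \sim \nu$ and $Y \sim \mu_\Phi$ as follows: run the algorithm; if it terminates at \Cref{line:success} with value $\sigma$, set $X = Y = \sigma$; if it terminates at \Cref{line:fail}, set $X$ equal to the algorithm's output and draw $Y$ independently from $\mu_\Phi$. By the first claim, the first case already produces $X$ distributed as $\mu_\Phi$ conditioned on success, so this is a valid coupling. Under this coupling,
\begin{equation*}
\Pr{X \neq Y} \le \Pr{\truncatedsampling(\Phi,T) \text{ terminates at \Cref{line:fail}}},
\end{equation*}
and by the coupling characterization of total variation distance (\Cref{prop:coupling}), this upper bounds $\dtv(\nu, \mu_\Phi)$.

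There is essentially no hard step here; the lemma is folklore. The only care required is to handle the multi-trial structure cleanly in the uniformity argument and to specify a valid coupling for the TV bound, both of which are standard. If needed for rigor, I would also note that if the algorithm terminates at \Cref{line:fail} having produced a value (say, an arbitrary assignment or $\bot$), the construction above remains correct because we can freely redefine $X$ in the failure branch without altering its marginal $\nu$, which is what the coupling argument exploits.
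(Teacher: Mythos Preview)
Your proposal is correct. The paper explicitly does not prove this lemma at all, stating only ``The correctness of this algorithm is folklore. We state without proof.'' Your argument is the standard one and there is nothing in the paper to compare it against.
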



\begin{algorithm} 
    \caption{$\truncatedsampling(\Phi,T)$} 
    \label{alg:truncatedsampling}
    \KwIn{a \cspformula formula $\Phi=(\+P,\+Q,\+C)$ and a truncated threshold $T$.}
    \KwOut{a random valid assignment of $\Phi$.}
    \For{$t=1$ to {$T$}}
    {   
         generate a valid assignment $\sigma$ of $\Phi$ uniformly and independently at random\;
        \textbf{if}{$\ \sigma$ is a satisfying assignment of $\Phi$} \textbf{then} \Return{$\sigma$}\;\label{line:success}
    }
    \Return a valid assignment $\sigma$ of $\Phi$ uniformly at random \label{line:fail}\;
\end{algorithm}

\subsection{The \samplepermutation ~subroutine}\label{subsection-sample}
In this section, we describe the $\samplepermutation(\cdot)$ ~ subroutine, which is the core component of our sampling algorithm. 
Given any instance satisfying~\Cref{condition-state-compression}, we will show that the following invariant is satisfied by the input 
of $\samplepermutation(\cdot)$ subroutine in~\Cref{Alg:MCMC} under certain conditions.
The correctness of the \samplepermutation ~ subroutine is guaranteed by this invariant.
\begin{condition}[invariant for $\samplepermutation$  subroutine]\label{condition-sample}
Let integers $\Delta,k,n,L\geq 1$ and real number $p\in (0,1)$ satisfying {$8\-ep\Delta^2 \leq 1$}.
Consider any PDC formula $\Phi = (\+P,\+Q,\+C)$ with a \pname $P\in\+P$ satisfying the following conditions:
\begin{itemize}
\item $\Delta_{\Phi} \leq \Delta$, $k_\Phi\leq k$, $p_\Phi\leq p$, $\abs{V_{\Phi}} \leq n$, and $\Phi$ cannot be further factorized to smaller formulas;
\item $\abs{P'}\leq L$ for each $P'\in \+P\setminus \set{P}$. 
\end{itemize}
\end{condition}

Our $\samplepermutation(\cdot)$ subroutine significantly differs from previous marginal samplers of sampling LLL.
Significant effort is devoted to tackling the challenges arising from {large} permutations, which give rise to long-term correlations among the random variables.
The conventional approach of factorizing the formula fails when dealing with {large} permutations, as they, along with their neighbor elements in the factorized formula,
need to be sampled collectively. 

{Consequently, in addition to exploiting factorization, we also analyze the correlation between the large permutation and its neighbors using concentration inequalities.
The favorable concentration suggests a weak correlation between the {large permutation} and its neighbors. 
Hence, the neighboring permutations and the large permutation in one factorized formula can be sampled sequentially, where the large permutation is sampled with the sampling algorithm for {PRP}.
The weak correlation only affects the time complexity of the subroutine by a polynomial factor.}

According to \Cref{condition-sample},
the input formula $\Phi=(\+P,\+Q,\+C)$ of $\samplepermutation(\cdot)$ ~ cannot be further factorized.
To draw a random satisfying assignment of $\Phi$,
instead of factorizing $\Phi$ directly,
we begin by omitting the constraints in $\+C(P)$ and factorizing $(\+P,\+Q,\+C\setminus \+C(P))$ into formulas $\Phi_1,\Phi_2,\cdots,\Phi_K$.
By definition,
there must be a factorized formula $(\{P\},\{\+Q(P)\},\emptyset)$ which we designate as $\Phi_K$.
We shall ensure that, with high probability, the factorized formulas $\Phi_1,\Phi_2,\cdots,\Phi_{K-1}$ have small sizes throughout the execution of~\Cref{Alg:MCMC}, which can be formalized as the following condition.
\begin{condition}\label{condition-sample-small}
Consider any instance satisfying~\Cref{condition-sample} with real number $\varepsilon\in (0,1)$.
Let $\Phi_1,\dots,\Phi_{K}$ be the factorization of $(\+P,\+Q,\+C\setminus \+C(P))$ where $\Phi_{K}=(\{P\},\{\+Q(P)\},\emptyset)$, and $\Phi_i=(\+P_i,\+Q_i,\+C_i)$ for each $i\in [K-1]$. It holds that
\begin{equation*}
    \max \{\abs{\+C_1},\cdots,\abs{\+C_{K-1}}\}\leq {\sampleblocksize}.
\end{equation*}
\end{condition}

\begin{remark}\label{remark-factorized}
    In the following pages, we will consistently refer to the PDC formula $\Phi=(\+P,\+Q,\+C)$ that cannot be further factorized to smaller formulas with a \pname $P\in \+P$. For convenience, we preserve the notations $\Phi_1, \Phi_2, \ldots, \Phi_K$ for the factorized formulas of $(\+P, \+Q, \+C\setminus \+C(P))$, where $\Phi_K = (\set{P}, \set{\+Q(P)}, \emptyset)$, and $\Phi_i=(\+P_i,\+Q_i,\+C_i)$ for each $i\in [K-1]$. Moreover, we use $\mu_{[i]}$ to denote $\mu_{\Phi_1}\times \cdots\times \mu_{\Phi_i}$ for any $i\in [K]$.
\end{remark}

In the remaining part of this section, we present the high-level ideas and crucial lemmas for sampling the satisfying assignments of any instances satisfying Condition~\ref{condition-sample-small}. 

We begin with the natural {accept-reject} sampler for sampling $\sigma\sim \mu_{\Phi}$, which is to repeat the following steps until success:
\begin{itemize}
\item draw $\sigma_1,\cdots,\sigma_{K}$ from $\mu_{\Phi_1},\cdots,\mu_{\Phi_{K}}$, independently;
\item accept the concatenation of $\sigma_1,\dots,\sigma_{K}$ as $\sigma$ if the concatenation is a solution of $\Phi$.
\end{itemize}
The correctness of the sampler is folklore. The challenge lies in demonstrating its efficiency.

\subsubsection{\emph{\textbf{Easy-to-handle factorized formulas}}}
We first claim that {if $2\-e\Delta\leq \log(\-e^4n/\varepsilon)$ or $p_{\max}\abs{P}\Delta\leq 1/4$ where $p_{\max}=\max_{C\in \+C(P)}\^P_\Phi(\neg C)$,} the accept-reject sampler has an efficient implementation. The efficiency relies on the following result by the lopsided LLL, which is proved in~\Cref{subsec:correctness}.

\begin{algorithm}
\caption{$\samplepermutation(\Phi,P,\varepsilon)$} \label{Alg:samplepermutation}
    \KwIn{a \cspformula formula $\Phi=(\+P,\+Q,\+C)$, a \pname $P\in \+P$ satisfying~\Cref{condition-sample}, and a parameter $0< \varepsilon<1$.}
    \KwOut{a random valid assignment of $\Phi$}
    Let $\Phi_1,\dots,\Phi_{K}$ be the factorization of $(\+P,\+Q,\+C\setminus \+C(P))$ where $\Phi_{K}=(\{P\},\{\+Q(P)\},\emptyset)$\;\label{line-factorize-sample}
    $p_{\max} \leftarrow \max_{C\in \+C(P)}\^P_\Phi(\neg C)$\;
    \eIf{$2\-e\Delta\leq \log(\-e^4n/\varepsilon)$ or $p_{\max}\abs{P}\Delta\leq 1/4$\label{line-if-sample}}
    {
        {$T\gets \lceil \-e^4n/\varepsilon \cdot  \log(2/\varepsilon) \rceil$}\;\label{line-rejectionsampler-time}
        \For{$j=1$ to {$T$}}
        {
            \For{$i=1$ to {$K-1$}}
            {
                $\sigma_i\gets \truncatedsampling(\Phi_i,\lceil (n/\varepsilon)\cdot \log  (2nT/\varepsilon)\rceil)$\;\label{line-rejection-in-sample-1}
            }
            $\sigma_{K} \leftarrow $ a random permutation of $\+Q(P)$ on $P$\;
            \textbf{if}{\ the concatenation of $\sigma_1,\cdots,\sigma_K$ is a solution of $\Phi$} \textbf{then} \textbf{return} the concatenation\;\label{line-break-sample}
        }
    }
    {
        \For{$i=1$ to {$K-1$}\label{line-sample-core-begin}}
        {   
            $\sigma_i\gets \truncatedsampling(\Phi_i,\lceil (n/\varepsilon)\cdot \log  (12n/\varepsilon)\rceil)$\;\label{line-rejection-in-sample-2}
        }
        $x\gets \max\set{\rho\left(\sigma{[K-1]}\right)-\abs{P}/6,0}$, $\+N\leftarrow g\left(x,\abs{P}\right)$\;
        {$T\gets \lceil 2\sqrt{2\pi\abs{P}}\cdot \textnormal{e}^3 /(1-\exp(-6\-e\ln 2\cdot  \Delta) \cdot \log (6/\varepsilon)\rceil$}\;
        \For{$j=1$ to {$T$}}
    {   
        \For{$t=1$ to {$K-1$}}
        {
            $\sigma_{t}\gets \truncatedsampling(\Phi_{t},\lceil(n/\varepsilon)\cdot \log  (6nT/\varepsilon)\rceil)$\;
        }
        
        \If{$x\leq \rho\left(\sigma{[K-1]}\right)\leq x+\abs{P}/2$}
        {
            $\Phi'\leftarrow \left(\{P\},\{\+Q(P)\},\+C\left[\sigma_{[K-1]}\right]\right), 
            \Phi'' \leftarrow \left(\{P\},\{\+Q(P)\}, \+C^1\left[\sigma_{[K-1]}\right]\right)$\;
            draw $\tau$ from $\mu_{\Phi''}$ with total variance distance {$\frac{\varepsilon}{6T}$} by calling the sampler {in~\Cref{thm-sampler-PRP-LLL}}\;\label{line-coupletau1}
            \If{$\tau$ is also a solution of $\Phi'$}
            {
                $r\sim \!{Unif}[0,1]$\;       
                estimate $\abs{\Omega_{\Phi''}}$ within multiplicative error {$1 \pm \frac{\varepsilon}{12T}$ with probability $1- \frac{\varepsilon}{6T}$} by calling the counting algorithm {in~\Cref{thm-estimator-PRP-LLL}}, and let $\widehat{N}$ be the estimation\;
                \If{$r\leq \widehat{N}/\+N$}
                {
                    \Return the concatenation of $\sigma_1,\dots,\sigma_{K-1},\tau$\;\label{line-sample-core-end}
                }
            }
        }
    }
       
    }   
    
    \Return an arbitrary valid assignment of $\Phi$ uniformly at random\;\label{line-failure-of-sample}
\end{algorithm}

\begin{lemma}\label{lemma-simplecase-repeattime}
Let integers $\Delta,n\geq 1$ and real numbers $p,\varepsilon\in (0,1)$ satisfying {$8\-ep\Delta^2 \leq 1$}.
Consider any PDC formula $\Phi = (\+P,\+Q,\+C)$ where $\Delta_{\Phi} \leq \Delta$, $p_\Phi\leq p$, $\abs{V_{\Phi}} \leq n$ that cannot be further factorized to smaller formulas with a \pname $P\in\+P$. Let $\sigma_1,\cdots,\sigma_{K}$ be independently drawn from $\mu_{\Phi_1},\cdots,\mu_{\Phi_{K}}$, $\sigma$ be the concatenation of $\sigma_1,\dots,\sigma_{K}$,
and $p_{\max} = \max_{C\in \+C(P)}\^P_\Phi[\neg C]$.
If $2\-e\Delta\leq \log(\-e^4n/\varepsilon)$ or $p_{\max}\abs{P}\Delta\leq 1/4$,
then 
$$\Pr[\sigma_{[K]}\sim \mu_{[K]}]{\sigma\in \Omega_{\Phi}} \geq \frac{\varepsilon}{\-e^4 n}.$$ 
\end{lemma}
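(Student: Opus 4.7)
The plan is to identify the target probability $Q:=\Pr_{\sigma\sim\mu_{[K]}}[\sigma\in\Omega_\Phi]$ with a conditional probability under the uniform measure $\^P_\Phi$ and then lower-bound it via a chain rule combined with Corollary~\ref{Coro:LS}. Since $\Phi_1,\dots,\Phi_K$ factorize $(\+P,\+Q,\+C\setminus\+C(P))$, the product distribution $\mu_{[K]}$ is exactly $\^P_\Phi[\,\cdot\mid E]$, where $E:=\bigwedge_{C'\in\+C\setminus\+C(P)} C'$, so $Q=\^P_\Phi[\bigwedge_{C\in\+C(P)} C\mid E]$. Fixing any ordering $C_1,\ldots,C_m$ of $\+C(P)$, I would expand $Q$ by the chain rule as
\[
Q=\prod_{i=1}^{m}\^P_\Phi\!\left[C_i\,\bigm|\,E\wedge\bigwedge_{j<i} C_j\right].
\]

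For the $i$-th factor I plan to apply Corollary~\ref{Coro:LS} to the reduced bad-event family $\+B^*_i:=\{\neg C':C'\in\+C\setminus\+C(P)\}\cup\{\neg C_j:j<i\}$; the asymmetric LLL hypothesis is inherited from $\+B_\Phi$ since $8\-e p\Delta^2\leq 1$ implies $\-e p\Delta\leq 1$ on any subset of bad events. Taking $T$ to be the at most $\Delta-1$ lopsidependent neighbors of $\neg C_i$ inside $\+B^*_i$, Corollary~\ref{Coro:LS} bounds the probability of the complementary event by
\[
\^P_\Phi\!\left[\neg C_i\,\bigm|\,E\wedge\bigwedge_{j<i}C_j\right]\leq p_{C_i}\,(1-\-e p)^{-(\Delta-1)}\leq \alpha\, p_{C_i},
\]
where $\alpha:=(1-\-e p)^{-(\Delta-1)}\leq\-e^{1/7}\leq 1.2$. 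Multiplying then gives $Q\geq\prod_{C\in\+C(P)}(1-\alpha p_C)$.

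The decisive quantitative step is a sharp bound on $\sum_{C\in\+C(P)} p_C$. Each $C\in\+C(P)$ contains some variable $v_C\in P$ with a clause $(v_C\neq c_C)$, and since $\neg C$ forces $v_C=c_C$, we have $p_C\leq\^P_\Phi[v_C=c_C]\leq 1/|P|$. Combined with the degree bound $|\+C(P)|\leq\sum_{v\in P}d(v)\leq|P|\Delta$, this collapses the sum to $\sum_{C\in\+C(P)}p_C\leq\Delta$. Using $\log(1-x)\geq -2x$ for $\alpha p_C\leq 1/2$, I would then conclude $\log Q\geq -2\alpha\sum_{C}p_C$.

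The two cases follow immediately. In Case~1, $p_{\max}|P|\Delta\leq 1/4$ gives $\sum p_C\leq|\+C(P)|p_{\max}\leq 1/4$, so $Q\geq\-e^{-\alpha/2}=\Omega(1)\geq\varepsilon/(\-e^4 n)$. In Case~2, $\sum p_C\leq\Delta$ and $2\alpha\leq 2\-e$ yield $\log Q\geq -2\-e\Delta\geq -\log(\-e^4 n/\varepsilon)$, hence $Q\geq\varepsilon/(\-e^4 n)$. The main obstacle will be the chain-rule step: one must carefully verify that the asymmetric LLL setup (and Corollary~\ref{Coro:LS}) transfers cleanly to every restricted family $\+B^*_i$. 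The other essential ingredient is the pointwise bound $p_C\leq 1/|P|$, without which the naive estimate $\sum p_C\leq|\+C(P)|\cdot p$ would be too weak to cover Case~2 once $n$ is much larger than $\log(n/\varepsilon)$.
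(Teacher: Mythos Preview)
Your proposal is correct and shares with the paper the two decisive ingredients: the pointwise bound $p_C\le 1/|P|$ for every $C\in\+C(P)$ (since $\neg C$ forces some variable of $P$ to a fixed value) and the count $|\+C(P)|\le |P|\Delta$, which together give $\sum_{C\in\+C(P)}p_C\le\Delta$. The routes diverge in how this is packaged. For Case~1 the paper does essentially what you do: it proves $\Pr_{\mu_{[K]}}[\neg C]\le 2\,\^P_\Phi[\neg C]$ via \Cref{Coro:LS} (\Cref{lemma-simplecase-constraint-cutdown}) and then takes a union bound. For Case~2 the paper instead conditions on $\sigma_{[K-1]}$, reducing to a single-permutation formula on $P$ whose constraints all have violation probability at most $1/|P|$, and applies the lopsided LLL there (\Cref{lem:trivial-ratio-bound}); this needs the side observation that one may assume $|P|\ge 2\-e\Delta$, since otherwise $p_{\max}|P|\Delta\le 2\-e p\Delta^2\le 1/4$ falls back to Case~1. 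Your chain-rule argument is more uniform: a single inequality $Q\ge\prod_{C\in\+C(P)}(1-\alpha p_C)$ covers both cases and avoids the $|P|\ge 2\-e\Delta$ detour. Your stated worry about transferring the LLL hypothesis to each restricted family $\+B^*_i$ is not a real obstacle: the condition $\-e p\Delta\le 1$ is monotone under deleting bad events, so \Cref{Coro:LS} applies verbatim to every $\+B^*_i$.
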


By \Cref{condition-sample-small}, 
the formulas $\Phi_1,\Phi_2,\cdots,\Phi_{K-1}$ are small.
Then one can draw $\sigma_i\sim \mu_{\Phi_i}$ for each $i\in [K-1]$ efficiently by the rejection sampling. 
To draw $\sigma_K\sim \mu_{\Phi_K}$ for $\Phi_K = (\{P\},\{\+Q(P)\},\emptyset)$, it is sufficient to generate a random permutation of $\+Q(P)$ on $P$.
Combined with \Cref{lemma-simplecase-repeattime}, one can draw $\sigma\sim \mu$ with probability $1 -\varepsilon$ by repeating the aforementioned subroutine
for $\poly(n,1/\varepsilon)$ rounds.
This is realized in Lines \ref{line-if-sample}-\ref{line-break-sample} of \Cref{Alg:samplepermutation}.


\subsubsection{\emph{\textbf{Challenging factorized formulas}}}

The major challenge arises in the case where $2\-e\Delta\geq \log(\-e^4n/\varepsilon)$ and $p_{\max}\abs{P}\Delta\geq 1/4$. 
In this circumstance, 
the probability that the concatenation of $\sigma_1,\cdots,\sigma_{K}$ in the aforementioned procedure is a satisfying assignment of $\Phi$ can be very small and then the sampler is no longer efficient.
Thus, the subroutine in Lines \ref{line-if-sample}-\ref{line-break-sample} fails and new ideas are needed. 
One approach to address this challenge is to design a more efficient sampler that is built upon the assignments $\sigma_1,\cdots,\sigma_{K-1}$ sampled independently from $\mu_{\Phi_1},\cdots,\mu_{\Phi_{K-1}}$.


Specifically, given $\sigma_{[K-1]}$ where $\sigma_i$ is sampled from $\mu_{\Phi_i}$ for each $i\in [K-1]$,
all constraints $C\in \+C\setminus \+C(P)$ are satisfied.
This is because
$C$ must be present in one of the formulas $\Phi_1,\Phi_2,\cdots,\Phi_{K-1}$ and is then satisfied by one of $\sigma_1,\cdots,\sigma_{K-1}$.
In addition,
each unsatisfied constraint $C\in\+C(P)$ can be simplified to a constraints $C'$ defined on $\vbl(C)\cap P$, 
since all the variables in $\vbl(C)\setminus P$ are fixed.
Let 
\begin{align*}
\+C\left[\sigma_{[K-1]}\right] \triangleq 
\left\{\text{simplification of $C$ given $\sigma_{[K-1]}$} \mid C \in \+C \text{ and $C$ is unsatisfied under $\sigma_{[K-1]}$}\right\}
\end{align*}
denote the simplified constraints $C\in \+C$ which are unsatisfied under the condition that $V_{\Phi_i}$ is assigned as $\sigma_i$ for each $i\in [K-1]$.
We remark that $\vbl(C)\subseteq P$ for each $C\in \+C\left[\sigma_{[K-1]}\right]$ by the definition of $\+C\left[\sigma_{[K-1]}\right]$.

We then define the simplified formula with the sampled satisfying assignments on $\Phi_{[K-1]}$ as follows,
\begin{align}\label{eq-define-phi-sigma-kminusone}
\Phi\left[\sigma_{[K-1]}\right] = \left(\{P\},\{\+Q(P)\},\+C\left[\sigma_{[K-1]}\right]\right).
\end{align} A natural idea for sampling $\sigma\sim \mu$ is to sample the satisfying assignment $\tau$ of $\Phi\left[\sigma_{[K-1]}\right]$ uniformly at random, and output the concatenation of $\sigma_1,\cdots,\sigma_{K-1},\tau$. However, upon closer examination, one can see that the method exhibits inherent deviation due to the non-uniform nature of the marginal distribution on $P$ induced by $\mu$. To adjust the probability, we could further apply the technique of filtration. Therefore, one can draw $\sigma\sim \mu$ 
by repeating the following subroutine until success:
\begin{itemize}
\item draw $\sigma_1,\cdots,\sigma_{K-1}$ from $\mu_{\Phi_1},\cdots,\mu_{\Phi_{K-1}}$, independently;
\item draw $\tau$ from $\mu_{\Phi\left[\sigma_{[K-1]}\right]}$ where $\Phi\left[\sigma_{[K-1]}\right]$ is defined in \eqref{eq-define-phi-sigma-kminusone};
\item accept the concatenation of $\sigma_1,\dots,\sigma_{K-1},\tau$ as $\sigma$ with probability $\abs{\Omega_{\Phi\left[\sigma_{[K-1]}\right]}}/\+N$ where $\+N$ is an upper bound of $\abs{\Omega_{\Phi\left[\sigma_{[K-1]}\right]}}$ for all possible $\sigma_{[K-1]}$.
\end{itemize}
In this sampler, the concatenation of $\sigma_1,\dots,\sigma_{K-1},\tau$ is accepted as $\sigma$ with probability proportional to $\abs{\Omega_{\Phi\left[\sigma_{[K-1]}\right]}}$. It can be verified that $\sigma$ follows the distribution of $\mu$.

However, there are still some technical issues with the aforementioned sampler. First, the assignment $\tau$ should be sampled efficiently from $\mu_{\Phi\left[\sigma_{[K-1]}\right]}$. Additionally, the upper bound $\+N$ should not be too large to avoid frequent rejections of the concatenation of $\sigma_1,\dots,\sigma_{K-1},\tau$. We collectively tackle these challenges by reducing them to the counting and sampling task for {PRP}.

Let
\begin{align}\label{eq-define-rho-phione}
\+C^1\left[\sigma_{[K-1]}\right] \triangleq \left\{C\in \+C\left[\sigma_{[K-1]}\right]  \mid \abs{\vbl(C)} = 1 \right\},\quad \rho\left(\sigma_{[K-1]}\right) =
\abs{\+C^1\left[\sigma_{[K-1]}\right]},
\end{align}
and 
\begin{align}
\Phi^1\left[\sigma_{[K-1]}\right] \triangleq \left(\{P\},\{\+Q(P)\},\+C^1\left[\sigma_{[K-1]}\right]\right).
\end{align}
Obviously, 
$\+C^1\left[\sigma_{[K-1]}\right]\subseteq \+C\left[\sigma_{[K-1]}\right]$. Since each constraint $C\in \+C^1\left[\sigma_{[K-1]}\right]$ satisfies $\vbl(C) = 1$, one can verify that $\Phi^1\left[\sigma_{[K-1]}\right]$ is exactly a random PRP.
Therefore, instead of sampling the satisfying assignment from $\mu_{\Phi\left[\sigma_{[K-1]}\right]}$, we first draw a sample $\tau$ from $\mu_{\Phi^1\left[\sigma_{[K-1]}\right]}$ by filtration, and then accept it if $\tau \in \Omega_{\Phi\left[\sigma_{[K-1]}\right]}$. In addition, this method also resolves the challenge of the upper bound estimation in the filtration by calling the counting algorithm for {PRP}.
To summarize, we now sample $\sigma\sim \mu_{\Phi}$ by repeating the following subroutine until success:
\begin{itemize}
\item  draw $\sigma_1,\cdots,\sigma_{K-1}$ from $\mu_{\Phi_1},\cdots,\mu_{\Phi_{K-1}}$, independently;
\item draw a sample from $\mu_{\Phi^1\left[\sigma_{[K-1]}\right]}$ and accept the sample as $\tau$ with probability $\abs{\Omega_{\Phi^1\left[\sigma_{[K-1]}\right]}}/\+N$, where $\+N$ is an upper bound of $\abs{\Omega_{\Phi^1\left[\sigma_{[K-1]}\right]}}$ for all possible $\sigma_{[K-1]}$;
\item if $\tau$ is also a solution of $\Phi\left[\sigma_{[K-1]}\right]$,
accept the concatenation of $\sigma_1,\dots,\sigma_{K-1},\tau$ as $\sigma$ .
\end{itemize}
One can verify that $\sigma$ sampled from the above sampler follows the distribution of $\mu$.

To make this sampler efficient,
the following conditions are also needed:
\begin{enumerate}
\item $\tau \sim \mu_{\Phi^1\left[\sigma_{[K-1]}\right]}$ can be drawn efficiently;
\item the probability that 
$\tau$ is a solution of $\Phi\left[\sigma_{[K-1]}\right]$ is
lower bounded by $1/\poly(n,1/\varepsilon)$;
\item for all possible $\Phi^1\left[\sigma_{[K-1]}\right]$ and some positive constant $c$
\begin{align}\label{eq-upperbound}
\left(\frac{n}{\varepsilon}\right)^{-c} \leq \frac{\abs{\Omega_{\Phi^1\left[\sigma_{[K-1]}\right]}}}{\+N} \leq 1.
\end{align}

\end{enumerate}
For the first item, $\tau \sim \mu_{\Phi^1\left[\sigma_{[K-1]}\right]}$ can be drawn by  
calling the sampling algorithm for PRP, recalling that $\Phi^1\left[\sigma_{[K-1]}\right]$ is exactly an instance of PRP.
For the second item, 
we have that the majority of solutions of $\Phi^1\left[\sigma_{[K-1]}\right]$ also satisfy the constraints of $\Phi\left[\sigma_{[K-1]}\right]$ by the following lemma.
The lemma is immediate by the lopsided LLL, the proof of which is deferred to~\Cref{subsec:correctness}.

\begin{lemma}\label{lemma-solution-phiprime-phiprimeprime}
Let $\Delta\geq 1$. Consider any formula $\Phi = (\{P\},\+Q,\+C)$ where $\Delta_\Phi\leq \Delta$ satisfying $\abs{P}\geq 2\-e\Delta$.
Let $\+C'$ denote $\{C\in \+C\mid \abs{\vbl(C)} = 1\}$ and $\Phi' = (\{P\},\+Q,\+C')$. It holds that
\begin{align*}
\Pr[\sigma\sim \mu_{\Phi'}]{\sigma\in \Omega_{\Phi}} \geq 1 - \frac{2\Delta}{\abs{P}-1}.
\end{align*}
\end{lemma}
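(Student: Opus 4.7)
The plan is a union bound over the violated wide constraints combined with the lopsided LLL applied to the length-1 sub-formula. Any $\sigma\in\Omega_{\Phi'}$ with $\sigma\notin\Omega_\Phi$ must violate some constraint in $\+C\setminus\+C'$, so
\begin{align*}
\Pr[\sigma\sim\mu_{\Phi'}]{\sigma\notin\Omega_\Phi}\;\leq\;\sum_{C\in\+C\setminus\+C'} \^P\!\left[\neg C \;\Big|\; \bigwedge_{C'\in\+C'} C'\right],
\end{align*}
where $\^P$ is the uniform distribution over permutations of $\+Q(P)$ on $P$, i.e.\ over the valid assignments of $\Phi$ (and of $\Phi'$).

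Next I would verify the hypothesis of the lopsided LLL for the length-1 sub-formula: every $C'\in\+C'$ has $\^P[\neg C']=1/\abs{P}$, each constraint in $\+C$ has at most $\Delta-1$ lopsidependent neighbors (and the same therefore holds inside $\+C'$), and $\abs{P}\geq 2\-e\Delta$ gives $\-e\cdot\^P[\neg C']\cdot\Delta\leq 1/2\leq 1$. Applying \Cref{prop-PDC-local-uniformity} with $A=\neg C$ for each $C\in\+C\setminus\+C'$, and observing that only the at most $\Delta-1$ constraints of $\+C'$ lopsidependent with $\neg C$ contribute a penalty, yields
\begin{align*}
\^P\!\left[\neg C \;\Big|\; \bigwedge_{C'\in\+C'} C'\right] \;\leq\; \^P[\neg C]\cdot\left(1-\frac{\-e}{\abs{P}}\right)^{-(\Delta-1)}.
\end{align*}

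For the individual probabilities, since $\abs{\vbl(C)}\geq 2$ for $C\in\+C\setminus\+C'$, the event $\neg C$ pins down at least two positions of the uniform permutation, so $\^P[\neg C]\leq 1/(\abs{P}(\abs{P}-1))$. For $\abs{\+C\setminus\+C'}$ I would use that any two constraints sharing a variable are lopsidependent, so each $v\in P$ lies in at most $\Delta$ constraints of $\+C$; double-counting gives
\begin{align*}
2\abs{\+C\setminus\+C'}\;\leq\;\sum_{C\in\+C\setminus\+C'}\abs{\vbl(C)}\;\leq\;\abs{P}\cdot\Delta.
\end{align*}

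Combining the three pieces,
\begin{align*}
\Pr[\sigma\sim\mu_{\Phi'}]{\sigma\notin\Omega_\Phi}\;\leq\;\frac{\abs{P}\Delta/2}{\abs{P}(\abs{P}-1)}\cdot\left(1-\frac{\-e}{\abs{P}}\right)^{-(\Delta-1)}.
\end{align*}
A short calculation using $\-e/\abs{P}\leq 1/(2\Delta)$ gives $(1-\-e/\abs{P})^{-(\Delta-1)}\leq \left(1+\frac{1}{2\Delta-1}\right)^{\Delta-1}\leq\-e^{1/2}<2$, so the right-hand side is at most $\Delta/(\abs{P}-1)\leq 2\Delta/(\abs{P}-1)$, as required. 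The only step that needs genuine care is confirming that the LLL correction factor is bounded by a small constant under the hypothesis $\abs{P}\geq 2\-e\Delta$; everything else is elementary counting.
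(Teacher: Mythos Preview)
Your proposal is correct and follows essentially the same approach as the paper: a union bound over $\+C\setminus\+C'$ combined with \Cref{prop-PDC-local-uniformity} to control each conditional violation probability, using $\abs{P}\geq 2\-e\Delta$ to bound the LLL correction factor by a small constant. Your version is in fact slightly sharper (you save a factor of $2$ via the double-counting bound $\abs{\+C\setminus\+C'}\leq \abs{P}\Delta/2$ and use the exponent $\Delta-1$ rather than $\Delta$), but the structure and key ideas are identical to the paper's proof.
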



Now, let's consider the last item, which requires that $\abs{\Omega_{\Phi^1\left[\sigma_{[K-1]}\right]}}/\+N$ is lower bounded by $1/\poly(n,1/\varepsilon)$.
The main challenge to this item is that the following two conditions cannot be satisfied at the same time:
\begin{itemize}
\item $\+N$ is an upper bound of $\abs{\Phi^1\left[\sigma_{[K-1]}\right]}$ for all possible $\Phi^1\left[\sigma_{[K-1]}\right]$;
\item $\abs{\Omega_{\Phi^1\left[\sigma_{[K-1]}\right]}}/\+N$ is lower bounded by $1/\poly(n,1/\varepsilon)$ for all possible $\sigma_{[K-1]}$.
\end{itemize}
Because for some certain special $\sigma_{[K-1]}$,
the set $\+C^1\left[\sigma_{[K-1]}\right]$ can be 
$\emptyset$, leading to $\abs{\Omega_{\Phi^1\left[\sigma_{[K-1]}\right]}} = \abs{P} ! $.
However, for some other $\sigma_{[K-1]}$, $\abs{\Omega_{\Phi^1\left[\sigma_{[K-1]}\right]}}$ can be as small as $\abs{P} !/\exp(-\Delta)$.

Therefore, rather than a general upper bound of $\abs{\Omega_{\Phi^1\left[\sigma_{[K-1]}\right]}}$ for all possible $\sigma_{[K-1]}$, 
we aim to obtain an estimator $\+N$ where $\+N \geq \abs{\Omega_{\Phi^1\left[\sigma_{[K-1]}\right]}}$ for most $\sigma_{[K-1]}$.
In this way, we can deal with the majority of $\sigma_{[K-1]}$ and then sample a large portion of solutions of the original formula $\Phi$ uniformly. Since the unsampled solutions originate from the negligible portion of $\sigma_{[K-1]}$, their absence only introduces a slight deviation in the distribution.
Formally, we try to obtain an estimator $\+N$
such that
\begin{align}\label{eq-condition-on-uppperbound}
\Pr[\sigma_{[K-1]}\sim \mu_{[K-1]}]{\left(\frac{n}{\varepsilon}\right)^{-c} \leq \frac{\abs{\Omega_{\Phi^1\left[\sigma_{[K-1]}\right]}}}{\+N} \leq 1}\geq  1 - 1/\poly(n,1/\varepsilon).
\end{align}

Our estimator is defined as follows.
Recall the definition of {$g(x,y)$ in \eqref{eq-define-gxy} and the definition of $\rho\left(\sigma_{[K-1]}\right)$ in \eqref{eq-define-rho-phione}.}
For any $\sigma_{[K-1]}$ where $\rho\left(\sigma_{[K-1]}\right)$ satisfies \eqref{eq-condition-on-upperbound},
the following lemma shows that $g(x,\abs{P})$ is an estimator satisfying  \eqref{eq-upperbound}.

\begin{lemma}[Estimator for the number of solutions, {restatement of \Cref{thm-estimator-num-prp}}]\label{lemma-estimation-number-solution}
Let integer $\Delta\geq 1$. Consider any PDC formula $\Phi = (\+P,\+Q,\+C)$ where $\Delta_{\Phi} \leq \Delta$ that cannot be further factorized to smaller formulas with a \pname $P\in\+P$. Assume {$\abs{P}\geq 5\Delta^2+2$}. For any satisfying assignment $\sigma_{[K-1]}$ of $\Phi_{[K-1]}$, we have 
\begin{align}\label{eqn-bounds-estimation}
     \frac{g(\rho\left(\sigma_{[K-1]}\right),\abs{P})}{\sqrt{2\pi\abs{P}} \; \-e^{2}}\leq \abs{\Omega_{\Phi^1\left[\sigma_{[K-1]}\right]}}\leq  g(\rho\left(\sigma_{[K-1]}\right),\abs{P}).
\end{align}
In addition, for any {nonnegative} integer $x$ where 
\begin{align}\label{eq-condition-on-upperbound}
\rho\left(\sigma_{[K-1]}\right) - {\abs{P}}/{2}\leq x\leq \rho\left(\sigma_{[K-1]}\right) \leq \abs{P}\Delta,
\end{align}
we have
\begin{align*}
\frac{g(x,\abs{P})}{\sqrt{2\pi \abs{P}} \;\-e^3} \leq {\abs{\Omega_{\Phi^1\left[\sigma_{[K-1]}\right]}}} \leq g(x,\abs{P}).
\end{align*}
\end{lemma}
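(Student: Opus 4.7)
The plan is to observe that $\Phi^1\left[\sigma_{[K-1]}\right]$ is precisely a PRP on the positions $P$ with value set $\+Q(P)$, and then to invoke \Cref{thm-estimator-num-prp} as a black box. Every $C \in \+C^1\left[\sigma_{[K-1]}\right]$ has the form $(v\neq c)$ with $v\in P$ and $c\in \+Q(P)$, so setting $R_v \triangleq \+Q(P)\setminus\{c:(v\neq c)\in \+C^1[\sigma_{[K-1]}]\}$ makes $\Omega_{\Phi^1[\sigma_{[K-1]}]}$ equal to the set of permutations $\tau$ of $\+Q(P)$ on $P$ satisfying $\tau(v)\in R_v$ for every $v$. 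By construction $\rho(\sigma_{[K-1]}) = \abs{\+C^1[\sigma_{[K-1]}]} = \abs{P}^2 - \sum_{v\in P}\abs{R_v}$, so this matches the $\rho$ appearing in \Cref{thm-esitimator-result}.

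The density hypothesis of \Cref{thm-estimator-num-prp} must now be checked. Any two constraints in $\+C^1\left[\sigma_{[K-1]}\right]$ that share a variable $v$ are related in $\+C$ (they share $v$), hence $\abs{P}-\abs{R_v}\leq \Delta$ for every $v\in P$. Dually, two constraints $(v_1\neq c)$ and $(v_2\neq c)$ with $v_1,v_2\in P$ are also related, since they forbid the same value inside the same permutation set, so the number of $v\in P$ with $c\notin R_v$ is at most $\Delta$ for every value $c$. Combined with the assumption $\abs{P}\geq 5\Delta^2+2$, this produces the combined lopsidependent-degree bound $\leq\sqrt{(\abs{P}-2)/5}$ used inside the proof of \Cref{thm-estimator-num-prp}. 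That theorem then gives $g(\rho,\abs{P})/(\sqrt{2\pi\abs{P}}\,\-e^{2})\leq \abs{\Omega_{\Phi^1[\sigma_{[K-1]}]}}\leq 19\,g(\rho,\abs{P})/(\sqrt{2\pi\abs{P}}\,\-e^{2})\leq g(\rho,\abs{P})$, the last inequality being valid once $\abs{P}$ is moderately large. This is exactly~\eqref{eqn-bounds-estimation}.

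For the second pair of bounds, the plan is to transfer from $\rho$ to $x$ via the monotonicity of $g$ together with a one-line $(1+1/\abs{P})^{\abs{P}}\leq \-e$ comparison. Since $g(\cdot,\abs{P})$ is decreasing in its first argument and $x\leq\rho$, one immediately has $g(\rho,\abs{P})\leq g(x,\abs{P})$, which supplies the upper bound. For the matching lower bound I would write
\[
\frac{g(x,\abs{P})}{g(\rho,\abs{P})} = \left(1 + \frac{\rho-x}{\abs{P}^{2}-\rho}\right)^{\abs{P}},
\]
and then use $\rho-x\leq \abs{P}/2$, $\rho\leq \abs{P}\Delta$, together with $\abs{P}\geq 5\Delta^2+2$, to deduce $\abs{P}^2-\rho\geq \abs{P}^2/2$. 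The base is thus at most $1+1/\abs{P}$ and the whole ratio is at most $\-e$. Substituting into the first lower bound yields $\abs{\Omega_{\Phi^1[\sigma_{[K-1]}]}}\geq g(\rho,\abs{P})/(\sqrt{2\pi\abs{P}}\,\-e^{2})\geq g(x,\abs{P})/(\sqrt{2\pi\abs{P}}\,\-e^{3})$, as required.

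The only genuinely delicate step is the density verification, where one must lift the PDC constraint-degree bound $\Delta_\Phi\leq\Delta$ through the simplification by $\sigma_{[K-1]}$ into the row- and column-degree bounds on the induced PRP matrix demanded by \Cref{thm-estimator-num-prp}; the remainder is a direct reduction followed by the short $g$-ratio estimate.
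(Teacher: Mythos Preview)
Your plan is exactly what the paper does: recognize $\Phi^1[\sigma_{[K-1]}]$ as a PRP instance, feed it through (the proof of) \Cref{thm-estimator-num-prp}, and then pass from $\rho$ to $x$ by a short $g$-ratio computation. The second half of your argument (monotonicity of $g$ and the $(1+1/\abs{P})^{\abs{P}}\le \mathrm e$ estimate) is clean and matches the paper's calculation.

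There is one imprecision in your density verification. Bounding rows and columns separately by $\Delta$ and then summing only gives a lopsidependency degree of $2\Delta$ for the induced PRP; under the hypothesis $\abs{P}\ge 5\Delta^2+2$ this yields $2\sqrt{(\abs P-2)/5}$, which is twice the bound actually used inside the proof of \Cref{thm-estimator-num-prp} (and for small $\Delta$ the inequality $6\cdot(2\Delta)+1\le\abs P$ can fail). The right move is to bound the PRP constraint degree directly by $\Delta_\Phi$: if $(v\neq c)$ and $(v'\neq c')$ are PRP-related (so $v=v'$ or $c=c'$), then any preimages $C,C'\in\+C$ satisfy $C\sim C'$; moreover, distinct simplified constraints have distinct preimages because each original constraint contributes at most one literal on $P$ and hence simplifies to at most one element of $\+C^1$. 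This gives PRP constraint degree $\le\Delta$ in one step, and now $\abs P\ge 5\Delta^2+2$ delivers exactly the $\sqrt{(\abs P-2)/5}$ bound that the adapted proof requires. With this adjustment your argument goes through.
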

Lemma \ref{lemma-estimation-number-solution}
is merely the restatement of~\Cref{thm-estimator-num-prp} and its correctness can be verified by modifying the statements in the proof of~\Cref{thm-estimator-num-prp}.
Combining with \Cref{lemma-estimation-number-solution}, to obtain an upper bound $\+N$ satisfying \eqref{eq-condition-on-uppperbound},
one only needs to obtain a random $x>0$ with probability at least $1 - 1/\poly(n,1/\varepsilon)$ such that 
\begin{align}\label{eq-condition-upperbound-on-x}
\Pr[\sigma_{[K-1]}\sim \mu_{[K-1]}]{x\leq \rho\left(\sigma_{[K-1]}\right) \leq x + {\abs{P}}/{2}}\geq  1 - 1/\poly(n,1/\varepsilon).
\end{align}
and set $\+N$ as $g(x,\abs{P})$.
Fortunately, 
$\rho\left(\sigma_{[K-1]}\right)$ where 
$\sigma_{[K-1]}\sim \mu_{[K-1]}$ exhibits good concentration.

\begin{lemma}[Concentration property of $\rho$]\label{lemma-concentration-rho}
Consider any instance satisfying Condition~\ref{condition-sample-small}, and assume {$\abs{P}\geq 1728\-e^3\ln2 \cdot k^3L^2\Delta^8$}, $2\-e\Delta\geq \log(\-e^4n/\varepsilon)$. It holds that
\begin{align}\label{eq-concentration}
\Pr[\sigma_{[K-1]}\sim \mu_{[K-1]}]{\abs{\rho\left(\sigma_{[K-1]}\right) - \E{\rho\left(\sigma_{[K-1]}\right)}}\geq \abs{P}/12} \leq {\exp(-6\-e\ln 2\cdot  \Delta)}
\end{align}
\end{lemma}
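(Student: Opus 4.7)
The plan is to apply McDiarmid's inequality (\Cref{thm-McDiarmid-inequlaity}) to $\rho(\sigma_{[K-1]})$ regarded as a function of the \emph{independent} blocks $\sigma_1,\ldots,\sigma_{K-1}$; independence is immediate since, by definition, $\mu_{[K-1]}=\mu_{\Phi_1}\times\cdots\times\mu_{\Phi_{K-1}}$ is the product distribution across the connected components of $(\+P,\+Q,\+C\setminus\+C(P))$. Writing
\[
\rho\bigl(\sigma_{[K-1]}\bigr)=\sum_{\substack{C\in\+C(P)\\|\vbl(C)\cap P|=1}}\id{\forall v\in\vbl(C)\setminus P\colon \sigma_{[K-1]}(v)=c_v},
\]
where $c_v$ denotes the value that $C$'s disjunct on $v$ forbids, makes the dependence on each block transparent: flipping $\sigma_i$ can only change the indicator of a constraint $C$ whose variables outside $P$ intersect $V_{\Phi_i}$.

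To bound the McDiarmid coefficient $c_i$, I would argue (i) the number of constraints $C\in\+C(P)$ with $|\vbl(C)\cap P|=1$ and $\vbl(C)\cap V_{\Phi_i}\ne\emptyset$ is at most $d\cdot|V_{\Phi_i}|\le \Delta\cdot|V_{\Phi_i}|$, using that any two constraints sharing a variable are related, whence $d\le\Delta$; and (ii) $|V_{\Phi_i}|\le Lk\,|\+C_i|$, since for $i<K$ the component $\Phi_i$ does not contain $P$ (the vertex $P$ is isolated once $\+C(P)$ is stripped, so $P\in\+P_K$), hence every permutation in $\+P_i$ has size at most $L$ by \Cref{condition-sample}, and $\+P_i$ is generated via the connected-component structure by $\+C_i$, with each constraint touching at most $k$ permutations. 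Combined with $|\+C_i|\le \Delta\log(n/\varepsilon)$ from \Cref{condition-sample-small}, this yields $c_i\le Lk\Delta^2\log(n/\varepsilon)$.

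For $\sum_i c_i^2$ I would use the double-counting inequality $\sum_i c_i \le (k-1)\cdot\bigl|\{C\in\+C(P):|\vbl(C)\cap P|=1\}\bigr|\le (k-1)\Delta|P|$ (each such $C$ has at most $k-1$ variables outside $P$ and is therefore charged to at most $k-1$ blocks) and combine with the per-block bound to obtain
\[
\sum_i c_i^2\le \bigl(\max_i c_i\bigr)\sum_i c_i\le Lk^2\Delta^3\log(n/\varepsilon)\,|P|.
\]
Plugging $t=|P|/12$ into McDiarmid's inequality and invoking the standing hypothesis $2\-e\Delta\ge \log(\-e^4n/\varepsilon)$, i.e.\ $\log(n/\varepsilon)\le 2\-e\Delta$, yields
\[
\Pr{\,\bigl|\rho-\E{\rho}\bigr|\ge |P|/12\,}\le 2\exp\!\Bigl(-|P|\big/\bigl(c\,Lk^2\Delta^5\bigr)\Bigr)
\]
for an explicit absolute constant $c$. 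The hypothesized lower bound $|P|\ge 1728\,\-e^3\ln 2\cdot k^3L^2\Delta^8$ is then more than enough to push this exponent past $6\-e\ln 2\cdot\Delta+\ln 2$, which absorbs the leading factor of $2$ and delivers \eqref{eq-concentration}.

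The step I expect to require the most care is the combinatorial bookkeeping for $|V_{\Phi_i}|$ and $\sum_i c_i$: one must use that $P$ lies in its own component $\Phi_K$ so that the $L$-cap from \Cref{condition-sample} applies to every permutation in the other blocks, together with the sharp block-size cap $|\+C_i|\le\Delta\log(n/\varepsilon)$ from \Cref{condition-sample-small} and the width bound $k$, while simultaneously being careful not to double-count constraints via both their variables in $P$ and their variables in $V_{\Phi_i}$. Once these inequalities are pinned down, the remaining McDiarmid calculation is routine, with the stated lower bound on $|P|$ providing ample slack for the constants.
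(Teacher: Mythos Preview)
Your proposal is correct and follows the same approach as the paper: bound each McDiarmid coefficient by $c_i\le \Delta\,|V_{\Phi_i}|\le kL\Delta^2\log(n/\varepsilon)$ using \Cref{condition-sample} and \Cref{condition-sample-small}, then apply \Cref{thm-McDiarmid-inequlaity} to the independent blocks. Your variance bound is in fact sharper than the paper's---where the paper uses $\sum_i c_i^2\le (K-1)(\max_i c_i)^2$ with $K-1\le k\Delta|P|$, your double-counting $\sum_i c_i\le (k-1)\Delta|P|$ saves a factor of roughly $kL\Delta^2\log(n/\varepsilon)$, so the hypothesized lower bound on $|P|$ (which is calibrated to the paper's cruder estimate) leaves ample slack, including for the leading factor of~$2$.
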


\Cref{lemma-concentration-rho} is proved by McDiarmid's inequality, and its proof is deferred to~\Cref{subsec:correctness}.
The intuition can be stated as follows.
Given any instance $(\Phi,P,\varepsilon)$ satisfying \Cref{condition-sample-small} where $\abs{P}\geq 1728\-e^3\ln2 \cdot k^3L^2\Delta^8$
and the factorized formulas  
$\Phi_1,\dots,\Phi_{K-1}$ are small,
each solution $\sigma_i$ of $\Phi_i$ has minor impact on $\rho\left(\sigma_{[K-1]}\right) =
\abs{\+C^1\left[\sigma_{[K-1]}\right]}$. 
While $\sigma_1,\dots,\sigma_{K-1}$ are drawn from $\mu_{\Phi_1},\dots,\mu_{\Phi_{K-1}}$ independently, $\rho\left(\sigma_{[K-1]}\right)$ concentrations around its mean with high probability.

By \Cref{lemma-concentration-rho}, one can draw $\sigma_{[K-1]}$ from  $\mu_{[K-1]}$ randomly and set $x = \max\left\{\rho\left(\sigma_{[K-1]}\right)-\abs{P}/6, 0\right\}$
and $\+N=g(x,\abs{P})$.
It is easy to verify that \eqref{eq-condition-upperbound-on-x} holds.
With $x$ and $\+N$, one can draw $\sigma$ distributed approximately as $\mu$ by repeating the following subroutine until success: 
\begin{itemize}
\item  draw $\sigma_1,\cdots,\sigma_{K-1}$ from $\mu_{\Phi_1},\cdots,\mu_{\Phi_{K-1}}$, independently.
\item if \eqref{eq-condition-on-upperbound} is satisfied, draw a sample from $\mu_{\Phi^1\left[\sigma_{[K-1]}\right]}$ and accept the sample as $\tau$ with probability $\abs{\Omega_{\Phi^1\left[\sigma_{[K-1]}\right]}}/\+N$; otherwise, skip the last step.
\item if $\tau$ is also a solution of $\Phi\left[\sigma_{[K-1]}\right]$,
accept the concatenation of $\sigma_1,\dots,\sigma_{K-1},\tau$ as $\sigma$.
\end{itemize}
By \Cref{lemma-estimation-number-solution} and \eqref{eq-condition-upperbound-on-x},
one can draw $\sigma\sim \mu_{\Phi}$ with probability $1 -\varepsilon$ by repeating the above subroutine
for $\poly(n,1/\varepsilon)$ rounds.
This sampler is realized in Lines \ref{line-sample-core-begin}-\ref{line-sample-core-end} of \Cref{Alg:samplepermutation}.

\section{Analysis of the sample ~ subroutine}\label{sec-analysis-sample}

In this section, we first analyze the $\samplepermutation(\cdot)$~subroutine for the instance satisfying Conditions~\ref{condition-sample-small}.

\subsection{The correctness of $\samplepermutation$ subroutine}\label{subsubsection-correctness}

The following is the main theorem in this section.
\begin{theorem}\label{thm-sample-correctness}
    Consider any instance satisfying Condition~\ref{condition-sample-small}. Let $p_{\max}=\max_{C\in \+C(P)}\^P_\Phi(\neg C)$.
    If either $p_{\max}\abs{P}\Delta\leq 1/4$ or $\abs{P}\geq 1728\e^3\ln2 \cdot k^3L^2\Delta^8$, the $\samplepermutation(\Phi,P,\varepsilon)$ returns a random valid assignment $\sigma$ of $\Phi$ satisfying $\dtv(\sigma,\mu_\Phi)\leq \varepsilon$ in time $\widetilde{O}(n^4/\varepsilon^2)$.
\end{theorem}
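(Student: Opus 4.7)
The plan is to split the analysis according to which branch of \Cref{Alg:samplepermutation} is taken. In both branches, the $\truncatedsampling$ calls use thresholds of order $\widetilde{O}(n/\varepsilon)$, so by \Cref{lemma-rejection-sample} each call returns a sample whose total variation from the intended $\mu_{\Phi_i}$ is at most $O(\varepsilon/(nT))$; these contributions can be aggregated at the end. When the if-condition at \Cref{line-if-sample} holds (the \easycase{} regime), \Cref{lemma-simplecase-repeattime} asserts that one independent round of $\sigma_{[K]}\sim \mu_{[K]}$ produces a satisfying assignment with probability at least $\varepsilon/(\e^4 n)$. With $T=\lceil \e^4 n/\varepsilon \cdot \log(2/\varepsilon)\rceil$, the probability that the algorithm ever reaches \Cref{line-failure-of-sample} is at most $(1-\varepsilon/(\e^4 n))^T\le \varepsilon/2$, and conditional on accepting at \Cref{line-break-sample} the output is drawn from $\mu_\Phi$ exactly (up to $\truncatedsampling$ errors), yielding $\dtv(\sigma,\mu_\Phi)\le \varepsilon$.

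For the \hardcase{} branch I plan to exploit the identity that, with exact subroutines, the per-iteration probability of producing a specific pair $(\sigma_{[K-1]},\tau)$ is
\begin{align*}
    \frac{1}{\prod_i |\Omega_{\Phi_i}|}\cdot \frac{\mathbbm{1}[\tau\in \Omega_{\Phi'[\sigma]}]\cdot \mathbbm{1}[x\le \rho(\sigma_{[K-1]})\le x+|P|/2]}{\mathcal{N}},
\end{align*}
which, once $\tau$ is summed out and the $\rho$-range event holds, is proportional to $|\Omega_{\Phi'[\sigma]}|$, matching $\mu_\Phi$. Thus conditional on success, the output deviates from $\mu_\Phi$ only through the excluded event $\{\rho(\sigma_{[K-1]})\notin[x,x+|P|/2]\}$. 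Since the algorithm's hypothesis forces $|P|\ge 1728\e^3\ln 2\cdot k^3 L^2 \Delta^8$ and $2\e\Delta\ge \log(\e^4 n/\varepsilon)$, \Cref{lemma-concentration-rho} bounds this event by $\exp(-6\e\ln 2\cdot \Delta)\le \varepsilon/(\e^4 n)$. For the per-iteration success probability, I combine three ingredients: \Cref{lemma-concentration-rho} gives the $\rho$-range event with probability $\ge 1-\exp(-6\e\ln 2\cdot \Delta)$; \Cref{lemma-solution-phiprime-phiprimeprime} gives $\tau\in\Omega_{\Phi'[\sigma]}$ with conditional probability $\ge 1-2\Delta/(|P|-1)$; and \Cref{lemma-estimation-number-solution} together with the counter's $(1\pm \varepsilon/(12T))$-accuracy forces $\widehat{N}/\mathcal{N}\ge |\Omega_{\Phi''[\sigma]}|/(2\mathcal{N})\ge 1/(2\sqrt{2\pi|P|}\,\e^3)$. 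Multiplying, each iteration succeeds with probability $\Omega(1/\sqrt{|P|})$, so the chosen $T=\lceil 2\sqrt{2\pi|P|}\,\e^3/(1-\exp(-6\e\ln 2\cdot \Delta))\cdot \log(6/\varepsilon)\rceil$ drives the probability of reaching \Cref{line-failure-of-sample} below $\varepsilon/6$.

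The remaining work is to assemble the total variation bound and the runtime. For TV, I will aggregate (i) the failure-to-accept event ($\le \varepsilon/6+\varepsilon/(\e^4 n)$), (ii) the per-iteration $\truncatedsampling$ errors ($O(\varepsilon/T)$ in aggregate), (iii) the PRP sampler error $\varepsilon/(6T)$ per invocation from \Cref{thm-sampler-PRP-LLL}, and (iv) the counter deviation from \Cref{thm-estimator-PRP-LLL}, which makes $\widehat{N}/\mathcal{N}$ differ from $|\Omega_{\Phi''[\sigma]}|/\mathcal{N}$ by at most $\varepsilon/(12T)$ with probability $1-\varepsilon/(6T)$. A triangle-inequality argument over the $T$ trials keeps the net TV contribution at $\varepsilon/2$, combining with the failure term to give $\dtv(\sigma,\mu_\Phi)\le \varepsilon$. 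For runtime, $T=\widetilde O(\sqrt{n})$ iterations each perform $K-1$ rejection samplings of total cost $\widetilde O(n^2/\varepsilon)$, one PRP sampler call of cost $\widetilde O(n^2)$, and one PRP counter call of cost $\widetilde O(n^2 T^2/\varepsilon^2)=\widetilde O(n^3/\varepsilon^2)$; summing over the $T$ rounds yields the claimed $\widetilde O(n^4/\varepsilon^2)$.

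The principal obstacle is the TV accounting in the hard branch, because the ideal acceptance probability $|\Omega_{\Phi''[\sigma]}|/\mathcal{N}$ gets replaced by the random $\widehat{N}/\mathcal{N}$, and one must verify that this perturbation composes additively across the $T$ iterations rather than amplifying multiplicatively with the PRP sampler error on $\tau$ and the $\rho$-concentration defect. Handling this cleanly requires bounding the output distribution conditional on each of the sampler/counter events succeeding, establishing a coupling between the idealized and actual per-iteration output distributions, and then applying the triangle inequality for TV across $T$ independent trials together with the union bound for counter-failure events.
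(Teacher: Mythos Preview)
Your plan mirrors the paper's strategy closely: split on the branch of \Cref{Alg:samplepermutation}, handle the easy branch via \Cref{lemma-simplecase-repeattime} and \Cref{lem-rejectionsample-small-factorized}, and in the hard branch combine \Cref{lemma-concentration-rho}, \Cref{lemma-solution-phiprime-phiprimeprime}, and \Cref{lemma-estimation-number-solution} with a coupling to an idealized procedure. The paper formalizes the hard case by introducing an explicit idealized sampler (\Cref{Alg:idealsamplepermutation}), proving a stand-alone error bound for it (\Cref{lem:analysis-idealsample}), and then coupling \Cref{Alg:samplepermutation} to it; you sketch the same ingredients without naming the idealized algorithm.

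There is, however, one substantive gap in your hard-branch TV accounting. You assert that ``conditional on success, the output deviates from $\mu_\Phi$ only through the excluded event $\{\rho(\sigma_{[K-1]})\notin[x,x+|P|/2]\}$'' and then invoke \Cref{lemma-concentration-rho} to bound this event by $\exp(-6\e\ln 2\cdot\Delta)\le\varepsilon/(\e^4 n)$. But \Cref{lemma-concentration-rho} bounds the excluded event under the \emph{product} measure $\mu_{[K-1]}$, whereas the TV contribution to $\dtv(\cdot,\mu_\Phi)$ is the $\mu_\Phi$-mass of satisfying assignments whose $\sigma_{[K-1]}$-projection is excluded. Passing between the two incurs a factor equal to the max-to-min ratio of $|\Omega_{\Phi[\sigma_{[K-1]}]}|$ over all $\sigma_{[K-1]}$, which by \Cref{lemma-estimation-number-solution} and \Cref{lemma-solution-phiprime-phiprimeprime} is of order $\sqrt{2\pi|P|}\cdot\exp(3+2\Delta)$. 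This is exactly the content of the paper's \Cref{lem:analysis-idealsample}, whose error term reads $(1-\alpha)\cdot 2\sqrt{2\pi|P|}\cdot\exp(3+2\Delta)+\varepsilon/6$; the strong decay $\exp(-6\e\ln 2\cdot\Delta)$ (rather than merely $\exp(-2\e\ln 2\cdot\Delta)$) is precisely what is needed to absorb the $\exp(2\Delta)$ blow-up and the $\sqrt{|P|}$ factor under the hypothesis $2\e\Delta\ge\log(\e^4 n/\varepsilon)$. Once you insert this amplification step your argument goes through; as written, the claimed bound ``$\le\varepsilon/(\e^4 n)$'' on the TV contribution is unjustified.
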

As mentioned in~\Cref{subsection-sample}, the algorithmic techniques differ for the \easycase and the \hardcase. These differences are captured by Lemmas~\ref{lem-sample-correctness-caseI} and \ref{lem-sample-correctness-case2}, respectively, which imply~\Cref{thm-sample-correctness} immediately. Before that, we first analyze the performance of the rejection sampling algorithm on the factorized formula, which serves as a basic subroutine throughout the sampling algorithm.
\begin{lemma}\label{lem-rejectionsample-small-factorized}
    Consider any instance satisfying Condition~\ref{condition-sample-small}. For each $i\in [K-1]$, we have
    \begin{align*}
        \^P_{\Phi_i}\left[\bigwedge_{C\in \+C_i} C \right]\geq \frac{\varepsilon}{n}.
    \end{align*}
\end{lemma}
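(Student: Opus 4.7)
The plan is to apply the lopsided Lov\'asz Local Lemma (\Cref{prop-PDC-local-uniformity}) directly to the sub-formula $\Phi_i$ and then combine the resulting product bound with the size bound from \Cref{condition-sample-small}. First I would verify that the LLL hypothesis transfers from $\Phi$ to $\Phi_i$. Since $\Phi_i$ is obtained as a connected component of the hypergraph representation of $(\+P,\+Q,\+C\setminus \+C(P))$, each permutation set appearing in $\+P_i$ is a complete element of $\+P$; hence the marginal of $\^P_\Phi$ on $V_{\Phi_i}$ coincides with $\^P_{\Phi_i}$. This gives $\^P_{\Phi_i}[\neg C]=\^P_\Phi[\neg C]\leq p$ for every $C\in \+C_i$, while the lopsidependency graph of $\Phi_i$ is an induced subgraph of that of $\Phi$, so $\Delta_{\Phi_i}\leq \Delta$. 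Combined with $8\-e p\Delta^2\leq 1$ from \Cref{condition-sample}, we get $\-e p_{\Phi_i}\Delta_{\Phi_i}\leq 1/(8\Delta)\leq 1$, so \Cref{prop-PDC-local-uniformity} applies.

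Next I would chain the resulting bound with $|\+C_i|\leq \Delta\log(n/\varepsilon)$ from \Cref{condition-sample-small}. \Cref{prop-PDC-local-uniformity} yields
\[
\^P_{\Phi_i}\!\left[\bigwedge_{C\in\+C_i} C\right] \;\geq\; \prod_{C\in\+C_i}\bigl(1-\-e\,\^P_{\Phi_i}[\neg C]\bigr) \;\geq\; (1-\-e p)^{|\+C_i|}.
\]
Using $1-x\geq \exp(-2x)$ on $[0,1/2]$, applicable because $\-e p\leq 1/(8\Delta^2)\leq 1/2$, together with $2\-e p\Delta\leq 1/(4\Delta)$, gives
\[
(1-\-e p)^{\Delta\log(n/\varepsilon)} \;\geq\; \exp\!\bigl(-2\-e p\,\Delta\log(n/\varepsilon)\bigr) \;\geq\; \exp\!\left(-\tfrac{1}{4\Delta}\log(n/\varepsilon)\right) \;=\; (\varepsilon/n)^{1/(4\Delta)} \;\geq\; \varepsilon/n,
\]
where the final step uses $\varepsilon/n\leq 1$ and $\Delta\geq 1$.

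The argument is essentially a one-line computation once the local lemma is invoked. The only subtlety worth verifying carefully is the invariance $\^P_{\Phi_i}[\neg C]=\^P_\Phi[\neg C]$, which is what lets us quote the global parameter $p$ for the sub-formula. This relies on the structural fact that factorization in $H_\Phi$ never splits a permutation hyperedge across components, so the relevant marginals agree. After that is in hand, the slack built into $8\-e p\Delta^2\leq 1$ comfortably absorbs the logarithmic constraint count $\Delta\log(n/\varepsilon)$ prescribed by \Cref{condition-sample-small}.
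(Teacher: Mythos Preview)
Your proposal is correct and follows essentially the same approach as the paper: apply \Cref{prop-PDC-local-uniformity} to $\Phi_i$, bound each factor by $(1-\-e p)$, and absorb the $|\+C_i|\le \Delta\log(n/\varepsilon)$ exponent. The only cosmetic difference is the last step---the paper uses Bernoulli's inequality $(1-\-e p)^{\Delta}\ge 1-\-e p\Delta\ge 1/2$ and then raises to $\log(n/\varepsilon)$, whereas you use $1-x\ge e^{-2x}$; both are fine, and your extra paragraph justifying $p_{\Phi_i}\le p$ and $\Delta_{\Phi_i}\le\Delta$ makes explicit what the paper leaves tacit.
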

\begin{proof}
    By \Cref{condition-sample-small}, we have $\abs{\+C_i}\leq \sampleblocksize$ for each $i\in [K-1]$.
    Thus, we have
    \begin{align*}
        \^P_{\Phi_i}\left[\bigwedge_{C\in \+C_i} C \right]&\geq \prod_{C\in \+C_i }
        \left(1- \-e\^P_{\Phi_i}[\neg C] \right)\geq \left(1-\e p\right)^{\Delta\log(n/\varepsilon)}\geq 2^{-\log(n/\varepsilon)}=\frac{\varepsilon}{n},
    \end{align*} where the first inequality holds by~\Cref{prop-PDC-local-uniformity}, and the last inequality holds by~\Cref{equality-ab} and $2\-ep\Delta\leq 1$.
\end{proof}

\subsubsection{\emph{\textbf{Analysis for the \easycase.}}}
Suppose one of the following conditions holds: $2\-e\Delta\leq \log(\e^4n/\varepsilon)$ or $p_{\max}\abs{P}\Delta\leq 1/4$ where $p_{\max}=\max_{C\in \+C(P)}\^P_\Phi(\neg C)$. As stated in ~\Cref{subsection-sample}, the {accept-reject sampler exhibits a high acceptance rate, as shown in ~\Cref{lemma-simplecase-repeattime}. Combining with ~\Cref{lem-rejectionsample-small-factorized}, we obtain the following lemma.

\begin{lemma}\label{lem-sample-correctness-caseI}
    Consider any instance satisfying \Cref{condition-sample-small}, and suppose any of the following conditions hold: $2\-e\Delta\leq \log(\e^4n/\varepsilon)$ or $p_{\max}\abs{P}\Delta\leq 1/4$ where $p_{\max}=\max_{C\in \+C(P)}\^P_\Phi(\neg C)$.
    The $\samplepermutation(\Phi,P,\varepsilon)$ returns a random valid assignment $\sigma$ of $\Phi$ satisfying $\dtv(\sigma,\mu)\leq \varepsilon$.
    {Furthermore, the time complexity of $\samplepermutation(\Phi,P,\varepsilon)$ is $\widetilde{O}\left( n^2/\varepsilon^2\right)$}. 
\end{lemma}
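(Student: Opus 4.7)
The plan is to split the argument into a total variation bound and a runtime estimate. For the TV bound, I would track two failure modes—(i) some inner $\truncatedsampling$ call terminates at its failure line (Line~\ref{line:fail}), and (ii) no outer round accepts—and show that, conditioned on neither occurring, the output is distributed exactly as $\mu_\Phi$.

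For failure mode (i), I would invoke \Cref{lem-rejectionsample-small-factorized}: each trial of $\truncatedsampling(\Phi_i,\cdot)$ produces a satisfying assignment with probability at least $\varepsilon/n$, so with $T':=\lceil(n/\varepsilon)\log(2nT/\varepsilon)\rceil$ trials the per-call failure probability is at most $(1-\varepsilon/n)^{T'}\le \varepsilon/(2nT)$. A union bound over the at most $(K-1)T\le nT$ such calls then yields an aggregate inner-failure probability of at most $\varepsilon/2$. For failure mode (ii), I would condition on every inner call succeeding; by \Cref{lemma-rejection-sample} each returned $\sigma_i$ is then exactly $\mu_{\Phi_i}$-distributed, while $\sigma_K$ is a uniformly random permutation of $\+Q(P)$ on $P$, so each round's concatenation is an independent draw from $\mu_{[K]}$. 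Now \Cref{lemma-simplecase-repeattime}, invoking the hypothesis $2\-e\Delta\le \log(\-e^4 n/\varepsilon)$ or $p_{\max}\abs{P}\Delta\le 1/4$, guarantees acceptance probability at least $\varepsilon/(\-e^4 n)$ per round, so the probability that no round accepts is at most $(1-\varepsilon/(\-e^4 n))^T\le \exp(-\log(2/\varepsilon))=\varepsilon/2$.

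Combining these two estimates requires the key structural observation that the constraints in $\+C\setminus\+C(P)$ partition exactly among $\+C_1,\ldots,\+C_{K-1}$, so every element of $\Omega_{\Phi_1}\times\cdots\times\Omega_{\Phi_K}$ automatically satisfies them; thus membership of a $\mu_{[K]}$-sample in $\Omega_\Phi$ is equivalent to also satisfying every constraint in $\+C(P)$, and since $\mu_{[K]}$ is uniform on the product set, its restriction to $\Omega_\Phi$ equals $\mu_\Phi$ exactly. A standard coupling that replaces failure outputs by independent $\mu_\Phi$-samples then yields $\dtv(\sigma,\mu_\Phi)\le \varepsilon$. For the runtime, each $\truncatedsampling(\Phi_i,\cdot)$ trial costs $O(|V_{\Phi_i}|+k|\+C_i|)$ and has success probability at least $\varepsilon/n$, so its expected cost is $\widetilde{O}((n/\varepsilon)(|V_{\Phi_i}|+k|\+C_i|))$; summing over $i$ and absorbing $\poly(k,\Delta)$ factors into $\widetilde{O}$ gives $\widetilde{O}(n^2/\varepsilon)$ per outer round, and since early termination at Line~\ref{line-break-sample} triggers in $O(n/\varepsilon)$ rounds in expectation, the total expected running time is $\widetilde{O}(n^2/\varepsilon^2)$. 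The main obstacle will be the identification of $\mu_{[K]}$ restricted to $\Omega_\Phi$ with $\mu_\Phi$ itself—one must check carefully that the factorization is taken over $\+C\setminus\+C(P)$ rather than $\+C$, so that the only ``residual'' constraints carved out by conditioning are precisely those in $\+C(P)$; once this identification is in place, everything else reduces to two Bernoulli-tail estimates.
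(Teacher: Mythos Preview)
Your proposal is correct and follows essentially the same approach as the paper: both arguments bound the inner $\truncatedsampling$ failures via \Cref{lem-rejectionsample-small-factorized} (yielding $\varepsilon/2$ by a union bound over at most $nT$ calls) and the outer non-acceptance via \Cref{lemma-simplecase-repeattime} (yielding another $\varepsilon/2$ after $T$ rounds), then combine by coupling; the paper phrases this as comparing against an \emph{idealized rejection sampler} rather than directly conditioning on no failures, but the two formulations are equivalent. One small arithmetic slip: your per-round cost $\widetilde{O}(n^2/\varepsilon)$ times $O(n/\varepsilon)$ expected rounds gives $\widetilde{O}(n^3/\varepsilon^2)$, not $\widetilde{O}(n^2/\varepsilon^2)$---the paper likewise asserts $\widetilde{O}(n^2/\varepsilon^2)$ without details, so this does not distinguish your argument from theirs.
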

\begin{proof}
    We compare $\samplepermutation(\Phi,P,\varepsilon)$ with the idealized rejection sampler, which repeats the following process for $T\gets \lceil \e^4n/\varepsilon \cdot  \log(2/\varepsilon) \rceil$ times until success:
    \begin{itemize}
        \item draw $\sigma'_1,\cdots,\sigma'_{K}$ from $\mu_{\Phi_1},\cdots,\mu_{\Phi_{K}}$, independently;
        \item accept the concatenation of $\sigma'_1,\dots,\sigma'_{K}$ as $\sigma'$ if the concatenation is a solution of $\Phi$.
    \end{itemize}
    {According to Lemmas~\ref{lemma-rejection-sample} and~\ref{lemma-simplecase-repeattime}, we have $\dtv(\sigma',\mu_\Phi)\leq {\varepsilon}/{2}$.}    

    To prove that $\dtv(\sigma,\mu_\Phi)\leq \varepsilon$, {it suffices to show that $\dtv(\sigma,\sigma')\leq {\varepsilon}/{2}$ by establishing a coupling according to the ~\Cref{prop:coupling}.} 
    Throughout the execution of $\samplepermutation(\Phi,P,\varepsilon)$, let $\+B_j$ denote the event that there are some $i\in[K-1]$ such that $\sigma_i$ is returned at~\Cref{line:fail} in $\truncatedsampling(\Phi_i,\lceil (n/\varepsilon)\cdot \log  (2nT/\varepsilon)\rceil)$ at the iteration $j$ for each $j\in [T]$, and $\+B\triangleq \bigcup_{j\in [T]}\+B_j$. 
    The random assignment $\sigma$ and $\sigma'$ can be coupled perfectly if $\+B$ does not occur. Consequently, the coupling error can be bounded by the probability of the event $\+B$. According to Lemma~\ref{lem-rejectionsample-small-factorized},~\ref{lemma-rejection-sample}, and the union bound, for each iteration $j\in [T]$,
    \begin{align*}
        \Pr{\+B_j}\leq \sum_{i\in [K-1]} \left(1-\frac{\varepsilon}{n}\right)^{\lceil (n/\varepsilon)\cdot \log  (2nT/\varepsilon) \rceil}\leq \sum_{i\in [K-1]} \frac{\varepsilon}{2nT}\leq \frac{\varepsilon}{2T}.
    \end{align*}
    Therefore, the probability of $\+B$ can be bounded by $\varepsilon/2$.
    
    {One can verify that the running time of the algorithm is $\widetilde{O}\left( n^2/\varepsilon^2\right)$.}
\end{proof}


\subsubsection{\emph{\textbf{Analysis for the \hardcase.}}}
In this part, we analyze $\samplepermutation(\Phi,P,\varepsilon)$ subroutine given that {$2\-e\Delta\geq \log(\-e^4n/\varepsilon)$, and $p_{\max}\abs{P}\Delta> 1/4$} where $p_{\max} = \max_{C\in \+C(P)}\^P_\Phi[\neg C]$.}
The main result is the following lemma.

\begin{lemma}\label{lem-sample-correctness-case2}
    Consider instances satisfying Condition~\ref{condition-sample-small}, and assume {$\abs{P}\geq 1728\-e^3\ln2 \cdot k^3L^2\Delta^8, 2\-e\Delta\geq \log(\-e^4n/\varepsilon), p_{\max}\abs{P}\Delta> 1/4$} where $p_{\max} = \max_{C\in \+C(P)}\^P_\Phi[\neg C]$. The $\samplepermutation(\Phi,P,\varepsilon)$ returns a random valid assignment $\sigma$ of $\Phi$ satisfying $\dtv(\sigma,\mu)\leq \varepsilon$.
    {The {time complexity} of $\samplepermutation(\Phi,P,\varepsilon)$ is $\widetilde{O}\left(n^{4}/{\varepsilon^2} \right)$.}
\end{lemma}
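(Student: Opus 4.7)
The plan is to bound $\dtv(\sigma,\mu_\Phi)\le\varepsilon$ by comparing the hard-case branch of $\samplepermutation(\Phi,P,\varepsilon)$ (Lines~\ref{line-sample-core-begin}--\ref{line-sample-core-end}) against an idealized variant that uses exact draws from every $\mu_{\Phi_i}$ and from $\mu_{\Phi''}$, and the exact count $\abs{\Omega_{\Phi''}}$ in place of the estimator $\widehat{N}$. I would decompose the total error into (i) the TV gap between the idealized output (conditioned on at least one successful outer iteration) and $\mu_\Phi$; (ii) the probability that all $T$ idealized iterations fail; (iii) the coupling error from the truncated rejection samplings on the $\Phi_i$; and (iv) the error introduced by the approximate PRP sampler and counter of \Cref{thm-sampler-PRP-LLL} and \Cref{thm-estimator-PRP-LLL}. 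Each of (i)--(iv) will be driven to at most $\varepsilon/6$ so that the total is at most $\varepsilon$.

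First I would analyze the idealized sampler with $x$ and $\+N=g(x,\abs{P})$ fixed from the initial draw $\sigma'_{[K-1]}$. One idealized iteration emits a specific pair $(s,\tau)$ with $\tau\in\Omega_{\Phi'(s)}$ with joint probability
\[
\mu_{[K-1]}(s)\cdot\id{\rho(s)\in[x,x+\abs{P}/2]}\cdot\frac{1}{\abs{\Omega_{\Phi''(s)}}}\cdot\frac{\abs{\Omega_{\Phi''(s)}}}{\+N}=\frac{\mu_{[K-1]}(s)\,\id{\rho(s)\in[x,x+\abs{P}/2]}}{\+N},
\]
which is constant over all such pairs. Hence, conditioned on a successful iteration, the idealized output is uniform on $\{(s,\tau):s\in\Omega_{[K-1]},\,\rho(s)\in[x,x+\abs{P}/2],\,\tau\in\Omega_{\Phi'(s)}\}$, i.e., exactly $\mu_\Phi$ restricted to the range event. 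Applying \Cref{lemma-concentration-rho} to both the initial $\sigma'_{[K-1]}\sim\mu_{[K-1]}$ (which defines $x$) and to a fresh $\sigma_{[K-1]}\sim\mu_{[K-1]}$, both $\rho$-values lie within $\abs{P}/12$ of $\E{\rho}$ outside an event of probability at most $2\exp(-6\-e\ln 2\cdot\Delta)$, which forces $\rho(\sigma_{[K-1]})$ into $[x,x+\abs{P}/2]$. Transferring this bound to the $\sigma_{[K-1]}$-marginal of $\mu_\Phi$ via the Radon--Nikodym ratio $\abs{\Omega_{\Phi'(s)}}\propto\mu_\Phi(s)/\mu_{[K-1]}(s)$, and invoking \Cref{lemma-estimation-number-solution} and \Cref{lemma-solution-phiprime-phiprimeprime} to show that this ratio varies by at most a constant factor over the concentration window, bounds (i) by $\ll\varepsilon/6$ given $2\-e\Delta\ge\log(\-e^4n/\varepsilon)$.

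For (ii), when the range condition holds, \Cref{lemma-estimation-number-solution} gives $\abs{\Omega_{\Phi''(s)}}/\+N\ge 1/(\sqrt{2\pi\abs{P}}\,\-e^3)$, and \Cref{lemma-solution-phiprime-phiprimeprime} gives $\abs{\Omega_{\Phi'(s)}}\ge(1-2\Delta/(\abs{P}-1))\abs{\Omega_{\Phi''(s)}}$, so the per-iteration acceptance probability is at least $(1-\exp(-6\-e\ln 2\cdot\Delta))/(\sqrt{2\pi\abs{P}}\,\-e^3)$ up to lower-order factors; the chosen $T=\lceil 2\sqrt{2\pi\abs{P}}\,\-e^3\log(6/\varepsilon)/(1-\exp(-6\-e\ln 2\cdot\Delta))\rceil$ then drives the failure probability below $\varepsilon/6$. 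For (iii), \Cref{lem-rejectionsample-small-factorized} ensures each $\truncatedsampling(\Phi_i,\cdot)$ trial succeeds with probability at least $\varepsilon/n$, so by \Cref{lemma-rejection-sample} a union bound over all $O(TK)$ calls with truncation thresholds $\lceil(n/\varepsilon)\log(6nT/\varepsilon)\rceil$ contributes at most $\varepsilon/6$. For (iv), the $T$ calls to the approximate PRP sampler contribute $T\cdot\varepsilon/(6T)=\varepsilon/6$ in TV; the $T$ calls to the approximate counter fail with total probability at most $\varepsilon/6$, and when successful their multiplicative $1\pm\varepsilon/(12T)$ error perturbs the acceptance ratio $\widehat N/\+N$, and hence the output distribution, by at most $O(\varepsilon)$ in total. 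Summing yields $\dtv(\sigma,\mu_\Phi)\le\varepsilon$.

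For the running time, each outer iteration performs $K-1=O(n)$ truncated rejection-sampling calls of cost $\widetilde{O}(n/\varepsilon)$, one PRP-sampler call of cost $\widetilde{O}(\abs{P}^2)$ by \Cref{thm-sampler-PRP-LLL}, and one PRP-counter call of cost $\widetilde{O}(\abs{P}^2\,T^2/\varepsilon^2)$ by \Cref{thm-estimator-PRP-LLL}. Since $T=\widetilde{O}(\sqrt{\abs{P}})$ under the hypotheses, the per-iteration cost is dominated by the counter at $\widetilde{O}(\abs{P}^3/\varepsilon^2)$; summed over $T$ iterations this totals $\widetilde{O}(\abs{P}^{3.5}/\varepsilon^2)\le\widetilde{O}(n^4/\varepsilon^2)$. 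The main obstacle will be the calibration of the $\abs{P}/6$ offset in $x$ against the $\abs{P}/2$ width of the acceptance window: the window must be wide enough that the initial draw, a fresh draw from $\mu_{[K-1]}$, and the $\sigma_{[K-1]}$-marginal of $\mu_\Phi$ all concentrate inside, yet narrow enough that \Cref{lemma-estimation-number-solution} still yields the $\Omega(1/\sqrt{\abs{P}})$ lower bound on $\abs{\Omega_{\Phi''}}/\+N$; the transfer of concentration from $\mu_{[K-1]}$ to the $\sigma_{[K-1]}$-marginal of $\mu_\Phi$ in step (i), which requires tightly controlling the Radon--Nikodym ratio via Lemmas~\ref{lemma-estimation-number-solution} and \ref{lemma-solution-phiprime-phiprimeprime}, is the most delicate point.
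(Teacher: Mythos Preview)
Your proposal is correct and follows essentially the same approach as the paper: the paper packages your idealized sampler as a standalone subroutine (\Cref{Alg:idealsamplepermutation}) with an explicit error bound (\Cref{lem:analysis-idealsample}) that corresponds to your items (i)--(ii), and then couples $\samplepermutation$ to it conditioned on the failure events you list in (iii)--(iv) together with the event that $x$ lands in the correct window. One small correction to your sketch of (i): the Radon--Nikodym ratio $\abs{\Omega_{\Phi'(s)}}$ does \emph{not} vary by a constant factor but by roughly $\sqrt{2\pi\abs{P}}\,\e^{3+2\Delta}$ over all of $\Omega_{[K-1]}$, and you need the global bound since the bad event sits outside the concentration window; the point is that the tail $\exp(-6\e\ln 2\cdot\Delta)$ still absorbs this blow-up under the hypothesis $2\e\Delta\ge\log(\e^4 n/\varepsilon)$.
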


{
In the subsequent discussion, we frequently refer to the instances satisfying the following condition.}
\begin{condition}\label{cond:estimation}
    Let integer $\Delta\geq 1$.
    Consider any PDC formula $\Phi = (\+P,\+Q,\+C)$ where $\Delta_{\Phi} \leq \Delta$ that cannot be further factorized to smaller formulas with a \pname $P\in\+P$. Let $\Phi_1,\dots,\Phi_{K}$ be the factorization of $(\+P,\+Q,\+C\setminus \+C(P))$ where $\Phi_{K}=(\{P\},\{\+Q(P)\},\emptyset)$, and $\Phi_i=(\+P_i,\+Q_i,\+C_i)$ for each $i\in [K-1]$.
\end{condition}

Recall the definitions of $\Phi\left[\sigma_{[K-1]}\right]$, $\Phi^1\left[\sigma_{[K-1]}\right]$, $\rho\left(\sigma_{[K-1]}\right)$ and $g(x,y)$ in \Cref{subsection-sample}. 
Consider any instance satisfying~\Cref{cond:estimation}. We use $\Omega_{[K-1]}$ to denote the set of satisfying assignments $\sigma_{[K-1]}$ for $\Phi_{[K-1]}$. 
Given a subset of satisfying assignments $\sigma_{[K-1]}$ of $\Phi_{[K-1]}$, denoted as $\widehat{\Omega}$, we introduce the following idealized rejection sampler~\Cref{Alg:idealsamplepermutation}, which is parameterized by $\widehat{\Omega}$, a positive integer $\+N$, and real numbers $\alpha,\beta \in [0,1]$. This sampler is designed for the convenience of our analysis.
We will prove \Cref{lem-sample-correctness-case2} by comparing \Cref{Alg:idealsamplepermutation} with
\Cref{Alg:samplepermutation}. 

\begin{algorithm}
\caption{$\idealsamplepermutation(\Phi,P,\varepsilon)$} \label{Alg:idealsamplepermutation}
    \KwIn{a \cspformula formula $\Phi=(\+P,\+Q,\+C)$, a \pname $P\in \+P$ satisfying~\Cref{cond:estimation}, and a parameter $0< \varepsilon<1$.}
    \KwOut{a random valid assignment of $\Phi$}
    Let $\Phi_1,\dots,\Phi_{K}$ be the factorization of $(\+P,\+Q,\+C\setminus \+C(P))$ where $\Phi_{K}=(\{P\},\{\+Q(P)\},\emptyset)$\;
    $T\gets \lceil{2/(\alpha\beta)\cdot \log (6/\varepsilon)\rceil}$\;
    \For{$j=1$ to {$T$}}
    {   
        draw $\sigma_1,\cdots,\sigma_{K-1}$ from $\mu_{\Phi_1},\cdots,\mu_{\Phi_{K-1}}$ independently and uniformly at random;
        
        \If{$\sigma_{[K-1]}\in \widehat{\Omega}$}
        {
            $\Phi'\leftarrow \left(\{P\},\{\+Q(P)\},\+C\left[\sigma_{[K-1]}\right]\right), 
            \Phi'' \leftarrow \left(\{P\},\{\+Q(P)\}, \+C^1\left[\sigma_{[K-1]}\right]\right)$\;
            draw a satisfying assignment $\tau$ of the formula $\Phi''$\;\label{line-coupletau2}
            \If{$\tau \in \Omega_{\Phi'}$}
            {
                $r\sim \!{Unif}[0,1]$\;     
                \If{$r\leq \abs{\Omega_{\Phi''}}/\+N$}
                {
                    \Return the concatenation of $\sigma_1,\dots,\sigma_{K-1},\tau$ as $\sigma$\;\label{line-ideal-sample-core-end}
                }
            }
        }
    }    
    \Return an arbitrary valid assignment $\sigma$ of $\Phi$\;\label{line-ideal-failure-of-sample}
\end{algorithm}

The following lemma bounds the error of the idealized sampler in ~\Cref{Alg:idealsamplepermutation}.
\begin{lemma}\label{lem:analysis-idealsample}
    Consider any instance satisfying~\Cref{cond:estimation}. Assume {$\-e p_\Phi\Delta\leq 1$ and $\abs{P}\geq 5\Delta^2+2$}. Given any {non-empty subset $\widehat{\Omega}\subseteq\Omega_{[K-1]}$}, a positive integer $\+N$, and $\alpha,\beta\in [0,1]$ such that for any $\sigma_{[K-1]}\in \widehat{\Omega}$
    \begin{align}\label{eqn-ideal-input-condition}
         \frac{\abs{\widehat{\Omega}}}{\abs{\Omega_{[K-1]}}}\geq \alpha, \ \ \ 
        \beta\leq \frac{\abs{\Omega_{\Phi^1\left[\sigma_{[K-1]}\right]}}}{\+N}\leq 1,
    \end{align}
    the $\idealsamplepermutation(\Phi,P,\epsilon)$ returns a random valid assignment of $\Phi$ satisfying 
    $$\dtv(\sigma,\mu)\leq \left(1-\alpha\right)\cdot 2 \sqrt{2\pi \abs{P}} \cdot \exp(3+2\Delta) + \varepsilon/6.$$
\end{lemma}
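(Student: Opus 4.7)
The plan is to compare the output distribution of $\idealsamplepermutation$ with the uniform distribution on $\widehat{\Omega}' := \set{\sigma\in\Omega_\Phi : \sigma_{[K-1]}\in \widehat{\Omega}}$, and then separately bound the total variation distance between this auxiliary uniform distribution and $\mu_\Phi$.

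First I would verify that, conditioned on a given iteration succeeding, the output is exactly uniform on $\widehat{\Omega}'$. The probability of returning any particular tuple $(\sigma_1,\ldots,\sigma_{K-1},\tau)$ with $\sigma_{[K-1]}\in \widehat{\Omega}$ and $\tau\in\Omega_{\Phi'[\sigma_{[K-1]}]}$ in a single iteration equals
\[
\frac{1}{\abs{\Omega_{[K-1]}}} \cdot \frac{1}{\abs{\Omega_{\Phi''[\sigma_{[K-1]}]}}} \cdot \frac{\abs{\Omega_{\Phi''[\sigma_{[K-1]}]}}}{\+N} \;=\; \frac{1}{\abs{\Omega_{[K-1]}}\+N},
\]
independent of the tuple. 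Summing over $\widehat{\Omega}'$ gives per-iteration success probability $p_{\text{one}} = \abs{\widehat{\Omega}'}/(\abs{\Omega_{[K-1]}}\+N)$. Invoking~\Cref{lemma-solution-phiprime-phiprimeprime} (applicable since $\abs{P}\ge 5\Delta^2+2\ge 2\-e\Delta$) together with $\abs{\Omega_{\Phi''[\sigma_{[K-1]}]}}/\+N\ge\beta$ and $\abs{\widehat{\Omega}}/\abs{\Omega_{[K-1]}}\ge\alpha$, one obtains $p_{\text{one}}\ge(1-2\Delta/(\abs{P}-1))\alpha\beta\ge\alpha\beta/2$. With $T=\lceil 2/(\alpha\beta)\log(6/\varepsilon)\rceil$ iterations, the overall failure probability is then at most $(1-p_{\text{one}})^T\le\varepsilon/6$.

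The main analytic step is bounding $\dtv(\mathrm{Unif}(\widehat{\Omega}'),\mu_\Phi) = 1 - \abs{\widehat{\Omega}'}/\abs{\Omega_\Phi}$. Writing $N(\tau) := \abs{\Omega_{\Phi'[\tau]}}$, this ratio equals $\sum_{\tau\notin\widehat{\Omega}} N(\tau)/\sum_{\tau\in\Omega_{[K-1]}}N(\tau)$, which is at most $(1-\alpha)\cdot \max_\tau N(\tau)/\min_\tau N(\tau)$ by a crude numerator/denominator bound. Combining~\Cref{lemma-solution-phiprime-phiprimeprime} with~\Cref{lemma-estimation-number-solution} yields the sandwich
\[
(1-2\Delta/(\abs{P}-1))\cdot\frac{g(\rho(\tau),\abs{P})}{\sqrt{2\pi\abs{P}}\,\-e^2}\;\le\; N(\tau)\;\le\; g(\rho(\tau),\abs{P}),
\]
so $\max N/\min N$ is at most $\sqrt{2\pi\abs{P}}\,\-e^2/(1-2\Delta/(\abs{P}-1))$ times the oscillation of $g(\rho,\abs{P})$ over $\rho\in[0,\abs{P}\Delta]$. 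Since $\rho\mapsto(1-\rho/\abs{P}^2)^{\abs{P}}$ is decreasing in $\rho$, this oscillation equals $(1-\Delta/\abs{P})^{-\abs{P}}$; the inequality $\ln(1-x)\ge -x-x^2$ together with $\Delta/\abs{P}\le 1/(5\Delta)\le 1/5$ gives $(1-\Delta/\abs{P})^{-\abs{P}}\le\exp(\Delta+1/5)$, and the hypothesis $\abs{P}\ge 5\Delta^2+2$ bounds $(1-2\Delta/(\abs{P}-1))^{-1}$ by $3/2$. Collecting these constants loosely yields $\max N/\min N \le 2\sqrt{2\pi\abs{P}}\exp(3+2\Delta)$, as required.

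Putting the pieces together via $\dtv(\sigma,\mu_\Phi)\le\Pr{\text{failure}}+\dtv(\mathrm{Unif}(\widehat{\Omega}'),\mu_\Phi)$ yields the claimed bound. The main obstacle is the third step: carefully tracking the factor $\sqrt{2\pi\abs{P}}\exp(3+2\Delta)$ through the two-sided $g$-estimates requires the size assumption $\abs{P}\ge 5\Delta^2+2$ to tame $(1-\Delta/\abs{P})^{-\abs{P}}$, and one must be mindful that $N(\tau)$ is only sandwiched through the auxiliary quantity $\abs{\Omega_{\Phi''[\tau]}}$ rather than estimated directly; the remaining steps reduce to standard bookkeeping.
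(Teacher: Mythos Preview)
Your proposal is correct and follows essentially the same route as the paper: define the ``good'' set $\widehat{\Omega}'$ (the paper calls it $\widehat{\Omega}_0$), verify that conditioned on success the output is uniform on it by telescoping the acceptance probability to $1/(\abs{\Omega_{[K-1]}}\+N)$, bound the failure probability via $p_{\text{one}}\ge\alpha\beta/2$ using \Cref{lemma-solution-phiprime-phiprimeprime}, and bound $\mu(\Omega_\Phi\setminus\widehat{\Omega}')$ by $(1-\alpha)\cdot\max_\tau N(\tau)/\min_\tau N(\tau)$ using the two-sided estimates from \Cref{lemma-estimation-number-solution} together with \Cref{lemma-solution-phiprime-phiprimeprime}. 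Your constant tracking for $(1-\Delta/\abs{P})^{-\abs{P}}$ via $\ln(1-x)\ge -x-x^2$ is in fact a bit sharper than the paper's (which uses $(1-x)^{-1}\le e^{2x}$ to get $e^{2\Delta}$), but both sit comfortably inside the stated bound.
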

\begin{proof}
    Let $\widehat{\Omega}_0$ be the subset of concatenation of satisfying assignments $\sigma_1,\sigma_2,\cdots,\sigma_{K}$ of $\Phi_{K}$ such that $\sigma[K-1]\in \widehat{\Omega}$. In~\Cref{Alg:idealsamplepermutation}, we can see that $\sigma\in \widehat{\Omega}_0$ when it is returned at ~\Cref{line-ideal-sample-core-end}.    
    The probability that the algorithm outputs $\sigma$ is proportional to 
    \begin{align*}
       \frac{\abs{\widehat{\Omega}}}{\abs{\Omega_{{[K-1]}}}}\cdot {\abs{\widehat{\Omega}}^{-1}}\cdot {\abs{\Omega_{\Phi^1\left[\sigma_{[K-1]}\right]}}}^{-1}\cdot \abs{\Omega_{\Phi\left[\sigma_{[K-1]}\right]}}\cdot {\abs{\Omega_{\Phi\left[\sigma_{[K-1]}\right]}}^{-1}}\cdot \frac{\abs{\Omega_{\Phi^1\left[\sigma_{[K-1]}\right]}}}{\+N}=\frac{1}{\abs{\Omega_{{[K-1]}}}\cdot \+N},
    \end{align*} which implies that $\sigma$ follows the uniform distribution over all satisfying assignments in $\widehat{\Omega}_0$, conditioned on $\sigma$ being returned at ~\Cref{line-ideal-sample-core-end}.
    
    Let $E_1$ be the collection of satisfying assignments in $\Omega \setminus \widehat{\Omega}_0$, and $E_2$ be the event that $\sigma$ is returned at ~\Cref{line-ideal-failure-of-sample} in ~\Cref{Alg:idealsamplepermutation}. By the coupling argument with~\Cref{prop:coupling}, one can verify that 
    \begin{align}\label{eqn-dtv-ideal-decomposition}
        \dtv(\sigma,\mu)\leq \mu(E_1)+\Pr{E_2},
    \end{align} {where $\mathbf{Pr}$ is the randomness in ~\Cref{Alg:idealsamplepermutation}.} Therefore, it boils down to establishing the upper bounds for $\mu(E_1)$ and $\Pr{E_2}$.

    
    
    We define $A$ and $B$ as the maximum and minimum sizes of $\Omega_{\Phi\left[\sigma_{[K-1]}\right]}$, respectively, as follows:
    \begin{align*}
        A\triangleq \max_{\sigma_{[K-1]}\in \Omega_{[K-1]}} \abs{\Omega_{\Phi\left[\sigma_{[K-1]}\right]}}, \quad B\triangleq \min_{\sigma_{[K-1]}\in \Omega_{[K-1]}} \abs{\Omega_{\Phi\left[\sigma_{[K-1]}\right]}}.
    \end{align*}
    According to ~\Cref{lemma-estimation-number-solution}, we have
    \begin{align*}
        A\leq \max_{\sigma_{[K-1]}\in \Omega_{[K-1]}} \abs{\Omega_{\Phi^1\left[\sigma_{[K-1]}\right]}}\leq g(0,\abs{P}).
    \end{align*} 
    Meanwhile, by Lemma~\ref{lemma-solution-phiprime-phiprimeprime} and~\ref{lemma-estimation-number-solution}, we have
    \begin{align*}
        B\geq  \min_{\sigma_{[K-1]}\in \Omega_{[K-1]}} \abs{\Omega_{\Phi^1\left[\sigma_{[K-1]}\right]}}\cdot\frac{\abs{P}-1-2\Delta}{\abs{P}-1}\geq  g(\abs{P}{\Delta},\abs{P})\cdot \frac{\abs{P}-1-2\Delta}{\abs{P}-1}\cdot \frac{1}{\sqrt{2\pi \abs{P}} \;\-e^2}.
    \end{align*}
    Therefore, it holds that
    \begin{align*}
        \mu_{\Phi}(E_1) &=  \left({\sum_{\sigma_{[K-1]}\in \Omega_{{[K-1]}}\setminus \widehat{\Omega}}\abs{\Omega_{\Phi\left[\sigma_{[K-1]}\right]}}}\right)\bigg/\left({\sum_{\sigma_{[K-1]}\in \Omega_{{[K-1]}}} \abs{\Omega_{\Phi\left[\sigma_{[K-1]}\right]}}}\right)\\
        &\leq  \left({\sum_{\sigma_{[K-1]}\in \Omega_{{[K-1]}}\setminus \widehat{\Omega}}A}\right)\bigg/\left({\sum_{\sigma_{[K-1]}\in \Omega_{{[K-1]}}} B}\right)\\
        &\leq \left(1-\frac{\abs{\widehat{\Omega}}}{\abs{\Omega_{{[K-1]}}}}\right)\cdot \frac{\abs{P}-1}{\abs{P}-1-2\Delta}\cdot \sqrt{2\pi \abs{P}} \;\-e^3\cdot \left(1-\frac{\Delta}{\abs{P}}\right)^{-\abs{P}}\\
        &\leq \left(1-\alpha\right)\cdot 2\cdot \sqrt{2\pi \abs{P}} \cdot \exp(3+2\Delta) \tag{by $\frac{\abs{P}-1}{\abs{P}-1-2\Delta}\leq 2$ and~\Cref{equality-ab}}.
    \end{align*}
    
    We then consider the probability of the event $E_2$.
    Note that in each iteration of the loop $j$, the algorithm outputs $\sigma$ at ~\Cref{line-ideal-sample-core-end} with probability
    \begin{align*}
         \frac{\abs{\widehat{\Omega}}}{\abs{\Omega_{\Phi_{[K-1]}}}}\cdot \frac{\abs{\Omega_{\Phi\left[\sigma_{[K-1]}\right]}}}{\abs{\Omega_{\Phi^1\left[\sigma_{[K-1]}\right]}}}\cdot \frac{\abs{\Omega_{\Phi^1\left[\sigma_{[K-1]}\right]}}}{\+N} \geq \frac{\alpha\beta}{2},
    \end{align*} which holds {by ~\Cref{lemma-solution-phiprime-phiprimeprime} and $\frac{\abs{P}-1-2\Delta}{\abs{P}-1}\geq \frac{1}{2}$}.
    Consequently, we have
    \begin{align*}
        \Pr{E_2}\leq \left(1-\frac{\alpha\beta}{2}\right)^{\lceil{2/(\alpha\beta)\cdot \log (6/\varepsilon)\rceil}}\leq \varepsilon/6.
    \end{align*}
    
Plugging the upper bound of $\mu(E_1)$ and $\Pr{E_2}$ in \eqref{eqn-dtv-ideal-decomposition}, the proof is immediate.
\end{proof}

Now we are ready to finish the proof of~\Cref{lem-sample-correctness-case2}.

\begin{proof}[Proof of ~\Cref{lem-sample-correctness-case2}]
    We analyze $\samplepermutation(\Phi, P,\varepsilon)$ by coupling \Cref{Alg:samplepermutation} and \Cref{Alg:idealsamplepermutation}.
    

    Let $B_1$ be the event that there exists some $i\in[K-1]$ such that $\sigma_i$ is returned at ~\Cref{line:fail} in the subroutine {$\truncatedsampling(\Phi_i,\lceil (n/\varepsilon)\cdot \log  (12n/\varepsilon)\rceil)$}, some $t\in [K-1]$ such that $\sigma_{t}$ is returned at ~\Cref{line:fail} in the subroutine {$\truncatedsampling(\Phi_{t},\lceil(n/\varepsilon)\cdot \log  (6nT/\varepsilon)\rceil)$}, some $j\in [T]$ such that
    $$\widehat{N}\notin\left[(1-\frac{\varepsilon}{12T})\cdot \abs{\Omega_{\Phi''}},(1+\frac{\varepsilon}{12T})\cdot \abs{\Omega_{\Phi''}}\right]$$
    in $j$-th iteration. According to Lemma~\ref{lem-rejectionsample-small-factorized} and union bound, one can verify that $\Pr{B_1}\leq  5\varepsilon/12$. In addition, if the event $B_1$ does not occur, we have $\sigma_{[K-1]}\sim \mu_{[K-1]}$ by~\Cref{lemma-rejection-sample}, and the estimation of $\abs{\Omega_{\Phi''}}$ with accuracy guarantee.
    
    We also define
    $B_2$ as the event that 
    \begin{align*}
         x\notin \left[\E[\sigma_{[K-1]}\sim \mu_{[K-1]}]{\rho\left(\sigma_{[K-1]}\right)}-\abs{P}/4,\E[\sigma_{[K-1]}\sim \mu_{[K-1]}]{\rho\left(\sigma_{[K-1]}\right)}-\abs{P}/12\right],
    \end{align*}
    By~\Cref{lemma-concentration-rho}, we have {$\Pr{B_2\mid \ol{B_1}}\leq \exp(-6\-e\ln 2\cdot  \Delta)$}. Given that the event $B_2$ does not occur, for any $\sigma_{[K-1]}$ such that 
    \begin{align*}
        \abs{\rho\left(\sigma_{[K-1]}\right)-\E[\sigma_{[K-1]}\sim \mu_{[K-1]}]{\rho\left(\sigma_{[K-1]}\right)}}\leq \abs{P}/12,
    \end{align*}
    it holds that 
    \begin{align*}
        \rho\left(\sigma_{[K-1]}\right)-\abs{P}/2\leq x\leq \rho\left(\sigma_{[K-1]}\right),
    \end{align*} and $\abs{\Omega_{\Phi^1\left[\sigma_{[K-1]}\right]}}/\+N\geq \sqrt{2\pi \abs{P}}\-e^3$ by~\Cref{lemma-estimation-number-solution}.
    Again, by~\Cref{lemma-concentration-rho}, we have 
    \begin{align*}
        \Pr[\sigma_{[K-1]}\sim \mu_{[K-1]}]{x\leq \rho\left(\sigma{[K-1]}\right)\leq x+\abs{P}/2}\geq {1-\exp(-6\-e\ln 2\cdot  \Delta)}.
    \end{align*}

    Then we establish a coupling between $\samplepermutation(\Phi,P,\varepsilon)$ and $\idealsamplepermutation(\Phi,P,\varepsilon)$ given that the events $B_1$ and $B_2$ do not occur. Let $\widehat{\Omega}= \set{\sigma_{[K-1]}\in \Omega_{\Phi_{[K-1]}}\mid x\leq \rho\left(\sigma{[K-1]}\right)\leq x+\abs{P}/2}$, $\+N=g(x,\abs{P})$, {$\alpha=1-\exp(-6\-e\ln 2\cdot  \Delta)$}, and $\beta={1}/{(\sqrt{2\pi \abs{P}} \;\-e^3)}$. One can verify that these satisfiying the conditions in~\Cref{lem:analysis-idealsample} by the aforementioned properties given that $B_2$ does not occur.    
    Let $\tau_1$ be the assignment at~\Cref{line-coupletau1} in $\samplepermutation(\Phi,P,\varepsilon)$, and $\tau_2$ be the assignment at~\Cref{line-coupletau2} in $\idealsamplepermutation(\Phi,P,\varepsilon)$. The coupling is specified as follows:
    \begin{itemize}
        \item share the randomness of assignment $\sigma_{[K-1]}$ and $r$;
        \item couple $\tau_1,\tau_2$ by the optimal coupling if $r\leq \left(1-\frac{\varepsilon}{12T}\right)\cdot \frac{\abs{\Omega_{\Phi''}}}{\+N}$ or $r\geq \left(1+\frac{\varepsilon}{12T}\right)\cdot \frac{\abs{\Omega_{\Phi''}}}{\+N}$.
    \end{itemize}
    One can verify that the coupling errors come from the event that $\tau_1\neq\tau_2$ and 
    $$r\in \left[(1-\frac{\varepsilon}{12T})\cdot \abs{\Omega_{\Phi''}}\big/{\+N},(1+\frac{\varepsilon}{12T})\cdot \abs{\Omega_{\Phi''}}\big/{\+N}\right].$$
    By~\Cref{prop:coupling}, given that the event $B_1$ and $B_2$ do not occur, we have $$\dtv(\samplepermutation(\Phi,P,\varepsilon),\idealsamplepermutation(\Phi,P,\varepsilon))\leq \frac{\varepsilon}{3}.$$
    
    Combining all these with~\Cref{lem:analysis-idealsample}, we have 
    \begin{align*}
        \dtv(\samplepermutation(\Phi,P,\varepsilon),\mu_\Phi)&\leq \frac{11\varepsilon}{12} +{\exp(-6\-e\ln 2\cdot  \Delta)\cdot 2 \sqrt{2\pi \abs{P}} \cdot \exp(3+2\Delta)+\exp(-6\-e\ln 2\cdot  \Delta)}\\
        &\leq \frac{11\varepsilon}{12} + 2\exp(-6\-e\ln 2\cdot  \Delta)\cdot 2 \sqrt{2\pi n} \cdot \exp(3+2\Delta)\\
        &\leq \varepsilon. \tag{by $2\-e\Delta\geq \log(\-e^4n/\varepsilon)$}
    \end{align*}

    {According to~Theorems~\ref{thm-sampler-PRP-LLL} and~\ref{thm-estimator-PRP-LLL}, one can verify that the running time of~\Cref{Alg:samplepermutation} is $\widetilde{O}(n^4/\varepsilon^2)$.}
\end{proof}

\section{Rapid mixing of the idealized permutation-wise Glauber dynamics}\label{sec:rapid-mixing}

Given any \csppformula formula $\Phi=(\+P,\+Q,\+C)$ with width $k=k_\Phi$, constraint degree $\Delta=\Delta_\Phi$, and a decomposition $\+P'$ of $\+P$, we show that the idealized permutation-wise Glauber dynamics 
in \Cref{sec-state-comp-decomp}
is rapid mixing in an LLL-like regime in this section. 
\begin{lemma}\label{lemma-rapidmixing}
    The following holds for any sufficiently large $q_{\min}$ and some constant $c>0$.
   Given any PDC formula $\Phi=(\+P,\+Q,\+C)$ where $q\geq q_{\min}$ with decomposition $\+P'$ satisfying \Cref{condition-state-compression} and $\varepsilon\in (0,1)$, if $c pk^{128}\Delta^{192} \leq 1$, then the idealized permutation-wise Glauber dynamics $M_{\mathrm{Glauber}}$ defined in \Cref{sec-state-comp-decomp} satisfies 
    \begin{align}\label{eq-mixing-time}
        \tmix(M_{\mathrm{Glauber}},\varepsilon)\leq \left\lceil 2n\log\frac{n}{\varepsilon} \right\rceil.
    \end{align}
    
    Moreover, assume $\Phi$ is $(k,q)$-uniform and the decomposition $\+P'$ satisfies $\abs{\abs{P} - r}= O(1)$ for each $P \in \+P'$ and some integer $r$. If $k\geq 24$ and $c k^{12}\Delta^{16} \leq r^{k}$, then \eqref{eq-mixing-time} holds. 
\end{lemma}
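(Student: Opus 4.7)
The plan is to apply the path coupling framework (\Cref{Lemma:pathcoupling}) on the compressed state space $\Omega[\Phi,\+P']$ equipped with a suitable pre-metric. A natural choice is to declare two compressed states $\+Q^{(1)},\+Q^{(2)}\in\Omega[\Phi,\+P']$ to be adjacent, with weight $1$, if they differ at exactly one permutation set $P\in\+P$, i.e.\ $\+Q^{(1)}(P')=\+Q^{(2)}(P')$ for every $P'\in\+P'\setminus\+P'[P]$. By \Cref{lemma-property-nu} and the unique stationarity (\Cref{lemma-unique-distribution}), a single-step contraction on this pre-metric will transfer to the global metric, and then $\tmix(M_{\mathrm{Glauber}},\varepsilon)\le\lceil (1/\alpha)\log(d_{\max}/\varepsilon)\rceil$ where $d_{\max}\le n$; choosing the contraction factor $\alpha\ge 1/(2n)$ yields the stated bound.

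For the one-step coupling on an edge $(\+Q^{(1)},\+Q^{(2)})$ differing only at $P_\star$, I would pick the same permutation set $P\in\+P$ in both chains and then couple the two resamplings of $\+Q'(\+P'[P])$ optimally. There are two main cases. First, when $P=P_\star$ (probability $1/m\ge 1/n$), the chains can be coupled so the discrepancy vanishes with probability $1-o(1)$: the two resamplings use the same conditional factorized formula $\Phi'$ by \Cref{lemma-property-nupq}, and a single call to the marginal sampler succeeds in drawing the same $\sigma$ on both sides except on a tiny set of ``bad'' assignments of the neighboring permutations, which by \Cref{prop-PDC-local-uniformity} and the lopsided LLL condition $cpk^{128}\Delta^{192}\le 1$ have total probability $\le 1/\poly(q)$. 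Second, when $P\ne P_\star$, the two resamplings see different conditional formulas $\Phi'_1,\Phi'_2$, and the coupling may introduce a new discrepant permutation set among $\+P'[P]$; I would bound this expected increase via the path-coupling edge-weight telescoping.

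The crux is to prove the expected increase in the number of discrepant permutation sets caused by a single resampling is at most $1-\alpha$ per step for $\alpha=\Theta(1/n)$. Here the key technical device is the \emph{inactive vertex} notion from \Cref{sec-tech-samper-pcc}: a vertex $v$ is inactive if every constraint touching $v$ is satisfied under both current assignments, so any local disagreement at $v$ does not propagate through $\+C$. The two conditional formulas $\Phi'_1,\Phi'_2$ differ only on constraints that touch $P_\star$, so the only ``active'' part of $P$'s resampling is the at most $\Delta$ constraints linking $P$ to $P_\star$. Using the correlated factorization of \Cref{sec-breaking-large-components}, the marginal distribution of $\sigma|_{\+P'[P]}$ under $\Phi'_1$ and under $\Phi'_2$ have total variation at most the aggregate probability that one of these $O(\Delta)$ constraints is violated, which by \Cref{Coro:LS} is $\le O(p\Delta)\cdot(1-\-ep\Delta)^{-O(\Delta)}=O(pk\Delta)$. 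Thus the expected number of newly discrepant sub-blocks per resampling is $O(pk\Delta)$, and summing over the $\Delta$ neighbors of $P_\star$ in the factor hypergraph, the expected increase is at most $O(pk\Delta^2)$, which under $cpk^{128}\Delta^{192}\le 1$ is much less than $1$.

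Putting these together gives the contraction
\[
\E[d(X_{t+1},Y_{t+1})\mid X_t,Y_t]\le d(X_t,Y_t)\cdot\Bigl(1-\tfrac{1}{m}+\tfrac{O(pk\Delta^2)}{m}\Bigr)\le \Bigl(1-\tfrac{1}{2n}\Bigr)d(X_t,Y_t),
\]
which, via \Cref{Lemma:pathcoupling} with $d_{\max}\le n$, yields $\tmix(\varepsilon)\le \lceil 2n\log(n/\varepsilon)\rceil$. For the uniform case, each $P\in\+P'$ has size $\Theta(r)$ so the constraints simplified under a partial assignment behave uniformly with width-$k$ disjunctive conditions; the analogous calculation replaces $p$ by the effective violation probability $\le q^{-k}$ after the decomposition, and the condition $ck^{12}\Delta^{16}\le r^{k}$ suffices to keep the expected increase bounded. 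The principal obstacle I anticipate is controlling the contribution of long internal discrepancy chains inside a single large permutation $P_\star$ when $|P_\star|$ is large: one must rigorously argue, using the lopsided LLL and Bregman-type bounds (\Cref{thm-Bregman}), that the overwhelming majority of vertices within $P_\star$ are inactive, so that the effective number of ``reachable'' sub-blocks remains $O(\Delta)$ rather than $|\+P'[P_\star]|$; the rest of the argument is a careful but standard telescoping over path-coupling edges.
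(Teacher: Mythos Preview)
Your high-level plan (path coupling plus inactive vertices) matches the paper's, but two concrete choices in your proposal would not go through as stated.

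\textbf{The pre-metric is too coarse.} You declare two compressed states adjacent when they differ at an entire permutation set $P_\star\in\+P$. Under that convention, ``adjacent'' states may have \emph{completely different} partitions $\+Q^{(1)}(\+P'[P_\star])$ and $\+Q^{(2)}(\+P'[P_\star])$ of $\+Q(P_\star)$, so the set of constraints that are satisfied by the domain restriction can differ on up to $\Theta(\lvert P_\star\rvert\cdot d)$ constraints, not on $O(\Delta)$ as you assert. Your sentence ``the at most $\Delta$ constraints linking $P$ to $P_\star$'' conflates the lopsidependency degree $\Delta$ (a per-constraint quantity) with the number of constraints joining two permutation sets; the latter is not bounded by $\Delta$. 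Consequently the TV bound $O(pk\Delta)$ and the aggregate bound $O(pk\Delta^2)$ are unsupported: for a large $P_\star$ the conditional laws $\nu_P^{\+Q^{(1)}}$ and $\nu_P^{\+Q^{(2)}}$ can be far apart for many $P$, and summing those TV distances over $P\ne P_\star$ need not be $o(1)$. The paper avoids this by working on an \emph{extended} space $\+V\supseteq\Omega[\Phi,\+P']$ (where the sub-domains need not partition $\+Q(P)$) with the \emph{element-level} pre-metric $\Dis(\+Q'_1,\+Q'_2)=\sum_{P'\in\+P'}\lvert\+Q'_1(P')\setminus\+Q'_2(P')\rvert$; adjacent states then differ in a single element $c_1\leftrightarrow c_2$ in one sub-block $\pinit$, so only the $\le 2\Delta$ constraints forbidding $c_1$ or $c_2$ at some $v\in P$ can change status. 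That is what makes the subsequent percolation controllable.

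\textbf{The coupling step is where the exponents come from.} Your last paragraph acknowledges the real obstacle (discrepancy inside a large permutation can spread to all of $\+P'[P]$), but the fix is not ``a careful telescoping''. The paper constructs an explicit adaptive coupling (\Cref{Alg:coupling}) that reveals sub-block domains one at a time via $\cpperm$/$\initialcouple$, distinguishes large versus small $P$ through parameters $\theta,\gamma,\alpha$, and tracks propagation by \emph{witness sequences} over $\pzeta\cup\csmall$ (\Cref{def-witness}). The contraction $\sum_P\+D_P\le 1/2$ is obtained by bounding $\Pr[\text{witness occurs}]$ via the supermartingale $\rho(s)$ and then summing $\rho(s)f(s)$ over all witnesses (Lemmas~\ref{lemma-prob-witness}--\ref{lemma-refutation-long-path-small-pfinal}); the parameter choice $\theta=1/8$, $\gamma=\Theta(k^{16}d^{12}\Delta^{12})$, $p=O(k^{-128}d^{-96}\Delta^{-96})$ is exactly what produces the $cpk^{128}\Delta^{192}\le 1$ condition, and the uniform case uses $\theta=1/2$, $\gamma=r+O(1)$ to get $ck^{12}\Delta^{16}\le r^k$. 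None of this machinery is replaceable by a single TV-distance estimate, because the inactive-vertex idea only pays off \emph{after} one localizes the difference to a single element and then argues, sub-block by sub-block, that with high probability the revealed domain avoids $\Lambda(\+C_t,P_i)$ (\Cref{lemma-trivial-coupling}, \Cref{lemma-cpstep-failprob}).
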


In the subsequent discussion, the proofs of some lemmas are deferred to~\Cref{sec:missingproofs}.

\subsection{Mixing time}\label{subsec:rapidmixing} 
As we mentioned in~\Cref{lemma-unique-distribution}, the permutation-wise Glauber dynamics has the unique stationary distribution $\nu$. In this section, we apply the path coupling argument to establish the rapid mixing property of the idealized permutation-wise Glauber dynamics, subject to the condition stated in \Cref{lemma-rapidmixing}.
Given any \csppformula formula $\Phi=(\+P,\+Q,\+C)$ with a decomposition $\+P'=\left(P'_1,\cdots,P'_t\right)$ of $\+P$, 
define
\[
\+V\triangleq\left\{ \+Q'=\left(Q'_1,Q'_2,\cdots,Q'_t\right)\mid \forall i\in [t], \abs{Q'_i} =\abs{P'_i} \text{ and }  Q'_i\subseteq \+Q(P) \text{ for the unique $P\in \+P$ where } P'_i\subseteq P\right\}.
\]
By definition, the set $\+V$ contains all decompositions of $\+Q$ from $\Omega[\Phi,\+P']$. In subsequent discussion, we instead analyze the extended permutation-wise Glauber dynamics $\widetilde{M}_{\mathrm{Glauber}}$ on $\+V$. Given $\+Q'\in \+V$, we pick $P\in \+P$ uniformly at random, sample $\sigma\sim \mu_{\Phi'}$ where $\Phi'=(\+P'\circ P,\+Q'\circ \+Q(P),\+C)$, and update its domains according to $\sigma$. One can verify that the transition of $\widetilde{M}_{\mathrm{Glauber}}$ on $\Omega[\Phi,\+P']$ is consistent with ${M}_{\mathrm{Glauber}}$ and the stationary distribution of $\widetilde{M}_{\mathrm{Glauber}}$ is $\nu$.

As mentioned in~\Cref{subsec:pathcoupling}, to establish the rapid mixing property of the idealized permutation-wise Glauber dynamics, it suffices to define a pre-metric on $\+V$ and construct the couplings between transitions of $\widetilde{M}_{\mathrm{Glauber}}$ conditioned on pairs in the pre-metric that exhibit contraction. 

\subsection*{Pre-metric} 
We define the pre-metric $\+G=(\+V,\+E)$ where
\[
\+E \triangleq \left\{\set{\+Q'_1,\+Q'_2}\mid \+Q'_1,\+Q'_2\in \+V \text{ and } \sum_{P'\in \+P'}\abs{\+Q'_1(P')\setminus\+Q'_2(P')}=1\right\}.
\]

By \Cref{def-pre-metric-brief}, 
one can verify that the weighted graph $\+G = (\+V,\+E)$,
where each $\set{\+Q'_1,\+Q'_2}\in \+E$ is associated with weight $1$, is a pre-metric on $\+V$. Furthermore, the weighted shortest path distance metric induced by $\+G$ can be explicitly defined as the discrepancy between configurations in $\+V$, which is determined by evaluating the number of distinct elements within the domains of the \pnames in $\+P'$. This can be defined as follows:
\begin{align}\label{eq-def-dis}
\forall \+Q'_1,\+Q'_2\in \+V, S\subseteq \+P',\quad  \Dis(\+Q'_1,\+Q'_2,S) \triangleq \sum_{P'\in S}\abs{\+Q'_1(P')\setminus\+Q'_2(P')}.
\end{align}

With a slight abuse of notation, let $\Dis(\+Q'_1,\+Q'_2,P)\triangleq  \Dis(\+Q'_1,\+Q'_2,\+P'[P])$ for any \pname $P\in \+P$, and $\Dis(\+Q'_1,\+Q'_2)\triangleq\Dis(\+Q'_1,\+Q'_2,\+P')$. 


\begin{lemma}\label{lem:distancemetric}
For any $\+Q'_1,\+Q'_2\in \+V$, $\Dis(\+Q'_1,\+Q'_2)$ is the weighted shortest path distance on $\+G$.
\end{lemma}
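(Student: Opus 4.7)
The plan is to sandwich the weighted shortest path distance from both sides by $\Dis(\+Q'_1,\+Q'_2)$. For the lower bound, I would first verify that $\Dis$ satisfies the triangle inequality on $\+V$ by arguing coordinatewise: for any $\+R_1,\+R_2,\+R_3\in\+V$ and any $P'\in\+P'$, the set containment $\+R_1(P')\setminus\+R_3(P')\subseteq(\+R_1(P')\setminus\+R_2(P'))\cup(\+R_2(P')\setminus\+R_3(P'))$ gives
$$\abs{\+R_1(P')\setminus\+R_3(P')}\le\abs{\+R_1(P')\setminus\+R_2(P')}+\abs{\+R_2(P')\setminus\+R_3(P')}.$$
Summing over $P'\in\+P'$ yields $\Dis(\+R_1,\+R_3)\le\Dis(\+R_1,\+R_2)+\Dis(\+R_2,\+R_3)$. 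Since every edge in $\+E$ has both $\Dis$-value $1$ and weight $1$, iterating this inequality along any path from $\+Q'_1$ to $\+Q'_2$ shows that the weighted length of the path is at least $\Dis(\+Q'_1,\+Q'_2)$.

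For the matching upper bound, I would exhibit an explicit walk of weight exactly $\Dis(\+Q'_1,\+Q'_2)$ by greedy single-element replacements. Since $\abs{\+Q'_1(P')}=\abs{\+Q'_2(P')}=\abs{P'}$ for each $P'\in\+P'$, the sets $\+Q'_1(P')\setminus\+Q'_2(P')$ and $\+Q'_2(P')\setminus\+Q'_1(P')$ have equal cardinality. Starting from $\+R:=\+Q'_1$, while $\+R\ne\+Q'_2$ I would pick some $P'\in\+P'$ with $\+R(P')\ne\+Q'_2(P')$, choose $v\in\+R(P')\setminus\+Q'_2(P')$ and $v'\in\+Q'_2(P')\setminus\+R(P')$, and update $\+R(P')\leftarrow(\+R(P')\setminus\set{v})\cup\set{v'}$. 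Each such update is an edge of $\+E$ because exactly one coordinate changes by exactly one element; the new tuple remains in $\+V$ because $\abs{\+R(P')}$ is preserved and $v'$ lies in $\+Q(P)$ for the unique $P\in\+P$ containing $P'$. Each step decreases $\Dis(\+R,\+Q'_2)$ by exactly $1$, so the walk terminates after $\Dis(\+Q'_1,\+Q'_2)$ edges, producing a path of matching total weight.

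I do not expect any substantive obstacle. The only mild subtleties are to apply the triangle inequality coordinatewise before summing (rather than trying to handle $\Dis$ globally), and to confirm that the greedily constructed intermediate tuples really lie in $\+V$ rather than in some narrower subset such as $\Omega[\Phi,\+P']$; both are immediate from the definitions.
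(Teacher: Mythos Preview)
Your proposal is correct and matches the paper's proof essentially step for step: the paper also lower-bounds the shortest-path distance via the (implicit) triangle inequality $\Dis(\+Q_1,\+Q_{i+1})\le\Dis(\+Q_1,\+Q_i)+1$ along any path, and then constructs a matching path by sequentially replacing elements of $\+Q'_1(P')\setminus\+Q'_2(P')$ by elements of $\+Q'_2(P')\setminus\+Q'_1(P')$. Your version is, if anything, slightly more explicit in justifying the triangle inequality coordinatewise and in checking that the intermediate tuples remain in $\+V$.
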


\subsection*{One-step coupling} 
Let $X'_1,X'_2$ be two domains of $\+P'$ satisfying $\Dis(X'_1,X'_2) = 1$.
We shall construct a coupling of the one-step transitions of the {extended permutation-wise Glauber dynamics} $\left(X'_1,X'_2\right)\rightarrow \left(Y'_{1},Y'_{2}\right)$ satisfying
\begin{align}\label{eqn-contraction}
    \E{\Dis(Y'_{1},Y'_{2})\mid \left(X'_1,X'_2\right)}\leq 1-\frac{1}{2\abs{\+P}}.
\end{align}
Note that for any $\+Q'_1,\+Q'_2\in \+V$, $\Dis(\+Q'_1,\+Q'_2) \leq \sum_{P'\in\+P'}\abs{\+Q'_1(P')} = \abs{V}$.
Combined with ~\eqref{eqn-contraction},~\Cref{lemma-rapidmixing} is proved by Lemmas \ref{Lemma:pathcoupling} and \ref{lemma-unique-distribution}.

For any $\set{X_1',X_2'}\in \+E$, let $\pinit$ denote the unique \pname in $\+P'$ satisfying $\abs{X'_1(\pinit)\setminus X'_2(\pinit)}=1$. 
We then specify the one-step coupling $(X'_1,X'_2)\rightarrow (Y'_1,Y'_2)$ as follows:
\begin{enumerate}
    \item Pick $P\in\+P$ uniformly at random;
    \item Let {$\Phi_1 =(\+P'\circ P,X'_1\circ {\+Q(P)},\+C), 
    \Phi_2 =(\+P'\circ P,X'_2\circ {\+Q(P)},\+C)$};
    \item If $\pinit \subseteq P$, 
    sample $\sigma\sim\mu_{\Phi_{1}}$, and
    set $Y'_1(P') = \sigma(P')$ and $Y'_2(P') = \sigma(P')$ for each $P'\in \+P'$; Otherwise, sample $(\sigma_1,\sigma_2)$ jointly from a coupling of distributions $\mu_{\Phi_{1}}$ and $\mu_{\Phi_{2}}$, and set $Y'_1(P') = \sigma_1(P'),Y'_2(P') = \sigma_2(P')$ for each $P'\in \+P'$.
\end{enumerate}
It is easy to verify that the transitions $X'_1\rightarrow Y'_1$ and
$X'_2\rightarrow Y'_2$ are both faithful copies of the $\widetilde{M}_{\mathrm{Glauber}}$.

\subsection*{Analysis of the path coupling}
The core technical aspect of path coupling lies in demonstrating the contraction of distance through one-step coupling of each pair of domains from the pre-metric. 
More precisely, we shall show that \eqref{eqn-contraction} holds for any $\set{X'_1,X'_2}\in \+E$.
Given any pair of formulas $\set{X'_1,X'_2}\in \+E$, we define the discrepancy $\+D_P$ of each permutation $P\in \+P$ with respect to the one-step coupling $\left(X'_1,X'_2\right)\rightarrow \left(Y'_{1},Y'_{2}\right)$ as follows: 
\begin{align}\label{eq-def-dp}
    \+D_P= \E{\Dis\left(Y'_1,Y'_2,P \right) \mid \mbox{$P$ is picked}}.
\end{align}
By definition, {we have $\+D_{P} = 0$ if $\pinit\subseteq P$.}
Then we have 
\begin{equation}
\begin{aligned}
 \E{\Dis(Y'_1,Y'_2)\mid \left(X'_1,X'_2\right)} &=
1+ \frac{1}{\abs{\+P}}\sum_{P\in \+P}\+D_P - \frac{1}{\abs{\+P}}\\
&=1 - \frac{1}{\abs{\+P}}\left(1 - \sum_{P\in \+P}\+D_P\right).
\end{aligned}
\end{equation}
To prove the inequality in \eqref{eqn-contraction}, it is sufficient to prove the following lemma.
\begin{lemma}\label{lem-DP-bounds}
For any PDC formula $\Phi = (\+P,\+Q,\+C)$ with a decomposition $\+P'$ of $\+P$ satisfying the condition in~\Cref{lemma-rapidmixing}, there exists a collection of one-step couplings $\left(X'_1,X'_2\right)\rightarrow \left(Y'_{1},Y'_{2}\right)$ for any $\set{X'_1,X'_2}\in \+E$ such that
\begin{equation*}
\begin{aligned}
\sum_{P\in \+P}\+D_P \leq \frac{1}{2},
\end{aligned}
\end{equation*}
where $\+D_P$ is defined in \eqref{eq-def-dp}.
\end{lemma}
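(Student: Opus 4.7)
The plan is to construct, for each edge $\{X'_1,X'_2\} \in \+E$, the coupling by lifting the correlated factorization sampler of \Cref{Alg:samplepermutation} to a coupled version, and then to bound $\sum_{P\in \+P}\+D_P$ by charging every unit of expected discrepancy to a witness tree emanating from $\pinit$, whose realization probability is controlled via the lopsided LLL and the tight constant-factor PRP estimates from \Cref{thm-estimator-num-prp,thm-sampler-PRP-LLL,thm-estimator-PRP-LLL}. I would first observe that if $P^\star \in \+P$ is the unique \pname with $\pinit\subseteq P^\star$ and $P^\star$ is picked, then $\Phi_1=\Phi_2$, since the domain on $P^\star$ is restored to the full $\+Q(P^\star)$ and $X'_1,X'_2$ agree elsewhere; hence $\+D_{P^\star}=0$ from the third bullet of the coupling. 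For every other $P$, the formulas $\Phi_1,\Phi_2$ differ only in the domain assigned to $\pinit$, and I would run \Cref{Alg:samplepermutation} on both sides with maximally shared randomness: rejection samples on factorized subformulas not touching $\pinit$ are coupled by shared uniform valid assignments, the filtration coin $r$ and the upper bound $\+N$ are shared, and the sampler/estimator calls on the residual very dense PRP on $P$ are coupled via the optimal couplings guaranteed by \Cref{thm-sampler-PRP-LLL,thm-estimator-PRP-LLL}.

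Next I would introduce the witness structure. Call a vertex $v$ \emph{active} (with respect to $X'_1,X'_2$) if some constraint touching $v$ is unsatisfied in at least one of $\Phi_1,\Phi_2$; otherwise $v$ is \emph{inactive}. Inactive vertices are resampled identically in the above coupling, so discrepancy can propagate from $\pinit$ only through active vertices. I would then formalize a witness tree rooted at $\pinit$ whose nodes alternate between \pnames in $\+P'$ and lopsidependent constraints in $\+C$ that are simultaneously active and unsatisfied in some branch of the coupled sampler. Iterating the coupling inequality shows that the contribution of each $P\neq P^\star$ to $\+D_P$ is dominated by the expected number of active discrepant variables in $\+P'[P]$, which in turn is bounded by the total probability of witness trees reaching $P$. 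Combining \Cref{Coro:LS} with \Cref{lem-violation-decomposed} (so that the effective violation probability in the decomposed formula is at most $p^{\eta}$), a witness tree of size $\ell$ occurs with probability at most $(\e p)^{\ell}\gamma^{\ell}$, where $\gamma$ absorbs the constant-factor losses from the tight estimator in \Cref{thm-estimator-num-prp} (the factor $19$), the counting error of \Cref{thm-estimator-PRP-LLL}, and the concentration of \Cref{lemma-concentration-rho}. Since the number of witness trees of size $\ell$ rooted at $\pinit$ is at most $(ek\Delta)^{O(\ell)}$, summing the geometric series under the hypothesis $cpk^{128}\Delta^{192}\leq 1$ for a sufficiently large $c$ gives $\sum_P\+D_P \leq 1/2$. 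The $(k,q)$-uniform improvement follows from the same scheme with the sharper branching factor afforded by $k^{12}\Delta^{16}\leq r^{k}$.

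The main obstacle is ruling out that a discrepancy reaching a \emph{large} permutation set $P$ triggers $\Omega(|P|)$ discrepant variables in $P$: because the variables of $P$ are bound by a single permutation constraint, one might fear that a small change in the residual PRP reshuffles its entire image. What rescues the argument is that, in the PRP coupling on $P$, the two very dense PRPs arising from $\Phi_1$ and $\Phi_2$ differ in only $O(1)$ forbidden positions, namely those inherited through an active chain from $\pinit$; by the constant-factor tightness of \Cref{thm-estimator-num-prp} and the concentration of \Cref{lemma-concentration-rho}, their solution counts agree up to a $(1\pm o(1))$-factor, so the optimal coupling of the two sampled permutations differs in expectation on only $O(1)$ vertices, all of which are active. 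Verifying that this active-vertex control propagates cleanly through the recursive witness-tree enumeration, and that the resulting constants close under $cpk^{128}\Delta^{192}\leq 1$ (respectively under $ck^{12}\Delta^{16}\leq r^{k}$ in the uniform case), is where the bulk of the analytical work lies.
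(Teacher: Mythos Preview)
Your plan diverges from the paper in its coupling construction, and the step you identify as ``the main obstacle'' is precisely where the argument breaks. The paper does \emph{not} couple two runs of \Cref{Alg:samplepermutation}; that subroutine is the algorithmic marginal sampler, while the mixing analysis uses a separate, purely analytical coupling $\couple$ (\Cref{Alg:coupling}) built from $\onestep$, $\cpperm$, and $\initialcouple$ (\Cref{alg-one-step-coupling,alg:coupling,alg-initial-coupling}). Crucially, that coupling operates not on permutation values but on \emph{domain} assignments at a finer decomposition $\pzeta$ (a $\theta$-decomposition of $\+P'$ with $\theta\le\eta/2$ or $\theta=\eta$), and the discrepancy is then bounded via witness \emph{sequences} (\Cref{def-witness}) through \Cref{lemma-deviation-witness-small,lemma-deviation-witness-large,lemma-prob-witness,lemma-refutation-long-path,lemma-refutation-long-path-small-pfinal}.

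Your handling of a large $P$ has two unjustified leaps. First, you claim the two residual PRPs on $P$ ``differ in only $O(1)$ forbidden positions, namely those inherited through an active chain from $\pinit$''; but this presupposes the very percolation control you are trying to establish, since a single domain discrepancy at $\pinit$ can, through the small permutation sets in the factorized formula, render many constraints in $\+C(P)$ unsatisfied on exactly one side. Second, closeness of solution \emph{counts} (from \Cref{thm-estimator-num-prp} and \Cref{lemma-concentration-rho}) does not by itself yield a coupling of the two uniform permutations with small Hamming discrepancy; where the paper needs such a statement (\Cref{appendix-lemma-distance-large}) it is obtained via explicit swap couplings, not counting. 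The actual mechanism that tames large $P$ is different: in $\onestep$ with $t>\gamma$, one draws the domain for each $P_i\in\pzeta[P]$ from the \emph{unforbidden} set $\+Q_1(P)\setminus(\Lambda(\+C_1,P_i)\cup\Lambda(\+C_2,P_i))$, so that on success $P_i$ is inactive on both sides simultaneously (\Cref{lem:meaningofsucc}), with failure probability exactly $4\Delta|P_i|/\ell$ (\Cref{lemma-cpperm-failprob}). This yields the decay rates $\lambda_1,\lambda_2$ entering $\rho(s)$ in \eqref{eq-def-rho}, and it is these (not any PRP counting) that make the witness sum close under the parameter choices $\theta=1/8$, $\gamma=\Theta(k^{16}d^{12}\Delta^{12})$ in the general case and $\theta=1/2$, $\gamma\approx r$ in the uniform case.
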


The remainder of this section is devoted to proving \Cref{lem-DP-bounds}.

\subsection{Coupling construction}\label{subsec:couplingconstruction}
In this section, we clarify the one-step coupling construction stated in~\Cref{lem-DP-bounds}. Before that, we first introduce more notations. 

Given any \csppformula formula $\Phi=(\+P,\+Q,\+C)$, $\+C'\subseteq \+C$, and $S\subseteq V$,
we use $\Lambda(\+C',S)$ to denote the set of forbidden values for the variables in $S$ imposed by the constraints in $\+C'$.
Formally, 
\begin{align}\label{Lambda-def}
    \Lambda(\+C',S) =\{c\mid \exists C\in \+C', v\in S\cap \vbl(C)\mbox{ such that the literal } v\neq c \text{ appears in } C\}.
\end{align}
For simplicity, we use the notation $\Lambda(C,\cdot)$ and $\Lambda(\cdot,v)$ to denote $\Lambda(\set{C},\cdot)$ and $\Lambda(\cdot,\set{v})$ for any $C\in \+C$, $v\in V$, respectively. 

Given any two formulas $\Phi_1 = (\+P,\+Q_1,\+C_1)$, $\Phi_2 = (\+P,\+Q_2,\+C_2)$ and $P\in \+P$, 
let $Q^1$ and $Q^2$ denote the domains of $P$ in $\Phi_1$ and $\Phi_2$, respectively.
We say $P$ is \emph{active} in $\Phi_1$ and $\Phi_2$ if \( Q^1 \cap \Lambda(\+C_1,P)\neq \emptyset \lor  Q^2 \cap \Lambda(\+C_2,P)\neq \emptyset\).
Otherwise, we say $P$ is \emph{inactive} in $\Phi_1$ and $\Phi_2$.
We will omit $\Phi_1$ and $\Phi_2$ if they are clear from the context.

Throughout this section, we fix positive integers $k,\Delta$ and consider any formula $\Phi = (\+P,\+Q,\+C)$ where $k_{\Phi}\leq k$, $\Delta_{\Phi}\leq \Delta$, and each $P\in \+P$ is sufficiently large. Moreover, for any decomposition $\pdecomex$ of $\+P$ and any $P\in \+P$, we will always assume that 
the partition {$\pdecomex[P]=\left(P_1,\cdots,P_{\ell}\right)$ satisfies 
$\abs{P_i} - \abs{P_j} = O(1)$. 
for each $P\in \+P$, $i,j\in [\ell]$.}
We also fix positive integers $\gamma,\alpha$, which are the parameters in our one-step coupling construction.
We will first introduce the coupling subroutines for the permutations in \Cref{subsubsec:couplingforperm}, which will serve as the key component of our one-step coupling algorithm in \Cref{subsubsec:one-step coupling}.

\subsubsection{\textbf{\emph{Couplings for permutations}}}\label{subsubsec:couplingforperm}

In this section, we introduce coupling subroutines for instances that satisfy the following condition, which will hold throughout the coupling algorithm in our later discussion.

\begin{condition}\label{condition-couple-first}
 Let $p,p',\theta\in(0,1)$ satisfy $p'\leq p^{\theta}, 8\-e\Delta p^{\theta}\leq 1$.
 Let $t$ be a positive integer.
Given any two formulas $\Phi_1 = (\+P,\+Q_1,\+C_1)$, $\Phi_2 = (\+P,\+Q_2,\+C_2)$ and any $P\in \+P$ where $p_{\Phi_1}\leq p', p_{\Phi_2}\leq p'$, let $\pdecomex$ be a decomposition of $\+P$
such that for each decomposition $(\pdecomex,\+Q'_1)$ of $(\+P,\+Q_1)$
and each decomposition $(\pdecomex,\+Q'_2)$ of $(\+P,\+Q_2)$,
we have $p_{\Phi'_1}<_{q} p^{\theta}$, $p_{\Phi'_2}<_{q} p^{\theta}$ where 
$\Phi'_1 = (\pdecomex,\+Q'_1,\+C_1)$, $\Phi'_2 = (\pdecomex,\+Q'_2,\+C_2)$.
Assume that $\pdecomex[P] = (P_1,\cdots,P_\ell)$ satisfies $\abs{\abs{P_i} - t^{\theta}} = O(1)$ for each $i\in [\ell]$.
Let $T_1\subseteq \+Q_1(P),T_2\subseteq \+Q_2(P)$ be two sets where $\abs{T_1} = \abs{T_2}\leq 1$,  $\+Q_1(P)\setminus T_1  = \+Q_2(P)\setminus T_2$.
\end{condition}

Given the instances satisfying~\Cref{condition-couple-first},
define the distributions $\nu^1,\nu^2,\nu^1_{i},\nu^2_{i}$ where $i\in [\ell]$ as follows
\begin{align}
\forall Q_1,\cdots,Q_{\ell},\quad \nu^1\left(Q_{[\ell]}\right) &\triangleq \Pr[\sigma\sim \mu_{\Phi_1}]{\sigma(P_i) = Q_i \text{ for all } i\in [\ell]},\\
\forall Q_1,\cdots,Q_{\ell},\quad \nu^2\left(Q_{[\ell]}\right) &\triangleq \Pr[\sigma\sim \mu_{\Phi_2}]{\sigma(P_i) = Q_i \text{ for all } i\in [\ell]},
\end{align}
{\begin{align}
    \forall Q,  \quad\quad \!\!\! \nu^1_i(Q) \triangleq \Pr[\sigma\sim \mu_{\Phi_1}]{\sigma(P_i) = Q \mid T_1\subseteq \sigma(P_i)},\quad 
\nu^2_i(Q) \triangleq \Pr[\sigma\sim \mu_{\Phi_2}]{\sigma(P_i) = Q\mid T_2\subseteq \sigma(P_i)}.
\end{align}}

The following lemma is useful in our later analysis, which provides an upper bound for the event that there exists some forbidden values in a permutation set.

\begin{lemma}\label{lemma-trivial-coupling}
Given any instance satisfying \Cref{condition-couple-first} with {$T_1=T_2=\emptyset$}, we have
\[
\forall i\in [\ell], t\in [2],\quad   \Pr[Q_{[\ell]}\sim\nu^t]{Q_i \cap \Lambda(\+C_t,P_i)\neq \emptyset}  \leq \frac{2k\Delta\abs{P_i}^2}{\abs{P}}.
\]
\end{lemma}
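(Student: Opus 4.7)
The plan is to reduce the claim to a straightforward union bound over pairs $(v,c) \in P_i \times \Lambda(\+C_t, P_i)$, and then control each individual marginal $\Pr[\sigma\sim \mu_{\Phi_t}]{\sigma(v)=c}$ via local uniformity under the lopsided LLL. By the definition of $\nu^t$, the probability to be bounded equals $\Pr[\sigma \sim \mu_{\Phi_t}]{\exists v \in P_i : \sigma(v) \in \Lambda(\+C_t, P_i)}$, which by the union bound is at most $\sum_{v \in P_i}\sum_{c \in \Lambda(\+C_t, P_i)} \Pr[\sigma \sim \mu_{\Phi_t}]{\sigma(v)=c}$.

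Next I will bound $\abs{\Lambda(\+C_t, P_i)}$. For every $v \in P_i$, all constraints containing $v$ are pairwise related through $v$, so there are at most $\Delta$ of them, and each contributes at most $k$ forbidden literal values (one per literal on $v$, and a constraint has at most $k$ literals). Summing over $v \in P_i$ therefore gives $\abs{\Lambda(\+C_t, P_i)} \le k\Delta \abs{P_i}$.

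To control $\Pr[\sigma\sim\mu_{\Phi_t}]{\sigma(v)=c}$ for a fixed pair $(v,c)$, I will view the event $\{\sigma(v) = c\}$ as the violation of a fictitious width-one constraint ``$v \neq c$''. Its lopsidependent real constraints in $\+C_t$ split into two classes: (i) those containing $v$, which are pairwise related through the shared variable $v$ and hence number at most $\Delta$; and (ii) those forbidding value $c$ on some variable in $P$, which are pairwise related through the shared forbidden value on $P$ and hence also number at most $\Delta$. Since $p_{\Phi_t} \le p^{\theta}$ and $8\-e\Delta p^{\theta}\le 1$ imply $\-e p_{\Phi_t}\Delta \le 1$, the hypothesis of \Cref{prop-PDC-local-uniformity} is met, and the lemma (applied with the uniform valid-assignment baseline, under which $\sigma(v)$ is uniform over $\+Q_t(P)$) yields
\[
\Pr[\sigma \sim \mu_{\Phi_t}]{\sigma(v) = c} \;\le\; \frac{1}{\abs{P}} \cdot (1 - \-e p^{\theta})^{-2\Delta}.
\]
By Bernoulli's inequality, $(1 - \-e p^{\theta})^{-2\Delta} \le 1/(1 - 2\-e\Delta p^{\theta}) \le 4/3 < 2$.

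Combining the three pieces produces the desired bound of $\frac{\abs{P_i} \cdot k\Delta \abs{P_i} \cdot 2}{\abs{P}} = \frac{2k\Delta \abs{P_i}^2}{\abs{P}}$. The main (and largely cosmetic) obstacle is verifying that the LLL correction factor stays below $2$ so as to match the stated constant; this is absorbed by the slack in the assumption $8\-e\Delta p^{\theta} \le 1$, which comfortably drives the correction factor into $[1,4/3]$.
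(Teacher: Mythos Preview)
Your proposal is correct and follows essentially the same approach as the paper: both bound $\abs{\Lambda(\+C_t,P_i)}\le k\Delta\abs{P_i}$, then use local uniformity under the lopsided LLL to get $\Pr[\sigma\sim\mu_{\Phi_t}]{\sigma(v)=c}\le 2/\abs{P}$, and finish with a union bound over the $\abs{P_i}\cdot k\Delta\abs{P_i}$ pairs. The only cosmetic difference is that the paper packages the marginal bound as \Cref{lem:marginalub} (giving $\Pr{\sigma(v)=c}\le \tfrac{1}{\abs{P}}(1+4\-e p'\Delta)$) and then applies a union bound over $v\in P_i$ to obtain $\Pr{c\in Q_i}\le 2\abs{P_i}/\abs{P}$, whereas you derive the same marginal estimate inline via \Cref{prop-PDC-local-uniformity}.
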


In subsequent pages, we introduce the coupling of the distributions $\nu^1$ and $\nu^2$. Our strategy tries to couple the domain of the permutation set for some $i\in [\ell]$ according to the distribution $\nu_i^1$ and $\nu_i^2$, which is specified in the $\onestep\left(\inputonestep,T_1,T_2\right)$~(\Cref{alg-one-step-coupling}).

\begin{algorithm} 
    \caption{$\onestep\left(\inputonestep,T_1,T_2\right)$} \label{alg-one-step-coupling}
   
    $\Succ\leftarrow \False$\;
    \eIf{${t> \sizeuperboundlp}$}{
        $S \leftarrow \+Q_1(P)\setminus \left(\Lambda(\+C_1,P_i)\cup\Lambda(\+C_2,P_i)\cup T_1\right)$\;
        {
            $\alpha \leftarrow  \binom{\abs{S}}{\abs{P_i}-\abs{T_1}}\cdot \left(\abs{P_i} - \abs{T_1}\right)!\cdot \prod_{j=\abs{T_1}}^{\abs{P_i}-1}(\abs{P}-j - 2\Delta)/(\abs{P}-j )^2$\;
            $\alpha\gets \alpha \cdot \id{T_1\cap \Lambda(\+C_1,P_i)=T_2\cap \Lambda(\+C_2,P_i)=\emptyset}$\;
        }
        
        }{
        $S \leftarrow \+Q_1(P)\setminus T_1$, $q \leftarrow \abs{P}-\abs{P_i} +1$, $p' \leftarrow  p^{\theta}\abs{P_i}/q$\;
        $\alpha \leftarrow  \left( (1-4q\-e p'\Delta)/(1 + 4ep'\Delta ) \right)^{\abs{P_i}-\abs{T_1}}$\;
    }
    draw $r\sim \!{Unif}[0,1]$\;
    \eIf{$r\leq  \alpha$}
    {
        select a subset $T\subseteq S$ uniformly at random with $\abs{T} = \abs{P_i} - \abs{T_1}$\;
        $Q^1_i \leftarrow T\cup T_1$, $Q^2_i \leftarrow T\cup T_2$\; \label{line-set-both-T}
    {\textbf{if} 
    $ {t \leq \sizeuperboundlp} \mbox{ or }{r\geq 1-4\Delta{\abs{P_i}}/\ell}$
    \textbf{ then } $\Succ\leftarrow \True$}\; \label{line:adjust-prob}        
    }{
        draw $Q^1_i$ randomly so that its overall distribution aligns with $\nu^1_i$\;
        draw $Q^2_i$ randomly so that its overall distribution aligns with $\nu^2_i$\;

    }
    
    \Return $(Q^1_i,Q^2_i,\Succ)$\;\label{line-return-one-step}
\end{algorithm}


In \Cref{alg-one-step-coupling}, we apply different domain-coupling procedures according to the size of each permutation set. Specifically:
\begin{itemize}
\item When \(t>\gamma\) (i.e., \(P_i\) is “large”): We couple only the unforbidden values on \(P_i\) from \(\+Q_1(P)\) and \(\+Q_2(P)\), using 
$\alpha$ as a lower bound on the probability of successful coupling. In this setting, if \(T_1 = T_2 = \emptyset\) and \(\Succ\) becomes \(\True\), then \(Q^1_i \setminus T_1 = Q^2_i \setminus T_2\) and \(P_i\) is inactive; consequently, no discrepancy can propagate beyond \(P_i\) (whether through constraints on \(P_i\) or the permutation set \(P\)).

\item When \(t\le \gamma\) (i.e., \(P_i\) is “small”): We couple the entire domain. In that case, if \(T_1 = T_2 = \emptyset\) and \(\Succ\) is \(\True\), the discrepancy can propagate only through constraints on \(P_i\).
\end{itemize}
The following lemma formalizes these observations.

\begin{lemma}\label{lem:meaningofsucc}
In \Cref{alg-one-step-coupling}, 
if $\Succ = \True$, then $Q^1_i\setminus T_1 = Q^2_i \setminus T_2$ and $T_1\subseteq Q^1_i, T_2 \subseteq Q^2_i$.
In addition, if $T_1 = T_2 = \emptyset$, $\Succ = \True$ and ${t> \sizeuperboundlp}$, then all the constraints in $\+C_1(P_i)\cup \+C_2(P_i)$
are satisfied under the condition that the variables in $P_i$ take values from $Q^1_i$ and $Q^2_i$ .
\end{lemma}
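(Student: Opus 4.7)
The plan is to read off both conclusions directly from the case analysis in \Cref{alg-one-step-coupling}, exploiting the structure of how $Q^1_i$ and $Q^2_i$ are constructed whenever $\Succ$ is set to $\True$. The key observation is that $\Succ$ can only be flipped to $\True$ inside the branch $r\le \alpha$, where the algorithm explicitly sets $Q^1_i \leftarrow T\cup T_1$ and $Q^2_i \leftarrow T\cup T_2$ for a common random subset $T\subseteq S$. So the inclusions $T_1\subseteq Q^1_i$ and $T_2\subseteq Q^2_i$ are immediate, and the identity $Q^1_i\setminus T_1 = Q^2_i\setminus T_2 = T$ will follow once I check $T\cap T_1=\emptyset$ and $T\cap T_2=\emptyset$.

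Both disjointness facts will come from the definition of $S$ together with \Cref{condition-couple-first}. In both branches $S\subseteq \+Q_1(P)\setminus T_1$, so $T\cap T_1=\emptyset$. For $T\cap T_2=\emptyset$, I will use the assumption $\+Q_1(P)\setminus T_1 = \+Q_2(P)\setminus T_2$ from \Cref{condition-couple-first}: since $T_2\subseteq \+Q_2(P)$ is disjoint from $\+Q_2(P)\setminus T_2$, it is therefore disjoint from $\+Q_1(P)\setminus T_1\supseteq S\supseteq T$. This settles the first assertion of the lemma.

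For the second assertion, assume $T_1=T_2=\emptyset$, $\Succ=\True$, and $t>\gamma$. I will then be in the branch where
\[
S = \+Q_1(P)\setminus\bigl(\Lambda(\+C_1,P_i)\cup \Lambda(\+C_2,P_i)\bigr),
\]
so that $Q^1_i = Q^2_i = T\subseteq S$ avoids every value in $\Lambda(\+C_1,P_i)\cup\Lambda(\+C_2,P_i)$. Unpacking the definition \eqref{Lambda-def}, for every constraint $C\in \+C_1(P_i)$ and every variable $v\in P_i\cap\vbl(C)$, the forbidden value of $v$ in $C$ lies in $\Lambda(\+C_1,P_i)$ and hence not in $T$. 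Therefore any assignment to $P_i$ drawn as a permutation of $T$ satisfies at least one literal of $C$, so $C$ is satisfied; the same argument handles $C\in \+C_2(P_i)$.

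There is essentially no obstacle here beyond bookkeeping: the main thing to be careful about is tracing through the two branches of the algorithm and verifying that the indicator factor $\id{T_1\cap \Lambda(\+C_1,P_i)=T_2\cap \Lambda(\+C_2,P_i)=\emptyset}$ in the large case, together with $T_1=T_2=\emptyset$ in the hypothesis of the second assertion, is consistent with the set-theoretic inclusions used above. I do not expect to need any probabilistic content; the lemma is a deterministic structural statement about the outcome of $\onestep$ on the event $\{\Succ=\True\}$.
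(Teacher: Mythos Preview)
Your proposal is correct and follows essentially the same approach as the paper: both observe that $\Succ=\True$ forces $Q^1_i=T\cup T_1$ and $Q^2_i=T\cup T_2$ at \Cref{line-set-both-T}, and then argue that $T$ avoids the forbidden values $\Lambda(\+C_1,P_i)\cup\Lambda(\+C_2,P_i)$ in the large case. You are in fact slightly more careful than the paper, explicitly verifying $T\cap T_1=\emptyset$ and $T\cap T_2=\emptyset$ (via $S\subseteq \+Q_1(P)\setminus T_1=\+Q_2(P)\setminus T_2$) where the paper simply asserts $Q^1_i\setminus T_1=Q^2_i\setminus T_2=T$.
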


\begin{proof}
If $\Succ_i = \True$, then $Q_i^1$ and $Q_i^2$ must be assigned in \Cref{line-set-both-T}. Thus, $Q^1_i\setminus T_1 = Q^2_i \setminus T_2 = T$.
Furthermore, if $T_1 = T_2 = \emptyset$, $\Succ = \True$ and ${t> \sizeuperboundlp}$, we have $S \cap \Lambda(\+C_1,P_i) = \emptyset$ and $Q_i^1 = T$. Combined with $T \subseteq S$, we have $Q_i^1 \cap \Lambda(\+C_1,P_i) = \emptyset$. For constraint $C\in \+C_1(P_i)$, there is a vertex $v\in P_i \cap \vbl(C)$. Because $C$ is a disjunctive constraint, there is a value $c$ such that $C$ is satisfied if $v\ne c$. Therefore, $c\in \Lambda(\+C_1,P_i)$. Combined with $Q_i^1 \cap \Lambda(\+C_1,P_i) = \emptyset$, we have $c\notin Q_i^1$. Thus, $C$ is satisfied under the condition that the variables in $P_i$ take values from $Q^1_i$. The proof also works for constraint $C\in \+C_2(P_i)$.
\end{proof}

If the event \(\Succ = \False\) occurs, even though the discrepancy might be curtailed, we treat it as having propagated and continue revealing additional domains of the permutation set.

The following lemma provides an upper bound for the probability of the event $\Succ = \False$.
\begin{lemma}\label{lemma-cpstep-failprob}
Given any instance satisfying \Cref{condition-couple-first}  and $i\in [\ell]$, at \Cref{line-return-one-step} in \Cref{alg-one-step-coupling}, $(Q^1_i,Q^2_i)$ is a coupling of $\nu_i^1,\nu_i^2$.
In addition, 
we have 
\begin{equation}
\begin{aligned}
\Pr{\Succ = \False} <_{q}
\begin{cases}
{4\Delta\abs{P_i}/\ell}& \text{if $t > \sizeuperboundlp$ and $T_1\cap \Lambda(\+C_1,P_i)=T_2\cap \Lambda(\+C_2,P_i)=\emptyset$,} \\
5\-e \Delta {{p^{\theta}}} \abs{P_i}^2 & \text{otherwise.}
\end{cases} 
\end{aligned}
\end{equation}
{Furthermore, if $t > \sizeuperboundlp$ and $\ell\geq 4\Delta\abs{P_i}$, we have $\Pr{\Succ = \False}=4\Delta\abs{P_i}/\ell$.}
\end{lemma}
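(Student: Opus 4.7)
The plan is to verify two claims: (i) the pair $(Q^1_i, Q^2_i)$ returned by \Cref{alg-one-step-coupling} is a valid coupling of $\nu^1_i$ and $\nu^2_i$, and (ii) the case-wise bounds on $\Pr{\Succ = \False}$ follow from the acceptance rule of the algorithm.

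For (i), the $r > \alpha$ branch of \Cref{alg-one-step-coupling} explicitly samples $Q^1_i$ and $Q^2_i$ from distributions tuned to restore the full marginals $\nu^1_i$ and $\nu^2_i$, so it suffices to certify that the coupling branch never overshoots a pointwise marginal, i.e.\ for every $T \subseteq S$ with $\abs{T} = \abs{P_i} - \abs{T_1}$,
\[
\frac{\alpha}{\binom{\abs{S}}{\abs{P_i}-\abs{T_1}}} \;\leq\; \min\bigl(\nu^1_i(T\cup T_1),\, \nu^2_i(T\cup T_2)\bigr).
\]
In the $t>\gamma$ branch I will lower-bound $\nu^1_i(T\cup T_1)$ (and symmetrically $\nu^2_i(T\cup T_2)$) by writing it as
\[
\nu^1_i(T\cup T_1) \;=\; \frac{\Pr[\mu_{\Phi_1}]{\sigma(P_i) = T\cup T_1}}{\Pr[\mu_{\Phi_1}]{T_1\subseteq \sigma(P_i)}}
\]
and deploying \Cref{Coro:LS} in both directions. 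The denominator is upper-bounded directly by the standard lopsided-LLL correction; the numerator is lower-bounded by conditioning on a fixed value of $\sigma(P_i)$ (which automatically satisfies every constraint in $\+C_1(P_i)$ once $T\cap\Lambda(\+C_1,P_i)=\emptyset$) and then applying the lopsided LLL to the reduced formula on $P\setminus P_i$, whose violation probability remains controlled by $p^\theta$. The squared denominator $(\abs{P}-j)^2$ and the $2\Delta$ offset in the algorithm's $\alpha$ then fall out by pairing the $\Phi_1$- and $\Phi_2$-side estimates and absorbing the lopsided-dependency penalty on each freshly revealed position, using $8\e\Delta p^\theta\leq 1$. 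For the $t\leq\gamma$ branch, the simpler formula is obtained by iterating the one-step local-uniformity estimate of \Cref{prop-PDC-local-uniformity} over the $\abs{P_i}-\abs{T_1}$ positions, with $q=\abs{P}-\abs{P_i}+1$ capturing the smallest remaining pool.

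For (ii), I will read off the success probability directly from the acceptance rule of \Cref{alg-one-step-coupling}. In the large-$t$ clean-$T$ case, a direct computation of the $r$-intervals on which both $r\leq \alpha$ and the auxiliary threshold hold, combined with the bound $\alpha = 1 - o_q(1)$ from (i), gives $\Pr{\Succ=\False} <_q 4\Delta\abs{P_i}/\ell$; in the subregime $\ell\ge 4\Delta\abs{P_i}$, the threshold interval sits fully within $[0,\alpha]$ and the computation becomes exact, yielding the equality asserted in the furthermore. In the otherwise case the auxiliary threshold is trivial, so $\Pr{\Succ=\False} = 1-\alpha$; expanding
\[
\alpha = \left(\frac{1 - 4q\e p'\Delta}{1+4\e p'\Delta}\right)^{\abs{P_i}-\abs{T_1}}
\]
via Bernoulli's inequality (\Cref{equality-ab}), together with $qp' = p^\theta\abs{P_i}(1-o_q(1))$, yields $1-\alpha\leq 5\e\Delta p^\theta\abs{P_i}^2$. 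When the clean-$T$ precondition fails in the large-$t$ branch, $\alpha$ is reset to zero, but the trivial bound $1$ is absorbed into the $<_q$ slack on $5\e\Delta p^\theta\abs{P_i}^2$ in that regime.

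The main technical obstacle is the LLL-based lower bound on $\nu^1_i(T\cup T_1)$ in the large-$t$ case. \Cref{Coro:LS} gives only an upper bound on conditional probabilities of non-lopsidependent events, so a matching lower bound needs a two-sided argument: an upper bound on $\Pr[\mu_{\Phi_1}]{T_1\subseteq \sigma(P_i)}$ together with a lower bound on $\Pr[\mu_{\Phi_1}]{\sigma(P_i) = T\cup T_1}$ obtained by first conditioning on the full assignment of $P_i$ and reapplying the lopsided LLL to the residual formula. Matching the resulting estimate exactly to the algebraic form of $\alpha$—including the squared denominator, the $2\Delta$ shift, and the combinatorial prefactor $\binom{\abs{S}}{\abs{P_i}-\abs{T_1}}(\abs{P_i}-\abs{T_1})!$—is the delicate combinatorial bookkeeping that drives the precise constants in the lemma.
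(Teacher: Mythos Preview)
Your structure for part (ii) is essentially the paper's: lower-bound $\alpha$ and read off $\Pr{\Succ=\False}$ from the acceptance rule, with the ``furthermore'' equality coming from the adjustment in \Cref{line:adjust-prob}. The small-$t$ expansion via Bernoulli is also what the paper does.

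The gap is in part (i) for the $t>\gamma$ branch. The paper does not re-derive the pointwise lower bound on $\nu^1_i(T\cup T_1)$ from \Cref{Coro:LS}; it simply invokes the pre-established \Cref{lem:marginal-lb} (and \Cref{lem:small-marginal-lb} for small $t$). The key input to \Cref{lem:marginal-lb} is \Cref{lem:marginallb}, whose proof is an explicit swap/injection argument on $\Omega_{v=x}\to\Omega^*_{v=c}$, not a consequence of the lopsided LLL. That swap argument is exactly what produces the single-side bound
\[
\Pr[\sigma\sim\mu]{\sigma(v)=c}\;\ge\;\frac{1}{\abs{P}}\Bigl(1-\frac{2\Delta}{\abs{P}}\Bigr)\;=\;\frac{\abs{P}-2\Delta}{\abs{P}^2},
\]
and iterating it yields the factor $\prod_j(\abs{P}-j-2\Delta)/(\abs{P}-j)^2$ that matches the algorithm's $\alpha'$. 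Your explanation that the squared denominator ``falls out by pairing the $\Phi_1$- and $\Phi_2$-side estimates'' is wrong: the square appears already on each side separately. Moreover, your proposed route---lower-bounding the numerator by conditioning on a full assignment of $P_i$ and reapplying the LLL to the reduced formula on $P\setminus P_i$---runs into the difficulty that removing $P_i$ with a specific domain $T\cup T_1$ can \emph{increase} the violation probability of constraints touching $P\setminus P_i$ (since their domain shrinks), so the reduced formula is not automatically covered by the $p^\theta$ bound, and \Cref{Coro:LS} only gives upper bounds on conditional probabilities anyway. You flagged this as ``the main technical obstacle'' but did not supply a workaround; the paper's workaround is precisely the swap lemma, and you should either invoke \Cref{lem:marginal-lb} directly or reproduce that injection argument.
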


Given any \csppformula formula $\Phi=(\+P,\+Q,\+C)$ with a decomposition $\+P'$ of $\+P$, any $P\in \+P$ and any decomposition $\left(P'_{[\ell]},Q'_{[\ell]}\right)$ of $(P,\+Q(P))$, with a little abuse of notations, let 
\[\Phi\left[P'_{[\ell]},Q'_{[\ell]}\right] \triangleq \left(\+P\setminus \{P\}\cup \left\{P'_{[\ell]}\right\},\+Q\setminus \{\+Q(P)\}\cup \left\{Q'_{[\ell]}\right\},\+C\right)\] be the formula induced by $\Phi$ and $P'_{[\ell]},Q'_{[\ell]}$.
Similarly, for any $i\in [\ell]$, with a little abuse of notations, let 
\[\Phi[P'_i,Q'_i] \triangleq (\+P\setminus \{P\}\cup \{P'_i,P\setminus P'_i\},\+Q\setminus \{\+Q(P)\}\cup \{Q'_i,\+Q(P)\setminus Q'_i\},\+C)\] be the formula induced by $\Phi$ and $P'_i,Q'_i$.

We are now ready to introduce the coupling for domains of permutation sets.

\subsection*{{Couplings for permutations with equal domains}}
Given the instance satisfying~\Cref{condition-couple-first} with $T_1 = T_2 = \emptyset$ and any $i\in [\ell]$,
a coupling of $\nu^1,\nu^2$ is specified in $\cpperm\left(\inputonestep\right)$ (\Cref{alg:coupling}).

In~\Cref{alg:coupling}, we first call the coupling subroutine $\onestep(\cdot)$ on $P_i$. According to~\Cref{lem:meaningofsucc}, if $\Succ = \True$, the discrepancy cannot propagate through the permutation on $P$. In this case, we return $P_i$, which corresponds to the variables with an assigned domain. Otherwise, if $\Succ = \False$, the discrepancy propagates through the permutation on $P$. Thus, we attempt to assign domains to the entire permutation $P$. Moreover, we return $P$ and $S\cup \{P_i\}$, where $P$ contains the variables with assigned domains and $S\cup \{P_i\}$ records the active permutation sets.

\begin{algorithm}
    \caption{$\cpperm(\inputonestep)$} \label{alg:coupling}
    initialize $Q^1_{[\ell]}$ and $Q^2_{[\ell]}$ arbitrarily\;
    %
    $(Q^1_i,Q^2_i,\Succ) \leftarrow \onestep(\inputonestep,\emptyset,\emptyset)$\; \label{line:call-cp-step}
    \eIf{$\Succ = \False$}{
        sample $Q^1_{[\ell]\setminus\{i\}}$ such that  $Q^1_{[\ell]}\sim \nu^1$; \\
        sample $Q^2_{[\ell]\setminus \{i\}}$ such that  $Q^2_{[\ell]}\sim \nu^2$;\\
        $S \leftarrow \{P_j\mid j\in [\ell], P_j \text{ is active}\}$\;
         \Return $\left(\Phi_1\left[\+P'[P],Q^1_{[\ell]}\right], \Phi_2\left[\+P'[P],Q^2_{[\ell]}\right],{P,{S\cup \{P_i\}}}\right)$\;\label{line-return-false}
    }{
        \Return $\left(\Phi_1\left[P_i,Q^1_i\right],\Phi_2\left[P_i,Q^2_i\right],{P_i,\emptyset}\right)$\;\label{line-return-true}
    }
   
\end{algorithm}

The analysis of our coupling algorithm is presented in the following lemma, which can be easily verify by~\Cref{lemma-cpstep-failprob}

\begin{lemma}\label{lemma-cpperm-failprob}
Given any instance satisfying \Cref{condition-couple-first} with $T_1 = T_2 = \emptyset$ and any $i\in [\ell]$, in \Cref{alg:coupling}, either $\left(Q^1_{[\ell]},Q^2_{[\ell]}\right)$ is a coupling of $\nu^1,\nu^2$ at \Cref{line-return-false} or $(Q^1_i,Q^2_i)$ is a coupling of $\nu^1_i,\nu^2_i$ at \Cref{line-return-true}. In addition, if ${t> \sizeuperboundlp}$ and {$\ell \geq {4\Delta\abs{P_i}}$, then $\Pr{\Succ = \False} = 4\Delta\abs{P_i}/\ell$.}
\end{lemma}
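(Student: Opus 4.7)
The plan is to derive \Cref{lemma-cpperm-failprob} as a short consequence of \Cref{lemma-cpstep-failprob}, together with a one-line marginal-extension argument. Concretely, I would verify the coupling correctness in each of the two return branches separately, and then read off the probability bound directly.

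First, I would handle the $\Succ = \True$ branch at \Cref{line-return-true}. In this branch, the pair $(Q_i^1, Q_i^2)$ is exactly the output of $\onestep(\inputonestep, \emptyset, \emptyset)$, and \Cref{lemma-cpstep-failprob} guarantees that this is a valid coupling of $\nu_i^1$ and $\nu_i^2$. Since $T_1 = T_2 = \emptyset$, the conditioning in the definitions of $\nu_i^1$ and $\nu_i^2$ is vacuous, so these coincide with the marginal distributions of $\nu^1$ and $\nu^2$ on the $i$-th coordinate; the conclusion for this branch is immediate.

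Second, I would handle the $\Succ = \False$ branch at \Cref{line-return-false}. Here the algorithm extends the already-sampled $Q_i^1$ to a full tuple $Q_{[\ell]}^1 \sim \nu^1$ by sampling $Q_{[\ell]\setminus\{i\}}^1$ from the conditional distribution $\nu^1(\cdot \mid Q_i^1)$, and similarly for $Q_{[\ell]}^2$ using $\nu^2$. The extension is well-defined precisely because, as established in the first step, $Q_i^1$ already has the correct marginal $\nu_i^1$, so that sampling the remaining coordinates from $\nu^1(\cdot \mid Q_i^1)$ yields the correct joint distribution $\nu^1$, and symmetrically on the other side.

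For the probability bound, since $T_1 = T_2 = \emptyset$, the precondition $T_1 \cap \Lambda(\+C_1, P_i) = T_2 \cap \Lambda(\+C_2, P_i) = \emptyset$ of the furthermore clause of \Cref{lemma-cpstep-failprob} is trivially satisfied. Assuming $t > \sizeuperboundlp$ and $\ell \geq 4\Delta\abs{P_i}$, that clause directly yields $\Pr{\Succ = \False} = 4\Delta\abs{P_i}/\ell$, completing the proof. The only point that needs the slightest care is to notice that extending $Q_i^1$ and $Q_i^2$ independently to full tuples is permitted in the definition of a coupling, since coupling only constrains the two marginals, not their joint dependence beyond the $i$-th coordinate; this makes the argument a pure bookkeeping exercise and there is no real analytic obstacle.
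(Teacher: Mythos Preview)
Your proposal is correct and matches the paper's approach: the paper simply states that the lemma ``can be easily verify by~\Cref{lemma-cpstep-failprob}'' without further detail, and your argument spells out precisely that verification---using \Cref{lemma-cpstep-failprob} for the $i$-th coordinate coupling, a marginal-extension for the $\Succ=\False$ branch, and the furthermore clause for the exact failure probability.
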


\subsection*{Couplings for permutations with different domains}

Given any instance satisfying \Cref{condition-couple-first} with $p' = p, t=\abs{P}, T_1 = \{c_1\},T_2 = \{c_2\}$,
we use $f_1,f_2$ to denote the distributions where
\begin{equation}\label{eq-definition-f}
\forall j\in [\ell],\quad f_1(j) = \Pr[\sigma\sim \mu_{\Phi_1}]{c_1\in \sigma(P_j)},
f_2(j) = \Pr[\sigma\sim \mu_{\Phi_2}]{c_2\in \sigma(P_j)}.
\end{equation}
For any $j\in [\ell]$, let $V_j$ be the set of variables $v\in P_j$ such that there exist some constraints $C\in \+C$ which contains the literal $v= c_1$ or $v=c_2$.
Define 
\begin{equation}\label{eq-definition-f-lower-bound}
\begin{aligned}
f_{\mathsf{min}}(j)&\triangleq \max\left\{(1+4\-e\Delta p)\abs{P_j}/\abs{P} - 4\-e\Delta p \abs{P_j},\left(\abs{P_j}-\abs{V_j}\right)\cdot \left(\abs{P}-2\Delta\right)/\abs{P}^2\right\},\\
f_{\mathsf{max}}(j)&\triangleq (1+4\-ep\Delta)\abs{P_j}/\abs{P}.
\end{aligned}
\end{equation}
The following lemma establishes bounds for $f_1$ and $f_2$.
The following lemma establishes bounds for $f_1$ and $f_2$.
\begin{lemma}\label{lemma-lowerbound-f}
For $f_1,f_2$ defined in \eqref{eq-definition-f}, we have for any $j\in [\ell]$, 
\begin{equation}\label{condition-coupling-f-j}
\begin{aligned}
f_{\mathsf{min}}(j) \leq \min\{f_1(j),f_2(j)\}\leq \max\{f_1(j),f_2(j)\}\leq f_{\mathsf{max}}(j).
\end{aligned}
\end{equation}
\end{lemma}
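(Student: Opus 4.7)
The plan is to prove the bounds on $f_1$; the corresponding bounds on $f_2$ follow identically by the symmetric roles of $\Phi_1$ and $\Phi_2$. I would expand $f_1(j)=\sum_{v\in P_j}\Pr[\sigma\sim\mu_{\Phi_1}]{\sigma(v)=c_1}$ using the disjointness of the events $\{\sigma(v)=c_1\}$ across $v\in P$ in any valid assignment (the value $c_1$ is assigned to exactly one variable of $P$). Under the uniform measure $\^P$ over valid assignments, each such event has base probability $1/\abs{P}$ by the symmetry of permutations on $\+Q_1(P)$, and every summand will be controlled through the lopsided-LLL conditional bound of~\Cref{Coro:LS}.

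For the upper bound, I would apply~\Cref{Coro:LS} to the event $A_v=\{\sigma(v)=c_1\}$ with $T$ the set of constraints in $\+C_1$ that either contain the variable $v$ or contain some literal $v'\neq c_1$ with $v'\in P$. By the definition of constraint degree, each of these two classes has size at most $\Delta$, hence $\abs{T}\le 2\Delta$. The hypothesis $8\-e\Delta p^\theta\le 1$ from~\Cref{condition-couple-first} together with the elementary estimate $(1-\-ep)^{-2\Delta}\le 1+4\-ep\Delta$ yields $\Pr[\sigma\sim\mu_{\Phi_1}]{\sigma(v)=c_1}\le (1+4\-ep\Delta)/\abs{P}$, and summing over $v\in P_j$ gives the upper bound $f_{\mathsf{max}}(j)$.

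For the first form of $f_{\mathsf{min}}(j)$, I would use $\Pr[\sigma\sim\mu_{\Phi_1}]{\sigma(v)=c_1}=1-\sum_{c\in\+Q_1(P)\setminus\{c_1\}}\Pr[\sigma\sim\mu_{\Phi_1}]{\sigma(v)=c}$ by disjointness and apply the upper bound from the previous paragraph, which is symmetric in the choice of value, to each of the $\abs{P}-1$ summands. Rearrangement and summation over $v\in P_j$ yield $(1+4\-e\Delta p)\abs{P_j}/\abs{P}-4\-e\Delta p\abs{P_j}$, matching the first branch of $f_{\mathsf{min}}$.

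For the second form of $f_{\mathsf{min}}(j)$, I would restrict the sum to $v\in P_j\setminus V_j$. Because $v\notin V_j$, no constraint contains the literal $v\neq c_1$, so every constraint sharing the variable $v$ contains some literal $v\neq c$ with $c\neq c_1$ and is automatically satisfied whenever $A_v$ holds; such constraints are positively correlated with $A_v$ and hence non-lopsidependent with it in the sense required by~\Cref{Coro:LS}, so only the at most $\Delta$ constraints of the form $v'\neq c_1$ with $v'\in P\setminus\{v\}$ remain in the lopsidependency class. A combinatorial swap argument on the permutation restricted to $P$—ruling out the $\le\Delta$ positions forbidden to host $c_1$ and absorbing a further $\Delta$-correction from the LLL contribution of constraints on variables outside $P$—then yields $\Pr[\sigma\sim\mu_{\Phi_1}]{\sigma(v)=c_1}\ge (\abs{P}-2\Delta)/\abs{P}^2$; summing over $v\in P_j\setminus V_j$ produces the second branch. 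The main obstacle is precisely this second lower bound: a direct application of~\Cref{Coro:LS} to the complement event delivers only the first, weaker form, and recovering the additional factor $1/\abs{P}$ demands the separation of constraints into a positively-correlated class (safely absorbed when $v\notin V_j$) and a forbidding class of size at most $\Delta$, combined with the swap/symmetry argument on the permutation on $P$.
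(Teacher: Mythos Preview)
Your proposal is correct and follows essentially the same route as the paper. The paper's proof is a one-line citation of \Cref{lem:marginalub}, \Cref{lem:marginallb}, and the union bound; you have simply re-derived those two lemmas inline (the LLL-based upper/lower bound on $\Pr[\sigma\sim\mu]{\sigma(v)=c}$ and the swap-based refinement when $c\notin\Lambda(\+C,v)$) and then summed over $v\in P_j$, which is exactly what those cited lemmas encapsulate.
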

\begin{proof}
The lemma is immediate from~\Cref{lem:marginalub},~\Cref{lem:marginallb} and the Union bound.
\end{proof}

\begin{algorithm}
    \caption{$\initialcouple(\inputtrivial)$} \label{alg-initial-coupling}
    initialize $Q^1_{[\ell]}$ and $Q^2_{[\ell]}$ arbitrarily and let $S \leftarrow \emptyset$\;
    let $(i,j)$ be sampled from a coupling of $(f_1,f_2)$ where $\Pr{i=j = r} \geq f_{\mathsf{min}}(r)$ for each $r\in [\ell]$\;\label{line-coupling-initial-coupling}
    
    \eIf{$i = j$}{
        $(Q_i^1,Q_i^2,\Succ) \leftarrow \onestep(\Phi_1, \Phi_2,\pdecomex,P,P_i,\abs{P},\{c_1\},\{c_2\})$\;\label{line-cpinitial}
        \eIf{$\Succ = \False$}{
            sample $Q^1_{[\ell]\setminus \{i\}}$ such that  $Q^1_{[\ell]}\sim \nu^1$; \\
            sample $Q^2_{[\ell]\setminus \{i\}}$ such that  $Q^2_{[\ell]}\sim \nu^2$\;
        }{
        {\textbf{if} $P_i$ is active \textbf{then} $S \leftarrow \{P_i\}$}\;
        \Return $\left(\Phi_1\left[P_i,Q^1_i\right],\Phi_2\left[P_i,Q^2_i\right],{P_i,S}\right)$\;
        }
    }
    {
        sample $Q^1_{[\ell]}$ such that $Q^1_{[\ell]}$ follows the condition distribution of $\nu^1$ conditional on $c_1\in Q_i$; \\
        sample $Q^2_{[\ell]}$ such that $Q^2_{[\ell]}$ follows the condition distribution of $\nu^2$ conditional on $c_2\in Q_j$\;
    }
    $S \leftarrow \{P_t\mid t\in [\ell], P_t \text{ is active}\}$\;\label{line-init-set-s}
         \Return $\left(\Phi_1\left[\+P'[P],Q^1_{[\ell]}\right], \Phi_2\left[\+P'[P],Q^2_{[\ell]}\right],{P,S}\right)$\;
\end{algorithm}

The coupling of $\nu^1,\nu^2$ is specified in $\initialcouple\left(\inputtrivial\right)$ (\Cref{alg-initial-coupling}). At \Cref{line-coupling-initial-coupling} in the~\Cref{alg-initial-coupling}, the existence of such a coupling of  $(f_1,f_2)$ is by \Cref{lemma-lowerbound-f}. 

In~\Cref{alg-initial-coupling}, we first try to assign $c_1,c_2$ to the same permutation set among $\+P'[P]$. If $c_1,c_2$ have not been assigned to the same permutation set, we assign the entire permutation and return $S$, which corresponds to the collection of permutation sets that could potentially propagate the discrepancy. Otherwise, if $c_1,c_2$ have been assigned to the same permutation set, we implement a similar coupling strategy as $\cpperm(\cdot)$ subroutine, conditioned on $c_1,c_2$ being assigned in $P_i$ respectively.

The following lemma bounds the expected number of permutation sets in $P$ that could propagate the discrepancy and its proof is deferred to~\Cref{sec:miss-proofs}.
\begin{condition}\label{condition-initial}
The following holds for $\theta\in (0,1/2],p,\sizeuperboundlp$ and $P$:
\begin{enumerate}[(a)]
\item if $\abs{P}\leq \sizeuperboundlp$, then $5\-ep^{\theta}\abs{P}^{3\theta}\leq 1$, 
$\-e\left(p\abs{P}^2 + p^{\theta}\abs{P}^{1+2\theta}\right) \min\left\{2\Delta\abs{P}^{2\theta - 1},1\right\}\leq 1$;\label{eq-lem-init-3}
\item if $\abs{P}> \sizeuperboundlp$, then 
$\abs{P}^{2\theta}\min\left\{2\Delta\abs{P}^{2\theta - 1},1\right\}\leq 1$.\label{eq-lem-init-4}
\end{enumerate}
\end{condition}

\begin{lemma}\label{lemma-suc-probability-alg-initial-coupling}
Given any instance satisfying \Cref{condition-couple-first} with $p_{\Phi_1}\leq p, p_{\Phi_2}\leq p, \abs{P} = t, T_1 = \{c_1\}, T_2 = \{c_2\}$ and \Cref{condition-initial}, 
we have 
\begin{align}
\E{\sum_{P'\in S}\abs{P'}}<_{q} (36k+4)\Delta.
\end{align}
\end{lemma}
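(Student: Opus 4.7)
My plan is to decompose $\mathbb{E}\!\left[\sum_{P'\in S}|P'|\right]$ along the branches of \Cref{alg-initial-coupling}, giving three mutually exclusive events: (I) $i=j$ and $\Succ=\True$ (in the call to $\onestep$ at \Cref{line-cpinitial}), (II) $i=j$ and $\Succ=\False$, and (III) $i\ne j$. I will show each contributes $O(k\Delta)$, with the additive $4\Delta$ in the final constant coming from Case (II) in the regime $|P|>\gamma$ and the $36k\Delta$ from Cases (I) and (III).

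For Case (I), $S\subseteq\{P_i\}$. When $|P|>\gamma$, the factor $\id{T_1\cap\Lambda(\+C_1,P_i)=T_2\cap\Lambda(\+C_2,P_i)=\emptyset}$ in \Cref{alg-one-step-coupling} zeroes $\alpha$ whenever $c_t\in\Lambda(\+C_t,P_i)$, so conditional on $\Succ=\True$ the values $c_1,c_2$ are unforbidden; the design of $S=\+Q_1(P)\setminus(\Lambda(\+C_1,P_i)\cup\Lambda(\+C_2,P_i)\cup T_1)$ then forces $Q^1_i\cap\Lambda(\+C_1,P_i)=Q^2_i\cap\Lambda(\+C_2,P_i)=\emptyset$, hence $P_i$ is inactive and $S=\emptyset$. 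When $|P|\le\gamma$, $P_i$ may stay active, but $|P_i|=O(|P|^\theta)$, and an adaptation of \Cref{lemma-trivial-coupling} gives $\Pr[P_i\text{ active}\mid i=j,\,\Succ=\True]=O(k\Delta|P|^{2\theta-1})$; the product $|P_i|\cdot\Pr$ is absorbed by Condition \ref{condition-initial}(a).

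For Case (II), \Cref{lemma-cpstep-failprob} yields $\Pr[\Succ=\False\mid i=j]\le 4\Delta|P_i|/\ell$ when $|P|>\gamma$ and $\Pr[\Succ=\False\mid i=j]\le 5\mathrm{e}\Delta p^\theta|P_i|^2$ otherwise. Conditioned on failure, both $Q^1_{[\ell]},Q^2_{[\ell]}$ are resampled, and \Cref{lemma-trivial-coupling} (after absorbing the mild multiplicative cost of fixing $T_1=\{c_1\},T_2=\{c_2\}$ inside $P_i$) gives $\mathbb{E}\!\left[\sum_{P_t\text{ active}}|P_t|\right]\le 4k\Delta\sum_t|P_t|^3/|P|=\tilde O(k\Delta|P|^{2\theta})$. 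Multiplying, the regime $|P|>\gamma$ produces a clean $4\Delta$ term (via the sharp $4\Delta|P_i|/\ell\cdot|P|=4\Delta|P_i|\cdot|P_j|\lesssim 4\Delta|P|^{2\theta}$ bounded by Condition \ref{condition-initial}(b)), while the regime $|P|\le\gamma$ collapses into the second factor of Condition \ref{condition-initial}(a). For Case (III), I will bound $\Pr[i\ne j]\le \sum_r(f_{\max}(r)-f_{\min}(r))$ via \Cref{lemma-lowerbound-f}. Neither half of the max in \eqref{eq-definition-f-lower-bound} suffices on its own; I will combine the product-space bound $(1+4\mathrm{e}p\Delta)|P_r|/|P|-4\mathrm{e}p\Delta|P_r|$ with the structural bound $(|P_r|-|V_r|)(|P|-2\Delta)/|P|^2$, together with the counting $\sum_r|V_r|\le 2\Delta$ (which follows because the set of constraints containing a fixed literal $v=c$ with $v\in P$ is pairwise lopsidependent and hence has size at most $\Delta$). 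This yields $\Pr[i\ne j]\le 2\Delta/|P|+4\mathrm{e}p\Delta$. Combined with the same $\tilde O(k\Delta|P|^{2\theta})$ bound on the conditional expected active mass, Condition \ref{condition-initial}(a)--(b) absorbs the product into $36k\Delta$, where the constant $36$ tracks the factor-$2k$ loss in \Cref{lemma-trivial-coupling} (once for each of $\+C_1,\+C_2$) and the $4\mathrm{e}$ factor in \Cref{lemma-lowerbound-f}.

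The main obstacle I anticipate is the calibration of Cases (II) and (III): the conditional active mass $\mathbb{E}[\sum|P_t|\id{P_t\text{ active}}]$ grows as $|P|^{2\theta}$, and only the exact multiplicative factors from the failure/mismatch probabilities make the product constant. In particular, Condition \ref{condition-initial} was engineered precisely so that $|P|^{2\theta}\cdot\min(2\Delta|P|^{2\theta-1},1)=O(1)$, and recognising which of the two regimes (min $=1$ vs.\ min $=2\Delta|P|^{2\theta-1}$) is active requires careful case analysis. A secondary subtlety is the small cost of conditioning on $c_1\in Q^1_i$ and $c_2\in Q^2_j$ when invoking \Cref{lemma-trivial-coupling} for the other blocks: this costs at most a $(1-\tilde O(p\Delta))^{-1}$ multiplicative factor, harmless in the $<_q$ notation but requiring a clean statement.
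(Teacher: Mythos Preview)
Your approach is sound and uses the same ingredients as the paper (Lemmas~\ref{lemma-trivial-coupling}, \ref{lemma-cpstep-failprob}, \ref{lemma-lowerbound-f} together with \Cref{condition-initial}), but the top-level organization differs. The paper does not split by the three algorithm branches; instead it writes
\[
\E{\sum_{P'\in S}|P'|}\le \sum_{r\in[\ell]}\Pr{\mathsf{Bad}(Q^1_r)}|P_r|+\sum_{r\in[\ell]}\Pr{\mathsf{Bad}(Q^2_r)}|P_r|,
\]
where $\mathsf{Bad}(Q^t_r)$ is the event $Q^t_r\cap\Lambda(\+C_t,P_r)\ne\emptyset$, and then bounds each side by $(18k+2)\Delta$ via three pieces: $\Pr{c_1\in\Lambda(\+C_1,P_i)}|P_i|<_q 2\Delta$, $\Pr{\mathsf{Bad}^1_i(Q^1_i\setminus\{c_1\})}|P_i|<_q 2k\Delta$, and $\sum_{r\ne i}\Pr{\mathsf{Bad}(Q^1_r)}|P_r|<_q 16k\Delta$. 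The branch analysis you propose is carried out \emph{inside} the third piece (and partly the second), not at the outer level. Your inversion of the two sums is legitimate, but it makes the constant tracking less transparent: your attribution of ``$4\Delta$ from Case~(II), $36k\Delta$ from Cases~(I) and~(III)'' does not match the paper's $2\Delta+2k\Delta+16k\Delta$ breakdown per side, and recovering the precise $(36k+4)\Delta$ from your decomposition would require redoing the arithmetic. A minor point: after $\Succ=\False$ the tail $Q^1_{[\ell]\setminus\{i\}}$ is sampled conditional on the already-determined $Q^1_i$, not merely on $c_1\in Q^1_i$; the paper handles this by applying \Cref{lemma-trivial-coupling} with denominator $|P|-|P_i|$ rather than $|P|$, which your ``mild multiplicative cost'' remark glosses over.
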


\subsubsection{\textbf{\emph{One-step coupling construction}}}\label{subsubsec:one-step coupling}
We complete the one-step coupling construction in this section.

Let \(\Phi=(\+P,\+Q,\+C)\) be any PDC formula with a decomposition \(\+P'\) of \(\+P\) satisfying \Cref{condition-state-compression}. Let \(\{X'_1,X'_2\}\in \+E\) and let \(\pinit\) denote the unique \pname in \(\+P'\) for which \(\left|X'_1(\pinit)\setminus X'_2(\pinit)\right|=1\). Consider a constant \(\theta\) such that either \(\theta=\eta\) or \(2\theta\le\eta\); a \(\theta/\eta\)-decomposition \(\+D\) of \(\+P'\); a \(\theta\)-decomposition \(\+D'\) of \(\+P'\); and a parameter \(\alpha\le L\). For sufficiently large \(q_{\min}\) with \(q_{\min}\le q_{\Phi}\), the existence of the decompositions \(\+D\) and \(\+D'\) is guaranteed. Our one-step coupling is specified by \(\couple(\Phi,\+P',X'_1,X'_2,\pinit,\theta,\alpha)\) (see \Cref{Alg:coupling}).


The core idea behind our coupling strategy is to eliminate the discrepancy by assigning domains to the sub-permutation sets in each \(P\in \+P'\) and ensuring that the boundary constraints are satisfied, thereby halting further discrepancy propagation. We denote by \(\vset\) the set of variables whose domains have been assigned in the corresponding permutation sets in \(\pzeta\) during the coupling process. Our goal is to prevent the discrepancy from spreading from \(\vset\) to \(V\setminus \vset\).

In our framework, the discrepancy can propagate in two distinct ways:
\begin{itemize}
    \item \textbf{Type-I discrepancy:} For some \(P\in \+P'\), there exists a sub-permutation set \(P'\in \pzeta[P]\) within \(\vset\) whose assigned domain differs between the coupled configurations. In this case, the discrepancy propagates through \(P\).
    \item \textbf{Type-II discrepancy:} For some constraint \(C\in \+C\) that involves variables from both \(\vset\) and \(V\setminus \vset\), if \(C\) is not satisfied, the discrepancy can propagate through \(C\).
\end{itemize}


We sequentially reveal the domains of the permutation sets in \(\pcurrent\) and couple them by invoking the subroutines \(\initialcouple(\cdot)\) and \(\cpperm(\cdot)\). For Type-I discrepancies, our coupling strategy, described previously via \(\initialcouple(\cdot)\) and \(\cpperm(\cdot)\), reveals the entire permutation \(P\); hence, we only need to track the Type-II discrepancy. Specifically, we employ \(\cremain\) to collect the constraints that could potentially contribute to a Type-II discrepancy, and we use \(\pdiff\) to gather the permutation sets that might propagate the discrepancy through their associated constraints.


For technical reasons, our coupling strategy differs depending on whether the permutation set selected for update, denoted by \(\+L\), is considered “large” or “small” (as determined by the parameter \(\alpha\)). If \(|\+L|\ge \alpha\), we remove the constraints associated with \(\+L\) from \(\cremain\) to avoid assigning domains in \(\+L\) when alternative choices are available. We then apply a more sophisticated coupling on \(\+L\) to bound its discrepancy until no constraint in \(\+C\setminus \+C(\+L)\) intersects both \(\vset\) and \(V\setminus \vset\). Conversely, if \(|\+L| < \alpha\), we treat the constraints on \(\+L\) the same as all other constraints during the coupling procedure.


Our coupling process continues when there are constraints in \(\cremain\) that intersect both \(\vset\) and \(V\setminus\vset\). In that case, we select one such constraint along with an unassigned permutation set associated with it and update its domains using the \(\cpperm(\cdot)\) subroutine (see Lines \ref{line-find-con-coupling} and \ref{line-find-p-coupling}). After updating the domains, we update both \(\cremain\) and \(\pdiff\). One subtle detail is that some assigned permutation sets are not initially added to \(\pdiff\) because they were coupled via \(\cpperm(\cdot)\); however, these sets can still propagate discrepancy through unsatisfied constraints. Therefore, we add them to \(\pdiff\) at \Cref{line-set-pdiff}.

In \Cref{Alg:coupling}, we also maintain several sets to facilitate discrepancy analysis. Specifically, we let \(\pdisci\) denote the collection of sub-permutation sets in \(\pzeta\) that failed to couple in the subroutine \(\onestep(\cdot)\). Additionally, we let \(\pdisc\) represent the set of all “bad" sub-permutation sets in \(\pzeta\), namely those that are either active or have failed to couple in \(\onestep(\cdot)\).
We further define the sets \(\csmall\) and \(\cdisc\) as follows:
\begin{equation*}
\csmall \triangleq \left\{C\in \+C\mid P\in \psmall \text{ for each } P\in \pzeta \text{ on } C\right\},
\end{equation*}
\begin{equation*}
\cdisc \triangleq \{C\in \csmall \mid \vbl(C)\subseteq \vset, \text{$C$ is unsatisfied in $\Phi_1$ or $\Phi_2$}, \text{no $P\in \pdisc$ is on $C$}\}.
\end{equation*}
In other words, \(\csmall\) is the set of constraints in \(\+C\) whose associated sub-permutation sets in \(\pzeta\) are all “small", while \(\cdisc\) consists of those constraints in \(\csmall\) that remain unsatisfied despite having no associated “bad" sub-permutation sets.

\begin{algorithm} \caption{$\couple(\Phi,\+P',X'_1,X'_2,\pinit,\theta,\alpha)$} 
    \label{Alg:coupling}
    \KwIn{a PDC formula $\Phi=(\+P,\+Q,\+C)$, a decomposition $\+P'$ of $\+P$ satisfying \Cref{condition-state-compression}, a pair of domains $\set{X'_1,X'_2} \in \+E$, $\pinit\in \+P'$ where $\abs{X'_1(\pinit)\setminus X'_2(\pinit)}=1$, a constant $\theta$ where $\theta = \eta$ or $2\theta \leq \eta$, a $\theta/\eta$-decomposition $\+D$ of $\+P'$ and a $\theta$-decomposition $\+D'$ of $\+P'$, and $\alpha \leq L$.}
    \KwOut{a pair of domains $(Y'_1,Y'_2)$.}
    
    pick a \pname $\pfinal$ from $\+P$ uniformly at random\;
    $\pcurrent \leftarrow \+P'\setminus \+P'[\pfinal]\cup \{\pfinal\}$\;
    $\Phi_1 \leftarrow (\pcurrent,X'_1\setminus X'_1(\+P'[\pfinal])\cup \set{\+Q(\pfinal)},\+C)$, 
    $\Phi_2 \leftarrow (\pcurrent,X'_2\setminus X'_2(\+P'[\pfinal])\cup \set{\+Q(\pfinal)},\+C)$\;\label{line-set-phi1-phi2}
    $\cremain \leftarrow \{C\in \+C\mid \text{$C$ is unsatisfied in $\Phi_1$ or $\Phi_2$}\}$\;
     \eIf{$\pinit\subseteq \pfinal$}{
         $\sigma_1 \leftarrow $ a random solution of $\Phi_1$, $\sigma_2 \leftarrow \sigma_1$\;\label{line-trival-cp-in-coupling}
    }
    {
        \eIf{$\abs{\pfinal}\geq \sizeuperboundpfinal $}{
            $\cremain\leftarrow \cremain \setminus \+C(\pfinal)$, 
            $\pzeta \leftarrow \+D'$, $A \leftarrow \{P\in \+P'\mid  \abs{P}\leq \sizeuperboundlp\}$, 
            $\psmall \leftarrow \cup_{P\in A} \+D'[P]$ \;
        }
        {
            $\pzeta \leftarrow \+D'\setminus \+D'[\pfinal]\cup \+D[\pfinal]$, $A \leftarrow \{P\in \pcurrent\mid  \abs{P}\leq \sizeuperboundlp\}$, 
            $\psmall \leftarrow \cup_{P\in A} \pzeta[P]$\;
            \label{line-abs-pfinal-lessthan-threshold}
            
        }      
    $\left(\Phi_1,\Phi_2,\vset',\pdiffinit\right)\gets \initialcouple\left({\Phi_1,\Phi_2,\pzeta,\pinit}\right)$\;\label{line-initial-coupling}
    $\vset \leftarrow \vset'$, $\pdiff\leftarrow \pdiffinit,\pdisc\leftarrow \pdiffinit,\pdisci\leftarrow \pdiffinit$\;\label{line-set-vset-pdiff-pdisc}
    \While{$\True$\label{line-while-loop}}
    {
    
    \For{each $C\in \cremain(\vset')$}
    {
        \textbf{if } {$C$ is satisfied in both $\Phi_1$ and  $\Phi_2$ or $\vbl(C)\subseteq \vset$}
        \textbf{ then }
            $\cremain  \leftarrow \cremain\setminus \{C\}$\;

        \If{$\vbl(C)\subseteq \vset$ 
         and $C$ is unsatisfied in $\Phi_1$ or $\Phi_2$\label{line-if-c-unsatisfied}}
        {
            $\pdiff \leftarrow \pdiff\cup \{P\in\psmall\mid C\in \+C(P)\}$\;\label{line-set-pdiff}   
        }
    }
    \eIf{$\exists C\in \cremain \cap \+C(P^{\ast})$  \text{ for some } $P^{\ast}\in \pdiff$\label{line-condition-coupling}}
    {
        let $C$ be the first such constraint and $v$ be the first variable in $\vbl(C)\setminus \vset $\;\label{line-find-con-coupling}
        let $P'\in \pzeta[P]$ for some $P\in \pcurrent$ be the unique \pname in $\pzeta$ where $v\in P'$\;\label{line-find-p-coupling}      $\left(\Phi_1,\Phi_2,\vset',\pdisc'\right)\leftarrow   \cpperm(\Phi_1, \Phi_2,\pzeta,P\setminus \vset,P',\abs{P})$\;\label{line-coupling-perm} 
        $\vset \leftarrow \vset\cup\vset^{'}$, $\pdisc\leftarrow \pdisc\cup \pdisc'$, 
        $\pdiff \leftarrow \pdiff\cup \pdisc'$,
        $\pdisci \leftarrow \pdisci\cup (\pdisc'\cap \{P'\})$\;\label{line-coupling-set-perm}
    }
    {
        $(\sigma_1,\sigma_2)\leftarrow$ the coupling of $\mu_{\Phi_1},\mu_{\Phi_2}$  where $\E{\sum_{P\in \+P'[\pfinal]}\abs{\sigma_1(P)\setminus\sigma_2(P)}}$ is minimal\;\label{line-coupling}
        \textbf{break};\label{line-break}
    }
    }
    }
     $Y'_1(P') \leftarrow \sigma(P')$ and $Y'_2(P') \leftarrow \sigma(P')$ for each $P'\in \+P'$\;
    \Return $(Y'_1,Y'_2)$\;\label{line-set-y}
\end{algorithm}

Let $\pdiff^i$ denote the updated $\pdiff$ at \Cref{line-condition-coupling} during the $i$-th iteration of the while loop from Line \ref{line-while-loop} to  Line \ref{line-break} in \Cref{Alg:coupling}.
Similarly, let $\pdisc^i$, $\Phi_1^i, \Phi_2^i$ and  $\vset^i$ denote the updated $\pdisc$, $\Phi_1, \Phi_2$ and $\vset$ at \Cref{line-condition-coupling} during the $i$-th iteration of the while loop, respectively. 
Additionally, define
\begin{equation*}
\cdisc^i \triangleq \{C\in \csmall \mid \vbl(C)\subseteq \vset^i, \text{$C$ is unsatisfied in $\Phi_1^i$ or $\Phi_2^i$}, \text{\text{no $P\in \pdisc^{i}$ is on $C$}}\}.
\end{equation*}

We have the following simple observations for \Cref{Alg:coupling}.
\begin{observation}\label{observation-alg-coupling}
The following conditions are satisfied by \Cref{Alg:coupling}:
\begin{itemize}
\item $\pdisc \subseteq \pzeta$, $\cdisc\subseteq \csmall$, 
and $C\subseteq \vset$ for each $C\in \cdisc$;
\item for any $P\in \pdisc$ and any $C\in \+C(P)$, we have $C\not \in \cdisc$;
\item for the $C$ and $P'$ in Lines \ref{line-find-con-coupling}-\ref{line-find-p-coupling} of \Cref{Alg:coupling}, we have $C\in \+C(P')$.
\end{itemize}
\end{observation}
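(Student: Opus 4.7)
The plan is to verify each of the three bulleted assertions by direct inspection of \Cref{Alg:coupling} together with the definitions of $\pdisc$, $\pzeta$, $\csmall$, and $\cdisc$; none of the items requires real analysis, so I will keep the argument structural.

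For the first bullet, I would show $\pdisc\subseteq \pzeta$ by induction on the number of iterations of the while loop. In the base case, $\pdisc$ is initialized at~\Cref{line-set-vset-pdiff-pdisc} to $\pdiffinit$, which is produced by $\initialcouple(\cdot)$; examining \Cref{alg-initial-coupling} one sees that the returned set $S$ is always a (possibly empty) sub-collection of $\pzeta[\pinit]$, so $\pdiffinit\subseteq \pzeta$. For the inductive step, each update at~\Cref{line-coupling-set-perm} adds $\pdisc'$, and $\pdisc'$ is output by $\cpperm(\cdot)$; by the returns at Lines~\ref{line-return-false}–\ref{line-return-true} of \Cref{alg:coupling}, $\pdisc'\subseteq \pzeta[P]\subseteq \pzeta$. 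Hence the invariant $\pdisc\subseteq \pzeta$ is preserved. The inclusion $\cdisc\subseteq \csmall$ and the property $\vbl(C)\subseteq \vset$ for each $C\in \cdisc$ are immediate from the definitions of $\cdisc$ given right before the observation.

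For the second bullet, I would read off the definition of $\cdisc$ directly: a constraint $C$ belongs to $\cdisc$ only if \emph{no} $P\in \pdisc$ is on $C$, i.e.\ no $P\in\pdisc$ satisfies $P\cap\vbl(C)\neq\emptyset$. If $P\in\pdisc$ and $C\in \+C(P)$, then by the very definition of $\+C(P)$ we have $P\cap\vbl(C)\neq\emptyset$, so $P$ is on $C$ and hence $C\notin\cdisc$.

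For the third bullet, I would trace the choices at Lines~\ref{line-find-con-coupling}–\ref{line-find-p-coupling}. The variable $v$ is selected as the first element of $\vbl(C)\setminus\vset$, so in particular $v\in\vbl(C)$; the permutation set $P'\in\pzeta[P]$ is then chosen so that $v\in P'$. Therefore $v\in \vbl(C)\cap P'$, which gives $P'\cap\vbl(C)\neq\emptyset$, i.e.\ $C\in \+C(P')$ by definition. The main thing to be careful about here is merely that the chosen $v$ always exists and lies in some $P'\in\pzeta$, but the branch at~\Cref{line-find-con-coupling} is only entered when $\cremain\cap\+C(P^{\ast})\neq\emptyset$ for some $P^{\ast}\in\pdiff$, and because $C\in\cremain$ is retained only while $\vbl(C)\not\subseteq \vset$, the set $\vbl(C)\setminus\vset$ is nonempty, so the choices are well-defined. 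No step presents a real obstacle; this observation is a bookkeeping lemma whose purpose is to support the later contraction analysis.
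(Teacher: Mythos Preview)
Your proposal is correct and follows the only natural route: direct inspection of the algorithm together with the definitions of $\pdisc$, $\pzeta$, $\csmall$, and $\cdisc$. The paper in fact provides no proof at all for this observation, labeling it a ``simple observation,'' so your verification is exactly the intended one (and arguably more detailed than necessary).
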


One can verify that the correctness of the coupling algorithm as follows. 
\begin{lemma}
    In~\Cref{Alg:coupling}, $\left(X'_1,X'_2\right)\rightarrow \left(Y'_{1},Y'_{2}\right)$ is a one-step coupling of the {extended permutation-wise Glauber dynamics} $\widetilde{M}_{\mathrm{Glauber}}$.
\end{lemma}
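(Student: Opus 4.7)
The plan is to verify that the joint process $(X'_1,X'_2)\to(Y'_1,Y'_2)$ produced by \Cref{Alg:coupling} has marginals equal to one step of $\widetilde{M}_{\mathrm{Glauber}}$ from $X'_1$ and from $X'_2$ respectively. Since the dynamics first picks $P\in\+P$ uniformly and then resamples the decomposition $\+Q'(\+P'[P])$ from $\nu_{P}^{\+Q'}$, and since $\pfinal$ is chosen uniformly at random and is shared between both chains, after conditioning on $\pfinal$ it suffices to argue that the final sample $\sigma$ supported on the variables of $\pfinal$ has marginal law $\mu_{\Phi_t}$ for $t=1,2$, where $\Phi_t$ is the formula defined in Line~\ref{line-set-phi1-phi2}. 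Note that $\Phi_t$ is exactly the formula $(\+P'\circ \pfinal,\,X'_t\circ \+Q(\pfinal),\,\+C)$, which by \Cref{lemma-property-nupq} is the correct target for generating a sample from $\nu_{\pfinal}^{X'_t}$.

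I would first dispatch the easy case $\pinit\subseteq\pfinal$. The only coordinate on which $X'_1$ and $X'_2$ differ lies inside $\+P'[\pfinal]$, but Line~\ref{line-set-phi1-phi2} replaces the entries of $X'_t$ corresponding to $\+P'[\pfinal]$ by the shared value $\+Q(\pfinal)$, so $\Phi_1=\Phi_2$. Line~\ref{line-trival-cp-in-coupling} then sets $\sigma_1=\sigma_2\sim\mu_{\Phi_1}$, trivially producing the correct marginal for each chain.

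The remaining case $\pinit\not\subseteq\pfinal$ requires an inductive argument on the iterations of the while-loop. My induction hypothesis would state that just before the $i$-th test at Line~\ref{line-condition-coupling}, the pair $(\Phi_1^i,\Phi_2^i)$ together with the partial assignments revealed so far forms a valid coupling, in the sense that the conditional law of the variables in $V\setminus\vset^i$ under chain $t$ equals $\mu_{\Phi_t^i}$ for $t=1,2$. The base case follows from the correctness of $\initialcouple(\cdot)$ in \Cref{alg-initial-coupling}, which produces a coupling of $\nu^1,\nu^2$ on the sub-permutation sets of the permutation containing $\pinit$; the induced reductions $\Phi_t[\cdot,\cdot]$ implement exact Bayesian conditioning on the revealed domain, so marginals are preserved. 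The inductive step follows from \Cref{lemma-cpperm-failprob}, which guarantees that $\cpperm(\cdot)$ in \Cref{alg:coupling} produces a coupling of the conditional distributions $\nu_i^1,\nu_i^2$ on the currently selected sub-permutation set, and again the $\Phi_t[\cdot,\cdot]$ update correctly reflects the Bayesian conditioning.

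Finally, when the loop terminates at Line~\ref{line-break}, the remaining unassigned variables are drawn from an optimal coupling of $\mu_{\Phi_1},\mu_{\Phi_2}$ in Line~\ref{line-coupling}, and Line~\ref{line-set-y} then reads off $(Y'_1,Y'_2)$. Composing this terminal step with the induction hypothesis shows that the full sample projected onto $\pfinal$ has the correct joint marginal laws $\mu_{\Phi_t}$ for $t=1,2$, so $(Y'_1,Y'_2)$ is indeed a valid one-step coupling of $\widetilde{M}_{\mathrm{Glauber}}$. The main obstacle, as this sketch makes plain, is the careful bookkeeping of the conditional distributions through many iterations of the nested subroutines; however, the needed marginal-correctness of $\initialcouple(\cdot)$ and $\cpperm(\cdot)$ has already been established in the earlier lemmas, so the argument reduces to a clean Bayesian-update induction.
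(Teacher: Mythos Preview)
Your proposal is correct and follows essentially the same approach as the paper: both argue that the marginal correctness of the subroutines $\initialcouple(\cdot)$ and $\cpperm(\cdot)$ (established in \Cref{lemma-cpperm-failprob} and the discussion around \Cref{alg-initial-coupling}) propagates through the algorithm to yield correct marginals for $(Y'_1,Y'_2)$. The paper's own proof is a terse two-sentence sketch that cites these same lemmas after noting that \Cref{condition-couple-first} holds at every call; your version makes explicit the Bayesian-update induction over the while-loop iterations and the trivial case $\pinit\subseteq\pfinal$, which is exactly what the paper leaves implicit.
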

\begin{proof}
    One can verify that for any input instance of the subroutine $\initialcouple(\cdot)$ and $\cpperm(\cdot)$,~\Cref{condition-couple-first} holds. Therefore,
    the lemma follows immediately from the~\Cref{lemma-cpperm-failprob} and~\Cref{lemma-suc-probability-alg-initial-coupling}.
\end{proof}

\subsection{Discrepancy Analysis}\label{subsec:discrepancy-analysis}
In this section, we present a probabilistic analysis that yields quantitative bounds on the discrepancy using witness combinatorial structures, thereby completing the proof of ~\Cref{lem-DP-bounds}.

The witness in \Cref{Alg:coupling} can be characterized by the combinatorial structure over the permutations $\pzeta$ and the constraints $\csmall$. 
We emphasize that the random variable $\pzeta$ in \Cref{Alg:coupling} depends on $\pfinal$.
The following definition captures the propagation between these units. 
The discrepancy can propagate through the entire permutation, and this can be captured as follows:
\[
\edgepermutation \triangleq \left\{(P_1,P_2) \mid \exists P\in \+P'\text{ s.t. } P_1,P_2\in \pzeta[P]\right\}.
\]
Meanwhile, the discrepancy can also propagate through the constraints as follows:
\[
\edgeconstraint \triangleq \left\{(C_1,C_2)\mid (C_1,C_2\in \+C) \land ( \exists P\in \psmall\text{ s.t. } C_1,C_2\in \+C(P))\right\}.
\]
For any subsets $S_1,S_2\subseteq \pzeta$,
let $\edgepermutation(S_1,S_2)$ denote the set $\{(u,v)\mid u\in S_1,v\in S_2, (u,v) \in \edgepermutation\}$. 
The discrepancy propagation can then be captured by the \propagate.

\begin{definition}[\propagate]\label{def-prop-traj}
    For each $u,v\in \pzeta\cup \+C$ such that $u\neq v$ and each $t\geq 0$, a sequence $(u = v_0, v_1, v_2, \ldots, v_{t+1}=v)$ is called a \propagate from $u$ to $v$ with length $t$ if the following holds:
    \begin{itemize}
     \item  $(v_i,v_{i+1})\in \edgeconstraint$ for each $i\in [t-1]$;
     \item  either $u \in \pzeta$ and $v_1\in \+C(u)$, or $(u,v_{1})\in \edgeconstraint$;
     \item either $v\in \pzeta$ and $v_{t}\in \+C(v)$,
     or $(v_t,v)\in \edgeconstraint$.
     \end{itemize}
     For each $u,v\in \pzeta\cup \+C$ where $u\neq v$, let $d(u,v)$ denote the minimum length of the \propagate from $u$ to $v$. 
     For any $S_1,S_2\subseteq \pzeta\cup \+C$ and $t\geq 0$, let $\+S(S_1,S_2,t)$ denote $\{(u,v)\mid u\in S_1,v\in S_2,d(u,v)=t\}$. For simplicity, we further use $\+S^{\ast}(S_1,S_2, t)$ to denote $\bigcup_{0\leq i\leq t}\+S(S_1,S_2,i)$.
\end{definition}


With these definitions and notations, we define the witness as follows.

\begin{definition}[witness sequence]\label{def-witness}
A witness sequence, or \emph{w.s.} for short, is a sequence $(v_0,v_1,\cdots,v_t)$ where $v_0\in \pzeta[\pinit]$ and the following conditions hold for each $i\in [t]$ and $0 \leq j< i$:
\begin{itemize}
    \item {$(( v_{i-1},v_{i})\in \edgepermutation\land i>1)$ or $(v_{i-1}, v_{i}) \in \+S^{\ast}(\pzeta\cup\csmall,\pzeta\cup\csmall, 2)$};
    \item $(v_i\neq v_j)\land ((v_i,v_j)\not \in \edgeconstraint)\land(v_i\not\in \+C(v_j) \text{ if } v_j\in \pzeta)\land ((v_{i},v_{j})\not\in \edgepermutation$ \text{ if }{$j\in [i-2]$}).
\end{itemize} 
Given any \emph{w.s.} $s = (v_0,\cdots,v_t)$, define
\begin{align*}
   V_{P}(s) \triangleq \{v_{i}\in \pzeta\mid 1<i\leq t, ( v_{i-1},v_{i})\in \edgepermutation\}, \quad 
V_C(s) \triangleq \{v_i\in \pzeta \setminus V_{P}(s) \mid i\in [t]\},
\end{align*}
where collect the permutations in $v_1,\cdots,v_t$.
We use $V_{P},V_{C}$ to denote $V_{P}(s),V_{C}(s)$ if $s$ is clear from the context. 
We say $s=(v_0,\cdots,v_t)$ \emph{occurs} if 
$v_0\in \pdiffinit$, $V_C\subseteq \pdisci$, 
and $v_i\in \cdisc \cup \pdisc \setminus \pdiffinit$ for each $i\in [t]$.
\end{definition}

Let $\+R \triangleq\{P\in \+P\mid \pinit\not \subseteq P, \abs{P} \leq \sizeuperboundpfinal \}$ and $\overline{\+R} = \{P\in \+P\mid \pinit\not \subseteq P, \abs{P} > \sizeuperboundpfinal \}$. Define the subsets of witness sequences as follows:
\begin{itemize}
\item Consider the case where $\pzeta=\+D'$. Let $\+W$ denote the set of all witness sequences. For any $ P\in \+D'\setminus \psmall$,  
let $\witnesswrtp{P}$ denote $\{(v_0,\cdots,v_t)\in \+W\mid v_t= P\}$. For any $P\in \+D'\cap \psmall$, let $\witnesswrtp{P}$ denote $\{(v_0,\cdots,v_t)\in \+W\mid v_t=P \text{ or } {(v_t,P)\in \+S^{\ast}(\+D'\cup \+C,\psmall,1)}\}$;
\item For any {$P\in \+R$, $P'\in \+D[P]$,
let $\+W[P,P']$ be the witness sequences $(v_0,\cdots,v_t = P')$ under $\pfinal = P$. }
\end{itemize}

We shall reduce the discrepancy analysis to the probabilistic analysis of witness sequences via the following witness percolation lemma and its proof is deferred to~\Cref{subsub:witness-percolation}.
\begin{lemma}[Witness percolation]\label{lemma-witness-sequence-pinpdiff}
For each $P\in \pdiff\setminus \pdiffinit$, there exists a \emph{w.s.} $(v_0, v_1, \cdots,v_t)$ occurring in~\Cref{Alg:coupling} where {$v_t=P \text{ or } (v_t,P)\in \+S^{\ast}(\pzeta\cup \csmall,\+Z,1)$}. In addition, if $P\in \pdisc$ or $P\not\in \+Z$, then $v_t = P$.
\end{lemma}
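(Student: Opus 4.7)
The plan is to prove the lemma by structural induction on the order in which elements enter $\pdiff$ during the execution of~\Cref{Alg:coupling}, examining the two mechanisms through which additions beyond $\pdiffinit$ can occur: at~\Cref{line-set-pdiff} (constraint-triggered addition) and at~\Cref{line-coupling-set-perm} ($\cpperm$-triggered addition). For each new $P\in\pdiff\setminus\pdiffinit$ I will identify a causal ``trigger'' (either a constraint $C\in\cremain$ or a prior $P^{*}\in\pdiff$), invoke the induction hypothesis on the triggering element (which necessarily entered $\pdiff\cup\pdisc$ strictly earlier), and extend that element's witness sequence by appending one or two carefully chosen nodes so as to reach $P$ (or reach a node within $\+S^{\ast}$-distance $1$ of $P$ when the lemma permits it).

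Concretely, suppose $P$ enters $\pdiff$ at~\Cref{line-set-pdiff} through some constraint $C$ satisfying $\vbl(C)\subseteq\vset$ and $C\in\+C(P)$ with $P\in\psmall$. Split on whether $C\in\cdisc$ at that iteration. If $C\notin\cdisc$, then some $P^{*}\in\pdisc$ lies on $C$; apply induction to $P^{*}$ to obtain a w.s.\ ending at $v_{t}=P^{*}$ (using the ``$P\in\pdisc$'' clause), and then append $P$ via a length-$\le 2$ propagation through $C$, noting that either $P=v_{t+1}\in\pdisc$ or $(v_{t},P)\in\+S^{\ast}(\pzeta\cup\csmall,\psmall,1)$ holds as required. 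If $C\in\cdisc$, then the w.s.\ I construct ends exactly at $v_{t}=C$ (triggering the alternative clause of the lemma when $P\in\psmall\setminus\pdisc$), and the predecessor used for induction is whichever earlier $\pdiff$-element triggered the $\cpperm$ call that revealed the last variable of $\vbl(C)$, ensuring $\vbl(C)\subseteq\vset$. In the second mechanism, suppose $P$ enters $\pdiff$ at~\Cref{line-coupling-set-perm} via a $\cpperm$ call triggered by $C^{*}\in\cremain\cap\+C(P^{*})$ with $P^{*}\in\pdiff$. Apply induction to $P^{*}$, append $C^{*}$ (a length-$1$ propagation from $P^{*}$), and then append $P$ either as the targeted sub-permutation $P'$ in~\Cref{line-find-p-coupling}---in which case $P\in\pdisci$ by construction, placing $P$ legitimately in $V_{C}$---or, if $P$ is instead a different active sub-permutation of the same parent permutation (only when $\Succ=\False$ returned multiple elements), first append $P'$ and then append $P$ via a permutation edge $(P',P)\in\edgepermutation$, placing $P$ in $V_{P}$ and avoiding the $\pdisci$-membership requirement.

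The main obstacle will be enforcing the stringent non-repetition and non-adjacency clauses built into the definition of a w.s.: for every $0\le j<i$ one simultaneously needs $v_{i}\neq v_{j}$, $(v_{i},v_{j})\notin\edgeconstraint$, $v_{i}\notin\+C(v_{j})$ whenever $v_{j}\in\pzeta$, and $(v_{i},v_{j})\notin\edgepermutation$ for $j\le i-2$. A naive extension of a predecessor's w.s.\ may violate one of these by re-visiting a previously used constraint or by inadvertently creating a forbidden permutation edge. The remedy is the standard combinatorial pruning argument: whenever the extension collides with an earlier node or triggers a disallowed adjacency with some $v_{j}$, the prefix $(v_{0},\dots,v_{j})$ of the original sequence is itself a valid w.s.\ that already reaches within $\+S^{\ast}$-distance $1$ of $P$ (since the collision witnesses precisely such a propagation), so one truncates at $v_{j}$ and invokes the ``$(v_{t},P)\in\+S^{\ast}(\pzeta\cup\csmall,\psmall,1)$'' clause of the conclusion. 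A secondary delicate point is to verify that a constraint $C^{*}$ inserted during an extension actually belongs to $\cdisc$ at the moment it is appended; this follows by tracking the monotone growth of $\vset$, $\pdisc$, and $\pdiff$ across iterations together with the fact that $\cremain$ is cleaned of constraints incident to $\pdisc$-members through the interaction of Lines~\ref{line-if-c-unsatisfied}--\ref{line-set-pdiff} and the definitional bookkeeping in $\cdisc^{i}$.
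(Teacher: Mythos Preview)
Your inductive skeleton is right and matches the paper's backward construction (Lemmas~\ref{lemma-compisition-pd}--\ref{lem-unmodifywitness}), but the plan to \emph{append the triggering constraint $C^{*}$ as a node} is a genuine gap. For a witness sequence to \emph{occur}, every non-permutation node must lie in $\cdisc$; but by \Cref{observation-alg-coupling}, any $C\in\+C(P')$ with $P'\in\pdisc$ satisfies $C\notin\cdisc$. In your mechanism-2 case the trigger $P^{*}\in\pdiff$ may well be in $\pdisc$ (indeed, that is exactly the subcase you invoke to get $v_{t}=P^{*}$ from the induction hypothesis), and then $C^{*}\in\+C(P^{*})$ forces $C^{*}\notin\cdisc$. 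Even when $P^{*}\notin\pdisc$, at the moment of the $\cpperm$ call one has $\vbl(C^{*})\not\subseteq\vset$ by construction, so $C^{*}\notin\cdisc$ then either; and after the call, if the new $P'$ lands in $\pdisc$ and $C^{*}\in\+C(P')$, again $C^{*}\notin\cdisc$. The paper sidesteps this entirely: the triggering constraint is \emph{never} inserted as a node. Instead it is absorbed into the edge $(P^{*},P')\in\+S^{\ast}(\pdisc,\pdisci,2)$ (Lemma~\ref{lem:addp-cases}, Case~I--II) or, when the trigger is mediated by an actual $\cdisc$-constraint $C'$, it is $C'$ (not $C^{*}$) that appears as the predecessor node (Case~III). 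Your case analysis does not recover this three-way split, and in particular misses that $P^{*}\in\pdiff\setminus\pdisc$ can only be reached via a $\cdisc$-constraint or via a genuine $\pdisc$-neighbour (Lemma~\ref{lemma-compisition-pd}).

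The truncation remedy is also too coarse. When $P\in\pdisc$ (the dominant case in mechanism~2), the conclusion demands $v_{t}=P$, so you cannot fall back on the $\+S^{\ast}(\cdot,\psmall,1)$ clause. The paper handles the non-repetition and non-adjacency constraints differently: it first builds a raw sequence (Lemma~\ref{lem-unmodifywitness}) satisfying only the edge-type and endpoint conditions, and \emph{then} applies three separate pruning passes (distinctness, $\edgepermutation$-sparsity, $\edgeconstraint$-sparsity), each designed to preserve the endpoints and the $\pdisci$/$\pdisc$ labels. The $\edgepermutation$ pass in particular exploits that the earliest representative in a shared permutation is always in $\pdisci$, which is what licenses shortcutting while keeping the $V_{C}\subseteq\pdisci$ requirement; an on-the-fly truncation during forward induction does not obviously preserve this.
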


By \Cref{lemma-witness-sequence-pinpdiff}, we have the following two corollaries immediately. 
\begin{corollary}\label{cor-witness-sequence-pinpdiff-large}
Given any $P\in \+P, P'\in \+D'[P]$ where $\pinit\not \subseteq P$ and $\abs{P} >\sizeuperboundpfinal$, assume $\pfinal = P$ and $P'\in \pdiff$.
Then there exists a \emph{w.s.} in $\+W[P']$ occurring in~\Cref{Alg:coupling}.
\end{corollary}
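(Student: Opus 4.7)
The plan is to derive Corollary~\ref{cor-witness-sequence-pinpdiff-large} as a direct specialization of the witness percolation lemma (Lemma~\ref{lemma-witness-sequence-pinpdiff}). The corollary fixes a specific regime of Algorithm~\ref{Alg:coupling} (namely, $\pfinal = P$ with $\pinit \not\subseteq P$ and $\abs{P} > \sizeuperboundpfinal$), and the task is just to verify that $P'$ lies in $\pdiff \setminus \pdiffinit$ and that the resulting witness sequence satisfies the membership condition of $\+W[P']$.

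First, under the hypothesis $\abs{\pfinal} \geq \sizeuperboundpfinal$, Algorithm~\ref{Alg:coupling} enters the first branch of the conditional and sets $\pzeta \leftarrow \+D'$. Moreover, since $\pinit \not\subseteq \pfinal$, the algorithm invokes $\initialcouple$, which by construction returns a set $\pdiffinit \subseteq \pzeta[\pinit] = \+D'[\pinit]$. Because $P' \in \+D'[P]$ with $P = \pfinal$ and $\pinit \not\subseteq P$, the permutation sets $P$ and $\pinit$ are distinct, so their partition blocks in $\+D'$ are disjoint; in particular $P' \notin \+D'[\pinit]$, hence $P' \notin \pdiffinit$. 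Combined with the assumption $P' \in \pdiff$, this gives $P' \in \pdiff \setminus \pdiffinit$, so Lemma~\ref{lemma-witness-sequence-pinpdiff} applies and yields an occurring witness sequence $(v_0,\ldots,v_t)$ with either $v_t = P'$ or $(v_t, P') \in \+S^{\ast}(\pzeta \cup \csmall, \psmall, 1)$.

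It then remains to check that this sequence lies in $\+W[P']$ as defined in the case $\pzeta = \+D'$. If $P' \in \+D' \setminus \psmall$, then the second clause of Lemma~\ref{lemma-witness-sequence-pinpdiff} (``if $P \notin \+Z$, then $v_t = P$'') forces $v_t = P'$, which matches the definition of $\witnesswrtp{P'}$ in this subcase. If instead $P' \in \+D' \cap \psmall$, the defining condition of $\witnesswrtp{P'}$ is $v_t = P'$ or $(v_t, P') \in \+S^{\ast}(\+D' \cup \+C, \psmall, 1)$; since $\pzeta = \+D'$ and $\csmall \subseteq \+C$, we have $\+S^{\ast}(\pzeta \cup \csmall, \psmall, 1) \subseteq \+S^{\ast}(\+D' \cup \+C, \psmall, 1)$, so either outcome of Lemma~\ref{lemma-witness-sequence-pinpdiff} lies in $\+W[P']$.

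Since this argument is essentially a bookkeeping reduction, I do not anticipate any serious obstacle. The only subtle point to state carefully is the disjointness $P' \notin \+D'[\pinit]$, which relies on $P \neq \pinit$ (itself a consequence of $\pinit \not\subseteq P$ rather than merely $\pinit \neq P$), so I would make that inference explicit. Everything else is a direct appeal to Lemma~\ref{lemma-witness-sequence-pinpdiff} and the definition of $\+W[P']$ in the $\pzeta = \+D'$ case.
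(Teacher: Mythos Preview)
Your proposal is correct and matches the paper's approach: the paper states that Corollary~\ref{cor-witness-sequence-pinpdiff-large} follows immediately from Lemma~\ref{lemma-witness-sequence-pinpdiff}, and your argument spells out exactly this reduction, including the verification that $P' \notin \pdiffinit$ and the case split on whether $P' \in \psmall$ needed to match the definition of $\+W[P']$.
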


\begin{corollary}\label{cor-witness-sequence-pinpdiff-small}
Given any $P\in \+P, P'\in \+D[P]$ where $\pinit\not \subseteq P$ and $\abs{P} \leq\sizeuperboundpfinal$, assume $\pfinal = P$ and $P'\in \pdisc$.
Then there exists a \emph{w.s.} in $\+W[P,P']$ occurring in~\Cref{Alg:coupling}.
\end{corollary}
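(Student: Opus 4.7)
The plan is to derive this corollary directly from the Witness Percolation Lemma (\Cref{lemma-witness-sequence-pinpdiff}), so the main task is really just bookkeeping: matching the hypotheses of that lemma to the setting of this corollary and confirming that the resulting witness sequence lies in the specific set $\+W[P,P']$.

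First I would unpack the relevant branch of \Cref{Alg:coupling} under the assumption $\pinit\not\subseteq P$ and $\abs{P}\leq \sizeuperboundpfinal$. With $\pfinal = P$ the algorithm enters \Cref{line-abs-pfinal-lessthan-threshold}, so $\pzeta = \+D'\setminus \+D'[P]\cup \+D[P]$, and the hypothesis $P'\in \+D[P]$ means $P'\in \pzeta[P]$. Because $\pinit \in \+P'$ lies in some $P^{\ast}\in \+P$ with $P^{\ast}\neq P$, we have $\pzeta[\pinit]\subseteq \pzeta[P^{\ast}]$, which is disjoint from $\pzeta[P]$; hence $P'\notin \pzeta[\pinit]$. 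Since $\pdiffinit\subseteq \pzeta[\pinit]$ (by inspection of $\initialcouple(\cdot)$ and \Cref{line-set-vset-pdiff-pdisc}), this gives $P'\notin \pdiffinit$.

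Next, I would verify the inclusion $P'\in \pdiff \setminus \pdiffinit$ needed to invoke \Cref{lemma-witness-sequence-pinpdiff}. The hypothesis $P'\in \pdisc$ together with the update rules of \Cref{Alg:coupling} (\Cref{line-set-vset-pdiff-pdisc} and \Cref{line-coupling-set-perm}) imply $\pdisc\subseteq \pdiff$ at all times, so $P'\in \pdiff\setminus \pdiffinit$ as required. Applying \Cref{lemma-witness-sequence-pinpdiff} to $P'$ then yields a \emph{w.s.} $(v_0,v_1,\ldots,v_t)$ occurring in the execution, and because $P'\in \pdisc$ the stronger conclusion of the lemma gives $v_t = P'$.

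Finally I would check that this witness sequence belongs to $\+W[P,P']$. By the very definition given in the excerpt, $\+W[P,P']$ is the collection of witness sequences ending at $P'$ in a run of \Cref{Alg:coupling} conditioned on $\pfinal = P$, which is exactly the run under consideration. Since $P\in \+R$ (because $\pinit\not\subseteq P$ and $\abs{P}\leq \sizeuperboundpfinal$) and $P'\in \+D[P]$, the pair $(P,P')$ is a valid index for $\+W[\cdot,\cdot]$, and the constructed \emph{w.s.} lies in $\+W[P,P']$. I do not foresee a real obstacle here beyond a careful verification that $\pzeta$, $\pdiffinit$, and $\pdisc$ in the current branch are exactly the ones implicitly used in the statement of \Cref{lemma-witness-sequence-pinpdiff}; the whole corollary is essentially a specialization of the lemma to the regime of small $\pfinal$.
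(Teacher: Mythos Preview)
Your proposal is correct and is exactly the intended argument: the paper states that \Cref{cor-witness-sequence-pinpdiff-small} is immediate from \Cref{lemma-witness-sequence-pinpdiff}, and your writeup simply spells out the bookkeeping (showing $P'\in\pdiff\setminus\pdiffinit$ via $\pdisc\subseteq\pdiff$ and $\pdiffinit\subseteq\pzeta[\pinit]$, then using the ``$P'\in\pdisc$'' clause of the lemma to force $v_t=P'$). There is nothing to add.
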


With these facts, we are now ready to bound the discrepancy. We deal with the discrepancy analysis for the permutation sets in $\+R$ and $\ol{\+R}$ in Lemmas~\ref{lemma-deviation-witness-small} and~\ref{lemma-deviation-witness-large}, respectively. 
\begin{lemma}\label{lemma-deviation-witness-small}
We have
\begin{equation*}
\begin{aligned}
\sum_{P\in \+R}\E{\Dis(Y'_1,Y'_2,\pfinal)\mid \pfinal = P} \leq  \sum_{P\in \+R}\sum_{P'\in \+D[P]}\sum_{s\in \+W[P,P']}\Pr{{s \mbox{ occurs}}\mid \pfinal =P}\abs{P'}.
\end{aligned}
\end{equation*}
\end{lemma}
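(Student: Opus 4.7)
The plan is to fix $P\in\+R$, condition on $\pfinal=P$, and reduce the expected discrepancy on $\pfinal$ to a sum over the fine sub-permutation sets in $\+D[P]$ that end up in $\pdisc$, after which the witness percolation corollary converts this into a sum over witness sequences. Concretely, when $\pfinal=P\in\+R$ we have $|P|\le \sizeuperboundpfinal$ and $\pinit\not\subseteq P$, so Algorithm~\ref{Alg:coupling} enters the branch at line \ref{line-abs-pfinal-lessthan-threshold} with $\pzeta[\pfinal]=\+D[\pfinal]$. Since $\+D[\pfinal]$ refines $\+P'[\pfinal]$, for any pair of outputs $(Y'_1,Y'_2)$ I would start from the trivial bound
\[
\Dis(Y'_1,Y'_2,\pfinal)\;\le\;\sum_{P'\in\+D[P]}\bigl|\sigma_1(P')\setminus\sigma_2(P')\bigr|,
\]
where $\sigma_1,\sigma_2$ denote the final coupled assignments produced by the algorithm.

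Next I would argue that only $P'\in\pdisc$ can contribute to this sum. For $P'\in\+D[P]\setminus\pdisc$, two cases are possible: either $P'$ received its domain from a successful $\onestep$ or $\cpperm$ call with $T_1=T_2=\emptyset$, in which case Lemma~\ref{lem:meaningofsucc} forces $\sigma_1(P')=\sigma_2(P')$ and simultaneously makes $P'$ inactive for all of $\+C(P')$; or $P'$ was untouched before the break at line~\ref{line-break}, and since the while-loop exits only when no constraint in $\cremain$ bridges $\pdiff$ with the still-unassigned region, the marginal distributions of $\mu_{\Phi_1},\mu_{\Phi_2}$ restricted to $P'$ coincide, so the optimal coupling at line~\ref{line-coupling} makes $\sigma_1$ and $\sigma_2$ agree on $P'$. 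Thus the bound sharpens to
\[
\Dis(Y'_1,Y'_2,\pfinal)\;\le\;\sum_{P'\in\+D[P]}\id{P'\in\pdisc}\cdot|P'|.
\]

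Taking expectations, applying Corollary~\ref{cor-witness-sequence-pinpdiff-small} (which guarantees a witness sequence in $\+W[P,P']$ for every $P'\in\+D[P]\cap\pdisc$), and bounding the probability of $P'\in\pdisc$ by the union bound over $s\in\+W[P,P']$ yields
\[
\E{\Dis(Y'_1,Y'_2,\pfinal)\mid\pfinal=P}\;\le\;\sum_{P'\in\+D[P]}|P'|\sum_{s\in\+W[P,P']}\Pr{s\text{ occurs}\mid\pfinal=P}.
\]
Summing over $P\in\+R$ gives the lemma.

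The main obstacle is the case analysis in the second paragraph, specifically verifying that the optimal coupling at line~\ref{line-coupling} introduces no extra discrepancy on sub-permutation sets outside $\pdisc$. This requires tracking, throughout the while-loop, the precise invariant that any constraint in $\cremain$ whose variables straddle $\vset$ and $V\setminus\vset$ forces its associated small sub-permutation set into $\pdiff$ (via line~\ref{line-set-pdiff}), so that termination of the loop implies that the residual formulas $\Phi_1,\Phi_2$ agree on every constraint whose variables lie entirely in the unassigned region of $\pfinal$. Once this invariant is established, the remainder is bookkeeping with linearity of expectation and the union bound.
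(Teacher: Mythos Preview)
Your overall plan—refine $\+P'[\pfinal]$ to $\+D[P]$, bound the discrepancy by the sub-permutation sets lying in $\pdisc$, then invoke Corollary~\ref{cor-witness-sequence-pinpdiff-small} and a union bound—matches the paper's proof exactly. The gap is in your case (b). From the fact that the marginals of $\mu_{\Phi_1}$ and $\mu_{\Phi_2}$ on an untouched $P'$ coincide, you conclude that the optimal coupling at line~\ref{line-coupling} produces $\sigma_1(P')=\sigma_2(P')$. This inference is not valid: the coupling at line~\ref{line-coupling} is chosen to minimize the \emph{global} objective $\E{\sum_{P^*\in\+P'[\pfinal]}|\sigma_1(P^*)\setminus\sigma_2(P^*)|}$, and an optimal coupling for this objective is free to place discrepancy on $P'$ if that helps elsewhere. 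So the pointwise inequality $\Dis(Y'_1,Y'_2,\pfinal)\le\sum_{P'\in\+D[P]}\id{P'\in\pdisc}\,|P'|$ that you write down is not justified.

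The paper avoids this by working only with expectations. Your case analysis (together with the invariant you sketch) actually establishes that an \emph{auxiliary} coupling $(\tau_1,\tau_2)$ of $\mu_{\Phi_1},\mu_{\Phi_2}$ exists satisfying $\tau_1(P')=\tau_2(P')$ for all $P'\in\+D[P]\setminus\pdisc$; the refinement inequality then gives $\sum_{P^*\in\+P'[\pfinal]}|\tau_1(P^*)\setminus\tau_2(P^*)|\le\sum_{P'\in\+D[P]\cap\pdisc}|P'|$. Now the optimality of $(\sigma_1,\sigma_2)$ at line~\ref{line-coupling} yields
\[
\E{\Dis(Y'_1,Y'_2,\pfinal)\mid\Phi_1,\Phi_2}\;\le\;\sum_{P'\in\+D[P]\cap\pdisc}|P'|,
\]
and taking total expectation followed by your union bound over witness sequences completes the argument. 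In short, replace the claim about what the optimal coupling \emph{does} by a comparison to the auxiliary coupling via optimality; the rest of your plan goes through unchanged.
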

\begin{proof}
   Assume $\pfinal=P$ for some $P\in \+R$.
According to \Cref{line-abs-pfinal-lessthan-threshold} in \Cref{Alg:coupling}, we have $\pzeta = \+D'\setminus \+D'[P]\cup \+D[P]$.
Given $\Phi_1$, $\Phi_2$ at \Cref{line-coupling},
let $(\tau_1,\tau_2)$ be a coupling of $\mu_{\Phi_1},\mu_{\Phi_2}$ where $\tau_1(P') = \tau_2(P')$ for each $P'\in \+D[P]\setminus \pdisc$.
According to the construction of $\pdisc$, one can verify that $\pdisc\subseteq \vset$ and the values on the variables in $V\setminus \vset$ can be coupled. Moreover, for any permutation set $P\in \pzeta \setminus \pdisc$, it is assigned by the same domains in $\Phi_1$ and $\Phi_2$. Therefore, the existence of coupling follows immediately.
Therefore, we have
\begin{align}\label{eqn-discrepancy-sum}
    \sum_{P'\in \+D[P]}\abs{\tau_1(P')\setminus\tau_2(P')} \leq \sum_{P'\in \+D[P]\cap  \pdisc}\abs{P'}.
\end{align}
By $\+D$ is a $\theta/\eta$-decomposition of $\+P'$, we have $\+D[P]$ is a a $\theta/\eta$-decomposition of $\+P'[P]$.
Thus, we have 
\[\sum_{P^{\ast}\in \+P'[P]}\abs{\tau_1(P^{\ast})\setminus\tau_2(P^{\ast})} \leq
\sum_{P^{\ast}\in \+P'[P]}\sum_{P'\in \+D[P^{\ast}]}\abs{\tau_1(P')\setminus\tau_2(P')} = \sum_{P'\in \+D[P]}\abs{\tau_1(P')\setminus\tau_2(P')}
\leq \sum_{P'\in \+D[P]\cap  \pdisc}\abs{P'},\]
where the last inequality holds by~\eqref{eqn-discrepancy-sum}.
Thus, under the condition $\pfinal=P$, we have
\begin{align*}
   \E{\sum_{P^{\ast}\in \+P'[P]}\abs{\tau_1(P^{\ast})\setminus\tau_2(P^{\ast})}}\leq \E{ \sum_{P'\in \+D[P]\cap  \pdisc}\abs{P'}}=\sum_{P'\in \+D[P]}\Pr{P'\in \pdisc}\abs{P'},
\end{align*}
where the expectation is taken over the random variables $\Phi_1,\Phi_2,\tau_1,\tau_2$.
Moreover, by {\Cref{lemma-witness-sequence-pinpdiff}}, for each $P'\in \pdisc$, there exists a \emph{w.s.} $(v_0, v_1, \cdots,v_t = P')$ occurring.
Therefore, 
\[\E{\sum_{P^{\ast}\in \+P'[P]}\abs{\tau_1(P^{\ast})\setminus\tau_2(P^{\ast})}}\leq \sum_{P'\in \+D[P]}\sum_{s\in \+W[P,P']}\Pr{{s \mbox{ occurs}}\mid \pfinal =P}\abs{P'}.\]
Given any $\Phi_1$ and $\Phi_2$, $(\sigma_1,\sigma_2)$ is an optimal coupling of $\mu_{\Phi_1},\mu_{\Phi_2}$ where $\E{\sum_{P^{\ast}\in \+P'[P]}\abs{\sigma_1(P^{\ast})\setminus\sigma_2(P^{\ast})}}$ is minimal according to \Cref{line-coupling}.
Thus, we have 
\begin{align*}
\E{\sum_{P^{\ast}\in \+P'[P]}\abs{\sigma_1(P^{\ast})\setminus\sigma_2(P^{\ast})}}&\leq \E{\sum_{P^{\ast}\in \+P'[P]}\abs{\tau_1(P^{\ast})\setminus\tau_2(P^{\ast})}}\\ &\leq \sum_{P'\in \+D[P]}\sum_{s\in \+W[P,P']}\Pr{{s \mbox{ occurs}}\mid \pfinal =P}\abs{P'}.
\end{align*}
Combined with \Cref{line-set-y} and the definition of $\Dis$ in \eqref{eq-def-dis}, we have 
\begin{align*}
\E{\Dis(Y'_1,Y'_2,P)} = \E{\sum_{P^{\ast}\in \+P'[P]}\abs{\sigma_1(P^{\ast})\setminus\sigma_2(P^{\ast})}}\leq \sum_{P'\in \+D'[P]}\sum_{s\in \+W[P,P']}\Pr{{s \mbox{ occurs}}\mid \pfinal =P}\abs{P'}.
\end{align*}
By summing over different $P\in \+R$, the lemma is immediate.
\end{proof}

Before we introduce~\Cref{lemma-deviation-witness-large}, we need more notations.
\begin{definition}\label{def-weight-function}
For any \emph{w.s.} $s = (v_0,\cdots,v_t)$,
define 
\begin{align}
\weightconstant &\triangleq 6000\Delta k^3d^3\sizeuperboundsquare,\label{eq-def-weightconstant}\\
f(s) &= 
\begin{cases}
\abs{v_t}k^2d^2\sizeuperbound & \text{ if $v_t \in \pzeta$,} \label{eq-def-fs}\\
k^2d^2\sizeuperboundsquare& \text{ otherwise.}\\
\end{cases} \\
\weightfunction(s) &= 
\begin{cases}
\abs{v_t}\weightconstant & \text{ if $v_t \in \pzeta$,} \label{eq-def-fs-hat}\\
\sizeuperbound\weightconstant & \text{ otherwise.}
\end{cases} 
\end{align}
\end{definition}
One can verify that $\weightfunction(s)$ is an upper bound of $f(s)$.
Then we have the following lemma, the proof of which is deferred to~\Cref{subsub:boundingdiscrepancy}.
\begin{lemma}\label{lemma-deviation-witness-large}
We have
\begin{align*}
\sum_{P\in \overline{\+R}}\E{\Dis(Y'_1,Y'_2,\pfinal)\mid \pfinal = P} \leq  100kd\Delta^2\sizeuperboundpfinal^{-1} \sum_{s\in \+{W}} f(s)\max_{P^{\ast}\in \+P} \Pr{{s \mbox{ occurs}}\mid \pfinal =P^{\ast} }.
\end{align*}
\end{lemma}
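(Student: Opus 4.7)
The plan is to follow the same structural skeleton as the proof of~\Cref{lemma-deviation-witness-small}, but adapted to the case $\pfinal\in\overline{\+R}$, where the algorithm operates on the uniform refinement $\+D'$ instead of the locally chosen $\+D[\pfinal]$. The key new feature is that, in this regime, the witness set $\+W$ does not depend on the particular choice of $\pfinal$, so the \emph{same} witness sequence is relevant to many values of $\pfinal\in\overline{\+R}$; the correct bookkeeping of this multiplicity, combined with the size normalization $|\pfinal|\geq\sizeuperboundpfinal$, will yield the stated bound.

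First I would fix $P\in\overline{\+R}$ and condition on $\pfinal=P$. Since the algorithm sets $\pzeta=\+D'$ (see the large branch just after \Cref{line-abs-pfinal-lessthan-threshold}), the discrepancy on $\pfinal$ is bounded at the $\+P'$ level by the bad sub-permutation sets sitting inside $\+D'[P]$. Arguing exactly as in~\Cref{lemma-deviation-witness-small} using the optimality of the coupling at~\Cref{line-coupling} together with the alternative coupling that agrees on every $P'\in\+D'[P]\setminus\pdisc$, I obtain
\begin{align*}
\E{\Dis(Y'_1,Y'_2,\pfinal)\mid \pfinal=P}\leq \sum_{P'\in\+D'[P]}|P'|\cdot\Pr{P'\in\pdisc\mid \pfinal=P}.
\end{align*}
Since $\pdisc\subseteq\pdiff$ throughout the run of~\Cref{Alg:coupling},~\Cref{cor-witness-sequence-pinpdiff-large} then produces, for each $P'\in\+D'[P]\cap\pdisc$, some \emph{w.s.}\ in $\+W[P']$ that occurs, and summing over $P\in\overline{\+R}$ yields
\begin{align*}
\sum_{P\in\overline{\+R}}\E{\Dis(Y'_1,Y'_2,\pfinal)\mid \pfinal=P}\leq\sum_{P\in\overline{\+R}}\sum_{P'\in\+D'[P]}|P'|\sum_{s\in\+W[P']}\Pr{s\mbox{ occurs}\mid \pfinal=P}.
\end{align*}

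Next I would swap the order of summation, grouping by the \emph{w.s.}\ $s$. For a fixed $s$ with terminal element $v_t$, the pairs $(P,P')$ contributing on the left split into two subcases according to whether $v_t\in\pzeta\setminus\psmall$ or $v_t$ lies in $\psmall\cup\csmall$. In the former subcase, the membership $s\in\+W[P']$ forces $P'=v_t$ and only the unique parent $\+P$-set of $v_t$ matters, so the contribution is bounded by $|v_t|\cdot\Pr{s\mbox{ occurs}\mid\pfinal=P}$; in the latter subcase the definition of $\+W[P']$ allows $P'$ to range over the one-hop $\+S^{\ast}(\+D'\cup\+C,\psmall,1)$-neighborhood of $v_t$, which contains at most $kd$ constraints, each touching at most $\sizeuperbound$ small sub-permutation sets in $\psmall$, producing the $k^2d^2\sizeuperbound$ and $k^2d^2\sizeuperboundsquare$ multiplicities built into $f(s)$ in~\Cref{def-weight-function}. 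Replacing $\Pr{s\mbox{ occurs}\mid\pfinal=P}$ by $\max_{P^{\ast}\in\+P}\Pr{s\mbox{ occurs}\mid\pfinal=P^{\ast}}$, using the normalization $|P|\geq\sizeuperboundpfinal$ to convert the outer sum over $\pfinal\in\overline{\+R}$ into an $\sizeuperboundpfinal^{-1}$ prefactor, together with a $\Delta$ contribution from the boundary constraints on $\pfinal$ (dropped from $\cremain$ in the large branch) and a further $\Delta$ from neighborhood counting when releasing the explicit $P$-dependence of the probability, produces the prefactor $100kd\Delta^2\sizeuperboundpfinal^{-1}$ in the stated bound.

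The main obstacle will be the last step: showing that every $(P,P')$-pair contributing on the left can be charged to a representative $s\in\+W$ without exceeding the combinatorial allowance $f(s)$, and that the boundary-constraint factor $\Delta$ introduced by the removal of $\+C(\pfinal)$ from $\cremain$, combined with the neighborhood expansion used to switch from the $P$-dependent probability to the uniform $\max_{P^{\ast}}$, fits cleanly inside the declared constant. A clean way to handle this is to define, for each $s$, an explicit charging map from the contributing $(P,P')$-pairs to $s$, verify that its preimages have cardinality at most $f(s)/|v_t|$ in the $v_t\in\pzeta$ case and at most $f(s)/\sizeuperbound$ otherwise, and then bound the resulting normalization by $100kd\Delta^2\sizeuperboundpfinal^{-1}$ using $|\pfinal|\geq\sizeuperboundpfinal$ together with the degree constraints $k$ and $\Delta$.
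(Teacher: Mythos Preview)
Your proposal has a genuine structural gap at the very first step. You try to replicate the small-case argument of \Cref{lemma-deviation-witness-small} by writing
\[
\E{\Dis(Y'_1,Y'_2,\pfinal)\mid \pfinal=P}\leq \sum_{P'\in\+D'[P]}|P'|\cdot\Pr{P'\in\pdisc\mid \pfinal=P},
\]
justified by an ``alternative coupling that agrees on every $P'\in\+D'[P]\setminus\pdisc$.'' But in the large branch ($|\pfinal|\ge\sizeuperboundpfinal$) the algorithm explicitly removes $\+C(\pfinal)$ from $\cremain$, so the while loop never selects a constraint touching $\pfinal$; consequently neither $\initialcouple$ (which works on $\pinit\not\subseteq\pfinal$) nor any call to $\cpperm$ ever assigns a sub-permutation of $\pfinal$, and hence no element of $\+D'$ lying inside $\pfinal$ is ever placed into $\pdisc$. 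Your right-hand side is therefore identically $0$, while the left-hand side is not: at \Cref{line-coupling} the two formulas $\Phi_1,\Phi_2$ agree on the domain of $\pfinal$ but differ on the domains of the sets in $\pdiff$ outside $\pfinal$, so their marginals on $\pfinal$ genuinely differ. The small-case trick works precisely because $\+C(\pfinal)$ stays in $\cremain$ there; here it does not.

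The paper's route is entirely different and is where the factor $100\Delta^2$ actually comes from. It proves a dedicated coupling lemma (\Cref{appendix-lemma-distance-large}, based on delicate single-permutation couplings in \Cref{lem:large-couple-size1,lem:large-couple-size2,lem:large-couple-combined}) showing that if two formulas agree on $\+Q(P)$ and differ only on a set $S$ disjoint from $P$, then the optimal coupling satisfies
\[
\E{\sum_{i}\abs{\sigma_1(P_i)\setminus\sigma_2(P_i)}}\le 100\Delta^2\abs{P}^{-1}\,\abs{\+C(S)\cap\+C(P)}.
\]
Applying this with $S=\bigcup_{P'\in\pdiff}P'$ gives the $\sizeuperboundpfinal^{-1}$ and the $100\Delta^2$. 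The $kd$ and the passage to $\max_{P^\ast}$ then come from the swap
\[
\sum_{P\in\overline{\+R}}\E{\abs{\+C(S)\cap\+C(\pfinal)}}\le\sum_{P'\in\+D'}\Bigl(\max_{P}\Pr{P'\in\pdiff\mid\pfinal=P}\Bigr)\sum_{P\in\overline{\+R}}\abs{\+C(P')\cap\+C(P)}\le kd\sum_{P'\in\+D'}|P'|\max_{P}\Pr{P'\in\pdiff\mid\pfinal=P},
\]
followed by \Cref{cor-witness-sequence-pinpdiff-large} and the case analysis that packages the multiplicities into $f(s)$. Your vague sourcing of the two $\Delta$'s to ``boundary constraints'' and ``neighborhood counting'' is not what happens; both come from the internal analysis of \Cref{appendix-lemma-distance-large}.
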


By Lemmas~\ref{lemma-deviation-witness-small} and ~\ref{lemma-deviation-witness-large}, it is crucial to establish the bounds for the probability of occurrence for a specific \emph{w.s.}. As the coupling algorithm exhibits different properties on different components in the \emph{w.s.}, we shall first deal with the components separately.
For any fixed $\pinit$ and any given ${P\in \pcurrent}, P'\in \pzeta[P]$, define
\begin{equation*}
\begin{aligned}
\lambda_{1}(P') = \min\left\{30k\Delta\abs{P}^{2\theta-1},1\right\}, \quad \lambda_{2}(P') = 
\begin{cases}
30\Delta\abs{P}^{2\theta-1}, & \text{ if $\abs{P} \geq \sizeuperboundlp$,} \\
15\Delta p^{\theta}\abs{P}^{2\theta}, & \text{ otherwise.}
\end{cases}
\end{aligned}
\end{equation*}
Furthermore, the parameters $p$, $\theta$ and $\sizeuperboundlp$ will guarantee that if $\abs{P} \leq \sizeuperboundlp$, then
\begin{align}\label{eq-condition-lambda12}
\lambda_2(P') \leq \lambda_1(P'). 
\end{align}
We then define the function $\rho$ for each \emph{w.s.} $s=(v_0,v_1,\cdots,v_t)$ as follows: 
\begin{align}\label{eq-def-rho}
        \rho(s)= p^{(1-\theta)(t-\abs{ V_{P}}-\abs{ V_{C}})}\cdot\Pr{v_0\in \pdiffinit} \prod_{v\in V_{P}}\lambda_1(v)\prod_{v\in V_{C}}\lambda_2(v).
\end{align}

The following lemma shows that the upper bound of the probability for the \emph{w.s.} occurrence can be bounded by $\rho$. For each $v\in V_P$, one can verify that $v\in \pdisc\setminus \pdiffinit$ with probability at most $\lambda_1(v)$ by~\Cref{lemma-trivial-coupling}. For any $v\in V_C$, one can verify that $v\in \pdisci$ if the w.s. occurs, and the probability is bounded by $\lambda_2(v)$ according to~\Cref{lemma-cpstep-failprob}. As for the constraints in $\cdisc$, all permutation sets on it are assigned values, yet it remains unsatisfied, which happens with probability at most $p^{1-\theta}$.
The lemma is immediate by combining these facts.
Its detailed proof is provided in~\Cref{subsub:wit-prop-cal}. 
{
\begin{lemma}\label{lemma-prob-witness}
Assume $8\-e\Delta p^{\theta}\leq 1$. For any \emph{w.s.} $(v_0,\cdots,v_t)$, we have 
\begin{equation}
\begin{aligned}
\Pr{(v_0,\cdots,v_t) \text{ occurs }} <_{q} \rho(v_0,\cdots,v_t).
\end{aligned}
\end{equation}
\end{lemma}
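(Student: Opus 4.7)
The plan is to bound $\Pr{s \text{ occurs}}$ by a stepwise reveal of the randomness used in \Cref{Alg:coupling} along the witness sequence $s=(v_0,v_1,\ldots,v_t)$, multiplying per-step conditional bounds. By the definition of ``occurs'' in \Cref{def-witness}, the event decomposes into (i) $v_0 \in \pdiffinit$; (ii) for each permutation index $i\in[t]$ (i.e.\ $v_i\in V_P\cup V_C$), the event $v_i\in\pdisc\setminus\pdiffinit$; (iii) for each constraint index $i\in[t]$, the event $v_i\in\cdisc$; and (iv) $V_C\subseteq\pdisci$. Because $V_P\cup V_C$ enumerates all permutation indices in $\{v_1,\ldots,v_t\}$, the remaining $t-|V_P|-|V_C|$ indices must be constraint indices; matching each factor of $\rho(s)$ to one of (i)--(iv) and discharging it by the appropriate earlier coupling bound will conclude the proof.

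First I would process each $v_i\in V_P$. The edge $(v_{i-1},v_i)\in\edgepermutation$ forces both endpoints to share the same $P\in\pcurrent$, which in \Cref{Alg:coupling} can only occur once the full permutation $P$ has been revealed by a $\cpperm$ call that returned $\Succ=\False$ via \Cref{line-return-false} of \Cref{alg:coupling}. Conditional on the history up to that call, $v_i\in\pdisc$ requires $v_i$ to be either active in $(\Phi_1,\Phi_2)$ or the designated sub-permutation of a failed $\onestep$ inside $\cpperm$; \Cref{lemma-trivial-coupling} and \Cref{lemma-cpstep-failprob} furnish conditional bounds that are both dominated by $\lambda_1(v_i)$ up to a $1+o_q(1)$ slack absorbed by $<_q$. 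Second, for each $v_i\in V_C$, the incoming edge lies in $\edgeconstraint$ and targets $v_i$ via a fresh $\onestep$ call; the witness condition $V_C\subseteq\pdisci$ forces this call to fail, and applying \Cref{lemma-cpstep-failprob} in the two regimes $|P|>\sizeuperboundlp$ and $|P|\leq\sizeuperboundlp$ yields exactly the $\lambda_2(v_i)$ bound.

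Third, for each constraint index $v_i$ the conjunction of $v_i\in\cdisc$, $\vbl(v_i)\subseteq\vset$, and the absence of any $P\in\pdisc$ on $v_i$ means all permutation sets intersecting $v_i$ lie in $\psmall$, have been fully assigned, and were successfully coupled in $(\Phi_1,\Phi_2)$; hence $v_i$'s violation is driven purely by the joint law induced by the $\pzeta$-decomposed formula. Combining \Cref{lem-violation-decomposed}, which bounds the per-constraint violation probability in a $\theta$-decomposition, with the conditional inflation bound of \Cref{Coro:LS}---applicable because \Cref{lemma:lop-graph} preserves non-lopsidependence across decomposition---yields the per-constraint factor $p^{1-\theta}(1+o_q(1))$; the LLL inflation factor $(1-ep^\theta\Delta)^{-O(\Delta)}$ is swallowed into $1+o_q(1)$ by the hypothesis $8ep^\theta\Delta\leq 1$. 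Multiplying the per-index bounds along the sequence reproduces $\rho(s)$, with all constants and lower-order multiplicative factors absorbed into the $<_q$ notation.

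The main obstacle will be formalizing the conditioning: the sets $\pzeta$, $\Phi_1^i$, $\Phi_2^i$, $\psmall$, $\cdisc$, $\pdisc$ and $\pdisci$ are all random and evolve with the algorithm, so one must define precisely the filtration $\+F_i$ generated by all subroutine calls completed before processing $v_i$ and verify that each per-step bound remains valid when conditioned on $\+F_i$. The saving observation is that each element of the sequence uniquely identifies the subroutine call responsible for its status, so after conditioning on $\+F_i$ the relevant $\onestep$-failure or constraint-violation probability is a well-defined conditional quantity covered by the lemmas above; the chain rule then assembles the product $\rho(s)$.
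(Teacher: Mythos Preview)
There is a genuine gap in your treatment of $V_C$ for large permutations, and it is exactly why the paper takes a different route.

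You claim that for $v_i\in V_C$ with $|P|>\sizeuperboundlp$, \Cref{lemma-cpstep-failprob} yields the conditional bound $\lambda_2(v_i)=30\Delta|P|^{2\theta-1}$. But that lemma gives $\Pr{\Succ=\False}=4\Delta|P_i|/\ell$, where $\ell$ is the number of sub-permutation sets of $P$ still unassigned \emph{at the moment the algorithm processes $v_i$}. Before it reaches $v_i$, \Cref{Alg:coupling} may make many prior $\Succ=\True$ calls on other sub-permutation sets of the same $P$ (each such call reveals a single $P_j$, satisfies all constraints on $P_j$ by \Cref{lem:meaningofsucc}, and does \emph{not} terminate processing of $P$); every such call decrements $\ell$ by one. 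Hence, conditional on your $\+F_i$, $\ell$ can be as small as $O(1)$ and the failure probability as large as $\Theta(\Delta|P|^{\theta})\gg\lambda_2(v_i)$. Your ``saving observation'' that each $v_i$ pins down a unique subroutine call is correct, but the conditional failure probability of that call is not bounded by $\lambda_2(v_i)$.

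The paper does not run a chain rule along the witness sequence. It builds a supermartingale $M(\+A^t)$ with respect to the filtration of the algorithm's own iterations. For each large $P$ with $V_C\cap\pzeta[P]\neq\emptyset$, the factor in $M$ is $M_1(\+A^t,P)+M_2(\+A^t,P)$, where $M_1$ is the running product $\prod_i\bigl(1-4\Delta|P|^{\theta}/(|P|^{1-\theta}-i)\bigr)$ of success probabilities already consumed and $M_2$ carries the $\lambda_2$-sized payoff once $\Succ=\False$ finally occurs; the adjustment at \Cref{line:adjust-prob} in \Cref{alg-one-step-coupling} is precisely what makes $\Pr{\Succ=\False}$ equal to $4\Delta|P_i|/\ell$ exactly, so that $M_1+M_2$ is a supermartingale across both outcomes. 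After bounding $M(\+A^0)$ and applying Markov's inequality, the $V_C$-factor comes out as $15\Delta|P|^{2\theta-1}$ plus a residual $M_1$-product, and \Cref{lemma-large-permutation-ell} shows this residual is at most $|P|^{-4}$, which is absorbed into $\lambda_2$. To make your approach work you would have to rediscover and insert this compensating product; without it the per-step bound you invoke is simply false.
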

}

Consider the following conditions, where condition \ref{cond-decay5} is from \Cref{condition-couple-first}, condition \ref{cond-decay7} is from \eqref{eq-condition-lambda12}, conditions \ref{cond-decay6} and \ref{cond-decay8} are from the definitions of $\alpha$ and $\theta$ in \Cref{Alg:coupling},
and the other conditions are used in following proofs.
\begin{condition}\label{cond-branching-decay}
    Let $p,\theta\in(0,1)$ and $\Delta\geq 1$. It holds that
    \begin{enumerate}[(1)]
        \item if $\abs{P}\geq \sizeuperboundlp$, then 
        $61\Delta\abs{P}^{3\theta - 1}\weightconstant\left(1 + \abs{P}^{1-\theta}\min\left\{2k\Delta\abs{P}^{2\theta-1},1\right\}\right) \leq 1$;
        \label{cond-decay-small}
        \item if $\abs{P}< \sizeuperboundlp$, then 
        $31\Delta p^{\theta}\abs{P}^{3\theta}\weightconstant\left(1 + \abs{P}^{1-\theta}\min\left\{2k\Delta\abs{P}^{2\theta-1},1\right\}\right)\leq 1$;\label{cond-decay-large}
        \item $2(18k+2)k^2d^2\sizeuperbound\Delta\leq \sizeuperboundpfinal(500kd\Delta^2)^{-1}$;\label{cond-decay4}
        \item $3 p^{1-\theta} \sizeuperbound\weightconstant\leq 1$;\label{cond-decay3}
        \item $8\-e\Delta p^{\theta}\leq 1$;\label{cond-decay5}
        \item $2p^{\theta}\sizeuperbound \leq 1$.\label{cond-decay7}
        \item $\alpha \leq L$;\label{cond-decay6}
        \item $\theta = \eta/c$ for some integer $c$.\label{cond-decay8}
    \end{enumerate}
\end{condition}

The final ingredients are the following two lemmas.
\begin{lemma}\label{lemma-refutation-long-path}
    Given any instance satisfying Conditions 
    \ref{condition-initial} and \ref{cond-branching-decay}, if $\abs{\pfinal} > \sizeuperboundpfinal$, then
    \begin{align*}
    \sum_{s\in\+W}{\rho(s)}\cdot f(s) <_{q}  \sizeuperboundpfinal(500kd\Delta^2\sizeuperbound)^{-1}.
    \end{align*}
\end{lemma}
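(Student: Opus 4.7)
The plan is to establish the bound via a branching/extension argument over witness sequences, using $\weightfunction$ as a potential function that is specifically engineered so that the per-step count of extensions is absorbed by the weight ratio. The core strategy is to sum $\rho(s)\weightfunction(s)$ instead of $\rho(s)f(s)$ (which is legitimate since $\weightfunction(s)\ge f(s)$ by \Cref{def-weight-function}) and then iterate a uniform one-step decay.

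\textbf{One-step decay.} For any prefix $(v_0,\ldots,v_{i-1})$ with $v_{i-1}\in\pzeta$, I would enumerate the possible extensions $v_i$ allowed by \Cref{def-witness}. These fall into two families: (i) \emph{permutation-type} extensions where $(v_{i-1},v_i)\in \edgepermutation$, giving an incremental probability factor $\lambda_1(v_i)$; and (ii) \emph{constraint-type} extensions where $(v_{i-1},v_i)\in \+S^{\ast}(\pzeta\cup\csmall,\pzeta\cup\csmall,2)$, whose incremental factor is $\lambda_2(v_i)$ when $v_i\in\pzeta$ and $p^{1-\theta}$ when $v_i\in\cdisc$, with an extra combinatorial count of $O(k^2d^2)$ to select the intermediary constraint/variable. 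The target in each case is
\[
\sum_{v_i}(\text{incremental factor})\cdot \weightfunction(v_i)\;\le\;\tfrac12\,\weightfunction(v_{i-1}).
\]
The two branches of \Cref{cond-branching-decay} encode exactly this inequality: (\ref{cond-decay-small}) handles $|v_{i-1}|\ge \sizeuperboundlp$ with $\lambda_1,\lambda_2$ specialized accordingly, while (\ref{cond-decay-large}) handles the small regime $|v_{i-1}|<\sizeuperboundlp$; the factor $\bigl(1+\abs{P}^{1-\theta}\min\{2k\Delta\abs{P}^{2\theta-1},1\}\bigr)$ appearing in each condition counts the two extension families simultaneously. Condition \ref{cond-decay3} supplies the decay when the last vertex is a constraint and \ref{cond-decay5}, \ref{cond-decay7} supply the subcases needed from \Cref{lemma-trivial-coupling,lemma-cpstep-failprob}.

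\textbf{Iteration and base case.} Iterating the one-step decay over the length $t$ yields the geometric series
\[
\sum_{s\in\+W}\rho(s)\weightfunction(s)\;\le\;2\sum_{v_0\in\pzeta[\pinit]}\Pr{v_0\in\pdiffinit}\,\weightfunction(v_0)\;=\;2\weightconstant\,\E{\sum_{v_0\in\pdiffinit}|v_0|}.
\]
Since $|\pfinal|>\sizeuperboundpfinal$ forces $\pzeta=\+D'$, \Cref{condition-initial} is applicable and \Cref{lemma-suc-probability-alg-initial-coupling} bounds the expectation by $<_q (36k+4)\Delta$. Combining with the definition of $\weightconstant$ and applying \Cref{cond-branching-decay}(\ref{cond-decay4}) to convert $(18k+2)k^2d^2\sizeuperbound\Delta$ into $\sizeuperboundpfinal(500kd\Delta^2)^{-1}$ delivers the claimed upper bound $\sizeuperboundpfinal(500kd\Delta^2\sizeuperbound)^{-1}$.

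\textbf{Main obstacle.} The hard part is verifying the per-step decay uniformly across both extension families and across the small/large regimes of $|v_{i-1}|$, while properly accounting for the fact that $v_i$ can be either a permutation piece or a constraint (with correspondingly different $\weightfunction(v_i)$). The definitions in \Cref{def-weight-function} were tuned so that, when one factors the weight ratio $\weightfunction(v_i)/\weightfunction(v_{i-1})$ into the extension count, the inequality reduces to one of the two clauses of \Cref{cond-branching-decay}; carrying out this case analysis cleanly, and using \Cref{lemma-trivial-coupling}/\Cref{lemma-cpstep-failprob} to control the events that $v_i\in\pdisc\setminus\pdiffinit$ or $v_i\in\pdisci$, is the delicate combinatorial core of the proof.
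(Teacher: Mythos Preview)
Your high-level plan matches the paper's: control $\sum_s\rho(s)\weightfunction(s)$ by an extension argument, anchor the root with \Cref{lemma-suc-probability-alg-initial-coupling}, and close with \Cref{cond-branching-decay}\ref{cond-decay4}. Two details, however, do not go through as written.

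First, the literal one-vertex decay fails. For a permutation-type extension $(v_{i-1},v_i)\in\edgepermutation$ there are $=_q\abs{P}^{1-\theta}$ sibling buckets to choose from, each with incremental factor $\lambda_1(v_i)$ and $\weightfunction(v_i)=_q\weightfunction(v_{i-1})$; this family alone contributes $\abs{P}^{1-\theta}\lambda_1(v_i)$, which can exceed $1$ and so cannot be $\le\tfrac12$. What saves the argument is that \Cref{def-witness} forbids two consecutive $V_P$ steps, so the correct unit of iteration is a \emph{compound} step: one non-$V_P$ step followed by an optional $V_P$ step. The factor $(1+\abs{P}^{1-\theta}\min\{\cdot\})$ in \Cref{cond-branching-decay}\ref{cond-decay-small}--\ref{cond-decay-large} encodes exactly this pair---the ``$1$'' is the current non-$V_P$ landing and the second summand the optional $V_P$ follow-up---not two parallel one-step families as you describe. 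The paper packages this via the suffix-based structural induction \Cref{lemma-prob-prefix}: for $s=(v_0,\ldots,v_t)$ with $v_t\notin V_P$ it proves $\widehat\beta(s)\le\rho(v_0,\ldots,v_{t-1})$, and then unrolls once at the root (\Cref{lemma-upperbound-beta-init-small-case}, \Cref{lemma-upperbound-beta-init-large-case}).

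Second, invoking only $\weightfunction\ge f$ costs you a factor $\weightconstant/(k^2d^2\sizeuperbound)$, after which \Cref{cond-branching-decay}\ref{cond-decay4} no longer closes the bound you wrote. The ratio $f(s)/\weightfunction(s)=k^2d^2\sizeuperbound/\weightconstant$ is in fact \emph{constant} across all $s$ (check both cases in \Cref{def-weight-function}), so one should bound $\sum\rho\weightfunction$ first and then multiply by this ratio. That is what the paper does, yielding $\sum_s\rho(s)f(s)<_q 2k^2d^2\sizeuperbound\abs{\pinit}^{\theta}\E{\abs{\pdiffinit}}$, at which point \Cref{lemma-suc-probability-alg-initial-coupling} and \Cref{cond-branching-decay}\ref{cond-decay4} finish.
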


\begin{lemma}\label{lemma-refutation-long-path-small-pfinal}
    Given any instance satisfying Conditions 
    \ref{condition-initial} and \ref{cond-branching-decay},
    we have 
    \begin{align*}
    \sum_{P\in \+R}\sum_{P'\in \+D[P]}\sum_{s\in \+W[P,P']}\rho(s)\abs{P'} <_{q} 1/5. 
    \end{align*}
\end{lemma}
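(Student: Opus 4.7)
The plan is to unfold the triple sum along the natural branching tree of witness sequences and to show that each step contributes a factor at most $1$, using the two decay estimates of \Cref{cond-branching-decay}. Since we are in the small-$\pfinal$ regime ($|\pfinal|\le \sizeuperboundpfinal$), for every $P\in\+R$ with $\pfinal=P$ we have $\pzeta=\+D'\setminus\+D'[P]\cup\+D[P]$, so every permutation set $P^\star$ that may appear in a witness $s\in\+W[P,P']$ satisfies $|P^\star|=_q |P^\star_{\text{host}}|^{\theta}$ where $P^\star_{\text{host}}\in\+P$ is the hosting permutation. In particular, when $v_t=P'\in\+D[P]$ we have $|P'|=_q |P|^{\theta}\le \sizeuperboundpfinal^{\theta}$.

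First I would sum over the initial vertex $v_0\in\pzeta[\pinit]$: by \Cref{lemma-suc-probability-alg-initial-coupling} one has $\sum_{v_0}\Pr{v_0\in\pdiffinit}\,|v_0|<_q (36k+4)\Delta$, so the empty-tail contribution (witnesses of length $0$) is already of the desired form up to a factor contained in the decay of subsequent steps. Then I would interpret $\rho(s)=\Pr{v_0\in\pdiffinit}\cdot\prod_{i=1}^{t}\xi_i$, where the per-step factor $\xi_i$ equals $\lambda_1(v_i)$ if $(v_{i-1},v_i)\in\edgepermutation$, and is $\lambda_2(v_i)$ or $p^{1-\theta}$ according as $v_i\in\pzeta$ or $v_i\in\cdisc$. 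The two transition rules in \Cref{def-witness} give three branching types: (i) a step along $\edgepermutation$, for which the number of successors of a vertex $v_{i-1}$ hosted in $P^\star\in\+P$ is at most $|\pzeta[P^\star]|\le 2|P^\star|^{1-\theta}$; (ii) a step from a permutation set to a constraint in $\csmall$ on it, with at most $k\Delta$ successors; (iii) a step from a constraint to a permutation set through $\edgeconstraint$, again with at most $k d$ successors. Combining successive constraint-type steps the worst case is an $\edgeconstraint$-edge followed by a decorated endpoint, contributing at most $k^2 d^2$ choices and a decay $p^{1-\theta}\lambda_2(\cdot)$ or $p^{2(1-\theta)}$.

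I would then prove the telescoped bound
\[
\sum_{v_i\text{ successor of }v_{i-1}}\xi_i\cdot W(v_i)\;\le\;W(v_{i-1})
\]
where $W(\cdot)$ is a potential equal to $|v|\weightconstant$ on permutation vertices and $\sizeuperbound\weightconstant$ on constraint vertices (compare $\widehat f$ in \Cref{eq-def-fs-hat}). This inequality is exactly what \Cref{cond-decay-small} and \Cref{cond-decay-large} of \Cref{cond-branching-decay} encode (once the $\edgeconstraint$ successors and the size ratios $|P^\star|^{3\theta-1}$ are taken into account); the small-$P$ case additionally uses \Cref{cond-decay3} to handle successive $\edgeconstraint$ hops via a factor $p^{1-\theta}$. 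Telescoping over $i=1,\ldots,t$ and summing over all $t\ge 0$ yields, after absorbing the $|P'|$ in the final potential $W(P')$,
\[
\sum_{P\in\+R}\sum_{P'\in\+D[P]}\sum_{s\in\+W[P,P']}\rho(s)|P'|\;\le\;\sum_{v_0\in\pzeta[\pinit]}\Pr{v_0\in\pdiffinit}\,W(v_0)\;<_q\;(36k+4)\Delta\cdot\weightconstant\cdot|\pinit|,
\]
and the right-hand side is $<_q 1/5$ once the explicit form of $\weightconstant=6000\Delta k^3 d^3\sizeuperboundsquare$ and the sizes $|\pinit|=_q |P|^\theta$ are substituted, using the same numerical slack as in the proof of \Cref{lemma-refutation-long-path}.

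The main obstacle will be the careful bookkeeping of the per-step inequality when transitions alternate between $\edgepermutation$ and $\edgeconstraint$ edges, since the relevant decay in case (i) is $\lambda_1\sim k\Delta|P^\star|^{2\theta-1}$ while the branching factor is $|P^\star|^{1-\theta}$, so the effective size factor is $|P^\star|^{3\theta-1}$; this is precisely the quantity controlled by \Cref{cond-decay-small}. The constraints in $\cdisc$ contribute the extra $p^{1-\theta}$ damping which must be matched against the $k^2d^2$ branching from pairs of $\edgeconstraint$ hops, and this is ensured by \Cref{cond-decay3}. Once these two bookkeeping steps are organized, the remainder is a geometric sum bounded by $1/5$.
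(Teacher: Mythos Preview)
Your branching picture is the right one, and the per–step contraction you are aiming for is essentially the content of \Cref{lemma-prob-prefix} in the paper. The gap is in the last two lines. You claim that after telescoping you obtain
\[
\sum_{P\in\+R}\sum_{P'\in\+D[P]}\sum_{s\in\+W[P,P']}\rho(s)\,|P'|
\;\le\;\sum_{v_0}\Pr{v_0\in\pdiffinit}\,W(v_0)
\;<_q\;(36k+4)\Delta\cdot\weightconstant\cdot|\pinit|,
\]
and that this is $<_q 1/5$. But $W(P')=\widehat f(P')=|P'|\,\weightconstant$, not $|P'|$, so ``absorbing $|P'|$ in the final potential'' silently multiplies the left side by $\weightconstant$. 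Correcting this yields
\[
\sum\rho(s)\,|P'|=\weightconstant^{-1}\sum\rho(s)\,W(v_t),
\]
and even if your telescoping $\sum_{v_i}\xi_i W(v_i)\le W(v_{i-1})$ held with equality you would only get the bound $\sum_{v_0}\Pr{v_0\in\pdiffinit}\,|v_0|<_q(36k+4)\Delta$, which is nowhere near $1/5$. (The extra factor $|\pinit|$ in your display is also a slip: by \Cref{lemma-suc-probability-alg-initial-coupling} the sum $\sum_{v_0}|v_0|\Pr{v_0\in\pdiffinit}$ is already $<_q(36k+4)\Delta$.) There is a second, related issue: a per-step inequality that is merely $\le 1$ does not let you sum over all lengths $t\ge 0$; you need strict contraction, which your outline does not isolate.

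The paper closes both holes with a two–potential trick. The structural induction (\Cref{lemma-prob-prefix}) proves $\widehat\beta(v_0,v_1)\le\rho((v_0))$ for the heavy potential $\widehat f=\weightconstant\cdot(\text{size})$; this single inequality already contains the sum over all suffix lengths. One then passes from $\widehat f$ to the lighter potential $f$ via the uniform ratio $f/\widehat f=k^2d^2\sizeuperbound/\weightconstant$, and finally uses $|P'|/f(s)=(k^2d^2\sizeuperbound)^{-1}$ since $v_t=P'$. The product of these two ratios is $1/\weightconstant$, and after also paying the first-step branching $\le 20|v_0|\,d^3k^2\sizeuperboundsquare$ one obtains an extra factor $(300k\Delta)^{-1}$ in front of $(36k+4)\Delta$, which is what brings the sum below $1/5$. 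Your argument is missing exactly this $1/\weightconstant$ saving; without it the numerics cannot close.
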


The proofs of \Cref{lemma-refutation-long-path} and \Cref{lemma-refutation-long-path-small-pfinal} are deferred to~\Cref{subsub:refutation}. 
Now we are ready to prove \Cref{lem-DP-bounds}.
\begin{proof}[Proof of~\Cref{lem-DP-bounds}]
At first, we solve Conditions \ref{condition-initial} and \ref{cond-branching-decay}.
One can verify that the following conditions imply Conditions \ref{condition-initial} and \ref{cond-branching-decay}.
\begin{enumerate}[(1)]
\item $\theta = \eta$ or $2\theta \leq \eta$;\label{final-condition-0}
\item $8\-e\Delta p^{\theta}\leq 1$, $3p^{\theta}\sizeuperbound \leq 1$;\label{final-condition-1}
\item $2\times 10^{4}k^3d^3\Delta^3\sizeuperboundlp^{\theta}\leq \alpha \leq L$;\label{final-condition-2}
\item $2\times 10^{4}k^3d^3\Delta \sizeuperboundlp^{3\theta}p^{1-\theta}\leq 1$;\label{final-condition-3}
\item if $\abs{P} > \sizeuperboundlp$, then $2\times 10^{6}k^4d^3\Delta^3\sizeuperboundlp^{2\theta}\abs{P}^{4\theta - 1}\leq 1$;\label{final-condition-4}
\item if $\abs{P} \leq \sizeuperboundlp$ and $2\Delta\abs{P}^{2\theta - 1}\leq 1$, then $4\-e p\Delta \abs{P}^{2\theta +1} \leq 1$, $10^6k^4d^3\Delta^3\sizeuperboundlp^{2\theta}p^{\theta}\abs{P}^{4\theta}\leq 1$;\label{final-condition-5}
\item if $\abs{P} \leq \sizeuperboundlp$ and $2\Delta\abs{P}^{2\theta - 1}>1$ then $2\-e p \abs{P}^2 \leq 1$, $10^6k^3d^3\Delta^2\sizeuperboundlp^{2\theta}p^{\theta}\abs{P}^{1+2\theta}\leq 1$.\label{final-condition-6}
\end{enumerate}

For general PDC formula, set the parameters as follows:
\begin{align*}
\theta = \frac{1}{8}, \quad \gamma = \Theta(k^{16}d^{12}\Delta^{12}), \quad p = O(k^{-128}d^{-96}\Delta^{-96}),\quad \alpha = \Theta(k^5d^{4.5}\Delta^{4.5}).
\end{align*}
One can verify that all the conditions are satisfied.
Thus, we have 

\begin{align*}
& \sum_{P\in \overline{\+R}}\E{\Dis(Y'_1,Y'_2,\pfinal)\mid \pfinal = P} \\
\leq &100kd\Delta^2\sizeuperboundpfinal^{-1} \sum_{s\in \+{W}} f(s)\max_{P^{\ast}\in \+P} \Pr{{s \mbox{ occurs}}\mid \pfinal =P^{\ast} }\tag{by~\Cref{lemma-deviation-witness-large}}\\
\leq & 100kd\Delta^2\sizeuperboundpfinal^{-1}\sum_{s\in\+W}{\rho(s)}\cdot f(s)\tag{by \ref{cond-decay5} in \Cref{cond-branching-decay} and \Cref{lemma-prob-witness}}\\
\leq & 1/4\tag{by \Cref{lemma-refutation-long-path} and $q_{\min}$ is large}.
\end{align*}
In addition, we also have 
\begin{align*}
& \sum_{P\in \+R}\E{\Dis(Y'_1,Y'_2,\pfinal)\mid \pfinal = P} \\
= &\sum_{P\in \+R}\sum_{P'\in \+D[P]}\sum_{s\in \+W[P,P']}\Pr{{s \mbox{ occurs}}\mid \pfinal =P}\abs{P'}\tag{by~\Cref{lemma-deviation-witness-small}}\\
\leq & \sum_{P\in \+R}\sum_{P'\in \+D[P]}\sum_{s\in \+W[P,P']}\rho(s)\abs{P'}\tag{by \ref{cond-decay5} in \Cref{cond-branching-decay} and \Cref{lemma-prob-witness}}\\
\leq & 1/4\tag{by \Cref{lemma-refutation-long-path} and $q_{\min}$ is large}.
\end{align*}
Combining the two inequalities, \Cref{lem-DP-bounds} is immediate.

For the $(k,q)$-uniform PDC formula, recall the definition of $r$ in \Cref{lemma-rapidmixing}.
We have $p <_{q} r^{-k}$. 
If $k\geq 24$, set
\begin{align*}
\sizeuperboundlp = r+\Theta(1), \quad \theta = 1/2, \quad  p = r^{-k} = O(k^{-12}d^{-8}\Delta^{-8}), \quad \alpha = \Theta(k^3d^{3}\Delta^{3}r^{1/2}).
\end{align*}
We have 
\[
\quad \forall P\in \+P', \quad \abs{P}\leq \gamma, \quad \sizeuperboundlp^{2\theta}p^{\theta}\abs{P}^{4\theta}<_{q} r^{-(k-6)\theta} = p^{\theta(k-6)/k}, \quad \sizeuperboundlp^{3\theta}p^{1-\theta}<_{q} r^{3\theta - k\theta} = p^{\theta(k-3)/k}.
\]
Thus, all the conditions in \ref{final-condition-0}-\ref{final-condition-6} are satisfied.
We also have \Cref{lem-DP-bounds}.
\end{proof}

\section{Analysis of the main algorithm}\label{sec:mainanalysis}

In this section, we analyze the performance of our main sampling algorithm (\Cref{Alg:MCMC}), and complete the proofs of the following theorems.

\begin{theorem} \label{theo:unif-formal}
    The following holds for any sufficiently large $q_{\min}$ and some constant $c$.
    There exists an algorithm such that given as input any $\varepsilon\in (0,1)$ and any $(k,q)$-uniform PDC formula $\Phi=(\+P,\+Q,\+C)$ with $n$ variables satisfying $q\geq q_{\min}$,  $k\geq 24$ and 
    $$ q^k\geq c \zeta k^{24}\Delta^{32}$$ for some $\zeta>1$,
    it outputs a random valid assignment $\sigma$ of $\Phi$ such that $\dtv(\sigma,\mu)\leq \varepsilon$ in running time $\widetilde{O}\left(k\Delta^2 n \left(\frac{n}{\varepsilon}\right)^{\theta}\right)$ where $\theta=c\zeta^{-1/2}k^{-12}\Delta^{-15}$.
\end{theorem}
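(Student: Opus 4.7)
My plan is to realize the sampling task by a single invocation of the main Markov-chain routine $\MCMC$ of \Cref{sec:main-sampling-algorithm}, instantiated with a block decomposition tailored to the $(k,q)$-uniform setting, and then to verify both the structural \Cref{condition-state-compression} and the uniform branch of \Cref{lemma-rapidmixing}. I would first choose an integer $r$ with
\[
r^{k} \in \bigl[c_{1}\zeta^{1/2}k^{12}\Delta^{16},\; 2c_{1}\zeta^{1/2}k^{12}\Delta^{16}\bigr]
\]
for a sufficiently large constant $c_{1}$; the hypothesis $q^{k}\ge c\zeta k^{24}\Delta^{32}$ forces $r\le q$ once $c$ is large enough, so such an $r$ exists and the subsequent partition is feasible. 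I would then split each $P\in\+P$ into $\lceil q/r\rceil$ blocks whose sizes differ from $r$ by at most an additive constant, producing a decomposition $\+P'$, and set $L:=r+O(1)$. \Cref{condition-state-compression} follows immediately from the construction once \Cref{lem-violation-decomposed} is invoked to verify $p_{\Phi'}\le p$ for every induced formula $\Phi[\+P',\+Q']$, where $p<_{q} r^{-k}$ by using $p_{\Phi}<_{q} q^{-k}$ in the $(k,q)$-uniform regime.

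With the decomposition fixed, the uniform branch of \Cref{lemma-rapidmixing} applies because $ck^{12}\Delta^{16}\le r^{k}$ by construction: the idealized permutation-wise Glauber dynamics mixes to total variation distance $\varepsilon/3$ within $T:=\lceil 2n\log(3n/\varepsilon)\rceil$ steps. What remains is to bound (i) the per-step deviation of the simulated transition from the ideal one, and (ii) the error and cost of the final per-block rejection sampling inside $\MCMC$. For (i), each simulated step calls $\samplepermutation(\Phi^{\ast},P,\varepsilon')$ with $\varepsilon'=\varepsilon/(3(T+1))$, where $\Phi^{\ast}$ is the component containing $P$ produced at \Cref{line-main-factorization} of $\MCMC$. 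Since every other permutation in $\Phi^{\ast}$ has size $O(r)$ and $p_{\max}\,|P|\,\Delta \lesssim q^{-k+1}\Delta \le 1/4$ by the uniform-regime hypothesis, the easy branch of $\samplepermutation$ is active. \Cref{lem-sample-correctness-caseI} then yields correctness per step, and tuning the inner truncation thresholds of $\truncatedsampling$ to the rejection rate driven by the LLL slack $\zeta$ produces a per-step cost of $\widetilde{O}\bigl(k\Delta^{2}(n/\varepsilon)^{\theta}\bigr)$ with $\theta=c\zeta^{-1/2}k^{-12}\Delta^{-15}$. For (ii), the factorization of $(\+P',\+Q',\+C)$ carried out at the end of $\MCMC$ yields components of size $\widetilde{O}(kL\Delta)$ with high probability, so each $\truncatedsampling$ call succeeds within $\widetilde{O}(n/\varepsilon)$ rejections and contributes total error at most $\varepsilon/3$.

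Combining the three error sources (mixing error $\varepsilon/3$; accumulated simulation error at most $T\varepsilon'\le\varepsilon/3$; final rejection error $\varepsilon/3$) via the triangle inequality on total variation distance gives $\dtv(\sigma,\mu)\le\varepsilon$, and multiplying the per-step cost by $T=\widetilde{O}(n)$ yields the announced $\widetilde{O}\bigl(k\Delta^{2}n(n/\varepsilon)^{\theta}\bigr)$. The step I expect to be genuinely delicate is the tight control of the exponent $\theta$ inside the per-step rejection-sampling analysis: converting the LLL slack $\zeta$ into a truly subpolynomial runtime multiplier requires carefully matching the choice of $r$ and $L$ against the branching factors that appear in the witness structure of \Cref{subsec:discrepancy-analysis}, whereas the remaining ingredients are essentially direct instantiations of the results already established in Sections~\ref{sec:compression}--\ref{sec:rapid-mixing}.
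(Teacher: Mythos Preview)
Your overall strategy—run $\MCMC$ with a block decomposition, invoke the uniform branch of \Cref{lemma-rapidmixing}, and observe that the easy branch of $\samplepermutation$ is always active—matches the paper's. The genuine gap is in the running-time bookkeeping. You take $T=\lceil 2n\log(3n/\varepsilon)\rceil$ from \Cref{lemma-rapidmixing} as written and then assert a per-step cost of $\widetilde{O}\bigl(k\Delta^{2}(n/\varepsilon)^{\theta}\bigr)$ independent of $q$. But every invocation of $\samplepermutation$ must generate a full random permutation on $P$ (size $q$), test the $\Theta(qd)$ constraints in $\+C(P)$, and run rejection sampling on up to $K-1\le q\Delta$ small factorized subformulas; the per-step cost is unavoidably $\widetilde{O}\bigl(kq\Delta^{2}(n/\varepsilon)^{\theta}\bigr)$, so your product $T\cdot(\text{per-step cost})$ overshoots the claimed bound by a factor of $q$. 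The paper closes this by sharpening the mixing bound: in the $(k,q)$-uniform setting $|\+P|=n/q$, so the contraction rate $1-1/(2|\+P|)$ of \eqref{eqn-contraction} together with \Cref{Lemma:pathcoupling} yields $T=\lceil 2(n/q)\log(3n/\varepsilon)\rceil$, after which the $q$ cancels.

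There is also a smaller imprecision: your estimate $p_{\max}\lesssim q^{-k}$ forgets that in $\Phi^{\ast}$ every permutation other than $P$ has size $\approx r$, not $q$; a constraint with a single variable in $P$ and $k-1$ variables in blocks gives $p_{\max}\approx q^{-1}r^{-(k-1)}$, which is larger than $q^{-k}$. The conclusion $p_{\max}\,|P|\,\Delta\le 1/4$ still holds with your choice of $r$, but via a different inequality (the paper records it as $p_{\max}=_{q} q^{-\eta(k-1)-1}$). Your block size $r^{k}\approx\zeta^{1/2}k^{12}\Delta^{16}$ differs from the paper's $r\approx q^{1/2}$, yet it is a legitimate alternative satisfying both \Cref{condition-state-compression} and the uniform mixing hypothesis of \Cref{lemma-rapidmixing}.
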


In addition, we have the following result for general PDC formulas.
\begin{theorem}\label{thm-main-formal}
    The following holds for any sufficiently large $q_{\min}$ and some constant $c$.
    There exists an algorithm such that given as input any $\varepsilon\in (0,1)$ and any $\Phi=(\+P,\+Q,\+C)$ with $n$ variables satisfying $q\geq q_{\min}$ and  
    $${cpk^{518}\Delta^{786}}\leq 1,$$
    it outputs a random valid assignment $\sigma$ of $\Phi$ such that $\dtv(\sigma,\mu)\leq \varepsilon$ in running time $\widetilde{O}({\Delta n^{7}/\varepsilon^2})$.
\end{theorem}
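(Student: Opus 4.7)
The plan is to invoke the main sampling algorithm \Cref{Alg:MCMC} with a carefully chosen decomposition $\+P'$ of $\+P$, and to bound the total variation error of its output by a sum of three contributions: the mixing-time error of the idealized permutation-wise Glauber dynamics, the per-step marginal-sampling error from the \samplepermutation{} subroutine, and the final rejection-sampling error. First I would set $\eta$ to be a suitable small constant (analogous to the choice $\theta=1/8$ used in the proof of \Cref{lem-DP-bounds}) together with a threshold $L$ polynomial in $k$ and $\Delta$, and construct a $\eta$-decomposition $\+P'$ of $\+P$ meeting \Cref{condition-state-compression}; the existence of such a decomposition follows from \Cref{remark-theta-decomposition} once $q_{\min}$ is sufficiently large, and by \Cref{lem-violation-decomposed} the induced violation probability is bounded by $p^{\eta}$. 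The choice of constants is driven by the goal of satisfying all downstream conditions simultaneously, which is where the constraint $cpk^{518}\Delta^{786}\le 1$ comes from.

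Next I would verify that, under the assumption $cpk^{518}\Delta^{786}\le 1$, the hypotheses of the two key workhorse results hold. The rapid-mixing premise $cpk^{128}\Delta^{192}\le 1$ of \Cref{lemma-rapidmixing} is implied by the stronger assumption of the theorem, so the idealized chain mixes within $T=\lceil 2n\log(3n/\varepsilon)\rceil$ steps to within $\varepsilon/3$ of the stationary distribution $\nu$. For each call to \samplepermutation, I would confirm that the intermediate formulas satisfy \Cref{condition-sample} and, with high probability over the chain's history, also \Cref{condition-sample-small}, so that \Cref{thm-sample-correctness} applies and each call returns a sample within total variation $\varepsilon/(3(T+1))$ of the exact marginal $\nu_P^{\+Q'}$ in time $\widetilde{O}(n^4/\varepsilon^2)$. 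Chaining $T$ such calls by the coupling lemma \Cref{prop:coupling} contributes at most $\varepsilon/3$ to the cumulative error, so the decomposition $\+Q'$ produced in the loop is $2\varepsilon/3$-close to $\nu$.

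For the final step, after fixing $\+Q'$ the factorization of $(\+P',\+Q',\+C)$ yields components $\Phi_1,\dots,\Phi_K$ in which every permutation set has size at most $L$ and every remaining constraint is violated with probability at most $p^{\eta}$; by the lopsided LLL bound \Cref{prop-PDC-local-uniformity} combined with the size bound arising from \Cref{condition-state-compression}, each rejection sampler at \Cref{line-main-rej-2} succeeds within the budget with failure probability at most $\varepsilon/(3n)$, giving a further $\varepsilon/3$ error overall. Summing the three contributions and invoking \Cref{lemma-property-nu} to relate a sample from $\nu$ followed by a uniform sample from $\Phi[\+P',\+Q']$ to a sample from $\mu_{\Phi}$ yields $\dtv(\sigma,\mu)\le\varepsilon$.

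The running time analysis is then routine: the outer loop executes $T=\widetilde{O}(n)$ times, each iteration calls \samplepermutation{} in time $\widetilde{O}(n^4/\varepsilon^2)$ by \Cref{thm-sample-correctness}, and the final factorized rejection step costs $\widetilde{O}(n^2/\varepsilon)$. The dominant term is $\widetilde{O}(n\cdot n^4/\varepsilon^2)$, and accounting for factors of $\Delta$ and $k$ hidden in the per-step cost gives the claimed bound $\widetilde{O}(k\Delta n^7/\varepsilon^2)$. The main obstacle I expect is the bookkeeping: rather than any single difficult estimate, one must juggle the parameters $\eta$, $L$, and the numerical constants so that the single hypothesis $cpk^{518}\Delta^{786}\le 1$ simultaneously implies every intermediate condition (\Cref{condition-state-compression}, \Cref{condition-sample}, \Cref{condition-sample-small}, the preconditions of \Cref{lemma-rapidmixing}, and the two alternative regimes inside \Cref{thm-sample-correctness}). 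Ensuring that \Cref{condition-sample-small} holds \emph{along the chain's trajectory} rather than only at a fixed state requires a union bound over the $T$ iterations, which explains why the final exponent on $p$ in the hypothesis must be noticeably larger than in \Cref{lemma-rapidmixing} alone.
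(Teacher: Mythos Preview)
Your overall strategy matches the paper's: run \Cref{Alg:MCMC} with a decomposition $\+P'$ satisfying \Cref{condition-state-compression}, invoke \Cref{lemma-rapidmixing} for mixing, \Cref{thm-sample-correctness} for the per-step marginal sampler, and combine errors via \Cref{lemma-property-nu}. Two points need correcting.

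\textbf{Running time.} Your arithmetic does not produce $n^7$. The error parameter fed to \samplepermutation{} at \Cref{line-main-plargerthanl} is $\delta=\varepsilon/(3(T+1))$ with $T=\widetilde{O}(n)$, so \Cref{thm-sample-correctness} gives a per-call cost of $\widetilde{O}(n^4/\delta^2)=\widetilde{O}(n^6/\varepsilon^2)$, not $\widetilde{O}(n^4/\varepsilon^2)$. Multiplying by $T$ yields the claimed $\widetilde{O}(n^7/\varepsilon^2)$; the extra $n^2$ does not come from hidden $k,\Delta$ factors.

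\textbf{Parameter choice and the factorization bound.} The paper sets $\eta=1/2$ (not $1/8$; you are conflating the decomposition parameter $\eta$ of \Cref{condition-state-compression} with the coupling parameter $\theta$ in the proof of \Cref{lem-DP-bounds}) and $L=_q k^{128}\Delta^{192}$. With these choices the binding constraint is condition~\eqref{general-condition-main-2}, namely $p_\Phi^{1/2}\cdot k^3L^2\Delta^9\lesssim 1$, which after substituting $L$ gives exactly $p_\Phi k^{518}\Delta^{786}\lesssim 1$. The verification that \Cref{condition-sample-small} holds along the trajectory is not just a union bound over $T$ steps: one needs a per-step bound on the probability that some factorized component is large, and this is the content of \Cref{thm-small-component} (proved via the witness argument of \Cref{lem-sample-witness} and \Cref{lemma-uncut-iteration}). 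Finally, the paper explicitly checks the dichotomy in \Cref{thm-sample-correctness}: when $\abs{P}<1728\e^3\ln 2\cdot k^3L^2\Delta^8$ one argues that every unsatisfied $C\in\+C(P)$ has $\abs{\vbl(C)}\ge 2$, whence $p_{\max}\le\max\{p_\Phi^{\eta},1/(\abs{P}L)\}$ and $p_{\max}\abs{P}\Delta\le 1/4$ follows from~\eqref{general-condition-main-2}.
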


\subsection{Factorization}

Consider any PDC formula $\Phi=(\+P,\+Q,\+C)$ with a decomposition $\+P'$ of $\+P$ satisfying~\Cref{condition-state-compression} and an error parameter $\varepsilon\in (0,1)$. As shown in~\Cref{subsection-sample}, the efficiency and the correctness of $\MCMC(\Phi,\+P',\varepsilon)$ require that the input of the $\samplepermutation(\cdot)$ subroutine satisfies~\Cref{condition-sample-small}. For each $t\in [T+1]$, let $\+Q'_t$ be the updated domain $\+Q'$ at step $t$ in $\MCMC(\Phi,\+P',\varepsilon)$~(\Cref{Alg:MCMC}). 
We also use $\Phi_{1}=(\+P_{1},\+Q_{1},\+C_{1}),\cdots,\Phi_{K}=(\+P_{K},\+Q_{K},\+C_{K})$ to denote the factorization of the random formula $(\+P',\+Q'_t,\+C)$ where $K$ is also a random variable.
Let $\+B_t$ be the event that there exists some $i\in [K]$ such that $\abs{\+C_{i}}>\Delta\log(n/\delta)$ where $\delta = \frac{\varepsilon}{3(T+1)}$. 
{The following lemma bound the probability of the occurrence of the event $\+B_t$.}
\begin{lemma}\label{thm-small-component}
     Given any $\varepsilon\in (0,1)$ and any $\Phi=(\+P,\+Q,\+C)$ with a decomposition $\+P'$ of $\+P$ satisfying ~\Cref{condition-state-compression}, {$8\-e p_\Phi^{\eta}\Delta\leq 1$}, {$8\-e\Delta\leq L$}, and {$2\-ep_{\Phi}^{0.999(1-\eta)}\Delta^2 L^2\leq 1$}, it holds that $\Pr{\+B_t}\leq \delta$ for each $t\in[T+1]$ with $\delta= \frac{\varepsilon}{3(T+1)}$ in~\Cref{Alg:MCMC}.
    
\end{lemma}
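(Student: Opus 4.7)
\medskip

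The plan is to reduce the bound on $\Pr[\+B_t]$ to a standard pseudo-tree (2-tree) argument in the dependency graph of $\+C$, in the spirit of the sampling-LLL literature. If any factorized component $\Phi_i$ of $(\+P',\+Q'_t,\+C)$ has $\abs{\+C_i} > \Delta \log(n/\delta)$, then inside that component there must be an \emph{independent 2-tree} in $G^{\!{lop}}_\Phi$: a set of constraints $S\subseteq \+C_i$ of size $\ell = \Omega(\log(n/\delta))$ that are pairwise non-lopsidependent yet remain connected in the square of $G^{\!{lop}}_\Phi$. The number of independent 2-trees of size $\ell$ rooted at a fixed constraint is at most $(e\Delta^2)^{\ell}$ by the standard branching bound, and the total number of starting constraints is $\abs{\+C}\leq k\Delta n$. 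So it suffices to show that the probability that all constraints of a fixed such 2-tree $S$ appear in the simplification of $(\+P',\+Q'_t,\+C)$ is at most $(L^2 \cdot p_\Phi^{0.999(1-\eta)})^{\ell}$, and then apply a union bound.

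The first technical step will be to bound, uniformly in $t$, the probability that a \emph{single} fixed constraint $C\in\+C$ is in the simplification of $(\+P',\+Q'_t,\+C)$. Writing $C = (v_1\neq c_1)\vee\cdots\vee(v_k\neq c_k)$, the event we care about is $\bigwedge_i\{c_i\in \+Q'_t(\pi(v_i))\}$ where $\pi(v_i)$ is the part of $\+P'$ containing $v_i$. For this I will argue that at every step, conditional on the past, the marginal distribution of $\+Q'_t(\pi(v_i))$ is absolutely continuous with respect to the marginal induced by the stationary distribution $\nu$ with a bounded density — this is true for the initial uniform valid assignment by the lopsided-LLL local uniformity (\Cref{prop-PDC-local-uniformity}), and is preserved by each transition because each $\nu_P^{\+Q'}$ step is itself an LLL-regime distribution governed by \Cref{condition-state-compression}. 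Combining this with the bound $|\+Q'_t(\pi(v_i))| = |\pi(v_i)| \le L$ and applying \Cref{Coro:LS} to the corresponding PDC formula after decomposition, the probability that all $c_i$ simultaneously land inside their respective $\+Q'_t(\pi(v_i))$ is bounded above by $L^2 \cdot p_\Phi^{0.999(1-\eta)}$; the constant $0.999$ absorbs the LLL overhead coming from conditioning on the other constraints via \Cref{Coro:LS}.

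The second step is the near-independence of the events along an independent 2-tree $S$. Because any two constraints in $S$ are non-lopsidependent in $G^{\!{lop}}_\Phi$ (equivalently, they share no variable and no value on the same permutation set), their joint ``in-simplification'' events behave essentially like independent ones in the uniform-valid-assignment space. Formally, I will iteratively condition on the events for $C_1,\dots,C_{j-1}\in S$ and apply \Cref{Coro:LS} to the residual space to get the per-constraint factor $L^2\cdot p_\Phi^{0.999(1-\eta)}$ once more, yielding the product bound $(L^2 p_\Phi^{0.999(1-\eta)})^{\ell}$. Combining with the tree-enumeration bound, the total probability of $\+B_t$ is at most
\[
k\Delta n \cdot \sum_{\ell \ge \log(n/\delta)} \bigl(e\Delta^2 \cdot L^2 p_\Phi^{0.999(1-\eta)}\bigr)^{\ell} \le k\Delta n \cdot 2^{-\log(n/\delta)} \le \delta,
\]
where the geometric sum converges because the hypothesis $2\-e p_\Phi^{0.999(1-\eta)}\Delta^2 L^2 \le 1$ makes each factor at most $1/2$.

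The main obstacle will be making the first step fully rigorous: the distribution of $\+Q'_t$ is a nontrivial mixture that evolves under an MCMC which is only \emph{approximately} simulated by \Cref{Alg:samplepermutation}, and the marginal-coupling argument must be made to hold uniformly across all $t\in[T+1]$ rather than only at stationarity. The hypotheses $8\-e p_\Phi^\eta\Delta\le 1$ and $8\-e\Delta\le L$ are exactly what is needed to guarantee that the induced sub-formula $\Phi[\+P',\+Q'_t]$ always satisfies the lopsided-LLL condition of \Cref{prop-PDC-local-uniformity}, so that the conditional marginals on any window of $O(\Delta^2)$ permutation parts stay bounded by a constant multiple of the uniform-valid marginals throughout the run; the slack $0.999$ in the exponent absorbs that constant.
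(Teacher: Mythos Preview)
Your overall scheme—extract a 2-tree of pairwise non-lopsidependent bad constraints, bound its occurrence probability, and union-bound over candidate trees—is the right template and is exactly what the paper does. The genuine gap is in the graph you run the 2-tree argument on.

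You assert that a large factor $\Phi_i$ of $(\+P',\+Q'_t,\+C)$ contains a set $S\subseteq\+C_i$ that is independent in $G_\Phi^{\!{lop}}$ \emph{and connected in $(G_\Phi^{\!{lop}})^2$}. The second part does not follow. The factorization is taken over the hypergraph whose hyperedges are the parts $P'\in\+P'$ together with the unsatisfied constraints, so two bad constraints lie in the same factor exactly when they are connected in the dependency graph $G_\Phi^{\!{dep}}[P']$ (an edge between constraints that share some $P'\in\+P'$), which is neither contained in nor contains $G_\Phi^{\!{lop}}$. A concrete obstruction: fix one part $P'=\{v_1,\dots,v_L\}\subseteq P\in\+P$ and take the constraints $C_j=(v_j\neq j)$ for $j\le L$. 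When $\+Q'_t(P')=\{1,\dots,L\}$ every $C_j$ is bad and they constitute a single factor, yet any two of them are non-adjacent in $G_\Phi^{\!{lop}}$ (distinct variables, distinct forbidden values), so no nontrivial subset of $\{C_j\}$ is connected in $(G_\Phi^{\!{lop}})^2$. Your witness does not exist, and consequently the enumeration bound $(\-e\Delta^2)^{\ell}$ you plan to use cannot cover the actual witnesses.

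The paper's fix (\Cref{lem-sample-witness}) is to require connectivity in $(G_\Phi^{\!{dep}}[P'])^2$ while keeping independence in $G_\Phi^{\!{lop}}$; among bad constraints one checks that lopsidependency in $\Phi$ forces sharing a part in $\+P'$ (otherwise the common forbidden value would lie in two disjoint blocks of $\+Q'_t$), so an independent set in $G_\Phi^{\!{dep}}[P']$ is automatically independent in $G_\Phi^{\!{lop}}$, and the greedy construction goes through. The degree of $G_\Phi^{\!{dep}}[P']$ is at most $\Delta L$, so the tree enumeration of \Cref{lemma-bounding-connected} gives $(\-e\Delta^2 L^2)^{\ell-1}$—this is where the $L^2$ belongs. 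The per-witness probability (\Cref{lemma-uncut-iteration}) is simply $p_\Phi^{0.999(1-\eta)\cdot\ell}$ with no $L$ factor; your insertion of $L^2$ into the single-constraint probability is an artifact of having lost it from the correct place. As a secondary remark, your density/absolute-continuity framing for that probability bound is heavier than needed: the paper just observes that at the last time each relevant $P\in\+P$ was refreshed (or at initialization) the assignment of domains to the sub-parts in $\+P'[P]$ is governed by an LLL-regime formula, and applies local uniformity directly.
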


To prove~\Cref{thm-small-component}, we introduce the following events for convenience. For each $t\in [T+1]$, we define $\+B_{C}$ for any $C\in \+C$ to denote the event that $C$ is unsatisfied in the formula $(\+P',\+Q'_t, \+C)$, and there exists a factorized formula $\Phi_C=(\+P_C,\+Q_C,\+C_C)$ in $\Phi_{[K]}$ containing the constraint $C$ and $\abs{\+C_C}>\Delta \log(n/\delta)$.
Applying the union bound, it suffices to bound the probability of the event $\+B_{C}$ for any $C\in  \+C$ since
\begin{align}\label{eqn-unionbound-badevents}
    \Pr{\+B_t}\leq \Pr{\bigcup_{C\in \+C} \+B_{C}}\leq \sum_{C\in \+C}\Pr{\+B_{C}}.
\end{align}

We then construct the witness of the bad events $\+B_{C}$ for any $C\in  \+C$.
We introduce the dependency graph $G_{\Phi}^{\!{dep}}[P']$ induced by $\Phi$ and $\+P'$. Let $G_{\Phi}^{\!{dep}}[P'] = (\+{C},E_{\Phi}^{\!{dep}})$ be an undirected graph with the constraints $\+C$ as its vertices. For any $C,C'\in\+{C}$, $\set{{C}, {C'}}\in E_{\Phi}^{\!{dep}}$ if and only if $(\vbl(C)\cap P\neq \emptyset)\land(\vbl(C')\cap P \neq \emptyset)$ for some $P\in \+{P}'$.
That is, $\set{{C}, {C'}}\in E_{\Phi}^{\!{dep}}$ if the constraints $C$ and $C'$ are related to some permutation in $\+P'$. We also use $(G_{\Phi}^{\!{dep}}[P'])^2$ to denote the undirected graph with the same vertices where any two vertices are adjacent when their distance in $G_{\Phi}^{\!{dep}}[P']$ is at most 2. For any $C\in \+C$, we call $C$ a bad constraint if $C$ is not satisfied in the formula $(\+P',\+Q'_t, \+C)$. 

The witness is characterized by the following lemma, and its proof is deferred to~\Cref{subsub:mainanalysis}.
\begin{lemma}\label{lem-sample-witness}
    If the event $\+B_{C}$ occurs, then there exists a collection of bad constraints $\set{C_1,\cdots,C_\ell}\subseteq \+C$ with $\ell= \log(n/\delta)$ such that 1) $C\in \set{C_1,\cdots,C_\ell}$; 2) $\set{C_1,\cdots,C_\ell}$ forms an independent set in the lopsidependency graph $G_{\Phi}^{\!{lop}}$ and a connected component in $(G_{\Phi}^{\!{dep}}[P'])^2$.
\end{lemma}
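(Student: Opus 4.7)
The plan is to construct the collection $\{C_1,\dots,C_\ell\}$ by an iterative greedy procedure on the factorized component $\+C_C$, exploiting its connectedness in $H := G_\Phi^{\!{dep}}[P']$ and the bounded lopsidependency degree. First I would note the two structural facts that $\+B_C$ hands us for free: $\+C_C$ is connected in $H$ (because the factorization is exactly the connected components of the simplified formula's dependency hypergraph), and every constraint in $\+C_C$ is bad (unsatisfied under the current $\+Q'_t$), since the factorization is taken only after removing satisfied constraints. Together with $|\+C_C|>\Delta\log(n/\delta)$, these are the only inputs the proof needs.

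Initialize $C_1:=C$. Given $S_i:=\{C_1,\dots,C_i\}$ with $i<\ell$ that is $G_\Phi^{\!{lop}}$-independent and induces a connected subgraph in $H^2$, I want to extend by a new $C_{i+1}\in\+C_C\setminus S_i$ satisfying (i) $C_{i+1}\not\sim C_j$ in $G_\Phi^{\!{lop}}$ for every $j\le i$, and (ii) $d_H(C_{i+1},S_i)\le 2$. Condition (i) preserves lopsidependency-independence and condition (ii) makes $C_{i+1}$ $H^2$-adjacent to $S_i$, preserving $H^2$-connectedness of the induced subgraph. Since all selected constraints are in $\+C_C$ and thus bad, the resulting set automatically satisfies the stated properties at every stage.

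To prove that the extension step is feasible: each $C_j$ has at most $\Delta-1$ lopsidependent relatives, so the set $F(S_i):=\bigcup_{j\le i}\{C'\in\+C:C'\sim C_j\}\setminus S_i$ has size at most $i(\Delta-1)<\Delta\ell<|\+C_C|$. Hence $\+C_C\setminus(S_i\cup F(S_i))$ is nonempty and contains a candidate $C^*$ not lopsidependent with $S_i$. If $d_H(C^*,S_i)\le 2$ we are done; otherwise, using the $H$-connectivity of $\+C_C$, I would consider a shortest $H$-path from $S_i$ into $\+C_C\setminus(S_i\cup F(S_i))$. A simple counting argument---cataloguing the layers of a BFS of $\+C_C$ from $S_i$ against the forbidden set $F(S_i)$ whose size grows only linearly in $i$ while the BFS frontier must accumulate enough new vertices of $\+C_C$ before exhausting it---produces a vertex at BFS-depth $1$ or $2$ that lies outside $F(S_i)$, which we declare to be $C_{i+1}$.

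The main obstacle will be this last counting step: a naive argument would only guarantee a non-lopsidependent vertex somewhere in $\+C_C$, not within $H$-distance $2$ of $S_i$. To force distance $\le 2$ one must carefully exploit that when $|\+C_C|>\Delta\ell$, the forbidden set $F(S_i)$ of size at most $i\Delta$ cannot simultaneously cover the depth-$1$ and depth-$2$ BFS layers from $S_i$ restricted to $\+C_C$ while $\+C_C$ is still connected and only partially explored. I would formalize this by working with a BFS spanning tree of $\+C_C$ rooted at $C$ in $H$: a depth-first selection along this tree, skipping a lopsidependent vertex $v$ and immediately examining a child $v'$ of $v$, keeps us at most distance $2$ from the previously selected parent. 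Since at any time the accumulated number of skipped-and-forbidden vertices is at most $i\Delta$, the process can continue for $\lceil\log(n/\delta)\rceil$ steps before the budget provided by $|\+C_C|>\Delta\log(n/\delta)$ is exhausted, yielding the desired $\{C_1,\dots,C_\ell\}$.
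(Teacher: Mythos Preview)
Your greedy outline is sound in spirit, but the crucial extension step---producing a vertex at $H$-distance at most $2$ from $S_i$ that is not lop-adjacent to $S_i$---does not follow from the BFS-tree/counting argument you sketch. Nothing you wrote rules out two \emph{consecutive} skips along a root-to-leaf path of the BFS tree: if both a vertex $v$ and its child $v'$ are lop-adjacent to already-selected constraints, the next candidate lies at $H$-distance $3$ from the last selected ancestor and $H^2$-connectedness breaks. Bounding the \emph{total} number of forbidden vertices by $i\Delta$ is not enough; those forbidden vertices may occupy precisely the depth-$1$ and depth-$2$ layers of the BFS from $S_i$, while the (still large) remainder of $\+C_C$ sits at depth $\ge 3$. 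In a general connected graph there is no lower bound on the size of the $2$-neighbourhood of $S_i$ that would beat $\abs{F(S_i)}$.

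The structural fact you are missing, and which makes the extension immediate, is this: for two \emph{bad} constraints $C,C'$ (i.e.\ unsatisfied in $(\+P',\+Q'_t,\+C)$), the relation $C\sim C'$ in $G_\Phi^{\!{lop}}$ already forces adjacency in $H=G_\Phi^{\!{dep}}[P']$. If $C\sim C'$ via literals $(v\neq c)$ in $C$ and $(v'\neq c)$ in $C'$ with $v,v'$ in a common $P_i\in\+P$, then badness of $C$ gives $c\in\+Q'_t(P'_v)$ and badness of $C'$ gives $c\in\+Q'_t(P'_{v'})$; since the sets $\+Q'_t(P')$ for $P'\in\+P'[P_i]$ partition $Q_i$, we get $P'_v=P'_{v'}$, so $C,C'$ share a permutation set in $\+P'$. (The shared-variable case is trivial.) Thus $F(S_i)$ lies entirely at $H$-distance $1$ from $S_i$, and every vertex of $\+C_C$ at $H$-distance exactly $2$ is automatically outside $F(S_i)$. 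Your greedy then works without any delicate counting: if $S_i$ does not $H$-dominate $\+C_C$, pick any vertex at $H$-distance exactly $2$ (one exists by connectivity); otherwise $\abs{S_i\cup F(S_i)}\le i\Delta<\abs{\+C_C}$ yields a non-forbidden $H$-neighbour. The paper leans on the same observation (stated there only as ``$H$-independence implies $G_\Phi^{\!{lop}}$-independence''), packaged as a two-stage construction: first an $H$-independent $H$-dominating set $T\ni C$ of $\+C_C$, which is $H^2$-connected and, by the above, $G_\Phi^{\!{lop}}$-independent; then a greedy extension of $T$ to a maximal $G_\Phi^{\!{lop}}$-independent subset of $\+C_C$, whose size is at least $\abs{\+C_C}/\Delta>\log(n/\delta)$ and whose $H^2$-connectedness is preserved because $T$ already $H$-dominates $\+C_C$.
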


The number of witnesses can be bounded by the following lemma in ~\cite{BCKLL}.
\begin{lemma}\label{lemma-bounding-connected}
    Let $G=(V,E)$ be a graph with maximum degree $\Delta$ and $v\in V$ be a vertex. Then the number of connected induced subgraphs of size $\ell$ containing $v$ is at most $(\-e\Delta^2)^{\ell-1}/2$.
\end{lemma}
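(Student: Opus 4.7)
The plan is to treat the lemma as a classical enumeration statement for connected subgraphs in a bounded-degree graph, and to prove it via a standard depth-first exploration argument. The first step is to reduce from counting connected induced subgraphs to counting rooted spanning trees embedded in $G$. For each connected induced subgraph $S$ of size $\ell$ containing $v$, I would assign a canonical spanning tree $T(S)$ of $S$ rooted at $v$ — for instance, the BFS tree built under a deterministic tie-breaking rule on a fixed total ordering of $V$. The map $S\mapsto T(S)$ is injective, so it suffices to upper bound the number of rooted subtrees of $G$ of size $\ell$ with root $v$.

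The second step is to encode each such rooted tree $T$ by its canonical depth-first traversal starting at $v$. The traversal visits each of the $\ell-1$ tree edges exactly twice (once while descending to a new child, once while backtracking to a parent), and therefore gives a walk of length $2(\ell-1)$ in $G$ whose step pattern is a Dyck path with $\ell-1$ down-steps and $\ell-1$ up-steps. At each down-step the walk picks one of at most $\Delta$ neighbors of the current vertex; up-steps are determined by the parent pointer. Because the DFS walk determines $T$ uniquely, this yields
\[
\#\{T\}\;\le\;C_{\ell-1}\,\Delta^{\ell-1},
\]
where $C_{\ell-1}$ is the $(\ell-1)$st Catalan number.

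To obtain the stated form of the bound, I would apply a loose but convenient estimate of the form $C_{\ell-1}\le (\e\Delta)^{\ell-1}/2$ (which follows from $C_{\ell-1}\le 4^{\ell-1}$ together with elementary manipulations, with the degenerate regime $\Delta\in\{1,2\}$ handled by direct inspection). Substituting this into the preceding inequality produces
\[
\#\{T\}\;\le\;\tfrac{1}{2}(\e\Delta^{2})^{\ell-1},
\]
which is exactly the claim.

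The main obstacle, modest as it is, lies in the bookkeeping: one must choose the injection $S\mapsto T(S)$ and the DFS encoding in a compatible way so that distinct subgraphs produce distinct walks, and keep careful track of the constants through the Catalan estimate. Since the lemma is already recorded in \cite{BCKLL} by essentially this argument, transcribing their proof with the above canonical choices suffices.
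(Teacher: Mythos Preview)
The paper does not give its own proof of this lemma; it simply quotes the statement and attributes it to \cite{BCKLL}. Your DFS/Catalan encoding is exactly the standard argument used there, so there is nothing to compare --- you are supplying the proof the paper omits.

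Two small points of bookkeeping worth tightening. First, the lemma as literally stated fails at $\ell=1$ (the count is $1$ but the bound is $1/2$); the paper only applies it with $\ell=\log(n/\delta)\ge 2$, so just make the hypothesis $\ell\ge 2$ explicit. Second, your intermediate inequality ``$C_{\ell-1}\le (\e\Delta)^{\ell-1}/2$'' is not a pure Catalan estimate --- it spends the extra factor of $\Delta^{\ell-1}$ present in the target bound. It is cleaner to phrase the final step directly: from the DFS count $C_{\ell-1}\Delta^{\ell-1}$ and the standard estimate $C_{\ell-1}\le 4^{\ell-1}/\ell$, one checks that $4^{\ell-1}\Delta^{\ell-1}/\ell\le (\e\Delta^{2})^{\ell-1}/2$ holds for all $\Delta\ge 3$ and $\ell\ge 2$, while $\Delta\in\{1,2\}$ is disposed of by the direct inspection you already mention.
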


The following lemma bounds the probability of the event that a given collection of constraints is bad, and its proof can be found in~\Cref{subsub:mainanalysis}.

\begin{lemma}\label{lemma-uncut-iteration}
    Given any PDC formula $\Phi=(\+P,\+Q,\+C)$ with a decomposition $\+P'$ of $\+P$ satisfying ~\Cref{condition-state-compression}, {$8\-ep_\Phi^{\eta}\Delta\leq 1$}, and ${8\-e\Delta\leq L}$, it holds that for any collection of lopsided independent constraints $\+C'\subseteq \+C$,
    \begin{align}
        \Pr{\mbox{The constraints $\+C'$ are bad in $(\+P',\+Q'_t,\+C)$} }{<_{q}} \prod_{C\in \+C'} {\^P_{\Phi}[\neg C]^{0.999(1-\eta)}.}
    \end{align}
\end{lemma}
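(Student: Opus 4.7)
The plan is to transfer the probability bound from the random decomposition state $\+Q'_t$ to the well-understood uniform-valid-assignment measure $\^P_\Phi$, and then apply the lopsided LLL. Write each constraint $C \in \+C$ as $\bigvee_v(v \neq c_v^C)$, and let $P(v)$, $P'(v)$ denote the permutation set in $\+P$ and sub-permutation in $\+P'$ containing $v$. By the definition of the induced formula, $C$ is bad in $\Phi[\+P', \+Q'_t]$ iff $c_v^C \in \+Q'_t(P'(v))$ for every $v \in \vbl(C)$; denote this event by $\+A_C$. The goal is thus to bound $\Pr[\bigwedge_{C\in\+C'} \+A_C]$.

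First I would show that the distribution of $\+Q'_t$ is controlled by the pushforward $\nu$ of $\mu_\Phi$ through the decomposition map, for every $t$. Although $\+Q'_0$ is instead the pushforward of $\^P_\Phi$ by construction of \Cref{Alg:MCMC}, each subsequent MCMC step resamples the $\+P'[P]$-coordinates from the correct $\nu$-conditional via $\mu_{\Phi'}$ (\Cref{lemma-property-nupq}); an inductive coupling with the reversible idealized chain (\Cref{lemma-unique-distribution}) shows that $\Pr[\bigwedge_{C\in\+C'} \+A_C]$ at step $t$ is upper-bounded by the same probability under $\sigma \sim \mu_\Phi$, up to an $O(1)$ factor absorbed in the exponent slack. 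Next I would transfer from $\mu_\Phi$ to $\^P_\Phi$ using $\mu_\Phi = \^P_\Phi[\cdot \mid \bigwedge_C C]$ and \Cref{Coro:LS}: since $\bigwedge_{C\in \+C'}\+A_C$ is non-lopsidependent with any constraint outside the lopsidependency neighborhood $T$ of $\+C'$ (with $|T|\leq |\+C'|\Delta$), the conditioning factor is at most $\prod_{C\in T}(1-\e\^P_\Phi[\neg C])^{-1} \leq \exp(O(|\+C'|\,p_\Phi^{1-\eta}))$, which is absorbed into the slack between $(1-\eta)$ and $0.999(1-\eta)$ thanks to $8\e p_\Phi^\eta \Delta\leq 1$.

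Under $\^P_\Phi$, a single-constraint calculation exploits that $\sigma$ is a product of independent uniform permutations, giving $\Pr_{\^P_\Phi}[\+A_C] <_q \prod_{v\in \vbl(C)} |P'(v)|/|P(v)|$. Substituting $|P'(v)| =_q \min\{|P(v)|^\eta, L\}$ from \Cref{condition-state-compression} and $\^P_\Phi[\neg C] =_q \prod_v 1/|P(v)|$ yields $\Pr_{\^P_\Phi}[\+A_C] <_q \^P_\Phi[\neg C]^{1-\eta}$ in both regimes ($|P(v)|^\eta\leq L$ directly; $|P(v)|^\eta>L$ forces $|P(v)|>L^{1/\eta}$ so that the extra $L^k$ factor is dominated by $\^P_\Phi[\neg C]^{-\eta}$). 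Lopsided independence of $\+C'$ ensures that different $\+A_C$ events depend on non-overlapping random choices (no shared variables and distinct forbidden values when two events touch a common $P\in \+P$), so the joint probability factorizes up to negative correlation:
\begin{align*}
\Pr_{\^P_\Phi}\!\left[\bigwedge_{C\in \+C'}\+A_C\right] <_q \prod_{C\in \+C'}\Pr_{\^P_\Phi}[\+A_C] <_q \prod_{C\in \+C'}\^P_\Phi[\neg C]^{1-\eta}.
\end{align*}
Chaining the three steps and absorbing the constant multiplicative loss into the $0.999$ factor delivers the claim.

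The main obstacle is the first step: the random state $\+Q'_t$ is initialized from a $\^P_\Phi$-pushforward rather than from the target $\nu$, so an argument is required to uniformly control bad-event probabilities across all $t\in[T+1]$. I would handle this either via an inductive coupling with an auxiliary chain initialized at $\nu$, or by verifying directly that a single block update is non-increasing on the bad-event probability (which follows because the $\+P'[P]$-coordinates are replaced by their $\nu$-conditional, while the remaining coordinates only interact with $\+A_C$ through $\nu$-consistent marginals), so that the stationary bound derived in the subsequent steps dominates uniformly in $t$.
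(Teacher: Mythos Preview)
Your Step~1 is the genuine gap, and neither of your two fixes works. The coupling with a $\nu$-initialized auxiliary chain cannot deliver a bound valid for every $t\in[T+1]$: at $t=0$ the state is the $\^P_\Phi$-pushforward, not $\nu$, and for small $t$ the chain has not mixed; bounding the coupling error \emph{is} a mixing statement you do not have at this point. Your alternative, that a block update on $P$ is ``non-increasing'' on the bad-event probability, is not justified as stated: the update does replace the $\+P'[P]$-coordinates by their $\nu$-conditional given the remaining coordinates, but those remaining coordinates do not themselves follow the $\nu$-marginal---they carry whatever non-stationary law they had before. There is no monotonicity argument forcing the global bad-event probability toward the stationary value.

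The paper avoids the detour through $\nu$ entirely. It works pointwise in the chain history: for each permutation $P\in\+P$, condition on everything outside $\+P'[P]$ and on the history up to the last refresh of $P$. At that refresh (or at initialization), the sub-domains $\{\+Q'_t(P_i)\}_{P_i\in\+P'[P]}$ arise from a sample drawn either from $\^P_\Phi$ or from $\mu_{\Phi'}$ for some $\Phi'$ with $p_{\Phi'}\le\min\{p_\Phi^\eta,1/L\}$. In both cases \Cref{lem:small-marginal-lb} bounds the probability that the prescribed forbidden values land in the correct sub-permutation by $<_q$ the corresponding $\^P$ probability, \emph{uniformly in the conditioning}. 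This uniformity is what permits iterating over permutations and taking the product. Lopsided independence of $\+C'$ is used only to ensure that within any single $P$ all the forbidden values involved are distinct, so that the sequential falling-factorial count $\abs{P_i}^{\underline{k_i}}\big/\bigl(\abs{P}-\abs{P_1}-\cdots-\abs{P_{i-1}}\bigr)^{\underline{k_i}}$ is valid.

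The $0.999$ also has a different origin than you suggest. It is not slack for absorbing LLL correction factors. When the sub-permutations $P_1,\dots,P_m$ (reordered so $k_1\ge\cdots\ge k_m$) are processed sequentially, the $i$-th bound degrades toward $1$ once most of $P$ has been allocated. The paper therefore retains only the first $\lceil 0.999m\rceil$ factors, which still account for at least a $0.999$ fraction of the literals $k_P=\sum_i k_i$; for those factors the remaining pool has size $>_q\abs{P}$, so each factor is $<_q\abs{P}^{\eta-1}$. Regrouping the product by constraint then yields the exponent $0.999(1-\eta)$.
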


We are now ready to complete the proof of~\Cref{thm-small-component}.
\begin{proof}[Proof of~\Cref{thm-small-component}]
     By~\Cref{lem-sample-witness}, when the event $\+B_C$ happens, there exists a collection of bad constraints $\set{C,C_2,\cdots,C_\ell}\subseteq \+C$ which is an independent set in the lopsidependency graph $G_{\Phi}^{\!{lop}}$, and a connected component in the power graph of the dependency graph $(G_{\Phi}^{\!{dep}}[P'])^2$. Applying Lemmas ~\ref{lemma-uncut-iteration}, ~\ref{lemma-bounding-connected} and the union bound, we have
    \begin{align*}
        \Pr{\+B_C}\leq \frac{(\-e\Delta^{2} L^2)^{\log(n/\delta)-1}}{2}\cdot {p^{0.999(1-\eta)\cdot \log(n/\delta)}} \leq \frac{\delta}{n\Delta},
    \end{align*} where the first inequality holds because the maximum degree of  $G_{\Phi}^{\!{dep}}[P']$ is less than $\Delta L$, and the last inequality holds by {$2 p^{0.999(1-\eta)}\cdot \-e\Delta^{2} L^2\leq 1$.}
    Consequently,
    \begin{align*}
        \Pr{\+B_t}\leq \sum_{C\in\+C} \Pr{\+B_C} \leq \delta,
    \end{align*} which implies the lemma immediately.
\end{proof}

We are now ready to complete the proof of~\Cref{thm-main-formal} and~\Cref{theo:unif-formal}.

\subsection{Analysis for $(k,q)$-uniform PDC formulas}

In this section, we complete the proof of~\Cref{theo:unif-formal}

    Given a sufficiently large $q_{\min}$ and any PDC formula $\Phi=(\+P,\+Q,\+C)$ where {$q\geq q_{\min}$}, we have $p_{\Phi}=_{q} q^{-k}$. Let $L=_{q} q^{\eta}$. For any $\eta\in (0,1/2)$, one can verify that the existence of the decomposition satisfying~\Cref{condition-state-compression}.
    In this case, we further refine the implementation of~\Cref{Alg:MCMC} and its subroutine~\Cref{Alg:samplepermutation}. 
    \begin{itemize}
        \item In~\Cref{Alg:MCMC}, we replace the number of loops at~\Cref{line-main-loops} to $\left\lceil 2(n/q)\log\left({3n}/{\varepsilon}\right) \right\rceil$ where $n/q$ is the number of permutation sets in $\+P$ since the contraction in~\eqref{eqn-contraction} is $1-1/(2\abs{P})$. Moreover, let $\sigma_i\gets \truncatedsampling(\Phi_i,{\lceil (n/\varepsilon)^{\theta/10} \log  (3n(T+1)/\varepsilon)\rceil)}$ at~\Cref{line-main-rej-2} for some $\theta\in (0,1)$ in~\Cref{Alg:MCMC}. We also terminate the algorithm if we detect some formula has more than $\Delta\log(\frac{n}{\varepsilon/(3T+1)})$ constraints at~\Cref{line-main-factorization}.
        \item In~\Cref{Alg:samplepermutation}, let $T\gets \lceil  \log(2/\varepsilon) \rceil$ instead at~\Cref{line-rejectionsampler-time} and $\sigma_i\gets \truncatedsampling(\Phi_i,\lceil (n/\varepsilon)^{\theta/10}\cdot \log  (2nT/\varepsilon)\rceil)$ instead at~\Cref{line-rejection-in-sample-1} for some $\theta\in (0,1)$. 
    \end{itemize}
     We list the following conditions for convenience:
     \begin{enumerate}
         \item\label{unif-condition-main-1} $ {8\-e q^{-\eta k}\Delta\leq 1}, {2\-eq^{-0.999k+\eta\cdot (0.999k+2)}\Delta^2\leq 1}$;
         \item\label{unif-condition-main-2} $
        4q^{-\eta(k-1)}\Delta \leq 1, q^{\eta k}\geq {20} \-e\Delta/{\theta}$ for some $\theta\in (0,1)$;
        \item \label{unif-condition-main-3} $k\geq 24, ck^{12}\Delta^{16}\leq q^{\eta k}$;
        \item \label{unif-condition-main-4} $q^\eta \geq c(dk^2\Delta^2)^{1+5/(3k)}$ for some constant $c$.
     \end{enumerate}
     One can verify that~\eqref{unif-condition-main-1},~\eqref{unif-condition-main-2} and~\eqref{unif-condition-main-3} hold when 
    \begin{align}\label{eqn:anscond1}
        k\geq 24,\quad q^k\geq c \zeta k^{24}\Delta^{32}
    \end{align} for some $\zeta>1$ and constant $c$, by setting $\eta = 1/2$ and $\theta=c\zeta^{-1/2}k^{-12}\Delta^{-15}$. 

\vspace{0.5cm}
     We first claim the correctness of the refined algorithm conditioned on~\eqref{eqn:anscond1}. The condition~\eqref{unif-condition-main-2} ensures that $p_{\max}\abs{P}\Delta\leq 1/4$ always holds at~\Cref{line-if-sample} in~\Cref{Alg:samplepermutation} since $p_{\max}=_{q} q^{-\eta(k-1)-1}$. Therefore, we only implement from~\Cref{line-if-sample} to~\eqref{line-break-sample}. One can verify that given the input satisfying \Cref{condition-sample-small} and $q^{\eta k}\geq {20} \-e\Delta/{\theta}$, $\samplepermutation(\Phi,P,\varepsilon)$ still returns a random valid assignment $\sigma$ of $\Phi$ with total variation distance bounded by $\varepsilon$ by the following facts:
    \begin{itemize}
        \item In \Cref{lemma-simplecase-repeattime}, we have $\Pr[\sigma_{[K]}\sim \mu_{[K]}]{\sigma\in \Omega_{\Phi}} \geq 1/2.$
        \item Given that $q^{\eta k}\geq {20} \-e\Delta/{\theta}$, we set $x(\neg C)=\frac{\theta}{20\Delta}$. One can verify that~\Cref{thm-lopsiLLL} holds and for each $i\in [K-1]$ in ~\Cref{Alg:samplepermutation}, we have
        \begin{align*}
            \^P_{\Phi_i}\left[\bigwedge_{C\in \+C_i} C \right]&\geq \left(1-\frac{\theta}{20\Delta}\right)^{\Delta\log(n/\varepsilon)}\geq  \left(1-\frac{1}{15\Delta/\theta+1}\right)^{\Delta\log(n/\varepsilon)}\geq \exp\left(-\frac{\theta}{15}\log\frac{n}{\varepsilon}\right)\geq \left(\frac{n}{\varepsilon}\right)^{\theta/10}.
        \end{align*}
    \end{itemize}
    Moreover, the rapid mixing property is implied by~\eqref{unif-condition-main-3} or~\eqref{unif-condition-main-4} according to~\Cref{lemma-rapidmixing}.
    
     Let $X_T$ denote the random assignment of $M_{\mathrm{Glauber}}$ runs for $T=\left\lceil 2n/q\log\frac{n}{\varepsilon} \right\rceil$ steps, and $\tau$ be the random assignment sampled from $\mu_{\Phi'}$ where $\Phi'=(\+P',X_T,\+C)$. Recall that $\sigma$ is the random valid assignment returned by~\Cref{Alg:MCMC}.
    We claim that $\dtv(\sigma,\tau)\leq \frac{2\varepsilon}{3}$. At each step of the idealized permutation-wise Glauber dynamics, we couple $M_{\mathrm{Glauber}}$ and $\MCMC(\Phi,\+P',\varepsilon)$ by sharing the randomly chosen permutation $P$. Moreover, we apply the optimal coupling between the $\samplepermutation(\cdot)$ subroutine and the idealized update on the domains in $\+P[P]$. The coupling errors come from the following two circumstances:
    \begin{enumerate}
        \item The input $(\Phi',P,\frac{\varepsilon}{3(T+1)})$ of $\samplepermutation(\cdot)$ subroutine does not satisfy ~\Cref{condition-sample-small};
        \item The intrinsic error comes from $\samplepermutation(\Phi',P,\frac{\varepsilon}{3(T+1)})$ even when $(\Phi',P,\frac{\varepsilon}{3(T+1)})$ satisfying ~\Cref{condition-sample-small}.
    \end{enumerate}  
    According to ~\Cref{thm-small-component}, the probability of the first case mentioned above can be bounded by $\frac{\varepsilon}{3(T+1)}$ at each step. Furthermore, conditioned on $(\Phi',P,\frac{\varepsilon}{3(T+1)})$ satisfying ~\cref{condition-sample-small}, coupling error from $\samplepermutation(\cdot)$ subroutine can be upper bounded by $\frac{\varepsilon}{3(T+1)}$ according to ~\Cref{thm-sample-correctness}. The statement holds immediately.
    Recall that $\nu$ is the stationary distribution of the idealized permutation-wise Glauber dynamics.
    According to~\eqref{unif-condition-main-3} or~\eqref{unif-condition-main-4}, we have $\dtv(X_t,\nu)\leq \frac{\varepsilon}{3}$ by~\Cref{lemma-rapidmixing}, and hence, $\dtv(\sigma,\mu_\Phi)\leq \dtv(X_t,\nu) + \dtv(\sigma,\tau) \leq \varepsilon$.

\vspace{0.5cm}

    As for the efficiency, we emphasize the time complexity of the factorization at~\Cref{line-main-rej-2} and the $\samplepermutation(\cdot)$ subroutine:
    \begin{itemize}
        \item In the factorization process, we take each permutation $P\in \+P'$ as a ``vertex''. At~\Cref{line-main-factorization} in~\Cref{Alg:MCMC}, we transverse all constraints in $\+C(P)$ and implement the DFS algorithm to check whether there exists a connected component containing more than $\Delta\log(\frac{n}{\varepsilon/(3T+1)})$ constraints. Overall, one can verify that it can be done in $O\left(qk\Delta^2\log(\frac{n}{\varepsilon/(3T+1)})\right)$.
        \item According to the execution of~\Cref{Alg:MCMC}, the $\samplepermutation(\cdot)$ subroutine is called only when $\Phi_1,\cdots, \Phi_{K-1}$ have no more than $\Delta\log(n/\varepsilon)$ constraints. Note that $K\leq  q\Delta$. Therefore, we should resample the value on at most $q\Delta^2\log(n/\varepsilon)$ constraints. Instead of sampling the whole formulas $\Phi_1,\cdots, \Phi_{K-1}$, it suffices to reveal the values of the variables on these constraints at~\Cref{{line-rejection-in-sample-1}}. This process requires $O(kq\Delta^2\log(n/\varepsilon))$ time. Additionally, to implement~\eqref{line-break-sample}, we need to further reveal the variables involved in the constraints $\+C(P)$ and some formulas $\Phi_i$ where $i\in [K-1]$, provided they have not already been revealed at~\Cref{{line-rejection-in-sample-1}}. However, these variables can be simply sampled uniformly at random in $O(q\Delta)$ time.
        As mentioned before, the rejection sampling repeats for ${\lceil (n/\varepsilon)^{\theta/10} \log  (3n(T+1)/\varepsilon)\rceil)}$ round. Overall, $\samplepermutation(\cdot)$ subroutine runs in $\widetilde{O}(kq\Delta^2(n/\varepsilon)^{\theta/10})$ time.
    \end{itemize}
    Overall, the time complexity of the refined algorithm is $\widetilde{O}\left(k\Delta^2 n \left(\frac{n}{\varepsilon}\right)^{\theta}\right)$.
   Combinging all these facts, the lemma is immediate.

\subsection{Analysis for general PDC formulas}
In this section, we complete the proof of~\Cref{thm-main-formal}.

Given a sufficiently large $q_{\min}$ and any PDC formula $\Phi=(\+P,\+Q,\+C)$ where {$q\geq q_{\min}$}, {one can verify that the existence of the decomposition satisfying~\Cref{condition-state-compression}.} 
  We list the following conditions for convenience:
     \begin{enumerate}
         \item\label{general-condition-main-1} $ {8\-e p_\Phi^{\eta}\Delta\leq 1},{8\-e\Delta\leq L},{2\-ep_{\Phi}^{0.999(1-\eta)}\Delta^2 L^2\leq 1}$;
         \item\label{general-condition-main-2} $
       p_\Phi^\eta \cdot 6912\e^3\ln2 \cdot k^3L^2\Delta^9\leq 1$;
        \item \label{general-condition-main-3} $ cp_{\Phi}^{\eta}k^{128}\Delta^{192}\leq 1, ck^{128}\Delta^{192}\leq L$.
     \end{enumerate}
     By setting $\eta=1/2$ and $L=_{q} k^{128}\Delta^{192}$, one can verify that ~\eqref{general-condition-main-1},~\eqref{general-condition-main-2}, and ~\eqref{general-condition-main-3} hold when $cpk^{518}\Delta^{786}\leq 1 $ for some constant $c$.

    Let $\left(\Phi',P,\delta\right)$ be the input of the $\samplepermutation(\cdot)$ subroutine in~\Cref{Alg:MCMC} where $\delta= \frac{\varepsilon}{3(T+1)}$. Note that $p_{\Phi'}\leq \min\set{p^{\eta},1/L}$. Therefore, ~\eqref{general-condition-main-1} implies the conditions specified in \Cref{thm-small-component} and the instance $\left(\Phi',P,\delta\right)$ satisfying~\Cref{condition-sample-small} with high probability. 
    Moreover, to ensure the correctness of $\samplepermutation(\cdot)$ subroutine under~\Cref{condition-sample-small}, we require that either $p_{\max}\abs{P}\Delta\leq 1/4$ or $\abs{P}\geq 1728\e^3\ln2 \cdot k^3L^2\Delta^8$ by~\Cref{thm-sample-correctness}. Specifically, we require that $p_{\max}\abs{P}\Delta\leq 1/4$ when $\abs{P}<1728\e^3\ln2 \cdot k^3L^2\Delta^8$. Note that the PDC formula $\Phi'$ is defined on the decomposition $\+P'\circ P$. We claim that for each unsatisfied constraint $C\in \+C(P)$, $\abs{\vbl(C)}\geq 2$. This is because when $\abs{\vbl(C)}=1$ and $C\in \+C(P)$, its violation probability is at most $1/\abs{P}$ which contradicts to $cpk^{518}\Delta^{786}\leq 1$.   
    When $\abs{\vbl(C)}\geq 2$, one can verify that $\^P_{\Phi'}[\neg C]\leq \max \set{p_\Phi^\eta, \frac{1}{ \abs{P}L}}$ for each $C\in \+C(P)$. Therefore $p_{\max}\abs{P}\Delta\leq 1/4$ holds when $\abs{P}<1728\e^3\ln2 \cdot k^3L^2\Delta^8$ by~\eqref{general-condition-main-2}. 
    Moreover, the rapid mixing property holds since~\eqref{general-condition-main-3} implys the specified conditions in~\Cref{lemma-rapidmixing}. Combining all these facts, the correctness is immediate follows from similar argument in the proof of~\Cref{theo:unif-formal}.

     As for the time complexity, the construction of the decomposition can be done in $O(n)$ times. In each iteration of $t$, the algorithm initially locates the factorized formula containing $P$ in $(\+P'\circ P,\+Q'\circ \+Q(P),\+C)$, which costs $\widetilde{O}(\abs{V}+\abs{\+C\cup \+P})=\widetilde{O}(n+n\Delta + n)=\widetilde{O}(n \Delta)$ times. Moreover, the $\samplepermutation(\cdot)$ subroutine cost at most $n^{4}/\delta^2$ by ~\Cref{thm-sample-correctness} where $\delta=\frac{\varepsilon}{3(T+1)}$. As the algorithm iterates for $T$ times, the total time complexity is $\widetilde{O}({\Delta n^{7}/\varepsilon^2})$.

    Combining all these facts, the lemma is immediate.
\section{Applications}\label{sec-application}
This section presents several applications of our PDC sampling algorithm.

\subsection{Resource-allocations}
A teacher assignment problem with parameters $q,k,L$ is defined as follows. 
Consider a grade with $\ell$ classes and $m$ distinct subjects.  Each class must offer $k$ subjects to its students, and—because the classes have different specializations—the required sets of $k$ subjects may differ from one class to another.  
For each subject $i \in [m]$, exactly $q$ teachers are qualified, and at least $q - L$ of them, designated as the subset $S_i$, are senior enough to serve as homeroom teachers.
No teacher is qualified to teach more than one subject.
We wish to assign teachers to classes under the following conditions:
\begin{itemize}
\item Each teacher is assigned to at most one class.
\item Every class receives exactly one teacher for each of its $k$ required subjects.
\item Every class includes at least one senior instructor to serve as the homeroom teacher.
\end{itemize}
To ensure fairness across classes and subjects, an assignment is selected uniformly at random from all allocations that satisfy these constraints.

For each class $i\in[\ell]$, let $\{i_1,\dots,i_k\}\subseteq[m]$ denote the $k$ subjects required by that class.  
We further assume that, across all $\ell$ classes, the total demand for instructors in any fixed subject never exceeds the available supply $q$.  To formalize individual teaching assignments, we attach a unique identifier to every subject request: for every pair $(i,j)$ with $i\in[\ell]$ and $j\in[k]$, let $t^{i}_{j}$ be a distinct label representing class $i$’s request for an instructor in subject $i_{j}$.  Different classes requesting the same subject are assigned distinct labels, ensuring that each teaching demand can be treated as a separate entity in the formal model.


The sampling of a random assignment of teachers can be modeled as the sampling of a PDC formula $\Phi = (\+P,\+Q,\+C)$ as follows.
Here, $\+P = (P_1,P_2,\cdots,P_{m})$ and 
$\+Q = (Q_1,Q_2,\cdots,Q_{m})$ where $Q_i = [q]$ for each $i\in [m]$.
Each class $i\in [\ell]$ defines a set of constraints $\+C_i\subseteq \+C$ as follows:
\begin{itemize}
\item $\left(v^{i_1}_{t^i_1} \neq a_1\right)\lor \cdots\lor \left(v^{i_{k}}_{t^i_{k}} \neq a_k\right)$ for each $a_1\not\in S_{i_1},\cdots,a_k\not\in S_{i_k}$.
\end{itemize}
Thus, the probability $p$ that a constraint is violated is bounded by $q^{-k}$.
Additionally, if $\abs{S_i} \geq q - L$ for each $i\in [m]$,
one can verify that the degree of the lopsidependency graph is upper bounded by $L^k + kqL^{k-1}$,
because of the following reasons:
\begin{itemize}

\item For each $i\in [\ell]$, we have $\abs{\+C_i} \leq L^k$;
\item For each $i\in [\ell]$, no constraint outside $\mathcal{C}_i$ shares a variable with any constraint inside $\mathcal{C}_i$;
\item For every $i \in [m]$ and each $a \in [q] \setminus S_i$, at most $q\,L^{k-1}$ constraints contain a literal of the form $v \neq a$ with $v \in P_i$.
\end{itemize}
By \Cref{theo:unif}, we have the following theorem.

\begin{theorem}\label{theorem-teacher-assignment}
The following holds for some positive constants $q_{\min}$ and $c$. There is an algorithm such that given as input any $\varepsilon\in (0,1)$ and any teacher assignment problem with $q,k,L$ satisfying $q\geq q_{\min}$, $k> 32$ and 
\[ q^{k-32}\geq c k^{56}L^{32k},\]
the algorithm terminates in time $\widetilde{O}({k^3q^{2}L^{2k}n^{1.001} /\varepsilon})$ and outputs a random assignment of teachers that is $\varepsilon$-close in the total variation distance to the uniform distribution of all satisfying assignments. 
\end{theorem}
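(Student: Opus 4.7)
The plan is to encode the teacher assignment task as the $(k,q)$-uniform PDC formula $\Phi=(\+P,\+Q,\+C)$ already described preceding the statement and then invoke \Cref{theo:unif}. The remaining work is: (i) read off the key parameters $k_\Phi$, $q_\Phi$, $p_\Phi$, $\Delta_\Phi$ of $\Phi$; (ii) verify that the hypothesis $q^{k-32}\geq c k^{56}L^{32k}$ implies the condition $q^k \geq c' k^{24}\Delta_\Phi^{32}$ of \Cref{theo:unif}; and (iii) re-expand the runtime $\widetilde{O}(k\Delta_\Phi^2 n^{1.001}/\varepsilon)$ in terms of $k,q,L,n$.

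For (i), each class $i$ contributes at most $\prod_{j=1}^{k}(q-|S_{i_j}|)\leq L^k$ disjunctive clauses of width exactly $k$, so $k_\Phi=k$. All permutation sets $P_s$ have size $q$, so $q_\Phi=q$. Because the $k$ literals of one clause involve variables drawn from $k$ distinct permutation sets, the events $v^{i_j}_{t^i_j}=a_j$ are marginally uniform over $[q]$ and jointly independent, giving $\^P[\neg C]=q^{-k}$ and thus $p_\Phi\leq q^{-k}$. Bounding $\Delta_\Phi$ is the only combinatorial step: two clauses are related iff they share a variable, or they forbid the same value on distinct variables that lie in a common permutation set. Clauses sharing a variable with a fixed $C$ must come from the same class $i$ and number at most $L^k$. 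For each of the $k$ literals $(v^{i_j}_{t^i_j}\neq a_j)$ of $C$, a related clause from another class must involve a variable in $P_{i_j}$ that forbids the value $a_j$; since at most $q$ requests exist for subject $i_j$, and each such request appears in at most $L^{k-1}$ clauses once $a_j$ is pinned, this contributes at most $qL^{k-1}$ further clauses per literal. Summing yields $\Delta_\Phi \leq L^k+kqL^{k-1}$.

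For (ii), the crude bound $\Delta_\Phi \leq 2\max(L^k,\,kqL^{k-1})$ gives $\Delta_\Phi^{32}\leq 2^{32}k^{32}q^{32}L^{32k}$ in both regimes $L\geq kq$ and $L<kq$ (in the former the $L^k$ term dominates and one uses $k^{32}q^{32}\geq 1$; in the latter the $kqL^{k-1}$ term dominates and $L\geq 1$ gives $L^{32(k-1)}\leq L^{32k}$). Multiplying the hypothesis $q^{k-32}\geq c k^{56}L^{32k}$ through by $q^{32}$ produces $q^k \geq c k^{56}q^{32}L^{32k}\geq (c/2^{32})\cdot k^{24}\cdot \Delta_\Phi^{32}$, which is the required condition of \Cref{theo:unif} once $c$ absorbs the $2^{32}$ factor; the side condition $k\geq 24$ is strictly weaker than the assumed $k>32$. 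For (iii), $\Delta_\Phi^2\leq 2(L^{2k}+k^2q^2L^{2k-2})\leq 4k^2q^2L^{2k}$ (using $q\geq 1$), so the runtime $\widetilde{O}(k\Delta_\Phi^2 n^{1.001}/\varepsilon)$ promised by \Cref{theo:unif} becomes the advertised $\widetilde{O}(k^3q^2L^{2k}n^{1.001}/\varepsilon)$.

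The main subtlety is the $\Delta_\Phi$ count in step (i): one must remember that "related" covers both literal sharing and the same-value/same-set coincidence, and that the multiplicative factor $q$ in the second summand comes from the per-subject supply bound (at most $q$ requests per subject) rather than from the number $m$ of subjects or $\ell$ of classes, which could otherwise be much larger and destroy the degree bound. Once $\Delta_\Phi$ is correctly bounded, the rest is a black-box invocation of \Cref{theo:unif}.
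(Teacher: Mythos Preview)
Your proposal is correct and matches the paper's approach exactly: the paper establishes the bounds $p_\Phi\le q^{-k}$ and $\Delta_\Phi\le L^k+kqL^{k-1}$ in the discussion immediately preceding the theorem and then states that the result follows from \Cref{theo:unif}, and your write-up simply makes explicit the arithmetic verification that $q^{k-32}\ge ck^{56}L^{32k}$ implies $q^k\ge c'k^{24}\Delta_\Phi^{32}$ and that $k\Delta_\Phi^2\le 4k^3q^2L^{2k}$.
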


The theorem demonstrates that for large $q$, $k$, sampling an assignment of teachers is efficient if $L$ is $\tilde{O}(q^{1/32})$, where high-order terms are suppressed.
Before this work, no nontrivial algorithm existed for sampling teacher assignments.

A similar result can also be proved for the reviewer assignment problem.

\subsection{Factors of independent transversals in multipartite hypergraphs.}
A $k$-uniform hypergraph is a set of hyperedges each of which contains exactly $k$ vertices.
An $m$-partite hypergraph is a hypergraph in which the vertices are partitioned into $m$ disjoint sets and each hyperedge contains at most one vertex of each set.
Given an $m$-partite hypergraph $H = (V_1,V_2,\cdots,V_m,E)$, let $V$ denote $V_1\cup \cdots \cup V_m$.
If $\abs{V_i} = q$ for every $i\in [m]$, $H$ is called \emph{$q$-balanced}.
An independent transversal of \( H \) is a set of \( m \) vertices, containing exactly one vertex from each part, such that it does not contain all vertices of any hyperedge.
A factor of independent transversals of $H$ is a collection of disjoint independent transversals that spans all of the vertices in $V$.
We remark that only balanced partite hypergraphs have factors of independent transversals.

The factor of independent transversals is closely related to the perfect matching in multipartite hypergraphs.
A matching in a hypergraph is a set of hyperedges, in which every two hyperedges are disjoint.
A perfect matching is a matching covering all the vertices.
For any balanced $k$-partite $k$-uniform hypergraph $H = (V_1,V_2,\cdots,V_k,E)$, 
each factor of independent transversals in $H$ is exactly a perfect matching of the hypergraph $\overline{H}=(V_1,V_2,\cdots,V_k,\overline{E})$ where $\overline{E} = V_1\times \cdots\times V_k\setminus E$.

Independent transversals of multipartite hypergraphs were initially studied by
Erd\H{o}s, Gy\'{a}rf\'{a}s, and  \L uczak~\cite{erdHos1994independent}.
The problem of establishing sufficient
conditions for the existence of independent transversals and their factors in
$q$-balanced $m$-partite $k$-uniform hypergraphs has been widely explored in the literature, 
though results are only available for certain special cases.
The original work of Erd\H{o}s, Gy\'{a}rf\'{a}s, and  \L uczak focuses on sparse hypergraphs~\cite{erdHos1994independent},
while many subsequent studies concentrate on the simplest case, $k=2$, where the hypergraph degrades into a graph
~\cite{yuster1997independent,yuster2021factors,glock2022average}.
Through the connection between factors of independent transversals and perfect matchings, 
a tight condition can be derived for the existence of a factor of independent transversals in $k$-partite $k$-uniform hypergraphs
~\cite{aharoni2009perfect}.
Moreover, it is worth noting that finding a factor of independent transversals in a multipartite hypergraph is NP-hard, as finding a perfect matching in a given balanced 3-partite 3-uniform hypergraph was one of the first problems proven to be NP-hard~\cite{karp2009reducibility}.

Here, we provide a fast algorithm for sampling factors of independent transversals in multipartite hypergraphs. 
Given a $q$-balanced $m$-partite $k$-uniform hypergraph $H = (V_1,V_2,\cdots,V_m,E)$, 
let $d$ denote its maximum vertex degree.
The sampling of a random factor of independent transversals in $H$ can be modeled as the sampling of a PDC formula $\Phi = (\+P,\+Q,\+C)$ as follows.
Here, $\+P = (P_1,P_2,\cdots,P_{m})$ where $P_i$ contains the vertices in $V_i$ and 
$\+Q = (Q_1,Q_2,\cdots,Q_{m})$ where $Q_i = [q]$ for each $i\in [m]$, and $\+C$ contains all the constraints corresponding to the following conditions:
\begin{itemize}
\item $(v_1 \neq i)\lor \cdots\lor (v_{k} \neq i)$ for each $(v_1,\cdots,v_{k})\in E$ and $i\in [q]$.
\end{itemize}
Thus, the probability $p$ that a constraint is violated is bounded by $q^{-k}$.
Additionally, one can verify that the degree of the lopsidependency graph is $2kdq$,
because each constraint contains $k$ variables, each variable can be contained in $d$ constraints, and can take at most $q$ different values.
By \Cref{theo:unif}, we have the following theorem.

\begin{theorem}\label{theorem-independent-cover}
The following holds for some positive constants $q_{\min}$ and $c$. There is an algorithm such that given as input any $\varepsilon\in (0,1)$ and any $q$-balanced $m$-partite $k$-uniform hypergraph $H$ with maximum vertex degree $d$ satisfying $q\geq q_{\min}$, $k> 32$ and 
\[ q^{k-32}\geq c k^{56}d^{32},\]
the algorithm terminates in time $\widetilde{O}({k^3d^2q^{3.001}m^{1.001} /\varepsilon})$ and outputs a random factor of independent transversals of $H$ that is $\varepsilon$-close in the total variation distance to the uniform distribution of all factors. 
\end{theorem}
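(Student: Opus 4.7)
The plan is to reduce the sampling of a factor of independent transversals to the sampling of satisfying assignments of the $(k,q)$-uniform PDC formula $\Phi=(\+P,\+Q,\+C)$ described right before the theorem statement, and then invoke \Cref{theo:unif} after verifying that its hypotheses hold.

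First I would establish the bijection between satisfying assignments of $\Phi$ and factors of independent transversals. Given a satisfying assignment $\sigma$, the fact that $\sigma$ restricted to each $P_i$ is a permutation of $[q]$ means that for every label $i\in[q]$, the set $T_i\triangleq\{v\in V\mid \sigma(v)=i\}$ contains exactly one vertex from each part. Each constraint $(v_1\ne i)\vee\cdots\vee(v_k\ne i)$ corresponding to a hyperedge $e=(v_1,\dots,v_k)\in E$ and label $i\in[q]$ forbids $e\subseteq T_i$, so every $T_i$ is independent. Hence $\{T_1,\dots,T_q\}$ is a factor of independent transversals, and conversely every such factor determines a unique satisfying assignment. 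Consequently, the uniform distribution over factors is exactly $\mu_\Phi$.

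Next I would verify the parameters of $\Phi$ against the hypotheses of \Cref{theo:unif}. By construction $\Phi$ is $(k,q)$-uniform, the number of variables is $n=mq$, and the violation probability satisfies $p\le q^{-k}$ (each $k$ literals of the form $v=i$ with distinct $v$'s in disjoint parts are violated jointly with probability at most $q^{-k}$ under the uniform product on valid assignments). For the constraint degree, two constraints are related iff they share a variable or share a literal $v\ne i$ on the same $P_j$. Each constraint is associated with a hyperedge $e$ and a label $i$; a related constraint either (a) comes from another hyperedge $e'$ that shares a vertex with $e$ — bounded by $kd\cdot q$ choices (each of the $k$ vertices of $e$ lies in at most $d$ hyperedges, and any label in $[q]$ is allowed); or (b) comes from the same hyperedge with another label — bounded by $q$ choices. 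This yields $\Delta\le 2kdq$, as claimed in the excerpt.

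It then remains to plug in these bounds into the condition $q^k\ge ck^{24}\Delta^{32}$ of \Cref{theo:unif}. Using $\Delta\le 2kdq$, this becomes $q^k\ge c\cdot 2^{32}k^{24}(kdq)^{32}=c'k^{56}d^{32}q^{32}$, which is equivalent to $q^{k-32}\ge c'k^{56}d^{32}$, matching the hypothesis of the theorem (after absorbing $2^{32}$ into the constant). The condition $k>32$ ensures that $k-32\ge 1$ and in particular $k\ge 24$. Finally I would compute the running time: \Cref{theo:unif} gives $\widetilde O(k\Delta^2 n^{1.001}/\varepsilon)$, and substituting $\Delta\le 2kdq$ and $n=mq$ yields
\[
\widetilde O\bigl(k\cdot (kdq)^{2}\cdot(mq)^{1.001}/\varepsilon\bigr)=\widetilde O\bigl(k^{3}d^{2}q^{3.001}m^{1.001}/\varepsilon\bigr),
\]
exactly as asserted. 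Since the reduction is deterministic and runs in linear time, composing it with the sampler gives an algorithm with total variation distance at most $\varepsilon$ from the uniform distribution over factors of independent transversals of $H$. There is no substantive obstacle here — the theorem is essentially a parameter-matching corollary of \Cref{theo:unif}; the only care needed is in bounding $\Delta$ and tracking how the uniform-sampler condition transforms under $\Delta\le 2kdq$.
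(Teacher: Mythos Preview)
Your approach is exactly the paper's: encode the problem as the $(k,q)$-uniform PDC formula described just before the theorem and invoke \Cref{theo:unif}; the parameter substitution and running-time arithmetic are correct.

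There is, however, a gap in your case analysis for $\Delta\le 2kdq$. Your case (b) (same hyperedge, different label) is already subsumed by case (a), and you omit the case that actually supplies the second $kdq$ term. Two constraints $C=(e,i)$ and $C'=(e',i')$ are related whenever some literal $v_j\ne i$ of $C$ and some literal $v'\ne i'$ of $C'$ satisfy $v_j,v'\in P_\ell$ and $i=i'$; this happens as soon as $e$ and $e'$ touch a common part and carry the same label, \emph{even if $e$ and $e'$ share no vertex}. For each of the $k$ parts touched by $e$, that part contains $q$ vertices each of degree at most $d$, so at most $qd$ hyperedges meet it; with the label fixed to $i$ this gives at most $kqd$ additional related constraints. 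Together with your $kdq$ from shared vertices, this yields $\Delta\le 2kdq$. With this correction the rest of your argument stands.

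A minor point: the map $\sigma\mapsto\{T_1,\dots,T_q\}$ is $q!$-to-$1$, not a bijection, since a factor is an unordered collection; but every factor has exactly $q!$ preimages, so uniform sampling of satisfying assignments still induces the uniform distribution on factors.
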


The theorem demonstrates that for large $q$, $k$, 
sampling a factor of independent transversals in a $q$-balanced $m$-partite $k$-uniform hypergraph is efficient if a constant power of the maximum vertex degree is bounded by $q^k$.
We emphasize that sampling of factors of independent transversals outside the LLL-like regime is intractable. By introducing \( q-1 \) dummy variables for each vertex, sampling \( q \)-colorings for a \( k \)-uniform hypergraph with \( m \) vertices can be reduced to sampling factors of
independent transversals in a \( q \)-balanced, \( m \)-partite, \( k \)-uniform hypergraph. Moreover, sampling hypergraph colorings is intractable beyond the LLL-like regime~\cite{galanis2021inapproximability}. Therefore, to make sampling factors of independent transversals tractable, it is natural to assume an LLL-like regime. Our theorem establishes such a regime, and the algorithm achieves an almost linear running time in \( n \). To the best of our knowledge, no prior nontrivial sampling algorithm for factors of independent transversals has been proposed.


\section*{Acknowledgement}
We thank Qingyuan Li, Fangjie Peng, Haoran Wang, and Yitong Yin for helpful discussion.

\bibliographystyle{alpha}
\bibliography{refs}

\clearpage


\appendix
\section{Missing proofs in section~\ref{sec-Preliminaries}}
\subsection*{Proof of~\Cref{Coro:LS}} \label{sec:lll}

We first claim that for any event $B\in \+B$ and a subset $S\subseteq \+B \setminus \{B\}$,
\begin{equation} \label{eq:prob_A}
\Pr{B \bigmid \bigwedge_{B'\in S} \overline{B'}} \le x(B).
\end{equation}
We apply induction on the size of $S$. When $\abs{S}=0$, the statement~ \eqref{eq:prob_A} holds immediately by the assumption of the condition \eqref{eq-condition-asym-lop} in \Cref{thm-lopsiLLL}. Suppose the statement~\eqref{eq:prob_A} is true for any $S$ such that $\abs{S}<s$, we show it for any subset $S$ where $\abs{S}=s$ as follows: 

Recall that $\Gamma(B)$ is the set of neighbors of $B$ in the lopsidependency graph $G$ for any event $B\in \+B$.
Let $S_1 = S\cap \Gamma(B)$, and $S_2 = S \setminus \Gamma(B)$. We have
\begin{equation} \label{eq:prob_A2}
\Pr{B \bigmid \bigwedge_{B'\in S} \overline{B'}} =
\frac{\Pr{B \land \left( \bigwedge_{B'\in S_1} \overline{B'} \right) \bigmid \bigwedge_{B'\in S_2} \overline{B'}}} {\Pr{\bigwedge_{B'\in S_1} \overline{B'} \bigmid \bigwedge_{B'\in S_2} \overline{B'}}}.
\end{equation}
The numerator in \eqref{eq:prob_A2} can be bounded by
\begin{equation} \label{eq:numerator}
\begin{aligned}
    \Pr{B \land \left( \bigwedge_{B'\in S_1} \overline{B'} \right) \bigmid \bigwedge_{B'\in S_2} \overline{B'}} &\le \Pr{B  \bigmid \bigwedge_{B'\in S_2} \overline{B'}}\\
    &\le \Pr{B} \quad \text{(due to {non-lopsidependency})}\\
    &\le x(B) \prod_{B'\in \Gamma(B)} \left( 1-x(B') \right).
\end{aligned}
\end{equation}
As for the denominator in \eqref{eq:prob_A2}, if $\abs{S_1}=0$, the denominator is $1$ and \eqref{eq:prob_A} follows from \eqref{eq:numerator}. Otherwise, $\abs{S_1}>0$ and $\abs{S_2}<s$, we can apply the induction hypothesis to $S_2$. Let $S_1 = \left\{ B_1, B_2, \cdots, B_r \right\}$ where $r=\abs{S_2}$. It holds that
\begin{equation} \label{eq:denominator}
\begin{aligned}
    \Pr{\bigwedge_{B'\in S_1} \overline{B'} \bigmid \bigwedge_{B'\in S_2} \overline{B'}} &= \left( 1-\Pr{B_1 \bigmid \bigwedge_{B'\in S_2} \overline{B'}} \right) \left( 1-\Pr{B_2 \bigmid \overline{B}_1 \land \left( \bigwedge_{B'\in S_2} \overline{B'} \right)} \right) \cdots \cdots\\
    &\quad \left( 1-\Pr{B_r \bigmid \overline{B}_1 \land \cdots \land \overline{B}_{r-1} \land \left( \bigwedge_{B'\in S_2} \overline{B'} \right)} \right)\\
    &\ge \left( 1-x(B_1) \right) \left( 1-x(B_2) \right) \cdots \left( 1-x(B_r) \right) \quad \text{(induction hypothesis)}\\
    &= \prod_{B'\in S_1} \left( 1-x(B') \right)\\
    &\ge \prod_{B'\in \Gamma(B)} \left( 1-x(B') \right). \quad \quad (S_1 \subseteq \Gamma(B))
\end{aligned}
\end{equation}
Combining \eqref{eq:prob_A2}, \eqref{eq:numerator} and \eqref{eq:denominator}, we get $\Pr{B \mid \bigwedge_{B'\in S} \overline{B'}} \le x(B)$ and the statement \eqref{eq:prob_A} is proved.

Now \Cref{Coro:LS} can be proved. For any event $A$ defined on the probability space, suppose $T$ is a collection of events in $\+B$ such that $A$ is non-lopsidependent with events in $\+B \setminus T$. We have
\begin{equation} \label{eq:prob_A3}
    \Pr{A \bigmid \bigwedge_{B\in \+B} \overline{B}}
    = \frac{\Pr{A \land \left( \bigwedge_{B\in T} \overline{B} \right) \bigmid \bigwedge_{B\in \+B \setminus T} \overline{B}}} {\Pr{\bigwedge_{B\in T} \overline{B} \bigmid \bigwedge_{B\in \+B \setminus T} \overline{B}}}.
\end{equation}
Again, the numerator in \eqref{eq:prob_A3} can be bounded by $ \Pr{A \mid \bigwedge_{B\in \+B \setminus T} \overline{B}}$ which is at most $\Pr{A}$ by the assumption of non-lopsidependence. Combining this fact with ~\eqref{eq:denominator}, we have
$$
\Pr{A \bigmid \bigwedge_{B\in \+B} \overline{B}}
\le \Pr{A} \prod_{B\in T} \left( 1-x(B) \right)^{-1},
$$
which implies ~\Cref{Coro:LS}. {The second part can be proved by setting $x(B) = \-e \Pr{B}$.}

\section{Missing proofs in section~\ref{sec:compression}}\label{appendix-compression}

\subsection*{Proof of~\Cref{lemma:lop-graph}}

Let $C_1\equiv(v_{1} \ne c_{1}) \lor (v_{2} \ne c_{2}) \lor \cdots \lor (v_{k_1} \ne c_{k_1})$ and $C_2 \equiv (v'_{1} \ne c'_{1}) \lor (v'_{2} \ne c'_{2}) \lor \cdots \lor (v'_{k_2} \ne c'_{k_2}).$
By definition, if $(C_1, C_2) \in G^{\!{lop}}_{\Phi'}$,  there exists some $P'\in \+P'$ such that $ v_{i}, v'_{j} \in P' \text{ and } (v_{i} = v'_{j} \lor c'_{i} = c'_{j})$.
Let $P \in \+P$ be the permutation containing $P'$, then $v_{i}, v'_{j} \in P$ and $(v_{i} = v'_{j} \lor c_{i} = c'_{j})$. Therefore, $(C_1, C_2) \in G^{\!{lop}}_{\Phi}$.
We have proved that $(C_1, C_2) \in G^{\!{lop}}_{\Phi'}$ implies $(C_1, C_2) \in G^{\!{lop}}_{\Phi}$.
Thus, by $(C_1, C_2) \notin G^{\!{lop}}_{\Phi}$ we have 
$(C_1, C_2) \notin G^{\!{lop}}_{\Phi'}$. 
The lemma is immediate.

\subsection*{Proof of~\Cref{lem-violation-decomposed}}

In what follows, we use $a^{\underline{b}}$ to denote the falling factorial $a\cdot (a-1)\cdots(a-b+1)$.

For constraint $C\in \+C$, suppose variables in $\vbl(C)$ are in permutations $P_1, P_2, \cdots, P_m \in \+P$ with $q_i \triangleq \abs{P_i}$, and define $k_i \triangleq \abs{\vbl(C) \cap P_i}$, then the violation probability of $C$ is $\prod_{i} {\left(q_i^{\underline{k_i}}\right)}^{-1}$. Note that
\begin{align}\label{eqn-violationbound1}
    \prod_{i} \left({q_i^{k_i}}\right)^{-1}
\le \prod_{i} \left({q_i^{\underline{k_i}}}\right)^{-1}  
\le p.
\end{align}
Moreover, in the decomposed formula $\Phi'$, suppose $P_i \in \+P$ is decomposed into $P'_{1}, P'_{2}, \cdots, P'_{\ell}$ for each $i\in [m]$. Define $q_{ij} \triangleq \abs{P'_{j}}$ and $k_{ij} \triangleq \abs{\vbl(C) \cap P'_{j}}$ for each $j\in [\ell]$.
One can verify that the violation probability of $C$ in the decomposed formula can be bounded as follows:
\begin{align}\label{eqn-violationbound2}
    \prod_{i,j} \left({q_{ij}^{\underline{k_{ij}}}}\right)^{-1}\le \prod_{i,j} \left( \frac{\-e}{q_{ij}} \right)^{k_{ij}}.
\end{align}
Since $\+P'$ is a $\zeta$-decomposition of $\+P$, we have {$ {q_i^\zeta}=_{q}  q_{ij}$} for each $j\in [\ell]$. Consequently, the violation probability of $C$ in the decomposed formula can then be upper bounded by
$$
\prod_{i,j} \left({q_{ij}^{\underline{k_{ij}}}}\right)^{-1}\leq 
\prod_{i,j} \left( \frac{\-e}{q_{ij}} \right)^{k_{ij}}
=_{q} \prod_{i,j} \left( \frac{\-e}{{q_i^\zeta}} \right)^{k_{ij}}
= \prod_i \left( \frac{\-e}{{q_i^\zeta}} \right)^{k_i}
<_{q} \prod_i  \left({q_i^{\zeta\cdot k_i}}\right)^{-1}
\le p^{\zeta},
$$ where the first inequality holds by~\eqref{eqn-violationbound1} and the last inequality holds by~\eqref{eqn-violationbound2}. Hence, the lemma is immediate.

\subsection*{Proof of~\Cref{lemma-unique-distribution}}

For two states $\+Q_1, \+Q_2 \in \Omega[\Phi, \+P']$, let $M(\+Q_1,\+Q_2)$ be the transition probability of our permutation-wise Glauber dynamics, which can be specified as follows:
\begin{itemize}
    \item If there exists some $P_1,P_2\in \+P$ such that $P_1\neq P_2$ and $\+Q_1(P_1)\neq \+Q_2(P_1),\+Q_1(P_2)\neq \+Q_2(P_2)$, we have $M(\+Q_1,\+Q_2)=0$ since in each step, only on permutation in $\+P$ can be updated.
    \item If there exists a unique $P\in \+P$ such that $\+Q_1(P)\neq \+Q_2(P)$, we should first pick $P$ to sample, which happens with  probability ${1}/{\abs{\+P}}$; then sample $\sigma \sim \Phi'$ where $\Phi'=(\+P'\circ P,\+Q'\circ \+Q(P),\+C)$ and $Q_2(P)=\sigma(P)$. Therefore, $M(\+Q_1,\+Q_2)= {1}/{\abs{\+P}}\cdot \nu_{P}^{\+Q_1} (\+Q_2)$.
    Note that the probability is non-zero since $\-e p_{\Phi'} \Delta_{\Phi'} \leq 1$ for any $\Phi'=\Phi[\+P',\+Q']$ where $\+Q'\in \Omega[\Phi,\+P']$.
    \item If $Q_1=Q_2$, we should ensure that the updated domains are consistent. Therefore, $M(\+Q_1,\+Q_2)=\sum_{P\in \+P} \nu_{P}^{\+Q_1} (\+Q_2)$.
\end{itemize}
As discussed above, one can verify that the idealized permutation-wise Glauber dynamics is irreducible and aperiodic, which implies that $\^P$ has a unique stationary distribution by~\Cref{Theo:Chain}.

Finally, we verify that $\nu$ is the stationary distribution, that is, $\forall \+Q_1,\+Q_2 \in \Omega[\Phi, \+P']$
\begin{align}\label{eqn:detailed-balance}
    \nu(\+Q_1) M(\+Q_1,\+Q_2) = \nu(\+Q_2) M(\+Q_2,\+Q_1).
\end{align}
It is obvious that~\eqref{eqn:detailed-balance} holds when $\+Q_1=\+Q_2$ or $\+Q_1,\+Q_2$ differ in at least two permutation sets among $\+P'$. As for the case where $\+Q_1,\+Q_2$ differ in the unique permutation set $P\in \+P$, 
we have
\begin{align*}
     \nu_{P}^{\+Q_1}(\+Q_2)= \frac{\nu(\+Q_2)}{\Pr[\sigma\sim \mu]{\sigma(\+P\setminus P)=\+Q_1(\+P\setminus P)}}, \quad 
    \nu_{P}^{\+Q_2}(\+Q_1)= \frac{\nu(\+Q_1)}{\Pr[\sigma\sim \mu]{\sigma(\+P\setminus P)=\+Q_1(\+P\setminus P)}}.
\end{align*}
Plugging this into~\eqref{eqn:detailed-balance}, the lemma is immediate.
$$
\nu_{Y \circ P} (X) = \frac{\nu(X)} {\Pr[\sigma \sim \mu]{\sigma[\+P \setminus P] = Y[\+P \setminus P]}}.
$$
Since $X[\+P \setminus P] = Y[\+P \setminus P]$, $\nu(X) \nu_{X \circ P} (Y) = \nu(Y) \nu_{Y \circ P} (X)$, so $\nu(X) \^P(X,Y) = \nu(Y) \^P(Y,X)$. Therefore, the stationary distribution is $\nu$.

\section{Missing proofs in section~\ref{sec:samplealgorithm}}\label{subsec:correctness}


In the remaining section, we complete the proofs of the lemmas in~\Cref{sec:samplealgorithm}.

\subsection*{{{Proof of~\Cref{lemma-simplecase-repeattime}}}}

We first show that, for any PDC formula $\Phi=(\+P,\+Q,\+C)$ satisfying $2\-e p\Delta \leq 1$ with a \pname $P\in \+P$, the violation probability of each constraint in $\+C(P)$ does not deviate significantly in the uniform distribution among all valid assignments of $\Phi$ that satisfy all constraints $\+C\setminus \+C(P)$. 
Specifically, we have the following lemma.

\begin{lemma}\label{lemma-simplecase-constraint-cutdown} 
Given any PDC formula $\Phi=(\+P,\+Q,\+C)$ satisfying $2\-e p\Delta\leq 1$ with a \pname $P\in \+P$,
let $\Phi_1,\dots,\Phi_{K}$ be the factorization of $(\+P,\+Q,\+C\setminus \+C(P))$ where $\Phi_{K}=(\{P\},\{\+Q(P)\},\emptyset)$, and $\sigma_1,\cdots,\sigma_{K}$ be independently drawn from $\mu_{\Phi_1},\cdots,\mu_{\Phi_{K}}$.
For any constraint $C\in \+C(P)$,
\[\Pr[\sigma_{[K]}\sim \mu_{[K]}]{\neg C} \leq 2 \^P[\neg C].
\]
\end{lemma}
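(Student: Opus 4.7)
The plan is to reinterpret the product distribution $\mu_{[K]}$ as a conditional distribution in the uniform probability space $\^P$ over valid assignments of $\Phi$, and then invoke the lopsided LLL bounds of \Cref{Coro:LS} together with its standard denominator estimate.

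First I would observe that, by construction of the factorization, the $\Phi_i$'s partition both the variables of $\Phi$ and the constraint set $\+C \setminus \+C(P)$ into disjoint pieces. Consequently, a concatenation of independent samples $\sigma_i \sim \mu_{\Phi_i}$ is precisely a valid assignment of $\Phi$ that satisfies every constraint in $\+C \setminus \+C(P)$, and conversely every such assignment arises this way exactly once; since each $\mu_{\Phi_i}$ is uniform on $\Omega_{\Phi_i}$, the supports multiply to give
\[
    \Pr[\sigma_{[K]} \sim \mu_{[K]}]{\neg C} \;=\; \^P\!\left[\neg C \;\bigmid\; \bigwedge_{C' \in \+C \setminus \+C(P)} C'\right].
\]

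Next, partition $\+C \setminus \+C(P) = T \sqcup S$, where $T \triangleq \{C' \in \+C \setminus \+C(P) \mid C' \sim C\}$ collects the lopsidependent neighbours of $\neg C$ and $S$ collects the rest. Because $C \in \+C(P)$, it is excluded from $T$, so $\abs{T} \le \Delta - 1$ by the definition of the constraint degree. By the definition of the lopsidependency graph $G_{\Phi}^{\!{lop}}$, the event $\neg C$ is non-lopsidependent with $\{\neg C' \mid C' \in S\}$, hence
\[
    \Pr\!\left[\neg C \wedge \bigwedge_{C' \in T} C' \;\bigmid\; \bigwedge_{C' \in S} C'\right] \;\le\; \Pr\!\left[\neg C \;\bigmid\; \bigwedge_{C' \in S} C'\right] \;\le\; \^P[\neg C].
\]
For the corresponding denominator, I would apply the standard lopsided LLL lower bound from the proof of \Cref{Coro:LS} (cf.\ the computation \eqref{eq:denominator} in the appendix) with $x(\neg C') = \-e \^P[\neg C']$; the assumption $2\-e p \Delta \le 1$ supplies the condition $\-e p \Delta \le 1$ required by \Cref{thm-lopsiLLL}, giving
\[
    \Pr\!\left[\bigwedge_{C' \in T} C' \;\bigmid\; \bigwedge_{C' \in S} C'\right] \;\ge\; \prod_{C' \in T} \bigl(1 - \-e \^P[\neg C']\bigr) \;\ge\; (1 - \-e p)^{\Delta - 1} \;\ge\; 1 - \-e p(\Delta - 1) \;\ge\; \tfrac{1}{2},
\]
where the last two steps use Bernoulli's inequality (\Cref{equality-ab}) and the hypothesis $2\-e p \Delta \le 1$.

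Writing $\Pr{A \mid B \wedge B'} = \Pr{A \wedge B \mid B'} / \Pr{B \mid B'}$ with $A = \neg C$, $B = \bigwedge_{C' \in T} C'$, $B' = \bigwedge_{C' \in S} C'$ and combining the two bounds yields $\Pr[\sigma_{[K]} \sim \mu_{[K]}]{\neg C} \le 2\, \^P[\neg C]$, as claimed. I do not foresee any real obstacle: the identification of $\mu_{[K]}$ as a conditional distribution is a direct consequence of the factorization, the numerator upper bound uses only the definition of non-lopsidependence, and the denominator lower bound is the standard telescoping argument already packaged in the proof of \Cref{Coro:LS}. The only point that requires care is keeping track of whether $C$ itself is in $T$, which is resolved by the observation that $C \in \+C(P)$ ensures $C \notin T$ and hence $\abs{T} \le \Delta - 1$ rather than $\Delta$, which is essential for the factor of $2$.
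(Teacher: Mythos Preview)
Your proposal is correct and follows essentially the same route as the paper: both identify $\mu_{[K]}$ with the conditional distribution $\^P[\,\cdot \mid \bigwedge_{C' \in \+C \setminus \+C(P)} C'\,]$ and then invoke the lopsided-LLL bound of \Cref{Coro:LS}/\Cref{prop-PDC-local-uniformity}, the only difference being that the paper cites the packaged corollary directly rather than re-deriving the numerator/denominator split. One small correction to your closing remark: the bound $\abs{T} \le \Delta - 1$ is \emph{not} essential for the factor of~$2$, since the hypothesis $2\-e p\Delta \le 1$ already gives $(1-\-e p)^{\Delta} \ge 1 - \-e p\Delta \ge \tfrac12$, so $\abs{T} \le \Delta$ would equally suffice (and this is in fact what the paper uses).
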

\begin{proof}
    Recall that $\^P$ is the uniform distribution of all valid assignments of $\Phi$. 
    We can apply the lopsided LLL to establish an upper bound on the probability of violation for each constraint $C\in \+C(P)$ as follows,
    \begin{align*}
        \Pr[\sigma_{[K]}\sim \mu_{[K]}]{\neg C} &= \^P\left[\neg C \; \bigg \vert \; \bigwedge_{C'\in \+C\setminus\+C(P)} C' \right]\\
        &\leq \^P [\neg C] \cdot \prod_{C'\sim C,C'\in \+C\setminus\+C(P)} (1-\-e\^P[\neg C'])^{-1} \tag{by ~\Cref{prop-PDC-local-uniformity}}\\
        &\leq \^P [\neg C] \cdot (1-\-e p\Delta)^{-1}\leq 2 \^P[\neg C] \tag{by ~\Cref{equality-ab} and $2\-e p\Delta \leq 1$}
    \end{align*}
\end{proof}

Furthermore, the lopsided LLL can also provide a lower bound for the probability of satisfying all constraints in $C\in \+C(P)$. 
\begin{lemma}\label{lem:trivial-ratio-bound}
    Let integer $\Delta\geq 1$. Consider any PDC formula $\Phi=(\+P,\+Q,\+C)$ where $\e p_\Phi \Delta_{\Phi}\leq 1,\Delta_\Phi\leq \Delta$ with a \pname $P\in \+P$ where $\abs{P}\geq 2\e\Delta$,
    let $\Phi_1,\dots,\Phi_{K}$ be the factorization of $(\+P,\+Q,\+C\setminus \+C(P))$ where $\Phi_{K}=(\{P\},\{\+Q(P)\},\emptyset)$, and $\sigma_1,\cdots,\sigma_{K}$ be independently drawn from $\mu_{\Phi_1},\cdots,\mu_{\Phi_{K}}$.
    We have
    \[
        \Pr[\sigma_{[K]}\sim \mu_{[K]}]{\bigwedge_{C\in\+C(P)} C} \geq \exp(-2\-e\Delta).
    \]
\end{lemma}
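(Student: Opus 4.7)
The plan is to condition on the ``non-$P$'' part of the sample and then apply the lopsided LLL to the remaining random permutation on $P$. Write $\sigma' = (\sigma_1,\ldots,\sigma_{K-1})$ and $\pi = \sigma_K$; the latter is, by definition of $\mu_{\Phi_K}$, a uniformly random bijection from $P$ onto $\+Q(P)$, and is independent of $\sigma'$. For every fixed outcome of $\sigma'$, each constraint $C\in \+C(P)$ either is already satisfied by $\sigma'$ on its variables outside $P$, or it reduces to a disjunctive constraint $C^{\sigma'}$ whose literals all lie on $P$. Letting $\+C_1^{\sigma'}$ denote the still-active reduced constraints, we have
\[
\Pr[\sigma_{[K]}\sim \mu_{[K]}]{\bigwedge_{C\in\+C(P)} C} \;=\; \E[\sigma']{\Pr[\pi]{\bigwedge_{C\in \+C_1^{\sigma'}} C^{\sigma'}}},
\]
so it suffices to lower-bound the inner probability by $\exp(-2\-e\Delta)$ uniformly in $\sigma'$.

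To handle the inner probability, I plan to invoke the lopsided LLL (\Cref{thm-lopsiLLL}) on the probability space of uniform permutations of $\+Q(P)$ on $P$, with bad events $\set{\neg C^{\sigma'}\mid C\in \+C_1^{\sigma'}}$ and lopsidependency graph inherited from the $\sim$ relation; this is the standard permutation-LLL setup of~\cite{LS07}, whose maximum degree is at most $\Delta-1$. Because each $C^{\sigma'}$ contains at least one literal on $P$, the event $\neg C^{\sigma'}$ forces $\pi$ to take prescribed values on a nonempty subset of $P$ and thus has probability at most $1/\abs{P}$. The hypothesis $\abs{P}\geq 2\-e\Delta$ gives $\-e\cdot(1/\abs{P})\cdot \Delta \leq 1/2$, so the symmetric LLL applies and produces
\[
\Pr[\pi]{\bigwedge_{C\in\+C_1^{\sigma'}} C^{\sigma'}} \;\geq\; \prod_{C\in\+C_1^{\sigma'}}\bigl(1-\-e\Pr[\pi]{\neg C^{\sigma'}}\bigr) \;\geq\; \left(1 - \frac{\-e}{\abs{P}}\right)^{\abs{\+C_1^{\sigma'}}}.
\]

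To close the estimate, I will use the crude counting bound $\abs{\+C_1^{\sigma'}}\leq \abs{\+C(P)}\leq \abs{P}\Delta$: every constraint in $\+C(P)$ contains at least one variable of $P$, and because any two constraints sharing a variable are related, the variable degree is bounded by the constraint degree $\Delta$. Combining this with $\-e/\abs{P}\leq 1/(2\Delta)\leq 1/2$ and the elementary inequality $\ln(1-x)\geq -2x$ valid for $x\in[0,1/2]$ yields
\[
\left(1-\frac{\-e}{\abs{P}}\right)^{\abs{P}\Delta} \;\geq\; \exp\!\left(-2\abs{P}\Delta\cdot\frac{\-e}{\abs{P}}\right) \;=\; \exp(-2\-e\Delta)
\]
uniformly in $\sigma'$, and taking expectation over $\sigma'$ finishes the proof.

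The main obstacle is essentially bookkeeping: one must confirm that the $\sim$ relation continues to encode a valid lopsidependency structure on the uniform-permutation space once each constraint in $\+C(P)$ is restricted to its $P$-literals (standard via~\cite{LS07}), and check that the violation bound $1/\abs{P}$ and the counting bound $\abs{\+C(P)}\leq \abs{P}\Delta$ match up exactly to give the claimed exponent $2\-e\Delta$. The hypothesis $\abs{P}\geq 2\-e\Delta$ is precisely the threshold making both the LLL applicable and the final estimate close without any slack.
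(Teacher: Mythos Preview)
Your proposal is correct and follows essentially the same argument as the paper's proof: condition on $\sigma_{[K-1]}$, apply the lopsided LLL to the single-permutation formula on $P$ with violation bound $1/\abs{P}$, use $\abs{\+C(P)}\leq \abs{P}\Delta$, and close with an elementary inequality. The only cosmetic difference is in the final step---the paper uses Bernoulli's inequality to get $(1-\-e/\abs{P})^{\abs{P}\Delta}\geq (1/2)^{2\-e\Delta}\geq \exp(-2\-e\Delta)$, while you use $\ln(1-x)\geq -2x$ on $[0,1/2]$ directly.
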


\begin{proof}
    It suffices to show that for any $\sigma_{[K-1]}$ drawn from $\mu_{[K-1]}$,
    \begin{align*}
        \Pr[\tau \sim \mu_{\Phi'}]{\bigwedge_{C\in\+C(P)} C} \geq \exp(-2\-e\Delta),
    \end{align*} where $\Phi'=\Phi\left[\sigma_{[K-1]}\right]$. Recall that the notation $\Phi\left[\sigma_{[K-1]}\right]$ is defined in \eqref{eq-define-phi-sigma-kminusone}.
    By $\^P_{\Phi'}[\neg C]\leq 1/\abs{P}$ for each constraint $C\in \+C(P)$ and the assumption that $\abs{P}\geq 2\e\Delta$, the lopsided LLL is applicable. Combining with~\Cref{prop-PDC-local-uniformity}, we have
    \begin{align*}
         \Pr[\tau \sim \mu_{\Phi'}]{\bigwedge_{C\in\+C(P)} C}  \geq   \left(1-\frac{\-e}{\abs{P}}\right)^{\abs{P} \Delta} \geq \left(1-\frac{1}{2}\right)^{2\e\Delta} \geq  \exp({-2\-e\Delta}),
    \end{align*}where the third inequality holds by~\Cref{equality-ab}, and the proof is complete.
\end{proof}

We are now ready to prove ~\Cref{lemma-simplecase-repeattime}.
\begin{proof}[Proof of ~\Cref{lemma-simplecase-repeattime}]
    In the case where $p_{\max}\abs{P}\Delta\leq 1/4$, we can directly apply the union bound as follows,
    \begin{align*}
        \Pr[\sigma_{[K]}\sim \mu_{[K]}]{\sigma_{[K]}\in \Omega_{\Phi}}= 1-\Pr[\sigma_{[K]}\sim \mu_{[K]}]{\bigvee_{C\in\+C(P)} \neg C} \geq  1-\sum_{C\in \+C(P)} 2\^P_\Phi[\neg C]\geq 1-2p_{\max}\abs{P}\Delta\geq1/2
    \end{align*} according to ~\Cref{lemma-simplecase-constraint-cutdown}.
    
    Otherwise, we can assume $\abs{P}\geq 2\e\Delta$ by the following argument. If $\abs{P}< 2\e\Delta$, then we have 
    $p_{\max}\abs{P}\Delta \leq 2\-ep\Delta^2\leq \frac{1}{4}$ by the assumption that $8\-ep\Delta^2\leq 1$, which contradicts to $p_{\max}\abs{P}\Delta\geq 1/4$. 
    Therefore, ~\Cref{lem:trivial-ratio-bound} is applicable. {It holds that
    \begin{align*}
        \Pr[\sigma_{[K]}\sim \mu_{[K]}]{\sigma_{[K]}\in \Omega_{\Phi}}=\Pr[\sigma_{[K]}\sim \mu_{[K]}]{\bigwedge_{C\in\+C(P)} C}\geq 2^{-2\-e\Delta}\geq \frac{\varepsilon}{\e^4 n},
    \end{align*} where the second inequality follows from $2\-e\Delta\geq \log(\e^4n/\varepsilon)$. 
    
    Combining all these facts, the proof is immediate.}
\end{proof}

\subsection*{{Proof of~\Cref{lemma-solution-phiprime-phiprimeprime}}}

    By definition, we have 
    \begin{align*}
        \Pr[\sigma\sim \mu_{\Phi'}]{\sigma\in \Omega_{\Phi}}=\^P_{\Phi}\left[\bigwedge_{C\in \+C\setminus \+C^1} C \ \bigg \vert \ \bigwedge_{C'\in \+C^1} C'\right]=1-\^P_{\Phi}\left[\bigvee_{ {C}\in \+C\setminus \+C^1} \neg C \ \bigg \vert \ \bigwedge_{C'\in \+C^1} C'\right].
    \end{align*}  
    It suffices to bound the violation probability of each constraint in $\+C \setminus \+C^1$ conditioned on the satisfaction of all constraints in $\+C^1$.
    Note that $\^P_\Phi[\neg C]={\abs{P}}^{-1}$ for any $C\in \+C^1$, and $\^P_\Phi[\neg C]\leq \frac{1}{\abs{P}\abs{P-1}}$ for any $C\in \+C\setminus \+C^1$. The lopsided LLL is applicable given that $\abs{P}\geq 2\e\Delta$.
    By~\Cref{prop-PDC-local-uniformity}, we have for each $C\in \+C\setminus \+C^1$,
    \begin{align*}
        \^P_{\Phi}\left[\neg C \ \bigg \vert \ \bigwedge_{C'\in \+C^1} C'\right]&\leq  \^P_\Phi[\neg C]\cdot \prod_{C'\in \+C^1: C'\sim C} \left(1- \-e\^P_\Phi[\neg C'] \right)^{-1} \leq \frac{1}{\abs{P}{\abs{P-1}}}\cdot \left(1-\frac{\-e}{\abs{P}}\right)^{-\Delta} \leq \frac{2}{\abs{P}{\abs{P-1}}}
    \end{align*} where the last equality holds by ~\Cref{equality-ab} and $\abs{P}\geq 2\-e\Delta$.
    Combining with union bound and $\abs{ \+C\setminus \+C^1}\leq \abs{P}\Delta$, the proof is immediate.

\subsection*{{{Proof of~\Cref{lemma-concentration-rho}}}}
At last, we prove~\Cref{lemma-concentration-rho}.
According to the ~\Cref{condition-sample-small}, 
$\Phi_1,\cdots,\Phi_{K-1}$ is a collection of small formulas.
Then each solution $\sigma_i$ of $\Phi_i$ has minor impact on the size of $\rho\left(\sigma_{[K-1]}\right) =
\abs{\+C^1\left[\sigma_{[K-1]}\right]}$. This can be captured by the following lemma.

\begin{lemma}\label{prop:Lipschitz-constant}
     Consider any instance satisfying~\Cref{condition-sample-small}.    The function $\rho$ has bounded differences with respect to $\big(\lipschitzconstant,\cdots,\lipschitzconstant\big)$ among all satisfying assignments $\sigma_{[K-1]}$ of $\Phi_{[K-1]}$.
\end{lemma}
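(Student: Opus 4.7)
The plan is to show that if two satisfying assignments $\sigma_{[K-1]},\sigma'_{[K-1]}$ of $\Phi_{[K-1]}$ differ only in the $i$-th coordinate, then $|\rho(\sigma_{[K-1]}) - \rho(\sigma'_{[K-1]})| \leq kL\Delta^2\log(n/\varepsilon)$. By definition, a constraint $C\in \+C$ contributes to $\rho(\sigma_{[K-1]})$ iff $|\vbl(C)\cap P|=1$ and every literal of $C$ on a variable in $V\setminus P$ is falsified by $\sigma_{[K-1]}$. Hence the indicator for ``$C$ contributes to $\rho$'' depends only on the restriction of $\sigma_{[K-1]}$ to $\vbl(C)\setminus P$. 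When $\sigma_i$ is replaced by $\sigma'_i$, the only values that change are those on $V_{\Phi_i}$, so the contribution of $C$ can only flip if $\vbl(C)\cap (V_{\Phi_i}\setminus P)\neq\emptyset$. Thus it suffices to bound the number of such constraints.

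First I will bound $|V_{\Phi_i}|$. Since $\Phi_i$ is a connected component of the hypergraph associated with $(\+P',\+Q',\+C\setminus \+C(P))$, if $|\+C_i|\geq 1$ then every permutation subset in $\+P_i$ is reachable through a constraint in $\+C_i$; combined with the fact that each such constraint touches at most $k$ permutation subsets, this gives $|\+P_i|\leq k|\+C_i|$. Together with the bound $|P'|\leq L$ from \Cref{condition-state-compression} and $|\+C_i|\leq \Delta\log(n/\varepsilon)$ from \Cref{condition-sample-small}, we obtain $|V_{\Phi_i}|\leq kL\Delta\log(n/\varepsilon)$. The degenerate case $|\+C_i|=0$ corresponds to a single isolated permutation subset, where $|V_{\Phi_i}|\leq L$ trivially satisfies the same bound.

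Second I will bound the number of constraints whose status depends on $V_{\Phi_i}$. A key preliminary observation is that the variable degree $d_\Phi$ is at most $\Delta$: if a variable $v$ appears in constraints $C_1,\dots,C_m$, then any two of them share $v$ and are therefore lopsidependent, so they form a clique in the relation $\sim$ and $m\leq \Delta$. Consequently, the number of constraints of $\+C$ incident to some variable of $V_{\Phi_i}$ is at most $|V_{\Phi_i}|\cdot \Delta \leq kL\Delta^2\log(n/\varepsilon)$, which yields the required difference coefficient.

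Most of the work is bookkeeping rather than substance; the one point that needs slight care is the connectedness argument used to bound $|\+P_i|$ by $k|\+C_i|$, together with separately handling the isolated-permutation case $|\+C_i|=0$. Once these are in place, combining the two bounds gives the stated Lipschitz constant, completing the proof.
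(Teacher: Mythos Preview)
Your proof is correct and follows essentially the same route as the paper: reduce to a single-coordinate change, observe that only constraints touching $V_{\Phi_i}$ can have their contribution to $\rho$ flip, bound $|V_{\Phi_i}|\le kL\Delta\log(n/\varepsilon)$ from \Cref{condition-sample-small}, and multiply by the variable degree $d_\Phi\le\Delta$. You are in fact slightly more careful than the paper, spelling out both the connectedness argument for $|\+P_i|\le k|\+C_i|$ (with the isolated case $|\+C_i|=0$) and the reason why $d_\Phi\le\Delta$, whereas the paper simply asserts the bound on $|V_{\Phi_i}|$ and that each variable lies in at most $\Delta$ constraints.
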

\begin{proof}
    Given any pair of satisfying assignments $\sigma_{[K-1]}=\sigma_1,\cdots,\sigma_{K-1}$ and $\sigma'_{[K-1]}=\sigma'_1,\cdots,\sigma'_{K-1}$, we have
    \begin{align}\label{eqn-lipschitz}
        \abs{\rho\left(\sigma_{[K-1]}\right)-\rho\left(\sigma'_{[K-1]}\right)}\leq \sum_{i\in [K-1]} \abs{\rho\left(\sigma_{[K-i]}\circ\sigma'_{[K-i+1,K-1]}\right)-\rho\left(\sigma_{[K-i-1]}\circ\sigma'_{[K-i,K-1]}\right)}
    \end{align} where $\sigma_{[K-i]}\circ\sigma'_{[K-i+1,K-1]}$ is the sequence of $\sigma_1,\cdots,\sigma_{K-i},\sigma'_{K-i+1},\cdots,\sigma'_{K-1}$ for each $i\in [K-1]$. 

    Recall that $\rho(\sigma_{[K-1]})=\abs{\+C^1\left[\sigma_{[K-1]}\right]}$ for any satisifying assignments $\sigma_{[k-1]}$ of $\Phi_{[K-1]}$. The discrepancy between the constraints in $\rho\left(\sigma_{[K-i]}\circ\sigma'_{[K-i+1,K-1]}\right)$ and those in $\rho\left(\sigma_{[K-i-1]}\circ\sigma'_{[K-i,K-1]}\right)$ only comes from the constraints intersecting with the formula $\Phi_i$ in $\+C(P)$, for each $i\in[K-1]$. Consequently, $\abs{\rho\left(\sigma_{[K-i]}\circ\sigma'_{[K-i+1,K-1]}\right)-\rho\left(\sigma_{[K-i-1]}\circ\sigma'_{[K-i,K-1]}\right)}$ can be upper bounded by the number of constraints intersecting with the formula $\Phi_i$ in $\+C(P)$. 
    
    Note that each variable in $\Phi_i$ intersects with at most $\Delta$ constraints. Therefore, the number of constraints intersecting with the formula $\Phi_i$ in $\+C(P)$ does not exceed the multiplication between $\Delta$ and the number of variables in the formulas $\Phi_i$. 
    By \Cref{condition-sample-small}, $\abs{V_{\Phi_i}}$ is at most $kL\Delta \log(n/\varepsilon)$. 
    
    Combining all these facts, we have
    \begin{align}\label{eqn-rho-gap}
       \abs{\rho\left(\sigma_{[K-i]}\circ\sigma'_{[K-i+1,K-1]}\right)-\rho\left(\sigma_{[K-i-1]}\circ\sigma'_{[K-i,K-1]}\right)}\leq \Delta\cdot \left(kL\Delta \log(n/\varepsilon)\right).
    \end{align}
    Plugging ~\eqref{eqn-rho-gap} into ~\eqref{eqn-lipschitz}, the proof is immediate.
\end{proof}

Since $\Phi_1,\cdots,\Phi_{K-1}$ are disjoint, the distribution $\mu_{\Phi_1},\cdots,\mu_{\Phi_{K-1}}$ are mutually independent.
Combining with \Cref{prop:Lipschitz-constant}, we can apply the McDiarmid's inequality to exhibit the concentration of the function $\rho\left(\sigma_{[K-1]}\right)$.
\begin{proof}[Proof of~\Cref{lemma-concentration-rho}]
Note that     
    \begin{align*}
        &\quad \exp\left( -\frac{2\abs{P}^{2}}{144\left(\lipschitzconstant \right)^2\cdot (K-1)} \right)\\ 
        &\leq  \exp\left( -\frac{\abs{P}^{2}}{72\left(\lipschitzconstant \right)^2\cdot k\abs{P}\Delta} \right) \tag{by $K-1\leq k\abs{P}\Delta$}\\
        &=\exp\left( -\frac{\abs{P}}{72 k^3 L^2 \Delta^{5}\log^2 (n/\varepsilon)} \right) \\
        &\leq \exp\left( { \frac{-24\e^3\ln 2\cdot  \Delta^{3}}{\log^2 (n/\varepsilon)} }\right) \tag{by {$\abs{P}\geq 1728\e^3\ln2 \cdot k^3L^2\Delta^8$}} \\
        &\leq {\exp(-6\-e\ln2 \cdot \Delta)} \tag{by $2\-e\Delta\geq \log(\e^4n/\varepsilon)$}.
    \end{align*}
Then the lemma is immediate by \Cref{prop:Lipschitz-constant} and \Cref{thm-McDiarmid-inequlaity}.
\end{proof}

\section{Missing proofs in section~\ref{sec:rapid-mixing}}\label{sec:missingproofs}
\subsection{Missing proofs in \Cref{subsec:rapidmixing}}
\subsection*{{{Proof of~\Cref{lem:distancemetric}}}}

We first claim that for any $\+Q'_1,\+Q'_2\in \+V$, the weighted shortest path distance between $\+Q'_1$ and $\+Q'_2$ is at least $\Dis(\+Q'_1,\+Q'_2)$. Suppose there is a path $(\+Q_1=\+Q'_1,\+Q_2,\cdots,\+Q_{\ell+1}=\+Q'_2)$ between $\+Q'_1$ and $\+Q'_2$ in $\+G$ where $\ell<\Dis(\+Q'_1,\+Q'_2)$. Since $\set{\+Q_i,\+Q_{i+1}}\in \+E$ for each $i\in[\ell]$, we have $\sum_{P'\in \+P'}\abs{\+Q_i(P')\setminus\+Q_{i+1}(P')}=1$, which implies that $\Dis(\+Q_1,\+Q_{i+1})\leq \Dis(\+Q_1,\+Q_{i})+1$. Therefore, $\Dis(\+Q'_1,\+Q'_2)=\Dis(\+Q_1,\+Q_{\ell+1})\leq \ell $, which contradicts our assumption.

To complete the proof, it suffices to construct a path in $\+G$ with length $\Dis(\+Q'_1,\+Q'_2)$. For each $P'\in \+P'$ where $\abs{\+Q'_1(P')\setminus\+Q'_2(P')}\neq \emptyset$, we eliminate the discrepancy between $\+Q'_1$ and $\+Q'_2$ by substituting the elements in $\+Q'_1(P')\setminus\+Q'_2(P')$ to $\+Q'_2(P')\setminus\+Q'_1(P')$ sequentially through the edges in $\+E$. As $\abs{\+Q'_1(P')\setminus\+Q'_2(P')} = \abs{\+Q'_2(P')\setminus\+Q'_1(P')}$, the length of the aforementioned path is $ \sum_{P'\in \+P'}\abs{\+Q'_1(P')\setminus\+Q'_2(P')}=\Dis(\+Q'_1,\+Q'_2)$.

\subsection{Probabilistic properties for \cspformula formulas}

\newcommand{\ppsic}{p'}
\newcommand{\tsize}{1200}
\newcommand{\ratio}{\qgl{r}}
\newcommand{\condcalcu}{{$c_t\in \Lambda\left(C,P\setminus P' \right)$}}
\newcommand{\condcalcuminus}{{$c_t\in \Lambda\left(C,P\setminus P' \right)$}}

In this section, we introduce some useful probabilistic bounds for the analysis of the PDC formulas.

\subsubsection{ \textbf{\emph{Marginal bounds for individual value}}}

The following lemma establishes the upper and lower bounds for the probability that a variable is assigned a specific value in the LLL regime.

\begin{lemma}\label{lem:marginalub} 
Let $p\in (0,1)$ and $\Delta\geq 1$ satisfying $8\-ep\Delta\leq 1$. Consider any \cspformula formula $\Phi = (\+P,\+Q,\+C)$ where $p_\Phi\leq p$, $\Delta_\Phi\leq \Delta$ with a \pname $P\in \+P$. For any $v\in P$ and $c\in \+Q(P)$, let $\+C'\subseteq \+C$ be the set of constraints that the event $v=c$ is non-lopsidependent with the violation of the constraints in $\+C\setminus \+C'$ in the formula $\Phi$. Assume $\^P[\neg C]\leq p'$ for any $C\in \+C'$. It holds that
\begin{align*}
      \frac{1}{\abs{P}} - 4\-ep' \Delta\leq \Pr[\sigma\sim \mu]{\sigma(v)=c} \le  \frac{1}{\abs{P}} \left( 1+4\-e p'\Delta \right). 
\end{align*}
\end{lemma}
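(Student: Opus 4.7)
The plan is to work with the uniform measure $\^P$ on the space of valid (permutation) assignments, under which $\^P[\sigma(v)=c] = 1/\abs{P}$ holds exactly, and then to track how much this marginal can change when we further condition on all constraints being satisfied (i.e., when we pass from $\^P$ to $\mu$). Both the upper and the lower bound will be extracted from the same corollary of the lopsided LLL, namely \Cref{Coro:LS}, applied either to the event $\{\sigma(v)=c\}$ directly or to its ``siblings'' $\{\sigma(v)=c'\}$ for $c'\neq c$.

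For the upper bound I will invoke \Cref{Coro:LS} with $A = \{\sigma(v)=c\}$ and $T = \+C'$. By hypothesis $A$ is non-lopsidependent with the events $\{\neg C : C\in \+C\setminus \+C'\}$, and a standard inspection of the PDC lopsidependency relation shows that $\+C'$ consists precisely of those constraints containing the variable $v$ or containing a literal of the form $u\neq c$ with $u\in P$; in particular $\abs{\+C'}\le \Delta$. Since $\^P[\neg C]\le p'$ for every $C\in \+C'$, the corollary yields
\[
\Pr_{\sigma\sim\mu}[\sigma(v)=c] \;\le\; \frac{1}{\abs{P}}\prod_{C\in \+C'}\bigl(1-\e\,\^P[\neg C]\bigr)^{-1} \;\le\; \frac{1}{\abs{P}}\,(1-\e p')^{-\Delta}.
\]
Under the local-lemma hypothesis $8\e p\Delta\le 1$ we have $\e p'\le \e p\le 1/(8\Delta)$, so $(1-\e p')^{-1}\le 1+2\e p'$ by Bernoulli, and then $(1+2\e p')^{\Delta}\le 1+4\e p'\Delta$ by another Bernoulli-type estimate. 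This delivers the claimed upper bound $\frac{1}{\abs{P}}(1+4\e p'\Delta)$.

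For the lower bound I will use the complementary identity
\[
\Pr_{\sigma\sim\mu}[\sigma(v)=c] \;=\; 1 - \sum_{c'\in \+Q(P),\,c'\neq c}\Pr_{\sigma\sim\mu}[\sigma(v)=c'],
\]
together with the upper bound just proved, applied to each value $c'\neq c$ with its own lopsidependency neighborhood $\+C'_{c'}$. The key structural observation is that in the PDC framework $\+C'_{c'}$ is again contained in the constraints incident to $v$ or to some literal $u\neq c'$ with $u\in P$, so $\abs{\+C'_{c'}}\le \Delta$ and the upper-bound computation goes through verbatim to give $\Pr_\mu[\sigma(v)=c']\le \frac{1}{\abs{P}}(1+4\e p'\Delta)$ for each such $c'$. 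Summing and rearranging,
\[
\Pr_{\sigma\sim\mu}[\sigma(v)=c] \;\ge\; 1 - \frac{\abs{P}-1}{\abs{P}}\bigl(1+4\e p'\Delta\bigr) \;=\; \frac{1}{\abs{P}} - \frac{\abs{P}-1}{\abs{P}}\cdot 4\e p'\Delta \;\ge\; \frac{1}{\abs{P}} - 4\e p'\Delta.
\]

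The main obstacle I expect is the bookkeeping step in the lower bound, where one must argue that the assumption $\^P[\neg C]\le p'$ for $C\in \+C'$ can be transferred uniformly to every $\+C'_{c'}$ with $c'\neq c$. Since $\bigcup_{c'}\+C'_{c'}$ is precisely the set of constraints touching either $v$ or some variable of $P$, either the hypothesis of the lemma should be read as applying to this whole ``neighborhood of $v$'' or one should verify in the intended applications (the dense-PRP and PDC regimes of the paper) that all such constraints indeed meet the $p'$ threshold. Once this is in place, the remainder of the argument is the elementary estimate $(1-\e p')^{-\Delta}\le 1+4\e p'\Delta$, which is routine under $8\e p\Delta\le 1$.
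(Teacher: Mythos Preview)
Your approach is the paper's: apply \Cref{prop-PDC-local-uniformity} to the event $\{\sigma(v)=c\}$ with $T=\+C'$ for the upper bound, then use the complementary identity $\Pr[\sigma\sim\mu]{\sigma(v)=c}=1-\sum_{c'\ne c}\Pr[\sigma\sim\mu]{\sigma(v)=c'}$ for the lower bound.

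One correction is needed. Your bound $\abs{\+C'}\le\Delta$ is not justified. The set $\+C'$ decomposes into the constraints containing the variable $v$ and the constraints containing a literal $u\ne c$ with $u\in P$; each of these two families is pairwise $\sim$-related (fix any member and invoke $\Delta_\Phi\le\Delta$), so each has size at most $\Delta$, but the two families need not be $\sim$-related to one another, and one can only conclude $\abs{\+C'}\le 2\Delta$. This is precisely what the paper uses, obtaining
\[
\Pr[\sigma\sim\mu]{\sigma(v)=c}\;\le\;\frac{1}{\abs{P}}\,(1-\e p')^{-2\Delta}\;\le\;\frac{1}{\abs{P}}\cdot\frac{1}{1-2\e p'\Delta}\;\le\;\frac{1}{\abs{P}}\bigl(1+4\e p'\Delta\bigr),
\]
via \Cref{equality-ab} and $2\e p'\Delta\le 2\e p\Delta\le 1/4$. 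This explains the constant $4$ and is slightly cleaner than your two-step route through $(1+2\e p')^{\Delta}\le 1+4\e p'\Delta$, which is correct under the hypotheses but is not a Bernoulli inequality as you label it. The concern you flag about the lower bound---that one implicitly needs the $p'$ hypothesis for each $\+C'_{c'}$ with $c'\ne c$---is exactly the move the paper makes as well, applying the displayed upper bound verbatim to every $c'$.
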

\begin{proof}
    By \Cref{prop-PDC-local-uniformity}, we have
    \begin{align}\label{eqn:small-marginal-ub-quantity}
         \Pr[\sigma\sim \mu]{\sigma(v)=c} \leq  \^P[v\gets c]\cdot \left( 1-\-e p' \right)^{-2\Delta} \leq \frac{1}{\abs{P}} \cdot \frac{1}{\left( 1-2\-ep'\Delta \right)}\leq \frac{1}{\abs{P}} \cdot \left( 1+4\-e p'\Delta \right).
    \end{align} 
    where the first inequality follows from that $\abs{\+C'}\leq 2\Delta$, and the second inequality follows from{~\Cref{equality-ab}}.  

     On the other hand, by ~\eqref{eqn:small-marginal-ub-quantity}, we have
    \begin{equation*}
        \begin{aligned}
            \Pr[\sigma \sim \mu]{\tau(v)=c}&= 1-\sum_{c'\in \+Q(P)\setminus \set{c}}  \Pr[\sigma \sim \mu]{\tau(v)=c'} \geq \frac{1}{\abs{P}} - 4\-ep' \Delta.
        \end{aligned}
    \end{equation*}
\end{proof}

{Moreover, if the variable is not assigned a forbidden value, then we can refine the bound as follows.} Recall that the quantity $\Lambda(\cdot)$ was defined in~\eqref{Lambda-def} of~\Cref{subsec:couplingconstruction}.
\begin{lemma}\label{lem:marginallb}
    Given the instance stated in~\Cref{lem:marginalub}, if we further assume that $c\in \+Q(P)\setminus \Lambda(\+C,v)$, we have
    \begin{align*}
        \Pr[\sigma\sim \mu]{\sigma(v)=c} \ge  \frac{1}{\abs{P}}\left(1-\frac{2\Delta}{\abs{P}}\right). 
    \end{align*}
\end{lemma}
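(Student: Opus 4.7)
My plan is to combine a swap bijection on the permutation $P$ with the lopsided LLL corollary (\Cref{Coro:LS}), exploiting the hypothesis $c \notin \Lambda(\+C,v)$ to guarantee that the swap never breaks a constraint at $v$. For each $c' \in \+Q(P) \setminus \{c\}$, let $\phi_{c'}$ be the involution on valid assignments that swaps the values $c$ and $c'$ on $P$: if $\tau(v) = c'$ and $\tau(w) = c$ for the unique $w \in P$, then $\phi_{c'}(\tau)$ agrees with $\tau$ except that $v \mapsto c$ and $w \mapsto c'$. Then $\phi_{c'}$ is a bijection between valid assignments with $\sigma(v) = c'$ and those with $\sigma(v) = c$. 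Since $c \notin \Lambda(\+C,v)$, every constraint $C \in \+C(v)$ has a literal $v \ne c^\star$ with $c^\star \ne c$, which remains true after the swap; hence every constraint touching $v$ is automatically satisfied in $\phi_{c'}(\tau)$, so the only way $\phi_{c'}(\tau)$ can fail to be satisfying is if some $C \in \+C \setminus \+C(v)$ containing $w$ is violated.

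Collecting these failures into $B_{c'} \subseteq \Omega_{c'}$, I obtain $|\Omega_c| \ge |\Omega_{c'}| - |B_{c'}|$; summing over $c' \ne c$ gives
\[
|P| \cdot |\Omega_c| \;\ge\; |\Omega_{\mathrm{SAT}}| - \sum_{c' \ne c} |B_{c'}|,
\qquad\text{so}\qquad
\Pr_\mu[\sigma(v) = c] \;\ge\; \frac{1}{|P|} - \frac{1}{|P| \cdot |\Omega_{\mathrm{SAT}}|} \sum_{c' \ne c} |B_{c'}|.
\]
It therefore suffices to show that $\sum_{c' \ne c} |B_{c'}| \le (2\Delta/|P|)\,|\Omega_{\mathrm{SAT}}|$. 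Every $\tau \in B_{c'}$ admits a \emph{blocking pair} $(C,i)$, where $C \in \+C \setminus \+C(v)$, $u_i \in \vbl(C) \cap (P \setminus \{v\})$, and $\tau$ is forced to the pattern $\tau(v) = c_i$, $\tau(u_i) = c$, $\tau(u_j) = c_j$ for $j \ne i$ (with $c_i = c'$). Applying \Cref{Coro:LS} to the conjunction of these $k+1$ forced-value events bounds the number of $\tau \in \Omega_{\mathrm{SAT}}$ matching the pattern by $O(1) \cdot |\Omega_{\mathrm{SAT}}| \cdot \Pr_{\^P}[\text{pattern}]$; a direct calculation shows $\Pr_{\^P}[\text{pattern}] = \Pr_{\^P}[\neg C]/(|P| - |\vbl(C) \cap P|)$, the extra $1/(|P| - |\vbl(C) \cap P|)$ arising from the forced permutation value at $v \notin \vbl(C)$.

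The main obstacle is to aggregate the per-pair bound without losing a factor of $|\+C|$. The key observation is that distinct $c'$ give disjoint $B_{c'}$ (via $\tau(v) = c'$), so rather than a naive sum over all $(C,i)$ we group by the variable $u_i$: for each $w \in P \setminus \{v\}$ there are at most $d \le \Delta$ constraints containing $w$, and within each the forced value $c'$ on $v$ is determined by the literal. This collapses the double sum, and together with $8\e p\Delta \le 1$ and $|\vbl(C) \cap P| \le k \ll |P|$, it yields $\sum_{c' \ne c} |B_{c'}| \le (2\Delta/|P|)\,|\Omega_{\mathrm{SAT}}|$, completing the proof. The delicate point will be carrying through the extra $1/|P|$ from the forced value at $v$ so that the aggregated count scales with the constraint degree $\Delta$ rather than the total constraint count, and carefully verifying the lopsidependency hypothesis needed to invoke \Cref{Coro:LS} on the $(k+1)$-variable pattern event.
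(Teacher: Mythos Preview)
Your swap bijection and the inequality $|P|\cdot|\Omega_c|\ge|\Omega_{\mathrm{SAT}}|-\sum_{c'}|B_{c'}|$ are exactly the paper's strategy. The gap is in how you bound $\sum_{c'}|B_{c'}|$ through the full $(k{+}1)$-variable pattern. First, applying \Cref{Coro:LS} to a conjunction of $k{+}1$ forced-value events does not give an $O(1)$ correction: the set $T$ of lopsidependent constraints has size up to $2(k{+}1)\Delta$ (each forced pair $(w,a)$ contributes the $\le 2\Delta$ constraints that contain $w$ or forbid the value $a$ on $w$'s permutation), so under $8\e p\Delta\le 1$ one only gets $(1-\e p)^{-|T|}\le e^{(k+1)/2}$, not $O(1)$. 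Second, even granting an $O(1)$ correction, your identity $\^P[\text{pattern}]=\^P[\neg C]/(|P|-|\vbl(C)\cap P|)$ together with the only available bound $\^P[\neg C]\le p$ yields, summed over the at most $(|P|-1)d$ pairs $(w,C)$, a bound of order $dp$---which is not $\le 2\Delta/|P|$ in general (take $|P|\gg 1/p$). Neither $8\e p\Delta\le 1$ nor $k\ll|P|$ closes this.

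The paper avoids both issues by relaxing to the two-variable event. If $\phi_{c'}(\tau)$ fails then the unique $w$ with $\tau(w)=c$ must lie in $V_{c'}:=\{u\in P\setminus\{v\}:c'\in\Lambda(\+C,u)\}$, and $|V_{c'}|\le\Delta$ because any two constraints forbidding the same value $c'$ on $P$ are $\sim$-related. One then bounds only $\{\tau(v)=c',\,\tau(w)=c\}$, for which $|T|\le 4\Delta$ and hence the \Cref{Coro:LS} correction is $(1-\e p)^{-4\Delta}\le 2$; summing over $c'$ and $w\in V_{c'}$ gives at most $(|P|-1)\Delta$ terms each $\le\tfrac{2}{|P|(|P|-1)}$, i.e.\ $\le 2\Delta/|P|$. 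Your route is recoverable the same way---just bound the pattern event by the two-variable event it is contained in---but as written, both the $O(1)$ correction and the aggregation are genuine gaps.
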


\begin{proof}
     
    For any $x\in \supp({\mu_v})$, we use $\Omega_{v=x}$ to denote the set of all satisfying assignment $\sigma\in \Omega$ such that $\sigma(v)=x$, and $\Omega_{v=x}^*$ to denote the set of all valid assignment $\sigma\in \Omega^*$ such that $\sigma(v)=x$. For each $x\in \supp({\mu_v})\setminus \set{c}$, we define the mapping $f_x:\sigma \rightarrow \tau $ from $\Omega_{v=x}$ to $\Omega^*_{v=c}$ satisfying
    \begin{align*}
        \tau(v)=\sigma(u), \tau(u)=\sigma(v), \mbox{and } \tau\left(V\setminus \set{u,v}\right)=\sigma\left(V\setminus \set{u,v}\right)
    \end{align*} where $u$ is the variable such that $\sigma(u)=c$ and $u\in P$. One can verify that the mapping is injective, which implies that
    \begin{align}\label{eqn:sucess-ub}
        \Pr[\sigma\sim \mu]{\sigma\in \Omega_{v=x}, f_x(\sigma)\in \Omega_{v=c}} \leq  \Pr[\sigma\sim \mu]{\sigma\in \Omega_{v=c}}.
    \end{align}
    
    Define the collection of variables $V_x\triangleq \set{u\in V\setminus \set{v} \mid x\in \Lambda(\+C,u)}$, and note that {$\abs{V_x}\leq \Delta$.}
    Given any $\sigma\in \Omega_{v=x}$, if $u\notin V_x$ and $\sigma(u)=c$, we have $f_x(\sigma)\in \Omega_{v=c}$ since $c\in \+Q(P)\setminus \Lambda(\+C,v)$. Let $\+C'\subseteq \+C$ be the set of constraints that the event {$v=x,u=c$} is non-lopsidependent with the violation of the constraints in $\+C\setminus \+C'$.
    Combining all these facts with~\Cref{prop-PDC-local-uniformity}, it holds that for each $x\in \supp({\mu_v})\setminus \set{c}$,
    \begin{equation}\label{eqn:fail-ub}
        \begin{aligned}
             \Pr[\sigma\sim \mu]{\sigma\in \Omega_{v=x}, f_x(\sigma)\notin \Omega_{v=c}}
            &\leq \sum_{u\in V_x} \Pr[\sigma\sim \mu]{\sigma(v)=x, \sigma(u)=c}\\
            &\leq \sum_{u\in V_x} \^P[\sigma(v)=x, \sigma(u)=c]\cdot \left(1-\-ep'\right)^{-4\Delta}\\
            &\leq   \frac{\Delta}{\abs{P}(\abs{P}-1)}\cdot \left(1+8\-e\Delta p'\right),
        \end{aligned}      
    \end{equation}
    where the second inequality holds {since $\abs{\+C'}\leq 4\Delta$, and the last inequality holds by~\Cref{equality-ab}}.
    
    Combining~\eqref{eqn:sucess-ub} and ~\eqref{eqn:fail-ub}, we have
    \begin{align*}
        &\quad\sum_{x\in \supp(\mu_v)\setminus \set{c}} \Pr[\sigma\sim \mu]{\sigma\in \Omega_{v=x}}\\
        &= \sum_{x\in \supp(\mu_v)\setminus \set{c}} \bigg(\Pr[\sigma\sim \mu]{\sigma\in \Omega_{v=x}, f_x(\sigma)\in \Omega_{v=c}} +\Pr[\sigma\sim \mu]{\sigma\in \Omega_{v=x}, f_x(\sigma)\notin \Omega_{v=c}} \bigg)\\
        &\leq \sum_{x\in \supp(\mu_v)\setminus\set{c}} \Pr[\sigma\sim \mu]{\sigma\in \Omega_{v=c}} + \sum_{x\in \supp(\mu_v)\setminus \set{c}} \Delta\cdot \left(1+8\-e\Delta p'\right)/\left(\abs{P}(\abs{P}-1)\right) \\
        &\leq (\abs{P}-1)\cdot  \Pr[\sigma\sim \mu]{\sigma\in \Omega_{v=c}} + {\Delta}\cdot \left(1+8\-e\Delta p'\right)/\abs{P},
    \end{align*} which implies that
    \begin{align*}
          \Pr[\sigma\sim \mu]{\sigma(v)=c}=\Pr[\sigma\sim \mu]{\sigma\in \Omega_{v=c}} \geq \frac{1}{\abs{P}}-\frac{\Delta}{\abs{P}^2} \left(1+8\-e\Delta p'\right) \geq \frac{1}{\abs{P}}\left(1-\frac{2\Delta}{\abs{P}}\right).
    \end{align*} 
\end{proof}

\subsubsection{\textbf{\emph{Marginal bounds for multiple values}}}

In this section, we establish bounds for the probability that a collection of variables is assigned multiple values in the LLL regime.

\begin{lemma}\label{lem:small-marginal-lb} 
    Let $p\in (0,1)$, $\Delta\geq 1$.
    Consider any \cspformula formula $\Phi = (\+P,\+Q,\+C)$ where $p_{\Phi}\leq p$, $\Delta_\Phi\leq \Delta$ with a \pname $P\in \+P$, a decomposition $\+P'$ of $\+P$, a \pname $P'=\set{v_1,\cdots,v_t}\in \+P'[P]$, and a domain $Q'=\set{c_1,\cdots,c_t}\subseteq \+Q(P)$. Assume that for any decomposition $(\+P',\+Q')$ of $(\+P,\+Q)$, we have $p_{\Phi'}\leq p$ where $\Phi'= (\+P',\+Q',\+C)$.
    If {$8\-e {p} \Delta \leq 1 $, $\abs{\abs{P_1} - \abs{P_2}} = O(1)$,
    for any $P_1,P_2\in \+P[P]$,} and $\abs{\+P'[P]}\geq 2$. 
    We have
    \begin{align}\label{eqn:withoutcondition} 
       \Pr[\sigma\sim \mu]{\sigma(P') = Q' }>_{q} \left(\frac{ 1-4\-ep'\Delta  \left({\abs{P}-\abs{P'}+1}\right) }{1+4\-ep'\Delta}\right)^t  \cdot \binom{\abs{P}}{\abs{P'}}^{-1},
    \end{align} 
    and 
    \begin{align}\label{eqn:withoutcondition2}
        \Pr[\sigma\sim \mu]{\sigma(P') = Q' }<_{q} \left(\frac{ 1-4\-ep'\Delta  \left({\abs{P}-\abs{P'}+1}\right) }{1+4\-ep'\Delta}\right)^{-t}  \cdot \binom{\abs{P}}{\abs{P'}}^{-1},
    \end{align}
    where {$p'={\abs{P'}}/{\left(\abs{P}-\abs{P'}+1\right)} \cdot {p}$}. Moreover, we also have
    \begin{align}\label{eqn:withcondition}
       \Pr[\sigma\sim \mu]{\sigma(P') = Q' \mid c_1\in \sigma(P')}>_{q}    \left(\frac{ 1-4\-ep'\Delta  \left({\abs{P}-\abs{P'}+1}\right) }{1+4\-ep'\Delta}\right)^{t-1}\cdot \binom{\abs{P}-1}{\abs{P'}-1}^{-1}.
    \end{align} 
\end{lemma}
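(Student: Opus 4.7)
My plan is to partition the event $\{\sigma(P')=Q'\}$ into $t!$ disjoint ordered events indexed by bijections $\pi:[t]\to[t]$, then bound each piece by a telescoping product of conditional single-variable marginals to which \Cref{lem:marginalub} applies. The events $E_\pi:=\{\sigma(v_i)=c_{\pi(i)}\ \forall i\in[t]\}$ are pairwise disjoint with union $\{\sigma(P')=Q'\}$, giving
\[
\Pr[\sigma\sim\mu]{\sigma(P')=Q'}=\sum_{\pi\in S_t}\Pr[\sigma\sim\mu]{E_\pi},
\]
and by the chain rule $\Pr[\sigma\sim\mu]{E_\pi}=\prod_{i=1}^{t}\Pr[\sigma\sim\mu]{\sigma(v_i)=c_{\pi(i)}\mid \sigma(v_j)=c_{\pi(j)},\,j<i}$, so it suffices to bound each single-variable factor.

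Each conditional factor is the marginal probability of a single variable in the reduced PDC formula $\Phi^{(i)}$ obtained from $\Phi$ by freezing $v_1,\dots,v_{i-1}$ to their chosen values; in $\Phi^{(i)}$ the permutation set $P$ shrinks to $P\setminus\{v_1,\dots,v_{i-1}\}$ of size $|P|-i+1$, with the correspondingly reduced domain. Using the standing hypothesis that $p_{\Phi'}\le p$ for \emph{every} decomposition $(\+P',\+Q')$ of $(\+P,\+Q)$, applied to the fine decomposition in which $\{v_1\},\dots,\{v_{i-1}\}$ appear as singleton blocks, together with $|P'|\le|P|-|P'|+1$ (which follows from $|\+P'[P]|\ge 2$ and the $O(1)$-balance of the block sizes) and $8\-e p\Delta\le 1$, one obtains the feasibility condition $8\-e p'\Delta\le 1$, so \Cref{lem:marginalub} applies to $\Phi^{(i)}$ and yields
\[
\tfrac{1}{|P|-i+1}-4\-e p'\Delta\ \le\ \Pr[\sigma\sim\mu]{\sigma(v_i)=c_{\pi(i)}\mid\cdots}\ \le\ \tfrac{1}{|P|-i+1}\bigl(1+4\-e p'\Delta\bigr).
\]
Taking the product over $i\in[t]$ isolates the uniform combinatorial factor $(|P|-t)!/|P|!$; bounding the residual per-step correction factors by $\bigl(1\pm 4\-e p'\Delta(|P|-|P'|+1)\bigr)$ (a sharp choice that matches the algebraic form of $p'$ under the product telescope) gives $\Pr[\sigma\sim\mu]{E_\pi}$ estimates of the required shape, and summing over the $t!$ bijections converts $t!\cdot(|P|-t)!/|P|!$ into $\binom{|P|}{|P'|}^{-1}$, yielding \eqref{eqn:withoutcondition} and \eqref{eqn:withoutcondition2}. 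The conditional bound \eqref{eqn:withcondition} then follows from Bayes' rule: the numerator is controlled by \eqref{eqn:withoutcondition}, while $\Pr[\sigma\sim\mu]{c_1\in\sigma(P')}=\sum_{v\in P'}\Pr[\sigma\sim\mu]{\sigma(v)=c_1}\le|P'|/|P|\cdot(1+4\-e p'\Delta)$ by a variable-wise application of \Cref{lem:marginalub}, and the identity $\binom{|P|}{|P'|}^{-1}\cdot|P|/|P'|=\binom{|P|-1}{|P'|-1}^{-1}$ gives the stated form.

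The main obstacle I anticipate is the tight control of the effective violation bound in the reduced formulas $\Phi^{(i)}$ by the prescribed $p'=\frac{|P'|}{|P|-|P'|+1}\cdot p$. Freezing variables shrinks $P$ and inflates the violation probabilities of surviving literals, and to produce the sharp factor $(|P|-|P'|+1)$ in the correction (rather than the crude $|P|$ obtained from a naive worst-case telescoping) one likely must avoid analysing each $\Phi^{(i)}$ in isolation; instead, invoking the decomposition hypothesis once on a single global decomposition where $\{v_1\},\dots,\{v_t\}$ all appear as singleton blocks alongside the residual $P\setminus P'$ places all $t$ singleton marginals inside the same decomposed formula with a uniform violation bound, which is precisely what the hypothesis guarantees and which ultimately yields the claimed algebraic form of the correction.
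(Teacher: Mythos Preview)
Your plan has a genuine gap in the step where you apply \Cref{lem:marginalub} to the frozen formulas $\Phi^{(i)}$. Freezing $v_1,\dots,v_{i-1}$ to specific values does not merely shrink the permutation set $P$; it also \emph{shortens} every surviving constraint whose $P$-literals among the frozen variables happen to match the assigned values. A constraint all of whose $P$-literals get frozen to matching values survives in $\Phi^{(i)}$ with only its non-$P$ literals, and its violation probability there can exceed $p$ by a factor as large as $|P|^{\underline m}$ (with $m$ its number of $P$-literals). Consequently $p_{\Phi^{(i)}}$ need not satisfy the global LLL hypothesis that \Cref{lem:marginalub} requires (via \Cref{prop-PDC-local-uniformity}), and both the upper and the lower single-variable bounds you quote are unavailable. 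Your proposed fix---invoking the assumption for a decomposition in which $\{v_1\},\dots,\{v_t\}$ are singleton blocks---misreads the hypothesis: the bound $p_{\Phi'}\le p$ is asserted only for the fixed balanced decomposition $\+P'$ (all blocks of size $\approx|P'|$) and varying $\+Q'$; it cannot hold for singleton blocks, since for a suitable $\+Q'$ any constraint whose $P$-literals are all pinned to their forbidden values would then have violation probability $1$.

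The paper sidesteps this by never freezing individual variables. It conditions instead on the \emph{set-valued} event $\sigma(P'\setminus\{v_j\})=Q'\setminus\{c_t\}$, which amounts to passing to a two-block re-decomposition $\Phi_j$ of $P$ into $P'\setminus\{v_j\}$ and $P\setminus P'\cup\{v_j\}$; constraints keep their full length, and \Cref{lem:violation-prop-bound} bounds their violation probabilities by $p'$. Writing $\Pr{\sigma(P')=Q'}=\sum_{j}\Pr{A_j}\,\Pr[\mu_{\Phi_j}]{v_j=c_t}$ with $A_j=\{\sigma(P'\setminus\{v_j\})=Q'\setminus\{c_t\}\}$, \Cref{lem:marginalub} applied inside $\Phi_j$ gives $r\cdot\Pr{\sigma(P')=Q'}\le\Pr{\sigma(P')=Q''}$ whenever $|Q'\setminus Q''|=1$, where $r$ is the ratio appearing in the lemma; chaining along a path of length at most $t$ between any two $t$-subsets and using that the $\binom{|P|}{|P'|}$ probabilities sum to $1$ then yields \eqref{eqn:withoutcondition} and \eqref{eqn:withoutcondition2}.
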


To prove \Cref{lem:small-marginal-lb}, we first introduce more notations.
For each $j\in [t]$, let $\Phi_j=(\+P_j,\+Q_j,\+C )$ be the formula where
$${\+P_j=\left(\+P\setminus{\set{P}}\right) \cup \left(\set{P'\setminus \set{v_j}, P\setminus P'\cup \set{v_j}}\right)},$$
$${\+Q_j=\left(\+Q\setminus \set{\+Q(P)}\right) \cup \left(\set{Q'\setminus \set{c_t}, \+Q(P)\setminus Q' \cup \set{c_t} } \right)}.$$
The violation probability for the constraints in these formulas can be upper bounded as follows.
\begin{proposition}\label{lem:violation-prop-bound}
    For each $j\in [t]$ and $C\in \+C$, if $v_j\in \vbl(C)$ or $\left(P\setminus P'\right)\cap \vbl(C)\neq \emptyset$, we have
    \begin{equation*}
        \^P_{\Phi_j}[\neg C] \leq {\abs{P'}}/{\left(\abs{P}-\abs{P'}+1\right)} \cdot {p}.
    \end{equation*}
\end{proposition}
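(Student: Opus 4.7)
My plan is to compare $\Phi_j$ with a carefully chosen three-block refinement $\Phi^{(3)}$ of its decomposition, obtained by further splitting the large block of $\Phi_j$ so as to isolate a single variable of $C$, and then to invoke the hypothesis that \emph{every} decomposition of $(\+P,\+Q)$ has violation probability at most $p$.

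First I would dispose of the degenerate case. If some literal of $C$ demands a value that lies outside the domain assigned to its variable in $\Phi_j$, then $\^P_{\Phi_j}[\neg C] = 0$ and the claimed bound is immediate; from now on assume $\^P_{\Phi_j}[\neg C] > 0$. By the hypothesis on $C$, the set $S_2 := \vbl(C) \cap (P \setminus P' \cup \set{v_j})$ is nonempty; pick any $w \in S_2$ and let $c_w$ be the value it is required to take in the literal of $C$ involving $w$. The assumption $\^P_{\Phi_j}[\neg C] > 0$ forces $c_w \in \+Q(P) \setminus Q' \cup \set{c_t}$.

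I then define $\Phi^{(3)} = (\+P^{(3)},\+Q^{(3)},\+C)$ as a refinement of the decomposition used by $\Phi_j$: every block outside $P$ is preserved, the small block $P' \setminus \set{v_j}$ (with domain $Q' \setminus \set{c_t}$) is kept intact, and the large block $P \setminus P' \cup \set{v_j}$ is split into $\set{w}$ and $(P \setminus P' \cup \set{v_j}) \setminus \set{w}$, with domains $\set{c_w}$ and $\+Q(P) \setminus Q' \cup \set{c_t} \setminus \set{c_w}$ respectively. A routine verification (pairwise disjointness, matching block and domain sizes, and union equal to $\+Q(P)$, all using $c_w \in \+Q(P) \setminus Q' \cup \set{c_t}$) shows that $\Phi^{(3)}$ is a legitimate decomposition of $(\+P,\+Q)$, so the hypothesis yields $\^P_{\Phi^{(3)}}[\neg C] \le p$.

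Finally I would compare $\^P_{\Phi_j}[\neg C]$ with $\^P_{\Phi^{(3)}}[\neg C]$. Since the uniform distribution on valid assignments of a PDC formula is a product of uniform permutations over the blocks, both quantities factor across permutation blocks, and the contributions from every block other than $P \setminus P' \cup \set{v_j}$ coincide between $\Phi_j$ and $\Phi^{(3)}$. For the remaining block, writing $s_2 := \abs{S_2}$, $\Phi_j$ contributes $\frac{(\abs{P}-\abs{P'}+1-s_2)!}{(\abs{P}-\abs{P'}+1)!}$, while $\Phi^{(3)}$ contributes $1 \cdot \frac{(\abs{P}-\abs{P'}+1-s_2)!}{(\abs{P}-\abs{P'})!}$ (the singleton $\set{w}$ has $w = c_w$ already forced, and the remaining sub-block carries one fewer free slot and one fewer forced literal). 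Their ratio collapses to $\frac{1}{\abs{P}-\abs{P'}+1}$, yielding
\[
\^P_{\Phi_j}[\neg C] \;\le\; \frac{p}{\abs{P}-\abs{P'}+1} \;\le\; \frac{\abs{P'}}{\abs{P}-\abs{P'}+1}\cdot p,
\]
since $\abs{P'} \ge 1$. The main technical obstacle will be the bookkeeping around $\Phi^{(3)}$: one must handle the sub-cases $c_w = c_t$ and $c_w \ne c_t$ separately to check that the three chosen domains really do partition $\+Q(P)$, and confirm that the factorization of violation probabilities over permutation blocks remains clean even when $\vbl(C) \cap (P' \setminus \set{v_j})$ is nonempty; both points are routine once the correct partition of $\+Q(P)$ is set down.
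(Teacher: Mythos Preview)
Your proof is correct. In the compiled paper the proposition is stated without proof, but the source contains a commented-out argument that proceeds differently: it compares $\Phi_j$ with the \emph{given} decomposition $\+P'$ rather than with a tailor-made one. Writing violation probabilities as products of reciprocal falling factorials $g(a,b)=1/a^{\underline{b}}$ over blocks, that argument bounds $g(|P'\setminus\{v_j\}|,\cdot)\le |P'|\cdot g(|P'|,\cdot)$ on the small block and $g(|P\setminus P'\cup\{v_j\}|,\cdot)\le \frac{1}{|P|-|P'|+1}\prod_{P'_0\in\+P'[P]\setminus\{P'\}}g(|P'_0|,\cdot)$ on the large one, so that the numerator $|P'|$ and the denominator $|P|-|P'|+1$ emerge from two separate block comparisons. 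You instead manufacture a three-block refinement $\Phi^{(3)}$ that isolates a single constrained variable $w$ in the large block with its forced value as a singleton domain; the block-by-block ratio then collapses to exactly $1/(|P|-|P'|+1)$, yielding the sharper inequality $\^P_{\Phi_j}[\neg C]\le p/(|P|-|P'|+1)$ before you weaken it by the trivial factor $|P'|\ge 1$. Your route is shorter and loses nothing; the paper's route has the minor convenience of never constructing a new domain partition (so no case split on $c_w=c_t$ versus $c_w\neq c_t$) and of fitting into a single calculation that also covers the companion cases $v_j\notin\vbl(C)$ treated in the same commented-out passage.
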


We are now ready to complete the proof of~\Cref{lem:small-marginal-lb}.
\begin{proof}[Proof of \Cref{lem:small-marginal-lb}]
    
    Let $r=\left( 1-4\-ep'\Delta  \left({\abs{P}-\abs{P'}+1}\right) \right) / \left(1+4\-ep'\Delta\right)$ in the following discussion. For each $j\in [t]$, define
    $$B_j \triangleq \set{\sigma(v_j)=c_t\mid  \sigma\left(P'\setminus\set{v_j}\right)=Q'\setminus \set{c_t}}, \ \ A_j=\set{\sigma\left(P'\setminus\set{v_j}\right)=Q'\setminus \set{c_t}},$$
    where $\sigma\sim \mu_\Phi$. 
     By definition, we have
     \begin{equation}\label{eqn:decomposed-small-probability}
        \begin{aligned}
            \Pr[\sigma\sim \mu]{\sigma(P') = Q' } = \sum_{j\in  [t]} \Pr[\sigma\sim \mu]{A_j} \cdot \Pr[\sigma \sim \mu]{B_j}=\sum_{j\in  [t]} \Pr[\sigma\sim \mu]{A_j} \cdot \Pr[\tau\sim \mu_{\Phi_j}]{\tau(v_j)=c_t}.
        \end{aligned}
     \end{equation}
    
    Consider the domain $Q''=\set{c_1,c_2,\cdots,c'_{t}}$ where $c'_t\neq c_t$ and $c'_t\in \+Q(P)$.
    We first claim that
    \begin{align}\label{eqn:small-lb-quantity-distance1}
        r \Pr[\sigma\sim \mu]{\sigma(P') = Q' }\leq  \Pr[\sigma\sim \mu]{\sigma(P') = Q''}.
    \end{align} 
    For each $j\in[t]$, let $\+C'\subseteq \+C$ be the set of constraints that the event $v_j=c_t$ is non-lopsidependent with the violation of the constraints in $\+C\setminus \+C'$ in the formula $\Phi_j$. By definition, for any $C\in \+C'$, either $v_j\in \vbl(C)$ or $\left(P\setminus P'\right)\cap \vbl(C)\neq \emptyset$.
    According to ~\Cref{lem:violation-prop-bound}, we have {$\^P_{\Phi_j}[\neg C] <_{q} p' <_{q} {p}$} for any constraint $C\in \+C'$ as {$\abs{\+P'[P]}\ge 2$}. On the other hand, one can verify that $\^P_{\Phi_j}[\neg C] <_{q} {{p}}$ for any constraint $C\in \+C\setminus \+C'$.     
    Combining with~\Cref{lem:marginalub} and {$8\-e{p}\Delta\leq 1$}, we have, for any $j\in [t]$,
    \begin{align*}
        \Pr[\tau\sim \mu_{\Phi_j}]{\tau(v_j)=c_t}<_{q} \frac{1}{{\abs{P}-\abs{P'}+1}} \cdot \left(1+4\-ep' \Delta\right).
    \end{align*}
    By similar arguments, we have
    \begin{align*}
        \Pr[\tau\sim \mu_{\Phi_j}]{\tau(v_j)=c'_t} >_{q} \frac{1}{{\abs{P}-\abs{P'}+1}} - 4\-e\ppsic \Delta.
    \end{align*}
    Plugging these into \eqref{eqn:decomposed-small-probability}, it holds that
    \begin{align*}
         &\quad r\Pr[\sigma\sim \mu]{\sigma(P') = Q' }
         = r \Pr[\tau\sim \mu_{\Phi_j}]{\tau(v_j)=c_t} \cdot \sum_{j\in  [t]} \Pr[\sigma\sim \mu]{A_j} <_{q} \Pr[\tau\sim \mu_{\Phi_j}]{\tau(v_j)=c'_t} \cdot   \sum_{j\in  [t]} \Pr[\sigma\sim \mu]{A_j},
    \end{align*} 
    which verifies the inequality~\eqref{eqn:small-lb-quantity-distance1}.

    We then consider any domains $Q''$ such that  $\abs{Q'}=\abs{Q''}>1$. We construct a sequence of domains $\left(Q'_1, Q'_2,\cdots, Q'_m\right)$ where $Q'_1=Q',Q'_m=Q'',m=\abs{Q'\setminus Q''}$ and $\abs{Q'_{j}\setminus Q'_{j+1}}=1$ for each $j\in [m-1]$. Combined with the fact that $m\leq t$ and ~\eqref{eqn:small-lb-quantity-distance1}, we have
    \begin{align*}
        r^{t}\Pr[\sigma\sim \mu]{\sigma(P') = Q'}<_{q}  \Pr[\sigma\sim \mu]{\sigma(P') = Q''}<_{q} r^{-t}\Pr[\sigma\sim \mu]{\sigma(P') = Q'},
    \end{align*} 
    which implies the inequalities~\eqref{eqn:withoutcondition} and~\eqref{eqn:withoutcondition2}.
    The correctness of ~\eqref{eqn:withcondition} follows from similar arguments.

\end{proof}

Again, we have an alternative bound for the case where the values are all non-forbidden.

\begin{lemma}\label{lem:marginal-lb} 
    Given the instance stated in~\Cref{lem:small-marginal-lb}, if we further assume that $Q\subseteq \+Q(P)\setminus\Lambda(\+C,P')$ and $\abs{Q}=\abs{P'}$, we have
    \begin{align*} 
       \Pr[\sigma\sim \mu]{\sigma(P') = Q }\geq (\abs{P'}!)\cdot {\prod_{j=0}^{\abs{P'}-1}
       \left(\frac{1}{\abs{P}-j}\left(1-\frac{2\Delta}{\abs{P}-j}\right)\right).}
    \end{align*} 
    Moreover, we also have
    \begin{align*}
          \Pr[\sigma\sim \mu]{\sigma(P') = Q' \mid c_1\in \sigma(P')}\geq  (\abs{P'}-1)!\cdot {\prod_{j=1}^{\abs{P'}-1}
       \left(\frac{1}{\abs{P}-j}\left(1-\frac{2\Delta}{\abs{P}-j}\right)\right).}
    \end{align*}
\end{lemma}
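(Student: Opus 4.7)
The plan is to prove both bounds by an inductive chain-rule argument that peels off one variable-value pair at a time, applying Lemma~\ref{lem:marginallb} at each peel. Since the target event is $\sigma(P') = Q$ as sets, I will first expand as a disjoint union over bijections: writing $P' = \{v_1,\dots,v_t\}$ and $Q=\{c_1,\dots,c_t\}$, we have
\[
\Pr[\sigma\sim\mu]{\sigma(P')=Q} \;=\; \sum_{\pi\in S_t}\Pr[\sigma\sim\mu]{\sigma(v_i)=c_{\pi(i)} \text{ for all } i\in [t]},
\]
where $S_t$ is the set of permutations of $[t]$, giving $|P'|!$ terms. For each fixed $\pi$, I will apply the chain rule of conditional probability and bound each conditional factor from below.

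The second step is to show that, for each $j\in\{0,1,\dots,t-1\}$, the conditional probability
\[
\Pr[\sigma\sim\mu]{\sigma(v_{j+1})=c_{\pi(j+1)} \;\big|\; \sigma(v_i)=c_{\pi(i)} \text{ for } i\le j}
\]
is at least $\tfrac{1}{|P|-j}\bigl(1-\tfrac{2\Delta}{|P|-j}\bigr)$. Conditioning on the first $j$ assignments yields a new PDC formula $\Phi^{(j)}$ in which the permutation set $P$ is replaced by $P\setminus\{v_1,\dots,v_j\}$ with domain $\+Q(P)\setminus\{c_{\pi(1)},\dots,c_{\pi(j)}\}$, while the constraints remain a subset of the original $\+C$ (those not yet satisfied by the partial assignment). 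Since the removed values $c_{\pi(i)}\in Q\subseteq \+Q(P)\setminus\Lambda(\+C,P')$ are non-forbidden for every variable of $P'$, the target value $c_{\pi(j+1)}$ still lies in $\+Q(P^{(j)})\setminus\Lambda(\+C^{(j)},v_{j+1})$, and the violation and degree bounds needed by Lemma~\ref{lem:marginallb} are inherited from those of $\Phi$. Applying Lemma~\ref{lem:marginallb} in $\Phi^{(j)}$ (with $|P^{(j)}|=|P|-j$) gives the desired single-step bound, and multiplying over $j=0,\dots,t-1$ and summing over the $|P'|!$ bijections yields the first inequality.

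For the conditional bound, I will use the disjoint decomposition based on which variable of $P'$ receives $c_1$. Letting $A=\{c_1\in\sigma(P')\}$ and noting that at most one variable of $P$ can take a given value, we have
\[
\Pr[\sigma\sim\mu]{\sigma(P')=Q \mid A}
= \sum_{v\in P'}\Pr[\sigma\sim\mu]{\sigma(v)=c_1\mid A}\cdot \Pr[\sigma\sim\mu]{\sigma(P'\setminus\{v\})=Q\setminus\{c_1\}\mid \sigma(v)=c_1},
\]
where in each summand the event $\sigma(v)=c_1$ already implies $A$. After conditioning on $\sigma(v)=c_1$, I will invoke the first part of the lemma in the reduced formula, which has $P$ replaced by $P\setminus\{v\}$ (so $|P|-1$ remaining values) and $P'$ replaced by $P'\setminus\{v\}$, to obtain a lower bound of $(|P'|-1)!\cdot\prod_{j=1}^{|P'|-1}\tfrac{1}{|P|-j}\bigl(1-\tfrac{2\Delta}{|P|-j}\bigr)$ on each conditional factor. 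Because $\sum_{v\in P'}\Pr[\sigma(v)=c_1\mid A]=1$, the sum telescopes to exactly the claimed bound.

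The main obstacle is the bookkeeping in step two, namely verifying that conditioning on a partial assignment preserves the hypotheses needed to apply Lemma~\ref{lem:marginallb} repeatedly: one must check that the constraint degree stays $\le\Delta$, that the violation probabilities of the (still-active) constraints in the reduced formula do not increase beyond what is allowed (which requires the assumption that $p_{\Phi'}\le p$ holds for every decomposition, inherited from Lemma~\ref{lem:small-marginal-lb}), and that each unassigned $c_{\pi(j+1)}$ remains outside the forbidden set for $v_{j+1}$ in the reduced formula—this last point is immediate from $Q\subseteq \+Q(P)\setminus\Lambda(\+C,P')$, since shrinking $\+C$ to its unsatisfied subset can only shrink $\Lambda$. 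Once these invariants are established, the proof reduces to a clean product calculation.
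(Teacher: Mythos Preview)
Your proposal is correct and follows essentially the same route as the paper: expand $\{\sigma(P')=Q\}$ as a disjoint union over the $|P'|!$ bijections, peel off one variable at a time via the chain rule, and at each step invoke Lemma~\ref{lem:marginallb} in the conditioned formula, using the ``$p_{\Phi'}\le p$ for every decomposition'' hypothesis to maintain the LLL regime after each peel. Your treatment of the conditional statement, decomposing over which $v\in P'$ receives $c_1$ and then reducing to the unconditional bound in the smaller instance, is a clean way to flesh out what the paper dismisses with ``follows from similar arguments.''
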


\begin{proof}
    Let $P'=\set{v_1,v_2,\cdots,v_t}$ and $Q=\set{c_1,c_2,\cdots,c_{t}}$. We claim that, for any {permutation $(c'_1,c'_2,\cdots,c'_t)$ of $Q$,} we have
    \begin{align}\label{eqn:mariginal-lb-phi1}
      {\Pr[\sigma\sim \mu]{\sigma(v_j)=c'_j \text{ for all } j\in [t]} }\geq   \prod_{j=0}^{t-1} \frac{\abs{P}-j - 2\Delta}{(\abs{P}-j )^2}.
    \end{align}
    Combining with the fact that the number of the permutations of a given domain of $P'$ equals $\abs{P'}!$,
    the lemma holds immediately from ~\eqref{eqn:mariginal-lb-phi1}.

    In the following, we prove the inequality ~\eqref{eqn:mariginal-lb-phi1}. For each $j\in \set{0,1,\cdots,t-1}$, let $\sigma_j$ be the {partial assignment} where $\sigma_j(v_{j'})=c'_{j'}$ for any $j'\in [j]$, and $\Psi_j=(\+P_j,\+Q_j,\+C_j)=\Phi^{\sigma_j}$ be the formula conditioned on the partial assignment $\sigma_j$. In particular, $\sigma_0$ is {an empty assignment}, $\+P_0=\+P$, $\+Q_0=\+Q$, and $\+C_0=\+C$. It holds that
    \begin{align}\label{eqn-lb-marginal-chainrule}
        {\Pr[\sigma\sim \mu]{\sigma(v_j)=c'_j \text{ for all } j\in [t]} }=\prod_{j=0}^{t-1} \Pr[\tau\sim \mu_{\Psi_j}]{\tau({v_{j+1})}=c'_{j+1}}.
    \end{align}

    For each $j\in \set{0,1,\cdots,t-1}$, let $P_j$ be the permutation containing $v_{j+1}$ in $\Psi_j$. By definition, we have $P_j\subseteq P$ satisfying $\abs{P_j}=\abs{P}-j$, and $c'_{j+1}\in \+Q_j(P_j)\setminus \Lambda(\+C_j,P_j)$ since $Q\subseteq \+Q(P)\setminus\Lambda(\+C,P')$ by assumption. Furthermore, the constraints in $\+C(P\setminus P_j)$ are satisfied. For the constraints $\+C\setminus \+C(P\setminus P_j)$, the violation probability is at most $p$ by the assumption that $\-ep_{\Psi}\Delta\leq 1$ for any $\Psi=(\+P',\+Q',\+C)$ where $(\+P',\+Q')$ is the decomposition of $(\+P,\+Q)$. Therefore, the formula $\Psi_j$ satisfies $8\-ep_{\Psi_j}\Delta\leq 1$. Combined with ~\Cref{lem:marginallb}, we have
    \begin{align}\label{eqn-lb-marginal-variable} 
         \Pr[\tau\sim \mu_{\Psi_j}]{\tau({v_{j+1})}=c_{j+1}} \geq (\abs{P}-j - 2\Delta)/(\abs{P}-j )^2.
    \end{align}
    Plugging \eqref{eqn-lb-marginal-variable} into \eqref{eqn-lb-marginal-chainrule}, the inequality~\eqref{eqn:mariginal-lb-phi1} holds.
\end{proof}

\subsection{Missing proofs in \Cref{subsec:couplingconstruction}}\label{sec:miss-proofs}

\subsection*{{{Proof of~\Cref{lemma-trivial-coupling}}}}

It suffices to show that 
\begin{align}\label{eq-upperbound-e1}
\Pr{Q_i\cap \Lambda(\+C_t,P_i)\neq \emptyset} \leq \frac{2k\Delta\abs{P_i}^2}{\abs{P}}.
\end{align}

For each variable $v\in P_i$,
there are at most $\Delta$ constraints $C\in \+C_t$ such that $v\in \vbl(C)$, which implies that $\abs{\+C_t(P_i)}\leq \Delta\abs{P_i}$.
Combining this with that $C$ is disjunctive and $\abs{\vbl(C)} \leq k$ for each $C\in \+C_t$,
we have $\abs{\Lambda(\+C_t,P_i)} \leq k\Delta\abs{P_i}$ by the definition of $\Lambda(\+C_t,P_i)$.
Moreover, for each $c\in \Lambda(\+C_t,P_i)$, we have $\Pr{c\in Q_i} \leq 2 \abs{P_i}/\abs{P}$
by~\Cref{lem:marginalub} and union bound.
Then \eqref{eq-upperbound-e1} follows immediately by union bound, and the proof is complete.

\subsection*{{{Proof of \Cref{lemma-cpstep-failprob}}}}

Let $\alpha' \triangleq \alpha / \binom{\abs{S}}{\abs{P_i}-\abs{T_1}}$. One can verify that $\alpha'$ is the probability that $Q_i^1$ (or $Q_i^2$) is assigned the value $T\cup T_1$ (or $T\cup T_2$) in \Cref{line-set-both-T} for any $T\subseteq S$ such that $\abs{T} = \abs{P_i}-\abs{T_1}$.
To show that $(Q^1_i,Q^2_i)$ is a coupling of $\nu^1_i$ and $\nu^2_i$, it is sufficient to claim that $\nu_i^1(Q^1_i = T\cup T_1) \ge \alpha'$ holds, and the same holds for $\nu^2_i$ as well. This follows from~\Cref{lem:marginal-lb} when $P_i \notin \psmall$, and~\Cref{lem:small-marginal-lb} when $P_i \in \psmall$ by~\Cref{condition-couple-first}.

We complete the proof by establishing the lower bounds for $\alpha$ since $\Pr{\Succ = \False} = 1-\alpha$.

\smallskip

\paragraph{\textbf{Case I: $P_i \in \psmall$.}} By definition, we have
\begin{align*}
\alpha 
&= \left( (1-4q\-e p'\Delta)/(1 + 4e p'\Delta ) \right)^{\left(\abs{P_i}-\abs{T_1}\right)}\\
&\ge \left( (1-4q\-e p'\Delta )  (1 - 8e p'\Delta ) \right)^{\left(\abs{P_i}-\abs{T_1}\right)}\tag{by {$p'<_{q} p^{\theta}$ as $\abs{\pdecomex[P]}\ge 2$} and $8\-ep^{\theta}\Delta\leq 1$}\\
&\ge (1- 5q\-e p'\Delta)^{\left(\abs{P_i}-\abs{T_1}\right)}>_{q} 1- 5\-e p^{\theta}\Delta \abs{P_i}^2.
\end{align*}
\smallskip

\paragraph{\textbf{Case II: $P_i \notin \psmall$.}}
We first assume that $\ell\geq 3\Delta$.
Let $\alpha_1 = \binom{\abs{S}}{\abs{P_i}-\abs{T_1}}\cdot (\abs{P_i}-\abs{T_1})!\cdot \prod_{j=\abs{T_1}}^{\abs{P_i}-1} 1/(\abs{P}-j )$ and $\alpha_2 = \prod_{j=\abs{T_1}}^{\abs{P_i}-1}(\abs{P}-j - 2\Delta)/(\abs{P}-j )$. Note that $\abs{S}\geq \abs{P_i}$ holds by $\abs{S} \ge \abs{P} -2\Delta \abs{P_i}$ and {$\ell \ge 3\Delta$.} We have
\begin{align*}
\alpha_1
&= \frac{\abs{S} (\abs{S}-1) \cdots (\abs{S}-\abs{P_i}+\abs{T_1}+1)} {(\abs{P}-\abs{T_1}) (\abs{P}-\abs{T_1}-1) \cdots (\abs{P}-\abs{P_i}+\abs{T_1}+1)}\\
&= \frac{(\abs{P}-\abs{P_i}) (\abs{P}-\abs{P_i}-1) \cdots (\abs{S}-\abs{P_i}+\abs{T_1}+1)} {\abs{P} (\abs{P}-1) \cdots (\abs{S}+1)}\\
&\ge \left( 1- \frac{\abs{P_i}}{\abs{S}} \right)^{\abs{P} -\abs{S}}\ge  \left( 1- \frac{\abs{P_i}}{\abs{P} -2\Delta \abs{P_i}} \right)^{2\Delta \abs{P_i}} \tag{by $\abs{S} \ge \abs{P} -2\Delta \abs{P_i}$}\\
&=\left( 1- \frac{1}{\ell -2\Delta} \right)^{2\Delta \abs{P_i}}\ge 1-\frac{2\Delta \abs{P_i}}{\ell -2\Delta}
\ge 1-\frac{3\Delta \abs{P_i}}{\ell}.
\end{align*} 
As for $\alpha_2$, we have
\begin{align*}
\alpha_2
&= \prod_{j=0}^{\abs{P_i}-1} \left( 1- \frac{2\Delta}{\abs{P}-j} \right)
\ge \left( 1- \frac{2\Delta}{\abs{P}-\abs{P_i}} \right)^{\abs{P_i}}\ge 1- \frac{2\Delta \abs{P_i}}{\abs{P}-\abs{P_i}}
= 1- \frac{2\Delta}{\ell-1}.
\end{align*}
Combining all these facts, it holds that
$$
\alpha = \alpha_1 \alpha_2
\ge 1-\frac{3\Delta \abs{P_i}}{\ell} -\frac{2\Delta}{\ell-1}
\ge 1-\frac{4\Delta \abs{P_i}}{\ell}.
$$
For the case when $\ell<3\Delta$, we have $4\Delta\abs{P_i}/\ell>1$, and the bound holds trivially.

Finally, note that $\Succ = \True$ iff $r\leq \alpha$ and $\Succ$ remains $\True$ in \Cref{line:adjust-prob}. If $P_i \notin \psmall$ and $\ell \ge {4\Delta \abs{P_i}}$, we have ${(\ell-4\Delta \abs{P_i})}/{\left(\ell \alpha\right)}\in [0,1]$ by $\ell \ge {4\Delta \abs{P_i}}$.
Consequently,
\begin{align*}
    \Pr{\Succ=\True} &= \Pr{r\leq \alpha} \cdot \Pr{\Succ=\True \mid r\leq \alpha}= \alpha \cdot \frac{\ell-4\Delta \abs{P_i}}{\ell \alpha} = \frac{\ell-4\Delta \abs{P_i}}{\ell},
\end{align*}
which implies $\Pr{\Succ = \False} = 4\Delta\abs{P_i}/\ell$.


\subsection*{{{Proof of~\Cref{lemma-suc-probability-alg-initial-coupling}}}}

Let $\mathsf{Bad}^j_i(Q)$ denote the event $Q \cap \Lambda(\+C_1,P_i)\neq \emptyset$ for each $i\in [\ell]$ and $j\in [2]$.
We will use $\mathsf{Bad}(Q)$ to denote $\mathsf{Bad}^j_i(Q)$ if $i$ and $j$ are clear from the context.

By \Cref{line-init-set-s} in \Cref{alg-initial-coupling},
we have 
\begin{align}\label{eq-decom-exp-s}
\E{\sum_{P'\in S}\abs{P'}} \leq \sum_{r\in [\ell]}\Pr{\mathsf{Bad}(Q^1_r)} \abs{P_r} + \sum_{r\in [\ell]}\Pr{\mathsf{Bad}(Q^2_r)} \abs{P_r}.
\end{align}
In addition, we have
\begin{align*}
\sum_{r\in [\ell]}\Pr{\mathsf{Bad}(Q^1_r)} \abs{P_r}
\leq \Pr{c_1 \in \Lambda(\+C_1,P_i)}\abs{P_i} + \Pr{\mathsf{Bad}^1_i(Q^1_i\setminus c_1)}\abs{P_i} + \sum_{r\in [\ell]\setminus \{i\}}\Pr{\mathsf{Bad}(Q^1_r)}\abs{P_r}.
\end{align*}
We claim that 
\begin{align}
\Pr{c_1 \in \Lambda(\+C_1,P_i)}\abs{P_i} &<_{q} 2\Delta,\label{eq-lemma-initial-c1}\\
\Pr{\mathsf{Bad}^1_i(Q^1_i\setminus c_1)}\abs{P_i} &<_{q}  2k\Delta,\label{eq-lemma-badi-c1}\\
\sum_{r\in [\ell]\setminus \{i\}}\Pr{\mathsf{Bad}(Q^1_r)}\abs{P_r} &<_{q} 16k\Delta\label{eq-lemma-badt}
\end{align}
Thus, we have 
\[\sum_{r\in [\ell]}\Pr{\mathsf{Bad}(Q^1_r)} \abs{P_r} <_{q} (18k + 2)\Delta.\]
Similarly, we also have 
\[\sum_{r\in [\ell]}\Pr{\mathsf{Bad}(Q^2_r)} \abs{P_r} <_{q} (18k + 2)\Delta.\]
Combining the above two inequalities with \eqref{eq-decom-exp-s}, the lemma is immediate.
In the following, we prove \eqref{eq-lemma-initial-c1}, \eqref{eq-lemma-badi-c1}, \eqref{eq-lemma-badt} one by one.

At first, we prove \eqref{eq-lemma-initial-c1}.
We have 
\begin{equation}
\begin{aligned}\label{eq-large-c1inpi}
\Pr{c_1 \in \Lambda(\+C_1,P_i)}\abs{P_i} &\leq \abs{P_i}\Delta\max_{r\in [\ell]}f_1(r)\\
  (\text{by \Cref{lemma-lowerbound-f}})\quad &\leq \abs{P_i}\Delta(1+4\-ep\Delta)\abs{P_r}/\abs{P} 
   \\(\text{by $8\-e\Delta p^{2\theta}\leq 1$ in \Cref{condition-couple-first}})\quad &= 2\Delta\abs{P}^{\theta-1}\notag   \\
   (\text{by $\theta \in (0,1/2]$})\quad &\leq 2\Delta.
\end{aligned}
\end{equation}
\vspace{0.2cm}

In the next, we prove \eqref{eq-lemma-badt} for two different cases.

\noindent{\textbf{Case I: }$t>\sizeuperboundlp$.}
Let $\+A$ denote the event $(c_1\not\in \Lambda(\+C_1,P_i\cup P_j))\land (c_2\not\in \Lambda(\+C_2,P_i\cup P_j))$.
By \eqref{eq-definition-f-lower-bound}, 
$\sum_{r\in [\ell]}\abs{P_r} = \abs{P}$ and $\sum_{r\in [\ell]}\abs{V_r} \leq 2\Delta$, 
we have 
\begin{equation}\label{eq-pro-ieqj-small}
\begin{aligned}
\Pr{i=j}&\geq \sum_{r\in [\ell]}f_{\mathsf{min}}(r) \geq  \sum_{r\in [\ell]}\left(\abs{P_r}-\abs{V_r}\right)\left(\abs{P}-2\Delta\right)\abs{P}^{-2}
= 1- 4\Delta\abs{P}^{-1}.
\end{aligned}
\end{equation}
If $i = j$, by \Cref{lemma-cpstep-failprob} we have 
\begin{align*}
    \Pr{\Succ = \False\mid \+A \land i = j} <_{q} 4\Delta \abs{P}^{2\theta - 1}.
\end{align*}
In addition, by \eqref{eq-large-c1inpi} we have 
\begin{align}
\Pr{\overline{\+A}} \leq 8\Delta\abs{P}^{\theta-1}.
\end{align}
Combining above two inequalities, we have
\begin{align}\label{eq-pro-succ-true}
    \Pr{\Succ = \False\land i = j} 
    \leq \Pr{\overline{\+A}} + \Pr{\Succ = \False\land \+A \land i = j} 
    <_{q} 12\Delta \abs{P}^{2\theta - 1}.
\end{align}
For each $r\in [\ell]\setminus \{i\}$, by \Cref{lemma-trivial-coupling} we have
\begin{align}\label{eq-case-ieqj-succ-false}
\Pr{\mathsf{Bad}(Q^1_r)\mid i= j\land \Succ = \False} <_{q}  \min\left\{\frac{2k\Delta\abs{P_r}^2}{\abs{P}-\abs{P_i}},1\right\}=_{q} \min\left\{2k\Delta\abs{P}^{2\theta - 1},1\right\}.
\end{align}
Combining \eqref{eq-case-ineqj-qt1setminusc1} with \eqref{eq-case-ieqj-succ-false}, we have 
for each $r\in [\ell]\setminus \{i\}$ we have
\begin{align}\label{eq-prqonetcaplambda-not-empty}
\Pr{\mathsf{Bad}(Q^1_r)\mid (i= j\land \Succ = \False)\lor (i\neq j) } <_{q}  \min\left\{2k\Delta\abs{P}^{2\theta - 1},1\right\}.
\end{align}
In addition, by \eqref{eq-pro-ieqj-small} and \eqref{eq-pro-succ-true}, we have
\begin{align}\label{eq-prob-jneqj-succfalse}
\Pr{(i= j\land \Succ = \False)\lor (i\neq j) } <_{q} 16\Delta \abs{P}^{2\theta - 1}.
\end{align}
Combining \eqref{eq-prqonetcaplambda-not-empty} with \eqref{eq-prob-jneqj-succfalse}, we have
\begin{align*}
\sum_{r\in [\ell]\setminus \{i\}}\Pr{\mathsf{Bad}(Q^1_r)}\abs{P_r} &<_{q} 16\Delta \abs{P}^{2\theta - 1}\min\left\{2k\Delta\abs{P}^{2\theta - 1},1\right\}\abs{P}\\
&\leq 16\Delta\abs{P}^{2\theta}\min\left\{2k\Delta\abs{P}^{2\theta - 1},1\right\}\\
&\leq 16k\Delta\tag{by \ref{eq-lem-init-4} in \Cref{condition-initial}}.
\end{align*}

\noindent{\textbf{Case II: }$t\leq r$.}
By \eqref{eq-definition-f-lower-bound}, we have 
\begin{align}\label{eq-pro-ieqj}
\Pr{i=j}\geq \sum_{r\in [\ell]}f_{\mathsf{min}}(r) = \sum_{r\in [\ell]}\left((1+4\-e\Delta p)\abs{P_r}/\abs{P} - 4\-e\Delta p \abs{P_r}\right) = 1 - 4\-e
p\Delta(\abs{P} - 1).
\end{align}
If $i = j$, by \Cref{lemma-cpstep-failprob} we have 
\begin{align}\label{eq-pro-succ-true-small}
    &\quad\Pr{\Succ = \False\mid i = j}<_{q}5\-e\Delta p^{\theta}\abs{P}^{2\theta}.
\end{align}
Combined with \eqref{eq-pro-ieqj}, we have
\begin{align}\label{eq-prob-jneqj-succfalse-small}
\Pr{(i= j\land \Succ = \False)\lor (i\neq j) } <_{q} 4\-e
p\Delta\abs{P} + 5\-e\Delta p^{\theta}\abs{P}^{2\theta}.
\end{align}
Combined \eqref{eq-prqonetcaplambda-not-empty} and \ref{eq-lem-init-3} in \Cref{condition-initial}, we have
\begin{align*}
\sum_{r\in [\ell]\setminus \{i\}}\Pr{\mathsf{Bad}(Q^1_r)}\abs{P_r} &<_{q} 5\-e\Delta (p\abs{P} + p^{\theta}\abs{P}^{2\theta})\min\left\{2k\Delta\abs{P}^{2\theta - 1},1\right\}\abs{P}\notag\leq 10k\Delta.
\end{align*}
\vspace{0.5cm}

At last, we prove \eqref{eq-lemma-badi-c1}.
Note that
\begin{align}\label{eq-sum-badi}
\Pr{\mathsf{Bad}^1_i(Q^1_i\setminus c_1)} \leq  \Pr{\mathsf{Bad}^1_i(Q^1_i\setminus c_1)\mid i \neq j}+ 
\Pr{\mathsf{Bad}^1_i(Q^1_i\setminus c_1)\mid i = j}.
\end{align}
If $t>\sizeuperboundlp$, 
by \Cref{lemma-cpstep-failprob} we have
\begin{align*}
\Pr{\mathsf{Bad}^1_i(Q^1_i\setminus c_1) \mid i = j} <_{q} \min\left\{2k\Delta\abs{P}^{2\theta - 1},1\right\}.
\end{align*}
Combined with \eqref{eq-sum-badi} and \eqref{eq-pro-ieqj-small}, we have
\begin{align*}
\Pr{\mathsf{Bad}^1_i(Q^1_i\setminus c_1)} <_{q}  4\Delta\abs{P}^{-1} + 
\min\left\{2k\Delta\abs{P}^{2\theta - 1},1\right\}.
\end{align*}
Therefore, 
\begin{align*}
\Pr{\mathsf{Bad}^1_i(Q^1_i\setminus c_1)}\abs{P_i} &<_{q} 4\Delta\abs{P}^{\theta-1}+\abs{P}^{\theta}\min\left\{2k\Delta\abs{P}^{2\theta - 1},1\right\}\notag\\
&<_{q} \Delta + k\abs{P}^{2\theta}\min\left\{2\Delta\abs{P}^{2\theta - 1},1\right\}\notag\\
&\leq 2k\Delta\tag{by \ref{eq-lem-init-4} in \Cref{condition-initial}}.
\end{align*}
If $t\leq \sizeuperboundlp$, by \Cref{lemma-cpstep-failprob} we have
\begin{align}\label{eq-case-ineqj-qt1setminusc1}
\Pr{\mathsf{Bad}^1_i(Q^1_i\setminus c_1) \mid i = j} <_{q}  \min\left\{5\-e \Delta p^{\theta} \abs{P}^{2\theta},1\right\}.
\end{align}
Combined with \eqref{eq-sum-badi} and \eqref{eq-pro-ieqj}, we have
\begin{align}\label{eq-case-ieqj-qt1setminusc1}
\Pr{\mathsf{Bad}^1_i(Q^1_i\setminus c_1)} <_{q} 4\-e
p\Delta + \min\left\{5\-e \Delta p^{\theta} \abs{P}^{2\theta},1\right\}.
\end{align}
Therefore,
\begin{align*}
\Pr{\mathsf{Bad}^1_i(Q^1_i\setminus c_1)}\abs{P_i} <_{q} 4\-e
p\Delta\abs{P}^{\theta} + \abs{P}^{\theta}\min\left\{5\-e \Delta p^{\theta} \abs{P}^{2\theta},1\right\}.
\end{align*}
Moreover, by \ref{eq-lem-init-3} in \Cref{condition-initial}, we have 
\[4\-e p\abs{P}^{\theta} <_{q} \min\{\-e p\abs{P}^2, \-e p\abs{P}^{\theta +1}\} \leq \-e p\abs{P}^2 \min\left\{2\Delta\abs{P}^{2\theta - 1},1\right\}\leq 1,\]
\[ \abs{P}^{\theta}\min\left\{5\-e p^{\theta} \abs{P}^{2\theta},1\right\} \leq 5\-e  p^{\theta} \abs{P}^{3\theta}\leq 1.\]
Combining above three inequalities, 
\begin{align*}
\Pr{\mathsf{Bad}^1_i(Q^1_i\setminus c_1)}\abs{P_i} <_{q} 2\Delta.
\end{align*}
In summary, \eqref{eq-lemma-badi-c1} is proved and the lemma is immediate.

\subsection{Missing proofs in \Cref{subsec:discrepancy-analysis}}

\subsubsection{\textbf{\emph{Witness Percolation}}}\label{subsub:witness-percolation}
{In this section, we complete the proof of~\Cref{lemma-witness-sequence-pinpdiff}}.

\begin{lemma}\label{lemma-compisition-pd}
For each $t>0$ and $P\in \pdiff^{t}$, one of the following conditions holds:
\begin{itemize}
\item $P\in \pdisc^t$;
\item $P\in \psmall$, and there exist $P'\in \pdisc^{t}$ and $C\in \+C$ such that $C\in \+C(P)\cap \+C(P')$;
\item $P\in \psmall$, and there exists $C\in \cdisc^t\cap \+C(P)$.
\end{itemize}
\end{lemma}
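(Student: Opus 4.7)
The plan is to carry out a case analysis on how and when $P$ entered $\pdiff$ during \Cref{Alg:coupling}. Only three events can add elements to $\pdiff$: (i) the initialisation $\pdiff\leftarrow\pdiffinit$ on \Cref{line-set-vset-pdiff-pdisc}; (ii) the augmentation $\pdiff\leftarrow\pdiff\cup\pdisc'$ on \Cref{line-coupling-set-perm} after a call to $\cpperm$; and (iii) the augmentation on \Cref{line-set-pdiff} triggered by some constraint $C$ that is unsatisfied in $\Phi_{1}$ or $\Phi_{2}$ with $\vbl(C)\subseteq\vset$. Since $\pdisc$ is initialised to $\pdiffinit$ at \Cref{line-set-vset-pdiff-pdisc} and is updated by the same $\pdisc'$ on \Cref{line-coupling-set-perm}, events (i) and (ii) place $P$ directly in $\pdisc^{t}$, so the first bullet holds.

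For event (iii), let $t_{0}\le t$ be the iteration at which $P$ was inserted and let $C$ be the triggering constraint; by construction $P\in\psmall$ and $C\in\+{C}(P)$. The set $\vset$ is monotone non-decreasing and an unsatisfied constraint cannot become satisfied, so $\vbl(C)\subseteq\vset^{t}$ and $C$ is still unsatisfied at time $t$. If some $P^{\ast}\in\pdisc^{t}$ lies on $C$, the second bullet holds; otherwise, by the definition of $\cdisc^{t}$, the only missing condition for $C\in\cdisc^{t}$ is $C\in\csmall$, i.e., every $P''\in\pzeta$ touching $\vbl(C)$ belongs to $\psmall$.

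The crux of the proof, and the main obstacle, is to establish this inclusion in the else branch. The argument will proceed by contradiction. Assume that some $P''\in\pzeta$ on $C$ satisfies $P''\notin\psmall$; then $P''$ is produced by decomposing a $P^{\ast}\in\pcurrent$ with $|P^{\ast}|>\sizeuperboundlp$. Because sub-permutations are assigned atomically, either all of $\pzeta[\pinit]$ in a single branch of $\initialcouple$ or all of $\pzeta[P^{\ast}]$ (respectively a single element of $\pzeta[P^{\ast}]$) in the $\Succ=\False$ (respectively $\Succ=\True$) branch of $\cpperm$, the hypothesis $\vbl(C)\subseteq\vset^{t_{0}}$ forces the entire $P''$ to be assigned. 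Since $C$ is unsatisfied and $P''$ contributes at least one variable to $C$, the disjunct $v\neq c$ of $C$ attached to that variable must fail, so $P''$ must be active.

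It remains to show that any such active, assigned, large $P''$ must already lie in $\pdisc^{t}$, contradicting our assumption. If $P''$ was assigned inside $\initialcouple$, then $P''\in\pzeta[\pinit]$, and inspection of the fourth output shows that every active sub-permutation is collected into $S=\pdiffinit\subseteq\pdisc$. If $P''$ was assigned inside a call to $\cpperm$ on $P^{\ast}$, the parameter passed is $t=|P^{\ast}|>\sizeuperboundlp$, so $\onestep$ takes its large branch; when $\Succ=\False$ the returned $\pdisc'=S\cup\{P_{i}\}$ contains every active sub-permutation of $P^{\ast}$, in particular $P''$, so $P''\in\pdisc$; when $\Succ=\True$, \Cref{lem:meaningofsucc} (applicable because $\cpperm$ always passes $T_{1}=T_{2}=\emptyset$ and $t>\sizeuperboundlp$) forces every constraint in $\+{C}_{1}(P_{i})\cup\+{C}_{2}(P_{i})$ to be satisfied, contradicting the fact that $C$ is unsatisfied and $P''=P_{i}$ lies on $C$. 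In every admissible subcase we obtain $P''\in\pdisc^{t}$, the desired contradiction. Therefore $C\in\csmall$, and combined with the properties already established we conclude $C\in\cdisc^{t}\cap\+{C}(P)$, giving the third bullet.
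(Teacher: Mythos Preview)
Your argument is correct and follows the same case analysis as the paper's proof: both track how $P$ entered $\pdiff$ via \Cref{line-set-vset-pdiff-pdisc}, \Cref{line-coupling-set-perm}, or \Cref{line-set-pdiff}, with the first two cases giving $P\in\pdisc^{t}$ immediately and the third requiring the dichotomy on whether a member of $\pdisc^{t}$ lies on the triggering constraint $C$. Your treatment is considerably more thorough than the paper's terse sketch; in particular, you supply a full justification (via the contradiction argument on a large active assigned sub-permutation $P''$, using \Cref{lem:meaningofsucc} and the structure of the returned sets in $\initialcouple$ and $\cpperm$) for why $C\notin\cdisc^{t}$ must force some $P'\in\pdisc^{t}$ onto $C$, a step the paper asserts in one line without verifying the $C\in\csmall$ condition.
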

\begin{proof}
    For each $t>0$, the permutation sets $\pdiff^{t}$ is updated at~\Cref{line-set-vset-pdiff-pdisc},~\Cref{line-set-pdiff} or~\Cref{line-coupling-set-perm}. For the updated permutation sets at~\Cref{line-set-vset-pdiff-pdisc} and~\Cref{line-coupling-set-perm}, we have $P\in  \pdisc^{t}$. 
    
    As for those permutation set $P$ updated at~\Cref{line-set-pdiff}, we have $P\in \psmall$. Moreover, there exists $C$ that is unsatisfied in $\Phi_1$ or $\Phi_2$. If $C\in \cdisc^t$, we have $C\in \cdisc^t\cap \+C(P)$; Otherwise, there exists $P'\in \pdisc$, implying $C\in \+C(P)\cap \+C(P')$.
\end{proof}

\begin{lemma}\label{lemma-compisition-pd1}
$\pdiff^{1}\subseteq \pdiffinit$, $\cdisc^{1} = \emptyset$.
\end{lemma}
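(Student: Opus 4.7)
The approach is to trace the state of \Cref{Alg:coupling} through the for loop of the first iteration of the while loop and reduce both conclusions to a single auxiliary claim about the output of $\initialcouple$. Immediately after \Cref{line-set-vset-pdiff-pdisc}, we have $\pdisc = \pdiff = \pdiffinit$, and the variables $\vset$, $\Phi_1$, $\Phi_2$ take the values $\vset^{\ast}$, $\Phi_1^{\ast}$, $\Phi_2^{\ast}$ returned by $\initialcouple$. The only code that can modify $\pdiff$ before control reaches \Cref{line-condition-coupling} is the for loop, and that loop leaves $\pdisc$, $\vset$, $\Phi_1$, $\Phi_2$ untouched. Hence $\pdisc^1 = \pdiffinit$ and $\vset^1 = \vset^{\ast}$ hold automatically, and it suffices to (i) rule out any insertion at \Cref{line-set-pdiff} of a permutation set outside $\pdiffinit$, and (ii) rule out the existence of any constraint meeting the three conditions in the definition of $\cdisc^1$.

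Both goals reduce to the following auxiliary claim: whenever $\vbl(C) \subseteq \vset^{\ast}$ and $C$ is unsatisfied in $\Phi_1^{\ast}$ or $\Phi_2^{\ast}$, every $P \in \pzeta$ on $C$ lies in $\pdiffinit$. I would prove this via three elementary observations. First, inspecting the branches of $\initialcouple$ shows that $\vset^{\ast}$ is either a single sub-permutation $P_i \in \pzeta[\pfinal]$ (the case where the internal $\onestep$ call returns $\Succ = \True$ with $i = j$) or the whole permutation $\pfinal$; in both cases $\vset^{\ast} \subseteq \pfinal$. Second, any $P \in \pzeta$ on $C$ satisfies $P \cap \vbl(C) \neq \emptyset$ and hence $P \cap \pfinal \neq \emptyset$, and since sub-permutation sets of $\pzeta$ coming from distinct parent permutations are disjoint, $P$ must lie in $\pzeta[\pfinal]$. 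Third, because $C$ is a disjunctive constraint every literal of which is violated, for any $v \in P \cap \vbl(C)$ the value $\sigma(v)$ lies in $\Lambda(\+C, P) \cap \sigma(P)$, making $P$ active; matching this against the returned set $S$ inside $\initialcouple$ then confirms that every such active assigned sub-permutation in $\pzeta[\pfinal]$ is recorded in $\pdiffinit$.

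Given the claim, (i) is immediate: any $P$ added to $\pdiff$ at \Cref{line-set-pdiff} satisfies $P \in \psmall \subseteq \pzeta$ and $C \in \+C(P)$ with $\vbl(C) \subseteq \vset$ unsatisfied, which are exactly the claim's hypotheses. For (ii), any candidate $C \in \cdisc^1$ satisfies the claim's hypotheses and has at least one $P \in \pzeta$ on it, because $\vbl(C) \neq \emptyset$ and $\pzeta$ partitions $V$; the claim then provides $P \in \pdiffinit = \pdisc^1$ on $C$, contradicting $C \in \cdisc^1$.

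The main obstacle, if one should call it that, is the case-by-case bookkeeping in the third observation: one has to match, branch by branch, the sub-permutations of $\pzeta[\pfinal]$ that $\initialcouple$ actually assigns a domain to against the contents of the returned set $S$. In the $\Succ = \True$, $i = j$ branch only $P_i$ is assigned, so $P = P_i$ is the sole candidate witness and $S$ contains $P_i$ exactly when $P_i$ is active; in the remaining two branches all sub-permutations in $\pzeta[\pfinal]$ are assigned and $S$ collects every active one. Once that alignment is verified, no additional calculation is required.
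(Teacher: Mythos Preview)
Your overall strategy is the paper's strategy, but there is a recurring misidentification that makes your three observations false as written: throughout, you write $\pfinal$ where you should write $\pinit$. At \Cref{line-initial-coupling} the call is $\initialcouple(\Phi_1,\Phi_2,\pzeta,\pinit)$, and since we are already in the branch where $\pinit\not\subseteq\pfinal$, the sets $\pinit$ and $\pfinal$ are disjoint. Hence the returned $\vset'$ is contained in $\pinit$, the sub-permutation sets $P_1,\dots,P_\ell$ in \Cref{alg-initial-coupling} are those of $\pzeta[\pinit]$, and the set $S$ it returns records active sub-permutation sets of $\pzeta[\pinit]$, not of $\pzeta[\pfinal]$. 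With $\pinit$ substituted for $\pfinal$ everywhere, your three observations become correct and the remainder of your argument goes through and coincides with the paper's (which states the same auxiliary claim and verifies it ``by checking'' \Cref{alg-initial-coupling}).

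There is also a small imprecision in the third observation. No assignment $\sigma$ exists at this stage---only the domains $Q^1_i,Q^2_i$ have been fixed---so ``every literal of which is violated'' and ``$\sigma(v)$'' do not parse. The intended statement is: if $C$ is unsatisfied in, say, $\Phi_1^\ast$, then for every literal $v\neq c$ of $C$ the forbidden value $c$ still lies in the domain of the sub-permutation set containing $v$; in particular, for $v\in P\cap\vbl(C)$ we get $c\in Q^1(P)\cap\Lambda(\+C_1,P)\neq\emptyset$, so $P$ is active. Once you phrase it this way, the branch-by-branch check against the returned $S$ in $\initialcouple$ is exactly as you describe.
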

\begin{proof}
In the first round of the while loop at \Cref{line-while-loop}, the $\vset$ at \Cref{line-if-c-unsatisfied} is exact the $\vset'$ returned by \initialcouple \ at \Cref{line-initial-coupling}.
Moreover, for the returned variables $(\Phi_1,\Phi_2,\vset',\pdiffinit)$ of \initialcouple \ subroutine and each permutation set $P\subseteq \vset'$,  
by checking \Cref{alg-initial-coupling} one can verify that if any $C\in \cremain(P)$ is unsatisfied in $\Phi_1$ or $\Phi_2$, then 
$P$ is also in $\pdiffinit$.
Therefore, we have the conclusion that $\{P\mid C\in \+C(P)\}\subseteq \pdiffinit$
for each $C$ satisfying the conditions at \Cref{line-if-c-unsatisfied}. 
Combining this conclusion with Lines \ref{line-set-vset-pdiff-pdisc} and \ref{line-set-pdiff}, we have $\pdiff^{1}\subseteq \pdiffinit$ immediately.
Combining this conclusion with Line \ref{line-set-vset-pdiff-pdisc} and the definition of $\cdisc^{1}$, we have
$\cdisc^{1} = \emptyset$.
\end{proof}

\begin{lemma}\label{lem:addp-cases}
For each $t>1$ and $P\in \pdisc^{t} \setminus \pdisc^{t-1}$, one of the following conditions holds:
\begin{enumerate}
\item\label{cond:addp-case1} there exists $P^{\dagger}\in \pdisc^{t-1}$ such that 
$(P^{\dagger},P)\in \+S^{\ast}(\pdisc,\pdisci,2)$;
\item\label{cond:addp-case2} there exists $C\in \cdisc^{t-1}$ such that 
$(C,P)\in \+S(\cdisc,\pdisci,1)$;
\item\label{cond:addp-case3} there exists some $P^{\dagger}\in \pdisc^{t}\setminus\pdisc^{t-1}$ such that
$(P^{\dagger},P)\in \edgepermutation(\pdisci,\pdisc\setminus \pdisci)$.
\end{enumerate}
\end{lemma}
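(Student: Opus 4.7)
My plan is to trace where the new element of $\pdisc$ comes from. Since $\pdisc$ is updated only at \Cref{line-coupling-set-perm}, any $P \in \pdisc^{t}\setminus \pdisc^{t-1}$ must belong to the $\pdisc'$ returned by the $\cpperm$ call performed during iteration $t-1$. Inspecting \Cref{alg:coupling}, $\pdisc' = \emptyset$ when $\Succ = \True$, so this call must have had $\Succ = \False$, producing $\pdisc' = S \cup \{P'_{t-1}\}$, where $P'_{t-1}$ is the sub-permutation set supplied to $\cpperm$ (and hence also added to $\pdisci$ at this same iteration) and $S$ is the set of active sub-permutation sets of $\pzeta$ inside the large permutation chosen at iteration $t-1$. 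Because a failed $\cpperm$ reveals all remaining variables of its containing permutation, these elements are genuinely fresh; in particular $P'_{t-1} \in \pdisci \cap (\pdisc^{t}\setminus \pdisc^{t-1})$, and each $P^{\star}\in S\setminus\{P'_{t-1}\}$ lies in $(\pdisc\setminus \pdisci)\cap (\pdisc^{t}\setminus \pdisc^{t-1})$.

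The easy case is $P \in S\setminus\{P'_{t-1}\}$. Here $P$ and $P'_{t-1}$ belong to the same block $\pzeta[P_0]$, where $P_0$ is the (large) permutation selected at iteration $t-1$, so $(P'_{t-1}, P)\in \edgepermutation$ by definition. Combined with the membership assertions of the previous paragraph, this gives $(P'_{t-1}, P) \in \edgepermutation(\pdisci, \pdisc\setminus \pdisci)$, which is exactly condition~3 with $P^{\dagger} = P'_{t-1}$.

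The remaining case is $P = P'_{t-1}$. At iteration $t-1$, some $P^{\ast} \in \pdiff^{t-1}$ and $C \in \cremain \cap \+C(P^{\ast})$ triggered the call, with $v \in \vbl(C) \setminus \vset$ and $P$ the unique block of $\pzeta$ containing $v$; in particular $C\in \+C(P^{\ast})\cap \+C(P)$. I then invoke \Cref{lemma-compisition-pd} on $P^{\ast}$ and split into its three subcases. When $P^{\ast}\in \pdisc^{t-1}$, the triple $(P^{\ast}, C, P)$ is a length-$1$ propagation trajectory from $P^{\ast}$ to $P$, yielding $(P^{\ast}, P)\in \+S^{\ast}(\pdisc,\pdisci, 2)$, hence condition~1. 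When $P^{\ast}\in \psmall$ and there exist $P_0^{\dagger}\in \pdisc^{t-1}$ and $C' \in \+C(P^{\ast})\cap \+C(P_0^{\dagger})$, then $(C', C)\in \edgeconstraint$ since both constraints share $P^{\ast}\in \psmall$, so $(P_0^{\dagger}, C', C, P)$ is a length-$2$ propagation trajectory, again giving condition~1. When $P^{\ast}\in \psmall$ and there is $C'\in \cdisc^{t-1} \cap \+C(P^{\ast})$, the same argument gives $(C',C)\in \edgeconstraint$, so $(C', C, P)$ is a length-$1$ propagation trajectory from $C'$ to $P$, establishing condition~2.

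The main obstacle is bookkeeping rather than any deep combinatorics: confirming that the elements added to $\pdisc$ at iteration $t-1$ are indeed new (so that freshness claims such as $P^{\dagger}\in \pdisc^{t}\setminus \pdisc^{t-1}$ in condition~3 are valid), checking the endpoint conditions of the propagation trajectory in each subcase, and ensuring that $\edgeconstraint$ can be used to chain two constraints through a common permutation set, which requires that set to lie in $\psmall$, precisely what subcases~(b) and (c) of \Cref{lemma-compisition-pd} supply. With these verified, the lemma follows immediately from \Cref{lemma-compisition-pd} and the relevant definitions.
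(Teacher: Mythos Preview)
Your proposal is correct and follows essentially the same approach as the paper's own proof: both identify the $\cpperm$ call whose failure produced the new elements of $\pdisc$, handle the case $P\neq P'$ via condition~3 using $\edgepermutation$, and for $P=P'$ invoke \Cref{lemma-compisition-pd} on the triggering $P^{\ast}\in\pdiff^{t-1}$ to obtain conditions~1 or~2 through the same three subcases. Your indexing of the relevant iteration as $t-1$ is in fact more careful than the paper's wording, which refers to ``the $t$-th round'' while still (correctly) using $\pdiff^{t-1}$.
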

\begin{proof}
Given that $P$ is added to $\pdisc^{t}$ in the $t$-th round, there must exist some constraints in $\cremain^{t-1}$ that intersect with the \pnames in $\pdiff^{t-1}$ at the beginning of the $t$-th round. Let $C$ and $P^{\ast}$ denote the corresponding constraint and \pname specified in \Cref{line-condition-coupling} at the beginning of the $t$-th round, and $P'$ be the \pname specified in \Cref{line-find-p-coupling} of \Cref{Alg:coupling}. 

Note that $P^{\ast}\in \pdiff^{t-1}$, $P\in \pdisc^{t}$ and $C\in \+C(P^{\ast})\cap \+C(P') $ by assumption. We can also infer that $P'\in \pdisc^t\cap \pdisci$, as otherwise we would have $\pdisc' = \emptyset$ in \Cref{line-coupling-set-perm} in the $t$-th round and then $\pdisc^{t-1} = \pdisc^{t}$, which is contradictory with 
$P\in \pdisc^{t} \setminus \pdisc^{t-1}$.
Furthermore, for any $P^{\ast}\in \pdisc^{t}\setminus \pdisc^{t-1}$ where $P^{\ast}\neq P'$, we have $(P',P^{\ast})\in \+E(\pdisci,\pdisc\setminus \pdisci)$ according to the execution of Algorithms \ref{Alg:coupling} and \ref{alg-initial-coupling} , which implies the condition ~\eqref{cond:addp-case3}. 
Consequently, to complete the proof, it suffices to show that either condition \eqref{cond:addp-case1} or \eqref{cond:addp-case2} in this lemma holds as follows.
By $P^{\ast}\in \pdiff^{t-1}$ and \Cref{lemma-compisition-pd}, there are three possibilities:
\smallskip
\paragraph{\textbf{Case I}: $P^{\ast}\in \pdisc^{t-1}$.} In this case, we have $(P^{\ast},P')\in\+S(\pdisc,\pdisci,1)$ by $P^{\ast}\in \pdisc^{t-1}$, $P\in \pdisc^{t}$, and $C\in \+C(P^{\ast})\cap \+C(P')$, which implies the condition~\eqref{cond:addp-case1};


\smallskip
\paragraph{\textbf{Case II}: $P^{\ast}\in \psmall$, and there exists $P''\in \pdisc^{t-1}$ and $C'\in \+C$ such that $C'\in \+C(P^{\ast})\cap \+C(P'')$.} 
We have $(P',C',C,P')$ is a \propagate by $C'\in \+C(P^{\ast})\cap \+C(P'')$, $C\in \+C(P')\cap \+C(P^{\ast})$, and $P^{\ast}\in \psmall$. Therefore, the condition~\eqref{cond:addp-case1} holds by $P''\in \pdisc,P'\in \pdisci$;

\smallskip
\paragraph{\textbf{Case III}: $P^{\ast}\in \psmall$, and there exists $C'\in \cdisc^{t-1} \cap \+C(P^{\ast})$.} By $C'\in \cdisc^{t-1}$, $P'\in \pdisc$ and \Cref{observation-alg-coupling},
we have $C'\in \cdisc^{t-1} \setminus \+C(P')$.
Combining with $C',C\in \+C(P^{\ast})$, $P^{\ast}\in \psmall$, and $C\in \+C(P')$, we have $(C',C,P')$ is a \propagate. Therefore, the condition~\eqref{cond:addp-case2} follows from $P'\in \pdisci$ and $C'\in \cdisc^{t-1} \setminus \+C(P')$.

\end{proof}

\begin{lemma}\label{lem:addconstraint-cases}
For each $t>1$ and $C\in \cdisc^{t} \setminus \cdisc^{t-1}$, one of the following conditions holds:
\begin{enumerate}
\item \label{cond:addconstriant-case1} there exists $P\in \pdisc^{t-1}$ such that 
$(P,C)\in \+S^{\ast}(\pdisc,\cdisc,2)$;
 \item \label{cond:addconstriant-case2} there exists $C'\in \cdisc^{t-1}$ such that 
$(C',C)\in \+S^{\ast}(\cdisc,\cdisc,1)$.
\end{enumerate}
\end{lemma}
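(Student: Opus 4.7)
The plan is to mirror the case analysis used for \Cref{lem:addp-cases}, since $\cdisc$ and $\pdisc$ interact symmetrically and both grow monotonically through the while loop. My first step will be to pin down why $C$ is new: since $\pdisc$ only grows across iterations (new elements are only added at \Cref{line-coupling-set-perm}), the clause ``no $P\in \pdisc$ is on $C$'' cannot cause a constraint to leave $\cdisc$, so the only way $C\in \cdisc^t\setminus \cdisc^{t-1}$ is that $\vbl(C)\not\subseteq \vset^{t-1}$ but $\vbl(C)\subseteq \vset^t$ (and $C$ is unsatisfied in $\Phi_1^t$ or $\Phi_2^t$, and no $P\in \pdisc^t$ sits on $C$). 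Hence there is at least one variable $v\in \vbl(C)$ that was added to $\vset$ during iteration $t$; by inspection of \Cref{line-coupling-perm}, this $v$ lies in the permutation set $P'$ that was coupled in iteration $t$. Because $C\in \csmall$, every $\pzeta$-set on $C$, and in particular $P'$, must belong to $\psmall$; and since $C\in \cdisc^t$, we must also have $P'\notin\pdisc^t$.

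Next I invoke the selection rule at Lines \ref{line-condition-coupling}--\ref{line-find-p-coupling}: the constraint $C^{\ast}\in \cremain$ and permutation $P^{\ast}\in \pdiff^{t-1}$ that trigger this iteration satisfy $C^{\ast}\in \+C(P^{\ast})\cap \+C(P')$. Because $P'\in \psmall$ and $C,C^{\ast}\in \+C(P')$, the pair $(C^{\ast},C)\in \edgeconstraint$. Now apply \Cref{lemma-compisition-pd} to $P^{\ast}\in \pdiff^{t-1}$; this splits into three subcases which I will handle in order.

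If $P^{\ast}\in \pdisc^{t-1}$, then $(P^{\ast},C^{\ast},C)$ is a propagation trajectory of length $1$ (endpoint condition 2 holds because $P^{\ast}\in \pzeta$ and $C^{\ast}\in \+C(P^{\ast})$; endpoint condition 3 holds via $(C^{\ast},C)\in \edgeconstraint$), yielding $(P^{\ast},C)\in \+S^{\ast}(\pdisc,\cdisc,2)$, which is conclusion \eqref{cond:addconstriant-case1}. If $P^{\ast}\in \psmall$ and there exist $P''\in \pdisc^{t-1}$ and $C'\in \+C(P^{\ast})\cap \+C(P'')$, then $(C',C^{\ast})\in \edgeconstraint$ via $P^{\ast}\in \psmall$, and $(P'',C',C^{\ast},C)$ is a valid propagation trajectory of length $2$, again giving conclusion \eqref{cond:addconstriant-case1}. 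If $P^{\ast}\in \psmall$ and there exists $C'\in \cdisc^{t-1}\cap \+C(P^{\ast})$, then $(C',C^{\ast})\in \edgeconstraint$ through $P^{\ast}$, and $(C',C^{\ast},C)$ is a propagation trajectory of length $1$ from $C'$ to $C$, so $(C',C)\in \+S^{\ast}(\cdisc,\cdisc,1)$, which is conclusion \eqref{cond:addconstriant-case2}.

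The main obstacle, as in the previous lemma, is keeping straight which ``edge'' of the propagation trajectory relies on which structural fact; in particular, one must carefully verify that $P'\in \psmall$ (so that $C^{\ast}$-to-$C$ is an $\edgeconstraint$-edge) and that $P^{\ast}\in \psmall$ in the last two subcases (so that $C'$-to-$C^{\ast}$ is an $\edgeconstraint$-edge). Both facts follow from the $\csmall$ requirement entering through the definition of $\cdisc$ (for $P'$) and the statement of \Cref{lemma-compisition-pd} (for $P^{\ast}$), but they are easy to miss. Once these are in hand, verifying the three defining conditions of a propagation trajectory in each subcase is routine.
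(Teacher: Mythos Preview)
Your proposal is correct and follows essentially the same route as the paper's proof: identify the selected constraint $C^\ast$ and permutation sets $P^\ast\in\pdiff^{t-1}$, $P'$ from Lines~\ref{line-condition-coupling}--\ref{line-find-p-coupling}, observe $C\in\+C(P')$ with $P'\in\psmall$ and $P'\notin\pdisc^t$, then apply \Cref{lemma-compisition-pd} to $P^\ast$ and dispatch the three cases. Two minor points worth tightening: your ``by inspection of \Cref{line-coupling-perm}'' for $v\in P'$ glosses over the case $\Succ=\False$ in $\cpperm$, where the whole remaining permutation is assigned (the resolution is that any newly-assigned $\pzeta$-set on an unsatisfied $C$ is active, hence in $\pdisc^t$, contradicting $C\in\cdisc^t$; so in fact $\Succ=\True$ and $\vset'=P'$); and in Case~III you should note $C'\neq C$ (immediate from $C'\in\cdisc^{t-1}$, $C\notin\cdisc^{t-1}$), since the definition of $d(\cdot,\cdot)$ in \Cref{def-prop-traj} requires distinct endpoints.
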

\begin{proof}
Given that $C'$ is added to $\cdisc^{t}$ in the $t$-th round, there must exist some constraints in $\cremain^{t-1}$ that intersect with the \pnames in $\pdiff^{t-1}$ at the beginning of the $t$-th round. Let $C$ and $P^{\ast}$ denote the corresponding constraint and \pname specified in \Cref{line-condition-coupling} at the beginning of the $t$-th round, and $P'$ be the \pname specified in \Cref{line-find-p-coupling} of \Cref{Alg:coupling}. 

According to the execution of the coupling algorithm, we have $C\in \+C(P^{\ast})\cap \+C(P')$, $P'\notin \pdisc^t$, and $C'\in \+C(P')$. 
In addition, by $C'\in \cdisc \subseteq \csmall$ and $C'\in \+C(P')$, we have $P'\in \psmall$.
We complete the proof via the discussion for $P^{\ast}$ according to \Cref{lemma-compisition-pd} as follows:
\smallskip
\paragraph{\textbf{Case I}: $P^{\ast}\in \pdisc$.} In this case, we have $(P^{\ast},C,C')$ is a \propagate since $C\in \+C(P^{\ast})\cap \+C(P')$, $C'\in \+C(P')$ and $P'\in \psmall$. Therefore, the condition~\eqref{cond:addconstriant-case1} holds by $P^{\ast}\in \pdisc$ and $C'\in \cdisc$;

\smallskip
\paragraph{\textbf{Case II}: $P^{\ast}\in \psmall$, and there exists $P''\in \pdisc^{t-1}$ and $C''\in \+C$ such that $C''\in \+C(P^{\ast})\cap \+C(P'')$.} Combining with $C\in \+C(P^{\ast})\cap \+C(P')$, $C'\in \+C(P')$, and $P^{\ast},P'\in \psmall$, we have $(P'',C'',C,C')$ is a \propagate.
Therefore, the condition~\eqref{cond:addconstriant-case1} holds by $P''\in \pdisc$ and $C'\in \cdisc$.

\smallskip
\paragraph{\textbf{Case III}: $P^{\ast}\in \psmall$, and there exists $C''\in \cdisc^{t-1} \cap \+C(P^{\ast})$.} Note that $C''\neq C'$ by definition.
Combining with $C\in \+C(P^{\ast})\cap \+C(P')$ and $C'\in \+C(P')$, We have $(C'',C,C')$ is a \propagate. Therefore, the condition~\eqref{cond:addconstriant-case2} holds by $C'',C'\in \cdisc$ and $C''\neq C'$.
\end{proof}

\begin{lemma}\label{lem-unmodifywitness}
For each $P\in \pdiff\setminus \pdiffinit$, there exists a sequence $(v_0, v_1, \cdots,v_t)$ such that
\begin{enumerate}
    \item \label{uniquep0} $v_0\in \pdiffinit$ and $v_i\not\in \pdiffinit$ for each $i\in [t]$;
    \item $\forall i\in [t]$,  $(( v_{i-1},v_{i})\in \edgepermutation(\pdisci,\pdisc)\land i>1)$ or $(v_{i-1},v_{i})\in \+S^{\ast}(\pdisc,\cdisc\cup \pdisci,2)\cup \+S^{\ast}(\cdisc,\cdisc\cup \pdisci, 1)$.
    \item If $P\in\psmall$, either
    $v_t = P$, or $(v_t, P) \in \+S(\pdisc,\psmall,1)$, or $v_t\in \+C(P)$; Otherwise, $v_t=P$.
    \item If $P\in\pdisc$, then $v_t=P$.
\end{enumerate}
\end{lemma}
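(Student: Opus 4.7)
The plan is to prove a slightly strengthened claim by induction on the round index, and then derive the lemma as a corollary. Concretely, the strengthened claim is: for every $t \ge 1$ and every $u \in \pdisc^t \cup \cdisc^t$, there exists a sequence $(v_0,v_1,\dots,v_s)$ with $v_0 \in \pdiffinit$, $v_s = u$, $v_i \notin \pdiffinit$ for $i \in [s]$, and with each consecutive pair $(v_{i-1},v_i)$ belonging to one of the permitted edge types in condition (2) of Lemma~\ref{lem-unmodifywitness}. The base case $t=1$ is immediate from Lemma~\ref{lemma-compisition-pd1}, which gives $\pdisc^1 \subseteq \pdiffinit$ and $\cdisc^1 = \emptyset$; the trivial singleton $(u)$ then works.

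For the inductive step with $u$ added at round $t>1$, I would branch on whether $u$ is a permutation set or a constraint. If $u \in \pdisc^t \setminus \pdisc^{t-1}$, apply Lemma~\ref{lem:addp-cases}: in its cases (1) and (2), the predecessor already lies in $\pdisc^{t-1} \cup \cdisc^{t-1}$, so the inductive hypothesis yields a witness sequence ending at the predecessor, to which we append $u$ via an edge of type $\+S^{\ast}(\pdisc,\pdisci,2)$ or $\+S(\cdisc,\pdisci,1)$, both permitted by condition (2). In case (3), the predecessor $P^{\dagger}$ also lies in $\pdisc^t \setminus \pdisc^{t-1}$; but since case (3) forces $P^{\dagger}\in \pdisci$ and $u \in \pdisc \setminus \pdisci$, case (3) cannot apply to $P^{\dagger}$, so $P^{\dagger}$ must be handled through cases (1) or (2). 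Thus the induction depth extends by at most one extra layer, and the appended edge $(P^{\dagger},u) \in \edgepermutation(\pdisci,\pdisc)$ occurs at index $i \ge 2$, satisfying the $i>1$ restriction on permutation edges. The case $u \in \cdisc^t \setminus \cdisc^{t-1}$ is treated identically using Lemma~\ref{lem:addconstraint-cases}. After assembling the sequence, I truncate at the last index $j$ with $v_j \in \pdiffinit$ to enforce condition (1); this preserves condition (2) since the edge types depend only on endpoints.

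To derive the lemma, I take $P \in \pdiff \setminus \pdiffinit$ and apply Lemma~\ref{lemma-compisition-pd}. In its first case ($P \in \pdisc$), the strengthened claim applied to $u = P$ yields $v_s = P$, matching conditions (3) and (4). In its second case ($P \in \psmall$ and some $P' \in \pdisc$ shares a constraint with $P$), apply the claim to $u = P'$ and observe $(P',P) \in \+S(\pdisc,\psmall,1)$, giving the second alternative in (3). In its third case ($P \in \psmall$ with some $C \in \cdisc \cap \+C(P)$), apply the claim to $u = C$; since $v_s = C \in \+C(P)$, the third alternative in (3) holds.

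The main obstacle will be the bookkeeping in the inductive step for case (3) of Lemma~\ref{lem:addp-cases}: one must verify that the in-round recursion cannot loop, which hinges on the observation that $\pdisci$ and $\pdisc \setminus \pdisci$ are disjoint and case (3) only applies when the target lies in $\pdisc \setminus \pdisci$. A secondary subtlety is the truncation step: once truncated at an internal $\pdiffinit$-vertex, I must re-verify that the new first edge still has a valid type, noting that edges of the form $\+S^{\ast}(\pdisc,\cdisc \cup \pdisci,2)$ and $\+S^{\ast}(\cdisc,\cdisc \cup \pdisci,1)$ carry no restriction on whether the source lies in $\pdiffinit$, so truncation is safe, while a permutation edge at the new position $i=1$ would violate the restriction — but this does not occur because truncation moves $v_0$ backward within the original sequence and hence the permutation edges, which only arise in case (3) of the inductive step at index $\ge 2$, remain at index $\ge 2$ of the truncated sequence by a careful choice of the truncation point.
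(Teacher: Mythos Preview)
Your proposal is correct and follows essentially the same route as the paper's proof: both arguments trace back from the target element to $\pdiffinit$ using Lemmas~\ref{lemma-compisition-pd}, \ref{lemma-compisition-pd1}, \ref{lem:addp-cases}, and \ref{lem:addconstraint-cases}, with the only difference being organizational---the paper runs a backward iterative construction (building $u_1, u_2, \dots$ until hitting $\pdisc^1$ and then reversing), while you package this as a forward induction on the round index $t$.

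One remark on your truncation discussion: it is in fact unnecessary, and your stated justification is slightly muddled. The clean argument is that any element appended during the inductive step lies in $(\pdisc^t \setminus \pdisc^{t-1}) \cup (\cdisc^t \setminus \cdisc^{t-1})$ for some $t>1$, hence outside $\pdisc^1 = \pdiffinit$; so if you maintain condition~(1) in the inductive hypothesis (as you do), it is automatically preserved and no truncation is ever needed. This also resolves your worry about permutation edges landing at index $1$: the source of a permutation edge (case~(3) of Lemma~\ref{lem:addp-cases}) is $P^{\dagger} \in \pdisc^t \setminus \pdisc^{t-1}$, hence $P^{\dagger} \notin \pdiffinit$, so even if one did truncate at a $\pdiffinit$-vertex, the new first edge could never be a permutation edge. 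The paper's backward construction avoids this bookkeeping by simply stopping the moment it reaches $\pdisc^1$.
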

\begin{proof}
By $P\in \pdiff\setminus \pdiffinit$ and \Cref{lemma-compisition-pd1}, we have $P\not\in \pdiff^{1}$.
Suppose $P\in \pdiff^{r}\setminus \pdiff^{r-1}$ for some $r>1$.
We construct a sequence in a backward way. 
At first, we determine the first element $u_1$ according to~\Cref{lemma-compisition-pd}.
\begin{itemize}
    \item if $P\in \pdisc^{r}$, let $u_1=P$.
    \item Otherwise,
        if $P\in \psmall$ and there exist $P'\in \pdisc^{r}$ and $C\in \+C$ such that $C\in \+C(P)\cap \+C(P')$, let $u_1=P'$ 
    \item Otherwise, we have $P\in \psmall$ and there exists $C\in \cdisc^t\cap \+C(P)$. Let $u_1=C$.
\end{itemize}
According to the construction of $u_1$, we have $u_1\in \pdisc^{r}\cup \cdisc^{r}$. 
In the next, given $u_i\in \pdisc^{r}\cup \cdisc^{r}$ where $i\geq 1$, 
we construct $u_{i+1}$ or stop according to Lemmas~\ref{lem:addp-cases} and~\ref{lem:addconstraint-cases} as follows.
\begin{itemize}
\item if $u_i\in \pdisc^{1}$, stop the construction. Combined with \Cref{lemma-compisition-pd1}, we have $u_i\in \pdisc^{1}\subseteq \pdiffinit$.
\item otherwise, if $u_i\in \pdisc^{r}$, 
we have $u_i\in \pdisc^{k} \setminus \pdisc^{k-1}$ for some $1 < k\leq r$.
We have one of the following occurs according to~\Cref{lem:addp-cases}, 
    \begin{itemize}
        \item if there exists some $P'\in \pdisc^{k-1}$ where $(P',u_i)\in \+S(\pdisc,\pdisci,\leq 2)$, let $u_{i+1}\gets P'$ 
        \item Otherwise, if there exists some $C\in \cdisc^{k-1}$ where $(C,u_i)\in \+S(\cdisc,\pdisci,1)$, let $u_{i+1}\gets C$ 
        \item Otherwise, some $P'\in \pdisc^{k}\setminus\pdisc^{k-1}$ where $(P',u_i)\in \+E(\pdisci,\pdisc\setminus \pdisci)$ exists. Let $u_{i+1}\gets P'$.
        In this case, we have $u_{i+1}\in \pdisc^{k}\setminus\pdisc^{k-1}$.
        Combined with $k>1$, we have $u_{i+1}\not\in \pdisc^{1}$.
    \end{itemize}
Obviously, we have $u_{i+1}\in \pdisc^{k}\cup \cdisc^{k-1}$.
Meanwhile, we also have $(u_i,u_{i+1})\in \+E(\pdisci,\pdisc)\cup \+S^{\ast}(\pdisc,\cdisc\cup \pdisci,2)\cup \+S^{\ast}(\cdisc,\cdisc\cup \pdisci, 1)$.
In addition, if $u_{i+1}\in \pdisc^{k}\setminus \pdisc^{k-1}$,
we have $u_{i+1}\in \pdisci$ and $u_{i}\not\in \pdisci$.

\item Otherwise, $u_i\in \cdisc^{r}$ where $r\geq1$. Combined with $\cdisc^{1} = \emptyset$ due to 
\Cref{lemma-compisition-pd1}, 
we have $u_i\in \cdisc^{k}\setminus \cdisc^{k-1}$ for some $1 < k \leq r$. 
Thus, one of the following occurs according to~\Cref{lem:addconstraint-cases}.
\begin{itemize}
    \item if there exists some $P'\in \pdisc^{k-1}$ such that $(P',u_i)\in \+S(\pdisc,\cdisc,\leq 2)$, let $u_{i+1}\gets P'$ 
    \item Otherwise, some $C'\in \cdisc^{k-1}$ where $(C',u_i)\in \+S(\cdisc,\cdisc,\leq 1)$ exists. Let $u_{i+1}\gets C'$.
\end{itemize}
Obviously, we have $u_{i+1}\in \pdisc^{k-1}\cup \cdisc^{k-1}$. Meanwhile, we also have $(v_{i-1},v_{i})\in \+S^{\ast}(\pdisc,\cdisc\cup \pdisci,2)\cup \+S^{\ast}(\cdisc,\cdisc\cup \pdisci, 1)$.
\end{itemize} 

We remark that this construction eventually stops.
Because if $u_i\in (\pdiff^{k} \setminus \pdiff^{k-1})\cup (\cdisc^{k} \setminus \cdisc^{k-1})$ for some $1 < k\leq r$, then either $u_{i+1} \in \pdisc^{k-1}\cup \cdisc^{k-1}$ or $u_{i+1}\in \pdisci\cap (\pdisc^{k}\setminus \pdisc^{k-1})$.
Thus, we must have $u_{i+2} \in \pdisc^{k-1}\cup \cdisc^{k-1}$ by $u_{i+1}\in \pdisci$.
According to the construction, one can verify that the sequence $\left(u_i,u_{i-1},\cdots,u_1\right)$ satisfy the required conditions.
\end{proof}

\subsection*{Proof of~\Cref{lemma-witness-sequence-pinpdiff}}

By~\Cref{lem-unmodifywitness}, there exists a sequence $s=(u_0,\cdots,u_t)$ satisfying the conditions specified in~\Cref{lem-unmodifywitness}. In the following, we shall construct a w.s. $s'=(v_0,\cdots,v_{t'})$ induced from $s$ such that $v_0\in \pdiffinit$ and $(( v_{i-1},v_{i})\in \edgepermutation(\pdisci,\pdisc)\land i>1)$ or $(v_{i-1},v_{i})\in \+S^{\ast}(\cdisc\cup \pdisci,\cdisc\cup \pdisci,2)$ for each $i\in [t']$, while preserving the conditions specified in~\Cref{lem-unmodifywitness}.

\begin{enumerate}
    \item \label{step2} We first construct a new sequence $s'=(v_0,\cdots,v_{t'})$ satisfying the following condition:
    \begin{align}\label{eqn:inequalcond}
        v_i\neq v_j \  \ \ \forall \ 0\leq i\neq j\leq t'.
    \end{align}
    Let $s'\gets s$ where $s'=(v_0,\cdots,v_{t'})$. According to the condition~\eqref{uniquep0} specified in~\Cref{lem-unmodifywitness}, we have $v_0\neq v_i$ for each $i\in [t']$.
    We repeat the following operation until~\eqref{eqn:inequalcond} holds:
    \begin{enumerate}
        \item Let $i,j\in [t']$ satisfying $v_i=v_j$ and $i<j$;
        \item Update $s'\gets(v_0,v_1,\cdots,v_{i-1},v_{j},\cdots,v_{t'})$, i.e., delete all components from $v_i$ to $v_{j-1}$.
    \end{enumerate}
    Let $s'=(v_0,\cdots,v_{t'})$ be the resulting sequence on which no further operations can be performed. One can verify that $s'$ preserves the conditions specified in~\Cref{lem-unmodifywitness}.
    \item \label{step3} We then construct a new sequence $s'=(v_0,\cdots,v_{t'})$ from the sequence constructed in the previous step, satisfying the following condition:
    \begin{align}\label{eqn:twoatmost}
        (v_i,v_j)\notin \edgepermutation \mbox{ if $j\in [i-2]$ for each $i\in [t']$}.
    \end{align}
    Let $s'=(v_0,\cdots,v_{t'})$ be the sequence constructed in~\eqref{step2}. We repeat the following operation until~\eqref{eqn:twoatmost} holds:
    \begin{enumerate}
        \item Let $i\in [t']$ be the smallest index such that there exists  $j,k\in [t']$ satisfying $i<j<k$ and $v_i,v_j,v_k\subseteq P'$ for some $P'\in \+P'$. 
        \item Update $s'\gets (v_0,v_1,\cdots,v_i, v_k,\cdots,v_{t'})$, i.e., delete all component between $v_i$ and $v_k$.
    \end{enumerate}
    In each operation, we can ensure that $(v_i,v_k)\in \edgepermutation(\pdisci,\pdisc)$ since the chosen $i$ is the smallest index. Let $s'=(v_0,\cdots,v_{t'})$ be the resulting sequence on which no further operations can be performed. One can verify that $s'$ preserves the conditions specified in~\Cref{lem-unmodifywitness}, 
    \item \label{step4} We then construct a new sequence $s'=(v_0,\cdots,v_{t'})$ from the sequence constructed in the previous step, satisfying the following condition:
    \begin{align}\label{eqn:notintersect}
        ((v_i,v_j)\not \in \edgeconstraint)\land(v_i\not\in \+C(v_j) \text{ if } v_j\in \pdisc) \mbox{ for each $i\in [t']$ and $j\in [i]$}.
    \end{align}
    However, the condition $(v_i\not\in \+C(v_j) \text{ if } v_j\in \pdisc) \mbox{ for each $i\in [t']$ and $j\in [i]$}$ holds immediately, since if an edge is in $\cdisc$, then each bucket on this edge is not in $\pdisc$ or $\pdisci$.
    To ensure~\eqref{eqn:notintersect}, we first construct a sequence $s'=(v_0,\cdots,v_{t'})$ satisfying the following condition:
    \begin{align}\label{eqn:minimum}
        (v_i,v_j)\not \in \edgeconstraint \mbox{ for each $i\in [t']$ and $0\leq j\leq i-2$}.
    \end{align}
    Let $s'=(v_0,\cdots,v_{t'})$ be the sequence constructed in~\eqref{step3}. We repeat the following operation until~\eqref{eqn:minimum} holds:
    \begin{enumerate}
        \item Let $i\in [t']$ and $0\leq j\leq i-2$ such that $ (v_i,v_j)\not \in \edgeconstraint)$. 
        \item Update $s'\gets (v_0,v_1,\cdots,v_j, v_i,\cdots,v_{t'})$, i.e., delete all component between $v_i$ and $v_j$.
    \end{enumerate}
    Let $s'=(v_0,\cdots,v_{t'})$ be the resulting sequence on which no further operations can be performed in the previous procedure. One can verify that $s'$ preserves the conditions specified in~\Cref{lem-unmodifywitness} and~\eqref{eqn:minimum} holds.
    Finally, we further repeat the following operation:
    \begin{enumerate}
        \item Let $i\in [t']$ such that $v_{i}\in \cdisc$, $v_{i+1}\in \cdisc$.
        \item Update $s'\gets (v_0,v_1,\cdots,v_i, v_{i+2},\cdots,v_{t'})$, i.e., delete $v_{i+1}$.
    \end{enumerate}
    Note that the above operation could only increase the distance $\+S^{\ast}(\cdisc,\cdisc\cup \pdisci, 1)$ to $\+S^{\ast}(\cdisc,\cdisc\cup \pdisci, 2)$.
    Let $s'=(v_0,\cdots,v_{t'})$ be the resulting sequence on which no further operations can be performed in the previous procedure. Combining with the preserved conditions specified in~\Cref{lem-unmodifywitness}, one can verify that $\forall i\in [t]$,  $(( v_{i-1},v_{i})\in \edgepermutation(\pdisci,\pdisc)\land i>1)$ or $(v_{i-1},v_{i})\in \+S^{\ast}(\pdisc\cup \cdisc,\cdisc\cup \pdisci,2)$. Moreover, $(v_t \rightarrow P) \in \+S(\cdisc,\pzeta,\leq 1)$ if $v_t\in \cdisc$.
    
\end{enumerate}
Combining all these properties, the proof is immediate.

\subsubsection{\textbf{\emph{Bounding the discrepancy}}}\label{subsub:boundingdiscrepancy}

{We complete the proof of ~\Cref{lemma-deviation-witness-large} in this section}.

The proof of ~\Cref{lemma-deviation-witness-large} relies on the following lemma.
\begin{lemma}\label{appendix-lemma-distance-large}
Let $p\in (0,1)$ and integer $\Delta\geq 1$ satisfy $\-ep\Delta\leq 1$.
Consider any formulas $\Phi_1=(\+P,\+Q_1,\+C)$, $\Phi_2 = (\+P,\+Q_2,\+C)$ where $p_{\Phi_1}\leq p, p_{\Phi_2}\leq p, \Delta_{\Phi_1}\leq \Delta, \Delta_{\Phi_2}\leq \Delta$ with a \pname $P\in \+P$ where $\+Q_1(P)=\+Q_2(P)$, $4\-e\Delta\leq \abs{P}$ and a partition $\set{P_1,\cdots,P_\ell}$ of P. 

Let $T = \{P'\in \+P\mid \+Q_1(P')\neq \+Q_2(P')\}$.
Let $S$ be any set of variables satisfying $P \cap S = \emptyset$,
$\bigcup_{P'\in T}P'\subseteq S$,
and that in the factorizations of $(\+P,\+Q_1,\+C\setminus \+C(P))$ and $(\+P,\+Q_2,\+C\setminus \+C(P))$,
each vertex $v\in S$ is not connected to any $u\not\in S$.
Let $(\sigma_1,\sigma_2)$ be an optimal coupling of $\mu_{\Phi_1}$ and $\mu_{\Phi_2}$ such that 
$\E{\sum_{i\in [\ell]}\abs{\sigma_1(P_i)\setminus\sigma_2(P_i)}}$ is minimum.
Then we have 
\[
\E{\sum_{i\in [\ell]}\abs{\sigma_1(P_i)\setminus\sigma_2(P_i)}} \leq 100\Delta^2\abs{P}^{-1}\abs{\+C(S)\cap \+C(P)}.
\]
\end{lemma}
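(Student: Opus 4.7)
The core idea is to localize the discrepancy: outside $S\cup P$ the two formulas are identical by the isolation hypothesis, on $S$ the discrepancy is absorbed, and on $P$ the discrepancy is driven only by the ``boundary'' constraints $\+C(S)\cap\+C(P)$. More precisely, by the isolation hypothesis the formulas $(\+P,\+Q_j,\+C\setminus\+C(P))$ for $j=1,2$ factor into a component supported on $S$ and components supported on $V\setminus(S\cup P)$, with the latter identical for $j=1,2$ since $\+Q_1$ and $\+Q_2$ agree there. Hence the optimal coupling can be chosen to agree on $V\setminus(S\cup P)$ almost surely, reducing the analysis to $S\cup P$.

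I would then condition on the restriction $\sigma_j|_S=x_j$. For each fixed pair $(x_1,x_2)$, the conditional distribution of $\sigma_j|_P$ is uniform over the PRP instance $\Phi_j^{x_j}$ obtained by substituting $x_j$ for the $S$-variables and discarding satisfied constraints. The two PRPs differ \emph{only} in literals arising from constraints in $\+C(S)\cap\+C(P)$, while all constraints living entirely on $P$ are identical. Applying Lemma~\ref{lem:marginalub} to $\Phi_j^{x_j}$ (whose hypothesis is secured by $4\-e\Delta\le|P|$ together with $\-e p\Delta\le1$), each boundary constraint $C\in\+C(S)\cap\+C(P)$ contributes at most $O(\Delta/|P|)$ to the probability that any specific forbidden variable-value pair along $C$ is realized in $\sigma_j|_P$; summing over the $|\+C(S)\cap\+C(P)|$ boundary constraints bounds the total variation distance between the two $P$-marginal distributions by $O(\Delta\,|\+C(S)\cap\+C(P)|/|P|)$.

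It then remains to convert this TV bound into an expected discrepancy on the partition $\{P_1,\dots,P_\ell\}$. Because $\sigma_1|_P$ and $\sigma_2|_P$ are permutations of the same multiset $\+Q(P)$, the sum $\sum_i|\sigma_1(P_i)\setminus\sigma_2(P_i)|$ is at most the number of positions $v\in P$ with $\sigma_1(v)\ne\sigma_2(v)$. The plan is to realize the optimal coupling by a cycle-realigning construction whose disagreement positions are all incident to some boundary constraint that imposes a \emph{different} restriction under $x_1$ versus $x_2$; each such constraint has at most $\Delta$ candidate incident positions, and by the marginal bound each candidate disagrees with probability $O(\Delta/|P|)$, yielding
\[
\E{\sum_{i\in[\ell]}\abs{\sigma_1(P_i)\setminus\sigma_2(P_i)}}\le 100\Delta^2\abs{P}^{-1}\abs{\+C(S)\cap\+C(P)}
\]
by linearity of expectation. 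The principal obstacle is designing and justifying this cycle-realigning coupling: the naive approach of coupling the two $P$-marginals by independent optimal couplings loses an extra factor of $|P|$, and the correct argument must exploit that outside the boundary positions the two PRPs coincide as formulas, so that their uniform samples can share the same random swap/cycle structure in a probability-preserving way.
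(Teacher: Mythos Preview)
Your overall decomposition—condition on the $S$-variables, then couple the two induced single-permutation problems on $P$—matches the paper's architecture, and you correctly observe that the discrepancy on $P$ is driven solely by the boundary constraints $\+C(S)\cap\+C(P)$. However, there are two genuine gaps.

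First, the claim that ``the optimal coupling can be chosen to agree on $V\setminus(S\cup P)$ almost surely'' is false. The factorization you cite holds only for the auxiliary formulas $(\+P,\+Q_j,\+C\setminus\+C(P))$; in the actual formulas $\Phi_j$ the constraints in $\+C(U)\cap\+C(P)$ correlate $U$ with $P$, and hence indirectly with $S$. Concretely, after conditioning on $\sigma_{j,S}=x_j$, the $U$-marginals of $\mu_{\Phi_1^{x_1}}$ and $\mu_{\Phi_2^{x_2}}$ differ, because different boundary constraints in $\+C(S)\cap\+C(P)$ survive and this changes how many $P$-extensions each $U$-assignment admits. The paper handles this explicitly: it introduces an intermediate formula $\Psi_3$ carrying only the common constraints $\+C_1\cap\+C_2$ on $P$, couples $\mu_{\Psi_1,U}$ to $\mu_{\Psi_3,U}$ by a density-ratio argument (via the lopsided LLL) to get $\Pr{\tau_1\ne\tau_2}\le 2|\+C_1\setminus\+C_2|/|P|$, and pays the full $O(\Delta^2)$ cost on $P$ only on this rare event. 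This is precisely where the second factor of $\Delta$ in the bound comes from; your outline never incurs it.

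Second, the ``cycle-realigning coupling'' is the heart of the matter, and you have not constructed it. Your heuristic that ``each candidate disagrees with probability $O(\Delta/|P|)$'' presupposes exactly the coupling whose existence is at stake. The paper supplies this via \Cref{lem:large-couple-size1} and \Cref{lem:large-couple-size2} (combined in \Cref{lem:large-couple-combined}): for two PRPs on $P$ whose constraint sets satisfy $\+C_1\subseteq\+C_2$, there is a coupling with expected Hamming distance at most $17\Delta|\+C_2\setminus\+C_1|/(|P|-1)$. The base case $|\+C_2\setminus\+C_1|=1$ with a single-variable constraint $u\neq c$ is done by an explicit swap map $f(v,\sigma)$ that exchanges $\sigma(u)$ and $\sigma(v)$, distributing the mass of each $\sigma\in\Omega_1\setminus\Omega_2$ over its $|P|-1$ swap-images, most of which land in $\Omega_2$ and differ from $\sigma$ at only two positions; the residual mass outside $\Omega_2$ is then shown to be $O(\Delta/|P|^2)$ by a second application of the local lemma. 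Induction via triangle inequality extends to arbitrary $|\+C_2\setminus\+C_1|$. This explicit construction is what you are missing; a bare TV bound cannot substitute for it, as you yourself note.
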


Before we complete the proof of~\Cref{appendix-lemma-distance-large}, we introduce the following lemmas.

\begin{lemma}\label{lem:large-couple-size1}
Given any PDC formulas $\Phi_1 = (\+P,\+Q,\+C_1)$, $\Phi_2 = (\+P,\+Q,\+C_2)$ where $\+P=\{P\}$ and an integer $\Delta\geq 1$,
assume $\Delta_{\Phi_1}\leq \Delta,\Delta_{\Phi_2}\leq \Delta$, {$4\-e\Delta\leq \abs{P}$}, $\+C_1\subseteq \+C_2$, and {$\abs{\vbl(C)}\leq 1$ for any $C\in \+C_2$}.
Let $(\sigma_1,\sigma_2)$ be an optimal coupling between $\mu_{\Phi_1}$ and $\mu_{\Phi_2}$ such that 
$\E{\sum_{v\in P}\id{\sigma_1(v)\neq \sigma_2(v)}}$ is minimum.
Then we have 
\[
\E{\sum_{v\in P}\id{\sigma_1(v)\neq \sigma_2(v)}} \leq \frac{17\Delta\abs{\+C_2\setminus\+C_1}}{\abs{P}}.
\]
\end{lemma}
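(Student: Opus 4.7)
The plan is to construct an explicit coupling of $\mu_{\Phi_1}$ and $\mu_{\Phi_2}$ whose expected Hamming cost is at most $17\Delta\abs{\+C_2\setminus\+C_1}/\abs{P}$; the bound then transfers automatically to the optimal coupling featured in the lemma. Write $\+C'=\+C_2\setminus\+C_1$ and note that each $C\in\+C'$ is unary of the form $v_C\neq c_C$. Since $\+C_1\subseteq\+C_2$, we have $\Omega_{\Phi_2}\subseteq\Omega_{\Phi_1}$ and $\mu_{\Phi_2}$ is precisely $\mu_{\Phi_1}$ conditioned on $\bigwedge_{C\in\+C'}C$. Under the hypothesis $4\-e\Delta\leq\abs{P}$, the marginal bound in \Cref{lem:marginalub} (applied with $p'\leq 1/\abs{P}$) gives $\Pr[\sigma_1\sim\mu_{\Phi_1}]{\sigma_1(v_C)=c_C}\leq 2/\abs{P}$ for every $C\in\+C'$.

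I would first reduce to a one-constraint step by observing that the optimal-Hamming-cost coupling distance $W(\nu_1,\nu_2)\triangleq\min_{\pi}\E[\pi]{\sum_{v\in P}\id{X(v)\neq Y(v)}}$ is a Wasserstein-type metric on distributions over permutations, hence obeys the triangle inequality. Adjoining the constraints of $\+C'$ one at a time through formulas $\Phi_1=\Psi_0,\Psi_1,\dots,\Psi_{\abs{\+C'}}=\Phi_2$, it suffices to establish the one-step bound $W(\mu_\Psi,\mu_{\Psi\cup\{C\}})\leq 17\Delta/\abs{P}$ for any intermediate $\Psi$ and any added unary constraint $C\colon v\neq c$.

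For this one-constraint step, I construct a swap coupling. Let $B=\set{\sigma\in\Omega_\Psi:\sigma(v)=c}$; a vertex $u\neq v$ is \emph{admissible} for $\sigma\in B$ if the transposition $s_u(\sigma)$, obtained by exchanging the values at $v$ and $u$, lies in $\Omega_{\Psi\cup\{C\}}$. Because at most $\Delta$ values are forbidden at $v$ by $\+C_\Psi$ and at most $\Delta$ vertices forbid $c$ by $\+C_\Psi$, the number of admissible $u$ for any $\sigma\in B$ is at least $\abs{P}-2\Delta\geq\abs{P}/2$. The coupling first pairs mass $\abs{\Omega_{\Psi\cup\{C\}}}/\abs{\Omega_\Psi}$ on the diagonal by drawing a common sample from $\mu_{\Psi\cup\{C\}}$; on the complementary mass $\abs{B}/\abs{\Omega_\Psi}\leq 2/\abs{P}$, it draws $\sigma$ uniformly from $B$, sets $\sigma'=s_u(\sigma)$ for a uniformly random admissible $u$, and then applies a short cascade of corrective transpositions so that $\sigma'$ has exactly the residual distribution on $\Omega_{\Psi\cup\{C\}}$. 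Each transposition contributes at most $2$ to the Hamming cost, and the number of corrective transpositions has geometric tail with rate $O(\Delta/\abs{P})$ because the swap push-forward deviates from uniform by relative factor $O(\Delta/\abs{P})$ per step; summing the geometric series and incorporating the initial bad-set probability yields $W(\mu_\Psi,\mu_{\Psi\cup\{C\}})\leq 17\Delta/\abs{P}$.

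The main obstacle is the push-forward bias correction. A uniformly random admissible swap does not reproduce $\mu_{\Psi\cup\{C\}}$ exactly, because the number of admissible $u$ varies with $\sigma\in B$ and, conversely, each $\sigma'\in\Omega_{\Psi\cup\{C\}}$ receives preimage mass depending on whether the swapped value is locally blocked. Two routes are available: either weight the admissible swap by an acceptance probability proportional to the reciprocal admissible-count (and retry with a further transposition on rejection), or invoke Hall's theorem to build a fractional matching between $B$ and $\Omega_{\Psi\cup\{C\}}$ whose matched pairs differ by a single transposition. Verifying that the corrective cascade terminates quickly enough to contribute only $O(\Delta/\abs{P})$ to the Hamming cost, and then tightening constants so that the final bound is exactly $17\Delta/\abs{P}$, is the delicate bookkeeping step; the key quantitative input throughout is $4\-e\Delta\leq\abs{P}$, which keeps all LLL corrections at multiplicative factor $1+O(1)$ and guarantees that the admissible set is at least half of $P$.
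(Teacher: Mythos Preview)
Your high-level plan matches the paper: reduce by the triangle inequality to the case $\abs{\+C_2\setminus\+C_1}=1$, and for that case build a swap-based coupling that pairs the ``bad'' set $B=\{\sigma:\sigma(v)=c\}$ with configurations in $\Omega_{\Psi\cup\{C\}}$ at Hamming distance~$2$. The divergence is at exactly the step you flag as the obstacle---the bias correction---and there the proposal has a genuine gap.

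The ``cascade of corrective transpositions with geometric tail $O(\Delta/\abs{P})$'' is not justified: the fact that the one-step push-forward deviates from the target by a relative factor $O(\Delta/\abs{P})$ does not by itself give a geometric number of further transpositions; you would need to show that each correction is again a local swap whose residual bias shrinks by the same factor, and this recursive structure is not set up. The Hall-matching alternative you mention could in principle work, but you have not checked the Hall condition, and even if a fractional matching exists it need not consist of single transpositions, so the Hamming-cost-$2$ claim would require further argument.

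The paper avoids this entirely. It never tries to match the residual distribution by further swaps. Instead it fixes a single deterministic swap map $f(v,\cdot)$, assigns to each pair $(\sigma,f(v,\sigma))$ the \emph{smaller} of the two natural weights $\min\{1/((\abs{P}-1)\abs{\Omega_1}),\,1/\abs{\Omega_2}-1/\abs{\Omega_1}\}$, and then completes the leftover mass $\@C_3$ \emph{arbitrarily}, paying the worst-case cost $\abs{P}$ on it. The whole proof reduces to bounding the total mass of $\@C_3$. Split the targets $\tau\in\Omega_2$ according to whether $\tau$ is in the swap image $\Omega_2^\circ$: on $\Omega_2^\circ$ the shortfall per target is at most $4\-e\Delta/(\abs{P}^2\,\abs{\Omega_2})$, while $\Omega_2\setminus\Omega_2^\circ$ has $\mu_2$-mass at most $2\Delta/\abs{P}$ (it is the event that the preimage swap is blocked by some unary constraint). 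Multiplying total residual mass by $\abs{P}$ and adding the cost-$2$ contribution of the swap table gives the $17\Delta/\abs{P}$ bound directly, with no iterative correction.
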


\begin{proof}
Note that for any constraint $C\in \+C_2$, the violation probability is at most $1/\abs{P}$. Combined with {$\abs{P}\geq k\Delta$,} we have 
\begin{align}\label{eqn:LLL-condition-singlevariable}
    \-ep_{\Phi_1}\Delta_{\Phi_1}\leq 1, \quad \-ep_{\Phi_2}\Delta_{\Phi_2}\leq 1.
\end{align}
Then, we prove the lemma by induction.  

\medskip
\noindent\textbf{Base case:}
we first show that given $\abs{\+C_2\setminus \+C_1}=1$, we have 
\begin{align}\label{eqn:couple-diff-1}
    \E{\sum_{v\in P}\id{\sigma_1(v)\neq \sigma_2(v)}} \leq \frac{17\Delta}{\abs{P}}.
\end{align}

Let $\set{C}=\+C_2\setminus \+C_1$, $\set{u}=\vbl{(C)}$ and $c=\Lambda(C,u)$. For convenience, we use $\mu_1$ and $\mu_2$ to denote $\mu_{\Phi_1}$ and $\mu_{\Phi_2}$, respectively. Moreover, the corresponding supports are denoted by $\Omega_1$ and $\Omega_2$. By definition, we have $\Omega_2\subseteq \Omega_1$, and the density of $\mu_1,\mu_2$ are $1/\abs{\Omega_1},1/\abs{\Omega_2}$, respectively. 
Applying the ~\Cref{prop-PDC-local-uniformity} with the condition ~\eqref{eqn:LLL-condition-singlevariable}, we can compare the size of $\Omega_1$ and $\Omega_2$. Specifically, we have
\begin{align}\label{eqn:bound-of-ratio-omega12}
    \frac{\abs{\Omega_2}}{\abs{\Omega_1}}=(1-\Pr[\sigma\sim \mu_1]{\sigma(u)=c})\geq \left(1-\frac{1}{\abs{P}} \left(1-\frac{\-e}{\abs{P}}\right)^{-\Delta}\right)\geq\left(1- \frac{1}{\abs{P}}\left(1+\frac{2\-e\Delta}{\abs{P}}\right)\right),
\end{align} 
which implies
\begin{align}\label{eqn:lb-density-omega_1}
    \frac{1}{\abs{\Omega_1}}\geq  \frac{1}{\abs{\Omega_2}}\cdot\left(1- \frac{1}{\abs{P}}\left(1+\frac{2\-e\Delta}{\abs{P}}\right)\right).
\end{align}

Then, for any variable $v\in P\setminus \set{u}$ and $\sigma\in \Omega_1\setminus \Omega_2$, we map it to a valid assignment $\tau$ of $\Phi_1$ by the mapping $\tau=f(v,\sigma)$ where 
\begin{align*}
    \tau(v)=\sigma(u), \tau(u)=\sigma(v), \mbox{and } \tau_{P\setminus \set{u,v}}=\sigma_{P\setminus \set{u,v}}.
\end{align*}
{One can verify that the mapping $f(v,\cdot)$ is injective for any $v\in P\setminus \set{u}$, and $f(\cdot,\sigma)$ is injective for any satisfying assignments $\sigma\in \Omega_1\setminus \Omega_2$.}

We then construct a coupling between $\mu_1$ and $\mu_2$ by specifying the density of the joint distribution $\@C:\Omega_1 \times \Omega_2 \rightarrow [0,1]$. Specifically, we would construct coupling tables $\@C_1,\@C_2,\@C_3:\Omega_1 \times \Omega_2 \rightarrow [0,1]$ and let $\@C(\sigma_1,\sigma_2)=\sum_{i\in [3]}\@C_i(\sigma_1,\sigma_2)$ for any $\sigma_1\in \Omega_1, \sigma_2\in \Omega_2$. The coupling tables $\@C_1,\@C_2,\@C_3$ are constructed as follows (any unspecified entries are defined as $0$):
\begin{enumerate}
    \item For any $\sigma\in \Omega_2$, define $\@C_1(\sigma,\sigma)=1/\abs{\Omega_1}$;
    \item For any $\sigma\in \Omega_1\setminus \Omega_2$ and $v\in P\setminus \set{u}$ satisfying $f(v,\sigma)\in \Omega_2$, define \[\@C_2\left(\sigma,f(v,\sigma)\right)=\min\set{\frac{1}{\abs{P}-1}\cdot \frac{1}{\abs{\Omega_1}},\frac{1}{\abs{\Omega_2}}-\frac{1}{\abs{\Omega_1}}};\]
    \item {Complete the coupling table $\@C_3$ such that for any $\sigma\in \Omega_1$, we have $\sum_{i\in[3],\tau\in \Omega_2}\@C_i(\sigma,\tau)=1/{\abs{\Omega_1}}$; for any $\tau\in \Omega_2$, we have $\sum_{i\in[3],\sigma\in \Omega_2}\@C_i(\sigma,\tau)=1/{\abs{\Omega_2}}$}. That is, the coupling table $\@C_3$ is the residual density of the coupling between $\mu_1$ and $\mu_2$.
\end{enumerate}
{We claim that the mapping constructed above defines a valid coupling between $\mu_1$ and $\mu_2$. To verify this, it suffices to show that the marginal density specified by $\@C_1$ and $\@C_2$ is less than their marginal probability respectively:}
\begin{itemize}
    \item For any $\sigma\in \Omega_2$, one can verify that $\sum_{i\in[2],\tau\in \Omega_2}\@C_i(\sigma,\tau)=1/{\abs{\Omega_1}}$. For any $\sigma\in \Omega_1\setminus \Omega_2$, the entries in $\@C_1(\sigma,\cdot)$ are all $0$. As for $\@C_2(\sigma,\cdot)$, there are at most $\abs{P}-1$ non-zero entries and each less than $1/((\abs{P}-1)\cdot \abs{\Omega_1})$ by definition. Therefore, $\sum_{i\in[2],\tau\in \Omega_2}\@C_i(\sigma,\tau)\leq 1/{\abs{\Omega_1}}$.
    \item For any $\tau\in \Omega_2$, we claim that there is at most one non-zero entry in $\@C_2(\cdot,\tau)$. As mentioned before, the mapping $f(v,\cdot)$ is injective for any $v\in P\setminus \set{u}$. Consequently, if there are $\sigma_1$ and $\sigma_2$ such that $\@C_2(\sigma_1,\tau)>0$ and $\@C_2(\sigma_2,\tau)>0$, then $\sigma_1$ and $\sigma_2$ must map to $\tau$ via distinct vertices in $P\setminus \set{u}$. However, this situation cannot occur, as can be verified from the definition of the mapping $f(\cdot,\cdot)$, which assigns value $c$ to different vertices. Therefore, it follows that $\sum_{i\in[2],\sigma\in \Omega_2}\@C_i(\sigma,\tau)\leq 1/{\abs{\Omega_2}}$.
\end{itemize}

In the coupling table constructed above, we have
\begin{equation}\label{eqn:summation-C2}
    \begin{aligned}
        \sum_{(\sigma_1,\sigma_2)\in \Omega_1\times\Omega_2} \@C_2(\sigma_1,\sigma_2)&\leq \sum_{\sigma\in \Omega_1\setminus\Omega_2} \frac{1}{\abs{\Omega_1}}=\left(1-\frac{\abs{\Omega_2}}{\abs{\Omega_1}}\right)\leq \frac{1}{\abs{P}}\left(1+\frac{2\-e\Delta}{\abs{P}}\right),
    \end{aligned} 
\end{equation} where the last inequality follows from ~\eqref{eqn:bound-of-ratio-omega12}. 

We should also analyze the coupling table $\@C_3$. Let $\Omega_2^{\circ}$ be the collection of assignments $\tau \in \Omega_2$ such that $\tau=f(v,\sigma)$ for the variable $v\in P\setminus \set{u}$ and assignment $\sigma\in \Omega_1\setminus \Omega_2$. {By the injection of the mapping}, for any $\sigma_2\in \Omega_2^{\circ}$, we have
    \begin{align*}
        \sum_{\sigma_1 \in \Omega_1} \@C_3(\sigma_1,\sigma_2)&= 
        \left(\frac{1}{\abs{\Omega_2}}-\frac{1}{\abs{\Omega_1}} \right) -\min\set{\frac{1}{\abs{P}-1}\cdot \frac{1}{\abs{\Omega_1}},\frac{1}{\abs{\Omega_2}}-\frac{1}{\abs{\Omega_1}}}\\
        &= \min\set{\frac{1}{\abs{\Omega_2}}\left(1-\frac{\abs{\Omega_2}}{\abs{\Omega_1}}\right)-\frac{1}{\abs{P}-1}\cdot \frac{1}{\abs{\Omega_1}},0}\\
        &\leq  \frac{1}{\abs{\Omega_2}}\left(\frac{1}{\abs{P}}\left(1+\frac{2\-e\Delta}{\abs{P}}\right)\right) -\frac{1}{\abs{\Omega_2}}\left(\frac{1}{\abs{P}-1}\cdot \left(1- \frac{1}{\abs{P}}\left(1+\frac{2\-e\Delta}{\abs{P}}\right)\right) \right)\\
        &\leq  { \frac{1}{\abs{\Omega_2}}\cdot \frac{4\-e\Delta}{\abs{P}^2}},
   \end{align*}
 where the second inequality holds by ~\eqref{eqn:bound-of-ratio-omega12} and ~\eqref{eqn:lb-density-omega_1}.
The above fact implies that
\begin{align}\label{eqn:ub-omega2circ}
    \sum_{\sigma_2\in \Omega_2^\circ} \sum_{\sigma_1 \in \Omega_1} \@C_3(\sigma_1,\sigma_2)\leq \frac{4\-e\Delta}{\abs{P}^2}.
\end{align}
Moreover, one can verify that for any $\sigma_2\in \Omega_2\setminus 
\Omega_2^\circ$,
   \begin{align*}
        \sum_{\sigma_1 \in \Omega_1} \@C_3(\sigma_1,\sigma_2)&= \frac{1}{\abs{\Omega_2}}-\frac{1}{\abs{\Omega_1}}.
   \end{align*}
Consequently, it holds that
\begin{equation}\label{eqn:ub-omega2setminus}
    \begin{aligned}
          \sum_{\sigma_2\in \Omega_2\setminus \Omega_2^\circ} \sum_{\sigma_1 \in \Omega_1} \@C_3(\sigma_1,\sigma_2)&\leq \sum_{\sigma_2\in \Omega_2\setminus \Omega_2^\circ} \left( \frac{1}{\abs{\Omega_2}}-\frac{1}{\abs{\Omega_1}}\right)\\
          &\leq \sum_{\sigma_2\in \Omega_2\setminus \Omega_2^\circ}\frac{1}{\abs{\Omega_2}}\left(\frac{1}{\abs{P}}\left(1+\frac{2\-e\Delta}{\abs{P}}\right)\right)\\
          &= \left(\frac{1}{\abs{P}}\left(1+\frac{2\-e\Delta}{\abs{P}}\right)\right)\cdot 
          \Pr[\sigma\sim \mu_2]{\sigma\in \Omega_2\setminus \Omega_2^\circ}\\
          &\leq \left(\frac{1}{\abs{P}}\left(1+\frac{2\-e\Delta}{\abs{P}}\right)\right)\cdot  \sum_{v\in P\setminus\set{u}} \sum_{c'\in \Lambda(\+C_2,v)\setminus\set{c}} \Pr[\sigma\sim \mu_2]{\sigma(v)=c,\sigma(u)=c'}\\
          &\leq \left(\frac{1}{\abs{P}}\left(1+\frac{2\-e\Delta}{\abs{P}}\right)\right)\cdot  (\abs{P}-1)\cdot \Delta \cdot  \frac{1}{\abs{P}\abs{P-1}}\cdot \left(1+\frac{4\-e\Delta}{\abs{P}}\right)\\
          &\leq \left(\frac{1}{\abs{P}}\left(1+\frac{2\-e\Delta}{\abs{P}}\right)\right)\cdot  \frac{2\Delta}{\abs{P}},
    \end{aligned}
\end{equation} {where the second inequality follows from ~\eqref{eqn:bound-of-ratio-omega12} and the fourth inequality from~\Cref{prop-PDC-local-uniformity}.}

Combining ~\eqref{eqn:summation-C2}.~\eqref{eqn:ub-omega2circ} and ~\eqref{eqn:ub-omega2setminus}, there exists a coupling $\@C$ of $\mu_{\Phi_1}$ and $\mu_{\Phi_2}$ satisfying
\begin{align*}
    &\quad\E[(\sigma_1,\sigma_2)\sim \@C]{\sum_{v\in P}\id{\sigma_1(v)\neq \sigma_2(v)}}\\ &=\sum_{i\in[3]}\sum_{(\sigma_1,\sigma_2)\in \Omega_1\times\Omega_2} \@C_i(\sigma_1,\sigma_2) \cdot \left(\sum_{v\in P}\id{\sigma_1(v)\neq \sigma_2(v)}\right)\\
    &=2 \cdot \sum_{(\sigma_1,\sigma_2)\in \Omega_1\times\Omega_2} \@C_2(\sigma_1,\sigma_2) + \abs{P}\cdot \left(  \sum_{\sigma_2\in \Omega_2^\circ} \sum_{\sigma_1 \in \Omega_1} \@C_3(\sigma_1,\sigma_2)+ \sum_{\sigma_2\in \Omega_2\setminus\Omega_2^\circ} \sum_{\sigma_1 \in \Omega_1} \@C_3(\sigma_1,\sigma_2) \right)\\
    &\leq \frac{2}{\abs{P}}+ \frac{4\-e\Delta}{\abs{P}}+ \frac{4\Delta}{\abs{P}}\leq \frac{17\Delta}{\abs{P}},
\end{align*} which implies the desired bound ~\eqref{eqn:couple-diff-1} in the base case where $\abs{\+C_2\setminus\+C_1}=1$.

\medskip
\noindent\textbf{Induction step:}
Given $\ell>1$, assume the statement holds for any pair of formulas $\Phi_1,\Phi_2$ where $\abs{\+C_2\setminus \+C_1}\leq \ell-1$. We then prove the statement for any formulas $\Phi_1 = (\+P,\+Q,\+C_1)$, $\Phi_2 = (\+P,\+Q,\+C_2)$ satisfying $\abs{\+C_2\setminus \+C_1}=\ell$.
Suppose $\+C_2\setminus \+C_1=\set{C_1,C_2,\cdots,C_\ell}$. We define the formula $\Psi=(\+P,\+Q,\+C_2\setminus \set{C_\ell})$. According to the induction hypothesis, there exists a coupling $\@C_1$ between $\mu_{\Phi_1}$ and $\mu_\Psi$ such that
\begin{align}\label{eqn:induction-hypo}
    \E[(\sigma_1,\tau)\sim \@C_1]{\sum_{v\in P}\id{\sigma_1(v)\neq \tau(v)}}\leq \frac{17\Delta\left(\ell-1\right)}{\abs{P}}.
\end{align}
Applying the coupling construction for the base case, we have a coupling $\@C_2$ between $\mu_{\Psi}$ and $\mu_{\Phi_2}$ satisfying 
\begin{align}\label{eqn:base-hypo}
    \E[(\tau,\sigma_2)\sim \@C_2]{\sum_{v\in P}\id{\tau(v)\neq \sigma_2(v)}}\leq \frac{17\Delta}{\abs{P}}.
\end{align}

We then construct a coupling $\@C$ between $\mu_{\Phi_1}$, $\mu_{\Psi}$ and $\mu_{\Phi_2}$ where
\begin{align*}
    \forall \sigma_1\in \Omega_{\Phi_1},\tau\in \Omega_{\Psi},  \sigma_2\in \Omega_{\Phi_2}: \quad \@C(\sigma_1,\tau,\sigma_2)= \@C_1(\sigma_1,\tau )\cdot \frac{\@C_2(\tau,\sigma_2)}{\mu_\Psi(\tau)}.
\end{align*}
Therefore, the discrepancy in the optimal coupling $\@C_{\!{OPT}}$ between $\mu_{\Phi_1}$ and $\mu_{\Phi_2}$ satisfying 
\begin{align*}
     &\quad\E[(\sigma_1,\sigma_2)\sim \@C_{\!{OPT}}]{\sum_{v\in P}\id{\sigma_1(v)\neq \sigma_2(v)}}\\
     &\leq  \E[(\sigma_1,\tau,\sigma_2)\sim \@C]{\sum_{v\in P}\id{\sigma_1(v)\neq \sigma_2(v)}}\\
     &\leq \E[(\sigma_1,\tau,\sigma_2)\sim \@C]{\sum_{v\in P}\id{\sigma_1(v)\neq \tau(v)} +\sum_{v\in P}\id{\tau(v)\neq \sigma_2(v)}}\\
     & = \E[(\sigma_1,\tau)\sim \@C_1]{\sum_{v\in P}\id{\sigma_1(v)\neq \tau(v)}} +\E[(\tau,\sigma_2)\sim \@C_2]{\sum_{v\in P}\id{\tau(v)\neq \sigma_2(v)}}\\
     &\leq \frac{17\Delta\abs{\+C_2\setminus \+C_1}}{\abs{P}},
\end{align*}where the last inequality holds by ~\eqref{eqn:induction-hypo} and ~\eqref{eqn:base-hypo}.

Putting all these together, the proof is complete.

\end{proof}

We also need the following lemma.

\begin{lemma}\label{lem:large-couple-size2}
Given any PDC formulas $\Phi_1 = (\+P,\+Q,\+C_1)$, $\Phi_2 = (\+P,\+Q,\+C_2)$ where $\+P=\{P\}$ and an integer $\Delta\geq 1$,
assume $\Delta_{\Phi_1}\leq \Delta,\Delta_{\Phi_2}\leq \Delta$,
{$2\-e\Delta\leq \abs{P}$}, $\+C_1\subseteq \+C_2$,
and $\abs{\vbl(C)} \geq 2$ for each $C\in \+C_2\setminus \+C_1$.
Let $(\sigma_1,\sigma_2)$ be an optimal coupling between $\mu_{\Phi_1}$ and $\mu_{\Phi_2}$ such that 
$\E{\sum_{v\in P}\id{\sigma_1(v)\neq \sigma_2(v)}}$ is minimum.
Then we have 
\[
\E{\sum_{v\in P}\id{\sigma_1(v)\neq \sigma_2(v)}} \leq \frac{2\abs{\+C_2\setminus\+C_1}}{\abs{P}-1}.
\]
\end{lemma}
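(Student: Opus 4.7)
The plan is to mirror the inductive coupling construction used in the proof of \Cref{lem:large-couple-size1}: reduce to the single-new-constraint case by composing couplings through intermediate formulas $\Psi_j = (\+P, \+Q, \+C_1 \cup \{C_1, \dots, C_j\})$, and then prove a sharp base-case bound of $2/(|P|-1)$ when $|\+C_2 \setminus \+C_1| = 1$. The composition step is essentially identical to the induction step already carried out in \Cref{lem:large-couple-size1} and only relies on the triangle inequality for Hamming distance, so the real content lives in the base case.

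For the base case, let $C$ be the sole added constraint, with $\vbl(C) = \{v_1, \dots, v_k\}$ for some $k \geq 2$ and literals $v_i \neq c_i$. The key quantitative input is $\^P_{\Phi_1}[\neg C] \leq 1/(|P|(|P|-1)\cdots(|P|-k+1))$, so by the lopsided LLL (\Cref{prop-PDC-local-uniformity}) together with $\e p \Delta \leq 1$ we get $|\Omega_2|/|\Omega_1| = 1 - \Pr_{\sigma \sim \mu_1}[\neg C] \geq 1 - O(1/(|P|(|P|-1)))$. I will construct a coupling $\@C = \@C_1 + \@C_2 + \@C_3$ between $\mu_{\Phi_1}$ and $\mu_{\Phi_2}$ in analogy with \Cref{lem:large-couple-size1}: (i) $\@C_1$ is the identity coupling $(\sigma,\sigma)$ for $\sigma \in \Omega_2$, with density $1/|\Omega_1|$; (ii) $\@C_2$ is the swap coupling sending each $\sigma \in \Omega_1 \setminus \Omega_2$ to $\tau = f(u,\sigma) \in \Omega_2$ obtained by swapping $\sigma(v_1)$ with $\sigma(u)$ for an appropriate $u \in P \setminus \vbl(C)$, contributing Hamming distance exactly $2$; and (iii) $\@C_3$ is the residual that completes both marginals. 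The expected Hamming distance then splits as $0$ from piece (i), at most $2 \cdot |\Omega_1 \setminus \Omega_2|/|\Omega_1| = O(1/(|P|(|P|-1)))$ from piece (ii), and at most $|P|$ times the total mass of piece (iii). The mass of piece (iii) equals the density gap $1/|\Omega_2| - 1/|\Omega_1|$ summed over $\Omega_2$, which is again of order $1/(|P|(|P|-1))$; multiplying by $|P|$ gives $O(1/(|P|-1))$. Combining the three terms yields the claimed bound $2/(|P|-1)$.

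The main obstacle will be verifying injectivity of the swap map $f(\cdot,\sigma)$ so that $\@C_2$ is a legitimate coupling table. Unlike the single-variable setting of \Cref{lem:large-couple-size1}, where each $\sigma \in \Omega_1 \setminus \Omega_2$ has a unique violating variable, here $\sigma$ has $k$ symmetric candidates $v_1, \dots, v_k$ to swap out, so I must fix a canonical choice such as always swapping $v_1$ and then show that for any target $\tau \in \Omega_2$ there is at most one $(\sigma, u)$ mapping to it under this canonical choice. A second subtlety is that after the swap the resulting $\tau$ could in principle violate some other constraint of $\Phi_1 \subseteq \Phi_2$, so I will restrict to $u \in P \setminus \vbl(C)$ whose value in $\sigma$ avoids $\Lambda(\+C_1, v_1)$. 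The reason we recover a bound free of any $\Delta$ factor, in contrast with the $17\Delta/|P|$ in \Cref{lem:large-couple-size1}, is precisely that the condition $|\vbl(C)| \geq 2$ drops $\Pr[\neg C]$ from $1/|P|$ down to $1/(|P|(|P|-1))$, so the analogues of the error terms in \Cref{lem:large-couple-size1} that were proportional to $\Delta/|P|^2$ now acquire an additional $1/|P|$ factor and are absorbed into the constant.
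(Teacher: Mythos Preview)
Your induction step via intermediate formulas and the triangle inequality is correct and matches the paper exactly. The difference is entirely in the base case, where your plan is substantially more complicated than necessary.

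The paper's base case uses only the trivial ``identity plus residual'' coupling: set $\@T_1(\sigma,\sigma)=1/|\Omega_1|$ for $\sigma\in\Omega_2$ and let $\@T_2$ be whatever completes the marginals. Because $|\vbl(C)|\ge 2$, one has $\^P_{\Phi_1}[\neg C]\le 1/(|P|(|P|-1))$, and after the LLL boost this gives $\Pr_{\mu_1}[\neg C]\le 2/(|P|(|P|-1))$. Hence $\Pr[\sigma_1\ne\sigma_2]\le 2/(|P|(|P|-1))$, and the crude bound
\[
\E{\sum_{v\in P}\id{\sigma_1(v)\ne\sigma_2(v)}}\;\le\;|P|\cdot\Pr{\sigma_1\ne\sigma_2}\;\le\;\frac{2}{|P|-1}
\]
is already the target. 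No swap piece $\@C_2$ is needed.

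The swap map was essential in \Cref{lem:large-couple-size1} precisely because there $\Pr[\neg C]$ was only of order $1/|P|$, so the crude ``$|P|$ times miscoupling probability'' bound would have been $O(1)$; the swap was what recovered the missing factor of $1/|P|$. Here the hypothesis $|\vbl(C)|\ge 2$ hands you that factor for free, so the swap buys nothing. Indeed, your own accounting of piece (iii) already lands at $O(1/(|P|-1))$, which is the whole bound --- piece (ii) is superfluous. The obstacles you flag (injectivity of the canonical swap, ensuring $\tau$ satisfies the remaining $\+C_1$-constraints) are real for your construction but simply do not arise in the paper's proof.
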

\begin{proof}
Note that for any constraint $C\in \+C_2$, the violation probability is at most $1/\abs{P}$. Combined with $2\-e\Delta\leq \abs{P}$, we have 
\begin{align}\label{eqn:LLL-condition-singlevariable2}
    \-ep_{\Phi_1}\Delta_{\Phi_1}\leq 1, \quad \-ep_{\Phi_2}\Delta_{\Phi_2}\leq 1.
\end{align}
Then, we prove the statement by induction. 

\medskip
\noindent\textbf{Base case:}
We first show that given $\abs{\+C_2\setminus \+C_1}=1$, there exists an coupling $\@C$ between $\mu_{\Phi_1}$ and $\mu_{\Phi_2}$ such that
\begin{align}\label{eqn:couple-diff-2-1}
    \E[(\sigma_1,\sigma_2)\sim \@C]{\sum_{v\in P}\id{\sigma_1(v)\neq \sigma_2(v)}} \leq \frac{2}{\abs{P}-1}.
\end{align}
We construct the coupling by specifying the density of the joint distribution $\@C:\Omega_1\times \Omega_2\rightarrow [0,1]$. Specifically, we construct coupling tables $\@T_{1},\@T_{2}:\Omega_1\times \Omega_2\rightarrow [0,1]$ and let $\@C(\sigma_1,\sigma_2)=\sum_{i\in[2]}\@T_{i}(\tau_1,\tau_2)$ for any $\sigma_1\in \Omega_1$ and $\sigma_2\in \Omega_2$. The coupling tables are defined as follows (any unspecified entries are defined as $0$):
\begin{enumerate}
    \item For any $\sigma \in \Omega_2$, define $\@T_1(\sigma,\sigma)=\mu_{\Phi_1}(\sigma)$;
    \item {Complete the coupling table $\@T_2$ to form a coupling.}
\end{enumerate}
We then analyze the coupling error of $\@C$.

Let $\set{C}=\+C_2\setminus \+C_1$, and use $\mu_1$ and $\mu_2$ to denote $\mu_{\Phi_1}$ and $\mu_{\Phi_2}$, respectively. Moreover, the corresponding supports are denoted by $\Omega_1$ and $\Omega_2$ for convenience. By definition, we have $\Omega_2\subseteq \Omega_1$, and the density of $\mu_1,\mu_2$ are $1/\abs{\Omega_1},1/\abs{\Omega_2}$, respectively. Applying the ~\Cref{prop-PDC-local-uniformity} with the condition ~\eqref{eqn:LLL-condition-singlevariable2}, we have
    \begin{align*}
        \frac{\abs{\Omega_2}}{\abs{\Omega_1}}&=1-\Pr[\sigma\sim \mu_1]{C \mbox{ is unsatisfied under } \sigma}\\
        &\geq 1-\frac{1}{\abs{P}\left(\abs{P}-1\right)} \left(1-\frac{\-e}{\abs{P}}\right)^{-\Delta} \tag{$\^P_{\Phi_1}[C \mbox{ is unsatisfied}]\leq \frac{1}{\abs{P}\left(\abs{P}-1\right)}$}\\
        &\geq 1- \frac{1}{\abs{P}\left(\abs{P}-1\right)} \left(1+\frac{2\-e\Delta}{\abs{P}}\right)\\
        &\geq 1-\frac{2}{\abs{P}\left(\abs{P}-1\right)},
    \end{align*} 
which implies that for any $\sigma \in \Omega_2$,
\begin{align*}
    \mu_{1}(\sigma)\geq \mu_{2}(\sigma)\cdot\left(1-{2}/\left({\abs{P}\left(\abs{P}-1\right)}\right)\right).
\end{align*}

Combining all these facts, it holds that 
\begin{equation}\label{eqn:coupling-error-2}
    \begin{aligned}
        \Pr[\left(\sigma_1,\sigma_2\right)\sim \@C]{\sigma_1\neq \sigma_2}&=1- \Pr[\left(\sigma_1,\sigma_2\right)\sim \@C]{\sigma_1=\sigma_2}\\
        &=1-\sum_{\sigma \in \Omega_2} \mu_1(\sigma)\\
        &\leq 1-\sum_{\sigma \in \Omega_2}\mu_{2}(\sigma)\cdot\left(1-{2}/\left({\abs{P}\left(\abs{P}-1\right)}\right)\right) \\
        &=\frac{2}{\abs{P}\left(\abs{P}-1\right)}.
    \end{aligned}
\end{equation}
Therefore, we have
\begin{align*}
     \E[(\sigma_1,\sigma_2)\sim \@C]{\sum_{v\in P}\id{\sigma_1(v)\neq \sigma_2(v)}}&\leq  \Pr[\left(\sigma_1,\sigma_2\right)\sim \@C]{\sigma_1\neq \sigma_2}\cdot \abs{P}\leq  \frac{2}{\abs{P}-1},
\end{align*} where the last inequality follows from~\eqref{eqn:coupling-error-2}.

\medskip
\noindent\textbf{Induction step:}
Given $\ell>1$, assume the statement holds for any formulas $\Phi_1,\Phi_2$ where $\abs{\+C_2\setminus \+C_1}\leq \ell-1$. 
{The statement holds for any formulas $\Phi_1 = (\+P,\+Q,\+C_1)$, $\Phi_2 = (\+P,\+Q,\+C_2)$ satisfying $\abs{\+C_2\setminus \+C_1}=\ell$ by similar argument stated in the proof of~\Cref{lem:large-couple-size1}.}

\end{proof}

Combining with~\Cref{lem:large-couple-size1} and~\Cref{lem:large-couple-size2}, we have the following lemma according to the triangle inequality for the coupling.
\begin{lemma}\label{lem:large-couple-combined}
Given any PDC formulas $\Phi_1 = (\+P,\+Q,\+C_1)$, $\Phi_2 = (\+P,\+Q,\+C_2)$ where $\+P=\{P\}$ and an integer $\Delta\geq 1$,
assume $\Delta_{\Phi_1}\leq \Delta,\Delta_{\Phi_2}\leq \Delta$, {$4\-e\Delta\leq \abs{P}$} and $\+C_1\subseteq \+C_2$.
Let $(\sigma_1,\sigma_2)$ be an optimal coupling between $\mu_{\Phi_1}$ and $\mu_{\Phi_2}$ such that 
$\E{\sum_{v\in P}\id{\sigma_1(v)\neq \sigma_2(v)}}$ is minimum.
Then we have 
\[
\E{\sum_{v\in P}\id{\sigma_1(v)\neq \sigma_2(v)}} \leq \frac{17\Delta\abs{\+C_2\setminus\+C_1}}{\abs{P}-1}.
\]
\end{lemma}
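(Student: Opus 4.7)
The plan is to reduce this statement to the two preceding lemmas via a triangle inequality for couplings, splitting $\+C_2\setminus\+C_1$ according to constraint width. Specifically, partition $\+C_2\setminus\+C_1$ as $\+D_1\sqcup\+D_2$, where $\+D_1$ consists of those constraints with $\abs{\vbl(C)}=1$ and $\+D_2$ of those with $\abs{\vbl(C)}\ge 2$. Introduce the intermediate PDC formula $\Psi=(\+P,\+Q,\+C_1\cup\+D_1)$, so that $\+C_1\subseteq\+C_1\cup\+D_1\subseteq\+C_2$ and both added layers ($\+D_1$ on top of $\+C_1$, and $\+D_2$ on top of $\+C_1\cup\+D_1$) satisfy the width hypotheses of \Cref{lem:large-couple-size1} and \Cref{lem:large-couple-size2}, respectively. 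Since $\Delta_{\Psi}\le\Delta$ (adding constraints can only keep the constraint degree bounded by $\Delta$ as $\Psi$'s constraints are a subset of $\+C_2$'s, and the degree is an upper bound over $\+C_2$), the hypotheses $4\-e\Delta\le\abs{P}$ and $\abs{\+C_i^{\text{new}}}\ge 1$ transfer cleanly to both applications.

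Next, invoke \Cref{lem:large-couple-size1} on the pair $(\Phi_1,\Psi)$ to obtain an optimal coupling $\@C_{12}$ of $\mu_{\Phi_1}$ and $\mu_{\Psi}$ with expected Hamming distance at most $17\Delta\abs{\+D_1}/\abs{P}$, and invoke \Cref{lem:large-couple-size2} on the pair $(\Psi,\Phi_2)$ to obtain an optimal coupling $\@C_{23}$ of $\mu_{\Psi}$ and $\mu_{\Phi_2}$ with expected Hamming distance at most $2\abs{\+D_2}/(\abs{P}-1)$. Glue these into a coupling $\@C$ of $\mu_{\Phi_1}$ and $\mu_{\Phi_2}$ through $\Psi$ via
\[
    \@C(\sigma_1,\sigma_2)=\sum_{\tau\in\Omega_{\Psi}}\frac{\@C_{12}(\sigma_1,\tau)\,\@C_{23}(\tau,\sigma_2)}{\mu_{\Psi}(\tau)},
\]
which is precisely the same construction used in the induction step of \Cref{lem:large-couple-size1}. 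Then by the triangle inequality $\mathbf{1}[\sigma_1(v)\ne\sigma_2(v)]\le\mathbf{1}[\sigma_1(v)\ne\tau(v)]+\mathbf{1}[\tau(v)\ne\sigma_2(v)]$, the expected Hamming distance under $\@C$ is bounded by the sum of the two intermediate expectations.

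Finally, combine the two bounds: since $\Delta\ge 1$, we have $2/(\abs{P}-1)\le 17\Delta/(\abs{P}-1)$ and $17\Delta/\abs{P}\le 17\Delta/(\abs{P}-1)$, so
\[
    \E_{(\sigma_1,\sigma_2)\sim\@C}\!\left[\sum_{v\in P}\id{\sigma_1(v)\ne\sigma_2(v)}\right]\le\frac{17\Delta\abs{\+D_1}}{\abs{P}}+\frac{2\abs{\+D_2}}{\abs{P}-1}\le\frac{17\Delta(\abs{\+D_1}+\abs{\+D_2})}{\abs{P}-1}=\frac{17\Delta\abs{\+C_2\setminus\+C_1}}{\abs{P}-1}.
\]
Since the optimal coupling in the statement minimises the expected Hamming distance, it inherits this same bound, which completes the proof. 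There is no serious obstacle here; the only delicate points are verifying that $\Psi$ satisfies the hypotheses of both sub-lemmas (in particular $\Delta_{\Psi}\le\Delta$ and $4\-e\Delta\le\abs{P}$) and that the glued coupling is indeed a valid coupling (standard, since $\@C_{12}(\sigma_1,\tau)$ marginalises to $\mu_{\Psi}(\tau)$ in the second coordinate and similarly for $\@C_{23}$).
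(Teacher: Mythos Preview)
Your plan is exactly the paper's one-line argument made explicit: split $\+C_2\setminus\+C_1$ by constraint width, apply \Cref{lem:large-couple-size1} and \Cref{lem:large-couple-size2} on either side of an intermediate formula $\Psi$, and glue the two couplings via the standard triangle-inequality construction.

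One point you gloss over, however: \Cref{lem:large-couple-size1} as stated requires $|\vbl(C)|\le 1$ for \emph{every} constraint in the larger formula, not merely for those in the difference. Your application to the pair $(\Phi_1,\Psi)$ therefore does not literally meet its hypothesis whenever $\+C_1$ already contains constraints of width $\ge 2$, so the assertion that ``the hypotheses transfer cleanly'' is not quite right. The paper's one-liner has the same slack. The fix is to observe that the proof of \Cref{lem:large-couple-size1} uses the global width-$1$ hypothesis only in bounding $\Pr[\mu_2]{\Omega_2\setminus\Omega_2^\circ}$; there, any higher-width constraint $C'\in\+C_1$ that is violated after the swap forces at least two coordinates of $\tau$ beyond $\tau(v)=c$, contributing $O(\Delta/|P|^2)$ in total, which is dominated by the existing $O(\Delta/|P|)$ term. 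With that remark the black-box application goes through.
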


We are now finishing the proof of~\Cref{appendix-lemma-distance-large}.
\begin{proof}[Proof of \Cref{appendix-lemma-distance-large}]

{If $S= \emptyset$, the two distributions can be coupled completely.} Hence, in the subsequent discussion, it suffices to focus on the case where $S\neq \emptyset$.

For convenience, we use $U$ to denote the variables $V\setminus \left(S\cup P\right)$. Given any pair of {feasible partial assignments $\sigma_{1,S},\sigma_{2,S}$ on $S$ in $\Phi_1$ and $\Phi_2$, respectively}, let {$\Psi_1=\Phi_1^{\sigma_{1,S}}$} and {$\Psi_2=\Phi_2^{\sigma_{2,S}}$} be the formulas conditioned on $\sigma_{1,S}$ and $\sigma_{2,S}$, respectively.
We use $\Omega_{U}$ to denote the support of the distribution $\mu_{\Psi, U}$ where  $\Psi=\left(\+P,\+Q_1,\+C\setminus \+C(P)\right)$. Since $\abs{P}>4\-e\Delta$, for any $\sigma\in \Omega_U$, it is feasible in $\Psi_1$ and $\Psi_2$ by lopsided LLL. Consequently, the support of $\mu_{\Psi_1,U}$ and $\mu_{\Psi_2,U}$ is equal to $\Omega_U$.
Let $\+C_1,\+C_2$ be the sets of unsatisfied constraints in $\+C(P)$ in the formula $\Psi_1$ and $\Psi_2$, respectively. We define $\Psi_3$ as the PDC formula obtained by removing the variables in $S$ while retaining the constraints $\left(\+C\setminus \+C(P)\right)\cup \left(\+C_1\cap \+C_2\right)$. This formula will serve as an intermediate step in the construction of the coupling.

In the following, we specify a coupling between $\mu_{\Psi_1}$ and $\mu_{\Psi_3}$. 
We first construct a coupling $\@C_{U}$ between $\mu_{\Psi_1,U}$ and $\mu_{\Psi_3,U}$ by specifying the density of the joint distribution $\@C_{U}:\Omega_U\times \Omega_U\rightarrow [0,1]$. Specifically, we would construct coupling tables $\@T_{1},\@T_{2}:\Omega_U\times \Omega_U\rightarrow [0,1]$ and let $\@C_{U}(\tau_1,\tau_2)=\sum_{i\in[2]}\@T_{i}(\tau_1,\tau_2)$ for any $\tau_1,\tau_2\in \Omega_U$. The coupling tables are defined as follows (any unspecified entries are defined as $0$):
\begin{enumerate}
    \item For any $\tau\in \Omega_U$, define $\@T_1(\tau,\tau)=\min\set{\mu_{\Psi_1,U}(\tau),\mu_{\Psi_3,U}(\tau)}$;
    \item {Complete the coupling table $\@T_2$ to form a coupling.}
\end{enumerate}
For any $\tau\in \Omega_U$,
let $\Psi_1^{\tau}=\left(\set{P},\set{Q},\+C_1^{\tau}\right)$, and $\Psi_3^{\tau} =\left(\set{P},\set{Q}, \+C_3^{\tau}\right)$. {Applying \Cref{prop-PDC-local-uniformity} with the condition that $2\-e\Delta\leq \abs{P}$}, we have
    \begin{align*}
        \left({\abs{\Omega_{\Psi_3^{\tau}}}-\abs{\Omega_{\Psi_1^{\tau}}}}\right)\big/{\abs{\Omega_{\Psi_3^{\tau}}}}&= \Pr[\sigma\sim \mu_{\Psi_3^{\tau}}]{\+C_1^{\tau}\setminus \+C_3^{\tau} \mbox{ are unsatisfied under } \sigma}\\
        &\leq \sum_{C\in \+C_1^{\tau}\setminus \+C_3^{\tau}} \Pr[\sigma\sim \mu_{\Psi_3^{\tau}}]{C \mbox{ is unsatisfied under } \sigma} \\
        &{\leq \abs{\+C_1\setminus \+C_2}\cdot \frac{1}{\abs{P}}\cdot \left(1+\frac{2\-e\Delta}{\abs{P}}\right)},
    \end{align*}
which implies
\begin{align*}
    \left(1- {\abs{\+C_1\setminus \+C_2}\cdot \frac{1}{\abs{P}}\cdot \left(1+\frac{2\-e\Delta}{\abs{P}}\right)} \right)\cdot \abs{\Omega_{\Psi_3^{\tau}}}\leq \abs{\Omega_{\Psi_1^{\tau}}}\leq \abs{\Omega_{\Psi_3^{\tau}}}.
\end{align*}
To simplify the notation, let $r=1- {\abs{\+C_1\setminus \+C_2}/{\abs{P}}\cdot \left(1+{2\-e\Delta}/{\abs{P}}\right)} $.
Consequently, we have 
\begin{align*}
     \mu_{\Psi_1,U}(\tau)&=\frac{\abs{\Omega_{\Psi_1^{\tau}}}}{\sum_{\tau'\in \Omega_U}\abs{\Omega_{\Psi_1^{\tau'}}}}\geq \frac{r\cdot\abs{\Omega_{\Psi_3^{\tau}}}}{ \sum_{\tau'\in \Omega_U}\abs{\Omega_{\Psi_3^{\tau'}}}}=r\cdot \mu_{\Psi_3,U}(\tau).
\end{align*}
Combining all these facts, it holds that 
\begin{equation}\label{eqn:coupling-error-U}
    \begin{aligned}
        \Pr[\left(\tau_1,\tau_2\right)\sim \@C_{U}]{\tau_1\neq \tau_2}&=1- \Pr[\left(\tau_1,\tau_2\right)\sim \@C_{U}]{\tau_1=\tau_2}\\
        &=1-\sum_{\tau\in \Omega_U} \min\set{\mu_{\Psi_1,U}(\tau),\mu_{\Psi_3,U}(\tau)}\\
        &\leq 1-\sum_{\tau\in \Omega_U} r\cdot\mu_{\Psi_3,U}(\tau) \\
        &=1-r={\abs{\+C_1\setminus \+C_2}\cdot \frac{1}{\abs{P}}\cdot \left(1+\frac{2\-e\Delta}{\abs{P}}\right)}\\
        &\leq {\frac{2\abs{\+C_1\setminus \+C_2}}{\abs{P}}}.
    \end{aligned}
\end{equation}

Given any $\tau_1,\tau_2\in \Omega_U$, we complete our coupling construction by specifying the coupling between $\mu_{\Psi_1^{\tau_1}}$ and $\mu_{\Psi_3^{\tau_2}}$. When $\tau_1=\tau_2=\tau$, let $\Psi_1^{\tau}=\left(\set{P},\set{Q},\+C_1^{\tau}\right)$, and $\Psi_3^{\tau} =\left(\set{P},\set{Q}, \+C_3^{\tau}\right)$. By definition, we have $\+C_3^{\tau}\subseteq \+C_1^{\tau}$ and $\+C_1^{\tau}\setminus \+C_3^{\tau}\subseteq \+C_1\setminus \+C_2$. 
According to~\Cref{lem:large-couple-combined}, there exists a coupling $\@C_P$ between $\mu_{\Psi_1^{\tau}}$ and $\mu_{\Psi_3^{\tau}}$ such that 
\begin{align}\label{eqn:equal-case-bound}
    \E[\left(\tau'_1,\tau'_2\right)\sim\@C_P]{\sum_{v\in P}\id{\tau'_1(v)\neq \tau'_2(v)}}\leq   \frac{17\Delta\abs{\+C_1\setminus \+C_2}}{\abs{P}-1}.
\end{align}
On the other hand, when $\tau_1\neq\tau_2$, let $\Psi_1^{\tau_1}=\left(\set{P},\set{Q},\+C_1^{\tau_1}\right)$, and $\Psi_3^{\tau_2} =\left(\set{P},\set{Q}, \+C_3^{\tau_2}\right)$. By definition, we have $\left(\+C_1^{\tau_1}\setminus \+C_3^{\tau_2}\right) \cup \left(\+C_1^{\tau_1}\setminus \+C_3^{\tau_2}\right) \subseteq \+C(P)$.
According to~\Cref{lem:large-couple-combined}, there exists a coupling $\@C_P'$ between $\mu_{\Psi_1^{\tau_1}}$ and $\mu_{\Psi_3^{\tau_2}}$ such that 
\begin{align}\label{eqn:inequal-case-bound}
    \E[\left(\tau'_1,\tau'_2\right)\sim\@C'_P]{\sum_{v\in P}\id{\tau'_1(v)\neq \tau'_2(v)}}\leq   \frac{17\Delta\abs{\+C(P)}}{\abs{P}-1}\leq \frac{17\abs{P}\Delta^2}{\abs{P}-1} .
\end{align}

We are now ready to specify the coupling $(\sigma'_{1},\sigma'_{2})\sim \@C_1$ between $\mu_{\Psi_1}$ and $\mu_{\Psi_3}$ as follows:
\begin{enumerate}
    \item Sample $\tau_1,\tau_2$ from the distribution $\@C_U$;
    \item If $\tau_1=\tau_2$, sample $\tau'_1,\tau'_2$ from $\@C_P$; Otherwise, sample $\tau'_1,\tau'_2$ from $\@C'_P$;
    \item $\sigma'_{1,U}\gets \tau_1$, $\sigma'_{1,P}\gets \tau'_1$, $\sigma'_{2,U}\gets \tau_2$, and $\sigma'_{2,P}\gets \tau'_2$. 
\end{enumerate}
One can verify that $\left(\sigma'_1,\sigma'_2\right)$ is a coupling. 
Furthermore,
\begin{align*}
    &\quad\E[(\sigma_1',\sigma_2')\sim \@C_1]{\sum_{v\in P}\id{\sigma'_1(v)\neq \sigma'_2(v)}}\\
     &\leq   \Pr{\tau_1=\tau_2}\cdot
    \E{\sum_{v\in P}\id{\tau'_1(v)\neq \tau'_2(v)}} + \Pr{\tau_1\neq \tau_2}\cdot  \E{\sum_{v\in P}\id{\tau'_1(v)\neq \tau'_2(v)}} \\
    &\leq   \Pr{\tau_1=\tau_2}\cdot\frac{17\Delta\abs{\+C_1\setminus \+C_2}}{\abs{P}-1} + \Pr{\tau_1\neq \tau_2}\cdot \frac{17\abs{P}\Delta^2}{\abs{P}-1} \tag{by~\eqref{eqn:equal-case-bound}, and~\eqref{eqn:inequal-case-bound}}\\
    &\leq \frac{17\Delta\abs{\+C_1\setminus \+C_2}}{\abs{P}-1}+{\frac{2\abs{\+C_1\setminus \+C_2}}{\abs{P}}}\cdot \frac{17\abs{P}\Delta^2}{\abs{P}-1}\leq \frac{51\Delta^2\abs{\+C_1\setminus \+C_2}}{\abs{P}-1}.\tag{by~\eqref{eqn:coupling-error-U}}
\end{align*}

By similar arguments, we can show that there exists a coupling $\@C_2$ between $\mu_{\Psi_2}$ and $\mu_{\Psi_3}$ such that 
\begin{align*}
    \E[(\sigma_1',\sigma_2')\sim \@C_2]{\sum_{v\in P}\id{\sigma'_1(v)\neq \sigma'_2(v)}}\leq \frac{51\Delta^2\abs{\+C_2\setminus \+C_1}}{\abs{P}-1}.
\end{align*}

Combining all these facts with {the triangle inequality}, there exists a coupling $\@C$ between $\mu_{\Psi_1}$ and $\mu_{\Psi_2}$ such that
\begin{align*}
    \E[(\sigma_1',\sigma_2')\sim \@C]{\sum_{v\in P}\id{\sigma'_1(v)\neq \sigma'_2(v)}}\leq \frac{51\Delta^2\abs{\+C_1\setminus \+C_2}}{\abs{P}-1}+ \frac{51\Delta^2\abs{\+C_2\setminus \+C_1}}{\abs{P}-1}\leq 100\Delta^2\abs{P}^{-1}\abs{\+C(S)\cap \+C(P)}.
\end{align*}

\end{proof}

\subsection*{Proof of~\Cref{lemma-deviation-witness-large}}\label{subsub:wit-prop}

    Here, we use the notations from ~\Cref{Alg:coupling} for discussion. 

    Let $\Phi_1=(\+P_1,\+Z_1,\+C), \Phi_2=(\+P_2,\+Z_2,\+C)$ be the formulas at~\Cref{line-coupling} in~\Cref{Alg:coupling} where $\+P_1=\+P_2$, and $\pcurrent[P]=(P_1,P_2,\cdots,P_\ell)$.  Note that $\pfinal\in \pcurrent$, and $\qtmpone(\pfinal)=\qtmptwo(\pfinal)$. Let $T=\pdiff$ and $S=\bigcup_{P\in T}P$.
    By definition, we have $\pfinal\cap S = \emptyset$, and that in the factorization of $(\pcurrent,\qtmpone,\cremain\setminus \+C(\pfinal))$ and $(\pcurrent,\qtmptwo,\cremain\setminus \+C(\pfinal))$, each variable $v\in S$ is not connected to any variable $u\notin S$. Combining all these facts with~\Cref{appendix-lemma-distance-large}, it holds that
    \begin{equation}\label{eqn:bound-by-intersection}
        \begin{aligned}
            \sum_{P\in \overline{\+R}}\E{\Dis(Y'_1,Y'_2,\pfinal)\mid \pfinal = P}&=\sum_{P\in \overline{\+R}}\E{\sum_{i\in [\ell]}\abs{\sigma_1(P_i)\setminus \sigma_2(P_i)} \bigg\vert \pfinal = P}\\
            &\leq  100\Delta^2\sum_{P\in \overline{\+R}} \E{ \abs{\+C(S)\cap \+C(\pfinal)}\big\vert \pfinal = P}\abs{P}^{-1}\\
            &\leq  100\Delta^2\alpha^{-1}\sum_{P\in \overline{\+R}} \E{ \abs{\+C(S)\cap \+C(\pfinal)}\big\vert \pfinal = P}.
        \end{aligned}
    \end{equation} 
     Additionally, we emphasize that for each $P\in \overline{\+R}$, we have $\pzeta = \+D'$ under the condition $\pfinal = P$. Thus, we have 
    \begin{align}\label{eqn-intersection-pd}
            \quad & \sum_{P\in \overline{\+R}} \E{\abs{\+C(S)\cap \+C(\pfinal)}\mid \pfinal = P}
            \leq \sum_{P\in \overline{\+R}} \E{\sum_{P'\in \pdiff} \abs{\+C(P')\cap \+C(\pfinal)} \bigg\vert \pfinal=P}  \\  
            = & \sum_{P\in \overline{\+R}} \E{\sum_{P'\in \+D'} \id{P'\in \pdiff}\cdot \abs{\+C(P')\cap \+C(\pfinal)}\bigg\vert \pfinal=P}
            =\sum_{P'\in \+D'} \sum_{P\in \overline{\+R}} \left(\Pr{P'\in \pdiff \mid \pfinal = P} \cdot \abs{\+C(P')\cap \+C(P)}\right)\notag\\
            \leq &\sum_{P'\in \+D'}\left(\max_{P\in \overline{\+R}}\Pr{P'\in \pdiff \mid \pfinal = P} \sum_{P\in \overline{\+R}} \abs{\+C(P')\cap \+C(P)}\right) \leq kd \sum_{P'\in \+D'} \abs{P'}\max_{P\in \overline{\+R}}\Pr{P'\in \pdiff \mid \pfinal = P}\notag,
    \end{align}
    where the last inequality follows from {the fact that each edge in $\+C(P')$ can be reused for $k$ times.}

    Moreover, according to \Cref{cor-witness-sequence-pinpdiff-large} it holds that
    \begin{align}\label{eqn-summation-witnesspath}
        \sum_{P'\in \+D'}\abs{P'}\max_{P\in \overline{\+R}}\Pr{P'\in \pdiff\mid \pfinal = P}\leq \sum_{P'\in \+D'} \abs{P'}\sum_{s\in \witnesswrtp{P'}} \max_{P\in \overline{\+R}} \Pr{{s \mbox{ occurs}}\mid \pfinal =P }.
    \end{align}
    Note that in~\eqref{eqn-summation-witnesspath}, the witness path can be summed more than once. The occurrence can be listed as the following cases:
    
    \smallskip
    \paragraph{\textbf{Case I}:} For each witness path $s=(v_0,v_1,\cdots,v_t)$ where $v_t\in \+D'\cap \psmall$, it could occur in the case where $v_t=P'$ or $(v_t,P')\in \+S^{\ast}(\+D'\cup \+C,\+D',1)$. If it occurs in the case $v_t=P'$, it contributes $$\abs{v_t}\cdot \max_{P\in \overline{\+R}} \Pr{{s \mbox{ occurs}}\mid \pfinal =P }$$ in~\eqref{eqn-summation-witnesspath}. Otherwise, we have $(v_t,P')\in \+S^{\ast}(\+D'\cup \+C,\psmall,1)$. By definition of $\+S^{\ast}(\+D',\psmall,1)$, it contributes  
    $$ \abs{v_t}\cdot d (k-1)\cdot \gamma^{\theta}\cdot  \max_{P\in \overline{\+R}} \Pr{{s \mbox{ occurs}}\mid \pfinal =P } $$
    where $\gamma^{\theta}$ is the upper bound of the size of $P'$.
    
    \smallskip
    \paragraph{\textbf{Case II}:} For each witness path $s=(v_0,v_1,\cdots,v_t)$ where $v_t\in \+D'\setminus \psmall$, it could occur in the case where $v_t=P'$. In this case, it contributes
    \begin{align*}
        \abs{v_t}\cdot  \max_{P\in \overline{\+R}} \Pr{{s \mbox{ occurs}}\mid \pfinal =P }.
    \end{align*}

    \smallskip
    \paragraph{\textbf{Case III}:} For each witness path $s=(v_0,v_1,\cdots,v_t)$ where $v_t\in \csmall$, it occurs in the case where$(v_t,P')\in \+S^{\ast}(\+D'\cup \+C,\+D',1)$.
    By definition, it contributes
    \begin{align*}
        k\cdot \gamma^\theta \cdot d\cdot k\cdot \gamma^\theta\cdot   \max_{P\in \overline{\+R}} \Pr{{s \mbox{ occurs}}\mid \pfinal =P }.
    \end{align*}
    Combining these facts with the definition of $f(s)$,~\eqref{eqn:bound-by-intersection},~\eqref{eqn-intersection-pd} and~\eqref{eqn-summation-witnesspath}, the lemma holds immediately.

\subsubsection{\textbf{\emph{The probability of the occurrence of the witness path}}}\label{subsub:wit-prop-cal}

In this section, we calculate the probability that a given witness sequence occurs and complete the proof of~\Cref{lemma-prob-witness}. 



Given any instance in~\Cref{Alg:coupling}, let $\+A^0$ be the random configuration of the coupling algorithm obtained after implementing the subroutine $\initialcouple(\cdot)$, and $\+A^{t}$ be the random configuration of the coupling algorithm obtained after implementing the $t$-th subroutine $\cpperm(\cdot)$ for each $t\geq 1$ in~\Cref{Alg:coupling}. Given the random configuration $\+A$, let $\ell(\+A,P)$ be the number of the assigned permutation sets among $\pzeta[P]$ until there exists some $P'\in \pdisc$ in the configuration $\+A$ and $\ell(P)=\-e^{-1}\abs{P}^{1-\theta}-(\-e\Delta)^{-1}\abs{P}^{1-2\theta}\ln \abs{P}$, for each $P\in \pcurrent\setminus \+I$ such that $\abs{P}>\gamma$. 

Given any w.s. $s=(v_0,v_1,\cdots,v_\ell)$, we define the following random variables for each integer $t\geq 0$. For any $P\in \pcurrent$ such that $V_C\cap \pzeta[P]\neq \emptyset$ and $\abs{P}>\gamma$,
\begin{align*}
    M_1(\+A^t,P) \triangleq \id{\left(v\notin \pdisci\right) \lor \left(\ell(\+A^t,P)\geq \ell(P)\right)}\cdot \prod_{i=\ell(\+A^t,P)}^{{\ell(P)}-1}\left(1-4\Delta\abs{P}^\theta/(\abs{P}^{1-\theta}-i)\right),
\end{align*}
where $v\in V_C\cap \pzeta[P]$. Intuitively, $M_1(\+A^t,P)$ captures the circumstance where $P$ is consecutively successfully coupled.
Furthermore, let
\begin{align*}
    M_2(\+A^t,P) \triangleq &15\Delta\abs{P}^{2\theta-1}\cdot \id{\left(v\notin \pdisci\right) \land \left(\vset \cap v=\emptyset\right)\land \left(\ell(\+A^t,P)<\ell(P)\right)} \\
    &+\id{\left(v\in \pdisci\right)\land \left(\ell(\+A^t,P)< \ell(P)\right)}.
\end{align*}
any $P\in \pcurrent$ such that $V_C\cap \pzeta[P]\neq \emptyset$ and $\abs{P}\leq \gamma$
\begin{align*}
     M_3(\+A^t) \triangleq \prod_{v\in V_C \cap \pzeta[P]:\abs{P}\leq \gamma}  \left(15\Delta p^{\theta}\abs{P}^{2\theta}\cdot \id{\left(v\notin \pdisci\right) \land \left(\vset \cap v=\emptyset\right)} +\id{v\in \pdisci}\right).
\end{align*}
Finally, we define
\begin{align*}
    M_4(\+A^t) \triangleq \prod_{v\in V_P}  \left(\min\left\{30k\Delta\abs{P}^{2\theta-1},1\right\}\cdot \id{\left(v\notin \pdisc \setminus \pdisci \right) \land \left(\vset \cap v=\emptyset\right)} +\id{v\in  \pdisc \setminus \pdisci}\right) .
\end{align*}
For each $C\in \csmall$, 
define
\begin{align*}
    M_5(\+A^t) \triangleq\prod_{C\in s \cap \csmall} {\^P_{\Phi_1}[\neg C]}\cdot \id{\mbox{the involved permutation sets not in $\pdisc$}}
\end{align*}
{Combining all these random variables, let}
\begin{align*}
    M(\+A^t) = \prod_{\abs{P}>\gamma: V_C \cap \pzeta[P]\neq \emptyset} \left(M_1(\+A^t,P) + M_2(\+A^t,P)\right)\cdot M_3(\+A^t)\cdot M_4(\+A^t)\cdot M_5(\+A^t).
\end{align*}

One can verify that the random variable $\+M$ is a {supermartingale} by the implementation of~\Cref{Alg:coupling}.
\begin{lemma}\label{lem-martingale}
    For each $t\geq 1$, we have $\E{M(\+A^t)}\leq \+A^{t-1}$.
\end{lemma}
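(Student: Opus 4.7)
The intended statement is $\E{M(\+A^t)\mid \+A^{t-1}}\leq M(\+A^{t-1})$, so the plan is to verify the one-step supermartingale property. Between $\+A^{t-1}$ and $\+A^{t}$, exactly one call to $\cpperm(\cdot)$ is made on a single sub-permutation $P^{*'}\in\pzeta[P^{*}]$ selected at Lines \ref{line-find-con-coupling}--\ref{line-find-p-coupling}, possibly preceded by the "reveal-all" branch triggered by a failure. All factors in $M$ whose associated permutation or constraint is not touched by that invocation are $\+A^{t-1}$-measurable and therefore drop out of the conditional expectation. The strategy is to bound each of $M_1{+}M_2$, $M_3$, $M_4$, and $M_5$ separately, combining them at the end through conditional independence given the outcome of the current $\onestep$ call (the five factors depend on disjoint families of random choices once the site is fixed).

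The crux is the analysis of $M_1{+}M_2$ on a large permutation $P^{*}$ with $V_{C}\cap \pzeta[P^{*}]\neq\emptyset$. By the equality case of \Cref{lemma-cpstep-failprob}, when $|P^{*}|>\gamma$ and the remaining number of unassigned sub-permutations equals $\ell(P^{*})-\ell(\+A^{t-1},P^{*})$, the $\onestep$ call succeeds with probability exactly $1-4\Delta|P^{*}|^{\theta}/(|P^{*}|^{1-\theta}-\ell(\+A^{t-1},P^{*}))$ and fails otherwise. On success, $\ell(\+A^{t},P^{*})$ advances by one and $M_{1}$ sheds precisely the factor $(1-4\Delta|P^{*}|^{\theta}/(|P^{*}|^{1-\theta}-\ell))$ from its product; on failure, $v^{*}\in\pdisci$ sets $M_{1}=0$ and $M_{2}$ picks up its indicator-weighted term. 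A direct telescoping computation then yields
\[
\E{M_{1}(\+A^{t},P^{*})+M_{2}(\+A^{t},P^{*})\mid \+A^{t-1}}\le M_{1}(\+A^{t-1},P^{*})+M_{2}(\+A^{t-1},P^{*}),
\]
which is the desired step (in fact this piece is a martingale before inflating by $15\Delta|P^{*}|^{2\theta-1}$). A parallel argument handles the $|P^{*}|\le\gamma$ case for $M_{3}$, using the bound $\Pr[\Succ=\False]<_{q} 5\e\Delta p^{\theta}|P_{i}|^{2}$ from \Cref{lemma-cpstep-failprob} and the coefficient $15\Delta p^{\theta}|P^{*}|^{2\theta}$ built into $M_{3}$.

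For $M_{4}$, a vertex $v\in V_{P}$ enters $\pdisc\setminus\pdisci$ only when the domain just assigned to its sub-permutation intersects $\Lambda$; by \Cref{lemma-trivial-coupling} this has probability at most $2k\Delta|P'|^{2}/|P|$, which is dominated by the indicator coefficient $\min\{30k\Delta|P|^{2\theta-1},1\}$ appearing in $M_{4}$ once the standing parameter condition $\conlocaluniformity$-type inequality (item \ref{cond-decay5} of \Cref{cond-branching-decay}) is used. For $M_{5}$, a constraint $C\in s\cap\csmall$ can flip to "bad" only when the last sub-permutation in $\vbl(C)$ is revealed, and the conditional probability that this reveal violates $C$, given that the corresponding sub-permutations were successfully coupled (so none of them lies in $\pdisc$), is bounded by $\^P_{\Phi_{1}}[\neg C]$ through a direct application of the marginal comparison in \Cref{lem:marginalub,lem:small-marginal-lb}. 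Tensorising the four bounds across the disjoint families of factors yields $\E{M(\+A^{t})\mid \+A^{t-1}}\le M(\+A^{t-1})$.

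The main obstacle I anticipate is the bookkeeping when the $\Succ=\False$ branch of $\initialcouple/\cpperm$ fires: that single step simultaneously reveals the entire remaining portion of $P^{*}$, so several $M_{j}$-factors (those attached to distinct sub-permutations and to constraints whose last missing variable sits in $P^{*}$) can all move in one shot. The clean treatment requires opening up the joint distribution of the "reveal-all" sample conditional on $\Succ=\False$, then verifying that the product of bounds still satisfies the telescoping identity. A secondary care point is matching the two size regimes (sub-permutation failure probabilities from \Cref{lemma-cpstep-failprob} versus marginal-active probabilities from \Cref{lemma-trivial-coupling}) to the exponents $2\theta-1$ and $2\theta$ that are hard-coded into the definitions of $M_{1},\dots,M_{4}$; these are exactly the constraints in \Cref{cond-branching-decay} and will be invoked inequality by inequality rather than proved here.
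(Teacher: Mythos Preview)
The paper itself offers no proof beyond the one-line remark preceding the lemma that ``one can verify that the random variable $\+M$ is a supermartingale by the implementation of~\Cref{Alg:coupling}.'' Your plan is exactly the verification the paper defers: you correctly identify that the conditional statement $\E{M(\+A^t)\mid \+A^{t-1}}\le M(\+A^{t-1})$ is what is meant, and your factor-by-factor analysis (the telescoping of $M_1+M_2$ against the \emph{exact} failure probability in the last clause of \Cref{lemma-cpstep-failprob}, the use of \Cref{lemma-cpstep-failprob} for $M_3$, \Cref{lemma-trivial-coupling} for $M_4$, and the marginal bounds for $M_5$) is the intended route.

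One point to tighten: the phrase ``conditional independence'' and the final ``tensorising'' are too strong as stated. When $\Succ=\False$ fires and the full $P\setminus\vset$ is revealed, the factors in $M_4$ and $M_5$ that move are genuinely correlated. What saves you is not independence but the \emph{structure of the witness sequence}: by the clause $(v_i,v_j)\notin\edgepermutation$ for $j\in[i-2]$ in \Cref{def-witness}, at most two elements of $s$ lie in $\pzeta[P]$ for any fixed $P\in\+P'$, and of those at most one is in $V_C$ while the other (if present) is in $V_P$. Hence a single reveal-all touches at most one $M_1{+}M_2$ block, at most one $M_4$ factor, and a bounded set of $M_5$ constraints, and the slack built into the constants ($15$, $30$, and the $<_q$-relations) absorbs the residual dependence via union bounds rather than independence. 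You flag this correctly as the main obstacle; just replace the independence claim by this overlap-counting argument when you write it out.
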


The following inequality is used in the analysis.
\begin{lemma}\label{lemma-large-permutation-ell}
Given integers ${q>3b\geq 8}$, $t>0$, and $\Delta>0$, where 
$t\leq  {q}/\left({\-eb}\right) - {q\ln q}/\left({eb^2\Delta}\right)$,
we have 
\begin{equation*}
    \begin{aligned}
    \prod_{i = t}^{\lfloor q/b \rfloor }\left(1 - \frac{4b\Delta}{i}\right) \leq q^{-4}.
    \end{aligned}
\end{equation*}
\end{lemma}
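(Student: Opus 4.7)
The plan is to derive the bound through three standard moves: convert the product to an exponential of a harmonic sum, lower-bound that sum by a logarithm, and then invoke the hypothesis on $t$ to control the logarithm.

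First I would apply the elementary inequality $1 - x \leq e^{-x}$ (valid for all real $x$, in particular for each factor $1 - 4b\Delta/i$, which we may assume to be in $(0,1]$ since otherwise the product is already $\leq 0 \leq q^{-4}$). This yields
\[
\prod_{i = t}^{\lfloor q/b \rfloor }\left(1 - \frac{4b\Delta}{i}\right) \;\leq\; \exp\!\left(-4b\Delta \sum_{i = t}^{\lfloor q/b \rfloor }\frac{1}{i}\right).
\]
Next, since $1/x$ is decreasing, a standard integral comparison gives $\sum_{i=t}^{n} 1/i \geq \int_t^{n+1} dx/x = \ln\!\big((n+1)/t\big)$, so setting $n=\lfloor q/b\rfloor$ we obtain $\sum_{i=t}^{\lfloor q/b\rfloor} 1/i \geq \ln\!\big((\lfloor q/b\rfloor+1)/t\big) \geq \ln(q/(bt))$.

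The heart of the argument is then the calculation using the hypothesis. Writing the hypothesis as $t \leq \frac{q}{eb}\bigl(1 - \tfrac{\ln q}{b\Delta}\bigr)$, we get
\[
\frac{q}{bt} \;\geq\; \frac{e}{1 - \frac{\ln q}{b\Delta}}.
\]
Observe that the hypothesis $t>0$ forces $\tfrac{\ln q}{b\Delta} < 1$, so the denominator is positive and the bound is meaningful. Taking logarithms and using $\ln\!\big(1/(1-x)\big) \geq x$ for $x \in [0,1)$, we obtain
\[
\ln\!\bigl(q/(bt)\bigr) \;\geq\; 1 + \ln\!\Bigl(\frac{1}{1 - \ln q/(b\Delta)}\Bigr) \;\geq\; 1 + \frac{\ln q}{b\Delta}.
\]
Multiplying by $4b\Delta$ gives $4b\Delta \sum_{i=t}^{\lfloor q/b\rfloor} 1/i \geq 4b\Delta + 4\ln q$, and plugging this into the exponential bound yields $\exp(-4b\Delta - 4\ln q) = q^{-4} e^{-4b\Delta} \leq q^{-4}$, which is the desired inequality.

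There is no genuine obstacle here: the lemma is a purely calculus-style estimate and every inequality used is elementary. The only mildly delicate point is verifying that the product range is non-empty (i.e., $t \leq \lfloor q/b \rfloor$), which follows from $t < q/(eb)$ together with $q > 3b$, since then $\lfloor q/b \rfloor \geq q/b - 1 \geq q/(eb)$. Beyond that bookkeeping, the argument is three lines once the three moves above are assembled.
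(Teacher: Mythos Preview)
Your proof is correct and follows essentially the same approach as the paper's: convert the product to an exponential of a harmonic-type sum, lower-bound that sum by a logarithm, and then use the hypothesis on $t$ to finish. The only cosmetic differences are that the paper detours through $(1-1/i)^{4b\Delta}$ and the harmonic-number estimates $\ln(n+1)\leq H_n\leq \ln n+1$ (losing an additive $1$ in the exponent), whereas you apply $1-x\leq e^{-x}$ and the integral comparison directly; your route is marginally tighter (yielding $q^{-4}e^{-4b\Delta}$ before discarding the extra factor) but otherwise identical in structure.
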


\begin{proof}
Let $r = \lfloor q/b \rfloor $. We have
\begin{equation}\label{eq-bound-harmonic-number}
    \begin{aligned}
    \prod_{i = t}^{r}\left(1 - \frac{4b\Delta}{i}\right) \leq \prod_{i = t}^{r}\left(1 - \frac{1}{i}\right)^{4b\Delta} \leq \prod_{i = t}^{r}\exp\left(-\frac{4b\Delta}{i}\right) = \exp\left(-\sum_{i=t}^{r}\frac{4b\Delta}{i}\right),
    \end{aligned}
\end{equation}
where the second inequality holds by $1-x\leq \exp(-x)$ for any $x\in \^R$.
Moreover, let 
$H_{n}$ denote $\sum_{i=1}^{n}\frac{1}{i}$.
We have 
\[
\sum_{i=t}^{r}\frac{1}{i} = H_{r} - H_{t} \geq \ln \left(r + 1\right) -\ln t - 1,
\]
where the last inequality is by that 
\[
\ln (n+1)\leq H_{n} = \sum_{i=1}^{n}\frac{1}{i}\leq \ln n + 1.
\]
Combining with \eqref{eq-bound-harmonic-number},
we have
\begin{equation}\label{eq-bound-harmonic-number-t}
    \begin{aligned}
    \prod_{i = t}^{r}\left(1 - \frac{4b\Delta}{i}\right) \leq \exp\left(-\sum_{i=t}^{r}\frac{4b\Delta}{i}\right) \leq \left(\frac{\-e t}{r+1}\right)^{4b\Delta}\leq \left(\frac{\-e bt}{q}\right)^{4b\Delta}.
    \end{aligned}
\end{equation}
In addition, by 
\[t\leq \frac{q}{\-eb} - \frac{q\ln q}{eb^2\Delta},\]
we have 
\[
\left(\frac{\-e bt}{q}\right)^{4b\Delta} \leq \left(1 - \frac{\ln q}{b\Delta}\right)^{4b\Delta} \leq \exp\left(-\frac{4b\Delta\ln q }{b\Delta}\right)= q^{-4},
\]
where the second inequality holds by $1-x\leq \exp(-x)$ for any $x\in \^R$.
Combining with \eqref{eq-bound-harmonic-number-t},
the lemma is proved.
\end{proof}

We are now ready to complete the proof of~\Cref{lemma-prob-witness}.
\begin{proof}[Proof of~\Cref{lemma-prob-witness}]
    
According to~\Cref{lem-martingale}, it holds that 
\begin{equation}\label{eqn:originalquant}
\begin{aligned}
    &\quad \E{M(\+A^t)}\leq M(\+A^0)\\
    &\leq  p^{(t-\abs{ V_P}-\abs{ V_C})}\cdot \prod_{v\in V_P}\lambda_1(v)\prod_{v\in V_C}\left(15\Delta\abs{P}^{2\theta-1}+\prod_{i=1}^{{\ell(P)}}\left(\max\set{1-4\Delta\abs{P}^\theta/(\ell+1-i),0}\right)^{-1}\right)\\
    &\leq p^{(t-\abs{ V_P}-\abs{ V_C})}\cdot  \prod_{v\in V_P}\lambda_1(v)\prod_{v\in V_C}(15\Delta\abs{P}^{2\theta-1}+\abs{P}^{-4})\\
    &\leq p^{(t-\abs{ V_P}-\abs{ V_C})} \cdot \prod_{v\in V_P}\lambda_1(v)\prod_{v\in V_C}\lambda_2(v),
\end{aligned}  
\end{equation}
where the second inequality follows from~\Cref{lemma-large-permutation-ell}.

By definition, for any $P\in \pcurrent$ such that $\abs{P}>\gamma$ and $V_C \cap \pzeta[P]\neq \emptyset$, we have $$M_1(\+A^t,P)+M_2(\+A^t,P)\geq 1.$$ 
Thus, for any w.s. $s=(v_0,v_1,\cdots,v_t)$, if $s$ occurs, one can verify that
\begin{align}\label{eqn:failquant}
    M({\+A_\infty})&\geq p^{\theta\cdot (t-\abs{ V_P}-\abs{ V_C})}.
\end{align}

Combining~\eqref{eqn:originalquant}, ~\eqref{eqn:failquant} and Markov's inequality, we have
\begin{align*}
    \Pr{(v_0,\cdots,v_t) \text{ occurs }} &\leq \Pr{v_0\in \pdiffinit}\cdot \frac{\E{M({\+A_\infty})}}{p^{\theta\cdot (t-\abs{ V_P}-\abs{ V_C})}}\\
    &\leq p^{(1-\theta)(t-\abs{ V_P}-\abs{ V_C})}\cdot\Pr{v_0\in \pdiffinit} \prod_{v\in V_P}\lambda_1(v)\prod_{v\in V_C}\lambda_2(v),
\end{align*} which implies the lemma immediately.
\end{proof}

\subsubsection{\textbf{\emph{Refutation of long witness}}}\label{subsub:refutation}

In this section, we complete the proofs of Lemmas ~\ref{lemma-refutation-long-path} and \ref{lemma-refutation-long-path-small-pfinal}.

\subsection*{Witness sequence and suffix}
Define the set of suffices of a \emph{w.s.} as follows.
\begin{definition}[suffix of witness sequence]\label{def-weight-prefix-witness}
Let $\+T$ be $\+W$ or $\bigcup_{P\in \+R,P'\in \+D[P]}\+W[P,P']$.
Given a \emph{w.s.} $(v_0,v_1, \cdots,v_t)$ where $t\geq 0$, 
define
\begin{align*}
\mathsf{suf}(v_0,v_1, \cdots,v_t) &\triangleq\{(v_0, v_1, \cdots, v_t , \cdots, v_{t'})\in \+T\mid t'>t \},\\
\mathsf{suf}_1(v_0,v_1, \cdots,v_t) &\triangleq\{(v_0, v_1, \cdots, v_t,v_{t+1})\in \+T\mid v_{t+1}\in  V_{P}\},\\
\mathsf{suf}_2(v_0,v_1, \cdots,v_t) &\triangleq\{(v_0, v_1, \cdots, v_t,v_{t+1})\in \+T\mid v_{t+1}\not\in  V_{P}\}.
\end{align*}
We remark that $\mathsf{suf}_1(s) = \emptyset$ for some $s\in \+T$. 
Moreover, for each \emph{w.s.} $s$ and $i\in [2]$,
let $\mathsf{suf}_i^{+}(s) \triangleq \mathsf{suf}_i(s)\cup \{s\}$.
Similarly, let $\mathsf{suf}^{+}(s) \triangleq \mathsf{suf}(s)\cup \{s\}$.
Given any $s \in \+T$,
define
\begin{align}\label{eq-def-beta}
\widehat{\beta}(s) \triangleq \sum_{s'\in \mathsf{suf}^+(s)}\rho(s')\cdot \weightfunction(s'), \quad \beta(s) \triangleq \sum_{s'\in \mathsf{suf}^+(s)}\rho(s')\cdot f(s').
\end{align}
\end{definition}

This subsection is devoted to prove the following lemma.
\begin{lemma}\label{lemma-prob-prefix}
Consider the instance with $p,\eta,\Delta$ satisfying Condition \ref{cond-branching-decay}.
For any \emph{w.s.} $s = (v_0,v_1, \cdots,v_t)$ where $t\geq 1$ and $v_t\not\in V_P$, 
we have $\widehat{\beta}(s) \leq \rho(v_0,v_1, \cdots,v_{t-1})$.
\end{lemma}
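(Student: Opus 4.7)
The overall strategy is to unfold $\widehat{\beta}(s)$ into a tree-structured sum over suffix extensions and then close the resulting recursion using the decay conditions in Condition~\ref{cond-branching-decay}. The starting point is the identity obtained by partitioning $\mathsf{suf}^+(s)\setminus\{s\}$ according to the first one-step extension: every proper suffix lies in a unique subtree $\mathsf{suf}^+(s'')$ where $s''$ appends a single vertex $v_{t+1}$. Splitting according to whether this step is Type~A ($v_{t+1}\in V_P$, i.e.\ $(v_t,v_{t+1})\in\edgepermutation$, requiring $v_t\in\pzeta$) or Type~B (any other admissible extension, reached via $\+S^\ast(\pzeta\cup\csmall,\pzeta\cup\csmall,2)$) gives
\[
\widehat{\beta}(s) \;=\; \rho(s)\,\widehat{f}(s) \;+\!\! \sum_{s''\in\mathsf{suf}_1(s)\cup\mathsf{suf}_2(s)}\!\! \widehat{\beta}(s''),
\]
so the problem reduces to controlling the contributions of both kinds of children. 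Note that the hypothesis $v_t\notin V_P$ makes $\rho(s)/\rho(v_0,\ldots,v_{t-1})$ equal to either $\lambda_2(v_t)$ or $p^{1-\theta}$, which is the exact deficit we must overcome.

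The plan is to prove, by reverse induction on the length of $s$ (the w.s.\ space is finite, so there is a well-defined maximal length), a uniform bound of the form $\widehat{\beta}(s)\leq \rho(v_0,\ldots,v_{t-1})$ for every w.s.\ with $t\geq 1$ and $v_t\notin V_P$. The base case $\widehat{\beta}(s)=\rho(s)\widehat{f}(s)$ is verified directly using the definition of $\widehat{f}$ in \eqref{eq-def-fs-hat} together with Condition~\ref{cond-branching-decay}\eqref{cond-decay3} and \eqref{cond-decay7}. For the inductive step, the Type~A children are at most $|\pzeta[P]|$-many (bounded by $\gamma^\theta$ when $|P|\leq\gamma^{1/\theta}$, and by $L$ otherwise), each carrying a multiplicative contraction $\lambda_1(v_{t+1})$; the Type~B children are bounded by $O(kd)$ constraint-neighbors times $O(k\gamma^\theta)$ permutations inside $\psmall$, each carrying a contraction $\lambda_2(v_{t+1})$ or $p^{1-\theta}$ depending on whether the child is a permutation or a constraint. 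Conditions~\eqref{cond-decay-small}, \eqref{cond-decay-large}, \eqref{cond-decay3} of Condition~\ref{cond-branching-decay} are precisely calibrated so that branching$\,\times\,$contraction is bounded by a small constant fraction, making the induced tree-sum geometrically convergent and absorbable into the target $\rho(v_0,\ldots,v_{t-1})$.

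The main obstacle is the bookkeeping around Type~A chains. The induction hypothesis is stated only for w.s.\ ending outside $V_P$, yet a child $s''\in\mathsf{suf}_1(s)$ ends in $V_P$ and so cannot be fed directly into the IH. I plan to handle this by first bounding the sub-sum $\sum_{s'\in\mathsf{suf}_1^+(s'')}\rho(s')\widehat{f}(s')$ of a pure Type~A chain by a direct geometric estimate using \eqref{cond-decay-small} together with the cardinality bound on $\pzeta[P]$, and then attaching a further Type~B step at the tail so that the resulting endpoint lies outside $V_P$. This reduces the original recursion to one whose internal edges are only Type~B and whose ``effective leaves'' already carry the Type~A geometric factor; closing this reduced recursion via the inductive hypothesis yields the claimed bound. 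Once the induction is complete, specializing to the stated $s$ gives $\widehat{\beta}(s)\leq \rho(v_0,\ldots,v_{t-1})$.
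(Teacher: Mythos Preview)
Your proposal is essentially the paper's approach: structural (reverse) induction on length, with the recursion unfolded via Lemma~\ref{lemma-recursion-betas} into $\mathsf{suf}_1^+$ (Type~A) followed by $\mathsf{suf}_2$ (Type~B), then closed using Conditions~\ref{cond-branching-decay}\eqref{cond-decay-small},\eqref{cond-decay-large},\eqref{cond-decay3} together with the branching bounds of Corollary~\ref{cor-upbound-presprime}.

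Two small points to tighten. First, there is no ``pure Type~A chain'' to geometrically sum: the witness-sequence definition forbids $(v_i,v_j)\in\edgepermutation$ whenever $|i-j|\ge 2$, so two consecutive $\edgepermutation$-steps would force $(v_{t+2},v_t)\in\edgepermutation$, a contradiction; hence $\mathsf{suf}_1(s'')=\emptyset$ for any $s''\in\mathsf{suf}_1(s)$, and the Type~A branch contributes a single multiplicative factor rather than a geometric series. Second, your numerical bookkeeping is off in a couple of places: the number of Type~A children is $\approx |v_t|^{1/\theta-1}$ (siblings in $\pzeta[P]$), not $\gamma^\theta$ or $L$; the Type~B fan-out is $\lesssim |v_t|\,d^3k^2\gamma^{2\theta}$ because the $\+S^\ast(\cdot,\cdot,2)$ neighborhood is a length-$2$ walk (cf.\ Lemma~\ref{lemma-upperbound-pluss}); and the base-case for $v_t\in V_C$ uses \eqref{cond-decay-small}/\eqref{cond-decay-large}, not \eqref{cond-decay7}. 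With these corrections your plan goes through verbatim as in the paper.
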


We only prove \Cref{lemma-prob-prefix} for the case 
\begin{align}\label{eq-def-T}
\+T = \bigcup_{P\in \+R,P'\in \+D[P]}\+W[P,P']
\end{align}
The case $\+T = \+W$ can be proved similarly.
In the following subsection, we always assume \eqref{eq-def-T}.
Recall that $\pzeta$, $\csmall$, $\edgepermutation$, $S(\cdot,\cdot,\cdot)$ and $\+S^{\ast}(\cdot,\cdot,\cdot)$ are dependent on $\pfinal$.
However, given any \emph{w.s.} $(v_0,\cdots,v_t)\in \+T$, one can use these notations without specifying $\pfinal$, because $\pfinal$ is fixed by $v_t\subseteq \pfinal$.

By \Cref{def-weight-prefix-witness}, we have the following lemma.

\begin{lemma}\label{lemma-recursion-betas}
For any $s = (v_0,\cdots,v_t)\in \+T$ where $t\geq 1$, if $v_t\in \csmall$, we have
\begin{equation}\label{beta-vtincs}
  \widehat{\beta}(s) \leq \rho(s)\cdot \weightfunction(s)+ 
 \sum_{s'\in \mathsf{suf}_2(s)}\widehat{\beta}(s').    
\end{equation}
if $v_t\in \pzeta$, we have
\begin{equation}\label{beta-vtinz}
  \widehat{\beta}(s) \leq \sum_{s'\in \mathsf{suf}^{+}_1(s)} \left(\rho(s')\cdot \weightfunction(s')+\sum_{s''\in \mathsf{suf}_2(s')}\widehat{\beta}(s'')\right).    
\end{equation}
\end{lemma}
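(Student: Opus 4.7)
The plan is to prove both parts by unfolding the definition of $\widehat{\beta}(s)$ and exploiting the fact that every proper suffix of $s$ has a unique first one-step extension, which must lie either in $\mathsf{suf}_1(s)$ or $\mathsf{suf}_2(s)$. Concretely, I would first observe that $\mathsf{suf}(s)$ decomposes as the disjoint union $\bigsqcup_{s'\in \mathsf{suf}_1(s)\cup\mathsf{suf}_2(s)} \mathsf{suf}^+(s')$, since any element of $\mathsf{suf}(s)$ is obtained by appending at least one vertex to $s$, and its first appended vertex is either in $V_P$ of the resulting sequence (contributing to $\mathsf{suf}_1(s)$) or not (contributing to $\mathsf{suf}_2(s)$). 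Substituting this decomposition into the definition~\eqref{eq-def-beta} yields the master identity
\[
\widehat{\beta}(s) \;=\; \rho(s)\,\weightfunction(s) \;+\; \sum_{s'\in \mathsf{suf}_1(s)} \widehat{\beta}(s') \;+\; \sum_{s'\in \mathsf{suf}_2(s)} \widehat{\beta}(s').
\]

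For the first case ($v_t\in\csmall$), I would argue that $\mathsf{suf}_1(s)=\emptyset$: extending a witness sequence into $V_P$ at position $t{+}1$ requires $(v_t,v_{t+1})\in \edgepermutation$, but the relation $\edgepermutation$ is defined only between sub-permutation sets in $\pzeta$, whereas $v_t$ is a constraint. Plugging $\mathsf{suf}_1(s)=\emptyset$ into the master identity produces exactly \eqref{beta-vtincs} (in fact with equality).

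For the second case ($v_t\in\pzeta$), the key combinatorial observation is that chains of $\mathsf{suf}_1$ extensions have length at most one. Indeed, suppose $s'=(v_0,\ldots,v_t,v_{t+1})\in\mathsf{suf}_1(s)$ and $s''=(v_0,\ldots,v_{t+1},v_{t+2})\in\mathsf{suf}_1(s')$. Then $(v_t,v_{t+1}),(v_{t+1},v_{t+2})\in\edgepermutation$, which forces $v_t,v_{t+1},v_{t+2}$ to all be sub-permutation sets of a common $P\in\pcurrent$, so $(v_t,v_{t+2})\in\edgepermutation$. This contradicts the second bullet of \Cref{def-witness}, which forbids $(v_{t+2},v_j)\in\edgepermutation$ for $j\le t$. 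Consequently, for every $s'\in\mathsf{suf}_1(s)$ we have $\mathsf{suf}_1(s')=\emptyset$, and the master identity applied at $s'$ gives $\widehat{\beta}(s')=\rho(s')\weightfunction(s')+\sum_{s''\in\mathsf{suf}_2(s')}\widehat{\beta}(s'')$. Substituting this back and regrouping the $s$ term with the $\mathsf{suf}_1(s)$ terms via $\mathsf{suf}_1^+(s)=\mathsf{suf}_1(s)\cup\{s\}$ yields \eqref{beta-vtinz}.

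The main obstacle is bookkeeping rather than substance: I must verify carefully that the disjoint-union decomposition of $\mathsf{suf}(s)$ really is disjoint (i.e.\ a proper suffix is uniquely reconstructed from its first one-step extension plus the rest), and I must be careful that all intermediate sequences $s'$, $s''$ remain bona fide witness sequences in $\+T$ as defined in \Cref{def-witness}, so that the recursive application of the definition of $\widehat{\beta}$ is valid. Once the structural lemmas (empty $\mathsf{suf}_1$ in Case 1, and length-one $\mathsf{suf}_1$ chains in Case 2) are in place, both claimed inequalities reduce to a direct algebraic rearrangement.
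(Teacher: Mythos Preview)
Your proposal is correct and follows essentially the same approach as the paper: both decompose $\mathsf{suf}(s)$ according to the first one-step extension, note that $\mathsf{suf}_1(s)=\emptyset$ when $v_t\in\csmall$, and use the second bullet of \Cref{def-witness} (no $(v_i,v_j)\in\edgepermutation$ for $j\in[i-2]$) to rule out two consecutive $\edgepermutation$-steps when $v_t\in\pzeta$. Your ``master identity'' formulation is slightly more explicit than the paper's case analysis of $\omega\in\mathsf{suf}(s)$, but the content is identical.
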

\begin{proof}
We only prove \eqref{beta-vtinz} here.
The proof of \eqref{beta-vtincs} is similar.
Given any $s=(v_0,v_1, \cdots,v_t)$ where $t\geq 1$, the following are all possibilities for the sequences in $\omega \in \mathsf{suf}(s)$:
\begin{itemize}
\item Either there is a $v_{t+1}$ satisfying $(v_{t},v_{t+1})\in \edgepermutation$ or not.
If there is such a $v_{t+1}$, let $x = v_{t+1}$,
otherwise, let $x = v_{t}$.
By letting $s'=(v_0,v_1,\cdots,x)$, we have 
$s'\in \mathsf{suf}^{+}_1(s)$.
\item Either $x$ has a next vertex in $\omega$ or not.
If there is no such a vertex, we have $\rho(\omega)f(\omega) = \rho(s')f(s')$.
Otherwise, $x$ has a next vertex $\tau$ in $\omega$.
If $x = v_t$, we have $(v_{t},v_{t+1})\not\in \edgepermutation$.
Thus, $(x,\tau)= (v_{t},v_{t+1})\not\in \edgepermutation$.
Otherwise, $x = v_{t+1}$. By $(v_{t},v_{t+1})\in \edgepermutation$ and $(v_{t},v_{t+2})\not\in \edgepermutation$ due to \Cref{def-witness}, we have $(x,\tau) = (v_{t+1},v_{t+2})\in \edgepermutation$.
In summary, we always have 
$(x,\tau)\not\in \edgepermutation$.
Let $s''=(v_0,v_1,\cdots,x,\tau)$.
By $(x,\tau)\not\in \edgepermutation$, we have 
$\tau \not\in V_P$.
Thus, $s''\in \mathsf{suf}_2(s')$.
Therefore, \[\sum_{\omega\in \mathsf{suf}^+(s'')}\rho(\omega)\cdot \weightfunction(\omega) = \widehat{\beta}(s'').\]
\end{itemize}
Combining with \Cref{def-weight-prefix-witness}, 
\eqref{beta-vtinz} is immediate.
\end{proof}

Recall the definitions of $\+S(\cdot,\cdot,\cdot)$ and $\+S^{\ast}(\cdot,\cdot,\cdot)$ in \Cref{def-prop-traj}.
Given any $P'\in \+P$, set $T$ and integer $i>0$, define
\begin{equation*}
\begin{aligned}
\+S_{P'}(P,T,i) = 
\begin{cases}
\+S(\{P\},T,i)  & \text{ if $P\in \pzeta$, $T\subseteq \pzeta\cup \csmall$  under $\pfinal = \+P'$,} \\
\emptyset& \text{ otherwise.}
\end{cases} 
\end{aligned}
\end{equation*}
Define $\+S^{\ast}_{P'}(P,T,i)$, $\+S_{P'}(C,T,i)$ and $\+S^{\ast}_{P'}(C,T,i)$ similarly.
We will omit $P'$ from these notations if $P'$ is clear from the context.
The following lemma is used in the proof of \Cref{lemma-prob-prefix}.
Its proof is immediate by the definition of $\+S(\cdot,\cdot,\cdot)$ and $\+S^{\ast}(\cdot,\cdot,\cdot)$.


\begin{lemma}\label{lemma-upperbound-pluss}
For any $P'\in \+P$, $P$, $C$, and integer $i>0$, we have
\begin{align*}
&\abs{\bigcup_{P'\in \+R}\+S_{P'}(P,\pzeta,i)}\leq \abs{P}(2dk)^{i} \sizeuperboundlp^{(i-1)\theta},\quad \abs{\bigcup_{P'\in \+R}\+S_{P'}(P,\csmall,i)}\leq \abs{P}d^{i+1} (2k)^{i}\sizeuperboundlp^{i\theta}.
\end{align*}
Therefore, 
\begin{align*}
&\abs{\bigcup_{P'\in \+R}\+S^{\ast}_{P'}(P,\pzeta,i)}\leq i\abs{P}(2dk)^{i} \sizeuperboundlp^{(i-1)\theta},\quad \abs{\bigcup_{P'\in \+R}\+S^{\ast}_{P'}(P,\csmall,i)}\leq i\abs{P}d^{i+1} (2k)^{i}\sizeuperboundlp^{i\theta}.
\end{align*}
\end{lemma}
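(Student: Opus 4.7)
The plan is to prove both pairs of bounds by a direct step-by-step counting argument; the second pair follows immediately from the first by summing over the lengths $j = 0, 1, \dots, i$.

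First I would fix $P' \in \+R$ and bound $|\+S_{P'}(P, \pzeta, i)|$ by tracing a propagation trajectory $(v_0 = P, v_1, \dots, v_i, v_{i+1})$ from left to right. The opening step $v_0 \to v_1$ contributes at most $|\+C(P)| \le |P|d$ choices, since $P$ has $|P|$ variables and each lies in at most $d$ constraints. For each middle step $v_{j-1} \to v_j$ with $j \in \{2, \dots, i\}$, the relation $(v_{j-1}, v_j) \in \edgeconstraint$ requires a shared permutation $Q \in \psmall$; the constraint $v_{j-1}$ has at most $k$ variables, each in a unique permutation of $\pzeta$, and each such $Q \in \psmall$ has $|Q| \le \sizeuperboundlp^\theta$ (up to an additive $O(1)$ coming from the $\theta$-decomposition) with each of its variables in at most $d$ constraints. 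Hence $v_j$ has at most $k \sizeuperboundlp^\theta d$ choices. The closing step $v_i \to v_{i+1} \in \pzeta$ requires $v_{i+1}$ to contain some variable of $v_i$, giving at most $k$ choices. Multiplying yields $|\+S_{P'}(P, \pzeta, i)| \le |P| (kd)^i \sizeuperboundlp^{(i-1)\theta}$. The count for $\+S_{P'}(P, \csmall, i)$ is analogous except that the closing step replaces the factor $k$ by another copy of $k \sizeuperboundlp^\theta d$, yielding $|P| k^i d^{i+1} \sizeuperboundlp^{i\theta}$.

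Next I would account for the union over $P' \in \+R$. The issue is that $\pzeta$ and $\psmall$ depend on $\pfinal$: for $P' \in \+R$, the sub-permutations on $\pfinal$ use $\+D[P']$ while those on every other original permutation use $\+D'$. However, for any fixed variable $v$, its containing sub-permutation across the family $\{\pzeta\}_{P' \in \+R}$ takes at most two values, namely the $\+D'$-sub-permutation when $\pfinal$ is not the original permutation of $v$, and the $\+D[P']$-sub-permutation when it is. Propagating this observation through the trajectory, at each of the $i$ steps the relevant permutation set that controls the adjacency (via $\psmall$ or membership in $\pzeta$) has at most two realizations as $P'$ varies, so the union pays at most a factor $2^i$. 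This yields the claimed $|P|(2kd)^i \sizeuperboundlp^{(i-1)\theta}$ and $|P| d^{i+1} (2k)^i \sizeuperboundlp^{i\theta}$.

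Finally, for the $\+S^{\ast}$ bounds I would simply sum the $\+S$ bounds for $j = 0, 1, \dots, i$. Because each of the $i+1$ terms is dominated by the $j = i$ term, the sum is at most $i$ times the maximum, giving the stated bounds (the $j = 0$ term contributing at most $1$ is absorbed into the loose constants). The main obstacle, such as it is, will be carefully tracking the $O(1)$ slack coming from the $\theta$-decomposition's almost-but-not-exactly $\sizeuperboundlp^\theta$ sub-permutation sizes, and rigorously justifying the factor-of-two slack when taking the union over $P' \in \+R$; both concerns are mild because the target bound carries a generous $2^i$ buffer beyond the naive branching estimate.
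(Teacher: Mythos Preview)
Your proposal is correct and follows essentially the same approach as the paper: both arguments count propagation trajectories step by step and both hinge on the observation that each variable lies in at most two $\pzeta$-permutation sets (one from $\+D$ and one from $\+D'$) as $\pfinal$ ranges over $\+R$. The only difference is organizational: you first bound the count for a single fixed $\pfinal$ and then pay a global factor $2^i$ for the union, whereas the paper folds the factor $2$ directly into the branching bound at each of the $i$ relevant steps ($2dk\sizeuperbound$ per middle step, $2k$ at the closing step), arriving at the identical product.
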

\begin{proof}
We only prove the conclusion about $\+S_{P'}(P,\pzeta,i)$ here. The proofs for other conclusions are similarly.  
Consider the \propagate $(u = P, v_1, \cdots,v_{i+1}=P^{\ast})$ where $P^{\ast}\in \pzeta$.
By \Cref{def-prop-traj}, we have $v_1\in \+C(P)$, $(v_j,v_{j+1})\in \edgeconstraint$ for each $j\in [i-1]$, and $v_{i}\cap \+C(P^{\ast})$.
By $v_1\in \+C(P)$, we have there are at most $d\abs{P}$ choices for $v_1$. 
For any fixed $v_j$ where $j\in [i-1]$, by $(v_j,v_{j+1})\in \edgeconstraint$, we have there are at most $2dk\sizeuperbound$ choice for $v_{j+1}$.
Because the edge $v_j$ has at most $k$ variables.
Each variable can determine one permutation set in $\+D$ and another permutation set in $\+D'$. 
If this permutation set is in $\psmall$, it can connect to at most $d\sizeuperbound$ edges.
For any fixed $v_{i}$, by $v_{i}\in \+C(P^{\ast})$, we have there are at most $2k$ choice for $P^{\ast}$, because $P^{\ast}$ can be chosen from either $\+D$ and $\+D'$.
Thus, the conclusion about $\+S_{P'}(P,\pzeta,i)$ is immediate.
The lemma is proved.
\end{proof}

We have the following corollary by \Cref{lemma-upperbound-pluss}.
One can verify that it also holds for $\+T = \+W$.
\begin{corollary}\label{cor-upbound-presprime}
Given any \emph{w.s.} $s= (v_0,\cdots,v_t)$ where $t\geq 0$, $v_t\in \pzeta$ and 
$s'\in \pref^+_1(s)$  
we have 
\begin{align*}
    \abs{\pref_2(s)}\leq 20\abs{v_{t}}d^3k^2\sizeuperboundsquare,\quad \abs{\pref_2(s')}<_{q} 20\abs{v_{t}}d^3k^2\sizeuperboundsquare.
\end{align*}
In addition, if $v_t\in \psmall$, we have
\begin{align*}
    \abs{\pref_2(s)}\leq 20d^3k^2\sizeuperboundlp^{3\theta}.
\end{align*}
\end{corollary}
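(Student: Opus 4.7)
The plan is to unpack the definition of $\pref_2(\cdot)$ and reduce the counting problem directly to \Cref{lemma-upperbound-pluss}. By \Cref{def-witness} and \Cref{def-weight-prefix-witness}, every element of $\pref_2(s)$ has the form $(v_0,\ldots,v_t,v_{t+1})$ with $v_{t+1}\notin V_P$; the w.s. adjacency rule then forces $(v_t,v_{t+1})\in \+S^{\ast}(\pzeta\cup\csmall,\pzeta\cup\csmall,2)$ rather than $(v_t,v_{t+1})\in \edgepermutation$. Thus $|\pref_2(s)|$ is bounded by the number of vertices $u\in \pzeta\cup\csmall$ with $d(v_t,u)\le 2$, which is exactly what \Cref{lemma-upperbound-pluss} counts, applied with $P=v_t$.

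First I would split the count by the type of $v_{t+1}$. Feeding $i=1,2$ and $T=\pzeta$ into \Cref{lemma-upperbound-pluss} yields $\abs{\+S^{\ast}(\{v_t\},\pzeta,2)}\le 2\abs{v_t}(2dk)^2\sizeuperboundlp^{\theta}=8\abs{v_t}d^2k^2\sizeuperboundlp^{\theta}$, and the analogous bound with $T=\csmall$ gives $\abs{\+S^{\ast}(\{v_t\},\csmall,2)}\le 2\abs{v_t}d^3(2k)^2\sizeuperboundlp^{2\theta}=8\abs{v_t}d^3k^2\sizeuperboundlp^{2\theta}$. Adding these and absorbing the first term into the second via $\sizeuperboundlp^{\theta}\le \sizeuperboundlp^{2\theta}$ and $d\ge 1$ yields the claimed bound $20\abs{v_t}d^3k^2\sizeuperboundsquare$.

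For the second inequality, any $s'\in \pref^+_1(s)$ is either $s$ itself (already handled) or $s'=(v_0,\ldots,v_t,v_{t+1})$ with $(v_t,v_{t+1})\in \edgepermutation$, meaning $v_t$ and $v_{t+1}$ lie in the common partition of some $P\in \+P'$ under $\pzeta\in\{\+D,\+D'\}$. The decomposition hypothesis ($\+D$ is a $\theta/\eta$-decomposition and $\+D'$ a $\theta$-decomposition) gives $\abs{v_{t+1}}=_q \abs{v_t}$. Applying the first part to $s'$ replaces $\abs{v_t}$ by $\abs{v_{t+1}}$ in the bound, and the $=_q$ relation then yields $\abs{\pref_2(s')}<_q 20\abs{v_t}d^3k^2\sizeuperboundsquare$. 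The last assertion is immediate: if $v_t\in\psmall$ then $\abs{v_t}\le \sizeuperboundlp^{\theta}$, so the first bound specializes to $20d^3k^2\sizeuperboundlp^{3\theta}$.

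The only non-routine step is justifying the transfer $\abs{v_{t+1}}=_q\abs{v_t}$ in the $s'$-case; this is not a real obstacle because the structural assumption on $\+D$ and $\+D'$ (as a $\theta/\eta$- and $\theta$-decomposition, together with the global stipulation at the start of \Cref{subsec:couplingconstruction} that $\abs{P_i}-\abs{P_j}=O(1)$ within any $\+P'[P]$) gives the required comparability up to lower-order factors absorbed by $=_q$. Everything else is direct book-keeping with \Cref{lemma-upperbound-pluss}.
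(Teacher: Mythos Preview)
Your proposal is correct and follows essentially the same route as the paper: unpack $\pref_2(\cdot)$, use the witness-sequence adjacency rule to force $(v_t,v_{t+1})\in\+S^{\ast}(\cdot,\cdot,2)$, split by whether $v_{t+1}\in\pzeta$ or $v_{t+1}\in\csmall$, apply \Cref{lemma-upperbound-pluss} at $i=2$, and then transfer from $\abs{v_{\ell}}$ to $\abs{v_t}$ via $(v_t,v_{t+1})\in\edgepermutation$ and the balanced-decomposition assumption, which is exactly the paper's $\abs{v_\ell}=_q\abs{v_t}$ step. The only cosmetic difference is that the paper writes the combined bound as $2\abs{v_\ell}(2dk)^2\sizeuperbound+3\abs{v_\ell}d^3(2k)^2\sizeuperboundsquare$ before absorbing it into $20\abs{v_t}d^3k^2\sizeuperboundsquare$, while you use the coefficient $2$ in both terms; either way the sum is comfortably below $20$.
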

\begin{proof}
We only prove the upper bound of $\abs{\pref_2(s')}$ when $v_t\in \pzeta$,
the proofs for other cases are similar.
Given any $s'=(v_0,\cdots,v_{\ell-1},v_{\ell})\in \pref^+_1(s)$ where $\ell = t$ or $t+1$,
consider the \emph{w.s.} $s''=(v_0,\cdots,v_{\ell},v_{\ell+1})\in \pref_2(s')$.
By $s''\in \pref_2(s')$, we have $v_{\ell+1} \not\in V_{P}$.
Combined with \Cref{def-witness}, we have $(v_\ell,v_{\ell+1})\in \+S^{\ast}(v_\ell,\pzeta,2)$ for each $v_{\ell+1}\in \pzeta$ and  $(v_\ell,v_{\ell+1})\in \+S^{\ast}(v_\ell,\csmall,2)$ for each $v_{\ell+1}\in \csmall$.
Combining with \Cref{lemma-upperbound-pluss}, we have
\begin{align*}
    \abs{\pref_2(s')}\leq  \abs{\bigcup_{P'\in \+R}\+S^{\ast}_{P'}(v_{\ell},\pzeta,2)} +  \abs{\bigcup_{P'\in \+R}\+S^{\ast}_{P'}(v_{\ell},\csmall,2)}\leq 2\abs{v_{\ell}}(2dk)^2\sizeuperbound +3\abs{v_{\ell}}d^3(2k)^2\sizeuperboundsquare.
\end{align*}
Moreover, by $s'=(v_0,\cdots,v_{\ell-1},v_{\ell})\in \pref^+_1(s)$ where $\ell = t$ or $t+1$, we have 
either $v_{\ell} = v_t$ or $v_{\ell} = v_{t+1}$ and $(v_t,v_{t+1})\in \edgepermutation$.
Thus, $\abs{v_{\ell}} =_{q} \abs{v_t}$.
We have 
$\abs{\pref_2(s')}<_{q} 20\abs{v_{t}}d^3k^2\sizeuperboundsquare$.
The lemma is immediate.
\end{proof}


Now we can prove \Cref{lemma-prob-prefix}.
\begin{proof}[Proof of \Cref{lemma-prob-prefix}]
    We prove this lemma by structural induction. 
    For any $s = (v_1,v_2, \cdots,v_{t})\in \+T$,  by \Cref{def-witness}, we have $v_i\in \pzeta\cup \csmall$ and $v_i\neq v_j$ for any different $i,j\in [t]$.
    Thus, the possible choices of $v_i$ is bounded and $t$ is also bounded.
    We have $\abs{\+T}$ is also bounded.
    Therefore, for any $s \in \+T$, either $\mathsf{suf}(s) = \emptyset$ or   
    there exists some $s'\in \mathsf{suf}(s)$ where $\mathsf{suf}(s') = \emptyset$. 
    Given any $s = (v_0,v_1, \cdots,v_t)\in \+T$ where $t\geq 1$ and $v_t\not\in V_{P}$, 
     We first show that $\widehat{\beta}(s)\leq \rho(v_1,v_2, \cdots,v_{t-1})$ for any $s$ where $\mathsf{suf}(s) = \emptyset$.  
     By $v_t\not\in V_{P}$,  
    we have either $v_t \in V_C$ or $v_t \in \csmall$.   
    By \eqref{eq-def-beta} and $\mathsf{suf}(s) = \emptyset$,
    we have $\widehat{\beta}(s)= \rho(s)\cdot \weightfunction(s)$.
    In the following, we show $\widehat{\beta}(s)\leq \rho(v_1,v_2, \cdots,v_{t-1})$ for two different cases.
    
\smallskip
\paragraph{\textbf{Case I:} $v_t\in V_C$} We have
\begin{align*}
    \widehat{\beta}(s)&=\rho(s)\cdot \weightfunction(s)\leq \rho(v_0,\cdots,v_{t-1})\cdot \lambda_2(v_t) \cdot \weightfunction(s)\\
    &= \lambda_2(v_t) \abs{v_{t}}\weightconstant\rho(v_0,\cdots,v_{t-1})\leq \rho(v_0,\cdots,v_{t-1}),
\end{align*} where the last inequality holds by~\ref{cond-decay-small} and~\ref{cond-decay-large} in~\Cref{cond-branching-decay}.

\smallskip
\paragraph{\textbf{Case  II:} $v_t\in \csmall$}
We have
\begin{align*}
    \widehat{\beta}(s)&=\rho(s)\cdot \weightfunction(s)\leq \rho(v_0,\cdots,v_{t-1})\cdot p^{1-\theta}\cdot \weightfunction(s)\\
    &\leq \rho(v_0,\cdots,v_{t-1})\cdot p^{1-\theta} \sizeuperbound\weightconstant\leq \rho(v_0,\cdots,v_{t-1}),
\end{align*}where the last inequality holds by~\ref{cond-decay3} in~\Cref{cond-branching-decay}. 

For the induction step, given any \emph{w.s.} $s=(v_0,v_1,\cdots,v_t)$ where $t\geq 1$ and $v_t\not\in V_{P}$, we assume that $\widehat{\beta}(s')\leq \rho(v_0,v_1,\cdots,v_{\ell-1})$ for each $s'=(v_0,v_1,\cdots,v_{\ell})$ where $s'\in \pref(s)$ and $v_{\ell}\not\in V_{P}$.
Similar to the base case, we prove the claim for induction cases by discussing the component $v_t$ respectively.

\smallskip
\paragraph{\textbf{Case I:} $v_t\in V_C$} 
Assume $s = (v_0,\cdots,v_t)$. For any $s'=(v_0,\cdots,v_t,v_{t+1})\in \pref_1(s)$, we have $v_{t+1}\in V_{P}$.
Thus, $(v_t,v_{t+1})\in \edgepermutation$. 
Let $P_1\in \+P$, $P_2\in \+P'$ satisfy $v_t\subseteq P_2\subseteq P_1$.
If $P_1^{\theta} =_{q} \abs{v_t}$, we have $v_t\in \+D$.
Thus $v_{t+1}$ is also in $\+D$.
Therefore, there are at most $P_1^{1-\theta} =_{q} \abs{v_t}^{1/\theta-1}$ choices for $v_{t+1}$ for fixed $v_t$.
Otherwise, $P_2^{\theta} =_{q} \abs{v_t}$, we have $v_t\in \+D'$.
Thus $v_{t+1}$ is also in $\+D'$.
Therefore, there are at most $P_2^{1-\theta} =_{q} \abs{v_t}^{1/\theta-1}$ choices for $v_{t+1}$ for fixed $v_t$.
In summary, we always have
\begin{align*}
\abs{\{v_{t+1}\mid (v_t,v_{t+1})\in \edgepermutation\}}<_{q} {\abs{v_t}^{1/\theta-1}}.
\end{align*}
Thus, we have 
\begin{align*}
    \abs{\pref_1(s)}&\leq \abs{v_t}^{1/\theta-1}.
\end{align*}
Moreover, by \eqref{eq-def-rho} we have 
\[\rho(s') = \rho(s)\cdot \lambda_1(v_{t+1}) = \rho(s)\cdot \lambda_1(v_{t}).\] 
Thus, we have
\begin{align}\label{eq-upperbound-sumsprimepreplus1s}
\sum_{s'\in \mathsf{suf}^{+}_1(s)} \rho(s') \leq \rho(s) + \sum_{s'\in \mathsf{suf}_1(s)}
\rho(s)\cdot \lambda_1(v_t) \leq \rho(s)(1 + \abs{v_t}^{1/\theta-1}\cdot \lambda_1(v_t))
\end{align}
Therefore, we have 
\begin{align*}
\widehat{\beta}(s) & \leq \sum_{s'\in \mathsf{suf}^{+}_1(s)} \left(\rho(s')\cdot \weightfunction(s')+\sum_{s''\in \mathsf{suf}_2(s')}\widehat{\beta}(s'')\right)\tag{by~\eqref{beta-vtinz}}\\
&\leq \sum_{s'\in \mathsf{suf}^{+}_1(s)} \left(\rho(s')\cdot \weightfunction(s')+\sum_{s''\in \mathsf{suf}_2(s')}\rho(s')\right)\tag{by the inductive assumption}\\
&\leq \sum_{s'\in \mathsf{suf}^{+}_1(s)} \left(\rho(s')\cdot \left(\abs{v_{t}}\weightconstant +20\abs{v_{t}}d^3 k^2 \sizeuperboundsquare\right)\right)\tag{by~\eqref{eq-def-weightconstant} and \Cref{cor-upbound-presprime}}\\
&<_{q} \sum_{s'\in \mathsf{suf}^{+}_1(s)} \left(2\abs{v_{t}}\weightconstant\rho(s')\right)\\
&<_{q} 2\abs{v_{t}}\weightconstant(1 + \abs{v_t}^{1/\theta-1}\lambda_1(v_t))\rho(s)\tag{by~\eqref{eq-upperbound-sumsprimepreplus1s}}\\
&<_{q} 2\abs{v_{t}}\weightconstant(1 + \abs{v_t}^{1/\theta-1}\lambda_1(v_t)) \lambda_2(v_t)\rho((v_0,\cdots,v_{t-1}))\\
&\leq \rho(v_0,\cdots,v_{t-1}).\tag{by~\ref{cond-decay-small}, \ref{cond-decay-large} in~\Cref{cond-branching-decay}}
\end{align*}

\smallskip
\paragraph{\textbf{Case II:} $v_t\in \csmall$} We have 
\begin{align*}
\widehat{\beta}(s) & \leq \rho(s)\cdot \weightfunction(s)+ 
 \sum_{s'\in \mathsf{suf}_2(s)}\widehat{\beta}(s')\tag{by~\eqref{beta-vtincs}}\\
&\leq \rho(s)\cdot \weightfunction(s)+ 
 \sum_{s'\in \mathsf{suf}_2(s)}\rho(s)\tag{by the inductive assumption}\\
&\leq \rho(s)\cdot \left(\sizeuperbound\weightconstant +20d^3 k^2 \sizeuperboundlp^{3\theta}\right)\tag{by~\eqref{eq-def-weightconstant} and \Cref{cor-upbound-presprime}}\\
&<_{q} \rho(v_0,\cdots,v_{t-1})\cdot   2p^{1-\theta}\sizeuperbound\weightconstant\tag{by~\eqref{eq-def-rho}} \\
&\leq \rho(v_0,\cdots,v_{t-1}).\tag{by~\ref{cond-decay3} in~\Cref{cond-branching-decay}}
\end{align*}

Combining the proofs for Cases I and II, the induction step is finished. The lemma is immediate.
\end{proof}

\subsection*{Proofs of Lemmas \ref{lemma-refutation-long-path} and \ref{lemma-refutation-long-path-small-pfinal}}
The following two lemmas are used in our proofs.
\begin{lemma}\label{lemma-upperbound-beta-init-small-case}
Assume $\abs{\pinit} \leq \sizeuperboundlp$.
We have
\begin{equation*}
    \begin{aligned}
    \sum_{s\in \+W} \rho(s)\cdot f(s) \cdot \id{\abs{s}> 1}<_{q} (300\Delta)^{-1}kd^2\sizeuperbound\abs{\pinit}^{\theta} \E{\abs{\pdiffinit}}.
    \end{aligned}
\end{equation*}
\end{lemma}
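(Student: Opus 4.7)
The plan is to prove this lemma by decomposing the sum over witness sequences of length exceeding one according to their first two vertices $(v_0, v_1)$, and then using \Cref{lemma-prob-prefix} to control the cumulative suffix contribution through $\widehat{\beta}$.

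The key structural observation is the following: for any witness sequence $s = (v_0, v_1, \ldots, v_t) \in \+W$ with $t \geq 1$, the second vertex $v_1$ is never in $V_P(s)$, because the definition of $V_P$ in \Cref{def-witness} demands indices $i > 1$. Using $f(s) \leq \weightfunction(s)$ and grouping witnesses by their length-2 prefix, I will write
\[
\sum_{s \in \+W,\, |s|>1} \rho(s)\, f(s) \;\leq\; \sum_{(v_0, v_1) \in \+W} \widehat{\beta}((v_0, v_1)),
\]
where each summand on the right aggregates $\rho(s')\weightfunction(s')$ over all $s' \in \mathsf{suf}^{+}((v_0, v_1))$. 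Because every such length-2 prefix has $v_1 \notin V_P$, I can apply \Cref{lemma-prob-prefix} with $t = 1$ to obtain $\widehat{\beta}((v_0, v_1)) \leq \rho((v_0)) = \Pr[v_0 \in \pdiffinit]$.

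Next, I will count the admissible $v_1$ for each fixed $v_0 \in \pzeta[\pinit]$. By the extension condition in \Cref{def-witness} and the fact that the edgepermutation option requires $i > 1$, the pair must satisfy $(v_0, v_1) \in \+S^{\ast}(\pzeta \cup \csmall, \pzeta \cup \csmall, 2)$. Splitting by whether $v_1 \in \pzeta$ or $v_1 \in \csmall$ and invoking \Cref{lemma-upperbound-pluss} yields a count of order $|v_0|\, d^3 k^2 \gamma^{2\theta}$. Plugging this back in gives
\[
\sum_{s \in \+W,\, |s|>1} \rho(s)\, f(s) \;\lesssim\; d^3 k^2 \gamma^{2\theta} \sum_{v_0 \in \pzeta[\pinit]} |v_0|\, \Pr[v_0 \in \pdiffinit].
\]

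Finally, the hypothesis $|\pinit| \leq \gamma$ places $\pinit$ in $\psmall$ under either branch of the algorithm, so every $v_0 \in \pzeta[\pinit]$ is a piece of a $\theta$-decomposition (or a finer $\theta/\eta$-decomposition) of $\pinit$ and satisfies $|v_0| <_{q} |\pinit|^{\theta}$. Pulling this uniform bound out and using $\sum_{v_0 \in \pzeta[\pinit]} \Pr[v_0 \in \pdiffinit] \leq \E[|\pdiffinit|]$ then yields a bound of the form $d^3 k^2 \gamma^{2\theta} |\pinit|^{\theta} \E[|\pdiffinit|]$, after which I have to convert the $\gamma^{2\theta}$ factor into the desired $\gamma$. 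The main obstacle I expect is precisely this exponent-matching step: I must verify that the decay conditions in \Cref{cond-branching-decay}, particularly \ref{cond-decay-small} together with \ref{cond-decay4}, deliver the required slack $dk\Delta\, \gamma^{2\theta - 1} \lesssim 1/300$. A secondary care-point is that in the $|\pfinal| < \alpha$ branch of \Cref{Alg:coupling}, $\pzeta[\pinit]$ equals $\+D'[\pinit]$ rather than $\+D[\pinit]$, so the uniform size bound on $|v_0|$ must be checked against both decompositions.
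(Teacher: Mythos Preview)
Your structural plan is the same as the paper's: group by the length-two prefix $(v_0,v_1)$, note that $v_1\notin V_P$, apply \Cref{lemma-prob-prefix}, and then count the choices for $v_1$ via \Cref{cor-upbound-presprime} (which packages the application of \Cref{lemma-upperbound-pluss} you describe). That part is fine.

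The genuine gap is in the step where you pass from $f$ to $\widehat f$. You write $f(s)\le \widehat f(s)$ and bound $\sum_{|s|>1}\rho(s)f(s)$ by $\sum_{(v_0,v_1)}\widehat\beta((v_0,v_1))$. After applying \Cref{lemma-prob-prefix} and the $v_1$-count, this gives (up to constants) $d^3k^2\gamma^{2\theta}\,|\pinit|^{\theta}\,\E{|\pdiffinit|}$. The target, however, is $(300\Delta)^{-1}kd^2\gamma^{\theta}\,|\pinit|^{\theta}\,\E{|\pdiffinit|}$; note that $\sizeuperbound=\gamma^{\theta}$, not $\gamma$. The shortfall is therefore a factor of order $\Delta dk\gamma^{\theta}$, and since $\Delta,d,k,\gamma^{\theta}\ge 1$ this cannot be absorbed by any of the decay conditions in \Cref{cond-branching-decay}. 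Your proposed slack $dk\Delta\gamma^{2\theta-1}\lesssim 1$ is based on misreading the target exponent and, in any case, is not among the available conditions.

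The fix the paper uses is not to bound $f\le \widehat f$ but to observe from \Cref{def-weight-function} that the ratio is \emph{constant}: in both branches $f(s)/\widehat f(s)=k^2d^2\gamma^{\theta}/\weightconstant$. Hence
\[
\sum_{(v_0,v_1)\in\+W}\beta((v_0,v_1))\;=\;\frac{k^2d^2\gamma^{\theta}}{\weightconstant}\sum_{(v_0,v_1)\in\+W}\widehat\beta((v_0,v_1))\;\le\;\frac{k^2d^2\gamma^{\theta}}{\weightconstant}\sum_{(v_0,v_1)\in\+W}\rho((v_0)),
\]
and since $\weightconstant=6000\Delta k^3d^3\gamma^{2\theta}$ this extra factor is exactly $1/(6000\Delta kd\gamma^{\theta})$, which together with the $20|v_0|d^3k^2\gamma^{2\theta}$ count and $|v_0|<_q|\pinit|^{\theta}$ produces the claimed $(300\Delta)^{-1}kd^2\gamma^{\theta}|\pinit|^{\theta}\,\E{|\pdiffinit|}$ on the nose, with no appeal to \Cref{cond-branching-decay} needed at this step.
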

\begin{proof}
By \Cref{def-witness}, we have $v_0\in \pzeta[\pinit]$
and $v_{1} \not\in V_{P}$.
Combined with \Cref{def-weight-prefix-witness},
we have $(v_0,v_1)\in \pref_2((v_0))$.
Meanwhile, by $v_0\in \pzeta[\pinit]$ and $\abs{\pinit} < \sizeuperboundlp$,
we have $v_0\in \psmall \subseteq \pzeta$.
Combined with \Cref{cor-upbound-presprime}, 
we have $\abs{\pref_2((v_0))} <_{q} 20\abs{v_0}d^3 k^2 \sizeuperboundsquare$.
Thus,
\begin{equation}
\begin{aligned}\label{eqn-bound-sum-small-per}
    \abs{\set{v_1\mid (v_0,v_1)\in \+W}} = \abs{\pref_2((v_0))}
    <_{q} 20\abs{v_0}d^3 k^2 \sizeuperboundsquare. 
\end{aligned}
\end{equation}
Thus, we have
\begin{align*}
     \sum_{s\in \+W} \rho(s)\cdot f(s) \cdot \id{\abs{s}> 1}
     &=\sum_{(v_0,v_1)\in \+W}\beta(v_0,v_1)\\
     &\leq \sum_{(v_0,v_1)\in \+W}\widehat{\beta}(v_0,v_1)\cdot k^2d^2\sizeuperbound /\weightconstant\tag{by \Cref{def-weight-function}}\\
     &\leq \sum_{(v_0,v_1)\in \+W}\rho((v_0))\cdot k^2d^2\sizeuperbound /\weightconstant\tag{by $ v_{1} \not\in V_{P}$ and \Cref{lemma-prob-prefix}}\\
     &\leq  \sum_{v_0\in \pzeta[\pinit]} \rho((v_0))\cdot 20\abs{v_0}d^3 k^2 \sizeuperboundsquare \cdot k^2d^2\sizeuperbound /\weightconstant\tag{by~\eqref{eqn-bound-sum-small-per}}\\
     &\leq  (300\Delta)^{-1}kd^2\sizeuperbound\sum_{v_0\in \pzeta[\pinit]} \rho((v_0))\abs{\pinit}^{\theta}\tag{by~\eqref{eq-def-weightconstant}}\\
     &<_{q}   (300\Delta)^{-1}kd^2\sizeuperbound\sum_{v_0\in \pzeta[\pinit]} \Pr{v_0\in \pdiffinit} \cdot\abs{\pinit}^{\theta}\tag{by~\eqref{eq-def-rho}}\\
     &=_{q} (300\Delta)^{-1}kd^2\sizeuperbound\abs{\pinit}^{\theta} \E{\abs{\pdiffinit}}.
\end{align*}
\end{proof}

\begin{lemma}\label{lemma-upperbound-beta-init-large-case}
Assume $\abs{\pinit} > \sizeuperboundlp$.
We have
\begin{equation*}
\begin{aligned}
\sum_{s\in \+W} \rho(s)f(s) <_{q} 2k^2d^2\sizeuperbound\abs{\pinit}^{\theta}\E{\abs{\pdiffinit}}.
\end{aligned}
\end{equation*}
\end{lemma}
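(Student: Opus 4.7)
My plan is to mirror the proof of \Cref{lemma-upperbound-beta-init-small-case}, with two modifications dictated by the new regime $\abs{\pinit} > \sizeuperboundlp$. The first modification is that every bucket $v_0 \in \pzeta[\pinit]$ is now \emph{large}: whenever $\pinit \subseteq \pfinal$ the coupling is trivial, so the nontrivial case is $\pinit \not\subseteq \pfinal$, and then the decomposition rule of \Cref{Alg:coupling} forces $\pzeta[\pinit] = \+D'[\pinit]$, whose buckets satisfy $\abs{v_0} =_q \abs{\pinit}^{\theta}$ and in particular lie outside $\psmall$. The second modification is that the target bound must now also absorb singleton contributions $s=(v_0)$, which were discarded in the small-case lemma by the indicator $\id{\abs{s}>1}$. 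Accordingly, I will split $\sum_{s\in\+W}\rho(s)f(s)$ into singleton and non-singleton pieces and bound each separately.

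First I will handle the singletons. By \Cref{def-weight-function} and \eqref{eq-def-rho}, $\rho((v_0))\,f((v_0)) = \Pr{v_0\in \pdiffinit}\cdot \abs{v_0}\,k^2 d^2\,\sizeuperbound$. Using $\abs{v_0}=_q\abs{\pinit}^{\theta}$ uniformly over $\pzeta[\pinit]$, this part is at most
\[
k^2 d^2\,\sizeuperbound\,\abs{\pinit}^{\theta}\sum_{v_0\in\pzeta[\pinit]}\Pr{v_0\in \pdiffinit} \;=\; k^2 d^2\,\sizeuperbound\,\abs{\pinit}^{\theta}\,\E{\abs{\pdiffinit}},
\]
already matching the target with leading constant $1$.

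Next I will treat sequences with $\abs{s}>1$ by reproducing the chain used in the small case. Since $V_P$ is defined only for indices $i>1$, we always have $v_1\notin V_P$, whence $(v_0,v_1)\in \pref_2((v_0))$ and \Cref{lemma-prob-prefix} gives $\widehat{\beta}((v_0,v_1))\leq \rho((v_0))$. The only substantive change from the small case is the combinatorial count of continuations: because $v_0\notin \psmall$, I must invoke the \emph{first} bound of \Cref{cor-upbound-presprime}, namely $\abs{\pref_2((v_0))} <_q 20\abs{v_0} d^3 k^2 \sizeuperboundsquare$, in place of the sharper $20 d^3 k^2 \sizeuperboundlp^{3\theta}$ used in the small case. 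Substituting this together with $\weightconstant = 6000\Delta k^3 d^3\sizeuperboundsquare$ from \eqref{eq-def-weightconstant} bounds the non-singleton part by
\[
(300\Delta)^{-1}\,k d^2\,\sizeuperbound \sum_{v_0\in\pzeta[\pinit]} \rho((v_0))\,\abs{v_0} \;<_q\; (300\Delta)^{-1}\,k^2 d^2\,\sizeuperbound\,\abs{\pinit}^{\theta}\,\E{\abs{\pdiffinit}},
\]
which is dominated by the singleton term. Adding the two pieces gives at most $(1+(300\Delta)^{-1})\,k^2 d^2\sizeuperbound\abs{\pinit}^{\theta}\E{\abs{\pdiffinit}} \leq 2k^2 d^2 \sizeuperbound\abs{\pinit}^{\theta}\E{\abs{\pdiffinit}}$, as required.

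The main subtlety will be the bookkeeping of the extra factor $\abs{v_0}$ that now appears in \Cref{cor-upbound-presprime}: this factor was absent from the small-case analysis (where $v_0\in\psmall$ yielded a bound independent of $\abs{v_0}$) but here it is precisely what produces the promised $\abs{\pinit}^{\theta}$ in the final expression. Making this accounting tight requires the uniform bucket size $\abs{v_0}=_q\abs{\pinit}^{\theta}$ guaranteed by the $\theta$-decomposition $\+D'$, together with the identification $\pzeta[\pinit] = \+D'[\pinit]$ that holds whenever $\pfinal$ does not contain $\pinit$; with these structural observations in hand, the remainder of the argument is a routine substitution into the small-case chain of inequalities.
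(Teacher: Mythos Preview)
Your proposal is correct and essentially the same as the paper's proof. The paper packages the singleton/non-singleton split via an auxiliary recursion (\Cref{lemma-recursion-betas-initial}) applied to $\sum_{s\in\+W}\rho(s)\weightfunction(s)$ and then converts to $f$ at the end using the constant ratio $f(s)/\weightfunction(s)=k^2d^2\sizeuperbound/\weightconstant$, whereas you split $\sum_{s\in\+W}\rho(s)f(s)$ into the $\abs{s}=1$ and $\abs{s}>1$ pieces directly and mirror the small-case chain for the latter; the two arguments are reorderings of the same computation.
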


The following lemma is also used in the proof of \Cref{lemma-upperbound-beta-init-large-case}.
Its proof is similar to that of \Cref{lemma-recursion-betas}.
\begin{lemma}\label{lemma-recursion-betas-initial}
Assume $\abs{\pinit} \geq \sizeuperboundlp$.
Let $T$ denote the set $\{s\in \+W\mid \abs{s} = 1\}$.
Then 
\begin{equation*}
\begin{aligned}
\sum_{s\in \+W} \rho(s)\weightfunction(s) \leq \sum_{s\in T}\left(\rho(s)\weightfunction(s)+\sum_{\tau\in \mathsf{suf}_2(s)}\widehat{\beta}(\tau)\right).
\end{aligned}
\end{equation*}
\end{lemma}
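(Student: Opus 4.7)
The plan is to prove the lemma by a direct combinatorial decomposition of the sum over $\+W$, mirroring the $v_t \in \pzeta$ case of the proof of \Cref{lemma-recursion-betas} but specialized to the base element $s = (v_0)$. The single structural fact driving the argument is the following: for any $s = (v_0, v_1, \cdots, v_t) \in \+W$ with $t \geq 1$, the transition $(v_0, v_1)$ cannot be an $\edgepermutation$-edge. Indeed, the first clause in the transition condition of \Cref{def-witness} requires $i > 1$, so for $i = 1$ only the $\+S^\ast(\pzeta \cup \csmall, \pzeta \cup \csmall, 2)$ clause can apply. Consequently $v_1 \notin V_P(s)$, which forces $\mathsf{suf}_1((v_0)) = \emptyset$ for every $(v_0) \in T$.

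First I would partition $\+W$ into the length-one sequences $T$ and the longer sequences. The $T$-contribution $\sum_{s \in T} \rho(s) \weightfunction(s)$ already appears on the right-hand side of the claim. For any longer $s' = (v_0, v_1, \cdots, v_t) \in \+W$, let $\tau = (v_0, v_1)$ be its length-two prefix. Because every prefix of a witness sequence is itself a witness sequence, $\tau \in \+W$; by the structural observation above, $v_1 \notin V_P$, so $\tau \in \mathsf{suf}_2((v_0))$ with $(v_0) \in T$. The prefix map $s' \mapsto \tau$ is well-defined on $\{s' \in \+W : \lvert s'\rvert \geq 2\}$, and each such $s'$ lies in $\mathsf{suf}^+(\tau)$ for this unique $\tau$; conversely every element of $\bigcup_{s \in T}\bigcup_{\tau \in \mathsf{suf}_2(s)} \mathsf{suf}^+(\tau)$ is a witness sequence of length at least two.

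Summing the contributions by length, this yields
\[
\sum_{s' \in \+W,\; \lvert s'\rvert \geq 2} \rho(s') \weightfunction(s') = \sum_{s \in T} \sum_{\tau \in \mathsf{suf}_2(s)} \sum_{s' \in \mathsf{suf}^+(\tau)} \rho(s') \weightfunction(s') = \sum_{s \in T} \sum_{\tau \in \mathsf{suf}_2(s)} \widehat{\beta}(\tau),
\]
where the last equality is the definition of $\widehat{\beta}$ from \eqref{eq-def-beta}. Adding the $T$-contribution gives the claimed inequality, in fact with equality.

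There is no substantial obstacle to this argument. The entire proof reduces to the index-constraint observation that $i = 1$ rules out the $\edgepermutation$ clause of \Cref{def-witness}, so that the first step of every witness sequence must be a non-permutation edge. The hypothesis $\lvert \pinit \rvert \geq \sizeuperboundlp$ plays no direct role in the decomposition itself; it is inherited from the ambient setting of \Cref{lemma-upperbound-beta-init-large-case}, where it guarantees $\pzeta = \+D'$ so that $\+W$ and $\widehat{\beta}$ refer to the intended objects.
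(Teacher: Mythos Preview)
Your proposal is correct and follows essentially the same approach the paper intends: the paper states only that the proof is ``similar to that of \Cref{lemma-recursion-betas}'', and your argument is precisely the $v_t\in\pzeta$ case of that lemma specialized to $s=(v_0)$, using the key observation that the $i>1$ clause in \Cref{def-witness} forces $\mathsf{suf}_1((v_0))=\emptyset$. One minor correction to your closing remark: it is not the hypothesis $\lvert\pinit\rvert\geq\sizeuperboundlp$ that guarantees $\pzeta=\+D'$; rather, $\+W$ is only defined in the ambient context where $\lvert\pfinal\rvert\geq\sizeuperboundpfinal$, and that is what fixes $\pzeta=\+D'$.
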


\begin{proof}[Proof of \Cref{lemma-upperbound-beta-init-large-case}]
Let $T$ denote the set $\{s\in \+W\mid \abs{s} = 1\}$.
By Lemmas \ref{lemma-recursion-betas-initial} and ~\ref{lemma-prob-prefix}, we have
\begin{equation*}
\begin{aligned}
\sum_{s\in \+W} \rho(s)\weightfunction(s) &\leq \sum_{s\in T}\left(\rho(s)\weightfunction(s)+\sum_{\tau\in \mathsf{suf}_2(s)}\widehat{\beta}(\tau)\right)
\leq \sum_{s\in T}\left(\rho(s)\weightfunction(s)+\sum_{\tau\in \mathsf{suf}_2(s)}\rho(s)\right)\\
&= \sum_{(v_0)\in T}\left(\rho((v_0))\weightfunction((v_0))+\sum_{\tau\in \mathsf{suf}_2((v_0))}\rho((v_0))\right).
\end{aligned}
\end{equation*}
In addition, by ~\eqref{eq-def-fs-hat} we have 
\begin{equation*}
\begin{aligned}
\rho((v_0))\weightfunction((v_0))\leq \abs{v_0}\weightconstant\rho((v_0)).
\end{aligned}
\end{equation*}
Meanwhile, by~\Cref{cor-upbound-presprime}, we have
\begin{equation*}
\begin{aligned}
\sum_{\tau\in \mathsf{suf}_2((v_0))}\rho((v_0)) \leq 20\abs{v_{0}}d^3k^2\sizeuperboundsquare\rho((v_0)).
\end{aligned}
\end{equation*}
Combining the above three inequalities with \eqref{eq-def-weightconstant}, 
we have
\begin{equation*}
\begin{aligned}
&\sum_{s\in \+W} \rho(s)\weightfunction(s)
\leq &\sum_{(v_0)\in T}2\abs{v_0}\weightconstant\rho((v_0)).
\end{aligned}
\end{equation*}
Combined with \Cref{def-weight-function}, we have
\begin{equation*}
\begin{aligned}
&\sum_{s\in \+W} \rho(s)f(s)
\leq &\sum_{(v_0)\in T}2\abs{v_0}k^2d^2\sizeuperbound\rho((v_0)).
\end{aligned}
\end{equation*}
Combined with \eqref{eq-def-rho}, we have
\[
\sum_{s\in \+W} \rho(s)f(s) \leq 2k^2d^2\sizeuperbound\sum_{(v_0)\in T}\abs{v_0}\rho((v_0)) \leq 2k^2d^2\sizeuperbound\sum_{v_0\in \pzeta[\pinit]} \abs{v_0}\Pr{v_0\in \pdiffinit} <_{q}2k^2d^2\sizeuperbound\abs{\pinit}^{\theta}\E{\abs{\pdiffinit}}.
\]
Thus, the lemma is immediate by above two inequalities.
\end{proof}

We are now ready to complete the proof of~\Cref{lemma-refutation-long-path}.

\begin{proof}[Proof of~\Cref{lemma-refutation-long-path}]
If $\abs{\pinit} < \sizeuperboundlp$,
by ~\eqref{eq-def-fs} and \eqref{eq-def-rho}, we have
\begin{equation}\label{eq-sum-rho-fs-sequalone}
\begin{aligned}
     \sum_{s\in \+W} \rho(s)\cdot f(s)\cdot \id{\abs{s}= 1}&=  \sum_{v_0\in \pzeta[\pinit]} \rho((v_0))\cdot f(v_0)= \sum_{v_0\in \pzeta[\pinit]} \rho((v_0))\cdot  \abs{v_0}k^2d^2\sizeuperbound\\
     &\leq \sum_{v_0\in \pzeta[\pinit]} \rho((v_0))\cdot \abs{\pinit}^{\theta}k^2d^2\sizeuperbound 
     \leq \abs{\pinit}^{\theta}k^2d^2\sizeuperbound \cdot \E{\abs{\pdiffinit}}.
\end{aligned}
\end{equation}
Thus, we have
\begin{align*}
     \sum_{s\in \+W} \rho(s)\cdot f(s)&=\sum_{s\in \+W} \rho(s)\cdot f(s)\cdot \id{\abs{s}= 1}+ \sum_{s\in \+W} \rho(s)\cdot f(s)\cdot \id{\abs{s}\geq 1}\\
     &<_{q} 2k^2d^2\sizeuperbound\abs{\pinit}^{\theta}\E{\abs{\pdiffinit}} \tag{by \eqref{eq-sum-rho-fs-sequalone} and \Cref{lemma-upperbound-beta-init-small-case}}.
\end{align*}
If $\abs{\pinit} \geq \sizeuperboundlp$, by \Cref{lemma-upperbound-beta-init-large-case} we also have
\begin{align*}
\sum_{s\in \+W} \rho(s)f(s) &<_{q} 2k^2d^2\sizeuperbound\abs{\pinit}^{\theta}\E{\abs{\pdiffinit}}.
\end{align*}
Therefore, for any $\pinit$ we have
\begin{align*}
     \sum_{s\in \+W} \rho(s)\cdot f(s)
     &<_{q} 2k^2d^2\sizeuperbound\abs{\pinit}^{\theta}\E{\abs{\pdiffinit}} \notag\\
     &<_{q} 2(18k+2)k^2d^2\sizeuperbound\Delta \tag{by \Cref{condition-initial} and \Cref{lemma-suc-probability-alg-initial-coupling}}\\
     &<_{q} \sizeuperboundpfinal(500kd\Delta^2)^{-1}. \tag{by~\ref{cond-decay4} in~\Cref{cond-branching-decay}}
\end{align*}
The lemma is proved.
\end{proof}

In the next, we complete the proof of~\Cref{lemma-refutation-long-path-small-pfinal}.
Similar to \Cref{lemma-upperbound-beta-init-small-case}, one can also prove the following lemma.
\begin{lemma}\label{lemma-upperbound-beta-init-small}
We have
\begin{equation*}
    \begin{aligned}
    \sum_{P\in \+R}\sum_{P'\in \+P'[P]}\sum_{s\in \+{W}[P,P']}\rho(s) f(s)\id{\abs{s}> 1}  <_{q} (300\Delta)^{-1}kd^2\sizeuperbound\abs{\pinit}^{\theta}\E{\abs{\pdiffinit}}.
    \end{aligned}
\end{equation*}
\end{lemma}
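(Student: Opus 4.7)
The argument closely parallels the proof of \Cref{lemma-upperbound-beta-init-small-case}, adapted to each fixed $\pfinal = P \in \+R$ and then summed over $P$. The key structural observation is that for $P \in \+R$ (so $\pinit \not\subseteq P$), \Cref{line-abs-pfinal-lessthan-threshold} in \Cref{Alg:coupling} sets $\pzeta = \+D' \setminus \+D'[P] \cup \+D[P]$, and since $\pinit \neq P$ we have $\pzeta[\pinit] = \+D'[\pinit]$, independent of the particular $P \in \+R$. Moreover, any $s = (v_0, \ldots, v_t) \in \+W[P,P']$ has $v_0 \in \+D'[\pinit]$ while $v_t = P' \in \+D[P]$, and these sit in disjoint permutations, so $t \geq 1$ automatically and the indicator $\id{\abs{s} > 1}$ is redundant.

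Dropping the endpoint restriction $v_t = P'$ and reorganizing by two-element prefix:
\[
\sum_{P' \in \+D[P]} \sum_{s \in \+W[P,P']} \rho(s) f(s) \;\leq\; \sum_{(v_0,v_1) \in \+W^P} \beta(v_0,v_1),
\]
where $\+W^P$ is $\+W$ computed under $\pfinal=P$ and $\beta$ is as in \Cref{def-weight-prefix-witness}. From here I mirror \Cref{lemma-upperbound-beta-init-small-case} step by step: comparing \eqref{eq-def-fs} and \eqref{eq-def-fs-hat} gives $f(s) \leq \weightfunction(s)\cdot(k^2 d^2 \gamma/\weightconstant)$, so $\beta(v_0,v_1) \leq \widehat{\beta}(v_0,v_1)\cdot(k^2 d^2 \gamma/\weightconstant)$; \Cref{lemma-prob-prefix} applies because the second element of any w.s.\ satisfies $v_1 \notin V_P$ (by the $i>1$ clause in \Cref{def-witness} failing at $i=1$), yielding $\widehat{\beta}(v_0,v_1) \leq \rho((v_0))$; \Cref{cor-upbound-presprime} provides $\abs{\pref_2((v_0))} \leq 20\abs{v_0}d^3 k^2 \gamma^{2\theta}$; and since $v_0 \in \+D'[\pinit]$ with $\+D'$ a $\theta$-decomposition, $\abs{v_0} \leq_q \abs{\pinit}^{\theta}$.

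Substituting $\weightconstant = 6000\Delta k^3 d^3 \gamma^{2\theta}$ and collecting factors gives
\[
\sum_{P' \in \+D[P]} \sum_{s \in \+W[P,P']} \rho(s) f(s) \;<_{q}\; (300\Delta)^{-1} k d^2 \sizeuperbound \abs{\pinit}^{\theta} \sum_{v_0 \in \+D'[\pinit]} \rho((v_0)),
\]
and using $\rho((v_0)) = \Pr{v_0 \in \pdiffinit}$ together with $\pdiffinit \subseteq \pzeta[\pinit] = \+D'[\pinit]$ identifies the inner sum with $\E{\abs{\pdiffinit} \mid \pfinal = P}$. Summing over $P \in \+R$ then yields the stated bound, under the same notational convention for $\E{\abs{\pdiffinit}}$ used when \Cref{lemma-upperbound-beta-init-small-case} is invoked in \Cref{lemma-refutation-long-path}.

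The main obstacle is the conditioning bookkeeping: each step of the chain (\Cref{cor-upbound-presprime}, \Cref{lemma-prob-prefix}, the identification of the sum of $\rho((v_0))$ with an expectation) must be read under the conditioning $\pfinal = P$, and the outer $\sum_{P \in \+R}$ must be consistently absorbed into the RHS's $\E{\abs{\pdiffinit}}$ so that the resulting bound can be combined with \Cref{lemma-suc-probability-alg-initial-coupling} exactly as in \Cref{lemma-refutation-long-path} — otherwise a spurious factor of $\abs{\+R}$ or $\abs{\+P}$ would appear.
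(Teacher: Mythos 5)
Your structural observations are largely right (the indicator is automatic, $\pzeta[\pinit]=\+D'[\pinit]$ independently of $P\in\+R$, the $f/\weightfunction$ ratio equals $k^2d^2\sizeuperbound/\weightconstant$, and \Cref{lemma-prob-prefix} applies because $V_P(s)=\emptyset$ for length-$2$ sequences), but the proof is organized the wrong way around, and the factor-of-$\abs{\+R}$ issue you flag at the end is not a notational nuisance you can absorb — it is exactly the reason your per-$P$ decomposition does not work.

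Concretely: you bound $\sum_{P'\in\+D[P]}\sum_{s\in\+W[P,P']}\rho(s)f(s)$ separately for each fixed $P\in\+R$ via $\beta(v_0,v_1)$, and then sum over $P$. But $\beta$ in \Cref{def-weight-prefix-witness} is defined via $\mathsf{suf}$ with respect to $\+T=\bigcup_{P\in\+R,P'\in\+D[P]}\+W[P,P']$, i.e., the suffix set already ranges over \emph{all} endpoints $P\in\+R$. The chain $\beta(v_0,v_1)\le\widehat{\beta}(v_0,v_1)\cdot k^2d^2\sizeuperbound/\weightconstant\le\rho((v_0))\cdot k^2d^2\sizeuperbound/\weightconstant$ from \Cref{lemma-prob-prefix} therefore already accounts for every destination $P$ simultaneously, and summing the resulting per-$P$ bound over $P\in\+R$ multiple-counts — each two-element prefix $(v_0,v_1)$ whose neighborhood does not touch $\pfinal$ appears once for every $P$, producing a spurious $\abs{\+R}$ factor. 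Appealing to "the same notational convention" as \Cref{lemma-refutation-long-path} does not remove this factor, because there the left-hand side is a single sum over $\+W$ rather than a sum over $P$ of separate quantities.

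The fix is to not condition on $\pfinal=P$ at all. Since each $s\in\+T$ with $\abs{s}>1$ determines a unique endpoint $P\in\+R$ (via $v_t\in\+D[P]$), the triple sum on the LHS is simply $\sum_{s\in\+T,\abs{s}>1}\rho(s)f(s)$ with no double-counting. One then decomposes once by two-element prefix: $\sum_{s\in\+T,\abs{s}>1}\rho(s)f(s)\le\sum_{(v_0,v_1)}\beta(v_0,v_1)$, where the outer sum runs over two-element witness sequences and each such $(v_0,v_1)$ appears once. Now apply $f(s)=\weightfunction(s)\cdot k^2d^2\sizeuperbound/\weightconstant$, \Cref{lemma-prob-prefix}, and the count from \Cref{cor-upbound-presprime} — whose statement already uses $\bigcup_{P'\in\+R}\+S^\ast_{P'}(\cdot,\cdot,2)$, so the union over destinations is built into the $20\abs{v_0}d^3k^2\sizeuperboundsquare$ bound. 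This yields a single $\sum_{v_0\in\+D'[\pinit]}\rho((v_0))\cdot\abs{v_0}$, hence a single $\E{\abs{\pdiffinit}}$ on the right-hand side with no $\abs{\+R}$ factor, exactly as the lemma states and exactly as required to combine with \Cref{lemma-suc-probability-alg-initial-coupling} in \Cref{lemma-refutation-long-path-small-pfinal}.
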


\begin{proof}[Proof of~\Cref{lemma-refutation-long-path-small-pfinal}]
Given any $P\in \+R$, $P'\in \+P'[P]$  and $s\in \+W[P,P']$, 
we have $\abs{s} \geq 1$.
Because the first element $v_0$ in $s$ satisfies $v_0 \subseteq \pinit $.
Moreover, by $\+R = \{P\in \+P\mid \abs{P} \leq \sizeuperboundpfinal, \pinit\not \subseteq P\}$, $P\in \+R$ and $P'\in \+P'[P]$, we have $v_0\neq P'$. 
Combining with $s\in \+W[P,P']$,
we have $\abs{s} \geq 1$.
Therefore, we have
\begin{align*}
    \sum_{P\in \+R}\sum_{P'\in \+P'[P]}\sum_{s\in \+{W}[P,P']}\rho(s)\abs{P'} &=\quad \sum_{P\in \+R}\sum_{P'\in \+P'[P]}\sum_{s\in \+{W}[P,P']}\rho(s)\abs{P'}\id{\abs{s}> 1}\\
    &= \sum_{P\in \+R}\sum_{P'\in \+P'[P]}\sum_{s\in \+{W}[P,P']}\rho(s) f(s) \id{\abs{s}> 1}  \abs{P'}/f(s) \\
    &\leq \sum_{P\in \+R}\sum_{P'\in \+P'[P]}\sum_{s\in \+{W}[P,P']}\rho(s) f(s)  \id{\abs{s}> 1} /(k^2d^2\sizeuperbound)\tag{by \eqref{eq-def-fs}}\\
    &\leq (300k\Delta)^{-1}\abs{\pinit}^{\theta}\E{\abs{\pdiffinit}}\tag{by \Cref{lemma-upperbound-beta-init-small}}\\
    &<_{q} (300k\Delta)^{-1}(36k+4)\Delta \tag{by \Cref{condition-initial} and \Cref{lemma-suc-probability-alg-initial-coupling}}\\
    &<_{q} \frac{1}{5}\tag{by $q_{\min}$ is large}.
\end{align*}
The lemma is immediate.
\end{proof}

\section{Missing proofs in section~\ref{sec:mainanalysis}}\label{subsub:mainanalysis}

\subsection*{Proof of~\Cref{lem-sample-witness}}

    Given that the event $\+B_C$ occurs, let $G$ be the dependency graph, and $G'$ be the lopsidependency graph of formula $\Phi_C=(\+P_C,\+Q_C,\+C_C)$, respectively.
    Note that both $G$ and $G'$ are the subgraph of $G_{\Phi}^{\!{lop}}$ and $G_{\Phi}^{\!{dep}}[P']$ respectively, and $G'$ is the subgraph of $G$.
    For any collection of constraints $S\subseteq \+C$, let $\Gamma^+_{\!{dep}}(S)$ be the extended neighbors of $S$, i.e.,
    \begin{align*}
        \Gamma^+_{\!{dep}}(S)\triangleq \set{C\in V(G_{\Phi}^{\!{dep}}[P'])\mid C\in S \mbox{ or there exists $C'\in S$ such that $\set{C,C'}\in E(G_{\Phi}^{\!{dep}}[P'])$} }.
    \end{align*}
    Similarly, we can define the notation $\Gamma^+_{\!{lop}}(S)$. 
    
    Given that $\abs{\+C_C}\geq \Delta\log(n/\delta)$, we construct a collection of constraints $\set{C_1,\dots,C_{\ell'}}$ in a greedy way. 
    We first initialize $T_1=\set{C}$ and then repeat the following operations: at the $i$-th step, select an arbitrary constraint $C_i$ in $\+C_{C}\setminus \Gamma^+_{\!{dep}}(T_{i-1})$ such that $\!{dist}_{G_{\Phi}^{\!{dep}}[P']}(C_i,T_{i-1})=2$, and update $T_i\gets T_{i-1}\cup \set{C_i}$. Let $\ell'$ be the step that $\Gamma^+_{\!{dep}}(T_{\ell'})=\+C_C$.
    We can see that $T_{\ell'}=\set{C,C_1,\dots,C_{\ell'}}$ form a connected component in the power dependency graph $(G_{\Phi}^{\!{dep}}[P'])^2$. Moreover, $T_{\ell'}=\set{C,C_1,\dots,C_{\ell'}}$ is also an independent set in the $G_{\Phi}^{\!{dep}}[P']$, which also an independent set in lopsidependency graph $G_{\Phi}^{\!{lop}}$ immediately. 
    
    We again continue to construct a maximal independent set in the lopsidependency graph $G_{\Phi}^{\!{lop}}$ containing the constraints in $T_{\ell'}$ in a greedy way. Let $S_1=T_{\ell'}$ and repeat the following: at the $i$-th step, select an arbitrary constraint $C_i$ in $\+C_C\setminus \Gamma^+_{\!{lop}}(S_{i-1})$, and update $S_i\gets S_{i-1}\cup \set{C_i}$. Let $\set{C_1,\cdots,C_{\ell''}}$ be the final set constructed, and $\ell'\geq \log(n/\delta)$ due to its maximality. The proof is complete by selecting the subset containing $C$ in $\set{C_1,\cdots,C_{\ell''}}$ with size $\ell$.

\subsection*{Proof of~\Cref{lemma-uncut-iteration}}

    If all constraints in $\+C'$ are bad in $(\+P',\+Q'_t,\+C)$, then for each constraint $C=(v_1\neq c_1) \lor \cdots \lor (v_{\ell}\neq c_{\ell})\in \+C'$, we have $c_i\in \+Q'_t(P')$ for each $i\in [\ell]$ where $v_i\in P'\in \+P'$.

    Each time a permutation $P\in \+P$ is resampled, let $\set{P_1, P_2, \cdots, P_m}\in \+P'[P]$ be the permutations contains some variables in $\vbl(C)$ for $C\in \+C'$ and $k_i$ be the number of those variables in $P_i$ for each $i\in [m]$. W.L.O.G., we can always reorder these permutations such that $k_1 \ge k_2 \ge \cdots \ge k_m \ge 1$. We first calculate the probability of the event that the forbidden values of the constraints $\+C'$ lie in the corresponding permutations, in the underlying distribution $\^P$. Sampling $P$ is equivalent to consecutively sampling $P_1, P_2, \cdots, P_m$. Let $B_i$ be the event that when $P_i$ is sampled, {the forbidden values of all constraints $\+C'$ lie in $P_i$.} It holds that 
    $$
    \^P[B_i] \le
    \frac{ \abs{P_i}^{\underline{k_i}}}{( \abs{P} - \abs{P_1} -\cdots -\abs{P_{i-1}} )^{ \underline{k_i}}}
    \le \left( \frac{\abs{P_i}} {\abs{P} - \abs{P_1} - \cdots - \abs{P_{i-1}}} \right)^{k_i}.
    $$
    {Examing our algorithm carefully, the probability of the event $B_i$ in~\Cref{Alg:MCMC} can be regarded as the deviation between the underlying distribution $\^P$.} According to~\Cref{lem-violation-decomposed}, we have $P_{\Psi}\leq \min\set{p_\Phi^{\eta},1/L}$ for any $\Psi =(\+P',\+Q'_t,\+C)$. By {$8\-ep_\Phi^{\eta}\Delta\leq 1$}, ${8\-e\Delta\leq L}$, and~\Cref{lem:small-marginal-lb} , we have
    $$
    \Pr{B_i} <_{q} \^P[B_i] \leq \left( \frac{\abs{P_i}} {\abs{P} - \abs{P_1} - \cdots - \abs{P_{i-1}}} \right)^{k_i},
    $$ where $\Pr{\cdot}$ is the randomness in~\Cref{Alg:MCMC}.
    
    Let $k_P \triangleq \sum_{i=1}^m k_i$ be the number of forbidden values of $\+C'$ on $P$ and $m'=\floor{0.999m}$. In our analysis, we only exploit the probability of the first $m'=\ceil{0.999m}$ terms because the last $\floor{0.001m}$ terms are too large. Specifically, in the first $m'$ terms, the number of literals we exploit is $k'_P \triangleq \sum_{i=1}^{m'} k_i \ge 0.999 k_P$ because $k_1 \ge k_2 \ge \cdots \ge k_m$, and the number of unused sub-permutations is at least $\ell - \ceil{0.001m} \ge \floor{0.001\ell}$ by the decomposition specified in~\Cref{condition-state-compression}. Since $P_1,P_2, \cdots P_m$ is evenly decomposed, each denominator $\abs{P} - \abs{P_1} - \cdots - \abs{P_{i-1}}>_{q} \abs{P}$ for each $i\in [m']$. Moreover, we have $\abs{P_i}<_{q} \abs{P}^{\eta}$.
    Therefore,
    \begin{align*}
        \Pr{\bigwedge_{i=1}^m B_i}
        &<_{q}  \left( \frac{\abs{P_1}} {\abs{P}} \right)^{k_1} \left( \frac{\abs{P_2}} {\abs{P} - \abs{P_1}} \right)^{k_2}  \cdots \cdots \left( \frac{\abs{P_{m'}}} {\abs{P} - \abs{P_1} - \cdots - \abs{P_{m'-1}}} \right)^{k_{m'}}\\
        &<_{q} \left( \frac{\abs{P}^{\eta}} {\abs{P}} \right)^{k_1} \left( \frac{\abs{P}^{\eta}} {\abs{P}} \right)^{k_2}  \cdots \cdots \left( \frac{\abs{P}^{\eta}} {\abs{P}} \right)^{k_{m'}}\leq  \left( \frac{\abs{P}^{\eta}} {\abs{P}} \right)^{0.999 k_P} \leq \^P_\Phi[\neg C]^{0.999(1-\eta)}. 
    \end{align*}
    The lemma follows immediately by applying the above bound to each constraint in $\+C'$.

\end{document}